\newtheorem{theorem}{Theorem}
\newtheorem{proposition}{Proposition}
\newtheorem{conjecture}{Conjecture}
\newtheorem{lemma}{Lemma}
\newtheorem{claim}{Claim}
 \newtheorem{definition}{Definition}
\newcommand{\mc}[1]{\mathcal{#1}}
\newcommand{\mb}[1]{\mathbb{#1}}
\newcommand{\pdc}[1]{\hat{#1}}
\newcommand{\Out}{\normalfont{Out}}
\newcommand{\ccw}{counterclockwise }
\newcommand{\ccww}{counterclockwise}
\newcommand{\cw}{clockwise }
\newcommand{\cww}{clockwise}
\newcommand{\pss}{PS }
\newcommand{\ps}{Poulalhon and Schaeffer }
\newcommand{\npss}{Poulalhon and Schaeffer's }
\newcommand{\aps}{{\sc Algorithm PS} }
\newcommand{\apss}{{\sc Algorithm PS}}
\newcounter{sclaim}
\newcounter{ssclaim}
\newenvironment{proof}{\noindent \emph{Proof.}\ }{\hfill
    $\Box$\vspace{1em}}
  \newenvironment{proofclaim}{\noindent \emph{Proof.}\ }{\hfill
    $\Diamond$\vspace{1em}}
\newenvironment{sclaim}[1][]%
{\refstepcounter{sclaim}\vspace{1ex}\noindent{\it  (\arabic{sclaim})  {#1}{}}\it}{\vspace{1ex}}
\newenvironment{proofsclaim}[1][]%
	{\noindent {}{#1}{}}{ This proves (\arabic{sclaim}).\vspace{1ex}}
  \newcommand{\md}{\mathcal D} \newcommand{\mv}{\mathcal V}
  \newcommand{\ms}{\mathcal S_\mv} \newcommand{\mr}{\mathbb R}
\newcommand{\mz}{\mathbb Z}
\begin{document}
\frontmatter
\thispagestyle{empty}




\Specialite{X}
\Arrete{X}
\Auteur{Benjamin Lévêque}
\Titre{Generalization of Schnyder woods to orientable surfaces and
  applications}
\Depot{19 octobre 2016}

 


 \Jury{

 \UGTExaminateur{Vincent Beffara}{Directeur de recherche CNRS}
 \UGTExaminatrice{Nadia Brauner}{Professeur, Université Grenoble
   Alpes}
 \UGTExaminateur{Victor Chepoi}{Professeur, Université Aix-Marseille}
 \UGTExaminateur{\'Eric Colin de Verdière}{Directeur de recherche  CNRS}
\UGTRapporteur{Stefan Felsner}{Professeur, TU Berlin}
\UGTRapporteur{Marc Noy}{Professeur, UPC Barcelone}
 \UGTRapporteur{Gilles Schaeffer}{Directeur de recherche CNRS}
 \UGTExaminateur{Andr\'as Seb\H{o}}{Directeur de recherche CNRS}


 }

\MakeUGthesePDG    

 \newpage
 \thispagestyle{empty}
 \
 \newpage

 \thispagestyle{empty}



 \noindent {\bf Abstract :} 
 Schnyder woods are particularly elegant combinatorial structures
 with numerous applications concerning planar triangulations and more
 generally 3-connected planar maps.  We propose a simple
 generalization of Schnyder woods from the plane to maps on orientable
 surfaces of any genus with a special emphasis on the toroidal case.
 We provide a natural partition of the set of Schnyder woods of a
 given map into distributive lattices depending on the surface
 homology.  In the toroidal case we show the existence of particular
 Schnyder woods with some global properties that are useful for
 optimal encoding or graph drawing purpose.

\

 \noindent {\bf Keywords :}
 Embedded graphs, Orientable surfaces, Toroidal triangulations, 3-connected
 maps, $\alpha$-orientations, Schnyder woods, Distributive lattices,
 Homology, Bijective encoding, Graph drawing

\tableofcontents

\mainmatter

\chapter*{Introduction}
\addcontentsline{toc}{part}{Introduction}
\markboth{}{}

Schnyder woods (see Part~\ref{part:1}) are today one of the main tools in the area of planar
graph representations. Among their most prominent applications are the
following: They provide a machinery to construct space-efficient
straight-line drawings~\cite{Sch90, Kan96,Fel01}, yield a
characterization of planar graphs via the dimension of their
vertex-edge incidence poset~\cite{Sch89,Fel01}, and are used to encode
triangulations~\cite{PS06,Bar12}. Further applications lie in
enumeration~\cite{Bon05}, representation by geometric
objects~\cite{FOR94, GLP11}, graph spanners~\cite{BGHI10}, etc.

We propose a simple generalization of Schnyder woods from the plane to
maps on orientable surfaces of any genus (see Part~\ref{part:2}). This
is done in the language of angle labellings.  Generalizing results of
De Fraysseix and Ossona de Mendez~\cite{FO01}, and
Felsner~\cite{Fel04}, we establish a correspondence between these
labellings and orientations and characterize the set of orientations of
a map that corresponds to such a Schnyder wood. Furthermore, we study
the set of orientations of a given map and provide a natural partition
into distributive lattices depending on the surface homology. This
generalizes earlier results of Felsner~\cite{Fel04} and Ossona de
Mendez~\cite{Oss94}.  Whereas many questions remain open for the
double torus and higher genus (like already the question of existence
of the studied objects), in the toroidal case we are able to push our
study quite far.

The torus can serve as a model for planar periodic surfaces and is for
instance often used in statistical physics context since it enables to
avoid dealing with particular boundary conditions.  It is in a sense
the most homogeneous oriented surface since Euler's formula sums
exactly to zero. For our purpose this means that one can asks for
orientations satisfying the same local condition everywhere.

We study structural properties of toroidal Schnyder woods extensively
(see Part~\ref{part:torus}). We analyze the behavior of the
monochromatic cycles of a toroidal Schnyder wood to define the notion
of ``crossing'' that is useful for graph drawing purpose. We also
exhibit a kind of ``balanced'' property that enables to define a
canonical lattice and thus a unique minimal element, used in
bijections.

We are able to provide several proofs of existence of Schnyder woods
in the toroidal case (see Part~\ref{chap:existence}) and this is
particularly interesting since this problem is open in higher genus.
Some consequences are a linear time algorithm to compute either a
crossing Schnyder wood or a minimal balanced Schnyder wood for
toroidal triangulations, and the fact that a toroidal map admits a
Schnyder wood if and only if it is ``essentially 3-connected''.

Concerning the applications (see Part~\ref{part:5}), we generalize a
method devised by Poulalhon and Schaeffer~\cite{PS06} to linearly
encode a planar triangulation optimally.  In the plane, this leads to
a bijection between planar triangulations and some particular
trees. For the torus we obtain a similar bijection but with particular
unicellular maps (maps with only one face).  We also show that
toroidal Schnyder woods can be used to embed the universal cover of an
essentially 3-connected toroidal map on an infinite and periodic
orthogonal surface. We use this embedding to obtain a straight-line
flat torus representation of any toroidal map in a polynomial size
grid.

Most of the results presented in this manuscript appear in the papers
\cite{DGL15,GL13,GKL15}. The goal of this manuscript is to merge these
papers and to present them in a unified way. After the first
paper~\cite{GL13} our view on the objects have evolved, the
definitions have been generalized, enabling to reveal more structural
properties. Thus we feel that there was a need to restructure this
research in a unique document that also contains additional details,
results, corrections and simplifications.

As briefly explained in the conclusion, many of the structural
properties that we exhibit here for Schnyder woods can be generalized
for other kind of orientations/maps, like transversal structure for
4-connected triangulations, 2-orientations for quadrangulations or
more generally $\frac{d}{d-2}$-orientations for d-angulations, etc.
There are also probable other applications of Schnyder woods in the
plane that can be extended to higher genus.  For example we now have
all the ingredients to obtain a random generator for toroidal
triangulations. For these reasons we believe that this manuscript is
just the beginning of  more research to come and we hope that it
can serve as a starting point for one interested in studying
orientations of maps in higher genus and their applications.

I was happy to work on this topic for the last few years and I would
like to thank my colleagues, Nicolas Bonichon, Luca Castelli Aleardi
and Eric Fusy, whose fruitful discussions have allowed this work to be
achieved and also my co-authors, Vincent Despr\'e, Daniel
Gon\c{c}alves and Kolja Knauer, who have been following me on this
journey. We were mostly guided by the beauty of the structural
properties of the studied objects, and the nice applications in the
end were then just consequences of this quest.

\part{Planar case}
\label{part:1}
\chapter{From planar triangulations to Schnyder woods}
\label{chap:introSW}

Before giving the formal definition of Schnyder woods in the plane
(see Section~\ref{sec:SWplane}), we explain how they appear quite
naturally while studying planar triangulations.

Given a general graph $G$, let $n$ be the number of vertices and $m$
the number of edges. If the graph is embedded on the plane (or a surface), let $f$ be
the number of faces (including the outer face).  Euler's formula says
that a connected graph embedded on the plane satisfies $n-m+f=2$.

A general graph (i.e. not embedded on a surface) is \emph{simple} if
it contains no loop and no multiple edges.  We start this manuscript
with the study of \emph{planar triangulations}, that are simple graphs
embedded in the plane such that every face, including the outer face,
has size three (see example of Figure~\ref{fig:noorientation}).

\begin{figure}[!h]
\center
\includegraphics[scale=0.6]{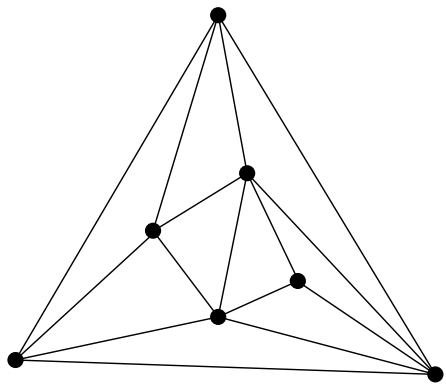}
\caption{Example of a planar triangulation.}
\label{fig:noorientation}
\end{figure}

Consider a planar triangulation $G$. Since every face has size $3$, we
have $3f=2m$. Then by Euler's formula we obtain $m=3n-6$. Thus there
is ``almost'' $3$ times more edges that vertices in a planar
triangulation. In fact, the relation can be re-written $(m-3)=3(n-3)$
so there is exactly three times more internal edges than internal
vertices. Thus there is hope to be able to assign to each internal
vertex of $G$, three incident edges such that each internal edge is
assigned exactly once. By orienting the edges from the vertices to
which they are assigned, one obtain an orientation of the graph $G$
where every inner vertex has outdegree exactly $3$ (see
Figure~\ref{fig:3orientation}). Such an orientation is called a
\emph{$3$-orientation}.

\begin{figure}[!h]
\center
\includegraphics[scale=0.6]{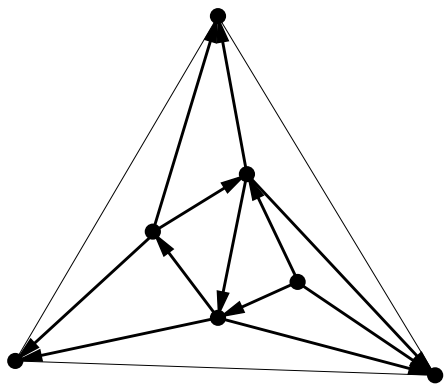}
\caption{Every inner vertex has outdegree exactly $3$.}
\label{fig:3orientation}
\end{figure}

Consider a $3$-orientation of $G$. First note that since
$(m-3)=3(n-3)$, all the inner edges incident to outer vertices are
entering the outer vertices.

Some natural objects to consider in a $3$-orientation are ``middle
walks''.  For an internal edge $e$ of $G$, we define the \emph{middle
  walk from $e$} as the sequence of edges $(e_i)_{i\geq 0}$ obtained
by the following method.  Let $e_0=e$. If the edge $e_i$ is entering
an internal vertex $v$, then the edge $e_{i+1}$ is chosen in the three
edges leaving $v$ as the edge in the ``middle'' coming from $e_i$
(i.e. $v$ should have exactly one edge leaving on the left of the path
consisting of the two edges $e_i,e_{i+1}$ and thus exactly one edge
leaving on the right). If the edge $e_i$ is entering an outer vertex,
then the walk ends there.

These middles walks have interesting structural properties. First, one
can show that a middle walk cannot intersect itself, otherwise it
would form a cycle whose interior region would contradicts Euler's
formula by a counting argument (triangulation and 3-orientation inside
+ middle walk on the border). Thus a middle walk is in fact a middle
path and has to end on one of the three outer vertices since it is not
infinite.

Let $x_0,x_1,x_2$ be the three vertices appearing on the outer face in
\ccw order. One can assign to each inner edge $e$ the color
$i\in\{0,1,2\}$ if the middle path starting from $e$ ends at vertex
$x_i$ (see Figure~\ref{fig:3orientation-}). Note that a subwalk of a
middle path is also a middle path. So if $e'$ is an edge of a middle
path then $e$ and $e'$ receive the same color. Thus all the edges of a
middle path have the same color.

\begin{figure}[!h]
\center
\scalebox{0.6}{\input{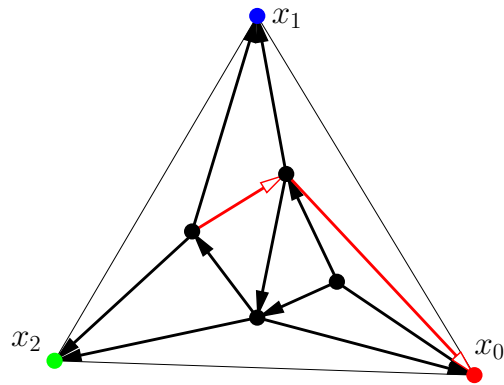}}
\caption{All the edges of a middle path are ending at the same outer
  vertex by following middle edges.}
\label{fig:3orientation-}
\end{figure}

Moreover, consider two distinct outgoing edges $e,e'$ of a inner
vertex and the two middle paths $W$ and $W'$ starting from $e$ and
$e'$. With similar counting arguments as before, one can show that $W$
and $W'$ do not intersect. Thus each of the three middle paths
starting from the three outgoing edges of an inner vertex, ends at a
different outer vertex. So every vertex has exactly one edge leaving
in color $0$, $1$, $2$, respectively, and these edges appear in
counterclockwise order (see Figure~\ref{fig:planar-triangulation}).

\begin{figure}[!h]
\center
\scalebox{0.6}{\input{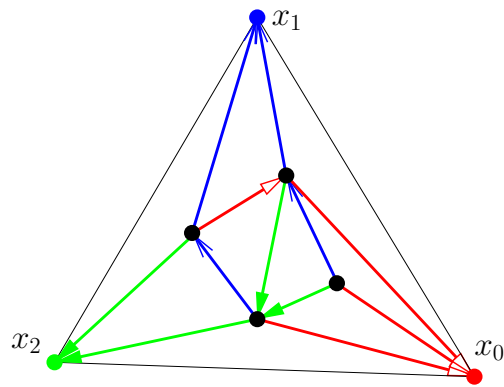}}
\caption{Every inner vertex has exactly one edge leaving in each color.}
\label{fig:planar-triangulation}
\end{figure}

By the middle path property every edge entering a vertex in color $i$
has to enter in the sector between the outgoing edges of color $i+1$
and $i-1$ (throughout the manuscript colors are given modulo 3). So
the inner vertices are satisfying the local property of
Figure~\ref{fig:LSP} where the depicted correspondence between red, blue,
green, 0, 1, 2, and the arrow shapes is used through the entire
manuscript.

\begin{figure}[!h]
\center
\includegraphics[scale=0.5]{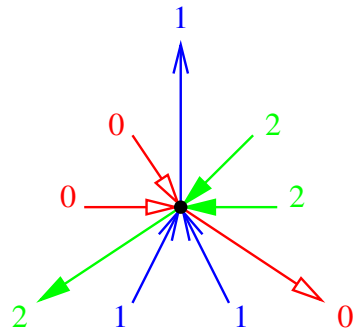}
\caption{Local property satisfied at every inner vertex.}
\label{fig:LSP}
\end{figure}

For each color $i$, every inner vertex is the starting point of a
middle path of color $i$. This path is ending at vertex $x_i$ by
definition. Thus the set of edges colored $i$ forms an oriented tree
rooted at $x_i$ (edges are oriented toward $x_i$) that is spanning
all the inner vertices of $G$ (see Figure~\ref{fig:ex-tree}).

\begin{figure}[!h]
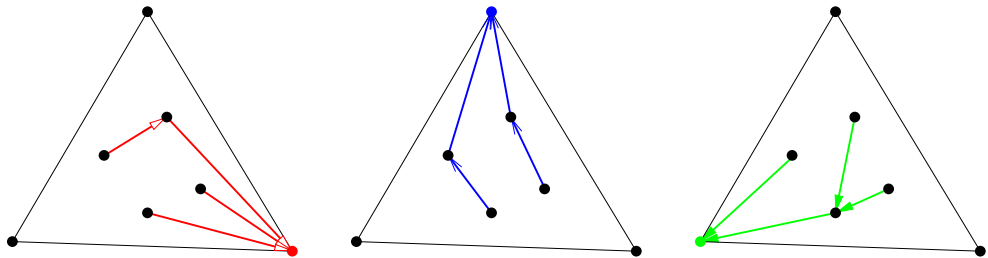

\center
\includegraphics[scale=0.4]{ex-tree0} \ \  \  \ 
\includegraphics[scale=0.4]{ex-tree1} \ \   \ \ 
\includegraphics[scale=0.4]{ex-tree2}
\caption{Partition of the inner edges into three spanning trees.}
\label{fig:ex-tree}
\end{figure}

These orientations and colorings of the internal edges of a planar
triangulation where first introduced by Schnyder~\cite{Sch89} who
proved their existence for any planar triangulation. The partition
into three trees is the reason of their usual name: Schnyder woods.

\chapter{Generalization to 3-connected planar maps}
\label{sec:SWplane}

To define Schnyder woods formally we use the following local property
introduced by Schnyder~\cite{Sch89} (see Figure~\ref{fig:LSP}):

\begin{definition}[Schnyder property]
\label{def:schnyderproperty}
Given a map $G$, a vertex $v$ and an orientation and coloring of the
edges incident to $v$ with the colors $0$, $1$, $2$, we say that $v$
satisfies the \emph{Schnyder property}, if $v$ satisfies the following
local property:

\begin{itemize}
\item Vertex $v$ has out-degree one in each color.
\item The edges $e_0(v)$, $e_1(v)$, $e_2(v)$ leaving $v$ in colors
  $0$, $1$, $2$, respectively, occur in counterclockwise order.
\item Each edge entering $v$ in color $i$ enters $v$ in the
  counterclockwise sector from $e_{i+1}(v)$ to
  $e_{i-1}(v)$.
\end{itemize}
\end{definition}

Then the formal definition of Schnyder woods is the following:

\begin{definition}[Schnyder wood]
\label{def:schnyder}
Given a planar triangulation $G$, a \emph{Schnyder wood} is an
orientation and coloring of the inner edges of $G$ with the colors
$0$, $1$, $2$ (edges are oriented in one direction only), where each
inner vertex $v$ satisfies the \emph{Schnyder property}.
\end{definition}

See Figure~\ref{fig:planar-triangulation} for an example of a Schnyder
wood.

By a result of De Fraysseix and Ossona de Mendez~\cite{FO01}, there is
a bijection between orientations of the internal edges of a planar
triangulation where every inner vertex has outdegree $3$ and Schnyder
woods. Indeed, Chapter~\ref{chap:introSW} gives the ideas behind this
bijection and show how to recover the colors from the orientation.

Originally, Schnyder woods were defined only for planar
triangulations~\cite{Sch89}. Felsner~\cite{Fel01, Fel03} extended this
definition to planar maps. To do so he allowed edges to be oriented in
one direction or in two opposite directions and when an edge is
oriented in two direction, then each direction have a distinct color
and is outgoing (see Figure~\ref{fig:felsnerdoubleedge}).

\begin{figure}[!h]
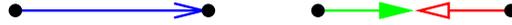

\center
\begin{tabular}{cc}
\includegraphics[scale=0.5]{edge-1} \ \ \  &  \ \ \
\includegraphics[scale=0.5]{edge-2} \\
\end{tabular}
\caption{The two types of edges in planar Schnyder woods}
\label{fig:felsnerdoubleedge}
\end{figure}

Then the formal definition of the generalization, called
\emph{planar Schnyder wood} in this manuscript, is the following:

\begin{definition}[Planar Schnyder wood]
\label{def:felsner}
Given a planar map $G$. Let $x_0$, $x_1$, $x_2$ be three vertices
occurring in counterclockwise order on the outer face of $G$. The
\emph{suspension} $G^\sigma$ is obtained by attaching a half-edge that
reaches into the outer face to each of these special vertices.  A
\emph{planar Schnyder wood} rooted at $x_0$, $x_1$, $x_2$ is an
orientation and coloring of the edges of $G^\sigma$ with the colors
$0$, $1$, $2$, where every edge is oriented in one direction or in
two opposite directions (each direction having a distinct color and
being outgoing), satisfying the following conditions:

\begin{itemize}
\item Every vertex satisfies the Schnyder property and the
  half-edge at $x_i$ is directed outward and colored $i$.
\item There is no  face whose boundary is a
monochromatic cycle.
\end{itemize}
\end{definition}

See Figure~\ref{fig:planar-felsner} for two examples of planar
Schnyder woods.

\begin{figure}[!h]
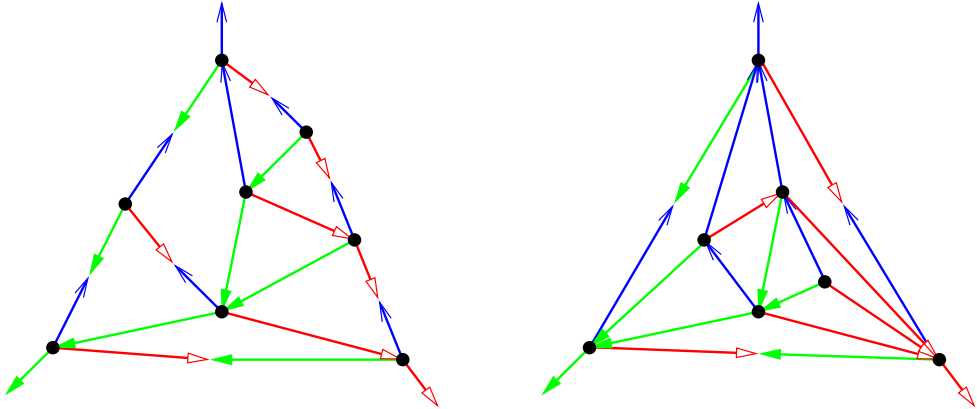

\center
\includegraphics[scale=0.5]{planar-3connected}
\hspace{1cm}
\includegraphics[scale=0.5]{planar-triangulation-felsner}
\caption{A planar Schnyder wood of a planar map and of a planar
  triangulation.}
\label{fig:planar-felsner}
\end{figure}

When there is no ambiguity we may omit the word ``planar'' in
``planar Schnyder wood''.

Note that a planar triangulation has exactly $3n-6$ edges and this
explains why in Definition~\ref{def:schnyder} just inner vertices are
required to satisfy the Schnyder property. There are $3$ vertices in
the outer face that together should have $9$ outgoing edges in order
to satisfy the Schnyder property but there is just $3$ non-colored
edges on the outer face (see for instance
Figure~\ref{fig:planar-triangulation}). In
Definition~\ref{def:felsner} restricted to planar triangulations, the
$6$ missing outgoing edges are obtained by adding $3$ half-edges
reaching into the outer face and by orienting the $3$ outer edges in
two directions. On the right of Figure~\ref{fig:planar-felsner} the
triangulation of Figure~\ref{fig:planar-triangulation} is represented
with a planar Schnyder wood.

A planar map $G$ is \emph{internally 3-connected} if there exists
three vertices on the outer face such that the graph obtain from $G$
by adding a vertex adjacent to the three vertices is
3-connected. Miller~\cite{Mil02} proved the following (see
also~\cite{Fel01} for existence of Schnyder woods for 3-connected
planar maps and~\cite{BFM07} where the following result is stated in
this form):

\begin{theorem}[\cite{Mil02}]
\label{th:schnyder} 
A planar map admits a planar Schnyder wood if and only if it is
internally 3-connected.
\end{theorem}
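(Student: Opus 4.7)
The plan is to prove the two directions separately, starting with the easier implication. For the \emph{only if} direction, assume $G$ carries a planar Schnyder wood rooted at $x_0, x_1, x_2$. For each inner vertex $v$ and each color $i$, the Schnyder property lets one follow the unique outgoing edge of color $i$ from $v$ to obtain a walk $P_i(v)$. The first step would be to show that the digraph of edges of color $i$ is acyclic: an innermost directed monochromatic cycle $C$ would either bound a face (forbidden by the no-monochromatic-face condition) or contain a vertex in its interior, and in the latter case the Schnyder property at vertices on $C$ forces the outgoing edges of color $i+1, i-1$ to escape $C$ on the correct side, producing a contradiction by a standard Euler-type count on the enclosed region. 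Acyclicity implies $P_i(v)$ is a simple path that must terminate at $x_i$.

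The second step is to show that $P_0(v), P_1(v), P_2(v)$ are pairwise internally vertex-disjoint: if two of them shared an internal vertex $w$, then from $w$ the unique color-$i$ outgoing edge would be common to both, forcing the paths to coincide from $w$ onward, which is impossible as they have different endpoints. Now form $G^+$ by adding a new vertex $v^*$ in the outer face adjacent to $x_0, x_1, x_2$; the three paths $P_i(v)$ extended by the edges $x_i v^*$ give three internally disjoint $v$--$v^*$ paths, and for any other pair of inner vertices one obtains three disjoint connecting paths by concatenating their respective triples through $v^*$ and applying a standard exchange argument. Menger's theorem then yields that $G^+$ is $3$-connected, which is exactly the definition of internal $3$-connectedness of $G$.

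For the \emph{if} direction, the plan is to construct the Schnyder wood via an $\alpha$-orientation of a suitable auxiliary graph. Concretely, I would split every edge of $G^\sigma$ into two half-edges, set $\alpha(v)=3$ at every inner vertex and $\alpha(x_i)=0$ at each suspended outer vertex (the outgoing suspended half-edge accounts for the required unit of out-degree in color $i$), and invoke the characterization of when an $\alpha$-orientation exists on a planar map: precisely when the inequalities $\alpha(S)\leq |E(S)|+\text{(boundary contribution)}$ hold for every vertex subset $S$. Internal $3$-connectedness of $G$ is exactly the combinatorial condition needed to rule out a violating cut $S$, because any such $S$ would produce a separator of size at most $2$ in $G^+$. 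Once the $\alpha$-orientation is produced, one recovers the coloring by the middle-path/tracing procedure described in Chapter~\ref{chap:introSW}, generalized from triangulations to maps (where faces of length $>3$ introduce extra bookkeeping), and checks that the Schnyder property and the absence of monochromatic facial cycles hold.

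The main obstacle is the existence half: specifically, proving that the flow/cut condition on the auxiliary graph is equivalent to internal $3$-connectedness of $G$, and then that every $\alpha$-orientation indeed comes from a Schnyder wood (the angular ordering around each vertex and the facial condition are not automatic from the out-degree information alone, unlike in the triangulation case where the correspondence is a clean bijection). Handling general face lengths and the bidirected edges of Figure~\ref{fig:felsnerdoubleedge} is where the argument becomes genuinely delicate and is the part that most directly requires the hypothesis of internal $3$-connectedness rather than mere $2$-connectedness.
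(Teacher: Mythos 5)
This theorem is stated in the paper as a citation to Miller's result (with pointers to Felsner and to Bonichon--Felsner--Mosbah for this formulation), and the paper does not supply its own proof. So your proposal can only be checked on its own merits and against the standard literature proofs.

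Your \emph{only if} direction is essentially the standard argument and is sound. Acyclicity of $G_i\cup G_{i-1}^{-1}\cup G_{i+1}^{-1}$ is the analogue of Lemma~\ref{lem:nodirectedcycleplan}, the disjointness of the three paths $P_i(v)$ is the analogue of Lemma~\ref{lem:nocommon-plan}, and the deduction of $3$-connectedness of $G^+$ from the existence of three internally disjoint $v$--$v^*$ paths for every $v$ is exactly the argument used (in the universal-cover setting) in Lemma~\ref{lem:conjessentially}. The ``standard exchange argument'' that bundles the cases where one of the two separating vertices coincides with $v^*$ is indeed routine, but you should be explicit that a $2$-separator not containing $v^*$ is killed directly by the three disjoint paths from a vertex in a component avoiding $v^*$, whereas if $v^*$ is in the separator one observes that the other separating vertex cannot lie on all three paths $P_0(v),P_1(v),P_2(v)$ simultaneously.

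The \emph{if} direction, however, has a genuine gap in the setup of the auxiliary graph. Splitting each edge of $G^\sigma$ into two half-edges joined at a degree-$2$ midpoint and prescribing out-degrees only at the original vertices does \emph{not} give a structure that can express the two types of edges in Figure~\ref{fig:felsnerdoubleedge}: a midpoint of degree $2$ with out-degree, say, $1$ forces every edge of $G$ to be unidirectional, so bidirected edges are unreachable. The correct auxiliary graph is the primal-dual completion $\pdc{G}$ (introduced in Section~\ref{sec:gencharacterization}), where each edge-vertex has degree $4$ and the single outgoing half-edge can point either toward a primal-vertex (unidirected edge) or toward a dual-vertex (bidirected edge). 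With $\alpha=3$ on primal- and dual-vertices, $\alpha=1$ on edge-vertices, and the adjustment at the three suspension vertices, this is precisely Felsner's formulation, and the Hall/cut feasibility condition for the existence of such an orientation is what corresponds to internal $3$-connectedness. Without the dual-vertices, your flow problem is the wrong one.

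Finally, your closing worry about recovering the angle labeling from an $\alpha$-orientation ``not being automatic from out-degree information alone'' is misplaced in the planar setting. For genus $0$ the correspondence between the relevant $\alpha$-orientations of $\pdc{G}$ and planar Schnyder woods \emph{is} a bijection (this is the content of Felsner's result cited around Proposition~\ref{prop:bijfelsner} and the De Fraysseix--Ossona de Mendez result for triangulations). The extra homological condition on $\gamma$ only appears in higher genus (Theorem~\ref{th:characterizationgamma}). So, once the orientation is constructed on the right auxiliary graph, the coloring step is indeed automatic; the delicate part is the feasibility of the orientation itself.
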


Consider a planar Schnyder wood. Let $G_i$ be the directed graph
induced by the edges of color $i$ of $G$. This definition includes
edges that are half-colored $i$, and in this case, the edges get only
the direction corresponding to color $i$.  The graph $(G_i)^{-1}$ is
the graph obtained from $G_i$ by reversing all its edges. The graph $G_i\cup
G_{i-1}^{-1}\cup G_{i+1}^{-1}$ is obtained from the graph $G$ by
orienting edges in one or two direction depending on whether this
orientation is present in $G_i$, $G_{i-1}^{-1}$ or $G_{i+1}^{-1}$.
Felsner~\cite{Fel01} proved the following essential property of planar
Schnyder wood:

\begin{lemma}[\cite{Fel01}]
  \label{lem:nodirectedcycleplan}
  The graph $G_i\cup (G_{i-1})^{-1}\cup (G_{i+1})^{-1}$ contains no
  directed cycle.
\end{lemma}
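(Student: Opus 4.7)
The plan is to argue by contradiction, combining an angular local analysis at each vertex of the putative cycle with the ``no monochromatic face'' condition of Definition~\ref{def:felsner}. Suppose $C$ is a directed cycle in $H := G_i \cup (G_{i-1})^{-1} \cup (G_{i+1})^{-1}$, chosen among all such cycles so as to enclose the minimum number of faces of $G$; orient $C$ so that this bounded region $R$ lies on its left.

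I would first unpack what the Schnyder property implies about $H$ at a single vertex $v$. Reading counterclockwise around $v$, the incident edges of $H$ appear in the order $e_i(v)$ (OUT in $H$), the color-$(i-1)$ in-edges of $G$ (OUT in $H$, since they lie in $(G_{i-1})^{-1}$), $e_{i+1}(v)$ (IN), the color-$i$ in-edges (IN), $e_{i-1}(v)$ (IN), the color-$(i+1)$ in-edges (OUT), and back to $e_i(v)$. Hence the out-edges of $H$ at $v$ form one contiguous ccw arc bounded at its extremities by $e_{i+1}(v)$ and $e_{i-1}(v)$, while the in-edges form the complementary arc.

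Applying this at every $v \in C$, the incoming cycle edge at $v$ lies in the IN-arc and the outgoing one in the OUT-arc; unwinding the cyclic order then shows that $e_{i+1}(v)$ always lies weakly on the left of $C$, i.e.\ either $e_{i+1}(v) \in C$ or $e_{i+1}(v)$ points strictly into $R$. Combined with the fact that every vertex strictly interior to $R$ has all its neighbors in $R \cup C$ by planarity, this implies that the iterated walk $u \mapsto $ endpoint of $e_{i+1}(u)$ never leaves $R \cup C$ once started there.

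I would close via two cases. If $R$ contains no vertex of $G$ strictly inside, then minimality rules out any chord of $C$ through $R$ (a chord would form, with part of $C$, a strictly smaller $H$-cycle), so $R$ is a single inner face of $G$ bounded by $C$; the side analysis then forces each boundary edge to coincide with some $e_{i+1}(v)$, making $C$ monochromatic of color $i+1$ and directly contradicting the second item of Definition~\ref{def:felsner}. Otherwise, pick an interior vertex $w \in R$ and iterate $e_{i+1}$ from $w$: by minimality the walk cannot cycle inside $R \cup C$ (any such cycle is itself an $H$-cycle with strictly fewer enclosed faces), so it must reach the unique vertex whose outgoing color-$(i+1)$ edge is the half-edge leaving the graph, namely $x_{i+1}$; hence $x_{i+1} \in R \cup C$. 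But $x_{i+1}$ lies on the outer face of $G$, which is in the unbounded component of $C$, so $x_{i+1} \in C$; then the side analysis at $v = x_{i+1}$ requires the half-edge $e_{i+1}(x_{i+1})$ to lie weakly on the left of $C$, whereas by definition it leaves into the outer face on the unbounded side---contradiction. The step I expect to be hardest is this angular side analysis: it is immediate in the generic case, but a careful enumeration of degenerate subcases (where a cycle edge at $v$ coincides with $e_{i-1}(v)$, $e_i(v)$ or $e_{i+1}(v)$) and a consistent convention for ``left'' and ``right'' along $C$ are both needed.
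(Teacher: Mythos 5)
Your side analysis is correct, but Case~2 has a genuine gap: the iterated $e_{i+1}$-walk from $w$ can cycle onto $C$ itself, and that cycle does not have strictly fewer enclosed faces than $C$. Your analysis only shows that $e_{i+1}(v)$ lies weakly on the left of $C$ for $v\in C$; it does not rule out $e_{i+1}(v)$ being the incoming $C$-edge at \emph{every} $v\in C$, which happens precisely when every edge of $C$ is the reversal of a color-$(i+1)$ edge. In that degenerate situation, once the walk hits $C$ it traces $C$ backwards forever and never reaches $x_{i+1}$. You also cannot dismiss this possibility by invoking that $G_{i+1}$ is a tree, since that fact (Lemma~\ref{lem:schnydertree}) is itself a corollary of the lemma you are proving.

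To close the gap you need the second color, which is essentially what the paper's proof of the analogous universal-cover statement (Lemma~\ref{lem:nodirectedcycle}) does by simultaneously working with colors $i$ and $i+1$. When all of $C$ is reversed color $i+1$, the outgoing $C$-edge at every $v\in C$ is an $(i+1)$-in-edge of $G$; your own cyclic-order computation then shows that $e_i(v)$ points strictly into $R$ for every $v\in C$. Hence the $e_i$-walk from $w$ also stays in $R\cup C$, but it always re-enters the interior upon reaching $C$ and so cannot trace $C$; any cycle it forms lies strictly inside $R$ (fewer faces, contradiction), and if it does not cycle it terminates at $x_i$, giving the same half-edge contradiction as before. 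A secondary, more cosmetic issue: a directed $H$-cycle already has a fixed orientation, so ``orient $C$ so that $R$ lies on its left'' should instead read ``by the $(i+1)\leftrightarrow(i-1)$ symmetry of $H$, assume $R$ lies on the left of $C$,'' with the mirror case handled via $e_{i-1}$ and $x_{i-1}$. Your overall idea---using the outer special vertex and its half-edge as a sink for a monochromatic walk---is an attractive, purely planar route that genuinely differs from the paper's (which has no outer face available on the universal cover and instead builds a smaller $H$-cycle from two monochromatic paths and an arc of $C$), and it becomes a complete proof once the degenerate subcase above is handled.
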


Every vertex, is the starting point of an outgoing edge of color
$i$. Since $G_i$ is acyclic by Lemma~\ref{lem:nodirectedcycleplan}, we
have the following:

\begin{lemma}[\cite{Fel01}]
  \label{lem:schnydertree}
  For $i\in\{0,1,2\}$, the digraph $G_i$ is a tree rooted at $x_i$.
\end{lemma}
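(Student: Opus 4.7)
The plan is to combine the local out-degree information furnished by the Schnyder property with the global acyclicity coming from Lemma~\ref{lem:nodirectedcycleplan}. By the Schnyder property every vertex of $G^\sigma$ has out-degree exactly one in color $i$. At the special vertex $x_i$ this unique outgoing color-$i$ edge is the half-edge reaching into the outer face; hence inside the ordinary graph $G$, the digraph $G_i$ has out-degree one at every vertex \emph{except} $x_i$, where the out-degree is zero. At all other vertices $x_{i+1}$ and $x_{i-1}$, the outgoing half-edge has a color different from $i$, so they do have an ordinary outgoing color-$i$ edge just like any inner vertex.

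From any vertex $v\neq x_i$ I would iterate the ``follow the unique color-$i$ outgoing edge'' operation, producing a directed walk $v=v_0,v_1,v_2,\dots$ in $G_i$. This walk cannot revisit a vertex, because otherwise it would close a directed cycle in $G_i\subseteq G_i\cup(G_{i-1})^{-1}\cup(G_{i+1})^{-1}$, contradicting Lemma~\ref{lem:nodirectedcycleplan}. Since $G$ is finite, the walk must eventually reach a vertex with no outgoing color-$i$ edge, and the only such vertex is $x_i$. Thus every vertex of $G$ admits a directed path to $x_i$ in $G_i$, so $G_i$ is (weakly) connected and spans $V(G)$.

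To finish, I would count edges: each vertex $v\neq x_i$ contributes exactly one outgoing color-$i$ edge, and two distinct vertices cannot contribute the same directed edge (an edge direction determines its tail). Hence $G_i$ has exactly $|V(G)|-1$ edges, spans $V(G)$, and is acyclic, so it is a spanning tree; the out-degree pattern (one at every non-root vertex, zero at $x_i$) identifies $x_i$ as the root with all edges oriented toward it. The only genuinely delicate point is the preliminary observation that acyclicity in the union of Lemma~\ref{lem:nodirectedcycleplan} immediately implies acyclicity of the monochromatic subgraph $G_i$ — but this is automatic since $G_i$ is a subgraph of that union.
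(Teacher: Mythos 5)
Your proof is correct and follows the same line the paper sketches just before stating the lemma: every vertex has a unique outgoing color-$i$ edge by the Schnyder property, $G_i$ inherits acyclicity from Lemma~\ref{lem:nodirectedcycleplan} as a subgraph of the union, and following out-edges must terminate at the only out-degree-zero vertex $x_i$. You have simply filled in the details (out-degree count at the $x_j$'s, the edge count $|V|-1$, and connectivity), which the paper leaves implicit by citing Felsner.
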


Thus in planar Schnyder woods, we still have, like for the
triangulation case, three spanning trees. But now some edges
appear in two trees, so it is no more a partition of the edges into
three spanning trees. Figure~\ref{fig:3c-tree} illustrate this
property on the planar Schnyder wood of the left of Figure~\ref{fig:planar-felsner}.

\begin{figure}[!h]
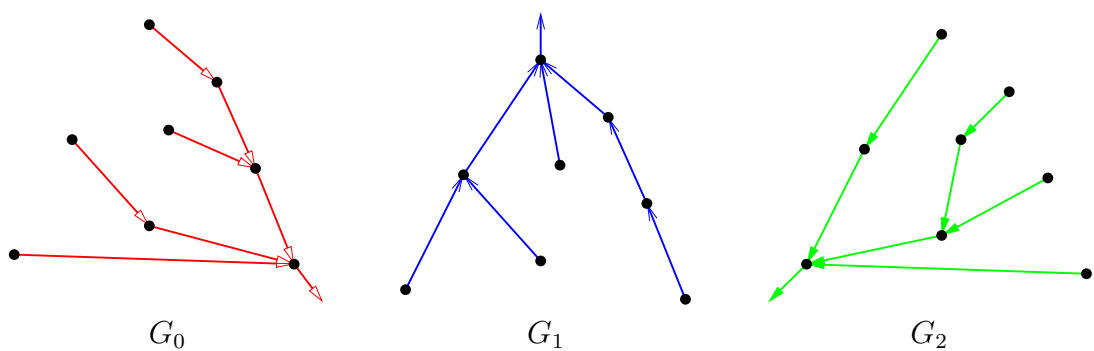

\center
\begin{tabular}{ccc}
\includegraphics[scale=0.4]{planar-3c-1} \ \ & \  \ 
\includegraphics[scale=0.4]{planar-3c-2} \ \ & \  \ 
\includegraphics[scale=0.4]{planar-3c-3}\\
 $G_0$ \ \ & \  \ $G_1$ \ \ & \  \ $G_2$ \\
\end{tabular}
\caption{Three spanning trees.}
\label{fig:3c-tree}
\end{figure}

For every vertex $v$ and color $i$, let $P_i(v)$ denote the directed
path from $v$ to $x_i$, composed of edges colored $i$. Then another
consequence of Lemma~\ref{lem:nodirectedcycleplan} is the following:

\begin{lemma}[\cite{Fel01}]
  \label{lem:nocommon-plan}
  For every vertex $v$ and $i,j\in\{0,1,2\}$, $i\neq j$, the two paths
  $P_i(v)$ and $P_j(v)$ have $v$ as only common vertex.
\end{lemma}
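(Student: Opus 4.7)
The plan is to derive a contradiction with Lemma~\ref{lem:nodirectedcycleplan} by exhibiting a directed cycle in $G_i \cup (G_{i-1})^{-1} \cup (G_{i+1})^{-1}$ whenever $P_i(v)$ and $P_j(v)$ meet outside $v$. Suppose, for contradiction, that these two paths share some vertex other than $v$, and let $w$ be the first such vertex encountered when traversing $P_i(v)$ starting from $v$. Define $Q_i$ to be the subpath of $P_i(v)$ from $v$ to $w$ and $Q_j$ to be the subpath of $P_j(v)$ from $v$ to $w$.

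By the choice of $w$ as the \emph{first} shared vertex along $P_i(v)$, no internal vertex of $Q_i$ lies on $P_j(v)$, and hence the vertex sets of $Q_i$ and $Q_j$ meet only at $v$ and $w$. Moreover $Q_i$ and $Q_j$ begin on different edges at $v$: since $v$ has out-degree exactly one in each color (Schnyder property), the edges $e_i(v)$ and $e_j(v)$ leaving $v$ are distinct, and $Q_i$, $Q_j$ start respectively on these two edges. So $Q_i$ and $Q_j$ are internally disjoint and share no edge.

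Now reverse $Q_j$: we obtain a directed path from $w$ to $v$ whose edges belong to $(G_j)^{-1}$. Concatenating $Q_i$ (oriented $v \to w$, edges in $G_i$) with the reverse of $Q_j$ (oriented $w \to v$, edges in $(G_j)^{-1}$) produces a simple directed cycle $C$ contained in $G_i \cup (G_j)^{-1}$. Since $i \neq j$ and $j \in \{0,1,2\}$, we have $j \in \{i-1, i+1\} \pmod 3$, so $(G_j)^{-1}$ is either $(G_{i-1})^{-1}$ or $(G_{i+1})^{-1}$; in particular $C$ is a directed cycle of $G_i \cup (G_{i-1})^{-1} \cup (G_{i+1})^{-1}$, contradicting Lemma~\ref{lem:nodirectedcycleplan}.

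There is no real obstacle to the argument; the only delicate points are (i) to take the \emph{first} intersection so that $Q_i$ and $Q_j$ share only their endpoints and the resulting closed walk is a genuine cycle, and (ii) to make sure the cycle uses two distinct edges at $v$, which follows directly from the out-degree-one-per-color part of the Schnyder property (also handling painlessly the case of bicolored edges, since each bicolored edge contributes its two colors in opposite directions and thus in our situation appears in $G_i$ and in $(G_j)^{-1}$ with compatible orientations, never forcing $Q_i$ and $Q_j$ to coincide as a single edge).
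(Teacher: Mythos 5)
Your proof is correct and is essentially the same argument the paper relies on: the paper cites this lemma to Felsner and states it as a consequence of Lemma~\ref{lem:nodirectedcycleplan}, and when it proves the generalized version (Lemma~\ref{lem:nocommongeneral}) it uses exactly your reasoning---two common vertices would yield a directed cycle in $G_i\cup(G_j)^{-1}$, contradicting acyclicity. Your write-up just spells out the ``first intersection'' and bicolored-edge details that the paper's one-line proof leaves implicit.
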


Thus, like for triangulations, for every vertex $v$, the three
monochromatic paths $P_0(v)$, $P_1(v)$, $P_2(v)$ do not intersect each
other except on $v$. These paths divide $G$ into regions (see
Figure~\ref{fig:regions-planar}) which form the basis of several graph
drawing methods using Schnyder woods, see for
instance~\cite{Sch90,Fel01}.

\begin{figure}[!h]
\center
\includegraphics[scale=0.5]{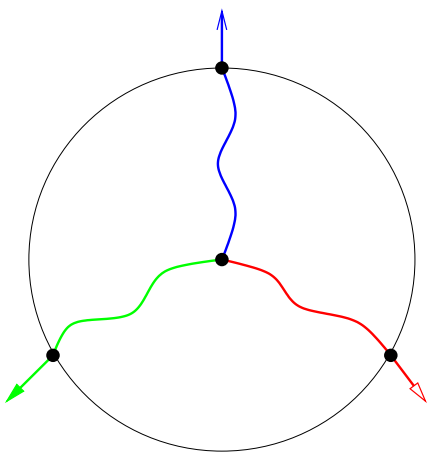}
\caption{Regions delimited by the three monochromatic paths from a vertex.}
\label{fig:regions-planar}
\end{figure}

Allowing edges to be oriented in one or two directions as in
Definition~\ref{def:felsner} also enables to define the dual of a
planar Schnyder wood. Indeed, by applying the correspondence of
Figure~\ref{fig:edgedualrule}, a planar Schnyder wood of $G^\sigma$
automatically defines a planar Schnyder wood of the dual map (with a
special rule for the outer face that gets three vertices plus three
half-edges).  Figure~\ref{fig:planar-felsner-duality} illustrate this
property on the planar Schnyder wood of the left of
Figure~\ref{fig:planar-felsner} where vertices of the primal are black
and vertices of the dual are white (this serves as a convention for
the rest of the manuscript).

\begin{figure}[!h]
\center
\includegraphics[scale=0.5]{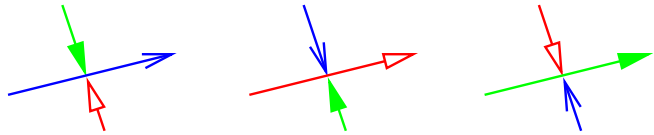}
\caption{Rules for the dual Schnyder wood.}
\label{fig:edgedualrule}
\end{figure}

\begin{figure}[!h]
\center
\includegraphics[scale=0.5]{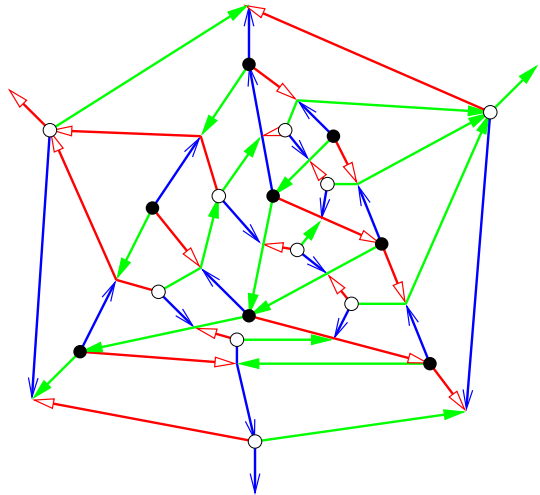}
\caption{Superposition of a planar Schnyder wood and its dual.}
\label{fig:planar-felsner-duality}
\end{figure}

\part{Generalization to orientable surfaces}
\label{part:2}
\chapter{Generalization of Schnyder woods}
\label{sec:generalization}

\section{Euler's formula and consequences}

For higher genus triangulated surfaces, a generalization of Schnyder
woods has been proposed by Castelli Aleardi, Fusy and
Lewiner~\cite{CFL09}, with applications to encoding.  In this
definition, the simplicity and the symmetry of the original definition
of Schnyder woods are lost. Here we propose an alternative
generalization of Schnyder woods for higher genus.

A closed curve on a surface is \emph{contractible} if it can be
continuously transformed into a single point.  Given a graph embedded
on a surface, a \emph{contractible loop} is an edge forming a
contractible curve. Two edges of an embedded graph are called
\emph{homotopic multiple edges} if they have the same extremities and
their union encloses a region homeomorphic to an open disk.  Except if
stated otherwise, we consider graphs embedded on surfaces that do not
have contractible loops nor homotopic multiple edges. Note that this
is a weaker assumption, than the graph being \emph{simple}, i.e. not
having loops nor multiple edges. A graph embedded on a surface is
called a \emph{map} on this surface if all its faces are homeomorphic
to open disks.  A map is a triangulation if all its faces have size
three.  A \emph{triangle} of a map is a closed walk of size $3$
enclosing a region that is homeomorphic to an open disk.  This region
is called the \emph{interior} of the triangle. Note that a triangle is
not necessarily a face of the map as its interior may be not empty. A
\emph{separating triangle} is a triangle whose interior is non empty.
Note also that a triangle is not necessarily a cycle since
non-contractible loops are allowed.

Euler's formula says that any map on an orientable surface of genus
$g$ satisfies $n-m+f=2-2g$. In particular, the plane is the surface of
genus $0$, the torus the surface of genus $1$, the double torus the
surface of genus $2$, etc.  By Euler's formula, a triangulation of
genus $g$ has exactly $3n+6(g-1)$ edges. For a toroidal triangulation,
Euler's formula gives exactly $m=3n$ so there is hope for a nice
object satisfying the Schnyder property for every vertex.  But having
a generalization of Schnyder woods in mind, for all $g\ge 2$ there are
too many edges to force all vertices to have outdegree exactly
$3$. This problem can be overcome by allowing vertices to fulfill the
Schnyder property ``several times'', i.e. such vertices have outdegree
3, 6, 9, etc. with the color property of Figure~\ref{fig:LSP} repeated
several times (see Figure~\ref{fig:369}).

\begin{figure}[!h]
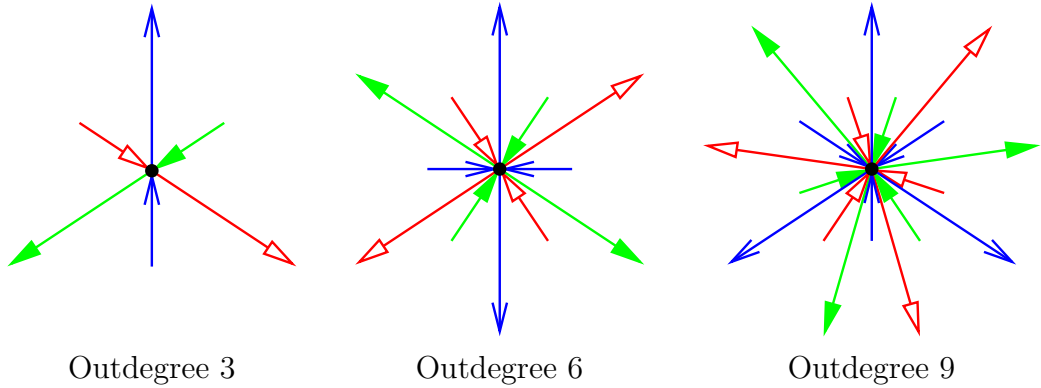

\center
\begin{tabular}{ccc}
\includegraphics[scale=0.5]{type-1-} \  &  \
\includegraphics[scale=0.5]{type-2} \  &  \
\includegraphics[scale=0.5]{type-3} \\
Outdegree 3 \  & \ Outdegree 6 \  &  \ Outdegree 9 \\
\end{tabular}
\caption{The Schnyder property repeated several times around a
  vertex.}
\label{fig:369}
\end{figure}

In this manuscript we formalize this idea to obtain a concept of Schnyder
woods applicable to general maps (not only triangulations) on
arbitrary orientable surfaces.

\section{Schnyder angle labeling}
\label{sec:anglelabeling}

Consider a map $G$ on an orientable surface. An \emph{angle labeling}
of $G$ is a labeling of the angles of $G$ (i.e. face corners of $G$)
in colors $0$, $1$, $2$.  More formally, we denote an angle labeling
by a function $\ell:\mc{A}\to \mb{Z}_3$, where $\mc{A}$ is the set of
angles of $G$.  Given an angle labeling, we define several properties
of vertices, faces and edges that generalize the notion of Schnyder
angle labeling in the planar case~\cite{Fel-book}.

Consider an angle labeling $\ell$ of $G$.
A vertex or a face $v$ is of \emph{type $k$}, for $k\geq 1$, if the
labels of the angles around $v$ form, in counterclockwise order, $3k$
nonempty intervals such that in the $j$-th interval all the angles
have color $(j \mod 3)$. A vertex or a face $v$ is of \emph{type
  $0$}, if the labels of the angles around $v$ are all of color $i$
for some $i$ in $\{0,1,2\}$. 

An edge $e$ is of \emph{type $1$ or $2$} if the labels of the four
angles incident to edge $e$ are, in clockwise order, $i-1$, $i$,
$i$, $i+1$ for some $i$ in $\{0,1,2\}$. The edge $e$ is of \emph{type
  $1$} if the two angles with the same color are incident to the same
extremity of $e$ and of \emph{type $2$} if the two angles are incident
to the same side of $e$. An edge $e$ is of
\emph{type $0$} if the labels of the four angles incident to edge $e$
are all $i$ for some $i$ in $\{0,1,2\}$ (see
Figure~\ref{fig:edgelabeling}).

If there exists a function $f:V\to \mathbb N$ such that every vertex
$v$ of $G$ is of type $f(v)$, we say that $\ell$ is $f$-{\sc vertex}.
If we do not want to specify the function $f$, we simply say that
$\ell$ is {\sc vertex}.  We sometimes use the notation $K$-{\sc
  vertex} if the labeling is $f$-{\sc vertex} for a function $f$ with
$f(V)\subseteq K$. When $K=\{k\}$, i.e. $f$ is a constant function,
then we use the notation $k$-{\sc vertex} instead of $f$-{\sc
  vertex}. Similarly we define {\sc face}, $K$-{\sc face}, $k$-{\sc
  face}, {\sc edge}, $K$-{\sc edge}, $k$-{\sc edge}.

The following lemma shows that property {\sc edge} is the central
notion here.  Properties $K$-{\sc vertex} and $K$-{\sc face} are used
 to express additional requirements on the angle labellings
that are considered.

\begin{lemma}
\label{lem:EDGElabeling}
An {\sc edge} angle labeling is {\sc vertex} and {\sc face}.
\end{lemma}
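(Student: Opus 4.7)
The plan is to reduce the claim to a local statement at each edge: whenever two angles are consecutive in the counterclockwise cyclic order around a vertex or a face, their labels differ by either $0$ or $+1$ modulo $3$. Granting this, the lemma follows quickly. Reading the labels in counterclockwise order around a vertex $v$ produces a cyclic sequence whose consecutive differences lie in $\{0,+1\}\subset\mathbb{Z}_3$ and whose total (being cyclic) is $0 \bmod 3$; hence the number of $+1$-jumps equals $3k$ for some $k\ge 0$. When $k=0$, all labels around $v$ coincide and $v$ is of type $0$; when $k\ge 1$, the labels form exactly $3k$ maximal constant intervals whose colors cycle through $0,1,2$ in counterclockwise order, so $v$ is of type $k$. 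The same argument applied at any face $f$ yields the {\sc face} part.

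The main work is then the local verification. Fix an edge $e$ with endpoints $u,v$ and incident faces $f_L,f_R$, and label the four incident angles as corners of a rectangle: $NW,NE$ at $u$ and $SE,SW$ at $v$, with $NW,SW$ on the side of $f_L$ and $NE,SE$ on the side of $f_R$, so that the clockwise cyclic order around $e$ is $(NW,NE,SE,SW)$. A direct inspection of the embedding shows that the counterclockwise transition at $e$ crosses, at $u$, from $NW$ to $NE$; at $v$, from $SE$ to $SW$; along $\partial f_L$ (with $f_L$ on the left), from $SW$ to $NW$; and along $\partial f_R$, from $NE$ to $SE$. For each of the four possible cyclic rotations of the type~$1$ or type~$2$ pattern $(i-1,i,i,i+1)$, one then computes the change at each of these four transitions. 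A type~$0$ edge trivially gives change $0$ everywhere; a type~$1$ edge gives change $0$ at the endpoint where the two same-color angles sit and $+1$ at the other three transitions; a type~$2$ edge gives change $0$ at the face on the same-color side and $+1$ at the other three transitions. In every case the change is either $0$ or $+1$ modulo $3$, which is exactly what the reduction needs.

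The only mild obstacle is the orientation bookkeeping required to match each rotation of $(i-1,i,i,i+1)$ to the correct rectangle configuration (e.g.\ a type~$1$ edge with same-color angles at $u$ corresponds to $(NW,NE,SE,SW)=(i,i,i+1,i-1)$); once this is sorted out, the arithmetic is immediate. Alternatively, the {\sc face} part can be derived from the {\sc vertex} part by duality: passing to the dual map exchanges endpoints with sides, so it swaps edge types $1$ and $2$ while preserving type~$0$; thus a primal {\sc edge} labeling is also an {\sc edge} labeling of the dual, and the {\sc vertex} conclusion applied to the dual is the {\sc face} conclusion for the primal.
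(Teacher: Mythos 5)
Your proof is correct and follows essentially the same route as the paper's: the paper's one-paragraph argument reduces to the observation that under the {\sc edge} property any two counterclockwise-consecutive angles around a vertex or a face differ by $0$ or $+1$ modulo $3$ (citing Figure~\ref{fig:edgelabeling}), and then notes this forces the labels to form $3k$ monotone intervals. You have simply carried out the case check that the paper leaves to the figure, and your bookkeeping of the four transitions around an edge is accurate; the closing duality remark is a nice alternative derivation of the {\sc face} half, consistent with the paper's own later use of the vertex/face symmetry.
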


\begin{proof}
  Consider $\ell$ an {\sc edge} angle labeling. Consider two
  counterclockwise consecutive angles $a,a'$ around a vertex (or a
  face). Property {\sc edge} implies that $\ell(a')=\ell(a)$ or
  $\ell(a')=\ell(a)+1$ (see Figure~\ref{fig:edgelabeling}). Thus by
  considering all the angles around a vertex or a face, it is clear
  that $\ell$ is also {\sc vertex} and {\sc face}.
\end{proof}

Thus we define a Schnyder labeling as follows:

\begin{definition}[Schnyder labeling]
\label{def:schnyderlabeling}
Given a map $G$ on an orientable surface, a \emph{Schnyder labeling} of $G$ is
an {\sc edge} angle labeling of $G$.
\end{definition}

Figure~\ref{fig:edgelabeling} shows how a Schnyder labeling defines an
orientation and coloring of the edges of the graph with edges oriented
in one direction or in two opposite directions.  Compared to
Definition~\ref{def:felsner}, we allow an edge to be oriented in two
opposite directions that are incoming, and in this case the two
direction have the same color.

\begin{figure}[!h]
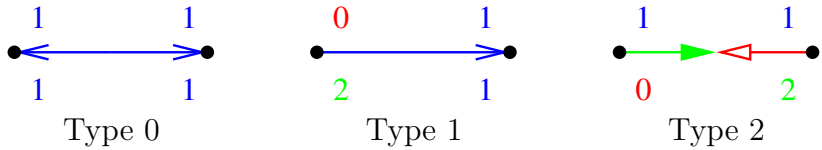

\center
\begin{tabular}{ccc}
\includegraphics[scale=0.5]{rule-edge-angle-0} \ \ \  &  \ \ \
\includegraphics[scale=0.5]{rule-edge-angle-1} \ \ \  &  \ \ \
\includegraphics[scale=0.5]{rule-edge-angle-2} \\
Type 0  \ \ \  &  \ \ \ Type 1  \ \ \  &  \ \ \ Type 2  \\
\end{tabular}
\caption{Correspondence between Schnyder labellings and some
  bi-orientations and colorings of the edges.}
\label{fig:edgelabeling}
\end{figure}

The correspondence of Figure~\ref{fig:edgelabeling} gives the
following bijection, as proved by Felsner~\cite{Fel03} (see
Figure~\ref{fig:planar-felsner-angle}):

\begin{proposition}[\cite{Fel03}]
\label{prop:bijfelsner}
  If $G$ is a planar map and $x_0$, $x_1$, $x_2$ are three vertices
  occurring in counterclockwise order on the outer face of $G$, then
  the planar Schnyder woods of $G^\sigma$ are in bijection with the
  \{1,2\}-{\sc edge}, 1-{\sc vertex}, 1-{\sc face} angle labellings of
  $G^\sigma$ (with the outer face being 1-{\sc face} but in clockwise
  order w.r.t.~itself).
\end{proposition}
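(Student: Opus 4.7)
The plan is to elevate the local correspondence of Figure~\ref{fig:edgelabeling} between edge types and bi-oriented colored edges into the claimed global bijection. Starting from a planar Schnyder wood of $G^\sigma$, I would read off, for each edge, the colors of its four incident angles according to Figure~\ref{fig:edgelabeling}. Since every edge of a planar Schnyder wood is oriented either in one direction (corresponding to a type $1$ edge) or in two opposite directions with distinct colors (corresponding to a type $2$ edge), the resulting angle labeling $\ell$ is $\{1,2\}$-{\sc edge}, and by Lemma~\ref{lem:EDGElabeling} it is then automatically {\sc vertex} and {\sc face}; it remains to pin down the types that occur.

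For a vertex $v$ of $G^\sigma$, the Schnyder property states that the three outgoing edges appear in counterclockwise order as $e_0(v),e_1(v),e_2(v)$ and every incoming edge of color $i$ lies in the CCW sector from $e_{i+1}(v)$ to $e_{i-1}(v)$; reading this through Figure~\ref{fig:edgelabeling}, the labels around $v$ form exactly three nonempty intervals of colors $0,1,2$ in CCW order, so $v$ is of type $1$. The same verification applies at each suspension vertex $x_i$ thanks to its outgoing half-edge of color $i$, so $\ell$ is $1$-{\sc vertex}. For a face $F$, type $0$ would force the boundary walk of $F$ to be a monochromatic directed cycle, contradicting the second bullet of Definition~\ref{def:felsner}; one must then also exclude types $k\ge 2$, which pins every face down to type exactly $1$. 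At the outer face, the cyclic orientation is reversed relative to the ambient one used for inner faces, which is exactly the source of the CW convention stated in the proposition.

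Conversely, given a $\{1,2\}$-{\sc edge}, $1$-{\sc vertex}, $1$-{\sc face} angle labeling of $G^\sigma$ (outer face $1$-{\sc face} in CW), apply Figure~\ref{fig:edgelabeling} in reverse to obtain a bi-orientation and coloring of the edges. The $1$-{\sc vertex} condition at each ordinary vertex is precisely the Schnyder property; the $1$-{\sc face} condition at each inner face forbids its boundary from being a monochromatic directed cycle; and the CW $1$-{\sc face} condition at the outer face is exactly what produces outgoing half-edges of colors $0,1,2$ at $x_0,x_1,x_2$ in CCW order. The two constructions are inverse to each other by construction, since both are purely local around each edge. The main obstacle I anticipate is the global step ruling out faces of type $k\ge 2$ in the forward direction: the cleanest way I would try is to count ``color-change corners'' on each edge, around each vertex, and around each face, and then combine the resulting identity with Euler's formula $n-m+f=2$ together with the out-degree constraints coming from the $1$-{\sc vertex} condition; all other verifications are strictly local and follow directly from Figure~\ref{fig:edgelabeling}.
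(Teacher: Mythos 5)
The paper does not actually prove this proposition — it cites it to Felsner~\cite{Fel03} — so the nearest internal template is the proof of Proposition~\ref{prop:bijbipolar}, which uses exactly the counting argument you sketch (count color changes $c$ around edges, vertices, faces; use $\sum_e c(e) = \sum_v c(v) + \sum_f c(f)$ and Euler's formula to pin down the face types). Your outline is correct and matches that template, and you correctly identify the sole global step as the bottleneck.

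However, you leave that step as a plan rather than carrying it out, and there is a wrinkle you do not address: in $G^\sigma$ the three half-edges have only two incident angles, not four, so they do not fall under the $\{1,2\}$-{\sc edge} classification and the per-edge contribution $c(e)=3$ fails for them. In the spherical case of Proposition~\ref{prop:bijbipolar}, the identity $\sum_f c(f) = 3f$ closes because the two type-$0$ poles contribute $0$ to $\sum_v c(v)$. Here all $n$ vertices are type $1$, so $\sum_v c(v) = 3n$, and a naive count with only the $m$ proper edges gives $\sum_f c(f) = 3m - 3n = 3f - 6$, which is too small by $6$ — exactly the slack that the three half-edges (contributing $c(e)=2$ each) must supply. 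You need to spell out this bookkeeping for the suspension before you can conclude every face has $c(f)=3$. Relatedly, in the converse direction your assertions that $1$-{\sc vertex} ``is precisely the Schnyder property'' and that the CW outer-face condition ``produces'' the three half-edges are true but require a sentence of justification each (the color transitions around a vertex occur exactly at outgoing edges; the half-edges are the transitions on the outer face, and the CW orientation is what makes them outgoing rather than incoming); as stated they are asserted, not argued.
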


\begin{figure}[!h]
\center
\includegraphics[scale=0.5]{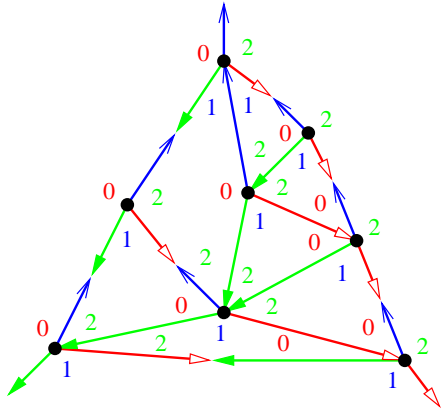}
\caption{Angle labeling corresponding to a planar Schnyder wood.}
\label{fig:planar-felsner-angle}
\end{figure}

In Sections~\ref{sec:sphericalSW} and~\ref{sec:generalSW}, we show how
Schnyder labellings can be used to generalize Schnyder woods on
surfaces and then exhibit some  properties in the dual
(Section~\ref{sec:duality}) and in the universal cover
(Section~\ref{sec:universalcover}).  In
Section~\ref{sec:conjectureexistence}, we raise some conjectures about
the existence of Schnyder labeling in higher genus that leads us to
consider {\sc edge}, $\mathbb{N}^*$-{\sc vertex}, $\mathbb{N}^*$-{\sc
  face} angle labellings. These particular angle labellings seem to be
the most interesting case of Schnyder labellings when $g\geq 1$ but the
core of the structural properties (Chapters~\ref{sec:characterization}
and~\ref{sec:structure}) is written for the general situation of {\sc
  edge} angle labellings with no additional requirement on properties
{\sc vertex} and {\sc face}.

\section{Spherical Schnyder woods}
\label{sec:sphericalSW}

We generalize angle labellings letting vertices have outdegree equal to
$0\bmod 3$ and not necessarily exactly $3$ like in the previous
definitions of Schnyder woods. Such a relaxation allows us to define a
new kind of Schnyder wood in the plane with no ``special rule'' on the
outer face (i.e. not just considering inner vertices like in
Definition~\ref{def:schnyder} and without adding half-edges reaching
the outer face like in Definition~\ref{def:felsner}). We call them
\emph{spherical Schnyder woods} since there is no face
playing the particular role of the outer face.

\begin{definition}[spherical Schnyder wood]
\label{def:bipolar}
Given a planar map $G$, a \emph{spherical Schnyder wood} is an
orientation and coloring of the edges of $G$ with the colors $0$, $1$,
$2$, where every edge is oriented in one direction or in two
opposite directions (each direction having a distinct color and being
outgoing), satisfying the following conditions:
 
\begin{itemize}
\item Every vertex, except exactly two vertices called \emph{poles},
  satisfies the Schnyder property and each pole has only incoming
  edges, all of the same color.
\item There is no face whose boundary  is a
monochromatic cycle.
\end{itemize}
\end{definition}

See Figure~\ref{fig:example-planar-bipolar} for two examples of
spherical Schnyder woods.

\begin{figure}[!h]
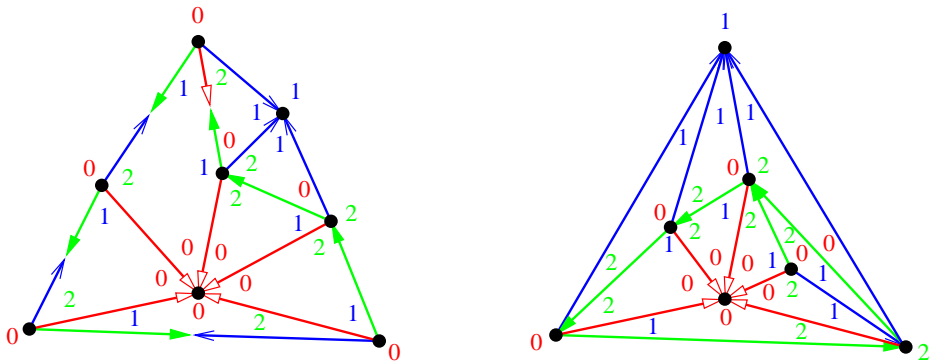

\center
  \includegraphics[scale=0.5]{planar-3connected-bipolar}
\hspace{1cm}
\includegraphics[scale=0.5]{planar-triangulation-bipolar-red}
\caption{A spherical Schnyder wood of a planar map and of a planar
  triangulation.}
\label{fig:example-planar-bipolar}
\end{figure}

Like for planar Schnyder woods, there is a bijection between spherical
Schnyder woods and particular angle labellings (see
Figure~\ref{fig:example-planar-bipolar}):

\begin{proposition}
\label{prop:bijbipolar}
If $G$ is a planar map, then the spherical Schnyder woods of $G$ are in
bijection with the \{1,2\}-{\sc edge}, \{0,1\}-{\sc vertex}, 1-{\sc
  face} angle labellings of $G$.
\end{proposition}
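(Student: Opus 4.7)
The plan is to run the standard Felsner-style correspondence between Schnyder-like orientations and Schnyder angle labellings (see Proposition~\ref{prop:bijfelsner}), adapted to the spherical setting where the special outer-face vertices are replaced by the two poles. Both maps of the bijection are given by the same local rule, the one of Figure~\ref{fig:edgelabeling} read left-to-right or right-to-left; so the main content is to check that the claimed global conditions on edges, vertices and faces line up on both sides.

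First I would set up the forward map (spherical Schnyder wood to angle labeling). Applying the rule of Figure~\ref{fig:edgelabeling} to each edge produces only type-1 edges (from single-oriented edges) and type-2 edges (from bi-oriented edges with distinct colors), so the resulting labeling is automatically \{1,2\}-edge. To check it is well-defined, I would verify at every vertex that the two edges bounding a given corner assign the same color to it. At a non-pole vertex this is exactly what the Schnyder property encodes: the three outgoing edges $e_0, e_1, e_2$ in counterclockwise order together with the requirement that incoming edges of color $i$ lie in the counterclockwise sector from $e_{i+1}$ to $e_{i-1}$ force every corner of this sector to carry label $i$. At a pole with all incident edges single-oriented and incoming of a common color $i$, every corner receives $i$. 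Hence non-pole vertices become $1$-vertex and poles become $0$-vertex, giving a \{0,1\}-vertex labeling.

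Next I would handle the face condition by an Euler-style count of color changes. Inspecting Figure~\ref{fig:edgelabeling} one sees that each type-1 or type-2 edge contributes exactly $3$ color changes when we sum, over its four corners, whether two consecutive angles at the adjacent vertex or face carry distinct colors (while a type-0 edge contributes $0$). For a \{1,2\}-edge labeling this yields
\[
\sum_v \mathrm{type}(v) + \sum_f \mathrm{type}(f) \;=\; m.
\]
Combining with Euler's formula $n-m+f=2$ and the vertex count $\sum_v \mathrm{type}(v) = (n-2)\cdot 1 + 2\cdot 0 = n-2$ coming from the two-pole assumption, I obtain $\sum_f \mathrm{type}(f) = f$. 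By Lemma~\ref{lem:EDGElabeling} every face has some type $\geq 0$, and a type-$0$ face is precisely a face whose boundary is a monochromatic cycle, which is forbidden in a spherical Schnyder wood. Hence every face has type $\geq 1$, and the equation $\sum_f \mathrm{type}(f) = f$ forces every face to be of type exactly $1$.

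For the reverse direction, I would apply the same local rule to convert a \{1,2\}-edge, \{0,1\}-vertex, 1-face labeling into an orientation and coloring: type-$1$ vertices recover the Schnyder property, and type-$0$ vertices become poles. The key local observation is that no type-$2$ edge can be incident to a type-$0$ vertex, since the two corners of a type-$2$ edge at a single extremity always carry distinct colors; hence every edge at a $0$-vertex is type-$1$, incoming, and of the common color of the angles at that vertex. Applying the same color-change identity to a \{1,2\}-edge, \{0,1\}-vertex, 1-face labeling forces $\sum_v \mathrm{type}(v) = n-2$, so there are exactly two $0$-vertices, i.e.\ exactly the two required poles; the 1-face condition rules out monochromatic face cycles. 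The main obstacle is really just bookkeeping: keeping the local conventions of Figure~\ref{fig:edgelabeling} straight at the exceptional type-$0$ vertices and verifying that the Schnyder property translates exactly to the three-interval condition at the other vertices; once this is done, the face condition and the count of poles both drop out of the identity $\sum_v \mathrm{type}(v) + \sum_f \mathrm{type}(f) = m$.
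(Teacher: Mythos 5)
Your proposal is correct and follows essentially the same route as the paper's proof: the same local correspondence of Figure~\ref{fig:edgelabeling} in both directions, the same use of Lemma~\ref{lem:EDGElabeling}, and the same Euler-style count of color changes around edges, vertices and faces (your identity $\sum_v \mathrm{type}(v)+\sum_f \mathrm{type}(f)=m$ is the paper's $\sum_v c(v)+\sum_f c(f)=3m$ divided by $3$) to force $1$-{\sc face} in one direction and exactly two poles in the other. Your explicit remark that a type-$0$ vertex cannot meet a type-$2$ edge is a small extra local check the paper leaves implicit, but the argument is the same.
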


\begin{proof}
  ($\Longrightarrow$) Consider a spherical Schnyder wood of $G$.  We
  label the angles around a pole $v$ with the color of its incident
  edges. We label the angles of a non-pole vertex $v$ such that all
  the angles in the counterclockwise sector from $e_{i+1}(v)$ to
  $e_{i-1}(v)$ are labeled $i$. Then one can easily check that the
  two poles are of type 0, that all the non-poles are of type 1 and
  that all the edges are of type 1 or 2. Then by
  Lemma~\ref{lem:EDGElabeling}, the labeling is also {\sc face}.

  It remains to prove that the labeling is 1-{\sc face}.  For this
  purpose, we count the color changes of the angles at vertices, faces
  and edges, and denote it by the function $c$.  Since the labeling is
  \{1,2\}-{\sc edge}, for an edge $e$ there are exactly three changes
  around $e$, so $c(e)=3$ and $\sum_e c(e)=3m$.  These changes around
  an edge can happen either around one of the two incident vertices or
  one of the two incident faces, so
  $\sum_e c(e)=\sum_v c(v)+\sum_f c(f)$. The vertices $v$ of type 1
  have $c(v)=3$. The two vertices $v$ of type 0 have $c(v)=0$. So
  $\sum_v c(v)=3(n-2)$. Thus finally, $\sum_f c(f)=3m-3(n-2)=3f$ (the
  last equality is by Euler's formula).  Since the labeling is {\sc
    face}, we have $c(f)=0\bmod 3$ for every face $f$. As there is no
  face whose boundary is a monochromatic cycle, all the faces have
  $c(f)\geq 3$ and thus all the faces have exactly $c(f)=3$ and the
  labeling is 1-{\sc face}.

  ($\Longleftarrow$) Consider a \{1,2\}-{\sc edge}, \{0,1\}-{\sc
    vertex}, 1-{\sc face} angle labeling of $G$.  Again, we count the
  color changes of the angles at vertices, faces and edges.  We have
  $3m=\sum_e c(e)=\sum_v c(v)+\sum_f c(f)$ and $\sum_f c(f)=3f$.  Then
  Euler's formula gives $\sum_v c(v)=3m-3f=3(n-2)$. The vertices $v$
  of type 1 have $c(v)=3$ and the vertices $v$ of type 0 have
  $c(v)=0$. So there are exactly two vertices of type 0.  Consider the
  coloring and orientation of the edges of $G$ obtained by the
  correspondence shown in Figure~\ref{fig:edgelabeling}. It is clear
  that every vertex, except exactly the two vertices of type 0,
  satisfies the Schnyder property and that each vertex of type 0 has
  only incoming edges, all of the same color.  Since the angle
  labeling is 1-{\sc face}, there is no face whose boundary is
  a monochromatic cycle as such a face is of type 0. So the considered
  coloring and orientation of the edges of $G$ is a spherical Schnyder
  wood.
\end{proof}

A spherical Schnyder wood of a planar triangulation has all its faces
that are triangular and 1-{\sc face}. Thus the three angles of a face
are labels $0,1,2$ and there is no bi-directed edge. So the angle
labeling is 1-{\sc edge}.

Spherical Schnyder woods seem somewhat less ``regular'' than usual
Schnyder woods. For example, the colors of the edges at the two poles
can be distinct or identical and the edges having the same color can
induce a connected graph or not and may contains cycles (see
Figures~\ref{fig:example-planar-bipolar}
and~\ref{fig:planar-bipolar-strange}).

\begin{figure}[!h]
\center
\includegraphics[scale=0.3]{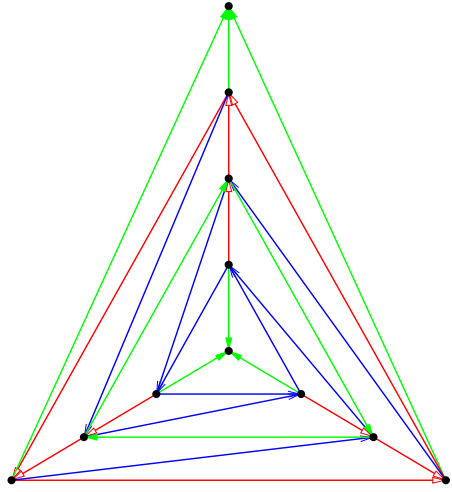}
\caption{A spherical Schnyder wood of a planar triangulation
  where the green color has three components including the two poles.}
\label{fig:planar-bipolar-strange}
\end{figure}

Note that the two poles
must be non-adjacent vertices since they have only incoming
edges. Then it is not difficult to see that $K_4$ is the only planar
triangulation that does not admit a spherical Schnyder wood.

\begin{proposition}
\label{prop:notk4}
  A planar triangulation that is not $K_4$  admits a  spherical Schnyder wood.
\end{proposition}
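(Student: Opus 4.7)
Since $G\neq K_4$ is a simple planar triangulation with $n\geq 5$ vertices, $G$ is $3$-connected, hence internally $3$-connected for any choice of outer face. Thus Theorem~\ref{th:schnyder}, applied after fixing any face as outer face, yields a classical planar Schnyder wood of $G$ with outer vertices $x_0,x_1,x_2$ in counterclockwise order. A standard argument using Lemma~\ref{lem:schnydertree} shows that every inner edge incident to $x_i$ is colored $i$ and directed into $x_i$, and every inner vertex has outdegree exactly three. My plan is to modify this classical wood into a spherical Schnyder wood.

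By Proposition~\ref{prop:bijbipolar}, it suffices to produce a $\{1,2\}$-{\sc edge}, $\{0,1\}$-{\sc vertex}, $1$-{\sc face} angle labeling of $G$. Starting from the labeling associated to the classical Schnyder wood via Proposition~\ref{prop:bijfelsner}, I would first remove the three half-edges and relabel the angles of the formerly outer face so that they appear $(0,1,2)$ in \emph{counterclockwise} order rather than clockwise: for instance, keep the angle at $x_0$ equal to $0$ and relabel the angles at $x_1$ and $x_2$ to $1$ and $2$. This local change is designed to turn $x_0$ into a $0$-{\sc vertex} (a pole of color $0$) while keeping $x_1,x_2$ of type $1$, provided the three outer edges are simultaneously oriented and colored according to the local patterns of Figure~\ref{fig:edgelabeling}. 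Next, I would pick any inner vertex $v$ of $G$ (which exists since $n\geq 5$; this is the sole place where the hypothesis $G\neq K_4$ enters) and perform a propagating surgery along a monochromatic directed path from $v$ to the outer boundary, making $v$ the second pole with all incoming edges of a single color.

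The main obstacle lies in checking that the resulting labeling still satisfies the $\{1,2\}$-{\sc edge} property at every edge, and that no face ends up with a monochromatic boundary cycle. This verification should rely on the acyclicity lemma (Lemma~\ref{lem:nodirectedcycleplan}) and the non-intersection of monochromatic paths from a common vertex (Lemma~\ref{lem:nocommon-plan}): together they imply that reversing a directed monochromatic path modifies only the outdegrees of the two endpoints of the path while preserving the Schnyder property at all intermediate vertices, and that the reversal cannot create a monochromatic cycle bounding a face. Finally, the counting identity $\sum_v c(v)=3(n-2)$ obtained in the proof of Proposition~\ref{prop:bijbipolar} forces exactly two $0$-{\sc vertex}'s in any valid labeling, so once the surgery is shown to be globally consistent it automatically produces precisely the two desired poles, confirming that the result is indeed a spherical Schnyder wood.
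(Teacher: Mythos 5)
Your plan has a genuine gap at the very first step, the ``local relabeling'' of the outer face, and this gap is not repairable without the kind of interior surgery that the paper actually performs.

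Recall that in any planar Schnyder wood of a triangulation $G^\sigma$ rooted at $x_0,x_1,x_2$, \emph{all} inner edges at $x_i$ enter $x_i$ in color~$i$ (this is essentially the observation $(m-3)=3(n-3)$ from Chapter~\ref{chap:introSW} combined with the fact that $x_i$ is the root of $G_i$). Consequently, by the Schnyder property, \emph{every inner angle at $x_i$ is already labeled $i$}. Removing the half-edge at $x_i$ merges the two outer angles (labeled $i+1$ and $i-1$, never $i$) into a single angle. You then propose to relabel this merged angle so that $x_0$ becomes a $0$-{\sc vertex} ``while keeping $x_1,x_2$ of type $1$.'' But this is impossible at the level of a purely local change near the outer face: if the merged outer angle at $x_1$ is set to $1$, then all angles at $x_1$ become $1$ and $x_1$ is a \emph{third} pole; if instead it is set to $0$ or $2$, then the cyclic sequence of angle labels around $x_1$ has only two color intervals, so $x_1$ is neither of type~$0$ nor of type~$1$, and the $\{0,1\}$-{\sc vertex} property fails. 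The same dichotomy holds at $x_2$. Thus, no matter how you choose the three outer-angle colors, you end up with either three poles (which the counting identity $\sum_v c(v)=3(n-2)$ rules out) or an outer vertex that violates the {\sc vertex} constraint. Interior angles at $x_1$ and $x_2$ \emph{must} be modified.

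This is precisely what the paper's construction achieves and what your proposal leaves unspecified: it re-orients the three outer edges into $x_1$ in color~$1$ (keeping $x_1$ as a pole because its inner angles are already all $1$), chooses a vertex $u$ on the outer face boundary with $u$ \emph{not} adjacent to $x_1$, reverses the three edges at $u$ to make $u$ a pole of color~$0$, reverses and recolors the monochromatic path $P_0(v)$, and then shifts all colors by $+1$ in the interior of a region bounded by $vu$, $ux_0$, and $P_0(v)$. That last step is essential: it is what repairs the inner angles at $x_0$ and $x_2$ so that they become type~$1$. Your ``propagating surgery along a monochromatic directed path'' does not capture this: reversing a single monochromatic path cannot turn an arbitrary inner vertex $v$ (which has three outgoing edges of three distinct colors) into a pole (which needs \emph{all} incident edges incoming and of a single color), and reversing a monochromatic path without recoloring does not, in general, preserve the Schnyder property at the intermediate vertices -- the edges arriving at them in the reversed color would enter the wrong sector. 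Finally, a smaller but real issue: the two poles of a spherical Schnyder wood must be non-adjacent, so ``pick any inner vertex $v$'' is also insufficient; the paper selects $u$ specifically so that $u$ and $x_1$ are non-adjacent (and $G\neq K_4$ is invoked exactly to guarantee such a $u$ exists).
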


\begin{proof}
  Let $G$ be a planar triangulation that is not $K_4$. Since $G$ is
  not $K_4$, we can consider $x_0, x_1, x_2, u$ such that $x_0$,
  $x_1$, $x_2$ are the three vertices occurring in counterclockwise
  order on the outer face of $G$, vertex $u$ is not incident to $x_1$,
  and $x_0ux_2$ is a face of $G$.  Consider a planar Schnyder wood of
  $G$ rooted at $x_0$, $x_1$, $x_2$ (see Definition~\ref{def:felsner}
  and example of Figure~\ref{fig:planar-felsner}). 
  Let $v$ be the entering vertex of  $e_1(u)$.  We now transform
  the planar Schnyder wood into a spherical Schnyder wood by the following:

  \begin{itemize}
  \item Remove the half-edges at $x_0,x_1,x_2$. 
  \item Orient  the three outer edges in one direction only such that
$x_0x_1$ and $x_2x_1$ are entering $x_1$ in color $1$, and $x_2x_0$
is leaving $x_2$ in color $2$.
  \item Reverse the three edges $ux_0, uv, ux_2$ such that they are
    entering $u$ in color $0$.
  \item Reverse all the edges of the path $P_0(v)$ from $x_0$ to $v$
    and color them in color $2$.
  \item In the interior of the region delimited by $vu$, $ux_0$ and
    $P_0(v)$, add $+1$ to all the colors (without modifying the
    orientation).
  \end{itemize}

  Then it is not difficult to see that the obtained orientation and
  coloring is a spherical Schnyder wood of $G$ with poles $x_1$ and
  $u$.
\end{proof}

The proof of Proposition~\ref{prop:notk4} gives a general method to
transform a planar Schnyder wood into a spherical Schnyder wood.  The
spherical Schnyder wood on the right of
Figure~\ref{fig:example-planar-bipolar} is obtained from the one on the
right of Figure~\ref{fig:planar-felsner} by such a transformation.
This transformation can be done in a more general framework where
$u,v$ are any inner vertices satisfying: $v$ is on $P_1(u)$ and
$P_0(u),P_0(v)$ are edge disjoint. This is illustrated on
Figure~\ref{fig:transfo-bipolar} where the ``+1'' and ``-1'' indicates
the change of colors in the interior of the considered region.

\begin{figure}[!h]
\center
\includegraphics[scale=0.3]{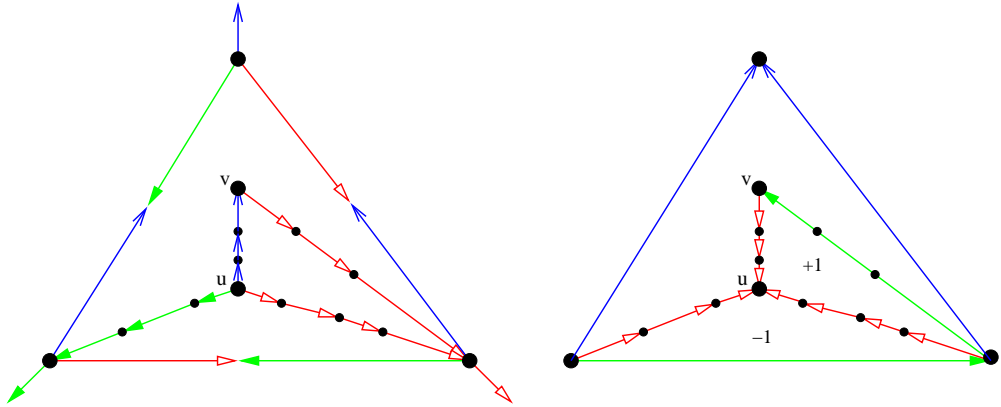}
\caption{Transforming a planar Schnyder wood into a spherical
  Schnyder wood.}
\label{fig:transfo-bipolar}
\end{figure}

Note that it is also possible to define variants of
Definition~\ref{def:bipolar}. For example one can allow one pole or
the two poles to be in the dual map. For triangulation, this implies
the use of edges of type 2 by a counting argument.
Figure~\ref{fig:bipolarpoles} gives two such examples. The example on
the left of Figure~\ref{fig:bipolarpoles} is obtained from the
triangulation of Figure~\ref{fig:planar-felsner} by choosing an inner
vertex $v$, reversing the three monochromatic paths from $v$ and
removing the half edges.  The example on the right of
Figure~\ref{fig:bipolarpoles} is obtained from the triangulation of
Figure~\ref{fig:planar-felsner} by choosing an inner face $f$,
reversing three paths with different colors from the three vertex of
$f$, removing the half edges and orienting the edges of $f$ in both
directions.

\begin{figure}[!h]
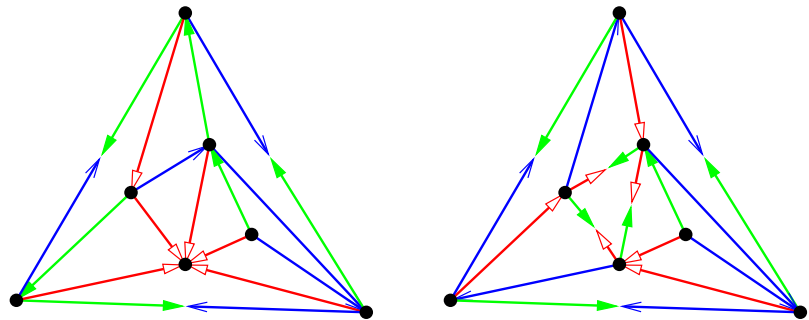

\center
\includegraphics[scale=0.5]{planar-triangulation-bipolar-1pole}
\includegraphics[scale=0.5]{planar-triangulation-bipolar-0pole}
\caption{Examples of orientation}
\label{fig:bipolarpoles}
\end{figure}

We have defined spherical Schnyder woods with poles in the primal since
in the context of Schnyder woods, the objects that are considered
seem to be most interesting when, for triangulation, only type 1 edges are
used.  Here we will not study much further these natural classes of
\{1,2\}-{\sc edge} angle labellings of planar maps.

\section{Generalized Schnyder woods}
\label{sec:generalSW}

Any map (on any orientable surface) admits a trivial {\sc edge} angle
labeling: the one with all angles labeled $i$ (and thus all edges,
vertices and faces are of type 0).  A natural non-trivial case, that
is also symmetric for the duality, is to consider {\sc edge},
$\mathbb{N}^*$-{\sc vertex}, $\mathbb{N}^*$-{\sc face} angle labellings
of general maps.  In planar Schnyder woods only type 1 and type 2
edges are used. Here we allow type 0 edges because they seem
unavoidable for some maps (see discussion below).  This suggests the
following definition of Schnyder woods in higher genus.

First, the generalization of the Schnyder property is the following (see Figure~\ref{fig:369}):

\begin{definition}[Generalized Schnyder property]
\label{def:generalschnyderproperty}
Given a map $G$ on a genus $g\geq 1$ orientable surface, a vertex $v$
and an orientation and coloring of the edges incident to $v$ with the
colors $0$, $1$, $2$, we say that  $v$ satisfies the
\emph{generalized Schnyder property}, if
$v$ satisfies the following local property for $k\geq 1$:

\begin{itemize}
\item Vertex $v$ has out-degree $3k$.
\item The edges $e_0(v),\ldots, e_{3k-1}(v)$ leaving $v$ in
  counterclockwise order are such that $e_j(v)$ has color
  $(j \bmod 3)$.
\item Each edge entering $v$ in color $i$ enters $v$ in a
  counterclockwise sector from $e_{j}(v)$ to $e_{j+1}(v)$ (where $j$
  and $j+1$ are understood modulo $3k$) with
  $(j\bmod 3) \neq (i\bmod 3) \neq ((j+1) \bmod 3)$.
\end{itemize}
\end{definition}

Then, the generalization of Schnyder woods is the
following (where the three types of edges depicted on
Figure~\ref{fig:edgelabeling} are allowed):

\begin{definition}[Generalized Schnyder wood]
\label{def:highgenus}
  Given a map $G$ on a genus $g\geq 1$ orientable surface, a
  \emph{generalized Schnyder wood} of $G$ is an orientation and
  coloring of the edges of $G$ with the colors $0$, $1$, $2$, where
  every edge is oriented in one direction or in two opposite
  directions (each direction having a distinct color and being
  outgoing, or each direction having the same color and being
  incoming), satisfying the following conditions:
 
\begin{itemize}
\item Every vertex satisfies the generalized Schnyder property (see
  Definition~\ref{def:generalschnyderproperty}).
\item There is no face whose boundary  is a
monochromatic cycle.
\end{itemize}
\end{definition}

When there is no ambiguity we may omit the word ``generalized'' in
``generalized Schnyder wood'' or ``generalized Schnyder property''.

See Figure~\ref{fig:tore-primal} for two examples of Schnyder woods in
the torus (where the torus is represented
by a square in the plane whose opposite sides are pairwise
identified).

\begin{figure}[h!]
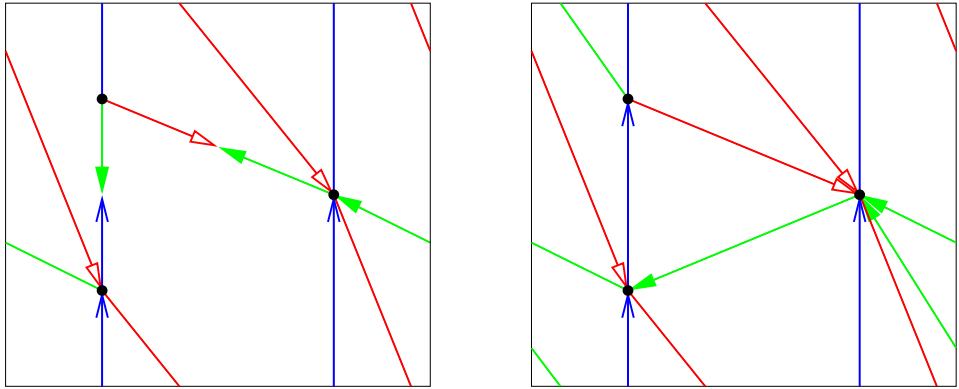

\center
\includegraphics[scale=0.4]{tore-primal}
\hspace{1cm}
\includegraphics[scale=0.4]{tore-tri}
\caption{A Schnyder wood of a toroidal map and of a toroidal
  triangulation.}
\label{fig:tore-primal}
\end{figure}

Figure~\ref{fig:doubletorus} is an example of a Schnyder wood on
a triangulation of the double torus. The double torus is represented
by an octagon. The sides of the octagon are
identified according to their labels. All the vertices of the
triangulation have outdegree 3 except two vertices that have outdegree
6, which are circled. Each of the latter appears twice in the
representation.

\begin{figure}[!h]
\center
\includegraphics[scale=0.3]{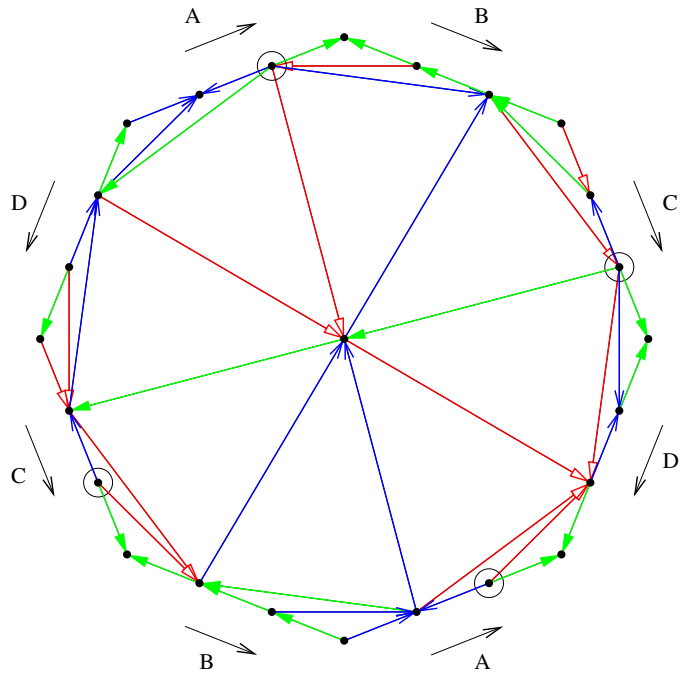}
\caption{A Schnyder wood of a triangulation of the double torus.}
\label{fig:doubletorus}
\end{figure}

The correspondence of Figure~\ref{fig:edgelabeling} immediately gives
the following bijection whose proof is omitted.

\begin{proposition}
\label{prop:bijtoregen}
If $G$ is a map on a genus $g\geq 1$ orientable surface, then the
generalized Schnyder woods of $G$ are in bijection with the {\sc
  edge}, $\mathbb{N}^*$-{\sc vertex}, $\mathbb{N}^*$-{\sc face} angle
labelings of $G$.
\end{proposition}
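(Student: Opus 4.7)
The plan is to use the local correspondence depicted in Figure~\ref{fig:edgelabeling} as the bijection in both directions between generalized Schnyder woods and {\sc edge}, $\mathbb{N}^*$-{\sc vertex}, $\mathbb{N}^*$-{\sc face} angle labelings, and then to verify that the structural conditions on either side translate into each other.

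Starting from a generalized Schnyder wood of $G$, I apply the local rule of Figure~\ref{fig:edgelabeling} at each edge to assign labels to the four surrounding angles. The crucial point is to verify that the two edges bounding a common angle always assign the same label to it. This is where the generalized Schnyder property at a vertex $v$ of outdegree $3k$ is used: the outgoing edges $e_0(v),\ldots,e_{3k-1}(v)$ appear in \ccw order with colors $(j\bmod 3)$, and any incoming edge of color $i$ lies in some sector $[e_j(v),e_{j+1}(v)]$ with $(j\bmod 3)\neq i\neq((j+1)\bmod 3)$. Consequently the label that each adjacent edge assigns to a given angle at $v$ is forced by the pair of consecutive outgoing edges enclosing the angle, and the two assignments coincide. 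By construction the resulting labeling is {\sc edge}, hence by Lemma~\ref{lem:EDGElabeling} also {\sc vertex} and {\sc face}; moreover the type of every vertex $v$ equals $\mathrm{outdeg}(v)/3\geq 1$, so the labeling is $\mathbb{N}^*$-{\sc vertex}, and the absence of monochromatic face cycles excludes faces of type $0$, yielding $\mathbb{N}^*$-{\sc face}.

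Conversely, starting from an {\sc edge}, $\mathbb{N}^*$-{\sc vertex}, $\mathbb{N}^*$-{\sc face} angle labeling of $G$, I apply the correspondence of Figure~\ref{fig:edgelabeling} in reverse at each edge to recover an orientation and coloring. At a vertex $v$ of type $k\geq 1$, the angles around $v$ split into $3k$ \ccw intervals whose colors cycle through $0,1,2$, and each interval boundary occurs at an edge that, by the correspondence, must be outgoing with color equal to the unique residue missing from the two neighbouring angles. This produces exactly $3k$ outgoing edges at $v$ with the correct \ccw color cycling; the remaining edges lie strictly inside a monochromatic interval of angle labels and so must be incoming of the color forced by Figure~\ref{fig:edgelabeling}, automatically sitting in the sector required by Definition~\ref{def:generalschnyderproperty}. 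The $\mathbb{N}^*$-{\sc face} condition forbids any face whose angles are all of the same color, hence any face whose boundary is a monochromatic cycle, so the remaining requirement of Definition~\ref{def:highgenus} is also satisfied.

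The two constructions are mutually inverse because the rule of Figure~\ref{fig:edgelabeling} is itself a local bijection between the admissible edge data and the four surrounding angle labels, and both constructions are carried out edge by edge. The main technical obstacle is the well-definedness in the first direction, but this reduces, as indicated, to the rigid structure of the \ccw sequence of outgoing edges dictated by the generalized Schnyder property; once that is in place, the remainder is a direct translation of each condition.
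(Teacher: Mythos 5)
Your argument is correct and follows the same route the paper indicates (but omits): the paper introduces Proposition~\ref{prop:bijtoregen} with the phrase ``The correspondence of Figure~\ref{fig:edgelabeling} immediately gives the following bijection whose proof is omitted,'' and your proof is exactly a careful spelling-out of that local correspondence, including the well-definedness of the angle label at each vertex via the generalized Schnyder property, the translation of the degree conditions into the $\mathbb{N}^*$-{\sc vertex} constraint, and the translation of ``no monochromatic facial cycle'' into the $\mathbb{N}^*$-{\sc face} constraint. One remark: both you and the paper treat the equivalence ``face of type~$0$ $\Longleftrightarrow$ facial boundary is a monochromatic cycle'' as evident (the paper makes the same appeal in the proof of Proposition~\ref{prop:bijbipolar}, saying ``such a face is of type $0$'' without argument, and uses the converse direction when concluding $c(f)\ge 3$). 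If you wanted to be strictly more rigorous than the source, that equivalence is the one place where a short case analysis on the edge types (Figure~\ref{fig:edgelabeling}) bounding a type-$0$ face would be worth writing out, since type-$2$ and type-$0$ edges can occur on the boundary of a higher-genus face.
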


Note that in the examples of Figures~\ref{fig:tore-primal}
and~\ref{fig:doubletorus}, type 0 edges do not appear. However, 
for $g\geq 2$, there are some maps with vertex degrees and face
degrees at most $5$ that admits generalized Schnyder woods. For these
maps, type $0$ edges are unavoidable.
Figure~\ref{fig:annoying-maps-small} gives an example of such a map
for $g=2$, with a Schnyder wood that has two edges of type 0.

\begin{figure}[!h]
\center
\includegraphics[scale=0.3]{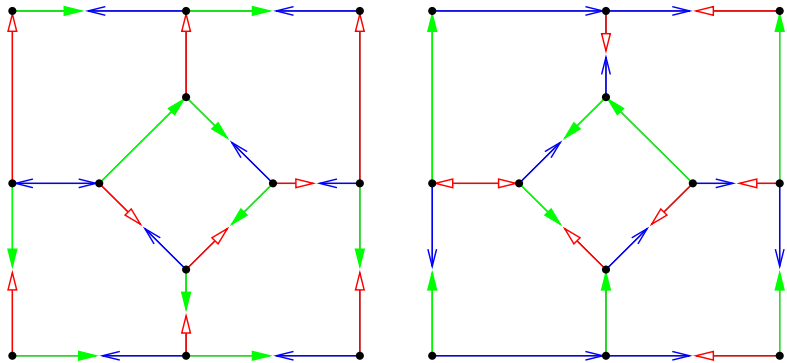}
\caption{A Schnyder wood of a double-toroidal map that has two edges
  of type 0.  Here, the two parts are toroidal and the two central
  faces are identified (by preserving the colors) to obtain a
  double-toroidal map.}
\label{fig:annoying-maps-small}
\end{figure}

\section{Schnyder woods and duality}
\label{sec:duality}
Given a map $G$ on an genus $g\geq 1$ orientable surface, the
\emph{dual} of a Schnyder wood of $G$ is the orientation and coloring
of the edges of $G^*$ obtained by the rules represented on
Figure~\ref{fig:edgelabelingdual} where the correspondence of
Figure~\ref{fig:edgelabeling} is still valid in the dual map.

\begin{figure}[!h]
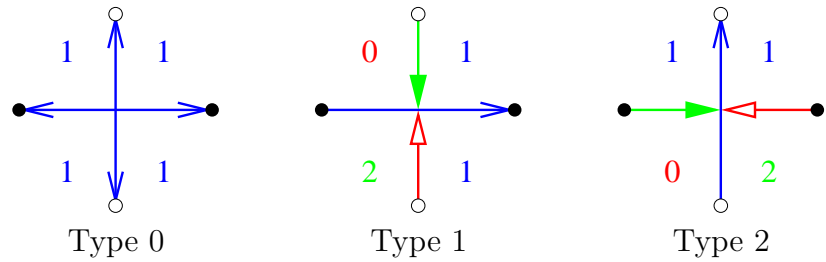

\center
\begin{tabular}{ccc}
\includegraphics[scale=0.5]{rule-edge-angle-0-dual} \ \ \  &  \ \ \
\includegraphics[scale=0.5]{rule-edge-angle-1-dual} \ \ \  &  \ \ \
\includegraphics[scale=0.5]{rule-edge-angle-2-dual} \\
Type 0  \ \ \  &  \ \ \ Type 1  \ \ \  &  \ \ \ Type 2  \\
\end{tabular}
\caption{Rules for the dual Schnyder wood and the corresponding angle labeling.}
\label{fig:edgelabelingdual}
\end{figure}

Note that a Schnyder wood of $G$ is an {\sc edge}, $\mathbb{N}^*$-{\sc
  vertex}, $\mathbb{N}^*$-{\sc face} angle labeling of $G$ (by
Proposition~\ref{prop:bijtoregen}), thus an {\sc edge},
$\mathbb{N}^*$-{\sc vertex}, $\mathbb{N}^*$-{\sc face} angle labeling
of $G^*$ (by symmetry of the definitions {\sc vertex} and {\sc
  face}), thus a Schnyder wood of $G^*$. So we have the following:

  \begin{proposition}
\label{th:bijdual}
Given a map $G$ on a genus $g\geq 1$ orientable surface, there is a
bijection between the \emph{Schnyder woods} of $G$ and of
its dual.
  \end{proposition}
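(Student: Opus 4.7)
The statement is essentially an immediate corollary of Proposition~\ref{prop:bijtoregen} together with the self-duality of the class of {\sc edge}, $\mathbb{N}^*$-{\sc vertex}, $\mathbb{N}^*$-{\sc face} angle labelings, so my proof plan is to chain three bijections. Starting from a Schnyder wood of $G$, I would first apply Proposition~\ref{prop:bijtoregen} to obtain an {\sc edge}, $\mathbb{N}^*$-{\sc vertex}, $\mathbb{N}^*$-{\sc face} angle labeling of $G$. Then I would transport this labeling to $G^*$ via the canonical bijection between the angle set $\mathcal A(G)$ and $\mathcal A(G^*)$: each corner $(v,f)$ of $G$ is identified with the corner $(f^*,v^*)$ of $G^*$. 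Finally, I would re-apply Proposition~\ref{prop:bijtoregen}, this time to $G^*$, to turn the transported labeling into a Schnyder wood of $G^*$. The composition is clearly a bijection, and it is moreover involutive up to the identification of angle sets, so no further checking of invertibility is required.

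\textbf{Key verification.} The substantive content is to confirm that the three relevant properties survive (possibly with roles swapped) under the angle-set identification. The {\sc edge} property is manifestly self-dual: the four angles incident to a primal edge $e$ coincide with the four angles incident to the dual edge $e^*$, read in the rotated (clockwise) order around $e^*$, so being of type $0$, $1$ or $2$ is preserved, as illustrated by Figure~\ref{fig:edgelabelingdual}. The properties {\sc vertex} and {\sc face} are exchanged by duality: the cyclic sequence of angle colors around a primal vertex $v$ equals the cyclic sequence around the dual face $v^*$, so a vertex of type $k$ in $G$ becomes a face of type $k$ in $G^*$ and vice versa. Consequently $\mathbb{N}^*$-{\sc vertex} on $G$ is exactly $\mathbb{N}^*$-{\sc face} on $G^*$, and the pair $(\mathbb{N}^*\text{-{\sc vertex}},\mathbb{N}^*\text{-{\sc face}})$ as a whole is self-dual, which is precisely what is needed to close the chain.

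\textbf{Main obstacle.} The one mildly delicate point is ensuring that the dual orientation rules drawn in Figure~\ref{fig:edgelabelingdual} are genuinely the orientations one recovers by composing the angle-labeling correspondence of Figure~\ref{fig:edgelabeling} with itself on $G^*$; this is a finite case analysis over the three edge types which I would discharge by tabulation rather than prose. Once this is noted, the proposition reduces to the already-stated sequence of logical implications summarized in the paragraph preceding it, and the formal proof fits in two lines.
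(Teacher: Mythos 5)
Your proposal is correct and follows essentially the same route as the paper: Proposition~\ref{prop:bijtoregen} converts a Schnyder wood of $G$ into an {\sc edge}, $\mathbb{N}^*$-{\sc vertex}, $\mathbb{N}^*$-{\sc face} angle labeling, which is reinterpreted on $G^*$ (with {\sc vertex} and {\sc face} exchanged, hence the self-dual class), and Proposition~\ref{prop:bijtoregen} applied to $G^*$ closes the loop. Your elaboration of why each of the three properties is preserved or swapped under the angle-set identification, and the remark about consistency with Figure~\ref{fig:edgelabelingdual}, fills in details the paper leaves implicit but does not change the argument.
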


An example of a Schnyder wood  of a toroidal map and its dual is given
on Figure~\ref{fig:example-dual-tore}.

\begin{figure}[!h]
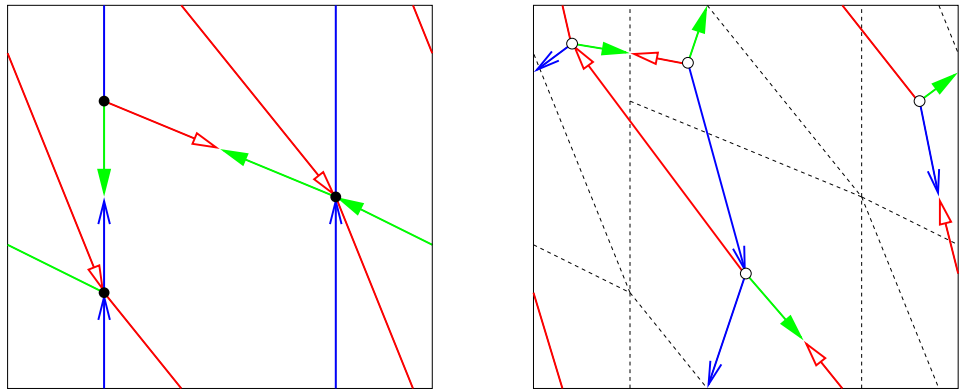

\center
\includegraphics[scale=0.4]{tore-primal}
\hspace{1cm}
\includegraphics[scale=0.4]{example-tore-dual-seul}
\caption{Schnyder woods of a primal and dual toroidal map.}
\label{fig:example-dual-tore}
\end{figure}

\section{Schnyder woods in the universal cover}
\label{sec:universalcover}

We refer to~\cite{Massey} for the general theory of universal covers.
The universal cover of the torus (resp. an orientable surface of genus
$g\geq 2$) is a surjective mapping $p$ from the plane (resp. the open
unit disk) to the surface that is locally a homeomorphism.  If the
torus is represented by a hexagon in the plane whose opposite
sides are pairwise identified, then the universal cover of the torus
is obtained by replicating the hexagon to tile the plane.
Figure~\ref{fig:universalcover} shows how to obtain the universal
cover of the double torus.  The key property is that a closed curve on
the surface corresponds to a closed curve in the universal cover if
and only if it is contractible.

\begin{figure}[!h]
\center
\includegraphics[scale=1.2]{universalcover-base}
\hspace{2em}
\includegraphics[scale=0.75]{tile}
\caption{Canonical representation and universal cover of the double torus
  (source: Yann Ollivier
  \texttt{http://www.yann-ollivier.org/maths/primer.php}).}
\label{fig:universalcover}
\end{figure}

Universal covers can be used to represent a map on an orientable surface
as an infinite planar map. Any property of the map can be lifted to
its universal cover, as long as it is defined locally. Thus universal
covers are an interesting tool for the study of Schnyder woods
since all the definitions we have given so far are purely local.

Consider a map $G$ on a genus $g\geq 1$ orientable surface.  Let
$G^\infty$ be the infinite planar map drawn on the universal cover and
defined by $p^{-1}(G)$. Note that $G$ does not have contractible loops or
homotopic multiple edges if and only if $G^\infty$ is
simple.

We need the following general lemma concerning universal covers:

\begin{lemma}
\label{lem:finitecc}
Suppose that for a finite set of vertices $X$ of $G^\infty$, the graph
$G^\infty\setminus X$ is not connected. Then $G^\infty\setminus X$ has
a finite connected component.
\end{lemma}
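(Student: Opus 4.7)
The plan is to exploit the fact that the universal cover $U$ is topologically one-ended: whether $U=\mathbb{R}^2$ (case $g=1$) or $U$ is an open disk (case $g\geq 2$), the complement of any compact topological closed disk inside $U$ is a connected open set. I will turn this topological one-endedness into the combinatorial one-endedness of $G^\infty$ asserted in the lemma.

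First I would choose a compact topological closed disk $K\subset U$ whose interior contains $X$, every edge of $G^\infty$ incident to some vertex of $X$, and the closure of every face of $G^\infty$ incident to some vertex of $X$. This is possible because $X$ is finite, $G^\infty$ is locally finite (its vertex degrees are those of the finite map $G$), and, crucially, the faces of $G^\infty$ are open disks since $G$ is a map. By compactness, $K$ contains only finitely many vertices of $G^\infty$; in particular the set $V_K=(V(G^\infty)\cap K)\setminus X$ is finite. Let $A=U\setminus K$, which is connected by the topological fact above.

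Next, I would show that the vertices $V_A=V(G^\infty)\cap A$ all lie in a single component of $G^\infty\setminus X$. Given two such vertices $u,v$, draw a simple topological path $P$ from $u$ to $v$ inside the connected open set $A$, in general position with respect to $G^\infty$ so that $P$ crosses only edges of $G^\infty$, transversally, and meets no vertex of $G^\infty$ other than its endpoints. Since $P$ avoids $K$, the edges it crosses are disjoint from $X$, and each face $f$ it enters is not incident to any vertex of $X$: otherwise $\overline{f}$ would be contained in $K$, so $f$ would be disjoint from $A$, contradicting $P\cap f\neq\emptyset$. The sequence of faces traversed by $P$ therefore glues along the crossed edges into a connected $2$-complex whose $1$-skeleton is a connected subgraph of $G^\infty\setminus X$ containing both $u$ and $v$, giving a walk between them in $G^\infty\setminus X$.

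Finally I would conclude: every vertex of $G^\infty\setminus X$ lies in $V_A\cup V_K$, where $V_K$ is finite and $V_A$ is contained in a single component. Consequently, any component of $G^\infty\setminus X$ that does not meet $V_A$ is a subset of $V_K$ and hence finite; if $G^\infty\setminus X$ is disconnected, there is at least one such small component. The main obstacle I anticipate is the rerouting step, where one must guarantee that the face-boundary detours never run into $X$. This is precisely why $K$ is chosen to absorb not only $X$ and its incident edges but also the closures of all faces incident to $X$; once $K$ has that property, the rerouting is automatic from the fact that each face of $G^\infty$ is a disk.
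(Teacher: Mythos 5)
Your proof is correct but follows a genuinely different route from the paper's. The paper argues by contradiction: if $G^\infty\setminus X$ had no finite component, then $G^\infty\setminus X$ would have a face with infinitely many boundary vertices; restoring the finitely many bounded-degree vertices of $X$ would still leave an infinite face of $G^\infty$, hence a face of $G$ that is not an open disk, contradicting the map hypothesis. You instead use one-endedness of the universal cover (plane or open disk) directly: absorb $X$, its incident edges, and the closures of its incident faces into a compact disk $K$; observe that the complement of $K$ is connected; and reroute any topological path between two far-away vertices along the boundaries of the faces it traverses, which are closed walks avoiding $X$. This shows all vertices outside $K$ lie in a single component of $G^\infty\setminus X$, so every other component is contained in the finite set $V(G^\infty)\cap K$. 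Both arguments ultimately rest on the faces of $G^\infty$ being open disks, but you invoke this affirmatively mid-argument (to keep $K$ compact and to make the face-boundary rerouting work), whereas the paper invokes it only at the end to reach a contradiction. Your version is longer but makes the key topological input explicit, and it sidesteps the paper's rather terse claim that a disconnected $G^\infty\setminus X$ with no finite component must have an infinite face, a step which deserves more justification than the paper gives.
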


\begin{proof}
  Suppose the lemma is false and $G^\infty\setminus X$ is not
  connected and has no finite component. Then it has a face bounded by
  an infinite number of vertices. As the vertices of $G^\infty$ have
  bounded degree, by putting vertices $X$ back there is still a face
  bounded by an infinite number of vertices. The corresponding face in
  $G$ is not homeomorphic to an open disk, a contradiction with $G$
  being a map.
\end{proof}

A graph is \emph{$k$-connected} if it has at least $k+1$
vertices and if it stays connected after removing any $k-1$ vertices.
Extending the notion of essentially 2-connectedness defined in
\cite{MR98} for the toroidal case, we say that $G$ is
\emph{essentially $k$-connected} if $G^\infty$ is $k$-connected. Note
that the notion of being essentially $k$-connected is different from
 being $k$-connected. There is no implication in any direction and
being \emph{essentially $k$-connected} depends on the mapping (see
Figure~\ref{fig:essential-2} and Figure~\ref{fig:essential-1}). Note that a map is always essentially
$1$-connected.

\begin{figure}[!h]
\center
\includegraphics[scale=0.4]{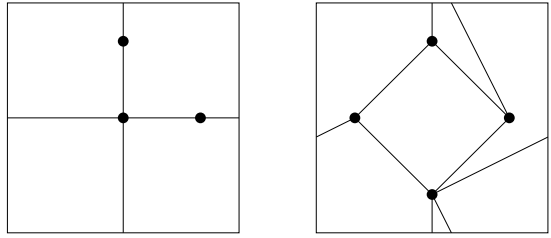}
\caption{An essentially 2-connected map that is not 2-connected and an
  essentially 3-connected map that is not 3-connected.}
\label{fig:essential-2}
\end{figure}

\begin{figure}[!h]

\center
\includegraphics[scale=0.4]{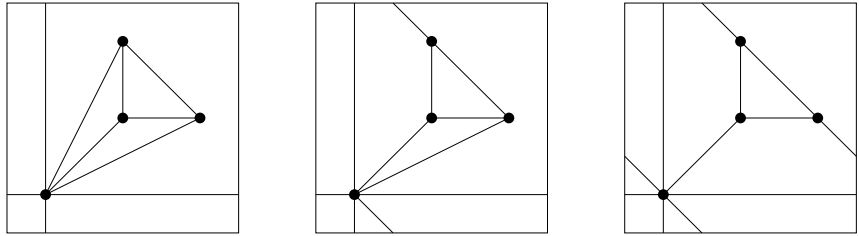}
\caption{Three different mappings of the same underlying 3-connected
  graph, which are respectively: essentially 1-connected (not
  essentially 2-connected), essentially 2-connected (not essentially
  3-connected), essentially 3-connected (not essentially 4-connected).}
\label{fig:essential-1}
\end{figure}

Suppose now that $G$ is given with a Schnyder wood (i.e. an {\sc
  edge}, $\mathbb{N}^*$-{\sc vertex}, $\mathbb{N}^*$-{\sc face} angle
labeling by Proposition~\ref{prop:bijtoregen}).  Consider the orientation
and coloring of the edges of $G^\infty$ corresponding to the Schnyder
wood of $G$.

Let $G^\infty_i$ be the directed graph induced by the edges of color
$i$ of $G^\infty$. This definition includes edges that are
half-colored $i$, and in this case, the edges get only the direction
corresponding to color $i$.  The graph $(G^\infty_i)^{-1}$ is the
graph obtained from $G^\infty_i$ by reversing all its edges.  The
graph $G^\infty_i\cup (G^\infty_{i-1})^{-1}\cup (G^\infty_{i+1})^{-1}$
is obtained from the graph $G$ by orienting edges in one or two
directions depending on whether this orientation is present in
$G^\infty_i$, $(G^\infty_{i-1})^{-1}$ or $(G^\infty_{i+1})^{-1}$.
Similarly to what happens for planar Schnyder woods (see
Lemma~\ref{lem:nodirectedcycleplan}), we have the following important
property:

\begin{lemma}
  \label{lem:nodirectedcycle}
The graph $G^\infty_i\cup (G^\infty_{i-1})^{-1}\cup
(G^\infty_{i+1})^{-1}$ contains no directed cycle.
\end{lemma}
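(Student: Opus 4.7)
\medskip
\noindent\emph{Proof plan.} I would adapt the planar argument of Lemma~\ref{lem:nodirectedcycleplan} (Felsner's proof) to the universal cover, using the simple-connectedness of $G^\infty$ together with Lemma~\ref{lem:finitecc} to replace the role of the outer face in the planar setting.

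Proceed by contradiction: suppose $H := G^\infty_i\cup (G^\infty_{i-1})^{-1}\cup (G^\infty_{i+1})^{-1}$ contains a directed cycle, and pick a simple directed cycle $C$ (a shortest counterexample suffices). Since $G^\infty$ is drawn in a simply connected surface (the plane if $g=1$, the open unit disk if $g\geq 2$), $C$ is a simple closed curve that separates this surface into two regions. By Lemma~\ref{lem:finitecc}, at least one of these regions, call it $R$, contains only finitely many vertices, edges, and faces of $G^\infty$, so $\overline{R}=R\cup C$ is topologically a closed disk with boundary $C$.

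Next, invoke the angle-labeling viewpoint via Proposition~\ref{prop:bijtoregen}: the Schnyder wood on $G$ corresponds to an {\sc edge}, $\mathbb{N}^*$-{\sc vertex}, $\mathbb{N}^*$-{\sc face} angle labeling which lifts to $G^\infty$ with the same local properties. Restricted to $\overline{R}$ this means that every interior vertex of $\overline{R}$ is of type $k\geq 1$ (so contributes $3k$ color transitions among its ccw-consecutive angles), every interior face is of type $k'\geq 1$ (contributing $3k'\geq 3$ transitions), and every interior edge is of type $0$, $1$ or $2$. At each vertex $v$ of $C$, the generalized Schnyder property groups the edges of $v$ into alternating ccw arcs of $H$-incoming edges (the outgoing edges of colors $i\pm 1$ together with the entering edges of color $i$) and $H$-outgoing edges; the in-edge and out-edge of $C$ at $v$ must sit in consecutive such arcs, which constrains the label pattern of the angles of $\overline{R}$ at the corner of $v$.

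Then run a double-counting argument on the total number of color transitions in $\overline{R}$, once summed over vertices and once summed over faces, and match the two using edges. The interior contributions are multiples of $3$ bounded below by $3 n_R^{\mathrm{int}}$ and $3 f_R^{\mathrm{int}}$, while the boundary contributions from $C$ are controlled by the arc analysis above. Combining this with Euler's formula $n_R-m_R+f_R=1$ for the disk $\overline{R}$ yields an arithmetic mismatch: the ``$+1$'' coming from Euler cannot be absorbed by terms that are forced to be multiples of $3$ modulo the boundary correction imposed by $C$.

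The main obstacle will be handling the boundary contribution at $C$ cleanly, in particular the case analysis at each vertex $v$ on $C$ when $v$ has generalized type $k\geq 2$ (which never appears in the planar setting of Lemma~\ref{lem:nodirectedcycleplan} but does appear in higher genus, cf.\ Figure~\ref{fig:369}), and the bookkeeping for type-$0$ edges inside $\overline{R}$ (which are absent in planar Schnyder woods but allowed by Definition~\ref{def:highgenus}, as exemplified in Figure~\ref{fig:annoying-maps-small}). Once these two technicalities are settled, the arithmetic contradiction follows in the same spirit as Felsner's original argument.
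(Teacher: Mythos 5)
Your plan takes a genuinely different route from the paper. The paper makes an extremal choice and uses monochromatic walks, with no counting at all: let $C$ be a directed cycle of $H := G^\infty_i\cup (G^\infty_{i-1})^{-1}\cup (G^\infty_{i+1})^{-1}$ whose enclosed disk $D$ contains the fewest faces; minimality forces $C$ to be simultaneously a directed $i$-cycle in one direction and a directed $(i+1)$-cycle in the other (so at each boundary vertex $v$ the out-edge of $C$ is $e_i(v)$ and the in-edge is $e_{i+1}(v)$); since $D$ is not a face, some interior edge leaves a vertex $v$, the two monochromatic walks of colors $i$ and $i+1$ starting at $v$ must exit $D$ through $C$, and splicing them with an arc of $C$ produces a directed cycle of $H$ enclosing strictly fewer faces, a contradiction. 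You instead propose localizing to $D$ the angle-transition/Euler count that the paper runs in Proposition~\ref{prop:bijbipolar} and Lemma~\ref{lem:kmoins3}. That is a legitimate alternative line of attack, and invoking Lemma~\ref{lem:finitecc} to get a finite side of $C$ is fine.

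However, you have deferred precisely the part that carries the whole proof, and one asserted step would not survive it as stated. The identity $\sum_e c(e)=\sum_v c(v)+\sum_f c(f)$ together with $n-m+f=1$ for the disk and the interior bounds $c(v)\ge 3$, $c(f)\ge 3$, $c(e)\le 3$ do not contradict anything on their own; they only cap the total boundary contribution. The contradiction must come from a competing lower bound on the transitions at boundary vertices, extracted from the fact that $C$ is directed in $H$, and this is where the delicacy lives. Your claim that ``the in-edge and out-edge of $C$ at $v$ must sit in consecutive such arcs'' is not a consequence of $C$ being a directed cycle of $H$, nor of $C$ being shortest: at a vertex of outdegree $3k$ with $k\ge 2$ the two $C$-edges of an arbitrary directed cycle of $H$ can lie in arcs that are far apart around $v$. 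In the paper's proof that tight boundary structure is \emph{derived} from minimality of enclosed area, and once it is derived the walk argument already closes the proof with no counting needed. So your plan needs either the same extremal choice (at which point the count becomes superfluous) or a boundary lower bound valid for an arbitrary $C$ --- which is exactly where the type-$k\ge 2$ vertices and type-$0$ edges make the bookkeeping genuinely hard, and which you have not supplied.
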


\begin{proof}
   Suppose there is a directed cycle in $G^\infty_i\cup
  (G^\infty_{i-1})^{-1}\cup (G^\infty_{i+1})^{-1}$.  Let $C$ be such a
  cycle containing the minimum number of faces in the map $D$ with
  border $C$. Suppose by symmetry that $C$ turns around $D$
  counterclockwisely. Every vertex of $D$ has at least one outgoing
  edge of color $i+1$ in $D$.  So there is a cycle of color $(i+1)$ in
  $D$ and this cycle is $C$ by minimality of $C$.  Every vertex of $D$
  has at least one outgoing edge of color $i$ in $D$. So, again by
  minimality of $C$, the cycle $C$ is a cycle of color $i$.  Thus all
  the edges of $C$ are oriented in color $i$ counterclockwisely and in
  color $i+1$ clockwisely.

  By the definition of Schnyder woods, there is no face the boundary
  of which is a monochromatic cycle, so $D$ is not a face.  Let $vx$
  be an edge in the interior of $D$ that is outgoing for $v$. The
  vertex $v$ can be either in the interior of $D$ or in $C$ (if $v$
  has more than three outgoing arcs). In both cases, $v$ has
  necessarily an edge $e_i$ of color $i$ and an edge $e_{i+1}$ of
  color $i+1$, leaving $v$ and in the interior of $D$.  Consider
  $W_i(v)$ (resp. $W_{i+1}(v)$) a monochromatic walk starting from
  $e_i$ (resp. $e_{i+1}$), obtained by following outgoing edges of
  color $i$ (resp. $i+1$).  By minimality of $C$ those walks are not
  contained in $D$. We hence have that $W_i(v)\setminus v$ and
  $W_{i+1}(v)\setminus v$ intersect $C$. Thus each of these walks
  contains a non-empty subpath from $v$ to $C$. The union of these two
  paths, plus a part of $C$ contradicts the minimality of $C$.
\end{proof}

Let $v$ be a vertex of $G^\infty$. For each color $i$, vertex $v$ is
the starting vertex of some walks of color $i$, we denote the union of
these walks by $P_i(v)$. Every vertex has at least one outgoing edge
of color $i$ and the set $P_i(v)$ is obtained by following all these
edges of color $i$ starting from $v$.  The analogous of
Lemma~\ref{lem:nocommon-plan} is:

\begin{lemma}
  \label{lem:nocommongeneral}
  For every vertex $v$ and $i,j\in\{0,1,2\}$, $i\neq j$, the two
  graphs $P_{i}(v)$ and $P_{j}(v)$ have $v$ has only common
  vertex.
\end{lemma}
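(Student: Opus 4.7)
The plan is to prove this by contradiction, using Lemma~\ref{lem:nodirectedcycle} as the central tool, essentially in the same way that Lemma~\ref{lem:nocommon-plan} is derived from Lemma~\ref{lem:nodirectedcycleplan} in the planar case. The whole point of setting up the oriented multigraph $G^\infty_i\cup (G^\infty_{i-1})^{-1}\cup (G^\infty_{i+1})^{-1}$ is precisely so that monochromatic outgoing walks of color $i$ and monochromatic outgoing walks of any other color become compatible directed walks in this single auxiliary digraph.

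More concretely, suppose for contradiction that there is a vertex $w\neq v$ lying in both $P_i(v)$ and $P_j(v)$. By definition of these sets (as unions of walks obtained by repeatedly following outgoing edges of a given color), there exists a directed walk $W_i$ from $v$ to $w$ using only edges of color $i$ oriented forward, and a directed walk $W_j$ from $v$ to $w$ using only edges of color $j$ oriented forward. Consider now the auxiliary digraph $H=G^\infty_i\cup (G^\infty_{i-1})^{-1}\cup (G^\infty_{i+1})^{-1}$. In $H$ the walk $W_i$ is still a directed walk from $v$ to $w$; moreover, since $j\neq i$ we have $j\in\{i-1,i+1\}$, so every edge of $W_j$ appears in $H$ with its orientation reversed, and hence the reverse $\overline{W_j}$ is a directed walk from $w$ to $v$ in $H$. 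The concatenation $W_i\cdot\overline{W_j}$ is therefore a closed directed walk in $H$, and it is nontrivial because $w\neq v$.

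I would then finish by the standard observation that any nontrivial closed directed walk in a digraph contains a directed cycle (take the first time a vertex repeats along the walk and extract the subwalk between the two occurrences). This directed cycle lies in $H$, contradicting Lemma~\ref{lem:nodirectedcycle}.

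The only subtlety I would need to watch for is the definition of $P_i(v)$ as a union of walks rather than a single path: a vertex $w$ in $P_i(v)$ is guaranteed to be reachable from $v$ by a directed walk of color~$i$, which is all that is needed to run the argument. There is no need to worry about whether $W_i$ and $W_j$ share edges or intermediate vertices, since any nontrivial closed walk suffices to invoke Lemma~\ref{lem:nodirectedcycle}; this is exactly why it is convenient to prove the statement for the walk-union $P_i(v)$ directly, rather than for individually chosen paths.
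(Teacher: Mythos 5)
Your proof is correct and follows essentially the same approach as the paper, which invokes Lemma~\ref{lem:nodirectedcycle} by observing that a common vertex of $P_i(v)$ and $P_j(v)$ (other than $v$) yields a directed cycle in $G^\infty_i \cup (G^\infty_j)^{-1}$. You simply spell out in more detail the step the paper states tersely---namely, how to concatenate the color-$i$ walk with the reversed color-$j$ walk and then extract a directed cycle from the resulting closed walk.
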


\begin{proof}
  If $P_{i}(v)$ and $P_{j}(v)$ intersect on two vertices, then
  $G^\infty_{i}\cup (G_{j}^{^\infty})^{-1}$ contains a cycle,
  contradicting Lemma~\ref{lem:nodirectedcycle}.
\end{proof}

Now  we can prove the following:

\begin{lemma}
\label{lem:conjessentially}
If a map $G$ on a genus $g\geq 1$ orientable surface admits a
generalized Schnyder wood, then $G$ is essentially 3-connected.
\end{lemma}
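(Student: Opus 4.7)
The plan is to argue by contradiction, assuming that $G^\infty$ is not $3$-connected, and then to derive a contradiction with Lemma~\ref{lem:nocommongeneral} via a pigeonhole argument on the three monochromatic out-paths from a vertex in a finite component.

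First I would handle the trivial size: since $g\geq 1$, the universal cover of the surface is simply connected and non-compact, so $G^\infty$ has infinitely many vertices, in particular at least $4$. Now suppose for contradiction that there exists a set $X$ of vertices of $G^\infty$ with $|X|\leq 2$ such that $G^\infty\setminus X$ is disconnected. By Lemma~\ref{lem:finitecc}, at least one connected component $C$ of $G^\infty\setminus X$ is finite. Pick any vertex $w\in C$; note $w\notin X$.

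Next I would exploit the generalized Schnyder property to produce, for each color $i\in\{0,1,2\}$, an infinite simple directed path in $P_i(w)$ starting at $w$. Indeed, $w$ has at least one outgoing edge of color $i$; following it we reach a new vertex, which again has an outgoing edge of color $i$, and we iterate. This walk is simple, because if it ever revisits a vertex it would yield a directed cycle in $G^\infty_i$, contradicting Lemma~\ref{lem:nodirectedcycle}. Hence we obtain an infinite simple path in $P_i(w)$ starting at $w$; since $C$ is finite, this path must eventually leave $C$, and the only way to leave $C$ is through a vertex of $X$. Thus each of $P_0(w)$, $P_1(w)$, $P_2(w)$ contains at least one vertex of $X$.

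Finally, by the pigeonhole principle, since $|X|\leq 2$, two of the three sets $P_0(w),P_1(w),P_2(w)$ share a common vertex lying in $X$, hence distinct from $w$. This contradicts Lemma~\ref{lem:nocommongeneral}, which asserts that any two of these sets meet only at $w$. The contradiction shows that $G^\infty$ is $3$-connected, so $G$ is essentially $3$-connected.

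The only delicate point is producing the infinite simple paths in step two; everything else is a direct application of the two structural lemmas and the local Schnyder property. Once infinite simple paths are in hand, the pigeonhole finish is immediate because the $3$-against-$2$ gap between the number of outgoing colors and the size of the cut is exactly what drives the argument — which is the natural reason one should expect essentially $3$-connectivity, not more and not less, from the existence of a generalized Schnyder wood.
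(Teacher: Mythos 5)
Your proposal is correct and follows essentially the same route as the paper's proof: assume a cut $X$ of size at most $2$, invoke Lemma~\ref{lem:finitecc} to obtain a finite component, observe via Lemma~\ref{lem:nodirectedcycle} that each $P_i(w)$ must escape the finite component through $X$, and then apply the pigeonhole principle together with Lemma~\ref{lem:nocommongeneral}. You are somewhat more explicit than the paper about the at-least-four-vertices requirement and the construction of an infinite simple directed path inside $P_i(w)$, but these are only elaborations of the same argument.
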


\begin{proof}
  Suppose by contradiction that there exist two vertices $x,y$ of
  $G^\infty$ such that $G'=G^\infty\setminus\{x,y\}$ is not
  connected. Then, by Lemma~\ref{lem:finitecc}, the graph $G'$ has a
  finite connected component $R$. Let $v$ be a vertex of $R$. By
  Lemma~\ref{lem:nodirectedcycle}, for $0\leq i \leq 2$, the graph
  $P_{i}(v)$ does not lie in $R$ so it intersects either $x$ or
  $y$. So for two distinct colors $i,j$, the two graphs
  $P_{i}(v)$ and $P_{j}(v)$ intersect in a vertex distinct from $v$,
  a contradiction to Lemma~\ref{lem:nocommongeneral}.
\end{proof}

\section{Conjectures on the existence of  Schnyder woods}
\label{sec:conjectureexistence}

Proving that every triangulation on a genus $g\geq 1$ orientable
surface admits a 1-{\sc edge} angle labeling would imply the following
theorem of Bar\'at and Thomassen~\cite{BT06}:

\begin{theorem}[\cite{BT06}]
\label{th:barat}
A simple triangulation on a genus $g\geq 1$ orientable surface admits
an orientation of its edges such that every vertex has outdegree
divisible by $3$.
\end{theorem}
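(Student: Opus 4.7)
The plan is to reformulate the statement as the existence of an $\alpha$-orientation and apply Hakimi's classical characterization. Let $G$ be a simple triangulation on a genus-$g$ orientable surface with $n$ vertices and $m$ edges. Euler's formula gives $m = 3n + 6(g-1)$, which is divisible by $3$. I would seek a function $\alpha \colon V \to 3\mathbb{Z}_{\geq 0}$ with $\sum_v \alpha(v) = m$ such that $G$ admits an orientation with $d^+(v) = \alpha(v)$ for every $v$. By Hakimi's theorem, this is equivalent to
\[
|E[S]| \;\leq\; \sum_{v \in S} \alpha(v) \qquad \text{for every } S \subseteq V,
\]
where $E[S]$ is the set of edges with both endpoints in $S$.

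First I would dispatch the torus case $g=1$, where $m = 3n$ suggests the uniform choice $\alpha \equiv 3$. The Hakimi condition then reduces to $|E[S]| \leq 3|S|$; this follows from the standard Heawood-type bound that any simple graph embeddable on the torus has at most $3k$ edges on $k \geq 3$ vertices, applied to the induced subgraph $G[S]$, with the cases $|S| \leq 2$ trivial by simplicity.

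For $g \geq 2$ there are $6(g-1)$ excess outdegree units to distribute, so I would set $\alpha(v) = 3$ on most vertices and $\alpha(v) = 6$ on a carefully chosen set $T \subseteq V$ of cardinality $2(g-1)$. The feasibility condition becomes
\[
\delta(S) \;:=\; |E[S]| - 3|S| \;\leq\; 3\,|T \cap S| \qquad \text{for every } S \subseteq V.
\]
The function $\delta$ is supermodular (since $|E[S]|$ is supermodular and $|S|$ is modular) and bounded globally by $6(g-1) = 3|T|$ via the embeddability bound on simple graphs of genus $g$. Producing such a $T$ is a supermodular covering problem, which can be settled by Frank's theorem on covering supermodular functions, or equivalently by a Hall-type matching argument in an auxiliary bipartite incidence graph.

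The main obstacle I anticipate is precisely this last step for $g \geq 2$: verifying that a set $T$ of the prescribed size $2(g-1)$ covering $\delta$ actually exists, and pinning down where the upgrades $\alpha(v) = 6$ may be placed. An appealing alternative would be induction on $g$: cut $G$ along a non-contractible simple cycle $C$, fill the two resulting boundary cycles with disks and re-triangulate to obtain a triangulation of genus $g-1$, apply the inductive hypothesis, and reglue by locally correcting outdegrees modulo $3$ along $C$. The delicate point in that approach is ensuring the local corrections on the doubled vertices of $C$ are globally consistent with the divisibility constraint.
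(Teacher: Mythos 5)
Note first that the paper does not prove this theorem; it cites it from~\cite{BT06}, so there is no internal proof to compare against, and I am evaluating your argument on its own.

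Your torus case is correct: with $\alpha\equiv 3$ the Hakimi condition reduces to $|E[S]|\le 3|S|$, which follows from the Euler-formula bound $m'\le 3n'-6+6g'$ for any simple graph of orientable genus $g'\le 1$ on $n'\ge 3$ vertices (and the additivity of genus over components handles disconnected $G[S]$; $|S|\le 2$ is trivial). This is essentially the observation, already implicit in the paper, that a simple toroidal graph on $n$ vertices has at most $3n$ edges. For $g\ge 2$, however, the gap you half-anticipate is genuine and the tools you invoke do not close it. You need a set $T$ with $|T|=2(g-1)$ and $|T\cap S|\ge\lceil\delta(S)/3\rceil$ for every $S$, but $\lceil\delta/3\rceil$ is \emph{not} supermodular (for instance $\delta$ values $1,1,2,0$ on $S,T,S\cup T,S\cap T$ respect supermodularity while the rounded values $1,1,1,0$ violate it), so the polymatroid base-polytope / Frank covering machinery does not apply. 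Frank's covering theorems for supermodular functions are in any case about covering by arcs or edges in connectivity augmentation, not about selecting a $\{0,1\}$-vector of prescribed cardinality. Even relaxing to arbitrary $\alpha(v)\in 3\mathbb{Z}_{\ge 0}$ does not save the argument: the integral base polytope $\{x : x(S)\ge\delta(S),\ x(V)=\delta(V)\}$ of an integer supermodular $\delta$ with $\delta(V)\equiv 0\pmod 3$ need not contain a point of $3\mathbb{Z}^V$, so some structure specific to triangulations (beyond the bare edge-density bound $\delta(S)\le 6(g-1)$) has to be brought in, and you have not said what. The inductive alternative has the re-gluing obstruction you yourself flag, plus a further subtlety you do not mention: capping the two boundary circles with disks and re-triangulating can create multiple edges when the cut cycle has chords (an existing chord and a new disk edge between the same pair of boundary copies), so simplicity of the genus-$(g-1)$ triangulation is not automatic and must be arranged. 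As it stands, neither route is a proof of the $g\ge 2$ case.
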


Recently, Theorem~\ref{th:barat} has been improved
by Albar, Gonçalves and Knauer~\cite{AGK14}:

\begin{theorem}[\cite{AGK14}]
\label{th:AGK}
A simple triangulation on a genus $g\geq 1$ orientable
surface admits an orientation of its edges such that every vertex has
outdegree at least $3$, and divisible by $3$.
\end{theorem}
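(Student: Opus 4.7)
The plan is to build on Theorem~\ref{th:barat}: start with an orientation of $G$ in which every vertex has outdegree divisible by $3$, and then iteratively modify it so that every outdegree is additionally at least $3$. Euler's formula gives $m = 3n + 6(g-1)$, so $\sum_v d^+(v) = 3(n + 2(g-1))$ is divisible by $3$ and there is a ``surplus'' of $6(g-1)$ outgoing arcs above the uniform value $3$. The elementary operation at my disposal is \emph{path reversal}: reversing a directed $u \to v$ path transfers one unit of outdegree from $u$ to $v$ and leaves every other outdegree unchanged. Consequently, reversing three edge-disjoint directed paths from a common source $u$ to a common sink $v$ changes $(d^+(u), d^+(v))$ by $(-3, +3)$ and hence preserves divisibility by $3$ at every vertex.

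Among all orientations with outdegrees in $3\mathbb{N}$, I would pick one minimizing the number of vertices with outdegree $0$. Assume for contradiction that a vertex $v$ still satisfies $d^+(v) = 0$. Let $R$ be the set of vertices from which $v$ is reachable in the current digraph $\vec G$. By construction, every edge between $R$ and $V \setminus R$ is oriented from $R$ outwards; combined with the Euler bound $|E(R)| \le 3|R| + 6(g-1)$ for the simple sub-embedding on $R$, a counting argument applied to $\sum_{u \in R} d^+(u) = |E(R)| + |\partial R|$ forces the existence of some $u \in R$ with $d^+(u) \ge 6$. If one can exhibit three edge-disjoint directed $u$--$v$ paths in $\vec G|_R$, reversing them yields a new valid orientation in which $d^+(v)$ jumps to $3$ and $d^+(u)$ drops to $d^+(u) - 3 \ge 3$, contradicting the minimality assumption.

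The hard part will be establishing enough edge-connectivity from $u$ to $v$ in the current digraph to guarantee three edge-disjoint directed paths, which by Menger's theorem amounts to showing that every directed $u$--$v$ cut has size at least $3$. This is where the essential $3$-connectedness of simple triangulations on surfaces of positive genus should come in (compare Lemma~\ref{lem:conjessentially}): any separation of size $\le 2$ in $\vec G|_R$ would project to a small cut in the underlying triangulation, contradicting the local structure forced by every face being a triangle and the genus being at least $1$. Should a direct Menger argument fail in a handful of degenerate configurations, the backup plan is to precede the triple reversal with reversals along directed cycles of $\vec G$ (which preserve all outdegrees) to reroute flow and expose the required edge-disjoint structure; alternatively, one may recast the whole problem as a Hakimi-type question of finding a function $\alpha : V \to 3\mathbb{N}^*$ with $\sum_v \alpha(v) = m$ and $|E(S)| \le \sum_{v \in S} \alpha(v)$ for every $S \subseteq V$, distributing the surplus of $6(g-1)$ so that the sharpest Euler-bound subsets absorb the overflow.
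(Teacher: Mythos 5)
This theorem is not proved in the paper: it is an external result of Albar, Gon\c{c}alves and Knauer, cited as~\cite{AGK14}, and the manuscript only quotes the statement (and notes that it would be desirable to extend it to non-simple triangulations without contractible loops or homotopic multiple edges). So there is no ``paper proof'' to compare against; I can only assess your argument on its own merits.

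Your strategy (bootstrap from Theorem~\ref{th:barat} and repair zero-outdegree vertices by triple path reversals) is reasonable as a plan, but as written it has two problems. First, the counting step goes the wrong way. For the set $R$ of vertices that can reach $v$ in $\vec G$, all cut edges indeed leave $R$, so $\sum_{u\in R}d^+(u)=|E(R)|+|\partial R|$; but the Euler inequality $|E(R)|\le 3|R|+6(g-1)$ is an \emph{upper} bound on $|E(R)|$, hence only yields an upper bound on the outdegree sum, which cannot force a vertex of outdegree at least $6$ to exist. The conclusion you want is a \emph{lower} bound on $\sum_{u\in R}d^+(u)$. This is actually achievable, but by bounding the complement: since $m=3n+6(g-1)$ for the whole triangulation and $|E(V\setminus R)|\le 3|V\setminus R|+6(g-1)$ for the simple embedded subgraph on $V\setminus R$, one gets $|E(R)|+|\partial R|=m-|E(V\setminus R)|\ge 3|R|$, and combined with $d^+(v)=0$ this does force some $u\in R$ with $d^+(u)\ge 6$. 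So the idea is salvageable, but the step as stated is incorrect.

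Second, and more seriously, the entire weight of the argument rests on producing three edge-disjoint directed $u$--$v$ paths, and you have no proof of this. Essential $3$-connectedness (Lemma~\ref{lem:conjessentially}) concerns undirected vertex-connectivity of the universal cover; it gives no control over \emph{directed} edge cuts of $\vec G$, which is what Menger would need here. Reversing directed cycles preserves outdegrees and could in principle reroute flow, but you offer no argument that this process terminates or that it eventually exposes three edge-disjoint directed $u$--$v$ paths. The Hakimi/submodular-flow reformulation is a plausible alternative route to the theorem, but again you have stated the feasibility condition without verifying it for any choice of $\alpha$. In short, the core difficulty of the Albar--Gon\c{c}alves--Knauer theorem, establishing the required directed connectivity (or the Hakimi-type feasibility), is precisely what your proposal leaves open; the remaining pieces are the easy part.
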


Note that Theorems~\ref{th:barat} and~\ref{th:AGK} are proved only in
the case of simple triangulations (i.e. no loops and no multiple
edges). We believe them to be true also for non-simple triangulations
without contractible loops nor homotopic multiple edges.

Theorem~\ref{th:AGK} suggests the existence of 1-{\sc edge} angle
labelings with no sinks, i.e. 1-{\sc edge}, $\mathbb{N}^*$-{\sc
  vertex} angle labelings. One can easily check that in a
triangulation, a 1-{\sc edge} angle labeling is also 1-{\sc
  face}. Thus one can hope that a triangulation on a genus $g\geq 1$
orientable surface admits a 1-{\sc edge}, $\mathbb{N}^*$-{\sc vertex},
1-{\sc face} angle labeling. Note that a 1-{\sc edge}, 1-{\sc face}
angle labeling of a map implies that faces have size three. So we
propose the following conjecture, whose ``only if'' part follows from the
previous sentence:

\begin{conjecture}
\label{conjecture}
  A map on a genus $g\geq 1$ orientable surface admits a
  1-{\sc edge}, $\mathbb{N}^*$-{\sc vertex}, 1-{\sc face} angle
  labeling if and only if it is a triangulation.
\end{conjecture}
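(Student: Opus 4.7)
The `only if' direction is immediate, as noted in the excerpt: a 1-{\sc edge}, 1-{\sc face} angle labeling of any map forces every face to have exactly three boundary edges. Indeed, the 1-{\sc edge} assumption ensures that every edge is of type 1, so by the pattern of Figure~\ref{fig:edgelabeling} the color change between the two face-side angles of any edge is exactly $+1$. Combining Lemma~\ref{lem:EDGElabeling} with the 1-{\sc face} assumption, the total number of $+1$ transitions around any face equals both the number of its boundary edges and the number of its color intervals, namely 3.

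For the nontrivial `if' direction I would adapt to higher genus the planar strategy of Chapter~\ref{chap:introSW}. The first step is to invoke Theorem~\ref{th:AGK} (assuming its extension from simple triangulations to the class of triangulations without contractible loops or homotopic multiple edges considered here) to obtain an orientation $D$ of $G$ in which every vertex has outdegree at least 3 and divisible by 3. Given such a $3k$-orientation, the goal is to 3-color the oriented edges of $D$ so that around each vertex of outdegree $3k$ the outgoing edges cycle counterclockwise through the colors $0,1,2$ repeated $k$ times, and so that each triangular face receives its three angles in counterclockwise order $0,1,2$. By Figure~\ref{fig:edgelabeling}, this data is exactly a 1-{\sc edge}, $\mathbb{N}^*$-{\sc vertex}, 1-{\sc face} angle labeling of $G$.

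To build the coloring I would lift $G$ to its universal cover $G^\infty$ and imitate the planar middle-path bijection of De Fraysseix and Ossona de Mendez~\cite{FO01}: from any edge entering a vertex $v$, follow the unique outgoing edge of $v$ that is the ``middle'' of its triple of consecutively-colored outgoing neighbors, and iterate. A minimality argument on the region enclosed by any would-be shortcut cycle, in the spirit of Lemma~\ref{lem:nodirectedcycle}, should yield that these middle paths consistently partition the edges into three classes whose assignment descends coherently to $G$, securing the 1-{\sc edge} and $\mathbb{N}^*$-{\sc vertex} conditions.

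The principal obstacle is establishing the 1-{\sc face} condition in higher genus. In the plane the threefold ambiguity of the coloring is pinned down by the three outer vertices, whereas for $g\geq 1$ there is no outer face to seed the colors, so the choice must be fixed by a homological argument on $G^\infty$ that is equivariant under the deck transformation group. Even granting this, the notion of the ``middle'' outgoing edge is not canonically determined when $k\geq 2$, and one must rule out monochromatic or ``repeated-step'' triangular faces. Doing so appears to require finer control on the $3k$-orientation than what Theorem~\ref{th:AGK} delivers; this is already subtle in the toroidal case (Part~\ref{chap:existence}) and is precisely why the statement remains a conjecture for $g\geq 2$.
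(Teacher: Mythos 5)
What you were asked to prove is, in the paper, labelled a conjecture: only the ``only if'' implication is established, and the ``if'' direction is open for $g\geq 2$ (it is proved for $g=1$ in Part~\ref{chap:existence}). Your treatment is structurally accurate. The ``only if'' argument is correct and is exactly the counting the paper alludes to: across a type-$1$ edge the two face-side angles differ by exactly one, so walking counterclockwise around a type-$1$ face records exactly three $+1$ transitions and hence exactly three boundary edges. For the ``if'' direction, your sketch --- start from an orientation furnished by Theorem~\ref{th:AGK}, then try to color edges coherently by tracing ``middle'' walks in the universal cover --- is the strategy the paper does carry out for the torus (Section~\ref{sec:flipproof}, via Lemmas~\ref{lem:middlecycle} and~\ref{lem:2middlecycle}), and the obstruction you identify for $g\geq 2$ (the ambiguity of ``middle'' at a vertex of outdegree $3k$ with $k\geq 2$, and the consequent failure of middle walks to close into cycles) is precisely the one the paper exhibits in Figure~\ref{fig:middle8}. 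So, as you yourself acknowledge, your proposal is not a proof; but it correctly reproduces what is known, mirrors the paper's own suggested line of attack, and pinpoints the exact place where the known methods break down in higher genus.
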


If true, Conjecture~\ref{conjecture} would strengthen
Theorem~\ref{th:AGK} in two ways. First, it considers more
triangulations (not only simple ones). Second, it requires the
coloring property of Figure~\ref{fig:369} around vertices.

How about general maps? We propose the following conjecture, whose
``only if'' part is implied by Proposition~\ref{prop:bijtoregen}
and Lemma~\ref{lem:conjessentially}:

\begin{conjecture}
\label{conjecture2}
A map on a genus $g\geq 1$ orientable surface admits an {\sc edge},
$\mathbb{N}^*$-{\sc vertex}, $\mathbb{N}^*$-{\sc face} angle labeling
if and only if it is essentially 3-connected.
\end{conjecture}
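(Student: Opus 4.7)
The statement is a conjecture whose ``only if'' part follows immediately from Proposition~\ref{prop:bijtoregen} (angle labelings are in bijection with generalized Schnyder woods) together with Lemma~\ref{lem:conjessentially} (existence of a generalized Schnyder wood forces essential 3-connectedness). The entire difficulty therefore lies in the ``if'' direction: producing a generalized Schnyder wood on any essentially 3-connected map of genus $g\geq 1$.

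My plan is to reduce this to the triangulation case, that is, to first establish Conjecture~\ref{conjecture} and then to deduce Conjecture~\ref{conjecture2} from it. For the triangulation case I would try to combine Theorem~\ref{th:AGK} with an argument in the spirit of Chapter~\ref{chap:introSW}: starting from an orientation in which every vertex has outdegree at least $3$ and divisible by $3$, one defines a color on each edge via ``middle walks'' that at a vertex of outdegree $3k$ select the appropriate one of the $3k$ outgoing edges, and then one checks by counting arguments (using the triangular face condition) that the resulting coloring is $1$-edge and $1$-face. The key technical issue here is that middle walks in higher genus need not terminate at distinguished outer vertices—they may close up on non-contractible cycles—so one must either lift everything to the universal cover, run middle walks there (where Lemma~\ref{lem:nodirectedcycle}-style arguments become available), and argue that the colors descend coherently to $G$, or replace the explicit ``middle walk'' coloring by an abstract existence statement phrased in terms of $\alpha$-orientations.

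To reduce Conjecture~\ref{conjecture2} to Conjecture~\ref{conjecture} I would triangulate the essentially 3-connected map $G$ by adding chords inside each non-triangular face, chosen so that essential 3-connectedness is preserved in the resulting triangulation $G'$. Applying the triangulation case to $G'$ yields a $1$-{\sc edge}, $\mathbb{N}^*$-{\sc vertex}, $1$-{\sc face} labeling of $G'$; deleting the added chords one by one merges pairs of angles at their endpoints and fuses two faces along each removed chord. To guarantee that the labeling inherited by $G$ is still {\sc edge}, $\mathbb{N}^*$-{\sc vertex} and $\mathbb{N}^*$-{\sc face}, the chords must be selected adaptively. The cleanest route is probably to pass through $\alpha$-orientations: define an $\alpha$ on $G$ whose feasible orientations encode generalized Schnyder woods, and show that essential 3-connectedness forces the feasibility of some homology class of $\alpha$-orientation, paralleling Felsner's treatment underlying Miller's Theorem~\ref{th:schnyder} in the planar case.

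The hardest step, in my view, is this last one: in higher genus the set of $\alpha$-orientations partitions along surface homology (as emphasized in the introduction), and one must locate a homology class in which the orientation exists \emph{and} the induced middle-walk coloring is globally consistent (no monochromatic facial cycle, and the correct cyclic color pattern around each vertex). This is already quite subtle at $g=1$, where the paper devotes a full part to proving existence of toroidal Schnyder woods; for $g\geq 2$ even the unoriented ingredient (Theorem~\ref{th:barat}) required a nontrivial argument of Bar\'at and Thomassen, and upgrading it to the {\sc edge}, $\mathbb{N}^*$-{\sc vertex}, $\mathbb{N}^*$-{\sc face} setting with the correct global coloring behavior, as well as showing that essential 3-connectedness (rather than, say, a stronger connectivity assumption) is sufficient, seems to require genuinely new ideas.
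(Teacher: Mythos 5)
This statement is an open conjecture, not a theorem: the paper proves the ``only if'' direction (via Proposition~\ref{prop:bijtoregen} and Lemma~\ref{lem:conjessentially}) and the $g=1$ case (Theorem~\ref{th:existencebasic}), but explicitly leaves the ``if'' direction open for $g\geq 2$. You correctly identify this structure and are candid that what you give is a plan of attack rather than a proof; that honesty is appropriate here.

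Two concrete issues with the plan nonetheless deserve flagging. First, the middle-walk coloring you propose for the triangulation case is precisely the approach the paper discusses and shows to break down at $g=2$: Section~\ref{sec:flipproof} explains that one must redefine ``middle'' at vertices of outdegree $3k$ as an edge leaving $1\bmod 3$ edges on each side, and Figure~\ref{fig:middle8} gives an explicit double-torus example in which a middle walk closes up not on a cycle but on a figure-eight, so that Lemma~\ref{lem:middlecycle} fails and the reversal strategy used for the torus cannot be applied. Passing to the universal cover does not repair this, because the issue is not about lifting but about the combinatorics of the periodic walk downstairs. Second, your proposed reduction of Conjecture~\ref{conjecture2} to Conjecture~\ref{conjecture} by triangulating $G$ and then removing chords runs in the opposite direction from the implication the paper establishes: the paper observes that Conjecture~\ref{conjecture2} implies Conjecture~\ref{conjecture} (since in a triangulation $\mathbb{N}^*$-{\sc face} forces $1$-{\sc face} and hence $1$-{\sc edge}), but does not claim the converse. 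Removing a chord merges angle intervals and can destroy the {\sc edge} property at the four angles surrounding it, and it is not at all clear that the chords can always be chosen so the induced labeling on $G$ remains an {\sc edge} labeling with strictly positive type at every vertex and face. So the step you yourself single out as the hardest is indeed where the genuinely new ideas would be needed.
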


Note that the graph of Figure~\ref{fig:annoying-maps-small} is
essentially 3-connected and has no $\{1,2\}$-{\sc edge} angle
labeling. So Conjecture~\ref{conjecture2} is false without edges of
type 0. This explains why we have introduced edges of type 0 that do
not appear in the planar case.

Conjecture~\ref{conjecture2} implies Conjecture~\ref{conjecture} since
for a triangulation every face would be of type 1, and thus every edge
would be of type 1.  The case $g=1$ of Conjecture~\ref{conjecture2} is
proved in this manuscript (see Part~\ref{part:torus}) whereas both
conjectures are open for $g\geq 2$.

\chapter{Characterization of Schnyder orientations}
\label{sec:characterization}

\section{A bit of homology}
\label{sec:homology}

In the next sections, we need a bit of surface homology of general
maps, which we discuss now. For a deeper introduction to homology we
refer to~\cite{Gib10}.

For the sake of generality, in this subsection we consider that maps
may have loops or multiple edges.  Consider a map $G=(V,E)$ on an
orientable surface of genus $g$, given with an arbitrary orientation
of its edges. This fixed arbitrary orientation is implicit in all the
paper and is used to handle flows.  A \emph{flow} $\phi$ on $G$ is a
vector in $\mb{Z}^{|E|}$. For any $e\in E$, we denote by $\phi_e$ the
coordinate $e$ of $\phi$.

A \emph{walk} $W$ of $G$ is a sequence of edges with a direction of
traversal such that the ending point of an edge is the starting
point of the next edge.  A walk is \emph{closed} if the start and end
vertices coincide. A walk has a \emph{characteristic flow} $\phi(W)$
defined by:
$$\phi(W)_e:=\#\text{times }W\text{ traverses } e \text{ forward} - \#\text{times
}\\
W\text{ traverses } e \text{ backward}$$

This definition naturally extends to sets of walks.  From now on we
consider that a set of walks and its characteristic flow are the same
object and by abuse of notation we can write $W$ instead of
$\phi(W)$. We do the same for \emph{oriented subgraphs},
i.e. subgraphs that can be seen as a set of walks.

A \emph{facial walk} is a closed walk bounding a face.  Let $\mc{F}$
be the set of counterclockwise facial walks and let
$\mb{F}={<}\phi(\mc{F}){>}$ the subgroup of $\mb{Z}^E$ generated by
$\mc{F}$.  Two flows $\phi, \phi'$ are \emph{homologous} if
$\phi -\phi' \in \mb{F}$.  They are \emph{reversely homologous} if
$\phi +\phi' \in \mb{F}$. They are \emph{weakly homologous} if they
are homologous or reversely homologous.  We say that a flow $\phi$ is
$0$-homologous if it is homologous to the zero flow, i.e.
$\phi \in \mb{F}$.

Let $\mc{W}$ be the set of \emph{closed} walks and let
$\mb{W}={<}\phi(\mc{W}){>}$ the subgroup of $\mb{Z}^E$ generated by
$\mc{W}$.  The group $H(G)=\mb{W}/\mb{F}$ is the \emph{first homology
  group} of $G$. It is well-known that $H(G)\cong\mb{Z}^{2g}$ only
depends on the genus of the map.  

A set $\{B_1,\ldots,B_{2g}\}$ of (closed) walks of $G$ is said to be a
\emph{homology-basis} if the equivalence classes of their
characteristic vectors $\{[\phi(B_1)],\ldots,[\phi(B_{2g})]\}$ generate
$H(G)$.  Then for any closed walk $W$ of $G$, we have
$W=\sum_{F\in\mc{F}}\lambda_FF+\sum_{1\leq i\leq 2g}\mu_iB_i$ for some
$\lambda\in\mathbb{Z}^{f},\mu\in\mathbb{Z}^{2g}$. Moreover one of the
$\lambda_F$ can be set to zero (and then all the other coefficients
are unique).

For any map, there exists a set of cycles that forms a homology-basis
and it is computationally easy to build. A possible way to do this is
by considering a spanning tree $T$ of $G$, and a spanning tree $T^*$
of $G^*$ that contains no edges dual to $T$.  By Euler's formula,
there are exactly $2g$ edges in $G$ that are not in $T$ nor dual to
edges of $T^*$. Each of these $2g$ edges forms a unique cycle with
$T$. It is not hard to see that this set of cycles, given with any
direction of traversals, forms a homology-basis. Moreover, note that
the intersection of any pair of these  cycles is either a single
vertex or a common path.

The edges of the dual map $G^*$ of $G$ are oriented such that the dual
edge $e^*$ of an edge $e$ of $G$ goes from the face on the right of $e$ to
the face on the left of $e$. 
Let $\mc{F}^*$ be the set of counterclockwise facial walks of $G^*$.
Consider $\{B^*_1,\ldots,B^*_{2g}\}$ a set of cycles
of $G^*$ that
form a homology-basis. Let $p$ be a flow of $G$ and $d$ a flow
of $G^*$. We define the following: $$\beta(p,d)=\sum_{e\in G}p_e
d_{e^*}$$ Note that $\beta$ is a bilinear function.

\begin{lemma}\label{lm:homologous}
Given  two flows $\phi,\phi'$ of $G$, the following properties are
equivalent to each other:

\begin{enumerate}
\item The two flows $\phi, \phi'$ are homologous.
\item For any closed walk $W$ of $G^*$ we have
  $\beta(\phi,W)=\beta(\phi',W)$.
\item For any $F\in \mc{F^*}$, we have $\beta(\phi,F)=\beta(\phi',F)$, and,
  for any $1\le i\le 2g$, we have $\beta(\phi,B^*_i)=\beta(\phi',B^*_i)$.
\end{enumerate}
\end{lemma}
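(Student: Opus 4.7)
The plan is to prove the cyclic chain $(1) \Rightarrow (2) \Rightarrow (3) \Rightarrow (1)$. The implication $(2) \Rightarrow (3)$ is immediate since each element of $\mc{F}^*$ and each $B^*_i$ is a closed walk of $G^*$, so the real content lies in $(1) \Rightarrow (2)$ and $(3) \Rightarrow (1)$.

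For $(1) \Rightarrow (2)$ I would first establish the key identity that $\beta(F, W) = 0$ whenever $F \in \mc{F}$ and $W$ is any closed walk of $G^*$. In $\beta(F, W) = \sum_e F_e W_{e^*}$ the nonzero contributions come only from edges $e$ on the boundary of $F$, whose duals $e^*$ are precisely the edges of $G^*$ incident to the vertex $f^*$ corresponding to $F$. Since $F$ is a counterclockwise walk, $F_e = +1$ records that $f$ lies on the left of $e$, which by the chosen dual orientation is exactly when $e^*$ enters $f^*$; symmetrically $F_e = -1$ means $e^*$ leaves $f^*$. Hence $\beta(F, W)$ equals the net flow of $W$ into $f^*$, which vanishes because $W$ is closed. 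Given the hypothesis $\phi - \phi' = \sum_F \lambda_F F \in \mb{F}$, bilinearity then yields $\beta(\phi - \phi', W) = \sum_F \lambda_F \beta(F, W) = 0$, which is (2).

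For $(3) \Rightarrow (1)$, set $\psi := \phi - \phi'$. The dual version of the key identity (swap the roles of $G$ and $G^*$) says that the condition $\beta(\psi, F) = 0$ for every $F \in \mc{F}^*$ is exactly Kirchhoff's law for $\psi$ at every vertex of $G$, so $\psi$ is a circulation and belongs to $\mb{W}$. Since $\{[B_1], \ldots, [B_{2g}]\}$ generates the free abelian group $H(G) = \mb{W}/\mb{F}$ of rank $2g$ and has exactly $2g$ elements, it is in fact a $\mb{Z}$-basis, so I can decompose $\psi = \phi_0 + \sum_i \mu_i B_i$ with $\phi_0 \in \mb{F}$ and $\mu_i \in \mb{Z}$. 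Applying $\beta(\cdot, B^*_j)$ and using bilinearity together with $\beta(\phi_0, B^*_j) = 0$ from the first step, the hypothesis rewrites as $\sum_i \mu_i \beta(B_i, B^*_j) = 0$ for every $j$.

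The hard part will be to deduce that $\mu_i = 0$ for all $i$, which amounts to proving that the $2g \times 2g$ integer matrix $M_{ji} := \beta(B_i, B^*_j)$ is non-singular. By the first step and its dual, $\beta$ descends to a well-defined bilinear pairing $\bar\beta : H(G) \times H(G^*) \to \mb{Z}$ whose matrix in the chosen bases is exactly $M$; non-degeneracy of $\bar\beta$ in one pair of bases is equivalent to non-degeneracy in any other. To produce a convenient pair I would use the spanning-tree construction recalled in Section~\ref{sec:homology}: with $T$, $T^*$ and the $2g$ remaining edges $e_1, \ldots, e_{2g}$, the fundamental cycle $B'_i$ of $e_i$ in $T$ and the fundamental cycle $B'^*_i$ of $e_i^*$ in $T^*$ form homology bases of $G$ and $G^*$ respectively. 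A direct inspection of $\beta(B'_i, B'^*_j) = \sum_e (B'_i)_e (B'^*_j)_{e^*}$ shows that all potential contributions from edges of $T$ or duals of edges of $T^*$ die (because $e \in T$ implies $e \neq e_k$ for all $k$, and symmetrically on the dual side), so the only surviving term is the one for $e = e_i = e_j$, giving $\beta(B'_i, B'^*_j) = \pm \delta_{ij}$. Thus $\bar\beta$ is non-degenerate, $M$ is invertible, every $\mu_i$ vanishes, and $\psi = \phi_0 \in \mb{F}$, proving (1).
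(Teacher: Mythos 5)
Your proof is correct, but it closes the loop by a genuinely different argument than the paper. You run the cycle $1 \Rightarrow 2 \Rightarrow 3 \Rightarrow 1$, whereas the paper runs $1 \Rightarrow 3$ (really proving $1\Rightarrow 2$ as a warm-up), then $3 \Rightarrow 2$, then $2 \Rightarrow 1$. The easy steps match up: your $1\Rightarrow 2$ is the same key computation $\beta(F,W)=0$ as the paper's opening step, and your $2\Rightarrow 3$ is trivial while the paper's $3\Rightarrow 2$ is a one-line decomposition of $W$. The real divergence is in the return arrow. The paper goes from the stronger hypothesis (2): set $z=\phi-\phi'$, label each face $F$ by $\ell_F=\beta(z,P_F)$ (well-defined by (2)), and then directly verify the identity $z=\sum_{F\in\mc{F}}\ell_F\,\phi(F)$ by a telescoping chain of equalities. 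You instead go from the weaker hypothesis (3): use the duals of the facial condition to see that $\psi=\phi-\phi'$ is a circulation and hence lies in $\mb{W}$; decompose $\psi=\phi_0+\sum_i\mu_iB_i$ in an auxiliary homology basis; and kill the coefficients $\mu_i$ by showing the pairing matrix $\bigl(\beta(B_i,B^*_j)\bigr)$ is non-singular, which you establish by passing to the spanning-tree fundamental cycles $B'_i$, $B'^*_j$ where the matrix is literally $\pm\delta_{ij}$, combined with unimodularity of change of $\mb{Z}$-basis. The paper's route is shorter and entirely self-contained, avoiding any discussion of the induced pairing on homology; yours is more conceptual and makes the underlying Poincar\'e-duality-style non-degeneracy explicit, and it naturally closes at (3) rather than needing to pass back through (2). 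Both are valid; the paper's is leaner, yours is arguably more illuminating about why the lemma is true. One small point worth spelling out in a write-up: the equivalence ``$\beta(\psi,F)=0$ for all $F\in\mc{F}^*$'' $\iff$ ``$\psi$ is a circulation'' $\iff$ ``$\psi\in\mb{W}$'' deserves a sentence each, and you should say explicitly that the $\{B_i\}$ you decompose against is an arbitrary auxiliary homology basis of $G$, independent of the $\{B^*_j\}$ fixed in the statement.
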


\begin{proof}
  $(1. \Longrightarrow 3.)$ Suppose that $\phi, \phi'$ are
  homologous. Then we have
  $\phi-\phi'=\sum_{F\in\mc{F}}\lambda_FF$ for some
  $\lambda\in\mathbb{Z}^f$. It is easy to see that, for any
  closed walk $W$ of $G^*$, a facial walk $F\in\mc{F}$ satisfies
  $\beta(F,W)=0$, so $\beta(\phi,W)=\beta(\phi',W)$ by linearity of
  $\beta$.
 
  $(3. \Longrightarrow 2.)$ Suppose that for any $F\in \mc{F^*}$, we
  have $\beta(\phi,F)=\beta(\phi',F)$, and, for any $1\le i\le 2g$, we
  have $\beta(\phi,B^*_i)=\beta(\phi',B^*_i)$. Let $W$ be any closed
  walk of $G^*$. We have
  $W=\sum_{F\in\mc{F}^*}\lambda_FF+\sum_{1\leq i\leq 2g}\mu_iB^*_i$
  for some $\lambda\in\mathbb{Z}^{f},\mu\in\mathbb{Z}^{2g}$. Then by
  linearity of $\beta$ we have   $\beta(\phi,W)=\beta(\phi',W)$.

  $(2. \Longrightarrow 1.)$ Suppose
  $\beta(\phi,W)=\beta(\phi',W)$ for any closed walk $W$ of $G^*$. Let
  $z=\phi-\phi'$.  Thus $\beta(z,W)=0$ for any closed walk $W$ of
  $G^*$.  We label the faces of $G$ with elements of $\mathbb{Z}$ as
  follows. Choose an arbitrary face $F_0$ and label it $0$. Then,
  consider any face $F$ of $G$ and a path $P_{F}$ of $G^*$ from $F_0$
  to $F$. Label $F$ with $\ell_F=\beta(z,P_{F})$. Note that the label
  of $F$ is independent from the choice of $P_{F}$. Indeed, for any
  two paths $P_1,P_2$ from $F_0$ to $F$, we have $P_1-P_2$ is a closed
  walk, so $\beta(z,P_1-P_2)=0$ and thus $\beta(z,P_1)=\beta(z,P_2)$.
  Let us show that $z=\sum_{F\in\mc{F}} \ell_F \phi(F)$.

\begin{alignat*}{3}
  \sum_{F\in\mc{F}} \ell_F \phi(F) &=\sum_{e\in G}
  \left(\ell_{F_2}-\ell_{F_1}\right) \phi(e) &
  \text{(face $F_2$ is on the left of $e$ and $F_1$ on the right)}\\
  &=\sum_{e\in G}
  \left(\beta(z,P_{F_2})-\beta(z,P_{F_1})\right)
  \phi(e) &\text{(definition of $\ell_F$)}\\
  &=\sum_{e\in G} \beta(z,P_{F_2}-P_{F_1}) \phi(e)  &\text{(linearity of $\beta$)}\\
  &=\sum_{e\in G} \beta(z,e^*) \phi(e) &\text{($P_{F_1}+e^*-P_{F_2}$ is a closed walk)}\\
  & =\sum_{e\in G} \left(\sum_{e'\in G} z_{e'} \phi(e^*)_{e'^*}\right)
  \phi(e) &\text{(definition of $\beta$)}\\
  &=\sum_{e\in G} z_e \phi(e)\\
  &=z
\end{alignat*}

So $z\in  \mb{F}$ and thus $\phi, \phi'$ are homologous.
\end{proof}

\section{General characterization}
\label{sec:gencharacterization}

By a result of De Fraysseix and Ossona de Mendez~\cite{FO01}, there is
a bijection between orientations of the internal edges of a planar
triangulation where every inner vertex has outdegree $3$ and Schnyder
woods. Thus, any orientation with the proper outdegree corresponds to
a Schnyder wood. This is not true in higher genus as already in the
torus, there exist orientations that do not correspond to any Schnyder
wood (see Figure~\ref{fig:orientation}).  In this section, we
characterize orientations that correspond to Schnyder angle labelings.

\begin{figure}[!h]
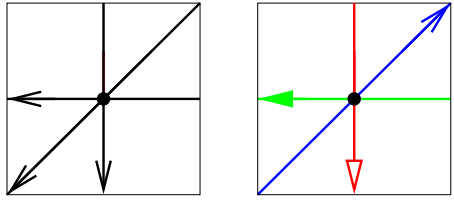

\center
\includegraphics[scale=0.5]{orientation} 
\ \ \ \ 
\includegraphics[scale=0.5]{orientation-col}
\caption{Two different orientations of a toroidal triangulation. Only
  the one on the right corresponds to a Schnyder wood.}
\label{fig:orientation}
\end{figure}

Consider a map $G$ on an orientable surface of genus $g$.  The mapping
of Figure~\ref{fig:edgelabeling} shows how a Schnyder labeling of $G$
can be mapped to an orientation of the edges with edges oriented in
one direction or in two opposite directions.  These edges can be
defined more naturally in the primal-dual-completion of $G$.

The \emph{primal-dual-completion} $\pdc{G}$ is the map obtained from
simultaneously embedding $G$ and $G^*$ such that vertices of $G^*$ are
embedded inside faces of $G$ and vice-versa. Moreover, each edge
crosses its dual edge in exactly one point in its interior, which also
becomes a vertex of $\pdc{G}$.  Hence, $\pdc{G}$ is a bipartite graph
with one part consisting of \emph{primal-vertices} and
\emph{dual-vertices} and the other part consisting of
\emph{edge-vertices} (of degree $4$). Each face of $\pdc{G}$ is a
quadrangle incident to one primal-vertex, one dual-vertex and two
edge-vertices. Actually, the faces of $\pdc{G}$ are in correspondence
with the angles of $G$. This means that angle labelings of $G$
correspond to face labelings of $\pdc{G}$.

Given $\alpha:V\to \mb{N}$, an orientation of $G$ is an
\emph{$\alpha$-orientation}~\cite{Fel04} if for every vertex $v\in V$ its
outdegree $d^+(v)$ equals $\alpha(v)$.  We call an orientation of
$\pdc{G}$ a \emph{$\bmod_3$-orientation} if it is an
$\alpha$-orientation for a function $\alpha$ satisfying
: $$\alpha(v)= \begin{cases}
  0\bmod 3 & \text{if }v\text{ is a primal- or dual-vertex}, \\
  1\bmod 3& \text{if }v\text{ is an edge-vertex.}\\
\end{cases}
$$

Note that an Schnyder labeling of $G$ corresponds to a
$\bmod_3$-orientation of $\pdc{G}$, by the mapping of
Figure~\ref{fig:pdc}, where the three types of edges are
represented. Indeed, type 0 corresponds to an edge-vertex of outdegree
$4$. Type 1 and type 2 both correspond to an edge-vertex of outdegree
$1$; in type 1 (resp. type 2) the outgoing edge goes to a
primal-vertex (resp. dual-vertex). In all cases we have
$d^+(v) = 1\bmod 3$ if $v$ is an edge-vertex. By
Lemma~\ref{lem:EDGElabeling}, the labeling is also {\sc vertex} and
{\sc face}. Thus $d^+(v) = 0\bmod 3$ if $v$ is a primal- or
dual-vertex.

\begin{figure}[!h]
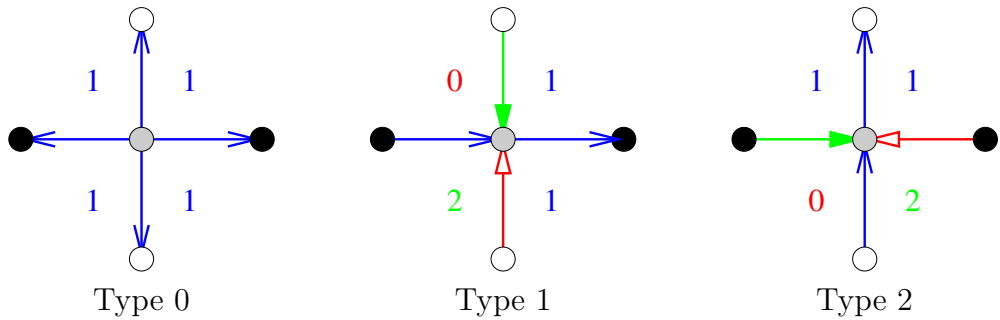

\center
\begin{tabular}{ccc}
\includegraphics[scale=0.5]{pdc-0} \ \ \  &  \ \ \
\includegraphics[scale=0.5]{pdc-1} \ \ \  &  \ \ \
\includegraphics[scale=0.5]{pdc-2} \\
Type 0  \ \ \  &  \ \ \ Type 1  \ \ \  &  \ \ \ Type 2  \\
\end{tabular}
\caption{How to map a Schnyder labeling to a
  $\bmod_3$-orientation of the primal-dual completion. Primal-vertices
  are black, dual-vertices are white and edge-vertices are gray. This
  serves as a convention for the other figures.}
\label{fig:pdc}
\end{figure}

Figure~\ref{fig:pdc-example-tore} represent the primal-dual completion
of the toroidal map of Figure~\ref{fig:example-dual-tore} with a
Schnyder labeling and the corresponding orientation and coloring of
its edges. Note that it corresponds to a superposition of the primal
and dual Schnyder woods of Figure~\ref{fig:example-dual-tore}.

\begin{figure}[!h]
\center
\includegraphics[scale=0.5]{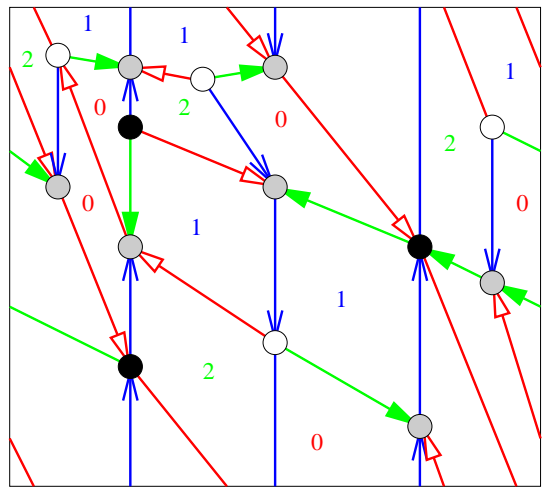}
\caption{Primal-dual completion of a toroidal map given with a
  Schnyder labeling and the corresponding orientation and coloring.}
\label{fig:pdc-example-tore}
\end{figure}

As mentioned earlier, De Fraysseix and Ossona de
Mendez~\cite{FO01} give a bijection between internal 3-orientations
and Schnyder woods of planar triangulations. Felsner~\cite{Fel04}
generalizes this result for planar Schnyder woods and orientations of
the primal-dual completion having prescribed out-degrees: 
$3$ for primal- or dual-vertices and $1$ for edges-vertices. The
situation is more complicated in higher genus
(see Figure~\ref{fig:orientation}). It is not enough to prescribe
outdegrees in order to characterize orientations corresponding to
Schnyder  labelings.

We call an orientation of $\pdc{G}$ corresponding to a Schnyder
labeling of $G$ a \emph{Schnyder orientation}.  Note that with our
definition of generalized Schnyder wood, a Schnyder orientation
corresponds to a Schnyder wood if and only if $d^+(v)>0$ for any
primal- or dual-vertex $v$ (i.e. vertex or face of type 0 are allowed in
Schnyder labelings/orientations but not in Schnyder woods).  In this
section we characterize which orientations of $\pdc{G}$ are Schnyder
orientations.

Consider an orientation of the primal-dual completion $\pdc{G}$.  Let
$\Out=\{(u,v)\in E(\pdc{G})\mid v\text{ is an edge-vertex}\}$, i.e.
the set of edges of $\pdc{G}$ which are going from a primal- or
dual-vertex to an edge-vertex. We call these edges \emph{out-edges}.
For $\phi$ a flow of the dual of the primal-dual completion
$\pdc{G}^*$, we define $\delta(\phi)=\beta(\Out,\phi)$.  More
intuitively, if $W$ is a walk of $\pdc{G}^*$, then:
$$
\begin{array}{ll}
  \delta(W)  = &  \ \ \#\text{out-edges crossing }W\text{
    from left to right}\\
  & -\#\text{out-edges crossing }W\text{ from right to
    left}.  
\end{array}
$$

The bilinearity of $\beta$ implies the linearity of $\delta$.

The following lemma gives a necessary and sufficient condition
for an orientation to be a Schnyder orientation.

\begin{lemma}
 \label{lem:charforall}
 An orientation of $\pdc{G}$ is a Schnyder orientation if and only if
 any closed walk $W$ of $\pdc{G}^*$ satisfies
 $\delta (W) = 0 \bmod 3$.
\end{lemma}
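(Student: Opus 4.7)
The plan is to view the to-be-constructed angle labeling $\ell$ as a mod-$3$ potential on the vertices of $\pdc{G}^*$ (i.e.\ the faces of $\pdc{G}$, i.e.\ the angles of $G$), and to realize $\delta$ as the ``elementary increment'' cocycle of this potential modulo~$3$. Concretely, I first set up a local dictionary: every edge $e_p$ of $\pdc{G}$ joins a primal- or dual-vertex $v$ to an edge-vertex and separates two faces $f_1,f_2$ of $\pdc{G}$; these faces are precisely the two angles of $G$ at $v$ on either side of the half-edge represented by $e_p$. Using Figures~\ref{fig:edgelabeling} and~\ref{fig:pdc}, a short case analysis on the three edge types $0,1,2$ of $G$ shows that, for a Schnyder labeling $\ell$, the jump $\ell(f_2)-\ell(f_1)\pmod 3$ along the step $f_1\to f_2$ is $+1$ when $e_p$ is an out-edge crossed in the positive sense encoded by $\beta(\Out,\cdot)$, $-1$ in the reverse sense, and $0$ when $e_p\notin\Out$. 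Equivalently, this jump equals $\delta$ applied to the length-one walk $f_1\to f_2$.

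The forward direction is then a telescoping argument: given a Schnyder orientation arising from $\ell$ and any closed walk $W=(f_0,\ldots,f_k=f_0)$ of $\pdc{G}^*$, the bilinearity of $\beta$ and the dictionary give
$$
\delta(W)\;=\;\sum_{i=0}^{k-1}\bigl(\ell(f_{i+1})-\ell(f_i)\bigr)\;\equiv\;0\pmod 3.
$$

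Conversely, suppose $\delta(W)\equiv 0\pmod 3$ on every closed walk. Pick a base face $f_0$, set $\ell(f_0):=0$, and for every other face $f$ choose a walk $W_f$ in $\pdc{G}^*$ from $f_0$ to $f$ and put $\ell(f):=\delta(W_f)\bmod 3$. Two such walks differ by a closed walk, so the hypothesis makes $\ell$ well-defined. By construction the $\ell$-jump along any $\pdc{G}^*$-step coincides with its elementary $\delta$-contribution, so reading the dictionary backwards recovers the correspondence of Figures~\ref{fig:edgelabeling} and~\ref{fig:pdc}. To conclude that $\ell$ is an {\sc edge} labeling, I also need the four labels around each edge-vertex $e_v$ to form one of the patterns of Type~$0$, $1$, or~$2$; applying the hypothesis to the facial walk of $\pdc{G}^*$ around $e_v$ forces the number of out-edges incident to $e_v$ to be $\equiv 0\pmod 3$, hence $d^+(e_v)\in\{1,4\}$, and the dictionary then aligns the four labels into exactly one of the admissible patterns. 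Thus the given orientation is the Schnyder orientation associated to $\ell$.

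The main obstacle is the local dictionary itself: one must keep careful track of the signs built into $\Out$ and $\beta$ so that the label jump really matches the signed $\delta$-contribution in every configuration (each edge type, each of the four $\pdc{G}$-edges around an edge-vertex, and each direction of crossing). Once this sign bookkeeping is settled, both implications reduce to the standard fact that a $\mathbb{Z}_3$-valued $1$-cocycle on $\pdc{G}^*$ integrates to a well-defined potential on the vertices of $\pdc{G}^*$ if and only if it vanishes mod~$3$ on every closed walk.
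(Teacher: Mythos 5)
Your proposal is correct and takes essentially the same approach as the paper: both directions hinge on identifying $\delta$ with the elementary label-increment across a $\pdc{G}^*$-step, telescoping for the forward direction, and integrating $\delta$ to a well-defined mod-$3$ potential for the converse, then using the facial-walk constraint around each edge-vertex to verify the {\sc edge} property. The paper's argument is organized the same way, with Figure~\ref{fig:figdelta} serving as your ``local dictionary.''
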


\begin{proof} $(\Longrightarrow)$ Consider an {\sc edge} angle
  labeling $\ell$ of $G$ and the corresponding Schnyder orientation
  (see Figure~\ref{fig:pdc}).
  Figure~\ref{fig:figdelta} illustrates how $\delta$ counts the
  variation of the label when going from one face of $\pdc{G}$ to
  another face of $\pdc{G}$ . The represented cases
  correspond to a walk $W$ of $\pdc{G}^*$ consisting of just one
  edge. If the edge of $\pdc{G}$ crossed by $W$ is not an out-edge,
  then the two labels in the face are the same and $\delta(W) =0$. If
  the edge crossed by $W$ is an out-edge, then the labels differ by
  one. If $W$ is going counterclockwise around a primal- or
  dual-vertex, then the label increases by $1 \bmod 3$ and
  $\delta(W)=1$. If $W$ is going clockwise around a primal- or
  dual-vertex then the label decreases by $1 \bmod 3$ and
  $\delta(W)=-1$. One can check that this is consistent with all the
  edges depicted in Figure~\ref{fig:pdc}.  Thus for any walk $W$ of
  $\pdc{G}^*$ from a face $F$ to a face $F'$, the value of $\delta(W)
  \bmod 3$ is equal to $\ell(F')-\ell(F)\bmod 3$.  Thus if $W$ is a
  closed walk then $\delta(W)= 0 \bmod 3$.

\begin{figure}[!h]
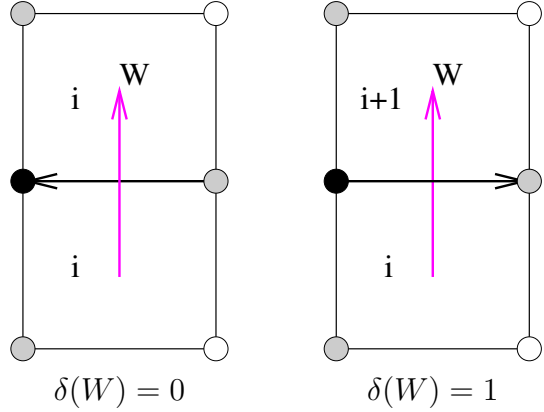

\center
\begin{tabular}{cc}
\includegraphics[scale=0.5]{delta-0} \ \ \  &  \ \ \
\includegraphics[scale=0.5]{delta-1} \\
$\delta(W) =0$ \ \ \  &  \ \ \ $\delta(W)=1$ \\
\end{tabular}
\caption{How $\delta$ counts the variation of the labels.}
\label{fig:figdelta}
\end{figure}

$(\Longleftarrow)$ Consider an orientation of $\pdc{G}$ such that
any closed
walk $W$ of $\pdc{G}^*$ satisfies $\delta (W) = 0 \bmod 3$.  Pick any
face $F_0$ of $\pdc{G}$ and label it $0$. Consider any face $F$ of
$\pdc{G}$ and a path $P$ of $\pdc{G}^*$ from $F_0$ to $F$. Label $F$
with the value $\delta (P)\bmod 3$. Note that the label of $F$ is
independent from the choice of $P$ as for any two paths $P_1, P_2$
going from $F_0$ to $F$, we have $\delta (P_1) =\delta (P_2) \bmod 3$
since $\delta (P_1 - P_2) = 0 \bmod 3$ as $P_1- P_2$ is a closed walk.

Consider an edge-vertex $v$ of $\pdc{G}$ and a walk $W$ of $\pdc{G}^*$
going \cw around $v$. By assumption  $\delta (W) = 0 \bmod
3$ and $d(v)=4$ so $d^+(v)=1\bmod 3$.
 One can check (see Figure~\ref{fig:pdc}) that around an
edge-vertex $v$ of outdegree $4$, all the labels are the same and thus
$v$ corresponds to an edge of $G$ of type 0. One can also check that
around an edge-vertex $v$ of outdegree $1$, the labels are in clockwise
order, $i-1$, $i$, $i$, $i+1$ for some $i$ in $\{0,1,2\}$ where the
two faces with the same label are incident to the outgoing edge of
$v$. Thus $v$ corresponds to an edge of $G$ of type 1 or 2 depending
on the fact that the outgoing edge reaches a primal- or a
dual-vertex. So the obtained labeling of the faces of $\pdc{G}$
corresponds to an {\sc edge} angle labeling of $G$ and the considered
orientation is a Schnyder orientation.
\end{proof}

We now study properties of $\delta$ w.r.t.~homology in order to
simplify the condition of Lemma~\ref{lem:charforall} that concerns any
closed walk of $\pdc{G}^*$.  

Let
$\hat{\mc{F}^*}$ be the set of counterclockwise facial walks of
$\pdc{G}^*$.

\begin{lemma}
\label{lem:facedelta0}
In a $\bmod_3$-orientation of $\pdc{G}$, any $F\in\hat{\mc{F}^*}$
satisfies $\delta(F)=0\bmod 3$.
\end{lemma}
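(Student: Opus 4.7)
The plan is to interpret $\delta(F)$ locally at the vertex of $\pdc{G}$ surrounded by $F$, and then to invoke the outdegree constraints of the $\bmod_3$-orientation. First I would note that each $F \in \hat{\mc{F}^*}$ bounds a face of $\pdc{G}^*$ containing a unique vertex $v$ of $\pdc{G}$ (since faces of $\pdc{G}^*$ are in bijection with vertices of $\pdc{G}$), and that $F$ crosses precisely the edges of $\pdc{G}$ incident to $v$, each exactly once. Using the left-to-right description of $\delta$, and the fact that a counterclockwise traversal of $F$ has the interior (containing $v$) on its left, an edge incident to $v$ is crossed by $F$ from left to right iff it is outgoing from $v$, and from right to left iff it is incoming to $v$. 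Hence
\[
\delta(F) \;=\; \#\{e \in \Out : e \text{ leaves } v\} \;-\; \#\{e \in \Out : e \text{ enters } v\}.
\]

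Next I would split into cases according to the bipartition of $\pdc{G}$, using that an edge of $\pdc{G}$ lies in $\Out$ iff it is oriented from a primal- or dual-vertex to an edge-vertex. If $v$ is a primal- or dual-vertex, then every incident edge goes to the edge-vertex side, so it lies in $\Out$ exactly when it is outgoing from $v$; hence $\delta(F) = d^+(v)$, which is $0 \bmod 3$ by hypothesis. If $v$ is an edge-vertex, then every incident edge comes from the primal/dual side, so it lies in $\Out$ exactly when it is incoming to $v$; hence $\delta(F) = -d^-(v) = d^+(v) - 4$, using $d(v)=4$. The hypothesis gives $d^+(v) \equiv 1 \bmod 3$, whence $\delta(F) \equiv 1 - 4 \equiv 0 \bmod 3$.

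The main obstacle is purely bookkeeping: one has to pin down the sign convention linking the counterclockwise orientation of a facial walk of $\pdc{G}^*$ to the left/right sides of edges of $\pdc{G}$ at the enclosed vertex, so that outgoing and incoming edges at $v$ really contribute with opposite signs to $\delta(F)$. Once this is fixed consistently with the dual-orientation convention recalled in Section~\ref{sec:homology}, the lemma reduces to the elementary modular arithmetic above.
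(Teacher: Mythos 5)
Your proof is correct and follows essentially the same route as the paper's: identify the unique vertex $v$ of $\pdc{G}$ enclosed by $F$, observe that outgoing (resp.\ incoming) edges at $v$ cross $F$ from left to right (resp.\ right to left), and split into the edge-vertex vs.\ primal/dual-vertex cases, reducing to $d^+(v)=0\bmod 3$ or $d^+(v)-4 \equiv 0\bmod 3$. The paper phrases the edge-vertex count as ``$0$ or $3$ out-edges crossing from right to left,'' but this is the same computation.
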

\begin{proof}
  If $F$ corresponds to an edge-vertex $v$ of $\pdc{G}$, then $v$ has
  degree exactly $4$ and outdegree $1$ or $4$ by definition of
  $\bmod_3$-orientations. So there are exactly $0$ or $3$ out-edges
  crossing $F$ from right to left, and $\delta(F)=0\bmod 3$.

 If $F$ corresponds to a primal- or dual-vertex $v$, then $v$ has
 outdegree $0 \bmod 3$ by definition of $\bmod_3$-orientations. So
 there are exactly $0 \bmod 3$ out-edges crossing $F$ from left to
 right, and $\delta(F)=0\bmod 3$.
\end{proof}

\begin{lemma}
\label{lem:basedelta}
In a $\bmod_3$-orientation of $\pdc{G}$, if $\{B_1,\ldots,B_{2g}\}$ is a
set of cycles
of $\pdc{G}^*$ that forms a homology-basis, then for any closed walk
$W$ of $\pdc{G}^*$ homologous to
$\mu_1 B_1 + \cdots + \mu_{2g} B_{2g}$, $\mu\in\mathbb{Z}^{2g}$ we have
$\delta(W)= \mu_1 \delta(B_1) + \cdots + \mu_{2g} \delta(B_{2g}) \bmod
3$.
\end{lemma}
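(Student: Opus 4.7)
The plan is to exploit three ingredients already established in the section: the homology decomposition of a closed walk in $\pdc{G}^*$, the bilinearity of $\beta$ (and hence linearity of $\delta$), and the previous Lemma~\ref{lem:facedelta0} telling us that $\delta$ vanishes modulo $3$ on facial walks of $\pdc{G}^*$.

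First I would unwind what it means for $W$ to be homologous to $\mu_1 B_1 + \cdots + \mu_{2g} B_{2g}$: by the definition of homology in Section~\ref{sec:homology}, this is precisely the statement that $W - \sum_{i=1}^{2g} \mu_i B_i \in \mb{F}$, where $\mb{F}$ is generated by the counterclockwise facial walks $\hat{\mc{F}^*}$ of $\pdc{G}^*$. Consequently there exist integers $(\lambda_F)_{F\in\hat{\mc{F}^*}}$ such that
\[
W \;=\; \sum_{F\in\hat{\mc{F}^*}} \lambda_F F \;+\; \sum_{i=1}^{2g} \mu_i B_i
\]
as flows (equivalently, as elements of $\mb{Z}^{E(\pdc{G}^*)}$).

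Next I would apply $\delta$ to both sides. Since $\delta(\phi) = \beta(\Out,\phi)$ and $\beta$ is bilinear, $\delta$ is linear on flows of $\pdc{G}^*$. Hence
\[
\delta(W) \;=\; \sum_{F\in\hat{\mc{F}^*}} \lambda_F\, \delta(F) \;+\; \sum_{i=1}^{2g} \mu_i\, \delta(B_i).
\]
Now I would invoke Lemma~\ref{lem:facedelta0}, which gives $\delta(F) \equiv 0 \pmod{3}$ for every $F \in \hat{\mc{F}^*}$ in a $\bmod_3$-orientation of $\pdc{G}$. Reducing the above equality modulo $3$ therefore kills the first sum and yields
\[
\delta(W) \;\equiv\; \mu_1 \delta(B_1) + \cdots + \mu_{2g} \delta(B_{2g}) \pmod{3},
\]
which is the desired identity.

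I do not expect any real obstacle: the statement is essentially a linearity argument, and the only genuinely geometric ingredient (the vanishing of $\delta$ on facial walks mod $3$) has been packaged as Lemma~\ref{lem:facedelta0}. The only mild care point is to recall that the $\lambda_F$ are not unique (one of them may be fixed arbitrarily, as noted in Section~\ref{sec:homology}), but this is harmless since every valid choice gives the same $\delta(W)$ because any two such expansions differ by a $\mb{Z}$-combination of facial walks, which contributes $0 \bmod 3$ to $\delta$ by Lemma~\ref{lem:facedelta0}.
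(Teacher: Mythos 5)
Your proof is correct and follows essentially the same route as the paper's: decompose $W$ as a $\mathbb{Z}$-combination of facial walks plus $\sum_i \mu_i B_i$, apply the linearity of $\delta$, and invoke Lemma~\ref{lem:facedelta0} to discard the facial contributions modulo $3$. The only difference is that you spell out the unwinding of the homology relation and the harmlessness of the non-uniqueness of the $\lambda_F$, which the paper leaves implicit.
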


\begin{proof} We have
  $W=\sum_{F\in\hat{\mc{F}^*}}\lambda_FF+\sum_{1\leq i\leq
    2g}\mu_iB_i$
  for some $\lambda\in\mathbb{Z}^{f}$, $\mu\in\mathbb{Z}^{2g}$.  Then by linearity of $\delta$
  and Lemma~\ref{lem:facedelta0}, the lemma follows.
\end{proof}

Lemma~\ref{lem:basedelta} can be used to simplify the condition of
Lemma~\ref{lem:charforall} and show that if $\{B_1,\ldots,B_{2g}\}$ is a
set of cycles
of $\pdc{G}^*$ that forms a homology-basis, then an
orientation of $\pdc{G}$ is a Schnyder orientation if and only if it
is a $\bmod_3$-orientation such that $\delta(B_{i})= 0 \bmod 3$, for
all $1\le i\le 2g$. Now we define a simpler function $\gamma$ that is used
to formulate a similar characterization result but with $\gamma$ (see
Theorem~\ref{th:characterizationgamma}).

Consider a (not necessarily directed) cycle $C$ of $G$ or $G^*$
together with a direction of traversal. We associate to $C$ its
corresponding cycle in $\pdc{G}$ denoted by $\pdc{C}$.  We define
$\gamma(C)$ by:
$$\gamma (C) = \#\ \text{edges of $\pdc{G}$ leaving $\pdc{C}$ on its right} -
\#\  \text{edges of $\pdc{G}$ leaving $\pdc{C}$ on its left}$$

Since it considers cycles of $\pdc{G}$ instead of walks of
$\pdc{G}^*$, it is easier to deal with parameter $\gamma$ rather than
parameter $\delta$.  However $\gamma$ does not enjoy the same property
w.r.t.~homology as $\delta$.  For homology we have to consider walks
as flows, but two walks going several time through a given vertex may
have the same characteristic flow but different $\gamma$.  This
explains why $\delta$ is defined first. Now we adapt the results for
$\gamma$.

The value of $\gamma$ is related to $\delta$ by the next lemmas. Let
$C$ be a cycle of $G$ or $G^*$ with a direction of traversal. Let
$W_L(C)$ be the closed walk of $\pdc{G}^*$ just on the left of $C$ and
going in the same direction as $C$ (i.e. $W_L(C)$ is composed of the
dual edges of the edges of $\pdc{G}$ incident to the left of
$\pdc{C}$).  Note that since the faces of $\pdc{G}^*$ have exactly one
incident vertex that is a primal-vertex, walk $W_L(C)$ is in fact a
cycle of $\pdc{G}^*$. Similarly, let $W_R(C)$ be the cycle of
$\pdc{G}^*$ just on the right of $C$.

\begin{lemma}
\label{lem:gammaequaldelta}
Consider an orientation of $\pdc{G}$ and 
a cycle $C$ of $G$,
 then $\gamma(C) = \delta (W_L(C)) + \delta (W_R(C))$.
\end{lemma}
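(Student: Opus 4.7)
The plan is to reduce both sides of the identity to explicit sums over the edges of $\pdc{G}$ incident to $\pdc{C}$ but not lying on $\pdc{C}$, and then check that these sums agree. First I would observe that, since $\pdc{C}$ alternates primal- and edge-vertices of $\pdc{G}$, the edges of $\pdc{G}$ incident to $\pdc{C}$ split naturally into three families: the edges of $\pdc{C}$ itself, the \emph{type (i)} edges joining a primal-vertex of $\pdc{C}$ to an edge-vertex off $\pdc{C}$, and the \emph{type (ii)} edges joining an edge-vertex of $\pdc{C}$ to a dual-vertex off $\pdc{C}$. The edges of $\pdc{C}$ do not contribute to $\gamma(C)$ (they do not leave $\pdc{C}$), and their duals separate a face on the left of $\pdc{C}$ from one on the right, so they belong to neither $W_L(C)$ nor $W_R(C)$; hence they can be disregarded. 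Unpacking the definitions shows that the edges of $W_L(C)$ are precisely the duals of the type (i) and type (ii) edges lying on the left of $\pdc{C}$, and similarly for $W_R(C)$.

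Next I would tabulate contributions. Denote by $a_L, b_L$ (respectively $a_R, b_R$) the numbers of type (i), type (ii) edges on the left (respectively right) of $\pdc{C}$ that are oriented \emph{away from} $\pdc{C}$, and by $a'_L, b'_L, a'_R, b'_R$ the corresponding counts for those oriented \emph{toward} $\pdc{C}$. By definition, $\gamma(C)=(a_R+b_R)-(a_L+b_L)$. For $\delta$, the key point is that an edge belongs to $\Out$ iff it is oriented from a primal- or dual-vertex to an edge-vertex; hence for type (i) edges ``being an out-edge'' coincides with ``leaving $\pdc{C}$'', whereas for type (ii) edges it coincides with ``arriving at $\pdc{C}$''. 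Using the convention that $W_L(C)$ and $W_R(C)$ are both traversed in the same direction as $C$, so that $\pdc{C}$ lies on the right of $W_L(C)$ and on the left of $W_R(C)$, a direct case analysis of how each out-edge crosses $W_L(C)$ or $W_R(C)$ yields
\begin{equation*}
\delta(W_L(C))=b'_L-a_L \qquad\text{and}\qquad \delta(W_R(C))=a_R-b'_R.
\end{equation*}

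Adding these and comparing with $\gamma(C)$, the identity reduces to $b_L+b'_L=b_R+b'_R$, i.e.\ that the total number of type (ii) edges on the left of $\pdc{C}$ equals that on the right. Here I would invoke the structural property of $\pdc{G}$ that every edge-vertex has degree exactly $4$, with two of its incident edges lying on $\pdc{C}$ (joining it to its two primal-neighbours on $\pdc{C}$) and the two remaining ones going to dual-vertices, exactly one on each side of $\pdc{C}$. Consequently $b_L+b'_L$ and $b_R+b'_R$ both equal the number of edges of $C$, and the identity $\gamma(C)=\delta(W_L(C))+\delta(W_R(C))$ follows.

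The main difficulty is not conceptual but notational: one has to keep track of the interplay between the orientation of each edge, the side of $\pdc{C}$ on which it lies, the traversal direction of $W_L(C)$ and $W_R(C)$, and the slightly counter-intuitive fact that for type (ii) edges the properties ``leaves $\pdc{C}$'' and ``is an out-edge'' are opposite. Once the sign conventions are pinned down, the argument reduces to a routine enumeration.
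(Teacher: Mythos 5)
Your proof is correct and, at its core, follows the same approach as the paper: both argue by carefully classifying the edges of $\pdc{G}$ incident to $\pdc{C}$ (primal-vertex side vs.\ edge-vertex side, left vs.\ right, in vs.\ out) and by exploiting the degree-$4$ structure of edge-vertices, which forces exactly one off-cycle edge on each side. The only difference is organizational: the paper pairs up the two off-cycle edges $e_L,e_R$ at each edge-vertex and checks per pair that they contribute identically to $\gamma(C)$ and $\delta(W_L(C))+\delta(W_R(C))$, whereas you compute explicit closed formulas for all three quantities and reduce the identity to the separate counting fact $b_L+b'_L=b_R+b'_R$; both hinge on the same key observation that ``out-edge'' means ``leaving $\pdc{C}$'' for type (i) edges but ``arriving at $\pdc{C}$'' for type (ii) edges.
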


\begin{proof}
We consider the different cases that can
  occur. An edge that is entering a primal-vertex of $\pdc{C}$, is not
  counting in either $\gamma(C),\delta (W_L(C)), \delta (W_R(C))$.  An
  edge that is leaving a primal-vertex of $\pdc{C}$ from its right
  side (resp. left side) is counting $+1$ (resp. $-1$) for $\gamma(C)$
  and $\delta (W_R(C))$ (resp.  $\delta (W_L(C))$).

  For edges incident to edge-vertices of $\pdc{C}$ both sides have to
  be considered at the same time. Let $v$ be an edge-vertex of
  $\pdc{C}$. Vertex $v$ is of degree $4$ so it has exactly two edges
  incident to $\pdc{C}$ and not on $C$. One of these edge, $e_L$, is
  on the left side of $\pdc{C}$ and dual to an edge of $W_L(C)$. The
  other edge, $e_R$, is on the right side of $\pdc{C}$ and dual to an
  edge of $W_R(C)$. If $e_L$ and $e_R$ are both incoming edges for
  $v$, then $e_R$ (resp. $e_L$) is counting $-1$ (resp. $+1$) for
  $\delta (W_R(C))$ (resp.  $\delta (W_L(C))$) and not counting for
  $\gamma(C)$. If $e_L$ and $e_R$ are both outgoing edges for $v$,
  then $e_R$ and $e_L$ are not counting for both $\delta (W_R(C))$,
  $\delta (W_L(C))$ and sums to zero for $\gamma(C)$. If $e_L$ is
  incoming and $e_R$ is outgoing for $v$, then $e_R$ (resp. $e_L$) is
  counting $0$ (resp. $+1$) for $\delta (W_R(C))$ (resp.
  $\delta (W_L(C))$), and counting $+1$ (resp. $0$) for
  $\gamma(C)$. The last case, $e_L$ is outgoing and $e_R$ is incoming,
  is symmetric and one can see that in the four cases we have that
  $e_L$ and $e_R$ count the same for $\gamma(C)$ and
  $\delta (W_L(C)) + \delta (W_R(C))$.  Thus finally
  $\gamma(C)=\delta (W_L(C)) + \delta (W_R(C))$.
\end{proof}

\begin{lemma}
\label{lem:deltagammaequ0mod3}
In a $\bmod_3$-orientation of $G$, a cycle $C$ of $G$ satisfies
$$\delta (W_L(C))=0 \bmod 3 \ \ \text{and}\ \ \delta (W_R(C))=0 \bmod 3 \iff
\gamma(C) = 0 \bmod 3
$$ 
\end{lemma}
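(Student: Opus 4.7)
The forward direction is immediate from Lemma~\ref{lem:gammaequaldelta}: if both $\delta(W_L(C))$ and $\delta(W_R(C))$ are congruent to~$0$ modulo~$3$, then so is their sum $\gamma(C)$.

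For the reverse direction, my plan is to first establish the auxiliary congruence $\delta(W_L(C))\equiv\delta(W_R(C))\pmod 3$. Combining this with Lemma~\ref{lem:gammaequaldelta} and the hypothesis $\gamma(C)\equiv 0\pmod 3$ will give $2\,\delta(W_L(C))\equiv 0\pmod 3$, and since $2$ is invertible modulo~$3$ this forces $\delta(W_L(C))\equiv 0\pmod 3$, and then symmetrically $\delta(W_R(C))\equiv 0\pmod 3$.

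To obtain the auxiliary congruence I would use a homological decomposition based on Lemma~\ref{lem:facedelta0}. Geometrically, $W_L(C)$ and $W_R(C)$ are two closed walks of $\pdc{G}^*$ running along $\pdc{C}$ on its two sides, both oriented in the direction of $C$. Together they bound a tubular neighbourhood (an annulus) of $\pdc{C}$ on the surface. The vertices of $\pdc{G}$ lying strictly inside this tube are exactly the vertices of $\pdc{C}$, namely the primal-vertices of $C$ and the edge-vertices associated to the edges of $C$; no dual-vertex of $\pdc{G}$ can be trapped there. Since faces of $\pdc{G}^*$ correspond to vertices of $\pdc{G}$, the faces of $\pdc{G}^*$ contained in the tube are precisely the facial walks $F_v$ around these vertices $v\in V(\pdc{C})$. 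Hence, with appropriate signs, the flow $W_L(C)-W_R(C)$ decomposes in the facial basis as $\sum_{v\in V(\pdc{C})}F_v$, i.e.\ it is $0$-homologous. Applying Lemma~\ref{lem:facedelta0} to each $F_v$ and using the linearity of $\delta$ then yields $\delta(W_L(C))-\delta(W_R(C))\equiv 0\pmod 3$.

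The main obstacle is justifying the decomposition $W_L(C)-W_R(C)=\sum_{v\in V(\pdc{C})}F_v$ with the correct signs. This amounts to checking that, when $W_L(C)$ and $W_R(C)$—both oriented in the sense of $C$—are viewed as the two boundary components of the annular tube, they carry opposite induced boundary orientations, so that $W_L(C)-W_R(C)$ is genuinely the boundary $1$-chain of the $2$-chain consisting of the faces of $\pdc{G}^*$ inside the tube. This check has to be carried out uniformly whether $C$ is contractible or not, but this causes no issue since a simple closed curve on an orientable surface always admits an annular tubular neighbourhood, and the cellular homology argument applies identically in both situations.
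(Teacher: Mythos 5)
Your argument is correct, and the reverse direction takes a genuinely different route from the paper's. The paper establishes the same auxiliary congruence $\delta(W_L(C))\equiv\delta(W_R(C))\pmod 3$, but via a direct degree count on $\pdc{C}$: setting $k=|C|$, it counts the $2k$ edges of $\pdc{C}$ by grouping them according to the vertex from which they are outgoing, uses the $\bmod_3$-constraints on outdegrees of primal- and edge-vertices to express these counts modulo~$3$ in terms of $x_L,x_R,y_L,y_R$, and extracts $(y_L{+}y_R)-(x_L{+}x_R)\equiv 0\pmod 3$. You instead observe that $W_L(C)-W_R(C)$ is $0$-homologous — it bounds the annular strip of faces of $\pdc{G}^*$ dual to the vertices of $\pdc{C}$ — and then feed Lemma~\ref{lem:facedelta0} into the linearity of $\delta$. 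Your route is more conceptual and reuses machinery the paper already built (it is essentially the two-element-chain special case of Lemma~\ref{lem:basedelta}, which the paper could have invoked directly as well); the paper's route is more elementary and self-contained. The one point you flag as needing care — that the two boundary components carry opposite induced orientations, so that it really is $W_L(C)-W_R(C)$ and not $W_L(C)+W_R(C)$ that is $0$-homologous — is exactly the crux: if the sign came out wrong the argument would collapse into the tautology $\gamma(C)\equiv 0$ and prove nothing. Your check is right: traversing $W_L(C)$ in the direction of $C$ puts the annulus on its right, while traversing $W_R(C)$ in the direction of $C$ puts the annulus on its left, so the boundary $1$-chain of the (coherently oriented) annulus is $W_R(C)-W_L(C)=\sum_{v\in V(\pdc{C})}\phi(F_v)$, which is what you need.
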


\begin{proof}
($\Longrightarrow$) Clear by Lemma~\ref{lem:gammaequaldelta}.

($\Longleftarrow$) Suppose that $\gamma(C)=0 \bmod 3$.  Let $x_L$
(resp. $y_L$) be the number of edges of $\pdc{G}$ that are dual to edges
of $W_L(C)$, that are outgoing for a primal-vertex of $\pdc{C}$
(resp. incoming for an edge-vertex of $\pdc{C}$). Similarly, let $x_R$
(resp. $y_R$) be the number of edges of $\pdc{G}$ that are dual to edges
of $W_R(C)$, that are outgoing for a primal-vertex of $\pdc{C}$
(resp. incoming for an edge-vertex of $\pdc{C}$).  So
$\delta (W_L(C))=y_L-x_L$ and $\delta (W_R(C))=x_R-y_R$. So by
Lemma~\ref{lem:gammaequaldelta},
$ \gamma(C)= \delta (W_L(C))+\delta (W_R(C))=(y_L+x_R)-(x_L+y_R) =0
\bmod 3$.

Let $k$ be the number of vertices of $C$. So $\pdc{C}$ has $k$
primal-vertices, $k$ edge-vertices and $2k$ edges.  Edge-vertices have
degree $4$, one incident edge on each side of  $\pdc{C}$ and outdegree
$1\bmod 3$. So their total number of outgoing edges not in $\pdc{C}$ is
$2k-(y_L+y_R)$ and their total number of outgoing edges in $\pdc{C}$ is
$k-2k+(y_L+y_R) \bmod 3$.  Primal-vertices have outdegree $(0\bmod 3)$
so their total number of outgoing edges on $\pdc{C}$ is
$-(x_L+x_R) \bmod 3$. So in total $2k=(y_L+y_R)-k-(x_L+x_R) \bmod 3$.
So $(y_L+y_R)-(x_L+x_R)=0 \bmod 3$. By combining this with plus
(resp. minus) $(y_L+x_R)-(x_L+y_R) =0 \bmod 3$, one obtains that
$2\delta (W_L(C))=2(y_L-x_L)=0 \bmod 3$ (resp.
$2\delta (W_R(C))=2(x_R-y_R)=0 \bmod 3$). Since $\delta (W_L(C))$ and
$\delta (W_R(C))$ are integer one obtains $\delta (W_L(C))=0 \bmod 3$
and $\delta (W_R(C))=0 \bmod 3$.
\end{proof}

Finally we have the following characterization theorem concerning
Schnyder orientations:

\begin{theorem}
\label{th:characterizationgamma}
Consider a map $G$ on an orientable surface of genus $g$. Let
$\{B_1,\ldots,B_{2g}\}$ be a set of cycles of $G$ that forms a
homology-basis.  An orientation of $\pdc{G}$ is a Schnyder orientation
if and only if it is a $\bmod_3$-orientation such that
$\gamma(B_{i})= 0 \bmod 3$, for all $1\le i\le 2g$.
\end{theorem}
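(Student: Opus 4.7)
The plan is to combine the global characterization of Lemma~\ref{lem:charforall} (an orientation of $\pdc{G}$ is Schnyder iff every closed walk $W$ of $\pdc{G}^*$ satisfies $\delta(W)=0\bmod 3$) with the homology-basis reduction of Lemma~\ref{lem:basedelta}, and to use Lemmas~\ref{lem:gammaequaldelta} and \ref{lem:deltagammaequ0mod3} to trade the values of $\delta$ on the two parallel walks $W_L(B_i),W_R(B_i)$ of $\pdc{G}^*$ for the single value $\gamma(B_i)$ on the cycle $B_i$ of $G$.

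The forward direction is then quick. A Schnyder orientation is in particular $\bmod_3$ (as noted after the definition of $\bmod_3$-orientation), and by Lemma~\ref{lem:charforall} it satisfies $\delta(W)=0\bmod 3$ for every closed walk $W$ of $\pdc{G}^*$; specializing to $W_L(B_i)$ and $W_R(B_i)$ and applying Lemma~\ref{lem:gammaequaldelta} gives $\gamma(B_i)=\delta(W_L(B_i))+\delta(W_R(B_i))=0\bmod 3$ for every $i$.

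For the converse, start from a $\bmod_3$-orientation with $\gamma(B_i)=0\bmod 3$ for each $i$. Lemma~\ref{lem:deltagammaequ0mod3} converts this into $\delta(W_L(B_i))=0\bmod 3$ for each $i$. The crux is to show that $\{W_L(B_1),\ldots,W_L(B_{2g})\}$ is a homology basis of $\pdc{G}^*$: once granted, Lemma~\ref{lem:basedelta} forces $\delta(W)=0\bmod 3$ for every closed walk $W$ of $\pdc{G}^*$, and Lemma~\ref{lem:charforall} then delivers the desired Schnyder property.

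The main obstacle is thus establishing that the $2g$ walks $W_L(B_i)$ form a homology basis of $\pdc{G}^*$. Geometrically this is natural: $W_L(B_i)$ runs along the immediate left side of $B_i$ on the genus-$g$ surface, and the two curves are freely homotopic, hence homologous in the surface; since $H(G)$ and $H(\pdc{G}^*)$ are both canonically isomorphic to the surface homology $\mb{Z}^{2g}$, a push-off of a basis of $H(G)$ is a basis of $H(\pdc{G}^*)$. To make this rigorous inside the spanning-tree formalism of Section~\ref{sec:homology}, I would realize $\{B_i\}$ from a pair $(T,T^*)$ of spanning trees of $G$ and $G^*$ with dually disjoint edge sets, and then check that a compatible pair of spanning trees in $\pdc{G}$ and $\pdc{G}^*$ gives $2g$ fundamental cycles coinciding with the $W_L(B_i)$ modulo facial walks of $\pdc{G}^*$, thereby yielding the required basis property.
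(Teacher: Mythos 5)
Your proposal is correct and follows essentially the same route as the paper's proof: apply Lemma~\ref{lem:charforall} for the global characterization, trade $\gamma(B_i)$ for $\delta(W_L(B_i)),\delta(W_R(B_i))$ via Lemmas~\ref{lem:gammaequaldelta} and~\ref{lem:deltagammaequ0mod3}, and use Lemma~\ref{lem:basedelta} after observing that $\{W_L(B_1),\ldots,W_L(B_{2g})\}$ is a homology basis of $\pdc{G}^*$. One small remark: the paper asserts this last homology-basis fact without comment, whereas you rightly flag it as the step needing justification and sketch a push-off/spanning-tree argument; your forward direction also takes the slightly more direct path through Lemma~\ref{lem:gammaequaldelta} rather than Lemma~\ref{lem:deltagammaequ0mod3}, which is equally valid.
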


\begin{proof}
  $(\Longrightarrow)$ Consider an {\sc edge} angle
  labeling $\ell$ of $G$ and the corresponding Schnyder orientation
  (see Figure~\ref{fig:pdc}). Type~0 edges correspond to edge-vertices
  of outdegree $4$, while type~1 and~2 edges correspond to
  edge-vertices of outdegree $1$. Thus $d^+(v)= 1\bmod 3$ if $v$ is an
  edge-vertex.
By Lemma~\ref{lem:EDGElabeling}, the labeling
  is {\sc vertex} and {\sc face}. Thus $d^+(v) = 0\bmod 3$ if $v$ is a
  primal- or dual-vertex. So the orientation is a
  $\bmod_3$-orientation. By Lemma~\ref{lem:charforall}, we have
  $\delta (W) = 0 \bmod 3$ for any closed walk $W$ of $\pdc{G}^*$. So
  we have $\delta(W_L(B_1)), \ldots, \delta(W_L(B_{2g}))$,
  $\delta(W_R(B_1)), \ldots, \delta(W_R(B_{2g}))$ are all equal to
  $0 \bmod 3$.  Thus, by Lemma~\ref{lem:deltagammaequ0mod3}, we have
  $\gamma(B_{i}) = 0 \bmod 3$, for all $1\le i\le 2g$.

  $(\Longleftarrow)$ Consider a $\bmod_3$-orientation of $G$ such that
  $\gamma(B_{i}) = 0 \bmod 3$, for all $1\le i\le 2g$.  By
  Lemma~\ref{lem:deltagammaequ0mod3}, we have 
  $\delta(W_L(B_{i}))= 0 \bmod 3$ for all $1\le i\le 2g$. Moreover
  $\{W_L(B_1),\ldots,W_L(B_{2g})\})$ forms a homology-basis.  So
  by Lemma~\ref{lem:basedelta}, $\delta (W) = 0 \bmod 3$ for any
  closed walk $W$ of $\pdc{G}^*$. So the orientation is a Schnyder
  orientation by Lemma~\ref{lem:charforall}.
\end{proof}

The condition of Theorem~\ref{th:characterizationgamma} is easy to
check: choose $2g$ cycles that form a homology-basis and check
whether $\gamma$ equals $0\bmod 3$ for each of them.

When restricted to triangulations and to edges of type $1$ only, the
definition of $\gamma$ can be simplified. Consider a triangulation $G$
on an orientable surface of genus $g$ and an orientation of the edges
of $G$.
Figure~\ref{fig:trimap} shows how
 to transform the orientation of $G$ into an orientation of
 $\pdc{G}$. Note that all the edge-vertices have outdegree exactly
 $1$. Furthermore, all the dual-vertices only have outgoing edges and
 since we are considering triangulations they have outdegree exactly
 $3$.

 \begin{figure}[!h]
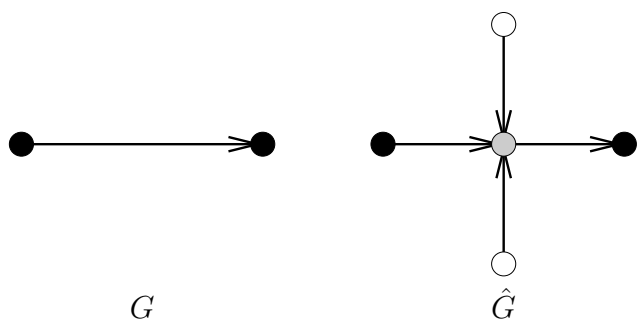

 \center
 \begin{tabular}{cc}
 \includegraphics[scale=0.5]{trimap-0} \ \ \  &  \ \ \
 \includegraphics[scale=0.5]{trimap-1} \\
 $G$ \ \ \  &  \ \ \ $\pdc{G}$ \\
 \end{tabular}
 \caption{How to transform an orientation of a triangulation $G$ into
   an orientation of $\pdc{G}$.}
 \label{fig:trimap}
 \end{figure}

Then the definition of $\gamma$ can be
simplified by the following:
$$\gamma (C) = \#\ \text{edges of ${G}$ leaving ${C}$ on its right} -
\#\  \text{edges of ${G}$ leaving ${C}$ on its left}$$

Note that comparing to the general definition of $\gamma$, just the
symbol \ $\pdc{}$\ \ on $\pdc{C}$ have been removed.

The orientation of the toroidal triangulation on the left of
Figure~\ref{fig:orientation} is an example of a 3-orientation of a
toroidal triangulation where some non-contractible cycles have value
$\gamma$ not equal to $0 \bmod 3$.  The value of $\gamma$ for the
three loops is $2, 0$ and $-2$. This explains why this orientation does
not correspond to a Schnyder wood. On the contrary, on the right of
the figure, the three loops have $\gamma$ equal to $0$ and we have a
Schnyder wood.

\chapter{Structure of Schnyder orientations}
\label{sec:structure}
\section{Transformations between Schnyder orientations}
\label{sec:transformations}

We investigate the structure of the set of Schnyder orientations of a
given graph. For that purpose we need some definitions that are given
on a general map $G$ and then applied to $\pdc{G}$.
 
Consider a map $G$ on an orientable surface of genus $g$.  Given two
orientations $D$ and $D'$ of $G$, let $D\setminus D'$ denote the
subgraph of $D$ induced by the edges that are not oriented as in $D'$.

An oriented subgraph $T$ of $G$ is \emph{partitionable} if its edge
set can be partitioned into three sets $T_0$, $T_1$, $T_2$ such that
all the $T_i$ are pairwise homologous, i.e. $T_i-T_j\in\mb{F}$ for
$i,j\in\{0,1,2\}$.  An oriented subgraph $T$ of $G$ is called a
\emph{topological Tutte-orientation} if $\beta(T,W)=0\bmod 3$ for
every closed walk $W$ in $G^*$ (more intuitively, the number of edges
crossing $W$ from left to right minus the number of those crossing $W$
from right to left is divisible by three).

The name ``topological Tutte-orientation'' comes from the fact that an
oriented graph $T$ is called a \emph{Tutte-orientation} if the
difference of outdegree and indegree is divisible by three,
i.e. $d^+(v)-d^-(v)= 0\bmod 3$, for every vertex $v$.  So a
topological Tutte-orientation is a Tutte orientation, since the latter
requires the condition of the topological Tutte orientation only for
the walks $W$ of $G^*$ going around a vertex $v$ of $G$.

The notions of partitionable and topological
Tutte-orientation are  equivalent:

\begin{lemma}\label{lm:topologicalTutte}
  An oriented subgraph of $G$ is partitionable if and only if it is a
  topological Tutte-orientation.
\end{lemma}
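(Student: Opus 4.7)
The plan is to route both directions through the homological criterion of Lemma~\ref{lm:homologous}, which states that $T_i - T_j \in \mathbb{F}$ if and only if $\beta(T_i,W) = \beta(T_j,W)$ for every closed walk $W$ of $G^*$.

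For the direct implication, suppose $T = T_0 + T_1 + T_2$ with all pairs $T_i, T_j$ homologous. Lemma~\ref{lm:homologous} would then give $\beta(T_0,W) = \beta(T_1,W) = \beta(T_2,W)$ for every closed walk $W$ of $G^*$, and the bilinearity of $\beta$ immediately produces $\beta(T,W) = 3\beta(T_0,W) \equiv 0 \bmod 3$.

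For the converse, my plan is to mimic the face-labeling trick from the proof of Lemma~\ref{lem:charforall}. I would first reorient the reference so that each edge of $T$ has $T_e = +1$, then pick a base face $F_0$ of $G$ and define $\ell(F) = \beta(T,P) \bmod 3$ for any walk $P$ of $G^*$ from $F_0$ to $F$. The topological Tutte hypothesis applied to closed walks of $G^*$ guarantees that $\ell$ is well-defined. Reading the definition on a single dual edge, $\ell$ jumps by $+1 \bmod 3$ across each edge of $T$ when crossed from its right face to its left face, and is constant across non-$T$ edges. I would then define the partition $T_i = \{e \in T : \ell(F_R(e)) = i\}$, where $F_R(e)$ is the right face of $e$ in the $T$-orientation.

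What is left is to prove $T_i - T_j \in \mathbb{F}$ for all $i,j$, which again by Lemma~\ref{lm:homologous} reduces to $\beta(T_i,W) = \beta(T_j,W)$ for every closed walk $W$ of $G^*$. This is the expected main obstacle. The key observation is that by construction $\beta(T_i,W)$ counts exactly the signed number of steps of $W$ from a face with label $i$ to a face with label $i+1$. Viewing $W$ as a closed cyclic sequence of labels in $\mathbb{Z}_3$ and applying flow conservation at each label class (each class is entered as often as it is left over the cycle) forces the three signed counts $c_{i\to i+1} - c_{i+1\to i}$, $i \in \{0,1,2\}$, to coincide. This yields $\beta(T_0,W) = \beta(T_1,W) = \beta(T_2,W)$ and establishes the converse.
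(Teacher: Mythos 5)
Your proof is correct, and your overall strategy is the same as the paper's: the forward direction is identical (combine Lemma~\ref{lm:homologous} with linearity of $\beta$), and the converse uses exactly the same core idea of labeling faces by $\beta(T,\cdot)\bmod 3$ and partitioning the edges of $T$ by the labels they separate. Labeling faces of $G$ rather than $T$-faces is immaterial, since $\beta(T,\cdot)$ is constant across the $G$-faces that make up a single $T$-face, and your index shift of the $T_i$ is cosmetic.

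The one place where you genuinely diverge is the last step, proving the $T_i$ are pairwise homologous. The paper does this with a single explicit identity, $\phi(T_{i+1})-\phi(T_{i-1})=\sum_{F\in \mc{F}_i}\phi(F)$: summing the counterclockwise facial walks inside the $T$-face(s) labeled $i$ cancels all interior edges and leaves precisely the boundary, split by orientation into $T_{i+1}$ and $-T_{i-1}$. This directly produces an element of $\mb{F}$, so no further machinery is needed. You instead reduce again to Lemma~\ref{lm:homologous} and establish $\beta(T_i,W)=\beta(T_j,W)$ via flow conservation of the cyclic label sequence along $W$. Your flow-conservation argument is correct --- each label class is entered as often as exited along a closed walk, which forces the three signed crossing counts to coincide --- but it is somewhat less economical than the paper's direct facial-sum identity, which also has the advantage of exhibiting the boundary explicitly. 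Both routes work; the paper's closes the loop one step earlier.
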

\begin{proof}
  $(\Longrightarrow)$ If $T$ is partitionable, then by definition it
  is the disjoint union of three homologous edge sets $T_0$, $T_1$,
  and $T_2$. Hence by Lemma~\ref{lm:homologous}, $\beta(T_0,W) =
  \beta(T_1,W) = \beta(T_2,W)$ for any closed walk $W$ of $G^*$.  By
  linearity of $\beta$ this implies that $\beta(T,W)= 0 \mod 3$ for
  any closed walk $W$ of $G^*$. So $T$ is a topological
  Tutte-orientation.

  $(\Longleftarrow)$ Let $T$ be a topological Tutte-orientation of
  $G$, i.e. $\beta(T,W)= 0 \mod 3$ for any closed walk $W$ of $G^*$.
  In the following, $T$-faces are the faces of $T$ considered as
  an embedded graph. Note that $T$-faces are not necessarily disks.
  Let us introduce a $\{0,1,2\}$-labeling of the $T$-faces. Label an
  arbitrary $T$-face $F_0$ by $0$. For any $T$-face $F$, find a path
  $P$ of $G^*$ from $F_0$ to $F$. Label $F$ with $\beta(T,P) \bmod
  3$. Note that the label of $F$ is independent from the choice of $P$
  by our assumption on closed walks.  For $0\leq i \leq 2$, let $T_i$
  be the set of edges of $T$ which two incident $T$-faces are labeled
  $i-1$ and $i+1$.  Note that an edge of $T_i$ has label $i-1$ on its
  left and label $i+1$ on its right.  The sets $T_i$ form a partition
  of the edges of $T$.  Let $\mc{F}_i$ be the counterclockwise facial
  walks of $G$ that are in a $T$-face labeled $i$.  We have
  $\phi(T_{i+1})-\phi(T_{i-1})=\sum_{F\in \mc{F}_i}\phi(F)$, so the
  $T_i$ are homologous.
\end{proof}

Let us refine the notion of partitionable.  Denote by $\mathcal{E}$
the set of \emph{oriented Eulerian subgraphs} of ${G}$ (i.e. the
oriented subgraphs of ${G}$ where each vertex has the same in- and
out-degree).  Consider a partitionable oriented subgraph $T$ of $G$,
with edge set partition $T_0$, $T_1$, $T_2$ having the same homology.
We say that $T$ is \emph{Eulerian-partitionable} if $T_i\in\mc{E}$ for
all $0\leq i \leq 2$. Note that if $T$ is Eulerian-partitionable then
it is Eulerian.  Note that an oriented subgraph $T$ of $G$ that is
$0$-homologous is also Eulerian and thus Eulerian-partitionable (with
the partition $T,\emptyset,\emptyset$).  Thus $0$-homologous is a
refinement of Eulerian-partitionable.

We now investigate the structure of Schnyder orientations.  For that
purpose, consider a map $G$ on an orientable surface of genus $g$ and
apply the above definitions and results to orientations of $\pdc{G}$.

Let $D,D'$ be two orientations of $\pdc{G}$ such that $D$ is a
Schnyder orientation and $T=D\setminus D'$. Let $\Out=\{(u,v)\in
E(D)\mid v\text{ is an edge-vertex}\}$. Similarly, let
$\Out'=\{(u,v)\in E(D')\mid v\text{ is an edge-vertex}\}$. Note that
an edge of $T$ is either in $\Out$ or in $\Out'$, so
$\phi(T)=\phi(\Out)- \phi(\Out')$. The three following lemmas give
necessary and sufficient conditions on $T$ for $D'$ being a Schnyder
orientation.

\begin{lemma}\label{thm:EDGEtransform}
  $D'$ is a Schnyder orientation if and only if $T$ is
  partitionable.
\end{lemma}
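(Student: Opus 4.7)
The plan is to chain together the two characterization results already proved in the excerpt: Lemma~\ref{lem:charforall}, which says that an orientation of $\pdc{G}$ is Schnyder iff $\delta(W)=0 \bmod 3$ for every closed walk $W$ of $\pdc{G}^*$, and Lemma~\ref{lm:topologicalTutte}, which says that an oriented subgraph of a map is partitionable iff it is a topological Tutte-orientation, i.e.\ $\beta(\cdot,W)=0\bmod 3$ for every closed walk $W$ of the dual. The bridge between the two is the identity $\phi(T)=\phi(\Out)-\phi(\Out')$ observed just before the statement.

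First I would set up notation: write $\Out$ and $\Out'$ for the out-edge sets of $D$ and $D'$ respectively, and let $\delta(W)=\beta(\Out,W)$ and $\delta'(W)=\beta(\Out',W)$ for any walk $W$ of $\pdc{G}^*$. Applying Lemma~\ref{lem:charforall} to $D'$, we get that $D'$ is a Schnyder orientation if and only if $\delta'(W)=0\bmod 3$ for every closed walk $W$ of $\pdc{G}^*$. Using $\phi(T)=\phi(\Out)-\phi(\Out')$ and the bilinearity of $\beta$, one has $\beta(T,W)=\delta(W)-\delta'(W)$. Since $D$ is Schnyder, $\delta(W)=0\bmod 3$ for every closed walk $W$ of $\pdc{G}^*$, so the condition $\delta'(W)=0\bmod 3$ is equivalent to $\beta(T,W)=0\bmod 3$.

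It then remains to recognize the latter as the definition of a topological Tutte-orientation for $T$ viewed as an oriented subgraph of $\pdc{G}$. Applying Lemma~\ref{lm:topologicalTutte} to $\pdc{G}$, this is equivalent to $T$ being partitionable, which finishes both directions at once.

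I do not expect any real obstacle: the whole argument is a bookkeeping exercise combining the two prior lemmas via the additive identity $\phi(T)=\phi(\Out)-\phi(\Out')$. The only point that warrants a sentence of care is checking that this identity is well-defined, namely that an edge $e$ of $\pdc{G}$ either lies in exactly one of $\Out$ or $\Out'$ (when $D,D'$ disagree on $e$, so $e\in T$) or contributes $0$ to the difference (when $D,D'$ agree on $e$, so $e\notin T$); in both cases the reference-orientation signs match $\phi(T)_e$, which is exactly what makes the bilinearity step go through cleanly.
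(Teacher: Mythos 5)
Your proof is correct and follows exactly the paper's own route: apply Lemma~\ref{lem:charforall} to $D'$, pass from $\beta(\Out',W)$ to $\beta(T,W)$ via $\phi(T)=\phi(\Out)-\phi(\Out')$ and the fact that $D$ is already Schnyder, then invoke Lemma~\ref{lm:topologicalTutte}. The closing remark about sign-consistency of the identity $\phi(T)=\phi(\Out)-\phi(\Out')$ is a sensible sanity check but is already implicit in the definitions of $T$ and $\Out$, so nothing is missing.
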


\begin{proof}
  Let $D'$ is a Schnyder orientation.  By Lemma~\ref{lem:charforall},
  this is equivalent to the fact that for any closed walk $W$ of
  $\pdc{G}^*$, we have $\beta (\Out',W) = 0 \bmod 3$. Since
  $\beta(\Out,W)= 0 \bmod 3$, this is equivalent to the fact that for
  any closed walk $W$ of $\pdc{G}^*$, we have
  $\beta (T,W) = 0 \bmod 3$. Finally, by
  Lemma~\ref{lm:topologicalTutte} this is equivalent to $T$ being
  partitionable.
\end{proof}

\begin{lemma}\label{th:EDGEeulerian}
  $D'$ is a Schnyder orientation having the same outdegrees as $D$ if
  and only if $T$ is Eulerian-partitionable.
\end{lemma}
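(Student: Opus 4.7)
The plan is to exploit Lemma~\ref{thm:EDGEtransform} to reduce the statement to a purely combinatorial claim about $T$, and then settle that claim by observing that pairwise homologous flows have identical vertex-boundaries.

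First I would translate the extra hypothesis ``same outdegrees'' into a property of $T$. Going from $D$ to $D'$ flips exactly the edges of $T$ (which we orient as in $D$), so a direct count at each vertex $v$ of $\pdc{G}$ gives
\[
 d^+_{D'}(v)-d^+_D(v)= d^-_T(v)-d^+_T(v).
\]
Hence $D$ and $D'$ have the same outdegree at every vertex if and only if $T$ is Eulerian. Combined with Lemma~\ref{thm:EDGEtransform}, which says $D'$ is a Schnyder orientation iff $T$ is partitionable, the lemma reduces to the following purely combinatorial equivalence: \emph{$T$ is partitionable and Eulerian if and only if $T$ is Eulerian-partitionable}.

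The direction $(\Leftarrow)$ is immediate from the definitions: Eulerian-partitionable implies partitionable, and the union $T=T_0\cup T_1\cup T_2$ of three Eulerian edge sets is itself Eulerian.

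For $(\Rightarrow)$ the key observation is that homologous flows have the same vertex-boundary. Let $T_0,T_1,T_2$ be any partition of $T$ witnessing partitionability. Since $\phi(T_i)-\phi(T_j)\in\mb{F}$, it is a $\mb{Z}$-linear combination of facial walks, and every facial walk is a closed walk and hence satisfies $d^+(v)=d^-(v)$ at every vertex. Consequently the quantity $d^+_{T_i}(v)-d^-_{T_i}(v)$ is the same for $i=0,1,2$ at every vertex $v$. Summing over $i$ and using that $T$ is Eulerian gives
\[
 3\bigl(d^+_{T_0}(v)-d^-_{T_0}(v)\bigr)=\sum_{i=0}^{2}\bigl(d^+_{T_i}(v)-d^-_{T_i}(v)\bigr)=d^+_T(v)-d^-_T(v)=0,
\]
so each of the three equal quantities vanishes and each $T_i$ is Eulerian, giving Eulerian-partitionability. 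The only real content is recognizing that the homology equivalence on $\mb{Z}^E$ preserves vertex-boundaries; once this is noted, the rest is a one-line degree count, and no serious obstacle arises.
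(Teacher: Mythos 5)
Your proof is correct and follows essentially the same route as the paper's: reduce via Lemma~\ref{thm:EDGEtransform} to showing that a partitionable Eulerian $T$ has all three parts Eulerian, observe that the parts have equal in/out discrepancy at every vertex because they are homologous, and conclude by summing and dividing by three. The only small difference is that where the paper invokes Lemma~\ref{lm:homologous} and the identity $d_H^+(v)-d_H^-(v)=\beta(H,W_v)$ for the dual walk $W_v$ around $v$, you instead argue directly that facial walks are closed and hence have zero vertex-boundary, so homology preserves vertex-boundaries; this is a marginally more self-contained phrasing of the same fact, not a different proof.
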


\begin{proof}
  $(\Longrightarrow)$ Suppose $D'$ is a Schnyder orientation having
  the same outdegrees as $D$.  Lemma~\ref{thm:EDGEtransform} implies
  that $T$ is partitionable into $T_0$, $T_1$, $T_2$ having the same
  homology. By Lemma~\ref{lm:homologous}, for each closed walk $W$ of
  $\pdc{G}^*$, we have $\beta(T_0,W)=\beta(T_1,W)=\beta(T_2,W)$.
  Since $D,D'$ have the same outdegrees, we have that $T$ is
  Eulerian. Consider a vertex $v$ of $\pdc{G}$ and a walk $W_v$ of
  $\pdc{G}^*$ going counterclockwise around $v$. For any oriented
  subgraph $H$ of $\pdc{G}^*$, we have
  $d_H^+(v)-d_H^-(v)=\beta(H,W_v)$, where $d_H^+(v)$ and $d_H^-(v)$
  denote the outdegree and indegree of $v$ restricted to $H$,
  respectively. Since $T$ is Eulerian, we have $\beta(T,W_v)=0$. Since
  $\beta(T_0,W_v)=\beta(T_1,W_v)=\beta(T_2,W_v)$ and
  $\sum \beta(T_i,W_v)=\beta(T,W_v)=0$, we obtain that
  $\beta(T_0,W_v)=\beta(T_1,W_v)=\beta(T_2,W_v)=0$. So each $T_i$ is
  Eulerian.

  $(\Longleftarrow)$ Suppose $T$ is Eulerian-partitionable.  Then
  Lemma~\ref{thm:EDGEtransform} implies that $D'$ is a Schnyder
  orientation. Since $T$ is Eulerian, the two orientations $D,D'$ have
  the same outdegrees.
\end{proof}

Consider $\{B_1,\ldots,B_{2g}\}$ a set of cycles of ${G}$
that forms a homology-basis. For $\Gamma \in \mathbb{Z}^{2g}$,
we say that an orientation of $\pdc{G}$ is of \emph{type} $\Gamma$ if
$\gamma(B_{i})=\Gamma_i$ for all $1\le i\le 2g$.

\begin{lemma}\label{th:EDGE0}
  $D'$ is a Schnyder orientation having the same outdegrees and the
  same type as $D$ (for the considered basis) if and only if $T$ is
  $0$-homologous (i.e. $D,D'$ are homologous).
\end{lemma}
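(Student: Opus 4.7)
The plan is to use the identity $\phi(T) = \phi(\Out) - \phi(\Out')$ stated in the preamble to the three lemmas, which by bilinearity of $\beta$ gives $\beta(T, W) = \delta_D(W) - \delta_{D'}(W)$ for every closed walk $W$ of $\pdc{G}^*$. By Lemma~\ref{lm:homologous} applied to $\pdc{G}$, the flow $T$ is $0$-homologous if and only if $\beta(T, W) = 0$ for every such $W$. Thus both implications reduce to studying $\beta(T, \cdot)$ on a suitable generating set of closed walks of $\pdc{G}^*$.

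For the \emph{if} direction, suppose $T$ is $0$-homologous. Then $(T, \emptyset, \emptyset)$ is a trivial Eulerian-partition ($T$ lies in $\mb{F}$, and any sum of facial walks is Eulerian), so Lemma~\ref{th:EDGEeulerian} immediately yields that $D'$ is a Schnyder orientation with the same outdegrees as $D$. For the equality of types, Lemma~\ref{lem:gammaequaldelta} gives
\[
\gamma_D(B_i) - \gamma_{D'}(B_i) = \beta(T, W_L(B_i)) + \beta(T, W_R(B_i)),
\]
and both terms vanish by Lemma~\ref{lm:homologous} since $T \in \mb{F}$.

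For the \emph{only if} direction, Lemma~\ref{th:EDGEeulerian} first tells us that $T$ is Eulerian-partitionable and in particular Eulerian, which gives $\beta(T, F^*_v) = d_T^+(v) - d_T^-(v) = 0$ for every facial walk $F^*_v$ of $\pdc{G}^*$ around a vertex $v$ of $\pdc{G}$. By bilinearity of $\beta$, to conclude that $\beta(T, W) = 0$ for every closed walk $W$ of $\pdc{G}^*$ it therefore suffices to verify vanishing on the homology-basis $\{W_L(B_1), \ldots, W_L(B_{2g})\}$ of $\pdc{G}^*$ (identified as a basis in the proof of Theorem~\ref{th:characterizationgamma}). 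The same-type hypothesis gives $\beta(T, W_L(B_i)) + \beta(T, W_R(B_i)) = 0$, and I intend to pair this with the key relation $\beta(T, W_L(B_i)) = \beta(T, W_R(B_i))$, which then forces $\beta(T, W_L(B_i)) = 0$ in $\mb{Z}$.

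The step I expect to be the main obstacle is justifying that $W_L(B_i)$ and $W_R(B_i)$ are homologous in $\pdc{G}^*$, so that the Eulerianness of $T$ can be used (via the argument behind Lemma~\ref{lm:homologous}) to equate the two $\beta$ values. The intuition is that the annular tubular neighborhood of $\pdc{B_i}$ on the surface is precisely the union of the faces of $\pdc{G}^*$ dual to the primal- and edge-vertices traversed by $\pdc{B_i}$; summing the counterclockwise facial walks of these faces, the internal edges of $\pdc{G}^*$ lying inside the annulus cancel in pairs, leaving exactly $W_L(B_i) - W_R(B_i)$ on the boundary. Getting the signs and orientations consistent here is the main technical point.
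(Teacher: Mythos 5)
Your proof is correct and follows essentially the same route as the paper's: both directions combine the identity $\gamma_D(B_i) - \gamma_{D'}(B_i) = \beta(T, W_L(B_i)) + \beta(T, W_R(B_i))$ from Lemma~\ref{lem:gammaequaldelta} with the Eulerian relation $\beta(T, W_L(B_i)) = \beta(T, W_R(B_i))$ and then invoke Lemma~\ref{lm:homologous}. The facial-walk decomposition you flag as the main technical point is precisely the paper's observation that the Eulerian $T$ enters and leaves the annular region $R_i$ between $W_L(B_i)$ and $W_R(B_i)$ equally often, and your sign/orientation analysis (those faces of $\pdc{G}^*$ tile $R_i$ and their boundary is $W_L(B_i)-W_R(B_i)$) does go through.
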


\begin{proof}
  $(\Longrightarrow)$ Suppose $D'$ is a Schnyder orientation having
  the same outdegrees and the same type as $D$. Then,
  Lemma~\ref{th:EDGEeulerian} implies that $T$ is
  Eulerian-partitionable and thus Eulerian. So for any
  $F\in \hat{\mc{F^*}}$, we have $\beta(T,F)=0$. Moreover, for
  $1\le i\le 2g$, consider the region $R_i$ between $W_L(B_i)$ and
  $W_R(B_i)$ containing $B_i$. Since $T$ is Eulerian, it is going in and
  out of $R_i$ the same number of time. So
  $\beta(T,W_L(B_i)-W_R(B_i))=0$.  Since $D,D'$ have the same type, we
  have $\gamma_D(B_i)=\gamma_{D'}(B_i)$. So by
  Lemma~\ref{lem:gammaequaldelta},
  $\delta_D(W_L(B_i))+\delta_D(W_R(B_i))=\delta_{D'}(W_L(B_i))+\delta_{D'}(W_R(B_i))$. Thus
  $\beta(T,W_L(B_i)+W_R(B_i))=\beta(\Out-\Out',W_L(B_i)+W_R(B_i))=\delta_D(W_L(B_i))+\delta_D(W_R(B_i))-\delta_{D'}(W_L(B_i))-\delta_{D'}(W_R(B_i))=0$. By
  combining this with the previous equality, we obtain
  $\beta(T,W_L(B_i))=\beta(T,W_R(B_i))=0$ for all $1\le i\le 2g$. Thus
  by Lemma~\ref{lm:homologous}, we have that $T$ is 0-homologous.

  $(\Longleftarrow)$ Suppose that $T$ is $0$-homologous. Then $T$ is in
  particular Eulerian-partitionable (with the partition
  $T,\emptyset,\emptyset$). So Lemma~\ref{th:EDGEeulerian} implies
  that $D'$ is a Schnyder orientation with the same outdegrees as $D$.
  Since $T$ is $0$-homologous, by Lemma~\ref{lm:homologous}, for all
  $1\le i\le 2g$, we have
  $\beta(T,W_L(B_i))=\beta(T,W_R(B_i))=0$. Thus
  $\delta_{D}(W_L(B_i))=\beta(\Out,W_L(B_i)) =
  \beta(\Out',W_L(B_i))=\delta_{D'}(W_L(B_i))$
  and
  $\delta_{D}(W_R(B_i))=\beta(\Out,W_R(B_i)) =
  \beta(\Out',W_R(B_i))=\delta_{D'}(W_R(B_i))$.
  So by Lemma~\ref{lem:gammaequaldelta},
  $\gamma_{D}(B_i)=\delta_{D}(W_L(B_i))+\delta_{D}(W_R(B_i))=\delta_{D'}(W_L(B_i))+\delta_{D'}(W_R(B_i))=\gamma_{D'}(B_i)$.
  So $D,D'$ have the same type.
\end{proof}

Lemma~\ref{th:EDGE0} implies that when you consider Schnyder
orientations having the same outdegrees the property that they have
the same type does not depend on the choice of the basis since being
homologous does not depend on the basis. So we have the following:

  \begin{lemma}
    \label{lem:typebasis}
    If two Schnyder orientations have the same outdegrees and the same
    type (for the considered basis), then they have the same type for
    any basis.
  \end{lemma}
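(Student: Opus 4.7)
The plan is to invoke Lemma~\ref{th:EDGE0} twice, in both directions, using that the property ``$T := D \setminus D'$ is $0$-homologous'' is a purely intrinsic property of the flow $\phi(T)$ and of the subgroup $\mb{F}$ of facial boundaries, with no dependence whatsoever on the chosen homology basis. This is the whole content of the argument: the type is basis-dependent, but being homologous is not.

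More precisely, suppose $D$ and $D'$ are two Schnyder orientations with the same outdegrees, and suppose they have the same type with respect to a fixed homology basis $\{B_1, \ldots, B_{2g}\}$ of cycles of $G$. I would first apply the forward direction ($\Longrightarrow$) of Lemma~\ref{th:EDGE0} with this basis: the hypothesis that $D, D'$ are Schnyder, have the same outdegrees, and share the same type for $\{B_1,\ldots,B_{2g}\}$ yields that $T = D\setminus D'$ is $0$-homologous, equivalently $D$ and $D'$ are homologous as flows in $\pdc{G}$.

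Next I would take any other homology basis $\{B'_1, \ldots, B'_{2g}\}$ of cycles of $G$ and apply the reverse direction ($\Longleftarrow$) of Lemma~\ref{th:EDGE0} with respect to this new basis. Since $T$ being $0$-homologous means $\phi(T) \in \mb{F}$, a condition intrinsic to $\pdc{G}$, the hypothesis of the reverse direction is satisfied regardless of the basis. The conclusion is that $D$ and $D'$ are Schnyder orientations with the same outdegrees and the same type for $\{B'_1, \ldots, B'_{2g}\}$. Since $\{B'_1, \ldots, B'_{2g}\}$ was arbitrary, this proves the claim.

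There is essentially no obstacle: the only thing to check is that ``$0$-homologous'' is a well-defined, basis-free notion, which is immediate from its definition $\phi(T) \in \mb{F}$ where $\mb{F}$ is generated by the facial walks, an object intrinsic to the map. The lemma is really a clean corollary of the equivalence established in Lemma~\ref{th:EDGE0}, and its purpose is to make explicit that the partition into ``types'' of Schnyder orientations with a fixed outdegree function is canonical in this sense.
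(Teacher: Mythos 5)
Your argument is exactly the one the paper gives (just before stating the lemma): Lemma~\ref{th:EDGE0} shows that sharing outdegrees and type is equivalent to $D\setminus D'$ being $0$-homologous, and the latter is a basis-free condition. The proposal is correct and takes essentially the same approach as the paper.
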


  Lemma~\ref{thm:EDGEtransform},~\ref{th:EDGEeulerian}
  and~\ref{th:EDGE0} are summarized in the following theorem (where by
  Lemma~\ref{lem:typebasis} we do not have to assume a particular
  choice of a basis for the third item):

\begin{theorem}\label{th:transformations}
  Let $G$ be a map on an orientable surface and $D,D'$ orientations of
  $\pdc{G}$ such that $D$ is a Schnyder orientation and $T=D\setminus
  D'$. We have the following:
  \begin{itemize}
  \item $D'$ is a Schnyder orientation if and only if $T$ is
    partitionable.
\item $D'$ is a Schnyder orientation having the same outdegrees as $D$
  if and only if $T$ is Eulerian-partitionable.
\item $D'$ is a Schnyder orientation having the same outdegrees and
  the same type as $D$ if and only if $T$ is $0$-homologous
  (i.e. $D,D'$ are homologous).
  \end{itemize}
\end{theorem}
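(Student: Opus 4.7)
The plan is to observe that Theorem~\ref{th:transformations} is a consolidation of the three lemmas that immediately precede it, so the proof amounts to assembling them in the right order while handling the minor subtlety about the choice of homology basis in the third item.

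First, I would dispatch the first bullet by a direct appeal to Lemma~\ref{thm:EDGEtransform}: the equivalence between $D'$ being a Schnyder orientation and $T=D\setminus D'$ being partitionable is exactly its statement. Second, I would obtain the second bullet from Lemma~\ref{th:EDGEeulerian}, which characterizes the case when $D'$ additionally preserves all outdegrees, in terms of $T$ being Eulerian-partitionable. Both of these are verbatim restatements, so nothing new needs to be argued.

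The third bullet is where a small extra observation is required. Lemma~\ref{th:EDGE0} is stated \emph{for a fixed homology basis} $\{B_1,\ldots,B_{2g}\}$, and characterizes the case when $D'$ has the same outdegrees \emph{and} the same type as $D$ with respect to that basis as equivalent to $T$ being $0$-homologous. In the statement of the theorem, however, no basis is mentioned. To reconcile this, I would invoke Lemma~\ref{lem:typebasis}: once two Schnyder orientations share the same outdegrees, the property of having the same type is basis-independent. Thus the equivalence provided by Lemma~\ref{th:EDGE0} for any single chosen basis transports to all bases, and the statement in the theorem is unambiguous.

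The main (and only) subtlety is this basis-independence, but it is already packaged in Lemma~\ref{lem:typebasis}, so the proof reduces to a short paragraph citing the three lemmas and explicitly pointing out that Lemma~\ref{lem:typebasis} legitimizes the basis-free formulation of the third item. No new computation, no new topological argument is needed.
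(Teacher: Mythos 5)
Your proposal is correct and mirrors the paper's own treatment exactly: the theorem is explicitly presented there as a summary of Lemmas~\ref{thm:EDGEtransform}, \ref{th:EDGEeulerian}, and \ref{th:EDGE0}, with Lemma~\ref{lem:typebasis} invoked to justify dropping the reference to a particular homology basis in the third item. Nothing to add.
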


We show in the next section that the set of Schnyder orientations that are
homologous (see third item of Theorem~\ref{th:transformations})
carries a structure of distributive lattice.

\section{The distributive lattice of homologous orientations}
\label{sec:lattice}

Consider a partial order $\leq$ on a set $S$.  Given two elements
$x,y$ of $S$, let $m(x,y)$ (resp. $M(x,y)$) be the set of elements $z$
of $S$ such that $z\leq x$ and $z\leq y$ (resp. $z\geq x$ and
$z\geq y$).  If $m(x,y)$ (resp. $M(x,y)$) is not empty and admits a
unique maximal (resp. minimal) element, we say that $x$ and $y$ admit
a \emph{meet} (resp. a \emph{join}), noted $x\vee y$ (resp.
$x\wedge y$).  Then $(S,\leq)$ is a \emph{lattice} if any pair of
elements of $S$ admits a meet and a join. Thus in particular a lattice
has a unique minimal (resp. maximal) element.  A lattice is
\emph{distributive} if the two operators $\vee$ and $\wedge$ are
distributive on each other.

 For the sake of generality, in this subsection we consider that maps
 may have loops or multiple edges.  Consider a map $G$ on an
 orientable surface and a given orientation $D_0$ of $G$. Let
 $O(G,D_0)$ be the set of all the orientations of $G$ that are
 homologous to $D_0$.  In this section we prove that $O(G,D_0)$ forms
 a distributive lattice and show some additional properties that are
 very general and concerns not only Schnyder orientations. This
 generalizes results for the plane obtained by Ossona de
 Mendez~\cite{Oss94} and Felsner~\cite{Fel04}.  The distributive
 lattice structure can also be derived from a result of
 Propp~\cite{Pro93} interpreted on the dual map, see the discussion
 below Theorem~\ref{th:lattice}.

In order to define an order on $O(G,D_0)$, fix an arbitrary face $f_0$
of $G$ and let $F_0$ be its counterclockwise facial walk.  Let
$\mc{F}'=\mc{F}\setminus \{F_0\}$ (where $\mc{F}$ is the set of
counterclockwise facial walks of $G$ as defined earlier). Note that
$\phi(F_0)=-\sum_{F\in\mc{F}'}\phi(F)$. Since the characteristic flows
of $\mc{F}'$ are linearly independent, any oriented subgraph of $G$
has at most one representation as a combination of characteristic
flows of $\mc{F}'$. Moreover the $0$-homologous oriented subgraphs of
$G$ are precisely the oriented subgraph that have such a
representation.  We say that a $0$-homologous oriented subgraph $T$ of
$G$ is \emph{counterclockwise} (resp. \emph{clockwise}) if its
characteristic flow can be written as a combination with positive
(resp. negative) coefficients of characteristic flows of $\mc{F}'$,
i.e. $\phi(T)=\sum_{F\in\mc{F}'}\lambda_F\phi(F)$, with
$\lambda\in\mathbb{N}^{|\mc{F}'|}$
(resp. $-\lambda\in\mathbb{N}^{|\mc{F}'|}$).  Given two orientations
$D,D'$, of $G$ we set $D\leq_{f_0} D'$ if and only if $D\setminus D'$
is counterclockwise.  Then we have the following theorem.

\begin{theorem}[\cite{Pro93}]
\label{th:lattice}
Let $G$ be a map on an orientable surface given with a particular
orientation $D_0$ and a particular face $f_0$.  Let $O(G,D_0)$ the set
of all the orientations of $G$ that are homologous to $D_0$.  We have
$(O(G,D_0),\leq_{f_0})$ is a distributive lattice.
\end{theorem}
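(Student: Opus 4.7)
The plan is to realize $(O(G,D_0),\leq_{f_0})$ as an order-embedded sublattice of $(\mb{Z}^{\mc{F}'},\leq)$ with its coordinate-wise order, so that the lattice axioms and distributivity are both inherited from those of $\max$ and $\min$ on $\mb{Z}$. I would proceed in three stages: verify the poset axioms, build a height encoding, and establish closure of the image under coordinate-wise extrema.

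First I would check that $\leq_{f_0}$ is a partial order on $O(G,D_0)$. Reflexivity is immediate since $\phi(D\setminus D)=0$ is the trivial non-negative combination of $\mc{F}'$. For transitivity a short edge-by-edge case analysis yields the identity of integer flows $\phi(D\setminus D'')=\phi(D\setminus D')+\phi(D'\setminus D'')$, so chaining two non-negative expansions in $\mc{F}'$ produces another. For antisymmetry, $\phi(D'\setminus D)=-\phi(D\setminus D')$ because they are supported on the same edges with opposite orientations; if both admit non-negative expansions in the linearly independent family $\{\phi(F):F\in\mc{F}'\}$, all coefficients must vanish, forcing $D=D'$.

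Next I would build the height encoding. For any $D,D'\in O(G,D_0)$ one has $\phi(D\setminus D')\in\mb{F}$: from $\phi(D)-\phi(D')=2\phi(D\setminus D')$ together with the homology assumption we get $2\phi(D\setminus D')\in\mb{F}$, and the quotient $\mb{Z}^E/\mb{F}$ is torsion-free for orientable surface maps by a standard homological argument. Since $\{\phi(F):F\in\mc{F}'\}$ is a $\mb{Z}$-basis of $\mb{F}$, the flow $\phi(D\setminus D_0)$ has a unique expansion $\sum_{F\in\mc{F}'}h(D)_F\,\phi(F)$, which attaches a height vector $h(D)\in\mb{Z}^{\mc{F}'}$ to each $D\in O(G,D_0)$. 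Applying the transitivity identity to the triple $D,D_0,D'$ shows that $h(D)-h(D')$ is exactly the coefficient vector of $\phi(D\setminus D')$, so $D'\leq_{f_0}D$ is equivalent to $h(D)-h(D')\in\mb{N}^{\mc{F}'}$, and $h$ is an order-embedding into $(\mb{Z}^{\mc{F}'},\leq)$.

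Finally, for $D_1,D_2\in O(G,D_0)$ I take the candidate join and meet to be the orientations with height vectors $\max(h(D_1),h(D_2))$ and $\min(h(D_1),h(D_2))$ computed componentwise. If these candidate vectors both lie in $h(O(G,D_0))$, they automatically realize $D_1\vee D_2$ and $D_1\wedge D_2$, and the distributivity of the lattice operations follows at once from the distributivity of $\max$ and $\min$ on $\mb{Z}$. The main obstacle is therefore the sublattice-closure step: one must show that $h(O(G,D_0))$ is stable under coordinate-wise extrema, i.e., that the componentwise maximum or minimum of two orientation heights is itself the height of a genuine orientation rather than just an abstract integer vector. I would handle this following the standard Propp/Felsner strategy by constructing $D_1\vee D_2$ explicitly: starting from $D_1$, apply a sequence of elementary face-flips, each of which reverses the boundary of some face $F\in\mc{F}'$ currently oriented clockwise (thereby adding $\phi(F)$ to the flow and incrementing $h_F$ by one), always choosing a face whose target coordinate has not yet been reached. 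The delicate local step, which is where the real work of the proof lies, is to show that whenever the current height still falls short of the target a flippable face must exist; this rests on the Eulerian balance at each vertex of the residual difference flow together with the way the facial flows in $\mc{F}'$ interact on shared edges.
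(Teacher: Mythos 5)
Your height-function approach is a genuinely different route from the paper's, which characterizes the lattice via its colored Hasse diagram (Theorem~\ref{th:hasse}) and establishes connectivity of the flip graph through Lemmas~\ref{cl:partition} and~\ref{cl:facenonrigid}. Your first two stages are sound, aside from a sign flip: the coefficients of $\phi(D\setminus D')$ in $\mc{F}'$ are $h(D)-h(D')$, hence $D\leq_{f_0}D'$ iff $h(D)\geq h(D')$, so $h$ is order-reversing and your candidate join should carry the componentwise $\min$, not the $\max$.

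The gap is in the sublattice-closure step. Single-face flips do not suffice in the presence of rigid edges (edges oriented identically in every element of $O(G,D_0)$). If a rigid edge $e$ lies on the boundary of $F\in\mc{F}'$, then $\partial F$ is never a directed cycle inside $O(G,D_0)$ --- otherwise flipping $F$ would produce another element of $O(G,D_0)$ with $e$ reversed --- so $F$ is never flippable. Yet rigidity of $e$ forces $h_F=h_{F'}$ for the other face $F'$ incident to $e$, and both coordinates may need incrementing together; no sequence of single-face flips in $\mc{F}'$ can raise them one at a time while staying inside $O(G,D_0)$. The paper handles this by deleting rigid edges and flipping faces of the reduced map $\widetilde{G}$, i.e.\ the elements of $\widetilde{\mathcal{F}}'$. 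Alternatively, you can bypass the flip sequence entirely: $h(O(G,D_0))$ is exactly the set of integer vectors satisfying, for each edge $e$ with distinct incident faces $F,F'$, a difference constraint $h_{F}-h_{F'}\in I_e$ where $I_e$ is an interval of two consecutive integers determined by $D_0$ (with $h_{F_0}:=0$), and any such solution set is closed under componentwise $\max$ and $\min$ by a short direct calculation --- which is exactly the closure you need.
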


We attribute Theorem~\ref{th:lattice} to Propp even if it is not
presented in this form in~\cite{Pro93}. Here we do not introduce
Propp's formalism, but provide a new proof of Theorem~\ref{th:lattice}
(as a consequence of the forthcoming
Proposition~\ref{th:lattice}). This allows us to introduce notions
used later in the study of this lattice.

To prove Theorem~\ref{th:lattice}, we need to define the elementary
flips that generates the lattice.  We start by reducing the graph
${G}$. We call an edge of ${G}$ \emph{rigid with respect to
  $O(G,D_0)$} if it has the same orientation in all elements of
$O(G,D_0)$. Rigid edges do not play a role for the structure of
$O(G,D_0)$. We delete them from ${G}$ and call the obtained embedded
graph $\widetilde{G}$.  Note that this graph is embedded but it is not
necessarily a map, as some faces may not be homeomorphic to open
disks. Note that if all the edges are rigid, i.e. $|O(G,D_0)|=1$, then
$\widetilde{G}$ has no edges.

\begin{lemma}
\label{lem:non-rigid}
Given an edge $e$ of $G$, the following are equivalent:
\begin{enumerate}
\item $e$ is non-rigid
\item  $e$ is contained in a
$0$-homologous oriented subgraph of $D_0$
\item  $e$ is contained in a
$0$-homologous oriented subgraph of any element of $O(G,D_0)$
\end{enumerate}
\end{lemma}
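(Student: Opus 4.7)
The plan is to establish the cyclic chain of implications $(3)\Rightarrow(2)\Rightarrow(1)\Rightarrow(3)$. The implication $(3)\Rightarrow(2)$ is immediate: $D_0\in O(G,D_0)$, so applying~(3) with the choice $D=D_0$ yields~(2) directly.

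For $(2)\Rightarrow(1)$, I would start from a $0$-homologous oriented subgraph $T$ of $D_0$ containing $e$, and define $D$ to be the orientation of $G$ obtained from $D_0$ by reversing exactly the edges of $T$. Then $D_0\setminus D=T$ is $0$-homologous by hypothesis, so $D$ and $D_0$ are homologous in the sense used throughout the paper (as in the third item of Theorem~\ref{th:transformations}), which gives $D\in O(G,D_0)$. Since $e\in T$ has opposite orientations in $D$ and $D_0$, this directly witnesses that $e$ is non-rigid.

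For $(1)\Rightarrow(3)$, the key preliminary observation is that the relation ``$D\setminus D'$ is $0$-homologous'' is an equivalence relation on orientations of $G$. Reflexivity ($D\setminus D=\emptyset$) is clear, symmetry follows from $\phi(D'\setminus D)=-\phi(D\setminus D')$, and transitivity follows from the edgewise identity
\[
\phi(D_1\setminus D_3)\;=\;\phi(D_1\setminus D_2)+\phi(D_2\setminus D_3),
\]
which is verified by a brief case analysis at each edge according to how $D_1,D_2,D_3$ agree or disagree on it. In particular, any two orientations of $O(G,D_0)$ have $0$-homologous difference. Now fix an arbitrary $D\in O(G,D_0)$. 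Since $e$ is non-rigid, there exists $D^{\star}\in O(G,D_0)$ orienting $e$ oppositely to $D_0$. If $D$ agrees with $D_0$ on $e$, I take $D':=D^{\star}$; otherwise I take $D':=D_0$. In both cases $D'\in O(G,D_0)$, the edge $e$ belongs to $D\setminus D'$, and $D\setminus D'$ is $0$-homologous by the equivalence relation above. This shows $e$ lies in a $0$-homologous oriented subgraph of $D$, which is exactly~(3).

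The only nontrivial ingredient is the transitivity of homology of orientations, which rests on the cocycle-like identity displayed above; once that is in hand, the rest of the proof is a direct unwinding of the definitions combined with the reversal construction used in $(2)\Rightarrow(1)$.
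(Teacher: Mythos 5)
Your proof is correct and follows essentially the same route as the paper's: $(3)\Rightarrow(2)$ is trivial, $(2)\Rightarrow(1)$ reverses $T$, and $(1)\Rightarrow(3)$ passes through $D\setminus D'$ for a suitable $D'\in O(G,D_0)$. You merely make explicit (via the cocycle identity and the equivalence-relation observation) a step the paper leaves implicit when it writes ``since $D,D'$ are homologous to $D_0$, they are also homologous to each other,'' which is a reasonable clarification rather than a genuinely different argument.
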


\begin{proof}
  $(1 \Longrightarrow 3)$ Let $D\in O(G,D_0)$. If $e$ is non-rigid,
  then it has a different orientation in two elements $D',D''$ of
  $O(G,D_0)$.  Then we can assume by symmetry that $e$ has a different
  orientation in $D$ and $D'$ (otherwise in $D$ and $D''$ by
  symmetry). Since $D,D'$ are homologous to $D_0$, they are also
  homologous to each other. So $T=D\setminus D'$ is a $0$-homologous
  oriented subgraph of $D$ that contains $e$.

 $(3 \Longrightarrow 2)$ Trivial since $D_0\in O(G,D_0)$

  $(2 \Longrightarrow 1)$ If an edge $e$ is contained in a $0$-homologous
  oriented subgraph $T$ of $D_0$. Then let $D$ be the element of
  $O(G,D_0)$ such that $T=D_0\setminus D$. Clearly $e$ is oriented
  differently in $D$ and $D_0$, thus it is non-rigid.
\end{proof}

By Lemma~\ref{lem:non-rigid}, one can build $\widetilde{G}$ by keeping
only the edges that are contained in a $0$-homologous oriented
subgraph of $D_0$.  Note that this implies that all the edges of
$\widetilde{G}$ are incident to two distinct faces of $\widetilde{G}$.
Denote by $\widetilde{\mathcal{F}}$ the set of oriented subgraphs of
$\widetilde{G}$ corresponding to the boundaries of faces of
$\widetilde{G}$ considered counterclockwise.  Note that any
$\widetilde{F}\in \widetilde{\mathcal{F}}$ is $0$-homologous and so
its characteristic flow has a unique way to be written as a
combination of characteristic flows of $\mc{F}'$. Moreover this
combination can be written
$\phi(\widetilde{F})=\sum_{F\in X_{\widetilde{F}}}\phi(F)$, for
$X_{\widetilde{F}}\subseteq\mc{F}'$. Let $\widetilde{f}_0$ be the face
of $\widetilde{G}$ containing $f_0$ and $\widetilde{F}_0$ be the
element of $\widetilde{\mathcal{F}}$ corresponding to the boundary of
$\widetilde{f}_0$.  Let
$\widetilde{\mathcal{F}}'=\widetilde{\mathcal{F}}\setminus
\{\widetilde{F}_0\}$.
The elements of $\widetilde{\mathcal{F}}'$ are precisely the
elementary flips which suffice to generate the entire distributive
lattice $(O(G,D_0),\leq_{f_0})$.

We prove two technical lemmas concerning $\widetilde{\mathcal{F}}'$:

\begin{lemma}
\label{cl:partition}
Let $D\in O(G,D_0)$ and $T$ be a non-empty $0$-homologous oriented
subgraph of $D$.  Then there exist edge-disjoint  $0$-homologous oriented subgraphs
$T_1,\ldots,T_k$ of $D$ such that $\phi(T)=\sum_{1\leq i \leq k}
\phi(T_i)$, and, for $1\leq i \leq k$, there exists
$\widetilde{X_i}\subseteq \widetilde{{\mathcal{F}}}'$ and
$\epsilon_i\in\{-1,1\}$ such that
$\phi(T_i)=\epsilon_i\sum_{\widetilde{F}\in \widetilde{X_i}}
\phi(\widetilde{F})$.
\end{lemma}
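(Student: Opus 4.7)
The plan is to write $\phi(T)$ as an integer combination of face-boundaries of $\widetilde{G}$ belonging to $\widetilde{\mathcal{F}}'$, and then to peel this combination off one level set at a time.

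First I would observe that every edge of $T$ is non-rigid: since $T$ itself is a $0$-homologous oriented subgraph of $D$ containing each of its edges, Lemma~\ref{lem:non-rigid} places all edges of $T$ in $\widetilde{G}$. Using the unique representation $\phi(T) = \sum_{F \in \mc{F}'} \lambda_F \phi(F)$ with $\lambda_F \in \mb{Z}$, and setting $\lambda_{F_0} := 0$, the next step is to push the coefficients $\lambda_F$ down to the faces of $\widetilde{G}$. For any rigid edge $e$ with incident faces $F_L, F_R$ of $G$ one has $\phi(T)_e = 0$ while $\phi(F_L)_e = -\phi(F_R)_e \in \{\pm 1\}$, hence $\lambda_{F_L} = \lambda_{F_R}$. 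Propagating along chains of rigid edges, $\lambda_F$ depends only on the face $\widetilde{F}$ of $\widetilde{G}$ containing $F$; denote this common value by $\mu_{\widetilde{F}}$. Since $F_0 \in X_{\widetilde{F}_0}$, one obtains $\mu_{\widetilde{F}_0} = 0$, and using $\phi(\widetilde{F}) = \sum_{F \in X_{\widetilde{F}}} \phi(F)$, the expression reduces to $\phi(T) = \sum_{\widetilde{F} \in \widetilde{\mathcal{F}}'} \mu_{\widetilde{F}}\, \phi(\widetilde{F})$.

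Next I would record the key regularity constraint on the $\mu_{\widetilde{F}}$. Every edge $e$ of $\widetilde{G}$ is incident to two \emph{distinct} faces $\widetilde{F}_L, \widetilde{F}_R$ of $\widetilde{G}$, with $\phi(\widetilde{F}_L)_e = -\phi(\widetilde{F}_R)_e \in \{\pm 1\}$, so the previous expression gives $\phi(T)_e = \pm(\mu_{\widetilde{F}_L} - \mu_{\widetilde{F}_R})$. Since $T \subseteq D$ we have $\phi(T)_e \in \{-1,0,+1\}$, so $|\mu_{\widetilde{F}_L} - \mu_{\widetilde{F}_R}| \leq 1$ for every edge of $\widetilde{G}$.

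With this in hand, I would decompose $\phi(T)$ via level sets. For each integer $\ell \geq 1$, set $\widetilde{X}_\ell^+ := \{\widetilde{F} : \mu_{\widetilde{F}} \geq \ell\}$ and $\widetilde{X}_\ell^- := \{\widetilde{F} : \mu_{\widetilde{F}} \leq -\ell\}$, both contained in $\widetilde{\mathcal{F}}'$ because $\mu_{\widetilde{F}_0} = 0$. Define $T_\ell^+ := \sum_{\widetilde{F} \in \widetilde{X}_\ell^+} \phi(\widetilde{F})$ (with $\epsilon = +1$) and $T_\ell^- := -\sum_{\widetilde{F} \in \widetilde{X}_\ell^-} \phi(\widetilde{F})$ (with $\epsilon = -1$). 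A straightforward telescoping gives $\sum_{\ell \geq 1}(\phi(T_\ell^+) + \phi(T_\ell^-)) = \sum_{\widetilde{F}} \mu_{\widetilde{F}} \phi(\widetilde{F}) = \phi(T)$. For each edge $e$ the bound $|\mu_{\widetilde{F}_L} - \mu_{\widetilde{F}_R}| \leq 1$ implies that $e$ is used in at most one $T_\ell^\pm$ (namely in $T_a^+$ for $a = \max(\mu_{\widetilde{F}_L},\mu_{\widetilde{F}_R})$ if that value is positive, or in $T_{-b}^-$ for $b = \min(\mu_{\widetilde{F}_L},\mu_{\widetilde{F}_R})$ if that value is negative), and in that case a direct check shows the coefficient there equals $\phi(T)_e$, so its orientation matches $D$. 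Re-indexing the finitely many nonzero $T_\ell^\pm$ as $T_1,\ldots,T_k$ with corresponding $\epsilon_i$ and $\widetilde{X}_i \subseteq \widetilde{\mathcal{F}}'$ yields edge-disjoint oriented subgraphs of $D$ summing to $T$, each $0$-homologous as a $\pm 1$ sum of face-walks in $\widetilde{\mathcal{F}}' \subseteq \mb{F}$.

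The main subtlety, I expect, is the bookkeeping around $\widetilde{F}_0$: one must verify carefully that $\mu_{\widetilde{F}_0} = 0$ truly propagates from the convention $\lambda_{F_0} = 0$ through every chain of rigid edges inside $X_{\widetilde{F}_0}$, since this is precisely what guarantees that none of the $\widetilde{X}_\ell^\pm$ contain $\widetilde{F}_0$, so the decomposition lies in $\widetilde{\mathcal{F}}'$ as the statement demands; the rest of the argument is then a clean linear-algebra manipulation combined with the $\pm 1$ gradient bound on $\mu$.
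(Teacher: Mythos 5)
Your proof is correct and follows essentially the same level-set decomposition as the paper's proof: both write $\phi(T)$ as an integer combination of facial flows, observe the $\pm 1$ gradient bound forced by $T$ being an oriented subgraph, and peel off the level sets. The only difference is cosmetic reordering: you first apply Lemma~\ref{lem:non-rigid} to $T$ itself to collapse $\lambda$ to a function $\mu$ on $\widetilde{\mathcal{F}}'$ and then take level sets of $\mu$, whereas the paper first takes level sets of $\lambda$ over $\mathcal{F}'$ to get the $T_i$, and only then applies Lemma~\ref{lem:non-rigid} to each $T_i$ to rewrite it over $\widetilde{\mathcal{F}}'$; both steps rely on the same fact that two $G$-faces inside one face of $\widetilde{G}$ are joined by a chain of rigid edges.
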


\begin{proof}
  Since $T$ is $0$-homologous, we have
  $\phi(T)=\sum_{F\in\mathcal{F'}}\lambda_F\phi(F)$, for $\lambda\in
  \mathbb{Z}^{|\mathcal{F'}|}$. Let $\lambda_{f_0}=0$.  Thus we have
  $\phi(T)=\sum_{F\in\mathcal{F}}\lambda_F\phi(F)$.  Let
  $\lambda_{\min}=\min_{F\in\mathcal{F}}\lambda_F$ and
  $\lambda_{\max}=\max_{F\in\mathcal{F}}\lambda_F$. Note that we may
  have $\lambda_{\min}$ or $\lambda_{\max} =0$ but not both since $T$
  is non-empty.  For $1\leq i\leq \lambda_{\max}$, let
  $X_{i}=\{F\in\mathcal{F'}\,|\,\lambda_F\geq i\}$ and $\epsilon_i=1$.
  Let $X_0=\emptyset$ and $\epsilon_0=1$. For $\lambda_{\min}\leq
  i\leq -1$, let $X_{i}=\{F\in\mathcal{F'}\,|\,\lambda_F\leq i\}$ and
  $\epsilon_i=-1$.  For $\lambda_{\min}\leq i \leq \lambda_{\max}$,
  let $T_i$ be the oriented subgraph such that
  $\phi(T_i)=\epsilon_i\sum_{F\in X_i}\phi(F)$.  Then we have
  $\phi(T)=\sum_{\lambda_{\min}\leq i \leq \lambda_{\max}} \phi(T_i)$.

  Since $T$ is an oriented subgraph, we have
  $\phi(T)\in\{-1,0,1\}^{|E(G)|}$. Thus for any edge of ${G}$, incident to
  faces $F_1$ and $F_2$, we have
  $(\lambda_{F_1}-\lambda_{F_2})\in\{-1,0,1\}$. So, for $1\leq i\leq
  \lambda_{\max}$, the oriented graph $T_i$ is the border between
  the faces with $\lambda$ value equal to $i$ and $i-1$.
  Symmetrically, for $\lambda_{\min}\leq i\leq -1$, the oriented graph
  $T_i$ is the border between the faces with $\lambda$ value equal
  to $i$ and $i+1$.  So all the $T_i$ are edge disjoint and are
  oriented subgraphs of $D$.

  Let $\widetilde{X_i}=\{\widetilde{F}\in\widetilde{\mathcal{F}}'\,|
  \,\phi(\widetilde{F})=\sum_{F\in X'} \phi(F) \textrm{ for some }
  X'\subseteq X_i\}$.  Since $T_i$ is $0$-homologous, the edges of
  $T_i$ can be reversed in $D$ to obtain another element of
  $O(G,D_0)$. Thus there is no rigid edge in $T_i$.  Thus
  $\phi(T_i)=\epsilon_i\sum_{F\in
    X_i}\phi(F)=\epsilon_i\sum_{\widetilde{F}\in\widetilde{X_i}}\phi(\widetilde{F})$.
\end{proof}

\begin{lemma}
\label{cl:facenonrigid}
Let $D\in O(G,D_0)$ and $T$ be a non-empty $0$-homologous oriented
subgraph of $D$ such that there exists $\widetilde{X}\subseteq
\widetilde{{\mathcal{F}}}'$ and $\epsilon\in\{-1,1\}$ satisfying
$\phi(T)=\epsilon\sum_{\widetilde{F}\in \widetilde{X}}
\phi(\widetilde{F})$. Then there exists $\widetilde{F}\in\widetilde{X}$ such
that $\epsilon\,\phi(\widetilde{F})$ corresponds to an oriented
subgraph of $D$.
\end{lemma}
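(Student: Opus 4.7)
I treat $\epsilon=1$; the case $\epsilon=-1$ is symmetric by reversing all relevant orientations. The strategy is induction on $|\widetilde{X}|$. For $|\widetilde{X}|=1$, the flow $T$ equals $\phi(\widetilde{F})$ for the unique $\widetilde{F}\in\widetilde{X}$, and the hypothesis $T\subseteq D$ is already the desired conclusion. For $|\widetilde{X}|\geq 2$, my aim is to exhibit a non-empty proper subset $\widetilde{X}'\subsetneq\widetilde{X}$ such that $T':=\sum_{\widetilde{F}\in\widetilde{X}'}\phi(\widetilde{F})$ is still an oriented subgraph of $D$, and then apply the inductive hypothesis to the pair $(T',\widetilde{X}')$, obtaining the required face in $\widetilde{X}'\subseteq\widetilde{X}$. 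An easy reduction: if the face-adjacency graph of $\widetilde{X}$ inside $\widetilde{G}$ is disconnected, any connected component $\widetilde{X}'$ works, since then no edge of $\widetilde{G}$ connects $\widetilde{X}'$ to $\widetilde{X}\setminus\widetilde{X}'$, so the support of $T'$ lies entirely on $\partial\widetilde{X}$ where $T'$ coincides with the corresponding part of $T$ and is therefore consistent with $D$.

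The main case is when the face-adjacency graph of $\widetilde{X}$ is connected. I introduce an auxiliary digraph $L$ on the vertex set $\widetilde{X}$: for every interior edge $e$ shared by two faces $\widetilde{F},\widetilde{F}'\in\widetilde{X}$, I add an arc from whichever of the two faces has its counterclockwise traversal disagreeing with $D$ on $e$ (the \emph{loser}) to the other (the \emph{winner}). The key observation is that $\phi(\widetilde{F})\subseteq D$ is equivalent to $\widetilde{F}$ being a sink of $L$: for an edge on $\partial\widetilde{X}$ incident to $\widetilde{F}$, the assumption $T\subseteq D$ already forces the direction of $D$ to agree with the counterclockwise traversal of $\widetilde{F}$, while for an interior edge of $\widetilde{F}$ the disagreement with $D$ is recorded exactly as an out-arc at $\widetilde{F}$ in $L$. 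Hence it suffices to prove that $L$ has a sink.

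Suppose, for contradiction, that every vertex of $L$ has positive out-degree; then $L$ contains a directed cycle, and I choose one, $C=(\widetilde{F}_1\to\cdots\to\widetilde{F}_k\to\widetilde{F}_1)$, that is shortest (lifting to the universal cover if needed so that length is unambiguous on surfaces of positive genus). Setting $\psi:=\sum_{i=1}^{k}\phi(\widetilde{F}_i)$, the contributions cancel on the cycle's interior edges, while on $\partial\widetilde{X}$-edges incident to $C$ the flow $\psi$ agrees with $D$ for the same reason as above. If moreover $\psi$ agrees with $D$ on every remaining edge of its support --- those shared between some $\widetilde{F}_i\in C$ and a face $\widetilde{F}''\in\widetilde{X}\setminus C$ --- then $\psi\subseteq D$ and $\widetilde{X}':=\{\widetilde{F}_1,\ldots,\widetilde{F}_k\}$ is a proper subset to which the induction applies.

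The hard part will be handling edges where $\widetilde{F}_i\in C$ loses to some $\widetilde{F}''\in\widetilde{X}\setminus C$, for these are precisely the places where $\psi$ could violate $\subseteq D$. Each such violation is an escape arc $\widetilde{F}_i\to\widetilde{F}''$ of $L$; my plan is an exchange argument in which one follows outgoing arcs from $\widetilde{F}''$ (which exist by the positive out-degree assumption) through $L$ until the walk either closes into a strictly smaller cycle contained in $\widetilde{X}\setminus C$ or re-enters $C$ at some $\widetilde{F}_j$, in which case concatenating with a segment of $C$ yields a cycle strictly shorter than $C$, contradicting its minimality. The delicate point where I expect the hypothesis that $\widetilde{G}$ contains no rigid edges to be essential is ensuring that this descent cannot stall: a rigid interior edge would be precisely what could obstruct the availability of the required alternative $0$-homologous subgraphs, and its absence is what guarantees the exchange argument terminates at a cycle $C$ with no escape arcs, yielding $\psi\subseteq D$ and closing the induction.
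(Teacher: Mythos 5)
Your reduction to the connected case and the reformulation ``$\phi(\widetilde{F})$ is an oriented subgraph of $D$ iff $\widetilde{F}$ is a sink of the auxiliary digraph $L$'' are both correct, but the argument that a sink must exist has a genuine gap. After assuming every vertex of $L$ has positive out-degree, you pick a shortest directed cycle $C$ and try to reach a contradiction. The trouble is that the exchange step does not produce a shorter cycle: an escape arc $\widetilde{F}_i\to\widetilde{F}''$ followed through $L$ may re-enter $C$ after an arbitrarily long detour, or close up in $\widetilde{X}\setminus C$ into a cycle that is no shorter than $C$, so minimality of $C$ is never violated. The natural repair --- take the sink strongly connected component $S$ of $L$, observe that there are no out-arcs from $S$, hence $\sum_{\widetilde{F}\in S}\phi(\widetilde{F})$ is an oriented subgraph of $D$, and recurse on $S$ --- only works when $S\subsetneq\widetilde{X}$. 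When $L$ is strongly connected (for instance, when $C$ already covers all of $\widetilde{X}$) you obtain no proper subset and no contradiction, and nothing in your argument rules this out. This is exactly the situation that the non-rigidity hypothesis is meant to exclude, and you flag that it must enter somewhere, but you never actually invoke it.

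By contrast the paper's proof never tries to locate a sink. It picks one disagreement edge $e$ between $\widetilde{F}_1,\widetilde{F}_2\in\widetilde{X}$, uses non-rigidity of $e$ to produce a second orientation $D'\in O(G,D_0)$ with $e$ reversed, decomposes $D\setminus D'$ via Lemma~\ref{cl:partition} into pieces $T_i=\epsilon_i\sum_{\widetilde{F}\in\widetilde{X}_i}\phi(\widetilde{F})$, and shows that the piece $T_1$ containing $e$ satisfies $\widetilde{F}_1\in\widetilde{X}_1$ but $\widetilde{F}_2\notin\widetilde{X}_1$. Then $\widetilde{Y}=\widetilde{X}\cap\widetilde{X}_1$ is a nonempty proper subset of $\widetilde{X}$ whose associated flow is still an oriented subgraph of $D$, which is the strictly smaller instance the induction needs. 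That construction is precisely what lets the paper sidestep the strongly connected case where your approach gets stuck. To salvage your route you would have to show, using non-rigidity, that $L$ cannot be strongly connected; as written the proof does not establish this, so the induction may fail to terminate.
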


\begin{proof}
  The proof is done by induction on $|\widetilde{X}|$.  Assume that
  $\epsilon =1$ (the case $\epsilon =-1$ is proved similarly).

  If $|\widetilde{X}|=1$, then the conclusion is clear since
  $\phi(T)=\sum_{\widetilde{F}\in\widetilde{X}}\phi(\widetilde{F})$.
  We now assume that $|\widetilde{X}|> 1$.  Suppose by contradiction
  that for any $\widetilde{F}\in\widetilde{X}$ we do not have the
  conclusion, i.e $\phi(\widetilde{F})_e\neq\phi(T)_e$ for some
  $e \in \widetilde{F}$. Let $\widetilde{F_1}\in\widetilde{X}$ and
  $e \in \widetilde{F_1}$ such that
  $\phi(\widetilde{F_1})_e\neq\phi(T)_e$.  Since $\widetilde{F_1}$ is
  counterclockwise, we have $\widetilde{F_1}$ on the left of $e$.  Let
  $\widetilde{F_2}\in\widetilde{{\mathcal{F}}} $ that is on the right
  of $e$. Note that $\phi(\widetilde{F_1})_e=-\phi(\widetilde{F_2})_e$
  and for any other face $\widetilde{F}\in \widetilde{\mathcal{F}}$,
  we have $\phi(\widetilde{F})_e=0$.  Since
  $\phi(T)=\sum_{\widetilde{F}\in\widetilde{X}}\phi(\widetilde{F})$,
  we have $\widetilde{F_2}\in\widetilde{X}$ and $\phi(T)_e=0$.  By
  possibly swapping the role of $\widetilde{F_1}$ and
  $\widetilde{F_2}$, we can assume that
  $\phi(D)_e=\phi(\widetilde{F_1})_e$ (i.e. $e$ is oriented the same way
  in $\widetilde{F_1}$ and  $D$). Since $e$ is not rigid, there
  exists an orientation $D'$ in $O(G,D_0)$ such that
  $\phi(D)_e=-\phi(D')_e$.

  Let $T'$ be the non-empty $0$-homologous oriented subgraph of $D$
  such that $T'=D\setminus D'$. Lemma~\ref{cl:partition} implies that
  there exists edge-disjoint  $0$-homologous oriented subgraphs $T_1,\ldots,T_k$ of
  $D$ such that $\phi(T)=\sum_{1\leq i \leq k} \phi(T_i)$, and, for
  $1\leq i \leq k$, there exists $\widetilde{X_i}\subseteq
  \widetilde{{\mathcal{F}}}'$ and $\epsilon_i\in\{-1,1\}$ such that
  $\phi(T_i)=\epsilon_i\sum_{\widetilde{F}\in \widetilde{X_i}}
  \phi(\widetilde{F})$.  Since $T'$ is the disjoint union of
  $T_1,\ldots,T_k$, there exists $1\leq i\leq k$, such that $e$ is an
  edge of $T_i$. Assume by symmetry that $e$ is an edge of
  $T_1$. Since $\phi(T_1)_e=\phi(D)_e=\phi(\widetilde{F_1})_e$, we
  have $\epsilon_1=1$, $\widetilde{F_1}\in \widetilde{X_1}$ and
  $\widetilde{F_2}\notin \widetilde{X_1}$.

  Let $\widetilde{Y}=\widetilde{X}\cap \widetilde{X_1}$. Thus
  $\widetilde{F_1}\in \widetilde{Y}$ and $\widetilde{F_2}\notin
  \widetilde{Y}$. So $|\widetilde{Y}|<|\widetilde{X}|$. Let
  $T_{\widetilde{Y}}$ be the oriented subgraph of $G$ such that
  $T_{\widetilde{Y}}=\sum_{\widetilde{F}\in \widetilde{Y}}
  \phi(\widetilde{F})$.  Note that the edges of $T$ (resp. $T_1$) are
  those incident to exactly one face of $\widetilde{X}$
  (resp. $\widetilde{X_1}$). Similarly every edge of
  $T_{\widetilde{Y}}$ is incident to exactly one face of
  $\widetilde{Y}=\widetilde{X}\cap \widetilde{X_1}$, i.e. it has one
  incident face in $\widetilde{Y}=\widetilde{X}\cap \widetilde{X_1}$
  and the other incident face not in $\widetilde{X}$ or not in
  $\widetilde{X_1}$.  In the first case this edge is in $T$, otherwise
  it is in $T_1$. So every edge of $T_{\widetilde{Y}}$ is an edge of
  $T\cup T_1$. Hence $T_{\widetilde{Y}}$ is an oriented subgraph of
  $D$. So we can apply the induction hypothesis on
  $T_{\widetilde{Y}}$. This implies that there exists
  $\widetilde{F}\in\widetilde{Y}$ such that $\widetilde{F}$ is an
  oriented subgraph of $D$. Since
  $\widetilde{Y}\subseteq\widetilde{X}$, this is a contradiction to
  our assumption.
\end{proof}

We need the following
characterization of distributive lattice from~\cite{Fel09}:

\begin{theorem}[\cite{Fel09}]
\label{th:hasse}
  An oriented graph $\mathcal{H}=(V,E)$ is the Hasse diagram of a
  distributive lattice if and only if it is connected, acyclic, and
  admits an edge-labeling $c$ of the edges such that:
\begin{itemize}
\item if $(u,v), (u,w)\in E$ then 
    \begin{itemize}
      \item[(U1)] $c(u,v)\neq c(u,w)$ and
      \item[(U2)] there is $z\in V$ such that $(v,z), (w,z)\in E$,
        $c(u,v)=c(w,z)$, and $c(u,w)=c(v,z)$.
    \end{itemize}
\item if $(v,z), (w,z)\in E$ then
    \begin{itemize}
      \item[(L1)] $c(v,z)\neq c(w,z)$ and
      \item[(L2)] there is $u\in V$ such that $(u,v), (u,w)\in E$,
        $c(u,v)=c(w,z)$, and $c(u,w)=c(v,z)$.
    \end{itemize}
\end{itemize}
\end{theorem}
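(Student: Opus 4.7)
The plan is to prove the two implications separately.

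For the forward direction ($\Longrightarrow$), I would invoke Birkhoff's representation theorem: every finite distributive lattice $L$ is isomorphic to the lattice $\mathcal{D}(P)$ of order ideals of its poset $P$ of join-irreducibles, ordered by inclusion. In this realization each cover relation $A\lessdot B$ is of the form $B=A\cup\{j\}$ for a unique join-irreducible $j$, so labeling the corresponding edge of the Hasse diagram by $c(A,B):=j$ gives a candidate coloring; connectedness and acyclicity of the Hasse diagram are immediate. Property (U1) holds because distinct elements covering $A$ add distinct join-irreducibles. For (U2), if $v=u\cup\{j_1\}$ and $w=u\cup\{j_2\}$ with $j_1\neq j_2$, then $z:=u\cup\{j_1,j_2\}$ is again an order ideal (both $j_1$ and $j_2$ already have their strict $P$-predecessors contained in $u$) and covers $v$ and $w$ with labels $j_2$ and $j_1$ respectively. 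Properties (L1) and (L2) follow by the dual argument applied to $u:=v\cap w$.

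For the reverse direction ($\Longleftarrow$), endow $V$ with the partial order $u\leq v$ iff there is a directed path from $u$ to $v$ in $\mathcal{H}$; acyclicity ensures this is a partial order. The first step is to show that any two directed paths between the same endpoints have the same length and the same multiset of labels: this is proved by induction on path length, reconciling two paths that start with different edges $(u,v)$ and $(u,w)$ by inserting the square given by (U2) and then applying the induction hypothesis to the two shorter subpaths from $v$ and from $w$. Property (U1) then forces the labels along any directed path to be pairwise distinct, so the multiset is in fact a set. Fixing a minimum $\hat{0}$ of $V$ (which exists by finiteness, connectedness and acyclicity), I define $\phi(v)\subseteq C$ to be the label set of any directed path from $\hat{0}$ to $v$; this is well-defined by the preceding step, and $\phi$ is injective because the pair (starting vertex, sequence of labels along a path) already determines the endpoint.

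To exhibit the distributive lattice structure, I would introduce a relation on the label set $C':=\phi(\hat{1})$ (where $\hat{1}$ is any maximum element) by setting $a\preceq b$ iff every $\phi(v)$ containing $b$ also contains $a$; a careful use of (L2), together with the dual of the length argument, shows that $\preceq$ is a partial order and that every $\phi(v)$ is a downset of $(C',\preceq)$. The converse statement, that \emph{every} downset $I\subseteq C'$ equals $\phi(v)$ for some $v$, is the crux: one argues by induction on $|I|$, appealing to (U2) to extend a path from $\hat{0}$ by an edge whose label is a $\preceq$-maximal element of $I\setminus\phi(u)$, and to (L2) to guarantee that such an edge is indeed available at the current vertex. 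Once $\phi$ is shown to be a bijection $V\to\mathcal{D}(C',\preceq)$ intertwining $\leq$ with inclusion, the distributive lattice structure is transported back to $V$, and the edges of $\mathcal{H}$ coincide with covers in $\mathcal{D}(C',\preceq)$ since each edge adjoins exactly one label. The main obstacle is this surjectivity onto downsets: it is precisely here that both (U2) and (L2) are needed in tandem, and the symmetry between the upper and lower diamond conditions becomes essential to keep the inductive construction from getting stuck.
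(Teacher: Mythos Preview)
The paper does not prove this theorem; it is quoted verbatim from \cite{Fel09} (Felsner--Knauer) and used as a black box to establish Proposition~\ref{lem:fulfillshasse}. There is therefore no proof in the paper to compare your proposal against.

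As for your sketch on its own merits: the forward direction via Birkhoff's representation is the standard argument and is fine. In the reverse direction your outline is broadly the right strategy, but one step is not justified as stated: you write that a minimum $\hat{0}$ ``exists by finiteness, connectedness and acyclicity''. A finite connected acyclic digraph can perfectly well have several sources; uniqueness of the minimum genuinely requires the diamond conditions (for instance via an argument that any two minimal elements would, by connectedness and repeated use of (L2), be forced to have a common lower bound). You should insert that argument before defining $\phi$. Apart from this, the point you yourself flag---surjectivity of $\phi$ onto all downsets of $(C',\preceq)$---is indeed where the work lies, and your inductive plan using (U2) to extend and (L2) to guarantee availability of the needed edge is the right shape, though it would need to be written out carefully to be convincing.
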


We define the directed graph $\mathcal{H}$ with vertex set
$O(G,D_0)$.  There is an oriented edge from $D_1$ to $D_2$ in
$\mathcal{H}$ (with $D_1\leq_{f_0}D_2$) if and only if
$D_1\setminus D_2\in \widetilde{\mathcal{F}}'$.  We define the label
of that edge as $c(D_1,D_2)=D_1\setminus D_2$. We show that
$\mathcal{H}$ fulfills all the conditions of Theorem~\ref{th:hasse},
and thus obtain the following:

\begin{proposition}
  \label{lem:fulfillshasse}
  $\mathcal{H}$ is the Hasse diagram of a distributive lattice.
\end{proposition}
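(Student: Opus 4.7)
The plan is to verify the hypotheses of Theorem~\ref{th:hasse} for the oriented graph $\mathcal{H}$ with edge-labeling $c$: acyclicity, connectivity, and the four diamond conditions (U1), (U2), (L1), (L2).

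\textbf{Acyclicity.} First I check that $\leq_{f_0}$ is a partial order. Since the characteristic flows $\{\phi(F)\}_{F\in\mathcal{F}'}$ are linearly independent, antisymmetry follows: if both $D\setminus D'$ and $D'\setminus D$ admit nonnegative expansions in this basis, both expansions are zero, so $D=D'$. Transitivity follows from the elementary identity $\phi(D_1\setminus D_2)+\phi(D_2\setminus D_3)=\phi(D_1\setminus D_3)$, which shows that the sum of two counterclockwise differences is counterclockwise. Every edge of $\mathcal{H}$ strictly increases the order, so $\mathcal{H}$ is acyclic.

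\textbf{Connectivity.} Given $D\neq D_0$ in $O(G,D_0)$, write $\phi(D\setminus D_0)=\sum_{F\in\mathcal{F}'}\lambda_F\phi(F)$ and set $\Phi(D):=\sum_F|\lambda_F|$, so $\Phi(D)=0$ iff $D=D_0$. Applying Lemma~\ref{cl:partition} to $T=D\setminus D_0$ and then Lemma~\ref{cl:facenonrigid} to a non-empty $T_i$, I obtain a face $\widetilde{F}\in\widetilde{\mathcal{F}}'$ such that $\epsilon_i\phi(\widetilde{F})$ is an oriented subgraph of $D$. Flipping $\widetilde{F}$ yields an orientation $D'$ adjacent to $D$ in $\mathcal{H}$ (as out-neighbor if $\epsilon_i=1$, in-neighbor if $\epsilon_i=-1$). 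By the construction of $\widetilde{X_i}$ in Lemma~\ref{cl:partition}, every $F\in X_{\widetilde{F}}$ satisfies $\lambda_F\geq 1$ (resp.\ $\lambda_F\leq -1$), so after the flip each $|\lambda_F|$ strictly decreases by $1$, giving $\Phi(D')<\Phi(D)$. Iterating, every vertex is connected to $D_0$ in $\mathcal{H}$, hence $\mathcal{H}$ is connected.

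\textbf{Diamond conditions.} For (U1) and (L1), the orientation obtained from $D$ by flipping a face $\widetilde{F}$ is uniquely determined by $\widetilde{F}$, so two edges out of (resp.\ into) $D$ with the same label share their other endpoint. For (U2), suppose $(D,D_1)$ and $(D,D_2)$ are labeled by distinct faces $\widetilde{F_1},\widetilde{F_2}\in\widetilde{\mathcal{F}}'$, each of which is then a counterclockwise oriented subgraph of $D$. The key geometric observation is that $\widetilde{F_1}$ and $\widetilde{F_2}$ are edge-disjoint: a common edge $e$ would lie on the boundary of both, but the counterclockwise traversals of two faces sharing $e$ orient $e$ oppositely, so they cannot both agree with $D$ on $e$. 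Hence the two flips commute; performing both produces an orientation $D_3$ for which $D_1\setminus D_3=\widetilde{F_2}$ and $D_2\setminus D_3=\widetilde{F_1}$, giving the required diamond. Condition (L2) follows dually: two faces that are simultaneously clockwise subgraphs of $D$ are edge-disjoint for the same reason, so the two reverse flips commute.

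Together these items verify every hypothesis of Theorem~\ref{th:hasse}, and the proposition follows. The main obstacle is the edge-disjointness in (U2)/(L2), as it is precisely what forces the two face-flips to commute and yield the required diamond; the heavier technical content (Lemmas~\ref{cl:partition} and~\ref{cl:facenonrigid}) is only used to guarantee the potential-decreasing flip needed for connectivity.
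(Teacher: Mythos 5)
Your proof is correct and follows essentially the same route as the paper's: both verify the hypotheses of Theorem~\ref{th:hasse} with (U1)/(L1) being immediate, (U2)/(L2) resting on edge-disjointness of the two flipped faces, and connectivity established by descent on the $\ell^1$-norm of the coefficients $\lambda_F$ via Lemmas~\ref{cl:partition} and~\ref{cl:facenonrigid} (the paper calls this quantity $s(T)$ and inducts on arbitrary pairs $D,D'$, whereas you fix $D_0$ and decrease a potential $\Phi$ — the same argument). If anything you supply slightly more detail than the paper where it asserts edge-disjointness without justification.
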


\begin{proof}
  The characteristic flows of elements of $\widetilde{\mathcal{F}}'$
  form an independent set, hence the digraph $\mathcal{H}$ is acyclic.
  By definition all outgoing and all incoming edges of a vertex of
  $\mathcal{H}$ have different labels, i.e. the labeling $c$ satisfies
  (U1) and (L1).  If $(D_u,D_v)$ and $(D_u,D_w)$ belong to
  $\mathcal{H}$, then $T_v=D_u\setminus D_v$ and
  $T_w=D_u\setminus D_w$ are both elements of
  $\widetilde{\mathcal{F}}'$, so they must be edge disjoint.  Thus,
  the orientation $D_z$ obtained from reversing the edges of $T_w$ in
  $D_v$ or equivalently $T_v$ in $D_w$ is in $O(G,D_0)$. This gives
  (U2). The same reasoning gives (L2).  It remains to show that
  $\mathcal{H}$ is connected.

Given a $0$-homologous oriented subgraph $T$ of $G$,
such that $T=\sum_{F\in\mc{F}'}\lambda_F\phi(F)$, we define
$s(T)=\sum_{F\in\mathcal{F'}}|\lambda_F|$.

Let $D,D'$ be two orientations of $O(G,D_0)$, and $T=D\setminus
D'$. We prove by induction on $s(T)$ that $D,D'$ are connected in
$\mathcal{H}$.  This is clear if $s(T)=0$ as then $D=D'$. So we now
assume that $s(T)\neq 0$ and so that $D,D'$ are distinct.
Lemma~\ref{cl:partition} implies that there exists edge-disjoint $0$-homologous 
oriented subgraphs $T_1,\ldots,T_k$ of $D$ such that
$\phi(T)=\sum_{1\leq i \leq k} \phi(T_i)$, and, for $1\leq i \leq k$,
there exists $\widetilde{X_i}\subseteq \widetilde{{\mathcal{F}}}'$ and
$\epsilon_i\in\{-1,1\}$ such that
$\phi(T_i)=\epsilon_i\sum_{\widetilde{F}\in \widetilde{X_i}}
\phi(\widetilde{F})$.  Lemma~\ref{cl:facenonrigid} applied to $T_1$
implies that there exists $\widetilde{F_1}\in\widetilde{X_1}$ such
that $\epsilon_1\,\phi(\widetilde{F_1})$ corresponds to an oriented subgraph
of $D$.  Let $T'$ be the oriented subgraph such that
$\phi(T)=\epsilon_1\phi(\widetilde{F_1})+\phi(T')$. Thus:
\begin{eqnarray*}
\phi(T') &=&\phi(T)-\epsilon_1\phi(\widetilde{F_1})\\
&=&\sum_{1\leq i\leq
  k}\phi(T_i)-\epsilon_1\phi(\widetilde{F_1})\\
&=&\sum_{\widetilde{F}\in(\widetilde{X_1}\setminus\{\widetilde{F_1}\})}
\epsilon_1 \phi(\widetilde{F})+\sum_{2\leq i\leq k}\sum_{\widetilde{F}\in
  \widetilde{X_i}} \epsilon_i\phi(\widetilde{F})\\
&=&\sum_{\widetilde{F}\in(\widetilde{X_1}\setminus\{\widetilde{F_1}\})}
\sum_{F\in{X_{\widetilde{F}}}} \epsilon_1\phi({F})+\sum_{2\leq i\leq
  k}\sum_{\widetilde{F}\in
  \widetilde{X_i}}\sum_{F\in{X_{\widetilde{F}}}} \epsilon_i\phi({F})
\end{eqnarray*}
So $T'$ is $0$-homologous.  Let $D''$ be such that $\epsilon_1
\widetilde{F_1}=D\setminus D''$.  So we have $D'' \in O(G,D_0)$ and
there is an edge between $D$ and $D''$ in $\mathcal{H}$. Moreover
$T'=D''\setminus D'$ and $s(T')=
s(T)-|{X_{\widetilde{F_1}}}|<s(T)$. So the induction hypothesis on
$D'',D'$ implies that they are connected in $\mathcal{H}$. So $D,D'$
are also connected in $\mathcal{H}$.
\end{proof}

Note that Proposition~\ref{lem:fulfillshasse} gives a  proof of
Theorem~\ref{th:lattice} independent from
Propp~\cite{Pro93}. 

We continue to investigate further the set
$O(G,D_0)$.

\begin{lemma}
\label{lem:necessary}
For every element $ \widetilde{F}\in \widetilde{\mathcal{F}}$, there
exists $D$ in $O(G,D_0)$ such that $\widetilde{F}$ is an oriented subgraph
of $D$. 
\end{lemma}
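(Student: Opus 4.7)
The plan is to take any $D \in O(G,D_0)$ that minimizes the number $b(D)$ of edges of $\widetilde{F}$ oriented in $D$ opposite to $\widetilde{F}$, and to show $b(D)=0$, which is precisely the statement that $\widetilde{F}$ is an oriented subgraph of $D$.

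Suppose for contradiction that $b(D)>0$ and pick an edge $e\in\widetilde{F}$ oriented in $D$ opposite to $\widetilde{F}$. Since $e$ is an edge of $\widetilde{G}$, Lemma~\ref{lem:non-rigid} furnishes $D'\in O(G,D_0)$ orienting $e$ as in $\widetilde{F}$, so $T:=D\setminus D'$ is a non-empty $0$-homologous oriented subgraph of $D$ containing $e$. Applying Lemma~\ref{cl:partition} to $T$ yields edge-disjoint $0$-homologous oriented subgraphs $T_1,\ldots,T_k$ of $D$ with $\phi(T)=\sum_i \phi(T_i)$ and $\phi(T_i)=\epsilon_i\sum_{\widetilde{F'}\in\widetilde{X_i}}\phi(\widetilde{F'})$ for some $\widetilde{X_i}\subseteq\widetilde{\mathcal{F}}'$ and $\epsilon_i\in\{-1,1\}$. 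Let $T_j$ be the unique piece containing $e$.

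The heart of the argument is the claim that every edge $e'\in\widetilde{F}\cap T_j$ is oriented in $D$ opposite to $\widetilde{F}$. Geometrically, $T_j$ is (up to the sign $\epsilon_j$) the counterclockwise boundary of the union of the faces in $\widetilde{X_j}$, so for any such $e'$, the condition $\phi(T_j)_{e'}\neq 0$ forces that exactly one of $\widetilde{F}$ and the other face $\widetilde{F}_{e'}$ adjacent to $e'$ in $\widetilde{G}$ lies in $\widetilde{X_j}$; a short sign computation then shows that $\phi(T_j)_{e'}$ takes the same sign relative to $\phi(\widetilde{F})_{e'}$ for every $e'\in\widetilde{F}\cap T_j$. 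Since $T_j\subseteq D$ and $\phi(T_j)_e=\phi(D)_e=-\phi(\widetilde{F})_e$ at the edge $e$, this common sign must be negative, proving the claim.

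Finally, let $D''$ be obtained from $D$ by reversing all edges of $T_j$. Because $T_j$ is $0$-homologous, $D''\in O(G,D_0)$. Edges of $\widetilde{F}$ outside $T_j$ are unchanged, while every edge of $\widetilde{F}\cap T_j$, including $e$, flips from bad to good. Thus $b(D'')\leq b(D)-1$, contradicting the minimality of $D$. The main obstacle is the sign bookkeeping in the central claim, but it is quite routine once one sees the geometric interpretation of $T_j$ as the oriented boundary of the region $\bigcup_{\widetilde{F'}\in\widetilde{X_j}}\widetilde{F'}$.
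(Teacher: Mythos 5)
Your proof is correct and mirrors the paper's argument step for step: both choose $D$ extremal for the number of edges of $\widetilde{F}$ agreeing with $D$, take a $D'$ flipping a bad edge $e$, decompose $T=D\setminus D'$ via Lemma~\ref{cl:partition}, isolate the piece $T_j$ containing $e$, observe that $T_j\cap\widetilde{F}$ consists entirely of disagreeing edges, and reverse $T_j$ to contradict extremality. The only (welcome) difference is that you sketch the sign computation showing $\phi(T_j)_{e'}$ has a fixed sign relative to $\phi(\widetilde{F})_{e'}$ across $e'\in T_j\cap\widetilde{F}$, a point the paper asserts without justification.
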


\begin{proof}
  Let $ \widetilde{F}\in \widetilde{\mathcal{F}}$. Let $D$ be an
  element of $O(G,D_0)$ that maximizes the number of edges of
  $\widetilde{F}$ that have the same orientation in $\widetilde{F}$
  and $D$ (i.e.  $D$ maximizes the number of edges oriented
  counterclockwise on the boundary of the face of $\widetilde{G}$
  corresponding to $\widetilde{F}$).  Suppose by contradiction that
  there is an edge $e$ of $\widetilde{F}$ that does not have the same
  orientation in $\widetilde{F}$ and $D$. Edge $e$ is in
  $\widetilde{G}$ so it is non-rigid.  Let $D'\in O(G,D_0)$ such that
  $e$ is oriented differently in $D$ and $D'$. Let $T=D\setminus D'$.
  By Lemma~\ref{cl:partition}, there exist edge-disjoint  $0$-homologous oriented
  subgraphs $T_1,\ldots,T_k$ of $D$ such that
  $\phi(T)=\sum_{1\leq i \leq k} \phi(T_i)$, and, for
  $1\leq i \leq k$, there exists
  $\widetilde{X}_i\subseteq \widetilde{{\mathcal{F}}}'$ and
  $\epsilon_i\in\{-1,1\}$ such that
  $\phi(T_i)=\epsilon_i\sum_{\widetilde{F}'\in \widetilde{X}_i}
  \phi(\widetilde{F}')$.
  W.l.o.g., we can assume that $e$ is an edge of $T_1$.  Let $D''$ be
  the element of $O(G,D_0)$ such that $T_1=D\setminus D''$.  The
  oriented subgraph $T_1$ intersects $\widetilde{F}$ only on edges of
  $D$ oriented clockwise on the border of $\widetilde{F}$. So $D''$
  contains strictly more edges oriented counterclockwise on the border
  of the face $\widetilde{F}$ than $D$, a contradiction.  So all the
  edges of $\widetilde{F}$ have the same orientation in $D$.  So
  $\widetilde{F}$ is a $0$-homologous oriented subgraph of $D$.
\end{proof}

By Lemma~\ref{lem:necessary}, for every element
$ \widetilde{F}\in \widetilde{\mathcal{F}}'$ there exists $D$ in
$O(G,D_0)$ such that $\widetilde{F}$ is an oriented subgraph of $D$. Thus
there exists $D'$ such that $\widetilde{F}=D\setminus D'$ and $D,D'$
are linked in $\mathcal{H}$.  Thus 
$\widetilde{\mathcal{F}}'$ is a minimal set that generates the
lattice.

A distributive lattice has a unique maximal (resp. minimal) element.
Let $D_{\max}$ (resp. $D_{\min}$) be the maximal (resp. minimal)
element of $(O(G,D_0),\leq_{f_0})$.

\begin{lemma}
\label{lem:maxtilde}
  $\widetilde{F}_0$ (resp. $-\widetilde{F}_0$) is an oriented subgraph
  of $D_{\max}$ (resp. $D_{\min}$). 
\end{lemma}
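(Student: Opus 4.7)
The plan is to handle the two statements symmetrically. For $\widetilde{F}_0\subseteq D_{\max}$, I will first invoke Lemma~\ref{lem:necessary} to obtain some $D\in O(G,D_0)$ with $\widetilde{F}_0\subseteq D$. Since $D_{\max}$ is the maximum of $(O(G,D_0),\leq_{f_0})$ and, by Proposition~\ref{lem:fulfillshasse}, $\mathcal{H}$ is the Hasse diagram of the corresponding finite distributive lattice, there exists a chain of covering relations $D=D^{(0)}\to D^{(1)}\to\cdots\to D^{(k)}=D_{\max}$ in $\mathcal{H}$. Each covering edge is labelled by some $\widetilde{F}^{(i)}\in\widetilde{\mathcal{F}}'$, which by the definition of $\mathcal{H}$ means that $\widetilde{F}^{(i)}$ is an oriented subgraph of $D^{(i)}$ (its face of $\widetilde{G}$ is entirely counterclockwise in $D^{(i)}$) and $D^{(i+1)}$ is obtained from $D^{(i)}$ by reversing exactly the edges of $\widetilde{F}^{(i)}$.

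The core step will be to show by induction on $i$ that no $\widetilde{F}^{(i)}$ shares an edge with $\widetilde{F}_0$, so that the entire chain of flips never touches the boundary of $\widetilde{f}_0$. If an edge $e$ lay on both boundaries, then, because $\widetilde{F}^{(i)}$ and $\widetilde{F}_0$ are the counterclockwise facial walks of the two distinct faces of $\widetilde{G}$ incident to $e$, they would traverse $e$ in opposite directions. But the induction hypothesis $\widetilde{F}_0\subseteq D^{(i)}$ forces $e$ to be oriented in $D^{(i)}$ as in $\widetilde{F}_0$, which is the opposite of its orientation in $\widetilde{F}^{(i)}$; this contradicts $\widetilde{F}^{(i)}\subseteq D^{(i)}$. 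Hence the flip at $\widetilde{F}^{(i)}$ leaves every edge of $\widetilde{F}_0$ untouched, so $\widetilde{F}_0\subseteq D^{(i+1)}$; at the end of the chain we obtain $\widetilde{F}_0\subseteq D_{\max}$.

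For $-\widetilde{F}_0\subseteq D_{\min}$ the same scheme applies once I produce some $D^{*}\in O(G,D_0)$ with $-\widetilde{F}_0\subseteq D^{*}$. Since $\widetilde{F}_0$ is $0$-homologous (as noted just after its definition), reversing the edges of $\widetilde{F}_0$ in the $D$ supplied by Lemma~\ref{lem:necessary} keeps the orientation in $O(G,D_0)$ and yields such a $D^{*}$. From $D^{*}$ I then descend along a covering chain of $\mathcal{H}$ to $D_{\min}$; each downward step reverses some $\widetilde{F}\in\widetilde{\mathcal{F}}'$ that is entirely clockwise in the current orientation (equivalently, $-\widetilde{F}$ is an oriented subgraph of it), and the same local incompatibility forbids such an $\widetilde{F}$ from sharing an edge with $\widetilde{F}_0$. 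The main subtlety I expect is the bookkeeping about the direction of flips at covering relations; this is already encoded in the definition of $\mathcal{H}$ used in Proposition~\ref{lem:fulfillshasse}, and the rest of the argument reduces to a one-line local check at each step of the chain.
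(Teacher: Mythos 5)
Your proof is correct, but it takes a genuinely different route from the paper's. The paper applies Lemma~\ref{lem:necessary} to get $D$ with $\widetilde{F}_0\subseteq D$, sets $T=D\setminus D_{\max}$, uses $D\leq_{f_0}D_{\max}$ to write $\phi(T)$ as a nonnegative combination of characteristic flows of $\widetilde{\mathcal{F}}'$ (i.e.\ without $\widetilde{F}_0$), and deduces in one stroke that $T$ is edge-disjoint from $\widetilde{F}_0$, hence $\widetilde{F}_0$ is still an oriented subgraph of $D_{\max}$. You instead walk a covering chain in $\mathcal{H}$ from $D$ up to $D_{\max}$, checking at each step via a local orientation conflict (the two counterclockwise facial walks on the two sides of a common edge traverse it oppositely) that the flipped face of $\widetilde{\mathcal{F}}'$ cannot touch $\widetilde{F}_0$, so $\widetilde{F}_0$ survives the whole chain. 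Your argument is more step-by-step and perhaps more geometrically transparent; the price is that it leans on Proposition~\ref{lem:fulfillshasse} (that $\mathcal{H}$ is the Hasse diagram whose edges are precisely single face-flips) and on the existence of a maximal chain to $D_{\max}$, whereas the paper's flow computation works directly from the definition of $\leq_{f_0}$. Your handling of the $D_{\min}$ case, first producing $D^*$ by reversing $\widetilde{F}_0$ in $D$ (legitimate since $\widetilde{F}_0$ is $0$-homologous) and then descending, is a sensible adaptation that the paper simply omits as ``similar''.
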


\begin{proof}
  By Lemma~\ref{lem:necessary}, there exists $D$ in $O(G,D_0)$ such
  that $\widetilde{F_0}$ is an oriented subgraph of $D$. Let
  $T=D\setminus D_{\max}$. Since $D\leq_{f_0} D_{\max}$, the
  characteristic flow of $T$ can be written as a combination with
  positive coefficients of characteristic flows of
  $\widetilde{\mathcal{F}}'$, i.e.
  $\phi(T)=\sum_{\widetilde{F}\in
    \widetilde{\mathcal{F}}'}\lambda_F\phi(\widetilde{F})$
  with $\lambda\in\mathbb{N}^{|\mc{F}'|}$. So $T$ is disjoint from
  $\widetilde{F}_0$.  Thus $\widetilde{F}_0$ is an oriented subgraph
  of $D_{\max}$. The proof is similar for $D_{\min}$.
  \end{proof}

\begin{lemma}
\label{prop:maximal}
$D_{\max}$ (resp. $D_{\min}$) contains no counterclockwise
(resp. clockwise) non-empty $0$-homologous oriented subgraph w.r.t.~$f_0$.
\end{lemma}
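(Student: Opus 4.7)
The plan is a direct contradiction argument exploiting the definition of the order $\leq_{f_0}$ together with the fact that reversing a $0$-homologous oriented subgraph preserves homology class.

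Suppose by contradiction that $D_{\max}$ contains a non-empty counterclockwise $0$-homologous oriented subgraph $T$. Since $T$ is a set of edges oriented as in $D_{\max}$, one can form a new orientation $D'$ of $G$ by reversing exactly the edges of $T$ in $D_{\max}$. First I would verify that $D'\in O(G,D_0)$: by construction $\phi(D_{\max})-\phi(D')=\phi(T)\in\mathbb{F}$ since $T$ is $0$-homologous, so $D'$ is homologous to $D_{\max}$, hence to $D_0$. Next I would observe that $T=D_{\max}\setminus D'$, and since $T$ is counterclockwise by hypothesis, the definition of $\leq_{f_0}$ gives $D_{\max}\leq_{f_0} D'$. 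Because $T$ is non-empty, $D'\neq D_{\max}$, so $D_{\max}<_{f_0}D'$, contradicting the maximality of $D_{\max}$ in $(O(G,D_0),\leq_{f_0})$.

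The case of $D_{\min}$ is symmetric. If $T$ is a non-empty clockwise $0$-homologous oriented subgraph of $D_{\min}$, then $\phi(T)=-\sum_{F\in\mathcal{F}'}\lambda_F\phi(F)$ with $\lambda\in\mathbb{N}^{|\mathcal{F}'|}$. Reversing $T$ in $D_{\min}$ produces $D'\in O(G,D_0)$ with $D'\setminus D_{\min}=-T$, whose characteristic flow is $\sum_{F\in\mathcal{F}'}\lambda_F\phi(F)$, i.e.\ counterclockwise. Hence $D'\leq_{f_0}D_{\min}$ and $D'\neq D_{\min}$, contradicting the minimality of $D_{\min}$.

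There is no genuine obstacle here: the statement is essentially a reformulation of the fact that a maximum (resp.\ minimum) of the poset admits no strictly larger (resp.\ smaller) neighbor, and the flip-by-a-$0$-homologous-subgraph operation is exactly the move that produces a comparable element in the same homology class. The only small point to check cleanly is that reversing an oriented subgraph of $D_{\max}$ yields a valid orientation of $G$ (it does, because each edge of $T$ carries a well-defined orientation inherited from $D_{\max}$) and that the resulting flow difference is exactly $\pm\phi(T)$, so the sign convention in the definitions of \emph{counterclockwise} and $\leq_{f_0}$ matches up as used above.
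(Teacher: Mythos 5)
Your argument is correct and is essentially the paper's own proof, merely made more explicit: the paper asserts directly that there exists $D\in O(G,D_0)$, $D\neq D_{\max}$, with $T=D_{\max}\setminus D$, while you construct that $D$ by reversing the edges of $T$ in $D_{\max}$ and check that it stays in the same homology class, then invoke the definition of $\leq_{f_0}$ exactly as the paper does. One small slip in the verification: if $\phi(D)$ denotes the $\pm 1$ characteristic flow of an orientation, reversing the edges of $T$ changes each such coordinate by $\pm 2$, so $\phi(D_{\max})-\phi(D')=2\phi(T)$ rather than $\phi(T)$; this does not affect the conclusion, since $\phi(T)\in\mathbb{F}$ still gives $2\phi(T)\in\mathbb{F}$, and the paper's own working convention is in any case that two orientations are homologous precisely when $D\setminus D'$ is $0$-homologous.
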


\begin{proof}
  Suppose by contradiction that $D_{\max}$ contains a counterclockwise
  non-empty $0$-homologous oriented subgraph $T$. Then there exists
  $D\in O(G,D_0)$ distinct from $D_{\max}$ such that
  $T=D_{\max}\setminus D$. We have $D_{\max}\leq_{f_0} D$ by
  definition of $\leq_{f_0}$, a contradiction to the maximality of
  $D_{\max}$.
\end{proof}

Note that in the definition of counterclockwise (resp. clockwise)
non-empty $0$-homologous oriented subgraph, used in
Lemma~\ref{prop:maximal}, the sum is taken over elements of $\mc{F}'$
and thus does not use $F_0$. In particular, $D_{\max}$
(resp. $D_{\min}$) may contain regions whose boundary is oriented
counterclockwise (resp. clockwise) according to the region but then
such a region contains $f_0$.

There is a generic known method~\cite{Meu} (see
also~\cite[p.23]{UecPHD}) to compute in linear time a minimal
$\alpha$-orientation of a planar map as soon as an
$\alpha$-orientation is given. This method also works on oriented
surfaces and can be applied to obtain the minimal orientation
$D_{\min}$ in linear time. We explain the method briefly below.

It is  simpler to explain how to compute the minimal orientation $D_{\min}$
homologous to $D_0$ in a dual setting.  The first observation to make
is that two orientations $D_1, D_2$ of $G$ are homologous if and only
if there dual orientations $D^*_1, D^*_2$ of $G^*$ are equivalent up
to reversing some directed cuts. Furthermore $D_1\le_{f_0} D_2$ if and
only if $D^*_1$ can be obtained from $D^*_2$ by reversing directed
cuts oriented from the part containing $f_0$.  Let us compute
$D^*_{\min}$ which is the only orientation of $G^*$, obtained from
$D^*_0$ by reversing directed cuts, and without any directed cut
oriented from the part containing $f_0$.  For this, consider the
orientation $D^*_0$ of $G^*=(V^*,E^*)$ and compute the set
$X\subseteq V^*$ of vertices of $G^*$ that have an oriented path toward
$f_0$. Then $(X,V^*\setminus X)$ is a directed cut oriented from the
part containing $f_0$ that one can reverse. Then update the set of
vertices that can reach $f_0$ and go on until $X=V^*$.  It is not
difficult to see that this can be done in linear time. Thus we obtain
the minimal orientation in linear time.

We conclude this section by applying Theorem~\ref{th:lattice} to
Schnyder orientations:

\begin{theorem}
\label{cor:lattice}
  Let $G$ be a map on an orientable surface given with a particular
  Schnyder orientation $D_0$ of $\pdc{G}$ and a particular face $f_0$ of
  $\pdc{G}$.  Let $S(\pdc{G},D_0)$ be the set of all the Schnyder
  orientations of $\pdc{G}$ that have the same outdegrees and same
  type as $D_0$.  We have that $(S(\pdc{G},D_0),\leq_{f_0})$ is a
  distributive lattice.
\end{theorem}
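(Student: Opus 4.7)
The plan is to reduce this statement to an application of Theorem~\ref{th:lattice} combined with the third item of Theorem~\ref{th:transformations}. First I would instantiate Theorem~\ref{th:lattice} with the map $\pdc{G}$ (rather than $G$), with the chosen orientation $D_0$ of $\pdc{G}$ and the chosen face $f_0$ of $\pdc{G}$. This is immediately applicable since $\pdc{G}$ is itself a map on the same orientable surface as $G$, and yields that $(O(\pdc{G},D_0),\leq_{f_0})$ is a distributive lattice, where $O(\pdc{G},D_0)$ is the set of all orientations of $\pdc{G}$ that are homologous to $D_0$.

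Next, the only thing left to verify is the equality of sets $S(\pdc{G},D_0)=O(\pdc{G},D_0)$, which is essentially a direct translation via Theorem~\ref{th:transformations}. Given any orientation $D'$ of $\pdc{G}$, set $T=D_0\setminus D'$. Since $D_0$ is a Schnyder orientation by hypothesis, the third item of Theorem~\ref{th:transformations} says that $D'$ is a Schnyder orientation with the same outdegrees and the same type as $D_0$ if and only if $T$ is $0$-homologous, which by definition is equivalent to $D'$ being homologous to $D_0$. The former condition defines membership in $S(\pdc{G},D_0)$ and the latter defines membership in $O(\pdc{G},D_0)$, so the two sets coincide.

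Concerning the ``same type'' clause: a subtle point is that Theorem~\ref{th:transformations} as stated fixes a choice of homology basis for the notion of type, but Lemma~\ref{lem:typebasis} guarantees that among Schnyder orientations with identical outdegrees, coinciding with $D_0$ on the chosen basis is equivalent to coinciding on any basis. Therefore the set $S(\pdc{G},D_0)$ is well-defined independently of the basis, and the identification with $O(\pdc{G},D_0)$ goes through unambiguously.

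There is no real obstacle here since all the heavy lifting has been done in the earlier sections; the statement is essentially a corollary obtained by combining the characterization of homologous Schnyder orientations (Theorem~\ref{th:transformations}, third item) with the general distributive lattice structure on homologous orientations (Theorem~\ref{th:lattice}). The only mild point requiring care is the independence of the ``same type'' condition from the choice of basis, which is exactly what Lemma~\ref{lem:typebasis} provides.
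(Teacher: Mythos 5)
Your proposal is correct and follows essentially the same route as the paper: it identifies $S(\pdc{G},D_0)$ with $O(\pdc{G},D_0)$ via the third item of Theorem~\ref{th:transformations}, and then invokes Theorem~\ref{th:lattice} applied to $\pdc{G}$. The additional remark on basis-independence via Lemma~\ref{lem:typebasis} is a sensible elaboration of a point the paper leaves implicit.
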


\begin{proof}
  By the third item of Theorem~\ref{th:transformations}, we have
  $S(\pdc{G},D_0)=O(\pdc{G},D_0)$. Then the conclusion holds by
  Theorem~\ref{th:lattice}.
\end{proof}

\part{Properties in  the toroidal case}
\label{part:torus}

\chapter{Definitions and properties}
\label{sec:torus}

\section{Toroidal Schnyder woods}

According to Euler's formula, the torus is certainly the most
beautiful oriented surface since $g=1$ and Euler's formula sums to
zero, i.e.  $n-m+f=0$.
Thus when generalizing Schnyder woods, there is no need of
vertices satisfying several times the Schnyder property (see
Figure~\ref{fig:369}), nor edges of type $0$ (two incoming edges with
the same color, see Figure~\ref{fig:edgelabeling}) and the general
definition of Schnyder woods (see Definition~\ref{def:highgenus}) can
be simplified in this case. 

Indeed an {\sc edge}, $\mathbb{N}^*$-{\sc vertex}, $\mathbb{N}^*$-{\sc
  face} angle labeling of a toroidal map is in fact a $\{1,2\}$-{\sc
  edge}, $1$-{\sc vertex}, $1$-{\sc face} angle labeling.  One gets
this by counting label changes on the angles around vertices, faces
and edges (like in the proof of Proposition~\ref{prop:bijbipolar}).
Around vertices and faces there are at least $3$ changes. Around an
edge there are at most $3$ changes (see
Figure~\ref{fig:edgelabeling}).  So $3n+3f\leq 3m$. By Euler's formula
we have equality $3n+3f = 3m$. So around all vertices, faces and edges
there are exactly $3$ changes. This implies that vertices and faces
are of type $1$ and edges of type $1$ or $2$.

The definition of Schnyder woods in the toroidal case is thus the
following:

\begin{definition}[Toroidal Schnyder wood]
  Given a toroidal map $G$, a \emph{toroidal Schnyder wood} of $G$ is
  an orientation and coloring of the edges of $G$ with the colors $0$,
  $1$, $2$, where every edge $e$ is oriented in one direction or in
  two opposite directions (each direction having a distinct color and
  being outgoing), satisfying the following conditions:
 
\begin{itemize}
\item Every vertex satisfies the Schnyder property (see
  Definition~\ref{def:schnyderproperty}).
\item There is no face whose boundary is a monochromatic cycle.
\end{itemize}
\end{definition}

When there is no ambiguity we may omit the word ``toroidal'' in
``toroidal Schnyder wood''.

See Figure~\ref{fig:tore-primal} for two examples of toroidal Schnyder
woods. 

Like previously, there is a bijection between toroidal Schnyder woods
and particular angle labelings.  
 
\begin{proposition}
\label{prop:bijtore}
If $G$ is a toroidal map, then the toroidal Schnyder woods of $G$ are
in bijection with the \{1,2\}-{\sc edge}, 1-{\sc vertex}, 1-{\sc face}
angle labelings of $G$.
\end{proposition}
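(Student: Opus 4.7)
The plan is to derive this statement from the more general Proposition~\ref{prop:bijtoregen}, combined with the observation (sketched in the discussion just above the statement) that the torus is rigid enough to force every {\sc edge}, $\mathbb{N}^*$-{\sc vertex}, $\mathbb{N}^*$-{\sc face} angle labeling to actually be $\{1,2\}$-{\sc edge}, $1$-{\sc vertex}, $1$-{\sc face}. Concretely, Proposition~\ref{prop:bijtoregen} already provides, via the correspondence of Figure~\ref{fig:edgelabeling}, a bijection between generalized Schnyder woods of $G$ and {\sc edge}, $\mathbb{N}^*$-{\sc vertex}, $\mathbb{N}^*$-{\sc face} angle labelings of $G$. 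It then suffices to check that on the torus each side of this bijection collapses onto the stronger notion we want: toroidal Schnyder woods on one side and $\{1,2\}$-{\sc edge}, $1$-{\sc vertex}, $1$-{\sc face} labelings on the other.

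For the angle-labeling side, I would repeat the double-counting argument used in the proof of Proposition~\ref{prop:bijbipolar}. For an edge, vertex, or face $x$, let $c(x)$ denote the number of color changes between cyclically consecutive angles around $x$. A vertex or face of type $k \geq 1$ contributes exactly $3k \geq 3$ changes, while an edge of type $0$, $1$, or $2$ contributes $0$, $3$, $3$ changes respectively, so $c(e) \leq 3$. Since each change around an edge is also a change at one of its two incident vertices or faces, one has $\sum_e c(e) = \sum_v c(v) + \sum_f c(f)$. Combining this with Euler's formula on the torus ($n - m + f = 0$, hence $3m = 3n + 3f$) gives
$$3m \;\geq\; \sum_e c(e) \;=\; \sum_v c(v) + \sum_f c(f) \;\geq\; 3n + 3f \;=\; 3m,$$
forcing equality throughout. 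Hence every edge is of type $1$ or $2$, and every vertex and face is of type exactly $1$, as required.

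On the Schnyder-wood side, being $1$-{\sc vertex} means that every vertex has outdegree exactly $3$ under the correspondence of Figure~\ref{fig:edgelabeling}, so the generalized Schnyder property of Definition~\ref{def:generalschnyderproperty} degenerates to the Schnyder property of Definition~\ref{def:schnyderproperty}. Being $\{1,2\}$-{\sc edge} rules out type-$0$ edges, so every edge is oriented in the ``one direction, or two outgoing directions of distinct colors'' pattern required by the toroidal definition. Finally, the prohibition of monochromatic face cycles is built into both definitions (and is automatic from $1$-{\sc face}, since a monochromatic face boundary would force a type-$0$ face). Thus the generalized Schnyder woods of a toroidal map are exactly its toroidal Schnyder woods, and the bijection of Proposition~\ref{prop:bijtoregen} restricts to the claimed bijection. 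There is essentially no obstacle here: the whole argument is just assembling the counting identity with Euler's formula and checking that the stronger classes on each side are cut out by the same equality.
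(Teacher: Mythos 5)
Your proof is correct and follows essentially the same approach the paper sketches: the key is the double-counting of color changes combined with Euler's formula on the torus, which is exactly the argument the paper gives in the discussion preceding the statement. Routing through Proposition~\ref{prop:bijtoregen} and checking that both sides of that bijection collapse to the stronger classes is a slightly cleaner way to package the reduction than re-running the construction of Proposition~\ref{prop:bijbipolar} verbatim, but the mathematical content is identical.
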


The proof of Proposition~\ref{prop:bijtore} is omitted. It is very
similar to the proof of Proposition~\ref{prop:bijbipolar} and the
counting argument is given at the beginning of this section.  Like for
spherical Schnyder woods, there are no type 2 edges for triangulations
and thus the angle labeling is 1-{\sc edge}, 1-{\sc vertex}, 1-{\sc
  face} in the case of toroidal triangulations.

Let $G$ be a toroidal map given with a Schnyder wood.  The graph $G_i$
denotes the directed graph induced by the edges of color $i$, including
edges that are half-colored $i$, and in this case, the edges get only
the direction corresponding to color $i$.  Each graph $G_i$ has exactly $n$
edges.  Note that $G_i$ is not necessarily connected.
Figure~\ref{fig:notconnected} is an example of a Schnyder wood that
has one color whose corresponding subgraph is not connected.

 \begin{figure}[!h]
 \center
 \includegraphics[scale=0.5]{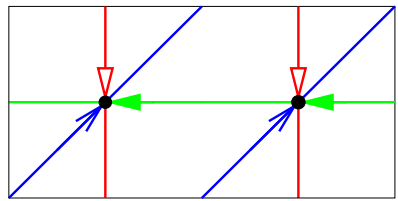}
 \caption{Example of a Schnyder wood that has one color whose
   corresponding subgraph is not connected.}
 \label{fig:notconnected}
 \end{figure}

 Each connected component of $G_i$ has exactly one outgoing edge for
 each of its vertices. Thus it has exactly one directed cycle that is
 a \emph{monochromatic cycle} of color $i$, or \emph{$i$-cycle} for
 short. Note that monochromatic cycles can contain edges oriented in
 two directions with different colors, but the \emph{orientation} of a
 $i$-cycle is the orientation given by the (half-)edges of color $i$.

 The situation is very different from the plane since the $G_i$ are
 not necessarily connected, contain cycles and we do not have a
 decomposition into three trees like on Figures~\ref{fig:ex-tree}
 or~\ref{fig:3c-tree}. Nevertheless the behavior of the monochromatic
 cycles  give some interesting global structure.

\section{Monochromatic cycles}
\label{sec:monochrocycles}

We often use the following lemma that is folklore:

 \begin{lemma}[Folklore]
\label{lem:intersect2}
Let $C,C'$ be two non contractible cycle of a toroidal map.  If $C,C'$
are not weakly homologous, then their intersection is non empty.
\end{lemma}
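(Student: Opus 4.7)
The plan is to argue the contrapositive. Suppose $C \cap C' = \emptyset$; we show that $C$ and $C'$ are weakly homologous in the sense of Section~\ref{sec:homology}, i.e.~that $[C] = \pm[C']$ in $H(G) \cong \mathbb{Z}^{2}$. Throughout, we view $C$ and $C'$ as simple closed curves on the torus $T^2$, which is legitimate since cycles of the graph are, in particular, simple closed walks embedded in $T^2$.

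The first step is a classical fact from surface topology: any non-contractible simple closed curve on the torus is non-separating. Indeed, a separating simple closed curve would split $T^2$ into two surfaces with boundary; additivity of Euler characteristic together with $\chi(T^2)=0$ forces one of the two pieces to be a disk, which would make the curve contractible, contradicting the hypothesis on $C$. Cutting $T^2$ along $C$ therefore yields a connected surface with two boundary circles, each a copy of $C$; an Euler characteristic count shows this surface has $\chi=0$ and hence is a cylinder (annulus) $A$.

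Second, since $C'$ is disjoint from $C$, the curve $C'$ lies entirely inside $A$ as a simple closed curve. By the classification of simple closed curves in an annulus, $C'$ is either contractible in $A$, or isotopic in $A$ to one of the two boundary components. The first case would make $C'$ contractible in $T^2$ as well, contradicting the hypothesis. Hence $C'$ is isotopic in $T^2$ to $C$. Any such isotopy preserves the homology class of an oriented closed curve, so that allowing for the choice of orientation we have $[C'] = \pm[C]$, which is precisely the definition of $C$ and $C'$ being weakly homologous.

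The main obstacle, and the reason this lemma is invoked as folklore rather than derived from scratch, is the use of the classification of simple closed curves in an annulus and of non-separating curves on the torus. Both facts are standard consequences of the classification of compact surfaces with boundary. An alternative would be to use the algebraic intersection pairing on $H_1(T^2)$: disjointness of $C$ and $C'$ forces $[C]\cdot[C'] = 0$, and combined with the non-degeneracy of the symplectic form on $\mathbb{Z}^2$ and the (separately justified) primitivity of the homology class of any non-contractible simple closed curve on $T^2$, this yields $[C']=\pm[C]$ as well.
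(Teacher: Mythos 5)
The paper labels this lemma as folklore and supplies no proof, so there is nothing of its own to compare against; what matters is whether your argument is sound, and it is. Your cut-and-paste argument is a correct instance of the standard topological proof: a non-contractible simple closed curve on $T^2$ is non-separating (if $C$ separated $T^2$ into two surfaces each carrying one boundary circle, then since such a surface has odd Euler characteristic $1-2g$, one side would have $\chi=1$ and be a disk, making $C$ contractible); cutting along $C$ then yields a connected surface of Euler characteristic zero with two boundary circles, i.e.\ an annulus; and a non-contractible simple closed curve $C'$ disjoint from $C$ lives in that annulus, where it must be isotopic to a core circle, hence weakly homologous to $C$. One point is worth making explicit: the paper defines ``weakly homologous'' via the combinatorial group $H(G)=\mathbb{W}/\mathbb{F}$ of flows modulo facial boundaries, whereas your argument works in the singular homology $H_1(T^2;\mathbb{Z})$. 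The paper records $H(G)\cong\mathbb{Z}^{2g}$ without exhibiting the isomorphism; your proof implicitly uses that this isomorphism sends the characteristic flow $\phi(C)$ of an embedded graph cycle to the singular homology class of the corresponding closed curve, which is true for cellular embeddings but deserves a sentence. Your alternate route via the non-degenerate intersection pairing on $H_1(T^2;\mathbb{Z})$ together with primitivity of the class of a non-contractible simple closed curve is equally valid, and has the merit of being the version that survives to higher genus, where the annulus argument is no longer available.
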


Let $G$ be a toroidal map given with a Schnyder wood.
We say that two monochromatic cycles $C_i,C_j$ of different colors are
\emph{reversal} if one is obtained from the other by reversing all the
edges, i.e. $C_i=C_j^{-1}$. We say that two monochromatic cycles are
\emph{crossing} if they intersect but are not reversal. We define the
\emph{right side} of a $i$-cycle $C_i$, as the right side while
``walking'' along the directed cycle by following the orientation
given by the edges colored $i$.

\begin{lemma}
\label{lem:allhomotopic}
For a given color $i\in\{0,1,2\}$, all $i$-cycles are
non-contractible, non intersecting and weakly homologous.
\end{lemma}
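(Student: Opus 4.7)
The plan is to establish the three conclusions (non-contractibility, pairwise non-intersection, weak homology of all $i$-cycles) successively, each one feeding into the next.

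\textbf{Step 1: non-contractibility.} First I would lift the situation to the universal cover. A contractible closed curve of $G$ lifts to a closed curve in $G^\infty$ (this is the defining property of the universal cover recalled in Section~\ref{sec:universalcover}). So if some $i$-cycle $C_i$ of $G$ were contractible, it would lift to a directed cycle of $G^\infty_i$. This would contradict Lemma~\ref{lem:nodirectedcycle}, which asserts that $G^\infty_i \cup (G^\infty_{i-1})^{-1}\cup (G^\infty_{i+1})^{-1}$ is acyclic and in particular that $G^\infty_i$ itself is acyclic. Hence every $i$-cycle is non-contractible.

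\textbf{Step 2: pairwise non-intersection.} Here I would use the uniqueness of the outgoing edge of color $i$ at each vertex, which is part of the Schnyder property. Suppose two $i$-cycles $C_i$ and $C'_i$ share a vertex $v$. Then the unique outgoing edge of color $i$ at $v$ must belong to both cycles, determining the same next vertex on each. Iterating this ``deterministic forward walk'' argument shows that $C_i$ and $C'_i$ consist of exactly the same sequence of vertices and edges, hence are equal. Consequently, two distinct $i$-cycles are vertex-disjoint.

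\textbf{Step 3: weak homology.} Finally I would combine the two previous steps with the folklore Lemma~\ref{lem:intersect2}: on the torus, two non-contractible cycles that are not weakly homologous must intersect. Any two distinct $i$-cycles are non-contractible by Step~1 and disjoint by Step~2, so Lemma~\ref{lem:intersect2} forces them to be weakly homologous.

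The main (and essentially only) obstacle is Step~1: all the global information about the Schnyder wood is packaged in Lemma~\ref{lem:nodirectedcycle}, and one has to recognize that the right way to use it for a statement about cycles in the torus is to pass to the universal cover, where contractibility becomes closedness of the lift. Steps~2 and~3 are then very short and essentially local/topological consequences.
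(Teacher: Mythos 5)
Your proof is correct and takes essentially the same approach as the paper: non-contractibility via Lemma~\ref{lem:nodirectedcycle} applied in the universal cover, non-intersection from the uniqueness of the outgoing edge of color $i$ at each vertex (the paper phrases this as the contrapositive — two distinct intersecting $i$-cycles would force some vertex to have two outgoing edges of color $i$), and weak homology by Lemma~\ref{lem:intersect2}. The only difference is that you spell out the lift-to-universal-cover step that the paper leaves implicit.
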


\begin{proof}
  By Lemma~\ref{lem:nodirectedcycle}, all $i$-cycles are
  non-contractible.  If there exist two such distinct $i$-cycles that
  are intersecting, then there is a vertex that has two outgoing edge
  of color $i$, a contradiction to the Schnyder property. So the
  $i$-cycles are non intersecting.  Then, by
  Lemma~\ref{lem:intersect2}, they are weakly homologous.
\end{proof}

Note that there are Schnyder woods whose $i$-cycles are
reversely homologous as shown on the example of
Figure~\ref{fig:gamma0-weakly} for which the green color has two
 cycles going in opposite way.

\begin{figure}[!h]
\center
\includegraphics[scale=0.5]{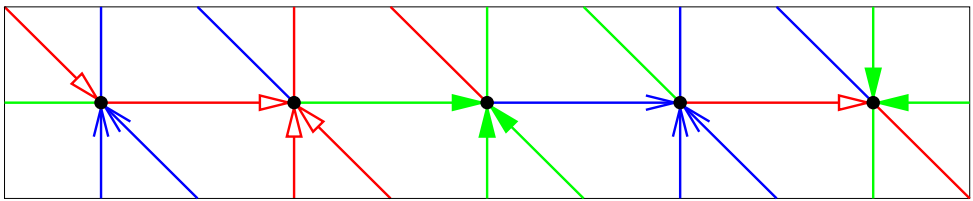} 
\caption{Example of a Schnyder wood with some green cycles that are
  reversely homologous.}
\label{fig:gamma0-weakly}
\end{figure}

\begin{lemma}
\label{lem:twonothomotopic}
If two monochromatic cycles are crossing then 
 they are not weakly homologous.
\end{lemma}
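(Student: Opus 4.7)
My plan is to derive a contradiction using algebraic intersection numbers of closed curves on the torus. Suppose $C_i$ and $C_j$ are two crossing monochromatic cycles; by Lemma~\ref{lem:allhomotopic}, two cycles of the same color never intersect, so we must have $i \neq j$. The strategy is to show that every intersection point of $C_i$ and $C_j$ contributes the same sign to the algebraic intersection pairing $[C_i] \cdot [C_j]$ on the torus, so this pairing is nonzero, whereas weak homology would force it to vanish.

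The heart of the proof is a local analysis at an arbitrary common vertex $v \in C_i \cap C_j$. The Schnyder property (Definition~\ref{def:schnyderproperty}) fixes the cyclic data at $v$: the outgoing edges $e_0(v), e_1(v), e_2(v)$ appear in ccw order, the incoming $i$-edge lies in the ccw sector from $e_{i+1}(v)$ to $e_{i-1}(v)$, and likewise for the incoming $j$-edge. Treating $(i,j) = (0,1)$ as the representative case (the remaining five cases follow from the $\mathbb{Z}/3$ cyclic symmetry of the colors and from swapping $i$ and $j$), one reads off that the four edges of $C_i \cup C_j$ incident to $v$ occur in the ccw order out-$C_i$, out-$C_j$, in-$C_i$, in-$C_j$. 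In particular these four edges alternate between $C_i$ and $C_j$, so the two cycles meet transversally at $v$, and the ordered pair of tangent directions of $C_i$ and $C_j$ at $v$ forms a positive basis of the tangent plane when $j \equiv i+1 \pmod 3$ and a negative basis when $j \equiv i-1 \pmod 3$. Thus the local crossing sign depends only on $i$ and $j$, not on the vertex.

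With this local picture in hand, the global conclusion is straightforward. Because all crossings have the same sign and at least one crossing exists, the algebraic intersection $[C_i] \cdot [C_j]$ equals $\pm |C_i \cap C_j|$, a nonzero integer. On the other hand, if $C_i$ and $C_j$ were weakly homologous then $[C_j] = \pm [C_i]$ in the first homology of the torus, and the antisymmetry of the intersection form on an orientable surface yields $[C_i] \cdot [C_j] = \pm\, [C_i] \cdot [C_i] = 0$, a contradiction.

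The main obstacle is the combinatorial bookkeeping in the local analysis: one must carefully extract from the Schnyder property both the alternating pattern $C_i, C_j, C_i, C_j$ of the four edges around $v$ and the uniform sign of the crossing. Once this is done, the homological conclusion on the torus is routine.
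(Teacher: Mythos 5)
Your approach is in essence the same as the paper's, phrased in the language of homological intersection theory: your uniform-sign claim is the paper's observation that, by the Schnyder rule, $C_{i+1}$ can only leave $C_{i-1}$ on its right side and can only enter it from its left, and your nonvanishing of $[C_i]\cdot[C_j]$ versus antisymmetry is the paper's contradiction of ``it would then have to re-enter $C_{i-1}$ from its right side.''

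There is, however, a genuine gap in your local step. You assume the four edges out-$C_i$, out-$C_j$, in-$C_i$, in-$C_j$ at a common vertex $v$ are distinct and alternate, and conclude the cycles meet transversally with $[C_i]\cdot[C_j]=\pm|C_i\cap C_j|$. In a general toroidal map some of these edges can coincide: a bidirected edge whose two outgoing directions carry colors $i$ and $j$ is simultaneously out-$C_i$ and in-$C_j$ at one endpoint (and out-$C_j$ and in-$C_i$ at the other), so $C_i$ and $C_j$ can share entire segments. Along such a shared segment the cycles are tangent, not transversal, and the vertices of $C_i\cap C_j$ on it do not each contribute a signed crossing, so your counting formula breaks down; in the extreme case where the intersection set consists only of such tangencies, your local picture is never instantiated. (For toroidal triangulations there are no bidirected edges, so your argument is complete in that case.) The lemma, however, is stated for all toroidal maps. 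To repair the argument you would need to decompose $C_i\cap C_j$ into maximal shared segments plus genuinely transversal vertices and verify, again via the Schnyder rule at the two separation vertices of each segment, that every such segment contributes a single signed crossing of the same sign after a small perturbation — which is precisely what the paper's ``leaving/entering'' phrasing handles implicitly, since it only refers to an edge of one cycle that is \emph{not} on the other rather than to four distinct edges.
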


\begin{proof}
Suppose that there exist two
  monochromatic cycles $C_{i-1}$ and $C_{i+1}$, of color $i-1$ and
  $i+1$, that are crossing and weakly homologous.  By
  Lemma~\ref{lem:nodirectedcycle}, the cycles $C_{i-1}$ and
  $C_{i+1}$ are non-contractible.  Since $C_{i-1}\neq C_{i+1}^{-1}$
  and $C_{i-1}\cap C_{i+1}\neq \emptyset$, the cycle $C_{i+1}$ is
  leaving $C_{i-1}$. It is leaving $C_{i-1}$ on its right side by the
  Schnyder property.  Since $C_{i-1}$ and $C_{i+1}$ are weakly
  homologous, the cycle $C_{i+1}$ is entering $C_{i-1}$ at least once
  from its right side, a contradiction to the Schnyder property.
\end{proof}

\begin{lemma}
\label{lem:tworeversal}
If two monochromatic cycles are intersecting and weakly homologous,
then they are reversal.
\end{lemma}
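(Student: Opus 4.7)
The plan is to derive this lemma as an almost immediate consequence of Lemma~\ref{lem:allhomotopic} and Lemma~\ref{lem:twonothomotopic}, via a short case analysis on the colors of the two cycles.

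First I would let $C$ and $C'$ be the two given monochromatic cycles, say of colors $i$ and $j$ respectively, and split according to whether $i=j$ or $i\neq j$. In the case $i=j$, Lemma~\ref{lem:allhomotopic} already tells us that all monochromatic cycles of color $i$ are pairwise non-intersecting, so this case would contradict the hypothesis that $C$ and $C'$ intersect; hence this case is vacuous. So the interesting situation is $i \neq j$.

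For $i \neq j$, I would invoke the contrapositive of Lemma~\ref{lem:twonothomotopic}: if $C$ and $C'$ are weakly homologous, then they are \emph{not} crossing. Unrolling the definition of ``crossing'' (intersecting but not reversal), the fact that $C$ and $C'$ do intersect forces them to be reversal, i.e.\ $C = (C')^{-1}$. This is exactly the conclusion we want.

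There is no real obstacle here; the content of the lemma is logically equivalent to putting together the two preceding lemmas, so the argument is just bookkeeping on definitions. The one point to be slightly careful about is to verify that a monochromatic $i$-cycle and its reversed $j$-cycle with $j \neq i$ genuinely make sense in the Schnyder framework (both cycles are indeed regarded as directed cycles inherited from the colour orientations of Section~\ref{sec:monochrocycles}), so that the symbolic equality $C = (C')^{-1}$ matches the notion of ``reversal'' used in the statement.
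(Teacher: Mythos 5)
Your proof is correct and takes essentially the same approach as the paper: both arguments amount to unwinding the definition of ``crossing'' and invoking Lemma~\ref{lem:twonothomotopic} by contradiction. The paper states this in one line; your extra case split $i=j$ versus $i\neq j$ (disposing of the former via Lemma~\ref{lem:allhomotopic}) is harmless added care but not a different route.
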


\begin{proof}
  By definition of crossing, two monochromatic cycles that are
  intersecting, weakly homologous and not reversal contradict
  Lemma~\ref{lem:twonothomotopic}.
\end{proof}

For $i\in\{0,1,2\}$, we define the \emph{$i$-winding number}
$\omega_i$ of the Schnyder wood as the number of times a $(i-1)$-cycle
crosses a $(i+1)$-cycle.  Thus by Lemma~\ref{lem:twonothomotopic}, we
have $\omega_i=0$ if the $(i-1)$- and $(i+1)$-cycles are weakly
homologous to each other (some may be reversal of each other or not).
The \emph{winding number} of the Schnyder wood is the maximum of the
$\omega_i$. For example the winding numbers of the Schnyder wood of
Figure~\ref{fig:notconnected} are $\omega_0=2$, $\omega_1=1$,
$\omega_2=1$ and those of Figure~\ref{fig:gamma0-weakly} are
$\omega_0=\omega_1=\omega_2=0$. 

\chapter{Crossing Schnyder woods}
\section{Crossing properties}
\label{sec:crossing}
We define properties concerning the way the monochromatic cycles may
cross each other. A Schnyder wood of a toroidal map is said to be:

\begin{itemize}
\item {\emph{half-crossing}:} if there exists a pair $i,j$
 of different colors, such that there exist a $i$-cycle 
crossing a $j$-cycle.
\item {\emph{intersecting}:} if, for $i\in\{0,1,2\}$, every $i$-cycle intersects at least one
  $(i-1)$-cycle and at least one $(i+1)$-cycle.

\item {\emph{crossing}:} if, for $i\in\{0,1,2\}$, every $i$-cycle crosses at least one
  $(i-1)$-cycle and at least one $(i+1)$-cycle.
\end{itemize}

The three properties are such that: crossing $\implies$ intersecting
$\implies$ half-crossing. The first implication is clear, the second
is proved in the following:

\begin{proposition}
  An intersecting Schnyder wood is half-crossing.
\end{proposition}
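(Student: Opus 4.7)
The plan is to argue by contradiction. Suppose that the Schnyder wood is intersecting but not half-crossing. Unfolding the definition of crossing, the failure of the half-crossing property means that for every pair of distinct colors $i\neq j$, every $i$-cycle and every $j$-cycle that intersect must be reversal, i.e.\ one is obtained from the other by reversing all edges.

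First I would pick an arbitrary $i$-cycle $C_i$ and apply the intersecting hypothesis twice: it provides an $(i+1)$-cycle $C_{i+1}$ and an $(i-1)$-cycle $C_{i-1}$, each intersecting $C_i$. By the negation of half-crossing, both intersections must be reversal, so $C_{i+1}=C_i^{-1}$ and $C_{i-1}=C_i^{-1}$. Consequently $C_{i+1}$ and $C_{i-1}$ coincide as directed subgraphs of $G$.

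The key step is then to derive a direct contradiction from this coincidence by inspecting a single edge. Take any edge $e$ of the common cycle $C_{i+1}=C_{i-1}$ and let $a$ be its tail according to the common orientation. Since $e$ belongs to $C_{i+1}$, it must leave $a$ with color $i+1$. Since $e$ also belongs to $C_{i-1}$, it must leave $a$ with color $i-1$. However, by the definition of a toroidal Schnyder wood, an edge is either directed in a single direction (carrying a single color) or bi-directed with the two opposite directions carrying two distinct outgoing colors, one at each endpoint. In either case, at the vertex $a$ there is at most one outgoing copy of $e$, and that copy carries a unique color. Hence $e$ cannot be outgoing at $a$ in both colors $i+1$ and $i-1$, the desired contradiction.

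The argument is essentially bookkeeping, and the only subtle point — really the only possible obstacle — is to interpret the relation $C_i=C_j^{-1}$ correctly as an equality of directed subgraphs (not merely of underlying undirected sets of edges or vertices), so that one can read off the outgoing color at a prescribed endpoint of every edge. Once this is in place the contradiction pops out from a single edge, and there is no need to invoke the global structural lemmas (such as Lemma~\ref{lem:nodirectedcycle} or Lemma~\ref{lem:allhomotopic}) of the previous section.
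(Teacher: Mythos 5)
Your proof is correct and reaches the same core contradiction as the paper's argument: if $C_{i+1}=C_i^{-1}=C_{i-1}$ as directed cycles, then some (half-)edge must be outgoing from the same endpoint in two distinct colors, which the Schnyder property forbids. The paper phrases this as ``there is no edge with three colors.'' Where you differ is in how you arrive there. The paper does not argue by contradiction; it splits on whether $C_{i-1},C_i,C_{i+1}$ are all weakly homologous, invokes Lemma~\ref{lem:tworeversal} (intersecting and weakly homologous implies reversal) to handle the degenerate case, and invokes Lemma~\ref{lem:intersect2} (not weakly homologous implies intersecting on the torus) to conclude that two of the three cycles must cross in the remaining case. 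You sidestep both of these homology-based lemmas: from the negation of half-crossing you read off directly that any two intersecting monochromatic cycles of distinct colors must be reversal, and then the intersecting hypothesis hands you $C_{i+1}=C_i^{-1}$ and $C_{i-1}=C_i^{-1}$ without any topological input. Your route is therefore more elementary — it uses only the definitions of crossing/reversal and the local Schnyder property — while the paper's route leans on the already-developed weak-homology machinery, which it reuses throughout the section. Both are valid; your observation that the contradiction ``pops out from a single edge'' and that no global structural lemma is needed is accurate, and worth noting as a small simplification.
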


\begin{proof}
  Consider an $i$-cycle $C_i$ that intersects at least one
  $(i-1)$-cycle $C_{i-1}$ and at least one $(i+1)$-cycle $C_{i+1}$. If
  $C_{i-1},C_i,C_{i+1}$ are all weakly homologous, then
  $C_{i-1}=C_i^{-1}=C_{i+1}$ by Lemma~\ref{lem:tworeversal}. This
  contradicts the definition of Schnyder woods since there is no edges
  with three colors. So at least two of $C_{i-1},C_i,C_{i+1}$ are not
  weakly homologous and thus intersecting by
  Lemma~\ref{lem:intersect2}. These two cycles are then crossing.
\end{proof}

Figures~\ref{fig:cross-0} gives example of Schnyder woods of a
toroidal map that are respectively not half-crossing, half-crossing
(and not intersecting), intersecting (and not
crossing), crossing.

\begin{figure}[!h]
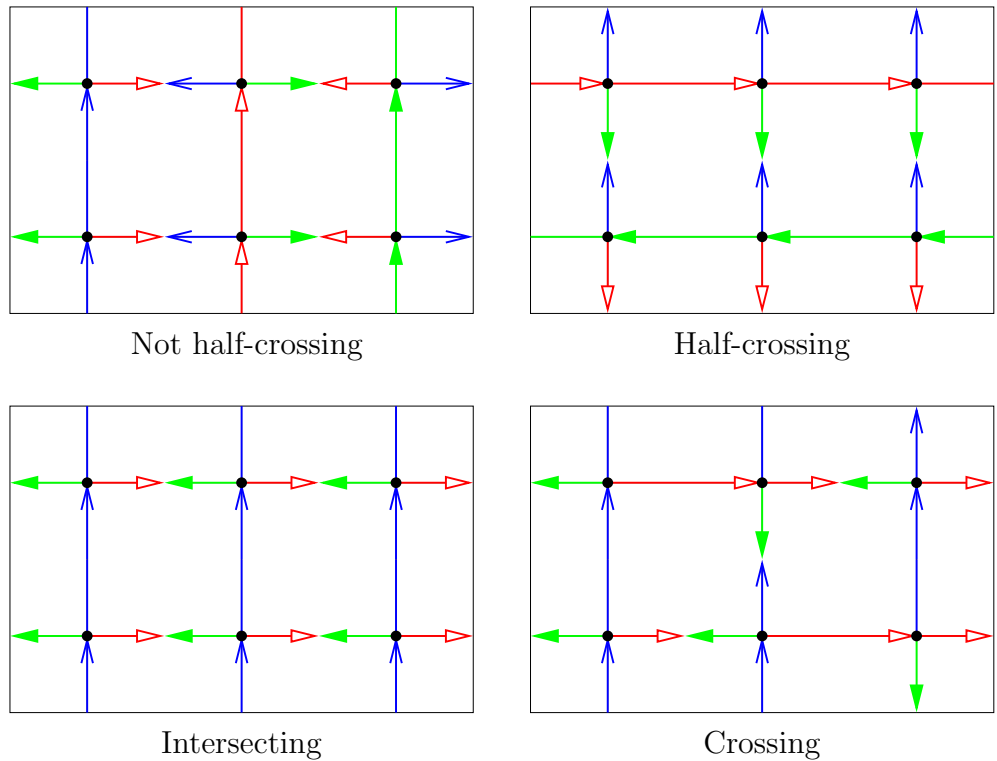

\center
\begin{tabular}{cc}
\includegraphics[scale=0.4]{cros-3}  \ & \
\includegraphics[scale=0.4]{cros-2}  \\
Not half-crossing\ &\  Half-crossing \\
& \\
\includegraphics[scale=0.4]{cros-1} \ & \
\includegraphics[scale=0.4]{cros-0}  \\
Intersecting \ &\   Crossing  \\
\end{tabular}
\caption{Example of Schnyder woods of a toroidal map that are
  respectively not half-crossing, half-crossing, intersecting,
  crossing.}
\label{fig:cross-0}
\end{figure}

For triangulation the situation is a bit simpler since "crossing
$\iff$ intersecting". 

\begin{proposition}
\label{prop:crossing-trieq}
  An intersecting Schnyder wood of a toroidal triangulation is
  crossing.
\end{proposition}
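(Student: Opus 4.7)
The plan is to show that a Schnyder wood of a toroidal triangulation admits no pair of reversal monochromatic cycles, after which the statement reduces to the definitions. Given the intersecting hypothesis, every $i$-cycle $C_i$ meets some $(i-1)$-cycle and some $(i+1)$-cycle; if no such meeting can be reversal, then each is, by definition, a crossing.

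To rule out reversal pairs, I would note that $C_i = C_j^{-1}$ with $i \neq j$ would force every edge of the common support to be oriented in both directions --- one direction carrying color $i$ and the opposite direction carrying color $j$ --- that is, to be bi-directed. So it suffices to check that no edge of the Schnyder wood is bi-directed.

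This last claim is already the observation used in the paper for spherical Schnyder woods of planar triangulations, and extends verbatim to the toroidal case. By Proposition~\ref{prop:bijtore} the Schnyder wood corresponds to a $\{1,2\}$-{\sc edge}, $1$-{\sc vertex}, $1$-{\sc face} angle labeling; since the faces are triangles, each face carries its three angles labeled $0$, $1$, $2$ exactly once. In particular, the two angles of any edge sitting in a common face are of different colors, which rules out type $0$ edges (all four angles equal) and type $2$ edges (two equal-color angles on the same side of the edge). Hence every edge is of type $1$, and a short angle-sector check confirms that a type $1$ edge must be mono-directed: for a bi-directed edge $u \to v$ of color $a$ and $v \to u$ of color $b$ with $a \neq b$, the two angles at $u$ would be labeled $\{a-1, a+1\}$ and the two at $v$ would be labeled $\{b-1, b+1\}$ by the Schnyder property, and neither pair contains a repeated label, contradicting the defining property of type $1$.

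The only delicate point is this last angle-sector verification, and even this is purely local and essentially identical to what is already used for the spherical case. Once it is in place, the three ingredients above --- intersecting hypothesis, absence of bi-directed edges, absence of reversal pairs --- immediately combine to yield the crossing property.
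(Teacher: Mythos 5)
Your proof is correct and rests on the same key observation as the paper's: in a Schnyder wood of a toroidal triangulation there are no bi-directed edges (since the labeling is $1$-{\sc edge}), so two monochromatic cycles of different colors can never be reversal, and the intersecting hypothesis then promotes directly to crossing. The paper's own proof reaches the same conclusion via Lemmas~\ref{lem:tworeversal}, \ref{lem:allhomotopic}, and~\ref{lem:intersect2} and in fact establishes the stronger statement that \emph{every} pair of monochromatic cycles of different colors is crossing, whereas you stop once you have produced one cycle of each other color that $C_i$ crosses --- which is all the definition of ``crossing'' requires, so the shortcut is legitimate. One minor remark: your explicit angle-sector verification that a type-$1$ edge is mono-directed is redundant, since type~$1$ edges are mono-directed by definition of the type-$1$ pattern (see Figure~\ref{fig:edgelabeling}); the substantive step is ruling out type~$2$, which you do correctly from the $1$-{\sc face} property of triangular faces.
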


\begin{proof}
  Consider an $i$-cycle $C_i$ and a $j$-cycle $C_j$ with $i\neq j$. By
  assumption, cycle $C_i$ intersects at least one $j$-cycle $C'_j$.
  There is no edges oriented in two directions so $C_i,C'_j$ are not
  reversal of each other. Thus there are not weakly homologous by
  Lemma~\ref{lem:tworeversal}. By Lemma~\ref{lem:allhomotopic}, all
  the $j$-cycle are weakly homologous. So $C_i,C_j$ are not weakly
  homologous, thus crossing by Lemma~\ref{lem:intersect2}.
\end{proof}

Figure~\ref{fig:cross-0-tri} gives examples of Schnyder woods of a
toroidal triangulation that are respectively not half-crossing,
half-crossing (and not crossing), crossing.

\begin{figure}[!h]
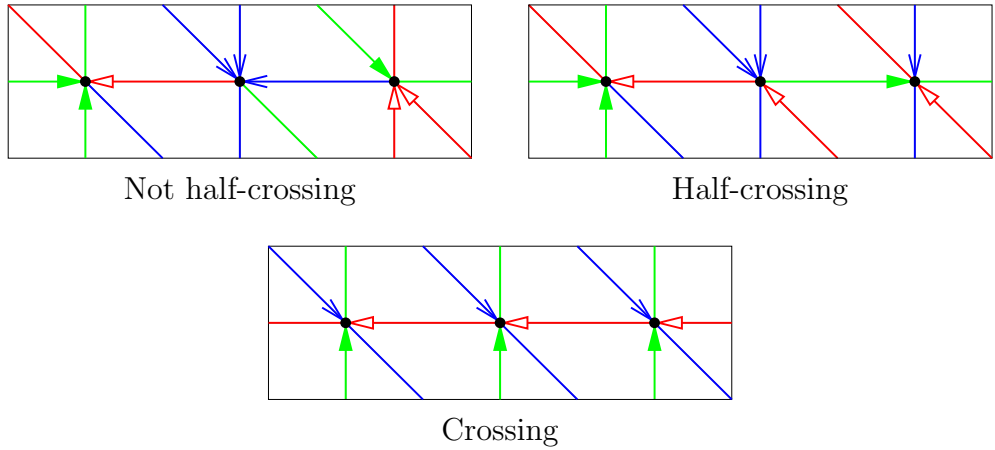

\center
\begin{tabular}{cc}
\includegraphics[scale=0.4]{gamma0-new} \ & \
                                             \includegraphics[scale=0.4]{halfcrossing}  \\
  Not half-crossing \ &\  Half-crossing \\
\end{tabular}

\vspace{1em}

\begin{tabular}{c}
  \includegraphics[scale=0.4]{gamma0fullcrossing} 
  \\
   Crossing
\end{tabular}
\caption{Examples of Schnyder woods of a toroidal triangulation that
  are respectively not half-crossing, half-crossing, crossing.}
\label{fig:cross-0-tri}
\end{figure}

The half-crossing property forces the monochromatic cycles to be all
homologous and not only weakly homologous as in
Lemma~\ref{lem:allhomotopic}:

\begin{lemma}
\label{lem:halfcyclehomologous}
In a half-crossing Schnyder wood, for $i\in\{0,1,2\}$, all the $i$-cycles are homologous.
\end{lemma}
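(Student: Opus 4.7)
The plan is to prove the lemma by contradiction: assume that for some color $i$ there exist two $i$-cycles $C_i^1, C_i^2$ that are not homologous. By Lemma~\ref{lem:allhomotopic} they are weakly homologous, non-contractible, and disjoint, so they must be reversely homologous, i.e.\ $[C_i^1]=-[C_i^2]$. As two disjoint parallel non-contractible simple curves on the torus, they bound two annular regions $A_1, A_2$.

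The key topological observation to establish is that, because $[C_i^1]+[C_i^2]=0$ in their native orientations, one of the two annuli, say $A_1$, lies on the left of both $C_i^1$ and $C_i^2$ (walking along each in its own orientation), while $A_2$ lies on the right of both; any other configuration would give $[\partial A_1]=\pm 2[C_i^1]\neq 0$. The Schnyder property then tells us that at every vertex $v\in C_i^1\cup C_i^2$ the outgoing edge $e_{i+1}(v)$, being the counterclockwise neighbour of $e_i(v)$ (which is tangent to the cycle in its own direction), departs $v$ into $A_1$; symmetrically $e_{i-1}(v)$ departs into $A_2$.

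From this I would deduce a trapping property: following outgoing $(i+1)$-edges from any vertex of $A_1\cup\partial A_1$ remains in $A_1\cup\partial A_1$, because edges of the embedding cannot cross the cycles $C_i^1, C_i^2$ and the boundary outgoing $(i+1)$-edges point into $A_1$. Since every vertex has out-degree one in color $i+1$ and $G$ is finite, this walk enters an $(i+1)$-cycle $C_{i+1}'\subseteq A_1\cup\partial A_1$. By Lemma~\ref{lem:allhomotopic} applied to color $i+1$, $C_{i+1}'$ is non-contractible, and any non-contractible simple cycle inside an annulus is homologous to $\pm$ the core, so $[C_{i+1}']=\pm[C_i^1]$. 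Applying Lemma~\ref{lem:allhomotopic} once more to color $i+1$, every $(i+1)$-cycle is weakly homologous to $C_{i+1}'$, and hence to $C_i^1$. The symmetric argument inside $A_2$ using outgoing $(i-1)$-edges yields an $(i-1)$-cycle inside $A_2\cup\partial A_2$, so every $(i-1)$-cycle is also weakly homologous to $C_i^1$.

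Consequently, every monochromatic cycle, in any of the three colors, would be weakly homologous to $C_i^1$. This contradicts the half-crossing hypothesis, since by definition there exist a $j$-cycle and a $k$-cycle with $j\neq k$ that cross, and Lemma~\ref{lem:twonothomotopic} forces two crossing cycles to be not weakly homologous. The main subtle step, and the one requiring most care, is the orientation analysis: one must verify that reverse-homology forces the ``both cycles on the same side of $A_1$'' configuration, so that the outgoing $(i+1)$-edges at all boundary vertices simultaneously enter $A_1$. Once this is granted, the trapping argument in $G_{i+1}$ and the homology-class conclusion are immediate.
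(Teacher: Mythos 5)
Your proof is correct but takes a genuinely different route from the paper's own argument. The paper's proof is short and direct: it fixes a crossing pair $C,C'$ (guaranteed by the half-crossing hypothesis), notes that at least one of them, say $C$, is not weakly homologous to the $i$-cycles (since the $i$-cycles are all weakly homologous to one another while $C$ and $C'$ are not), deduces from Lemma~\ref{lem:intersect2} that $C$ crosses every $i$-cycle, and then invokes the Schnyder property once to observe that all these crossings have the same sign — which immediately forces all $i$-cycles to be homologous to each other. You argue instead by contradiction: if two $i$-cycles were merely reversely homologous, they would bound two annuli each lying on a fixed side of both cycles, the Schnyder property would funnel $(i+1)$-edges (resp.\ $(i-1)$-edges) into one annulus (resp.\ the other), a trapping argument would produce a non-contractible $(i+1)$-cycle in the first annulus and a $(i-1)$-cycle in the second, and Lemma~\ref{lem:allhomotopic} would then force all monochromatic cycles of all three colors to be weakly homologous — contradicting (via Lemma~\ref{lem:twonothomotopic}) the existence of a crossing pair. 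Both routes are sound; the paper's leverages the crossing pair at the start to read off intersection signs directly, whereas yours deploys an annular decomposition and a trapping argument to derive a global statement (all monochromatic cycles weakly homologous) before reaching the same contradiction. Your version is longer and requires the careful orientation analysis you flag (which does hold: for reversely homologous parallel cycles one annulus lies on the same side of both), but it has the pedagogical advantage of making explicit the obstruction — the annular ``corridor'' that would trap cycles of the other two colors — and of being essentially self-contained beyond Lemmas~\ref{lem:allhomotopic} and~\ref{lem:twonothomotopic}.
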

\begin{proof}
  Consider a half-crossing Schnyder wood.  Let $C, C'$ be two
  monochromatic cycles of different colors that are crossing. Let $i\in\{0,1,2\}$.  By
  Lemma~\ref{lem:allhomotopic}, all $i$-cycles are weakly
  homologous. Cycles $C,C'$ are not weakly homologous, so  at least one
  of them is not weakly homologous to the $i$-cycles. Assume,
  w.l.o.g. that $C$ is not weakly homologous to the $i$-cycles. Then
  by   Lemma~\ref{lem:intersect2}, cycle $C$ crosses all the
  $i$-cycles. Thus all the $i$-cycles are entering and leaving $C$
  from the same sides. So they are all  homologous to each other.
\end{proof}

Consider a toroidal map $G$ given with a  Schnyder wood.
Let $\mathcal C_i$ be the set of $i$-cycles of $G$. Let
$(\mathcal C_i)^{-1}$ denote the set of cycles obtained by reversing
all the cycles of $\mathcal C_i$.

The following theorem gives a global structure of the monochromatic
cycles in the different cases that may occur
 (see Figure~\ref{fig:case-cross}):

\begin{theorem}
\label{lem:type-cross}
Let $G$ be a toroidal map given with a Schnyder wood.  Then, for
$i\in\{0,1,2\}$, all $i$-cycles are non-contractible, non intersecting
and weakly homologous.  Moreover, if the Schnyder wood is

\begin{itemize}
\item \emph{Not half-crossing:}

Then all monochromatic cycles are weakly homologous (even if they
have different colors).
\item \emph{Half-crossing:}

Then, for $i\in\{0,1,2\}$, all $i$-cycles are homologous. 
Moreover there exists a color $i$ such
that for any pair of monochromatic cycles $C_i,C_j$ of colors $i,j$,
with $i\neq j$, the two cycles $C_i$ and $C_j$ are crossing (we say
that the Schnyder wood is \emph{$i$-crossing} if we want to
specify the color $i$ in this case).
Moreover, if the Schnyder wood is:

\begin{itemize}
\item \emph{Not crossing:} 

Then the $(i-1)$-cycles and the
  $(i+1)$-cycles are reversely homologous. The Schnyder wood is not
  $(i-1)$-crossing and not $(i+1)$-crossing. Moreover, if the
  Schnyder wood is intersecting then
  $\mathcal C_{i-1}=(\mathcal C_{i+1})^{-1}$.

\item \emph{Crossing:} 

Then for every pair of two monochromatic cycles
  $C_i,C_j$ of different colors $i,j$, the two cycles $C_i$ and $C_j$
  are crossing. The Schnyder wood is $i$-crossing for all
  $i \in \{0,1,2\}$.
\end{itemize}
\end{itemize}
\end{theorem}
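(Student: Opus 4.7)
The plan is to unpack the statement case by case, invoking the earlier lemmas about monochromatic cycles. The initial claim that for each $i$ all $i$-cycles are non-contractible, non-intersecting, and weakly homologous is exactly Lemma~\ref{lem:allhomotopic}. For the "not half-crossing" case, I would observe that any two monochromatic cycles $C,C'$ of different colors do not cross, so either they are disjoint (and hence weakly homologous by the contrapositive of Lemma~\ref{lem:intersect2}) or they intersect but are not crossing, forcing them to be reversal by Lemma~\ref{lem:tworeversal} (and thus again weakly homologous). Either way $C$ and $C'$ are weakly homologous.

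Turning to the half-crossing case, Lemma~\ref{lem:halfcyclehomologous} guarantees that each color has a well-defined class $[i]\in H_1(G)\cong\mb{Z}^2$. Weak homology amounts to $\{[c],-[c]\}=\{[c'],-[c']\}$, which is an equivalence relation; since by hypothesis at least one pair of classes fails to be weakly homologous, transitivity rules out the possibility that exactly two pairs are weakly homologous. Hence either all three pairs fail (case (a)) or exactly one pair does (case (b)). In case (a), every pair of monochromatic cycles of distinct colors intersects by Lemma~\ref{lem:intersect2} and is not reversal (otherwise Lemma~\ref{lem:tworeversal} would force weak homology), hence crosses; this is the crossing subcase, and the wood is $i$-crossing for every $i$. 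In case (b), taking $i$ to be the unique color whose class is not weakly homologous to the other two, the same argument yields the $i$-crossing property and establishes the existence claim. Once $[i-1]\sim_w [i+1]$ is known, the remaining conclusions of the "not crossing" subcase follow easily: $(i-1)$- and $(i+1)$-cycles are disjoint or reversal but never crossing, so the wood is neither $(i-1)$- nor $(i+1)$-crossing; and if the wood is moreover intersecting, Lemma~\ref{lem:tworeversal} forces every intersecting pair to be reversal, yielding $\mc{C}_{i-1}\subseteq(\mc{C}_{i+1})^{-1}$ and, by the symmetric argument on every $(i+1)$-cycle, the converse inclusion.

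The hard part is to upgrade $[i-1]\sim_w [i+1]$ to $[i-1]=-[i+1]$, i.e., rule out $[i-1]=[i+1]$. Suppose for contradiction $[i-1]=[i+1]=h$. If any $(i-1)$-cycle met any $(i+1)$-cycle, Lemma~\ref{lem:tworeversal} would make them reversal, giving $[i-1]=-[i+1]$ and hence $2h=0$ in the torsion-free group $H_1(G)$, so $h=0$, contradicting non-contractibility. Therefore all $(i-1)$- and $(i+1)$-cycles are pairwise disjoint parallel curves of class $h$. The Schnyder property then pins down which side the forbidden-color outgoing edges sit on: at a vertex of any $(i-1)$-cycle the outgoing $(i+1)$-edge lies on a fixed side determined by the counterclockwise rule $e_0,e_1,e_2$ and the incoming-sector rule, and symmetrically at vertices of $(i+1)$-cycles for outgoing $(i-1)$-edges. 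Passing to the universal cover, where these two cycle families lift to infinite parallel paths all pointing in direction $h$ and the $i$-cycles (whose class is transverse to $h$) must cross every strip between consecutive such paths, this prescribed side-labeling cannot be maintained globally. Deriving this topological contradiction is the principal obstacle; after that, combining the three items in a single theorem statement requires only bookkeeping already done above.
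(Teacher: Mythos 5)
Your proof follows the paper's overall structure through the first two cases (the opening claim is Lemma~\ref{lem:allhomotopic} and the not-half-crossing case is handled the same way), and your treatment of the half-crossing case uses a genuinely different and arguably cleaner device: you observe that weak homology is an equivalence relation on the three color classes $[0],[1],[2]\in H_1(G)$, so that once one pair fails to be weakly homologous, transitivity forces either zero or exactly one surviving pair, immediately yielding the trichotomy ``crossing'' vs. ``$i$-crossing for a unique $i$.'' The paper instead argues by a small pigeonhole on the two cycles $C,C'$ across the three colors; both work, but your algebraic framing is more transparent.

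However, at the step you yourself flag as ``the principal obstacle''---upgrading $[i-1]\sim_w[i+1]$ to $[i-1]=-[i+1]$---there is a genuine gap. You correctly rule out the meeting case (reversal would give $2h=0$ in the torsion-free group $H_1\cong\mb{Z}^2$, contradicting non-contractibility), but the remaining case ($[i-1]=[i+1]$ with all cycles disjoint) is only sketched via ``a prescribed side-labeling in the universal cover cannot be maintained globally,'' without an actual contradiction derived. The paper's argument here is short and you should reach for it: since the wood is $i$-crossing, any $(i-1)$-cycle $C_{i-1}$ and $(i+1)$-cycle $C_{i+1}$ each cross some $i$-cycle $C_i$, and the Schnyder property (the fixed counterclockwise arrangement $e_0(v),e_1(v),e_2(v)$ together with the incoming-sector rule) forces $C_{i-1}$ to cross $C_i$ from left to right and $C_{i+1}$ to cross $C_i$ from right to left at every crossing vertex. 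Two weakly homologous classes whose signed intersection numbers with $C_i$ have opposite signs must be negatives of each other, i.e., reversely homologous. This one observation replaces your incomplete topological argument. With that inserted, the rest of your writeup (not-$(i\pm1)$-crossing, and $\mc{C}_{i-1}=(\mc{C}_{i+1})^{-1}$ in the intersecting subcase via Lemma~\ref{lem:tworeversal}) is correct and matches the paper.
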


\begin{proof}
  By Lemma~\ref{lem:allhomotopic}, for a given color $i\in\{0,1,2\}$,
  all $i$-cycles are non-contractible, non intersecting and weakly
  homologous.   Now we consider the two cases whether the Schnyder wood is
  half-crossing or not:

  \begin{itemize}

  \item \emph{The Schnyder wood is not half-crossing:} 

    Consider two cycles $C,C'$ of distinct colors.  Then by definition
    of half-crossing, $C,C'$ are either not intersecting or
    reversal. If they are not intersecting, they are weakly homologous
    by Lemma~\ref{lem:intersect2}, if they are reversal also by
    definition. So $C,C'$ are weakly homologous.

  \item \emph{The Schnyder wood is half-crossing:}

    Let $C, C'$ be two monochromatic cycles of different colors that
    are crossing. For $i\in\{0,1,2\}$, by
    Lemma~\ref{lem:allhomotopic}, the $i$-cycles are
    homologous. Cycles $C,C'$ are not weakly homologous, so for all
    $i\in\{0,1,2\}$, at least one of them is not weakly homologous to
    the $i$-cycles. There is three value for $i$ and two cycles
    $C,C'$. So there exists $i\in\{0,1,2\}$, such that at least one of
    $C,C'$ is not weakly homologous to all the $(i-1)$-cycles and
    $(i+1)$-cycles. Assume w.l.o.g. that $C$ satisfies the property.
    Then by Lemma~\ref{lem:intersect2}, cycle $C$ intersects, and thus
    crosses, all the $(i-1)$- and $(i+1)$-cycles.  Then $C$ is of
    color $i$ and all $i$-cycles are homologous to $C$ so they all
    crosses the $(i-1)$- and $(i+1)$-cycles and the Schnyder wood is
    $i$-crossing.  

    Now we consider the two cases whether the Schnyder wood is
    crossing or not (we still consider $i$ as above such that the
    Schnyder wood is $i$-crossing):

  \begin{itemize}
  \item \emph{The Schnyder wood is not crossing:} 

    Since the Schnyder wood is not crossing there exists two cycles of
    different colors that are not crossing. None of these cycles are
    of color $i$ since the Schnyder wood is $i$-crossing.  So there
    exists a $(i-1)$-cycle $C_{i-1}$ and a $(i+1)$-cycle $C_{i+1}$
    that are not crossing. If they are intersecting, then they are
    reversal, and so weakly homologous by definition of crossing. If
    they are not intersecting, then they are also weakly homologous by
    Lemma~\ref{lem:intersect2}. So in both cases, $C_{i-1}, C_{i+1}$
    are weakly homologous and not crossing. They both crosses a
    $i$-cycle $C_i$, and by the Schnyder property $C_{i-1}$ is
    crossing $C_i$ from left to right, and $C_{i+1}$ is crossing $C_i$
    from right to left. So $C_{i-1}, C_{i+1}$ are reversely
    homologous, so the $(i-1)$-cycles and the $(i+1)$-cycles are
    reversely homologous and by Lemma~\ref{lem:tworeversal}. Moreover
    the Schnyder wood is not $(i-1)$-crossing or $(i+1)$-crossing
    since $C_{i-1}, C_{i+1}$ are not crossing..

  Suppose moreover that the Schnyder wood is intersecting.  Let
  $C'_{i-1}$ be any $(i-1)$-cycle.  By assumption, $C'_{i-1}$
  intersects a $(i+1)$-cycle $C'_{i+1}$. Cycles $C'_{i-1}, C'_{i+1}$
  are reversely homologous. So by Lemma~\ref{lem:tworeversal},
  $C'_{i-1}$ and $C'_{i+1}$ are reversal. Thus
  $\mathcal C_{i-1}\subseteq(\mathcal C_{i+1})^{-1}$ and by symmetry
  $\mathcal C_{i-1}=(\mathcal C_{i+1})^{-1}$.

 \item \emph{The Schnyder wood is  crossing:}

   Since the Schnyder wood is $i$-crossing we just have to prove that
   $(i-1)$-cycles and $(i+1)$-cycles crosses each other.  Consider a
   $(i-1)$-cycle $C_{i-1}$ and a $(i+1)$-cycle $C_{i+1}$. Cycle
   $C_{i-1}$ crosses at least one $(i+1)$-cycle $C'_{i+1}$. All the
   $(i+1)$-cycles are homologous to each other so $C_{i-1},C_{i+1}$
   are also crossing.
  \end{itemize}
  \end{itemize}
\end{proof}

Figure~\ref{fig:case-cross} illustrates the different case of
Theorem~\ref{lem:type-cross}. 
 
\begin{figure}[!h]
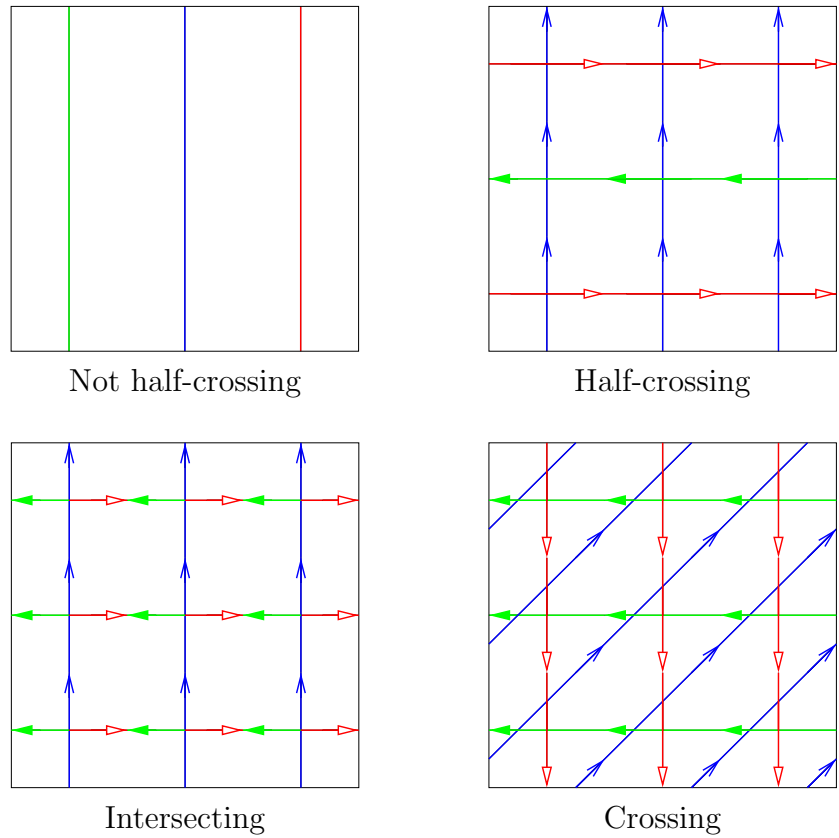

\center
\begin{tabular}{ccc}
\includegraphics[scale=0.3]{case1d}
& \hspace{2em} & 
\includegraphics[scale=0.3]{case2d-bis}
\\
Not half-crossing & & Half-crossing\\
\\
\includegraphics[scale=0.3]{case2d} 
& \hspace{2em} & 
\includegraphics[scale=0.3]{case3d} \\
Intersecting & & Crossing\\
\end{tabular}

\caption{The global structure of monochromatic cycles in Schnyder
  woods.}
\label{fig:case-cross}
\end{figure}

Note that a Schnyder wood that is not half-crossing has winding
numbers $\omega_0=\omega_1=\omega_2=0$, a $i$-crossing Schnyder wood
that is not crossing has $\omega_{i-1}=\omega_{i+1}=0$, and a crossing
Schnyder wood has $\omega_0\neq 0, \omega_1\neq 0, \omega_2\neq 0$.

Half-crossing Schnyder
woods are  particularly interesting  since they enable to
define regions in the universal cover, generalizing
Figure~\ref{fig:regions-planar}, that behave similarly as in
the planar case (see Section~\ref{sec:crossinguniversalcover}).

Half-crossing Schnyder woods of a given toroidal map have the nice
property to be all homologous to each other (see
Chapter~\ref{sec:balanced}), thus they all lie in the same
lattice. This lattice behave as a canonical lattice that can be used
to obtain interesting bijections (see Chapter~\ref{chap:encoding}).

We prove in Chapter~\ref{chap:generalexistence} that every essentially
3-connected toroidal map admits an intersecting Schnyder wood. In
fact, we prove that every essentially 3-connected toroidal map admits
a crossing Schnyder wood except for a very simple family of maps
depicted on Figure~\ref{fig:basic} (a non-contractible cycle on $n$
vertices, $n\geq 1$, plus $n$ homologous loops).

\begin{figure}[!h]
\center
\includegraphics[scale=0.5]{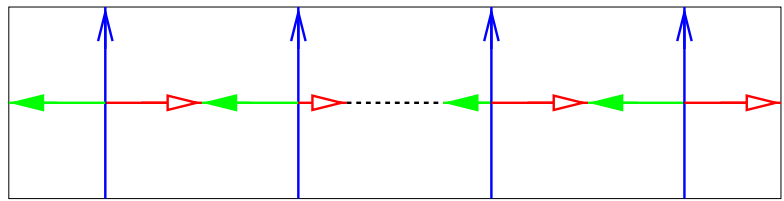}
\caption{The family of basic toroidal maps that admits intersecting
  Schnyder woods but no crossing Schnyder woods.}
\label{fig:basic}
\end{figure}

Intersecting Schnyder woods, can be used to embed a toroidal map on an
orthogonal surfaces and this is particularly interesting for graph
drawing purpose (see Chapter~\ref{sec:ortho}).

\section{Crossing  and duality}

Consider a toroidal map $G$ given with a Schnyder wood.  Let
$\overline{G}$ be a simultaneous drawing of $G$ and $G^*$ such that
only dual edges intersect (see example of
Figure~\ref{fig:example-superposition}).

\begin{figure}[!h]
\center
\includegraphics[scale=0.4]{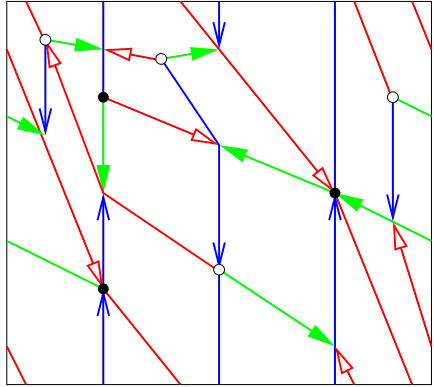}
\caption{Simultaneous drawing of the primal and dual Schnyder wood of
  Figure~\ref{fig:example-dual-tore}.}
\label{fig:example-superposition}
\end{figure}

The dual of a Schnyder wood is defined in such a way that an edge of
$G$ and an edge of $G^*$ of the same color never intersect in
$\overline{G}$ (see Figure~\ref{fig:edgelabelingdual}). Thus the
$i$-cycles of $G^*$ are weakly homologous to $i$-cycles of $G$.  When
the Schnyder wood is half-crossing, these cycles are in fact reversely
homologous:

\begin{lemma}
\label{lem:dualreverselyhomologous}
If the Schnyder wood is half-crossing, then, for $i\in\{0,1,2\}$, the
$i$-cycles of $G^*$ are reversely homologous to the $i$-cycles of $G$.
\end{lemma}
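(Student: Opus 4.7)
The plan is to prove the lemma by a signed-intersection argument on the oriented torus, arguing by contradiction against the possibility that $C^*_i$ is homologous rather than reversely homologous to a primal $i$-cycle $C_i$. Recall from the paragraph just above the lemma that any dual $i$-cycle $C^*_i$ is already weakly homologous to every primal $i$-cycle, and Lemma~\ref{lem:halfcyclehomologous} (applicable because the Schnyder wood is half-crossing) gives that all primal $i$-cycles are homologous. Hence $[C_i]$ is a well-defined homology class on the torus and $[C^*_i]=\pm[C_i]$; I would rule out the $+$ sign.

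Next I would produce a primal cycle of color $i-1$ or $i+1$ that crosses $C_i$. By Theorem~\ref{lem:type-cross}, the half-crossing hypothesis supplies a color $i'$ such that every primal $i'$-cycle crosses every primal $j$-cycle with $j\ne i'$. If $i'=i$, then $C_i$ crosses both primal $(i-1)$- and $(i+1)$-cycles; otherwise $i'\in\{i-1,i+1\}$ and $C_{i'}$ already crosses $C_i$. Either way one obtains a color $j\in\{i-1,i+1\}$ and a primal $j$-cycle $C_j$ that crosses $C_i$; let $n\ge 1$ denote the number of these primal-primal crossings, and assume $j=i+1$ by symmetry.

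The heart of the argument is to compare the torus intersection numbers $C_{i+1}\cdot C_i$ and $C_{i+1}\cdot C^*_i$. At each primal-primal crossing at a vertex $v$, the two tangent vectors are the outgoing edges $e_{i+1}(v)$ and $e_i(v)$, which by the Schnyder property lie in counterclockwise order around $v$; therefore the pair (tangent of $C_{i+1}$, tangent of $C_i$) is a clockwise basis of the tangent plane, each crossing contributes $-1$, and summing gives $C_{i+1}\cdot C_i=-n$. At each primal-dual crossing a primal edge $e\in C_{i+1}$ meets its dual $e^*\in C^*_i$ in one point. The main technical step is a case analysis over the primal types permitted in the toroidal setting (type~1 of color $i+1$; and the two orientations of a type~2 edge whose outgoing colors are $i+1$ and $i-1$), using the dual rule of Figure~\ref{fig:edgelabelingdual} together with Figure~\ref{fig:edgelabeling}, to verify that in every case the $i$-colored half of $e^*$ is attached to the face on the right of the $(i+1)$-direction of $e$ and that $C^*_i$ traverses $e^*$ from that right face to the left face. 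This forces the pair (tangent of $C_{i+1}$, tangent of $C^*_i$) to be a counterclockwise basis at every primal-dual crossing, each contributing $+1$, so $C_{i+1}\cdot C^*_i=n^*\ge 0$.

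Once these two signed intersection counts are in hand, the conclusion is immediate: if $[C^*_i]=[C_i]$ then $C_{i+1}\cdot C^*_i=C_{i+1}\cdot C_i$, i.e.\ $n^*=-n$, which is impossible because $n\ge 1$ and $n^*\ge 0$. Hence $[C^*_i]=-[C_i]$, which is exactly reverse homology; the case $j=i-1$ is symmetric.
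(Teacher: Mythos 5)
Your proof is correct and follows the same basic strategy as the paper's: both arguments reduce the claim to an ``opposite crossing direction'' observation at the primal--dual edge crossings prescribed by Figure~\ref{fig:edgelabelingdual}, using Theorem~\ref{lem:type-cross} to produce a monochromatic cycle of another color that crosses $C_i$. The only difference is one of framing: the paper compares how $C_j$ and $C^*_j$ cross a fixed cycle in $\{C_i,C_{i+1}\}$ and phrases the conclusion informally as ``crossing in opposite direction,'' whereas you fix the crossing partner $C_{i+1}$ and compare the signed intersection numbers $C_{i+1}\cdot C_i$ and $C_{i+1}\cdot C^*_i$ — making the sign computations at both the primal--primal vertices (via the CCW order $e_0,e_1,e_2$) and the primal--dual edge midpoints explicit. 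Your version is somewhat more rigorous on the sign bookkeeping but is otherwise the same proof.
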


\begin{proof}
  By Theorem~\ref{lem:type-cross}, there is a color $i$ such that the
  Schnyder wood is $i$-crossing.  Let $C_i,C_{i+1}$ be a $i$- and
  $(i+1)$-cycle of $G$.  The two cycles $C_i$ and $C_{i+1}$ are
  crossing. Let $j$ be a color in $\{0,1,2\}$. Let $C_j$ and $C^*_j$
  be $j$-cycles of the primal and the dual Schnyder wood
  respectively. Since $C_i,C_{i+1}$ are crossing, cycle $C_j$ is
  crossing at least one of $C_i,C_{i+1}$. Since $C^*_j$ is weakly
  homologous to $C_j$, thus it is crossing the same cycles of
  $\{C_i,C_{i+1}\}$ as $C_j$ but in opposite direction by the dual
  rules of Figure~\ref{fig:edgelabelingdual}.  So $C_j$ and $C^*_j$ are
  reversely homologous.
\end{proof}

By Lemma~\ref{lem:dualreverselyhomologous}, the property of being
$i$-crossing thus goes to duality, i.e. a Schnyder wood is
$i$-crossing if and only if the dual Schnyder wood is
$i$-crossing. Consequently the same is true for the half-crossing and
crossing properties by Theorem~\ref{lem:type-cross}.  The intersecting
property also goes to duality but this is more difficult to
prove. Finally, we have the following:

\begin{proposition}
\label{th:dualtorecross}
A Schnyder wood of a toroidal map is crossing (resp. $i$-crossing,
half-crossing, intersecting) if and only if the dual Schnyder wood is
crossing (resp. $i$-crossing, half-crossing, intersecting).
\end{proposition}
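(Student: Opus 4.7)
The strategy combines a homology argument (foreshadowed in the paragraph preceding the proposition, handling half-crossing, $i$-crossing, and crossing) with a more delicate structural analysis for the intersecting property.

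I handle the three easier properties together. Assume the primal Schnyder wood is half-crossing. By Lemma~\ref{lem:halfcyclehomologous}, for each color $j$ all primal $j$-cycles share a common homology class $\alpha_j$, and by Lemma~\ref{lem:dualreverselyhomologous} the dual $j$-cycles all have class $-\alpha_j$. Combining Theorem~\ref{lem:type-cross} with Lemma~\ref{lem:intersect2}, the half-crossing, $i$-crossing, and crossing properties are each characterised by which of the three pairs in $\{\alpha_0,\alpha_1,\alpha_2\}$ are linearly independent in the first homology group of the torus. These conditions are invariant under the sign changes $\alpha_j \mapsto -\alpha_j$, so they transfer from primal to dual; the converse directions follow by symmetry, since the dual of the dual is the primal.

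For the intersecting case, suppose the primal is intersecting. In particular it is half-crossing, so by the above the dual is half-crossing. If the primal is moreover crossing then so is the dual, hence intersecting. Otherwise Theorem~\ref{lem:type-cross} provides a color $i$ with the primal $i$-crossing and $\mathcal C_{i-1}=(\mathcal C_{i+1})^{-1}$; by the preceding paragraph the dual is $i$-crossing as well. Appealing once more to Theorem~\ref{lem:type-cross}, the dual will be intersecting as soon as $\mathcal C^*_{i-1}=(\mathcal C^*_{i+1})^{-1}$. The plan is to exploit the interplay of edge types under duality: a local inspection of Figures~\ref{fig:edgelabeling} and~\ref{fig:edgelabelingdual} shows that the primal-to-dual correspondence swaps type~1 (single-directed) and type~2 (bi-directed) edges, and more precisely a primal type~1 edge of color $c$ corresponds to a dual type~2 edge bi-directed in colors $c-1$ and $c+1$, while a primal type~2 edge with color pair $\{j-1,j+1\}$ corresponds to a dual type~1 edge of color $j$. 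The primal edges of $\mathcal C_{i-1}=(\mathcal C_{i+1})^{-1}$ are exactly the primal type~2 edges with color pair $\{i-1,i+1\}$. Following a dual $(i-1)$-cycle through this dictionary, the aim is to show that every edge it traverses is a dual type~2 edge with color pair $\{i-1,i+1\}$, so that as an unoriented walk it coincides with a dual $(i+1)$-cycle reversed, yielding the desired equality.

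The principal obstacle is precisely this last step: ruling out that a dual $(i-1)$-cycle contains a single-directed dual edge of color $i-1$, which by the dictionary would correspond to a primal type~2 edge with color pair $\{i-2,i\}=\{i+1,i\}$ and hence lie outside $\mathcal C_{i\pm 1}$. Excluding such edges will require the primal $i$-crossing property together with a topological argument in the torus about how monochromatic cycles of distinct homology classes can interact along a prescribed class; in particular, one must show that the single-directed dual color-$(i-1)$ edges, being located along primal type~2 edges with a different color pair, are "transverse" to the homology class $\alpha_{i-1}$ and thus cannot assemble into a closed walk of that class.
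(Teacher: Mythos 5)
Your treatment of the crossing, $i$-crossing, and half-crossing cases is correct and is essentially the paper's own argument: half-crossing forces all $j$-cycles to be homologous (Lemma~\ref{lem:halfcyclehomologous}), Lemma~\ref{lem:dualreverselyhomologous} flips the sign of each class, and the three properties are homology-class statements invariant under sign changes. No issue there.

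The intersecting case is where you have a genuine gap, and you are candid about it yourself. Your edge-type dictionary is correct (on the torus there are no type~0 edges, and duality swaps type~1 and type~2, sending a primal type~1 edge of color $c$ to a dual type~2 edge bi-colored $\{c-1,c+1\}$ and a primal type~2 edge of color pair $\{j-1,j+1\}$ to a dual type~1 edge of color $j$), and your target identity $\mathcal C^*_{i-1}=(\mathcal C^*_{i+1})^{-1}$ would indeed finish the job. But the central claim --- that a dual $(i-1)$-cycle cannot traverse a single-directed dual edge of color $i-1$ --- is exactly the content you would need to prove, and nothing in the proposal does so. There is no \emph{a priori} obstruction to a dual $(i-1)$-cycle running through such an edge; these dual edges sit on primal type~2 edges bi-colored $\{i,i+1\}$, and ruling them out of the dual $(i-1)$-cycles requires an actual topological argument, not the hope of one. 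As written, the ``transversality'' heuristic in your last sentence is not a proof.

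For contrast, the paper does not go through the cycle identity at all. It argues directly by contradiction: assuming the dual is not intersecting, it picks a dual $(i-1)$-cycle $C^*$ that misses all dual $(i+1)$-cycles, brackets it by the nearest primal $(i-1)$-cycle $C$, and then pins down (via the Schnyder rule and the reversely-homologous relation) a nonempty subregion bounded by $C^*$ and a dual $(i-1)$-cycle $C^*_{i-1}$ that contains no primal $(i-1)$-cycle; a primal vertex $v$ inside this region then has nowhere for $P_{i-1}(v)$ to go, which is absurd. That region-trapping step is the piece of work your proposal defers. If you want to stay on your route, you would have to prove a lemma of the form: in a primal $i$-crossing, intersecting, non-crossing Schnyder wood, every primal type~2 edge bi-colored $\{i,i+1\}$ is ``sealed off'' from the dual $(i-1)$-flow --- and that is not easier than the paper's direct region argument.
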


\begin{proof}
  The result is clear by Theorem~\ref{lem:type-cross} and
  Lemma~\ref{lem:dualreverselyhomologous} for crossing, $i$-crossing
  and half-crossing properties.

Consider now an intersecting Schnyder wood that is not crossing.  By
Theorem~\ref{lem:type-cross} and previous remarks, there is a color
$i$ such that the primal and dual Schnyder woods are
$i$-crossing. Suppose by contradiction that the dual Schnyder wood is
not intersecting.  Then there is a $j$-cycle $C^*$, with
$j\in \{i-1,i+1\}$, that is not intersecting a monochromatic cycle of
color in $\{i-1,i+1\}\setminus \{j\}$.  By symmetry we can assume that
$C^*$ is of color $i-1$.  By Lemma~\ref{lem:dualreverselyhomologous},
the $(i-1)$-cycles of $G^*$ are reversely homologous to the
$(i-1)$-cycles of $G$.  Let $C$ be the $(i-1)$-cycle of the primal
that is the first on the right side of $C^*$ in $\overline{G}$. By
Theorem~\ref{lem:type-cross}, cycle $C^{-1}$ is a $(i+1)$-cycle of
$G$. Let $R$ be the region delimited by $C^*$ and $C$ situated on the
right side of $C^*$.  Cycle $C^*$ is not a $(i+1)$-cycle so there is
at least one edge of color $i+1$ leaving a vertex of $C^*$. By the
Schnyder rule, this edge is entering the interior of the region
$R$. An edge of $G^*$ of color $i+1$ cannot intersect $C$ and cannot
enter $C^*$ from its right side. So in the interior of the region $R$
there is at least one $(i+1)$-cycle $C^{*}_{i+1}$ of $G^*$.  Cycle
$C^{*}_{i+1}$ is reversely homologous to $C^{-1}$ by
Lemma~\ref{lem:dualreverselyhomologous}. If $C^{*}_{i+1}$ is not a
$(i-1)$-cycle, then we can define $R'\subsetneq R$ the region
delimited by $C^{*}_{i+1}$ and $C$ situated on the left side of
$C^{*}_{i+1}$ and as before we can prove that there is a $(i-1)$-cycle
of $G^*$ in the interior of $R'$. So in any case, there is a
$(i-1)$-cycle $C^*_{i-1}$ of $G^*$ in the interior of $R$ and
$C^{*}_{i-1}$ is homologous to $C^{*}$. Let $R''\subsetneq R$ be the
region delimited by $C^*$ and $C^*_{i-1}$ situated on the right side
of $C^*$. Clearly $R''$ does not contain $C$. Thus by definition of
$C$, the region $R''$ does not contain any $(i-1)$-cycle of $G$. But
$R''$ is non empty and contains at least one vertex $v$ of $G$. The
path $P_{i-1}(v)$ cannot leave $R''$, a contradiction. So the dual
Schnyder wood is intersecting and we have the result.
  \end{proof}

\section{Crossing  in the universal cover}
\label{sec:crossinguniversalcover}

Consider a toroidal map $G$, given with a Schnyder wood. Consider the
corresponding orientation and coloring of the edges of the universal
cover $G^\infty$.

As monochromatic cycles are non-contractible by
Theorem~\ref{lem:type-cross}, a $i$-cycle corresponds to a family of
infinite directed monochromatic paths of $G^\infty$ (infinite in both
directions of the path). These paths are called \emph{monochromatic
  lines} of color $i$, or \emph{$i$-lines} for short. Since all the
$i$-cycles are non intersecting and weakly homologous, all the
$i$-lines are non intersecting either. The $i$-lines are kind of
``parallel'' in $G^\infty$. Moreover if the Schnyder wood is
half-crossing all the $i$-cycles are homologous, so all the $i$-lines
are oriented in the same direction.

The definition of $P_i(v)$ (defined in
Section~\ref{sec:universalcover}) when restricted to the toroidal case
is the following: For each color $i$, vertex $v$ is the starting
vertex of a infinite path of color $i$, denoted $P_i(v)$.  By
Lemma~\ref{lem:nocommongeneral}, for every vertex $v$, the three paths
$P_0(v)$, $P_1(v)$, $P_2(v)$ are disjoint on all their vertices except
on $v$. Thus they divide $G^\infty$ into three unbounded regions
$R_0(v)$, $R_1(v)$ and $R_2(v)$, where $R_i(v)$ denotes the region
delimited by the two paths $P_{i-1}(v)$ and $P_{i+1}(v)$ (see
Figure~\ref{fig:regions-tore}). Let
$R_i^{\circ}(v)=R_i(v)\setminus (P_{i-1}(v)\cup P_{i+1}(v))$.  As
$P_i(v)$ is infinite, it necessarily contains two vertices $u,u'$ of
$G^\infty$ that are copies of the same vertex of $G$. The subpath of
$P_i(v)$ between $u$ and $u'$ corresponds to a $i$-cycle of $G$ and
thus is part of a $i$-line of $G^\infty$.  Let $L_i(v)$ be the
$i$-line intersecting $P_i(v)$.
 
\begin{figure}[!h]
\center

\scalebox{0.5}{\input{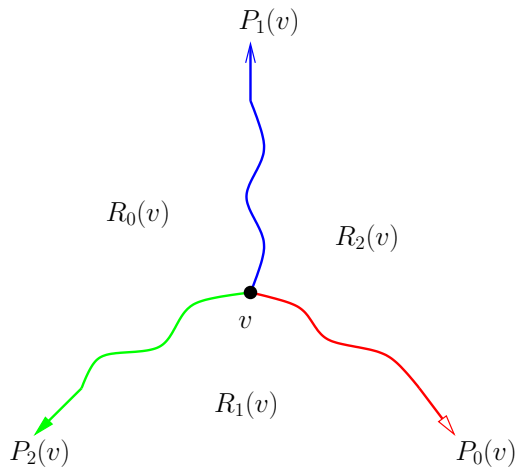}}
\caption{Regions corresponding to a vertex}
\label{fig:regions-tore}
\end{figure}

When the Schnyder wood has the global property of being half-crossing,
then it is well structured and the regions $R_i(v)$ behave
similarly as in the planar case (see \cite{Fel01,Fel03}):

\begin{lemma}
  \label{lem:regionss} 
  If the Schnyder wood is half-crossing, then for all distinct
  vertices $u$, $v$, we have:
\begin{description}
\item[(i)] If $u\in R_i (v)$, then $R_i(u)\subseteq R_i(v)$.
\item[(ii)] If $u\in R_i^\circ (v)$, then $R_i(u)\subsetneq R_i(v)$.
\item[(iii)] There exists $i$ and $j$ with $R_i(u) \subsetneq R_i(v)$
  and $R_j(v)\subsetneq R_j (u)$.
\end{description}
\end{lemma}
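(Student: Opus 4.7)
The plan is to adapt Felsner's argument from the planar case~\cite{Fel01, Fel03} to the universal cover $G^\infty$, exploiting the half-crossing hypothesis via Lemma~\ref{lem:halfcyclehomologous} (which guarantees all $i$-cycles, hence all $i$-lines in $G^\infty$, are homologous and therefore ``parallel''), together with Lemma~\ref{lem:nocommongeneral} (the three monochromatic paths from a single vertex share only that vertex) and Lemma~\ref{lem:nodirectedcycle} (any two-color subgraph $G^\infty_j \cup (G^\infty_k)^{-1}$ is acyclic).

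For (i), I would show that $P_{i-1}(u) \subseteq \overline{R_i(v)}$ and $P_{i+1}(u) \subseteq \overline{R_i(v)}$, which directly implies $R_i(u) \subseteq R_i(v)$. The crux is to analyze what happens when $P_{i-1}(u)$ meets the boundary of $R_i(v)$. If it meets $P_{i-1}(v)$ at a vertex $w$, then by the uniqueness of outgoing $(i-1)$-edges the two paths merge and $P_{i-1}(u)$ stays on $P_{i-1}(v) \subseteq \overline{R_i(v)}$. The problematic case is a topological crossing with $P_{i+1}(v)$ at a vertex $w \ne u$. The Schnyder angle property at $w$ makes this a genuine crossing (the incoming $(i-1)$-edges at $w$ sit in the counterclockwise sector $(e_i(w), e_{i+1}(w))$, while $e_{i-1}(w)$ sits on the opposite side of $P_{i+1}(v)$). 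The subpaths then give $u \to w$ in $G^\infty_{i-1}$ and $v \to w$ in $G^\infty_{i+1}$, hence a directed walk from $u$ to $v$ in $G^\infty_{i-1} \cup (G^\infty_{i+1})^{-1}$. Combining $u \in R_i(v)$ (a topological separation) with the parallel $i$-line structure and the behavior of $P_{i+1}(u)$ relative to $P_{i+1}(v)$, one produces a complementary walk from $v$ back to $u$ in the same two-color subgraph, contradicting Lemma~\ref{lem:nodirectedcycle}.

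For (ii), part (i) already gives $R_i(u) \subseteq R_i(v)$; strictness follows from the fact that a region $R_i$ is uniquely determined by its apex (the common endpoint of its two bounding monochromatic paths), so $R_i(u) = R_i(v)$ would force $u = v$. Since $u \in R_i^\circ(v)$ implies $u \ne v$, the inclusion must be strict. For (iii), the three regions $R_0(v), R_1(v), R_2(v)$ cover $G^\infty$, so for distinct $u$ and $v$ one finds an index $i$ with $u \in R_i^\circ(v)$ (and applies (ii)); if $u$ happens to lie on a boundary path $P_j(v)$ instead, then $P_j(u)$ is a strict suffix of $P_j(v)$ and the region on the appropriate side is strictly smaller by a direct comparison of the bounding paths. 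Swapping the roles of $u$ and $v$ and repeating the argument produces an index $j$ with $R_j(v) \subsetneq R_j(u)$.

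The hardest step will be constructing the complementary walk in (i): the planar proof uses the outer face as a ``global'' obstruction, but in the unbounded universal cover this must be replaced by a careful exploitation of the parallelism of monochromatic lines guaranteed by the half-crossing hypothesis and Lemma~\ref{lem:halfcyclehomologous}. Once that topological crossing is ruled out, (ii) and (iii) are relatively formal consequences.
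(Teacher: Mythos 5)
Your plan for part (i) contains a genuine gap, and it is precisely where the half-crossing hypothesis must enter. You claim that establishing $P_{i-1}(u)\subseteq\overline{R_i(v)}$ and $P_{i+1}(u)\subseteq\overline{R_i(v)}$ ``directly implies'' $R_i(u)\subseteq R_i(v)$, but this implication is false. The boundary $P_{i-1}(u)\cup\{u\}\cup P_{i+1}(u)$ separates the plane into two unbounded pieces, and knowing that this boundary lies inside $\overline{R_i(v)}$ only tells you the pieces are nested with $R_i(v)$ --- it does not tell you \emph{which} piece is $R_i(u)$. It is entirely consistent with your containment that all three paths $P_{i-1}(u)$, $P_i(u)$, $P_{i+1}(u)$ remain in $\overline{R_i(v)}$, in which case $R_i(u)$ (the piece \emph{not} containing $P_i(u)$) is the one containing the complement of $R_i(v)$, so that $R_i(u)\supsetneq R_i(v)$ rather than $\subsetneq$. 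Ruling this out requires showing that $P_i(u)$ escapes $R_i(v)$, and that is exactly what the paper's proof extracts from half-crossing: one chooses $k\in\{i-1,i\}$ so that the Schnyder wood is $k$-crossing (Theorem~\ref{lem:type-cross}), concludes that every $i$-line crosses every $(i-1)$-line exactly once, places $v$ in a strip $R$ between two consecutive $(i-1)$-lines, and shows (Claim~\ref{cl:leaves}) that $P_i(w)$ leaves $R$ for every $w\in R$. That claim, applied first to $v$ (giving $R_i(v)\subseteq R$) and then to $u$ (giving that $P_i(u)$ exits $R$ and hence $R_i(v)$), is the missing ingredient that orients the regions. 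Figure~\ref{fig:regions-nothalf} is precisely a counterexample to your claimed implication when half-crossing fails.

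A second, lesser point: the cycle-construction you sketch (producing ``a complementary walk from $v$ back to $u$'' and invoking Lemma~\ref{lem:nodirectedcycle}) is not the right mechanism here, and it is unclear how it would even close. Your own local analysis at the crossing vertex $w$ --- that the incoming $(i-1)$-edge at $w$ lies on the right of $P_{i+1}(v)$ while $R_i(v)$ lies on its left --- already shows, without any cycle argument, that a path $P_{i-1}(u)$ starting inside $R_i(v)$ can only cross $P_{i+1}(v)$ from outside to inside and so cannot escape (and symmetrically for $P_{i+1}(u)$ against $P_{i-1}(v)$). That part of your reasoning is sound; the unresolved problem is the orientation of $R_i(u)$ discussed above. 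On (ii), your ``unique apex'' observation (two distinct vertices $u\ne v$ cannot have $R_i(u)=R_i(v)$ since the apex is the unique common point of the two bounding paths by Lemma~\ref{lem:nocommongeneral}) is a clean alternative to the paper's face-counting argument and does work once (i) is in place. On (iii), the case $u\in P_j(v)$ is handled too loosely: the paper resolves it by invoking Lemma~\ref{lem:nocommongeneral} to find a color $j-1$ with $v\notin P_{j-1}(u)$ and then applying (i) to the region $R_{j+1}$; your phrase ``by a direct comparison of the bounding paths'' would need to be made precise along similar lines.
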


\begin{proof}
  (i) Suppose that $u\in R_i (v)$. By Theorem~\ref{lem:type-cross},
  there exist $k\in\{0,1,2\}$ such that the Schnyder wood is
  $k$-crossing.  By symmetry we can assume that $k\in\{i-1,i\}$.  Then
  in $G$, $i$-cycles are not weakly homologous to $(i-1)$-cycles. Thus
  in $G^\infty$, every $i$-line crosses every $(i-1)$-line. Moreover a
  $i$-line crosses a $(i-1)$-line exactly once and from its right side
  to its left side.  Vertex $v$ is between two consecutive
  monochromatic $(i-1)$-lines $L_{i-1},L'_{i-1}$, with $L'_{i-1}$
  situated on the right of $L_{i-1}$.  Let $R$ be the region situated
  on the right of $L_{i-1}$, so $v\in R$.

  \begin{claim}
    \label{cl:leaves}
    For any vertex $w$ of $R$, the path $P_{i}(w)$ leaves the region $R$.
  \end{claim}

  \begin{proofclaim}
    The $i$-line $L_i(w)$ has to cross $L_{i-1}$ exactly once and from
    right to left, thus $P_{i}(w)$ leaves the region $R$.
  \end{proofclaim}

  The paths $P_{i-1}(v)$ and $P_{i+1}(v)$ cannot leave the region $R$
  by the Schnyder property.  Thus by Claim~\ref{cl:leaves}, we have
  $P_{i}(v)$ leaves the region $R$ and $R_i(v)\subseteq R$. So
  $u\in R$. The paths $P_{i-1}(u)$ and $P_{i+1}(u)$ cannot leave
  region $R_i(v)$. By Claim~\ref{cl:leaves}, the path $P_{i}(u)$
  leaves the region $R_i(v)$ and so $R_i(u)\subseteq R_i(v)$.

  (ii) Suppose that $u\in R_i^\circ (v)$. By (i), $R_i(u)\subseteq R_i(v)$. The paths $P_{i-1}(u)$ and
  $P_{i+1}(u)$ are contained in $R_i(v)$ and none of them can contain
  $v$ by the Schnyder property.  So all the faces of $R_i(v)$ incident
  to $v$ are not in $R_i(u)$ (and there is at least one such face).

  (iii) By symmetry, we prove that there exists $i$ with $R_i(u)
  \subsetneq R_i(v)$.  If $u\in R_i^\circ(v)$ for some color $i$, then
  $R_i(u) \subsetneq R_i(v)$ by (ii).  Suppose now that $u\in P_i(v)$
  for some $i$.  By Lemma~\ref{lem:nocommongeneral}, at least one of the two
  paths $P_{i-1}(u)$ and $P_{i+1}(u)$ does no contain $v$. Suppose by
  symmetry that $P_{i-1}(u)$ does not contain $v$. As $u\in
  P_i(v)\subseteq R_{i+1}(v)$, we have $R_{i+1}(u)\subseteq
  R_{i+1}(v)$ by (i), and as none of $P_{i-1}(u)$ and $P_{i}(u)$
  contains $v$, we have $R_{i+1}(u) \subsetneq R_{i+1}(v)$.
\end{proof}

Note that if the Schnyder wood is not half-crossing then the
conclusions of Lemma~\ref{lem:regionss} might be false.
Figure~\ref{fig:regions-nothalf} gives an example of a Schnyder wood
of a toroidal map for which the three items of
Lemma~\ref{lem:regionss} are not satisfied. The torus is represented
as a black rectangle. It is replicated one time above and one time
below to draw partially $G^\infty$. The regions $R_i(v)$ of the two
bold vertices are partially represented by dashed colors.  Each bold
vertex is in the strict interior of the red region of the other bold
vertex but these two vertices do not satisfy any conclusions of the
three items of Lemma~\ref{lem:regionss}.

\begin{figure}[!h]
\center
\includegraphics[scale=0.3]{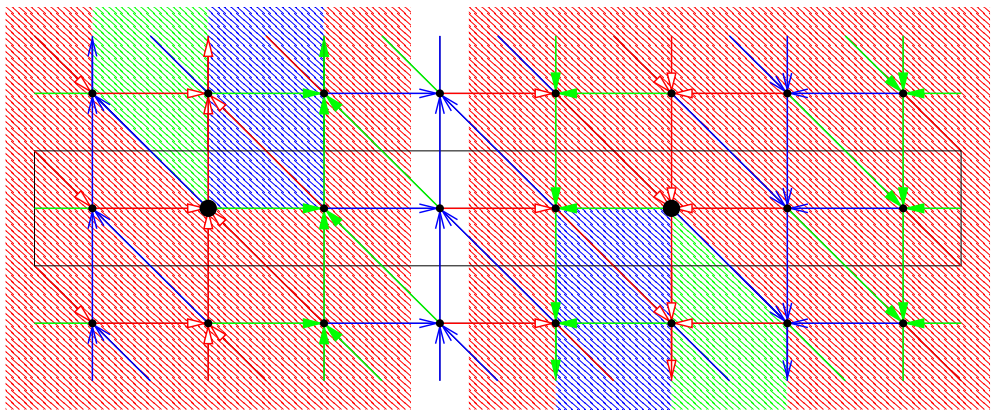} 
\caption{Example of a Schnyder wood that is not half-crossing and
  whose regions do not satisfy the conclusions of
  Lemma~\ref{lem:regionss}.}
\label{fig:regions-nothalf}
\end{figure}

\chapter{Balanced Schnyder woods}
\label{sec:balanced}

\section{Balanced  for maps}
\label{sec:HTCmaps}

In this section we study properties of $\gamma$ and use them to define
a particular set of Schnyder woods.

Consider a toroidal map $G$.
An $\alpha$-orientation of $\pdc{G}$ is called a
\emph{$3$-orientation} if:
 $$\alpha(v)= \begin{cases}
   3 & \text{if }v\text{ is a primal- or dual-vertex}, \\
   1 & \text{if }v\text{ is an edge-vertex.}\\
\end{cases}
$$

Next lemma shows that $\gamma$ behaves well w.r.t.~homologous cycles
in $3$-orientations of $\pdc{G}$: two homologous non-contractible
cycles have the same value $\gamma$. We formulate this in a more general
form:

\begin{lemma}
  \label{lem:gammahomologyequalbasis} 
  Consider a $3$-orientation of $\pdc{G}$, a non-contractible cycle
  $C$ of $G$ or $G^*$, given with a direction of traversal, and a
  homology-basis $\{B_1,B_2\}$ of $G$, such that $B_1,B_2$ are
  non-contractible cycles whose intersection is a single vertex or a
  common path.  If $\pdc{C}$ is homologous to
  $k_1 \pdc{B_1} + k_2\pdc{B_{2}}$, with $k_1,k_2 \in \mathbb Z$, then
  $\gamma(C)=k_1\,\gamma (B_1) + k_2\,\gamma (B_2)$.
\end{lemma}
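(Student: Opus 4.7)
The plan is to use Lemma~\ref{lem:gammaequaldelta} to translate the identity into a computation of $\delta$ on a carefully constructed 2-boundary in $\pdc{G}^*$, and then to exploit the exact values of $\delta$ on dual facial walks available in a $3$-orientation. First, Lemma~\ref{lem:gammaequaldelta} lets one rewrite $\gamma(X) = \delta(W_L(X)) + \delta(W_R(X))$ for each $X\in\{C,B_1,B_2\}$, so the target identity becomes
$$
\delta\bigl((W_L(C)+W_R(C)) - k_1(W_L(B_1)+W_R(B_1)) - k_2(W_L(B_2)+W_R(B_2))\bigr) = 0.
$$

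Next I would exhibit a 2-chain $\sigma$ of $\pdc{G}^*$ whose boundary is precisely the 1-chain above. The hypothesis $[\pdc{C}] = k_1[\pdc{B_1}]+k_2[\pdc{B_2}]$ in $H_1(\pdc{G})$ provides a 2-chain $\Sigma = \sum_F \lambda_F F$ in $\pdc{G}$ with $\partial\Sigma = \pdc{C} - k_1\pdc{B_1} - k_2\pdc{B_2}$, and the assumption that $B_1$ and $B_2$ intersect in a single vertex or a common path is what makes $\Sigma$ tractable: cutting the torus along $B_1 \cup B_2$ produces a single disk (a fundamental polygon for the torus), inside which the 2-chain witnessing the homology of $C$ can be described explicitly. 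I then set $\sigma := \sum_F \lambda_F \tau(F)$, where $\tau(F)$ is the sum of the four $\pdc{G}^*$-faces around the vertices of $F$, taken with the CCW signs so that the duals of edges shared between two adjacent $F^*_v$ cancel in the sum; a direct check then identifies $\partial\sigma$ with the 1-chain appearing in the displayed identity, built from the $W_L$- and $W_R$-walks of $C$, $B_1$ and $B_2$.

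The final ingredient is the exact form of $\delta$ on facial walks of $\pdc{G}^*$ in a $3$-orientation. Re-reading the proof of Lemma~\ref{lem:facedelta0} in this setting yields the sharper identity $\delta(\partial F^*_v) = 3\,\eta(v)$, where $\eta(v) = +1$ if $v$ is a primal- or dual-vertex of $\pdc{G}$ and $\eta(v) = -1$ if $v$ is an edge-vertex. By linearity of $\delta$,
$$
\delta(\partial\sigma) \;=\; 3\sum_F \lambda_F \sum_{v\in V(F)}\eta(v),
$$
and since every face of $\pdc{G}$ is a quadrangle containing exactly one primal-, one dual- and two edge-vertices, the inner sum equals $1+1-1-1=0$. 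Therefore $\delta(\partial\sigma) = 0$, which is exactly the identity needed to conclude $\gamma(C) = k_1\gamma(B_1) + k_2\gamma(B_2)$.

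The main obstacle is the middle step: one must verify carefully that the tube $\sigma$ has boundary exactly the prescribed combination of $W_L$- and $W_R$-walks, with the correct signs. This requires a detailed bookkeeping of the CCW orientations of the $\pdc{G}^*$-faces around each vertex of $F$; it is precisely here that the hypothesis on $B_1$ and $B_2$ is used, since the fundamental polygon it provides keeps the decomposition $\Sigma$, and hence $\sigma$, explicit enough for the sign analysis to go through.
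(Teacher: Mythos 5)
The reduction to $\delta$ via Lemma~\ref{lem:gammaequaldelta} is a good move, and your computation of the exact values $\delta(F^*_v)=3$ for a primal- or dual-vertex $v$ and $\delta(F^*_v)=-3$ for an edge-vertex in a $3$-orientation (i.e.\ $\delta(F^*_v)=3\eta(v)$) is correct and is exactly the kind of sharpening of Lemma~\ref{lem:facedelta0} one needs. However, the middle step has a genuine gap: your $2$-chain $\sigma=\sum_F\lambda_F\tau(F)$ does \emph{not} have the boundary you claim. Writing $\sigma=\sum_v\mu_v[F^*_v]$ with $\mu_v=\sum_{F\ni v}\lambda_F$, the coefficient of a dual edge $(uv)^*$ in $\partial\sigma$ is $\pm(\mu_u-\mu_v)$. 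Away from the three cycles the function $\lambda$ is locally constant, say equal to $c$, so $\mu_v=\deg_{\pdc{G}}(v)\cdot c$. Since $\pdc{G}$ is bipartite with edge-vertices of degree exactly $4$ and primal-/dual-vertices of arbitrary degree, an edge $uv$ with $\deg(u)\neq\deg(v)$ and $c\neq 0$ gives $\mu_u\neq\mu_v$, so $\partial\sigma$ has support on dual edges far from $\pdc{C},\pdc{B_1},\pdc{B_2}$, whereas the target $1$-chain $(W_L(C)+W_R(C))-k_1(W_L(B_1)+W_R(B_1))-k_2(W_L(B_2)+W_R(B_2))$ vanishes there. So the ``direct check'' you defer does not go through: you have correctly shown $\delta(\partial\sigma)=0$, but for the wrong chain.

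Moreover, even if one takes the correct $2$-chain $\sigma'=\sum_v\nu_v[F^*_v]$ with $\partial\sigma'$ equal to the target chain, the argument is not finished by the quadrangle observation. One then needs $\sum_v\nu_v\,\eta(v)=0$, and the $\nu_v$ are no longer of the form $\sum_{F\ni v}\lambda_F$, so the identity $\sum_{v\in V(F)}\eta(v)=1+1-1-1=0$ cannot be invoked. What actually happens is that $\nu$ is constant on the regions cut out by the $W_L$- and $W_R$-cycles, and the needed cancellation amounts to showing that in each annular region the number of primal- and dual-vertices equals the number of edge-vertices. This is a genuine Euler-formula count over the annulus — precisely the $n_{pd}=n_e$ computation the paper carries out in the universal cover, together with its degree count exploiting the $3$-orientation — and it is where the real content of the lemma lives. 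So the homological reformulation is a reasonable framework, but as written it hides rather than discharges the key quantitative step, and the explicit $\sigma$ you propose is incorrect.
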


\begin{proof}
  Let $v$ be a vertex in the intersection of $B_1,B_2$ such that, if
  this intersection is a common path, then $v$ is one of the
  extremities of this path and let $u$ be the other extremity.
  Consider a drawing of $(\pdc{G})^\infty$ obtained by replicating a
  flat representation of $\pdc{G}$ to tile the plane.  Let $v_0$ be a
  copy of $v$ in $(\pdc{G})^\infty$.  Consider the walk $W$ starting
  from $v_0$ and following $k_1$ times the edges corresponding to
  $B_1$ and then $k_2$ times the edges corresponding to $B_2$ (we are
  going backward if $k_i$ is negative). This walk ends at a copy $v_1$
  of $v$.  Since $C$ is non-contractible we have $k_1$ or $k_2$ not
  equal to $0$ and thus $v_1$ is distinct from $v_0$.  Let $W^\infty$
  be the infinite walk obtained by replicating $W$ (forward and
  backward) from $v_0$.  Note that their might be some repetition of
  vertices in $W^\infty$ if the intersection of $B_1,B_2$ is a
  path. But in that case, by the choice of $B_1,B_2$, we have that
  $W^\infty$ is almost a path, except maybe at all the transitions
  from ``$k_1 {B_1}$'' to ``$k_2B_{2}$'', or at all the transitions
  from ``$k_2 {B_2}$'' to ``$k_1B_{1}$'', where it can goes back and
  forth a path corresponding to the intersection of $B_1$ and
  $B_2$. The existence or not of such ``back and forth'' parts depends
  on the signs of $k_1, k_2$ and the way $B_1,B_2$ are going through
  their common path. Figure~\ref{fig:replicating} gives an example of
  this construction with $(k_1,k_2)=(1,1)$ and $(k_1,k_2)=(1,-1)$ when
  $B_1,B_2$ intersects on a path and are oriented the same way along
  this path as on Figure~\ref{fig:replicating0}.

\begin{figure}[!h]
\center
\includegraphics[scale=0.3]{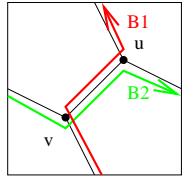}
\caption{Intersection of the basis.}
\label{fig:replicating0}
\end{figure}

\begin{figure}[!h]
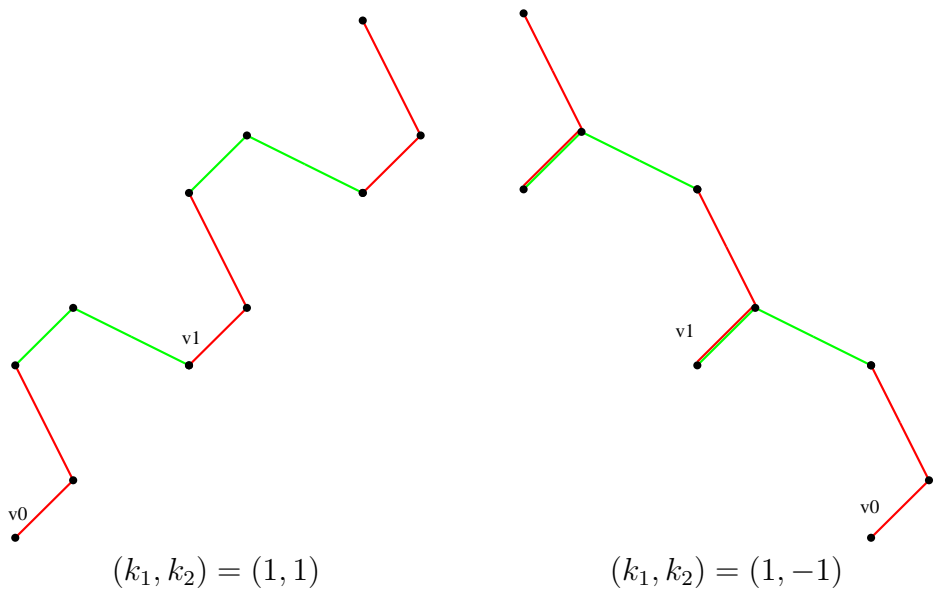

\begin{tabular}{cc}
\includegraphics[scale=0.3]{gammabasis-3} \ \ \ & \ \ \
\includegraphics[scale=0.3]{gammabasis-2} \\
$(k_1,k_2)=(1,1)$ \ \ \ & \ \ \ $(k_1,k_2)=(1,-1)$\\
\end{tabular}
\caption{Replicating ``$k_1 {B_1}$'' and ``$k_2B_{2}$'' in the universal cover.}
\label{fig:replicating}
\end{figure}
 
   We ``simplify'' $W^\infty$ by removing all the parts that consists
   of going back and forth along a path (if any) and call $B^\infty$
   the obtained walk that is now without repetition of vertices. By
   the choice of $v$, we have that $B^\infty$ goes through copies of
   $v$. If $v_0,v_1$ are no more a vertex along $B^\infty$, because of
   a simplification at the transition from ``$k_2 {B_2}$'' to
   ``$k_1B_{1}$'', then we replace $v_0$ and $v_1$ by the next copies
   of $v$ along $W^\infty$, i.e. at the transition from
   ``$k_1 {B_1}$'' to ``$k_2B_{2}$''.

Since $\pdc{C}$ is homologous to $k_1 \pdc{B_1} + k_{2}\pdc{B_{2}}$,
we can find an infinite path $C^\infty$, that corresponds to copies of
$C$ replicated, that does not intersect $B^\infty$ and situated on the
right side of $B^\infty$. Now we can find a copy $B'^\infty$ of
$B^\infty$, such that $C^\infty$ lies between $B^\infty$ and
$B'^\infty$ without intersecting them. Choose two copies $v'_0,v'_1$
of $v_0,v_1$ on $B'^\infty$ such that the vectors $v_0v_1$ and
$v'_0v'_1$  are equal.  

Let $R_0$ be the region bounded by $B^\infty,B'^\infty$.  Let $R_1$
(resp. $R_2$) be the subregion of $R_0$ delimited by $B^\infty$ and
$C^\infty$ (resp. by $C^\infty$ and $B'^\infty$).  We consider
$R_0,R_1,R_2$ as cylinders, where the segments $[v_0,v'_0],[v_1,v_1']$
(or part of them) are identified.  Let $B,B',C'$ be the cycles of
$R_0$ corresponding to $B^\infty, B'^\infty, C'^\infty$ respectively.
Let $x$ be the number of edges of $\pdc{G}$ leaving $B$ in $R_1$. Let
$y$ be the number of edges of $\pdc{G}$ leaving $B'$ in $R_2$. Let
$x'$ (resp. $y'$) be the number of edges of $\pdc{G}$ leaving $C'$ in
$R_2$ (resp. $R_1$).  We have $C'$ corresponds to exactly one copy of
$C$, so $\gamma(C)=x'-y'$.  Similarly, we have $B$ and $B'$ that
almost corresponds to $k_1$ copies of $B_1$ followed by $k_2$ copies
of $B_2$, except the fact that we may have removed a back and forth
part (if any).  In any case we have the following:

\begin{claim}
\label{cl:computegamma}
  $k_1\,\gamma
(B_1) + k_2\,\gamma (B_2)=x-y$
\end{claim}

 \begin{proofclaim}
   We prove the case where the common intersection of $B_1,B_2$ is a
   path (if the intersection is a single vertex, the proof is very
   similar and even simpler). We assume, w.l.o.g, by eventually
   reversing one of $B_1$ or $B_2$, that $B_1,B_2$ are oriented the
   same way along their intersection, so we are in the situation of
    Figure~\ref{fig:replicating0}.  

Figure~\ref{fig:computegamma1} shows
   how to compute $k_1\,\gamma (B_1) + k_2\,\gamma (B_2)+y-x$ when
   $(k_1,k_2)=(1,1)$. Then, one can check that
   each outgoing edge of $\pdc{G}$ is counted exactly the same number
   of time positively and negatively. So everything compensate and we
   obtain $k_1\,\gamma (B_1) + k_2\,\gamma (B_2)+y-x=0$.

\begin{figure}[!h]
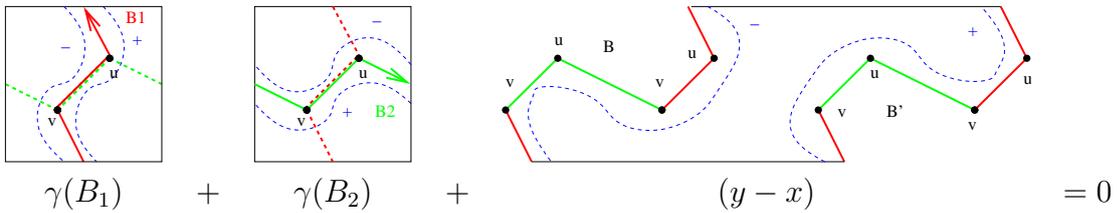

\center
\begin{tabular}{cccccc}
\includegraphics[scale=0.27]{gammabasis-5}
&&\includegraphics[scale=0.27]{gammabasis-6-2}
&&\includegraphics[scale=0.27]{gammabasis-4-2} &\\
$\gamma(B_1)$&$+$ & $\gamma (B_2)$ &$+$& $(y-x)$ &$=0$
\end{tabular}
\caption{Case $(k_1,k_2)=(1,1)$.}
\label{fig:computegamma1}
\end{figure}

Figure~\ref{fig:computegamma2} shows how to compute
$k_1\,\gamma (B_1) + k_2\,\gamma (B_2)+y-x$ when
$(k_1,k_2)=(1,-1)$. As above, most of the things compensate but, in the
end, we
obtain $k_1\,\gamma (B_1) + k_2\,\gamma (B_2)+y-x=d^+(u)-d^+(v)$, as
depicted on the figure.
Since the number of outgoing edges around each vertex is
equal to $3$, we have again the conclusion
$k_1\,\gamma (B_1) + k_2\,\gamma (B_2)+y-x=0$.

\begin{figure}[!h]
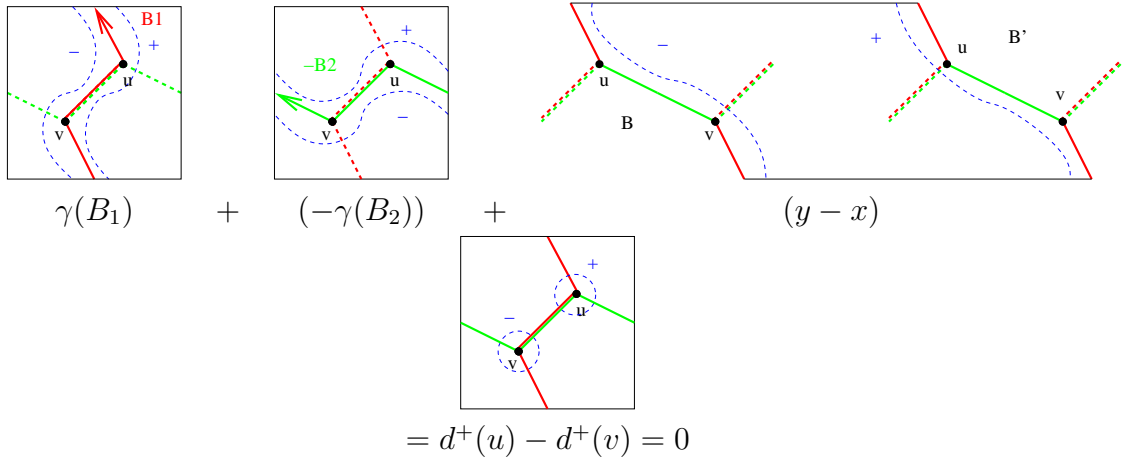

\center
\begin{tabular}{ccccc}
\includegraphics[scale=0.3]{gammabasis-5}
&&\includegraphics[scale=0.3]{gammabasis-6}
&&\includegraphics[scale=0.3]{gammabasis-4}\\
$\gamma(B_1)$&$+$ & $(-\gamma (B_2))$ &$+$& $(y-x)$
\end{tabular}

\includegraphics[scale=0.3]{gammabasis-7}\\
$=d^+(u)-d^+(v)=0$
\caption{Case $(k_1,k_2)=(1,-1)$..}
\label{fig:computegamma2}
\end{figure}

One can easily be
convinced that when $|k_1|\geq 1$ and $|k_2|\geq 1$ then the same arguments
 apply. The only difference is that the red or green part of the
figures in the universal cover would be
longer (with  repetitions of $B_1$ and $B_2$). This parts being very
``clean'', they do not affect the way we compute the  equality.
Finally, if one of $k_1$ or $k_2$ is equal to zero, the analysis
is much  simpler and the conclusion holds.
 \end{proofclaim}

For $i\in\{1,2\}$, let $H_i$ be the
cylinder map made of all the vertices and edges of $\pdc{G}^\infty$
that are in the cylinder region $R_i$.
  Let $k$ (resp. $k'$) be the length of $B$ (resp. $C'$).  Let
  $n_1,m_1,f_1$ be respectively the number of vertices, edges and
  faces of $H_1$.  Let $n_{pd}$ be the number of primal- and
  dual-vertices of $H_1$ and $n_{e}$ be its number of edge-vertices.
  We have $n_1=n_{pd}+n_e$.  Since $H_1$ is a quadrangulation we have
  $4f_1=2m_1-2(k+k')$. Each edge of $H_1$ is incident to an
  edge-vertex. All the edge-vertices of $H_1$ have degree $4$ except
  those that are on $B$ and $C'$ that have degree $3$, so the total
  number of edges of $H_1$ is $m_1=4n_e-(k+k')$.  Finally combining
  these equalities with Euler's formula $n_1-m_1+f_1=0$, one obtain
  that $n_{pd}=n_{e}$. Moreover, since we are considering a
  $3$-orientation of $\pdc{G}$, we have that
  $m_1=3n_{pd}+n_e-(x'+y)=4n_e-(x'+y)$. So finally, by combining this
  with above formula, we obtain $k+k'=x'+y$. Similarly, by considering
  $H_2$, one obtain that $k+k'=x+y'$. Thus finally $x'+y=x+y'$ and
  thus $\gamma(C)=k_1\,\gamma (B_1) + k_2\,\gamma (B_2)$ by Claim~\ref{cl:computegamma}.
\end{proof}

Recall from Section~\ref{sec:transformations} that, given a
homology-basis of $G$ formed by a pair of cycles $B_1,B_2$ of $G$, the
type of a 3-orientation of $\pdc{G}$ is $\Gamma=(\gamma(B_1),\gamma(B_2))$.
Lemma~\ref{lem:gammahomologyequalbasis} implies the following:

\begin{lemma}
\label{lem:typegamma0foranybasis}
In a 3-orientation $D$ of $\pdc{G}$ the following are equivalent:
\begin{enumerate}
\item For two non-contractible not weakly
  homologous cycles $C,C'$ of ${G}$, we have
  $\gamma(C)=\gamma(C')=0$.
\item $D$ has type $(0,0)$ for a homology-basis of $G$  formed by a pair of
cycles.
\item $D$ has type $(0,0)$ for any homology-basis of $G$ formed by a
  pair of cycles.
\item For any non-contractible cycle $C$ of ${G}$ or $G^*$, we have
  $\gamma(C)=0$.
\end{enumerate}
\end{lemma}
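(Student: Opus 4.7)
The plan is to prove the chain of implications $(4) \Rightarrow (3) \Rightarrow (2) \Rightarrow (1) \Rightarrow (4)$. The first three are essentially direct specializations. For $(4) \Rightarrow (3)$, every element of a homology-basis is a non-contractible cycle, so $(4)$ applies immediately. For $(3) \Rightarrow (2)$, one fixes a specific basis, and the construction recalled in Section~\ref{sec:homology} (via a spanning tree of $G$ and a spanning tree of $G^*$) guarantees the existence of a basis whose two cycles meet in a single vertex or along a common path, i.e., a basis satisfying the hypothesis of Lemma~\ref{lem:gammahomologyequalbasis}. For $(2) \Rightarrow (1)$, take $C = B_1$ and $C' = B_2$: they are non-contractible because their classes are basis elements of $H_1 \cong \mathbb{Z}^2$, and they are not weakly homologous because their classes are $\mathbb{Z}$-linearly independent and hence cannot be equal up to sign.

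The core of the argument is $(1) \Rightarrow (4)$, and it proceeds through Lemma~\ref{lem:gammahomologyequalbasis}. First, I would fix a homology-basis $\{B_1, B_2\}$ of $G$ whose cycles intersect in a single vertex or a common path. Given $C, C'$ as in $(1)$, the classes $[C], [C']$ in $H_1(\text{torus}) \cong \mathbb{Z}^2$ are $\mathbb{Z}$-linearly independent: simple non-contractible cycles on the torus represent primitive homology classes, and two primitive classes that are not equal up to sign cannot be proportional. Writing $[C] = p_1 [B_1] + p_2 [B_2]$ and $[C'] = q_1 [B_1] + q_2 [B_2]$ with $p_i, q_i \in \mathbb{Z}$ and $p_1 q_2 - p_2 q_1 \neq 0$, two applications of Lemma~\ref{lem:gammahomologyequalbasis} yield the linear system
\[
p_1 \gamma(B_1) + p_2 \gamma(B_2) = \gamma(C) = 0, \qquad q_1 \gamma(B_1) + q_2 \gamma(B_2) = \gamma(C') = 0,
\]
whose non-vanishing determinant forces $\gamma(B_1) = \gamma(B_2) = 0$. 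Finally, for any non-contractible cycle $C''$ of $G$ or $G^*$, I would expand $[\pdc{C''}] = r_1 [\pdc{B_1}] + r_2 [\pdc{B_2}]$ with $r_1, r_2 \in \mathbb{Z}$ and apply the lemma one last time to obtain $\gamma(C'') = r_1 \gamma(B_1) + r_2 \gamma(B_2) = 0$, establishing $(4)$.

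The step that requires the most care is the passage from ``not weakly homologous'', which only excludes $[C] = \pm [C']$, to genuine $\mathbb{Z}$-linear independence of the two classes; this is exactly where the primitivity of homology classes of simple non-contractible cycles on the torus is used. Beyond that, the argument is a straightforward bookkeeping with Lemma~\ref{lem:gammahomologyequalbasis}, which was designed precisely to compute $\gamma$ on integer combinations of basis cycles in a $3$-orientation of $\pdc{G}$.
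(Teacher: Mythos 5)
Your proof is correct and follows essentially the same route as the paper's. Both proofs reduce everything to Lemma~\ref{lem:gammahomologyequalbasis} applied to a homology-basis whose two cycles meet in a single vertex or a common path; the paper's ``one cycle cannot be homologous to a non-unit multiple of the other'' is exactly the primitivity fact you invoke, and the explicit algebraic manipulation in the paper's version of $(1)\Rightarrow(4)$ is the determinant argument you phrase as a $2\times 2$ linear system with nonvanishing determinant.
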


\begin{proof}
  Clearly
  $4. \Longrightarrow 3. \Longrightarrow 2. \Longrightarrow 1.$. 

  We now prove that $1. \Longrightarrow 4.$ is implied by
  Lemma~\ref{lem:gammahomologyequalbasis}.  Consider two
  non-contractible not weakly homologous cycles $C,C'$ of ${G}$ such
  that $\gamma(C)=\gamma(C')=0$. Consider an homology-basis
  $\{B_1,B_2\}$ of $G$, such that $B_1,B_2$ are non-contractible
  cycles whose intersection is a single vertex or a path (see
  Section~\ref{sec:homology} for discussion on existence of such a
  basis). Let $k_1,k_2,k'_1,k'_2\in \mathbb Z$, such that $C$
  (resp. $C'$) is homologous to $k_1 B_1+k_2 B_2$ (resp.
  $k'_1 B_1+k'_2 B_2$).  Since $C$ is non-contractible we have
  $(k_1,k_2)\neq (0,0)$. By eventually exchanging $B_1,B_2$, we can
  assume, w.l.o.g., that $k_1\neq 0$.  By
  Lemma~\ref{lem:gammahomologyequalbasis}, we have
  $k_1 \gamma(B_1)+k_2 \gamma(B_2)=\gamma(C)=0=\gamma(C')=k'_1
  \gamma(B_1)+k'_2 \gamma(B_2)$.
  So $\gamma(B_1)=(-k_2/k_1) \gamma(B_2)$ and thus
  $(-k_2 k'_1/k_1 + k'_2)\gamma(B_2)=0$. So $k'_2=k_2 k'_1/k_1$ or
  $\gamma(B_2)=0$.  Suppose by contradiction, that
  $\gamma(B_2)\neq 0$.  Then
  $(k'_1,k'_2)= \frac{k'_1}{k_1} (k_1,k_2)$, and $C'$ is homologous to
  $\frac{k'_1}{k_1} C$.  Since $C$ and $C'$ are both non-contractible
  cycles, it is not possible that one is homologous to a multiple of
  the other, with a multiple different from $-1,1$. So $C, C'$ are
  weakly homologous, a contradiction.  So $\gamma(B_2)= 0$ and thus
  $\gamma(B_1)=0$.  Then by Lemma~\ref{lem:gammahomologyequalbasis},
  any non-contractible cycle of ${G}$ or $G^*$, have $\gamma$ equal to
  $0$.
\end{proof}

Consider a homology-basis of $G$ that is formed by a pair of cycles of
$G$.  We define the set $S_0(\pdc{G})$ of all the $3$-orientations of
$\pdc{G}$ that have type $\Gamma=(0,0)$ in the considered basis.  By
Lemma~\ref{lem:typegamma0foranybasis}, the definition of
$S_0(\pdc{G})$ does not depend on the choice of the basis, so its
element form a kind of canonical set.  We call the elements of
$S_0(\pdc{G})$ \emph{balanced orientations}.  A $3$-orientation of
$\pdc{G}$ is a balanced orientation if and only if it satisfies any
condition of Lemma~\ref{lem:typegamma0foranybasis}.  By
Theorem~\ref{th:transformations}, balanced orientations are homologous
to each other. By Theorem~\ref{cor:lattice}, they carry a structure of
distributive lattice.  This special lattice acts as a canonical
lattice and this is particularly important for bijection purpose (see
Chapter~\ref{chap:encoding}).  By
Theorem~\ref{th:characterizationgamma}, balanced orientations
corresponds to Schnyder woods of $G$ that we call \emph{balanced
  Schnyder woods}.  By Lemma~\ref{lem:typegamma0foranybasis}, a
Schnyder wood is balanced if and only if the dual Schnyder wood is
balanced.

Note  that there exists Schnyder woods that are not balanced. The
Schnyder wood of Figure~\ref{fig:gammanot0nothtc} is an example where
the horizontal cycle has $\gamma$ equal to $\pm 6$. Thus it is not balanced
by Lemma~\ref{lem:typegamma0foranybasis}.

\begin{figure}[!h]
\center
\includegraphics[scale=0.5]{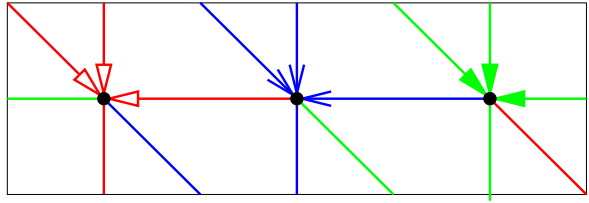} 
\caption{A Schnyder wood of a toroidal triangulation that is not balanced.}
\label{fig:gammanot0nothtc}
\end{figure}

One can remark that on Figure~\ref{fig:gammanot0nothtc}, the three
monochromatic cycles are all homologous to each other (i.e. going in
the same direction). In fact this is a general property that can be
proved:  a  Schnyder wood that is not balanced has three
monochromatic cycles of different colors that are all homologous to
each other.

Note that this property does not give a characterization of Schnyder
woods that are not balanced.  Figure~\ref{fig:htcthree} is an example
of a balanced Schnyder wood with three monochromatic cycles of
different colors that are homologous. The value of $\gamma$ for the
horizontal and a vertical cycle is $0$, thus it is balanced.

\begin{figure}[!h]
\center
\includegraphics[scale=0.3]{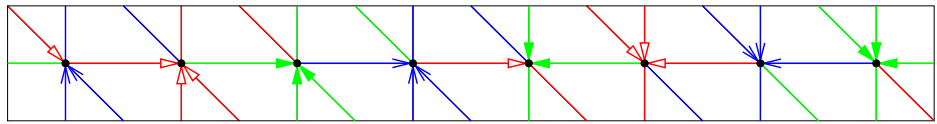} 
\caption{Example of a balanced Schnyder wood with three monochromatic
  cycles of different colors that are homologous.}
\label{fig:htcthree}
\end{figure}

\section{Balanced vs crossing}
\label{sec:balcross}
Consider a toroidal map $G$. Function $\gamma$ behaves well with
Schnyder woods of $G$ since monochromatic cycles have $\gamma$ equals
to zero:

\begin{lemma}
\label{lem:monocrhomegammaequal0}
  In a Schnyder wood of $G$,  a monochromatic cycle $C$ of
  $G$ satisfies $\gamma(C)=0$.
\end{lemma}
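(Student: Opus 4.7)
The plan is to compute $\gamma(C)=R-L$ directly by a local accounting along $\pdc{C}$, where $R$ (resp.\ $L$) counts $\pdc{G}$-edges with tail on $\pdc{C}$ and head strictly off $\pdc{C}$ on the right (resp.\ left) of $\pdc{C}$. Recall that $\pdc{C}$ alternates primal-vertices of $C$ with edge-vertices corresponding to the edges of $C$. I would pair each primal-vertex of $\pdc{C}$ with the edge-vertex immediately preceding it along $\pdc{C}$ and show that every such pair contributes zero to $R-L$.

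At a primal-vertex $v\in\pdc{C}$, let $e^+=e_i(v)$ be the outgoing $C$-edge and $e^-$ the incoming $C$-edge. The Schnyder property places $e^-$ in the counterclockwise sector from $e_{i+1}(v)$ to $e_{i-1}(v)$, so $e_{i+1}(v)$ lies on the left of $C$ at $v$ and $e_{i-1}(v)$ lies on the right. Among the three outgoing $\pdc{G}$-edges of $v$, the edge $e_i(v)$ lies on $\pdc{C}$, so only $e_{i\pm 1}(v)$ can contribute. If $e^-$ is of type $1$, both $e_{i+1}(v)$ and $e_{i-1}(v)$ target edge-vertices off $\pdc{C}$, contributing $+1$ to each of $L,R$: net zero at $v$. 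If $e^-$ is of type $2$ with reverse-direction color $j\in\{i-1,i+1\}$ at $v$, then $e_j(v)$ underlies the same $G$-edge as $e^-$ and so points to $m_{e^-}\in\pdc{C}$; only the other of $e_{i\pm 1}(v)$ then contributes, giving $+1$ to $R$ if $j=i+1$ and $+1$ to $L$ if $j=i-1$.

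At the edge-vertex $m_{e^-}$, using that toroidal Schnyder woods contain no type $0$ edges, $m_{e^-}$ has $\pdc{G}$-outdegree exactly $1$. If $e^-$ is of type $1$, the unique outgoing edge of $m_{e^-}$ reaches $v\in\pdc{C}$ and contributes $0$. If $e^-$ is of type $2$, I would compute the four Schnyder angle-labels around $e^-$ (with $a=i$ for the $u\to v$ direction and $b=j$ for the $v\to u$ direction) and observe that the two equal labels lie above $e^-$ when $j=i+1$ and below when $j=i-1$; the orientation rules of Figure~\ref{fig:pdc} then place the unique outgoing edge of $m_{e^-}$ on precisely that side, i.e.\ on the left of $C$ when $j=i+1$ and on the right when $j=i-1$. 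This is exactly opposite to the contribution at $v$, so each primal/edge-vertex pair contributes zero to $R-L$, and summing along $\pdc{C}$ yields $\gamma(C)=0$. The main obstacle is the type $2$ verification for $m_{e^-}$: one must carefully unpack the Schnyder angle-label pattern around a type $2$ edge and apply Figure~\ref{fig:pdc} to identify on which side of $C$ the outgoing edge of $m_{e^-}$ lands.
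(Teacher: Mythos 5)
Your proposal is correct and follows essentially the same route as the paper: decompose $\pdc{C}$ into consecutive primal-vertex/edge-vertex pairs (for each vertex $v$ of $C$, pair it with the edge-vertex of the incoming $C$-edge), and verify in three cases — the incoming edge being type~1, or type~2 with reverse color $i-1$ or $i+1$ — that each pair contributes $0$ to $\gamma(C)$. The paper summarizes the three-case check in Figure~\ref{fig:monocrhomegammaequal0}, whereas you carry it out in prose via the Schnyder angle labels; the verification you identify as the "main obstacle" (locating the outgoing dual-side edge of $m_{e^-}$ for type~2) is exactly what the figure encodes, and your label computation lands on the correct side in both subcases.
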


\begin{proof}
  We consider a $i$-cycle $C$ of $G$ and its corresponding cycle
  $\pdc{C}$ in $\pdc{G}$. Thus $\pdc{C}$ is alternating primal- and
  edge-vertices and in the direction of traversal of $C$, all the
  edges after a primal-vertex are outgoing and of color $i$. Let
  $u,v,w$ be three consecutive vertices of $C$, and $x,y$ be the
  edge-vertices between them in $\pdc{C}$, so $u,x,v,y,w$ appear in this
  order along $\pdc{C}$.  The edges $e=ux$ and $e'=vy$ are
  outgoing for $u$ and $v$ and of color $i$. We consider  three
  cases depending on the possible orientation and coloring of edge
  $xv$. These cases are represented on
  Figure~\ref{fig:monocrhomegammaequal0}.  One can check that in each
  case the computation of $\gamma$ on the part of $C$ between $e$ and
  $e'$ is equal to $0$. Thus in total we have $\gamma(C)=0$.
\end{proof}

\begin{figure}[!h]
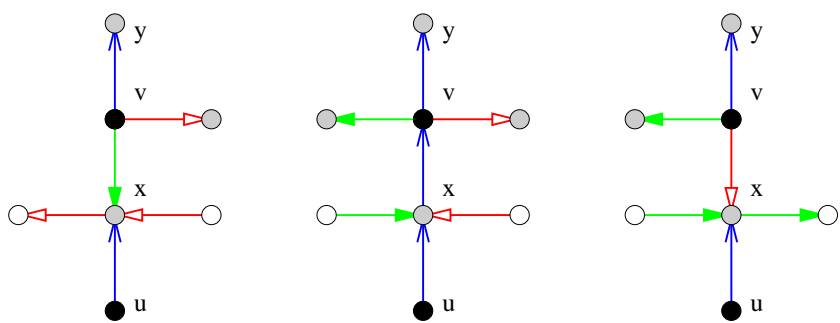

\center
\begin{tabular}{ccc}
\includegraphics[scale=0.4]{monochro-2} \ \ \  &  \ \ \
\includegraphics[scale=0.4]{monochro-1} \ \ \  &  \ \ \
\includegraphics[scale=0.4]{monochro-3} \\
\end{tabular}
\caption{The three cases that can occur when considering a part of a
  monochromatic cycle (vertical edges on the figure).}
\label{fig:monocrhomegammaequal0}
\end{figure}

Lemma~\ref{lem:monocrhomegammaequal0} implies the following:

\begin{proposition}
\label{lem:halftypegamma0}
an half-crossing Schnyder wood of $G$ is balanced.
\end{proposition}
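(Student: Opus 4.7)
The plan is to combine three earlier facts: Lemma~\ref{lem:monocrhomegammaequal0} (every monochromatic cycle has $\gamma=0$), Lemma~\ref{lem:twonothomotopic} (crossing monochromatic cycles of distinct colors are not weakly homologous), and Lemma~\ref{lem:typegamma0foranybasis} (a single pair of non-contractible, not weakly homologous cycles of $G$ on which $\gamma$ vanishes forces the $3$-orientation to have type $(0,0)$ for every homology basis). The task therefore reduces to producing one such pair.

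To find one, I would apply Theorem~\ref{lem:type-cross}: a half-crossing Schnyder wood is $i$-crossing for some color $i$, meaning every $i$-cycle crosses every monochromatic cycle of any other color. Since each $G_i$ has $n$ edges on $n$ vertices with each vertex of outdegree one in $G_i$, each connected component of $G_i$ contains exactly one directed cycle, so both $i$-cycles and $(i-1)$-cycles exist. Picking any such $C_i$ and $C_{i-1}$, they cross by the $i$-crossing property, they are not weakly homologous by Lemma~\ref{lem:twonothomotopic}, and both satisfy $\gamma(C_i)=\gamma(C_{i-1})=0$ by Lemma~\ref{lem:monocrhomegammaequal0}.

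Hence condition (1) of Lemma~\ref{lem:typegamma0foranybasis} holds, and the equivalent condition (3) asserts that the associated $3$-orientation of $\pdc{G}$ has type $(0,0)$ for every homology basis of $G$ formed by a pair of cycles. By definition this $3$-orientation is balanced, and therefore so is the Schnyder wood. The one point worth guarding against is a possible basis dependency in the notion of balance, but this has already been settled upstream by Lemma~\ref{lem:gammahomologyequalbasis}, which is what makes the equivalence in Lemma~\ref{lem:typegamma0foranybasis} work; with those tools in place the present argument has no genuine obstacle beyond the short verification above.
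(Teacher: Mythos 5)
Your proof is correct and follows essentially the same route as the paper's: use Theorem~\ref{lem:type-cross} to obtain an $i$-crossing pair of monochromatic cycles, invoke Lemma~\ref{lem:monocrhomegammaequal0} for $\gamma=0$, observe (via Lemma~\ref{lem:twonothomotopic}) that crossing implies not weakly homologous, and then apply Lemma~\ref{lem:typegamma0foranybasis}. The only additions are your explicit justifications that monochromatic cycles exist and that the notion of type is basis-independent, both of which the paper leaves implicit; there is no genuinely different idea.
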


\begin{proof}
  Suppose that the Schnyder wood is $i$-crossing by
  Theorem~\ref{lem:type-cross}. Let $C_i,C_{i-1}$ be two crossing
  monochromatic cycles of color $i$ and $i-1$ respectively. Cycles
  $C_i$ and $C_{i-1}$ are non-contractible, not weakly homologous and
  satisfy $\gamma(C_i)=\gamma(C_{i-1})=0$ by
  Lemma~\ref{lem:monocrhomegammaequal0}. So by
  Lemma~\ref{lem:typegamma0foranybasis}, the Schnyder wood is
  balanced.
\end{proof}

By Proposition~\ref{lem:halftypegamma0}, the set of balanced Schnyder
woods contains all the half-crossing Schnyder woods but it may also
contain some Schnyder woods that are not half-crossing.  The Schnyder
wood of Figure~\ref{fig:gamma0htc} is an example where $\gamma(C)=0$
for any non-contractible cycle $C$ of $G$, so it is balanced. It is
not half-crossing since all the monochromatic cycles are weakly
homologous (vertical on the figure).

\begin{figure}[!h]
\center
\includegraphics[scale=0.5]{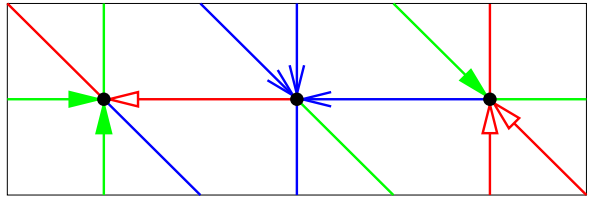} 
\caption{A Schnyder wood of a toroidal triangulation that is balanced and
  not half-crossing.}
\label{fig:gamma0htc}
\end{figure}

\section{Balanced for  triangulations}

Consider a toroidal triangulation $G$ given with an orientation.
Recall that the edges of the dual $G^*$ of $G$ are oriented such that
the dual $e^*$ of an edge $e$ of $G$ goes from the face on the right
of $e$ to the face on the left of $e$. (We insist that we do not
consider duality of Schnyder woods here but just a convention of
orientation of edges of  $G^*$.)

\begin{lemma}
\label{lem:incomingedges}
Consider an orientation of $G$ corresponding to a balanced Schnyder wood,
then the dual orientation of $G^*$ contains no oriented non-contractible
cycle.
\end{lemma}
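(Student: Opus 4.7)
\emph{Plan.} Proceed by contradiction: assume that the dual orientation of $G^*$ contains a directed non-contractible cycle $C^*$ of length $k$. Since $C^*$ is a non-contractible cycle of $G^*$, the balanced hypothesis together with Lemma~\ref{lem:typegamma0foranybasis} forces $\gamma(C^*)=0$, where $\gamma$ is computed via the primal-dual completion $\pdc{G}$ as in Section~\ref{sec:gencharacterization}. I will compute $\gamma(C^*)$ directly and show that it is strictly positive.

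By the convention $e^*\colon F_R\to F_L$, each of the $k$ primal edges $e_1,\ldots,e_k$ dual to edges of $C^*$ crosses $C^*$ with its tail on the left and its head on the right (the standard $90^{\circ}$ rotation between $e$ and $e^*$). Using the Schnyder orientation of $\pdc{G}$ associated to the Schnyder wood (which in the triangulation case uses only type~$1$ edges), each dual-vertex of $\pdc{G}$ has outdegree~$3$, each edge-vertex has outdegree~$1$, and the unique arc leaving an edge-vertex $x_e$ ends at the head of $e$.

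The cycle $\pdc{C^*}$ alternates dual-vertices $F_i^*$ and edge-vertices $x_i$. At each $x_i$, the unique outgoing arc leaving $\pdc{C^*}$ ends at the head of $e_i$, which lies on the right of $\pdc{C^*}$; the total contribution from edge-vertices is therefore $+k$. At each $F_i^*$, the two primal edges $e_{i-1},e_i$ lying in the triangle $F_i^*$ share a vertex $p$, and since tails are all on the left of $C^*$ while heads are all on the right, $p$ is either a common tail (``Case~A'') or a common head (``Case~B''). The unique outgoing arc leaving $F_i^*$ in $\pdc{G}$ goes to the edge-vertex of the third edge $e'$ of $F_i^*$, and $e'$ joins the two \emph{other} vertices of the triangle: in Case~A both are heads (on the right), so $e'$ lies to the right of $\pdc{C^*}$; in Case~B both are tails (on the left), so $e'$ lies to the left. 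Letting $R_D$ and $L_D$ denote the numbers of Case~A and Case~B dual-vertices, $R_D+L_D=k$ and hence $\gamma(C^*)=k+(R_D-L_D)=2R_D$.

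It remains to rule out $R_D=0$. If every $F_i^*$ were in Case~B, the common-head condition propagates around $\pdc{C^*}$, forcing all $e_i$ to share a single common head $v$. Every $F_i^*$ is then incident to $v$, and consecutive $F_i^*,F_{i+1}^*$ are adjacent in the cyclic arrangement of the triangles around $v$; for $C^*$ to be a simple closed cycle it must traverse this cyclic arrangement exactly once, so $C^*$ bounds the closed star of $v$---a topological disk---hence is contractible, a contradiction. Therefore $R_D\geq 1$, so $\gamma(C^*)\geq 2>0$, contradicting Lemma~\ref{lem:typegamma0foranybasis}. The main obstacle is the local case-analysis at a dual-vertex (distinguishing Cases~A and~B via the shared vertex of $e_{i-1},e_i$), together with the topological observation that Case~B everywhere confines $C^*$ to the star of a single vertex.
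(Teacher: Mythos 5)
Your proposal follows the paper's proof closely: both compute $\gamma(C^*)$ in the primal-dual completion $\pdc{G}$, both observe that each of the $k$ edge-vertices of $\pdc{C^*}$ contributes $+1$ to $\gamma$ (the outgoing arc toward the head of $e_i$ lies on the right) while each dual-vertex contributes $\pm 1$ according to the side on which its unique off-$\pdc{C^*}$ arc lies, and both conclude $\gamma(C^*)\geq 2$, contradicting balancedness via Lemma~\ref{lem:typegamma0foranybasis}. Your Cases~A/B and the identity $\gamma(C^*)=2R_D$ are a clean, explicit expansion of the paper's terser count ``$\gamma(C)\geq k+1-(k-1)=2$''.

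The one place you genuinely diverge is in supplying a justification for $R_D\geq 1$; the paper simply asserts that non-contractibility forces some off-arc onto the right side. Your justification has a gap in the generality the paper actually works in: toroidal triangulations are allowed to have non-contractible loops. If the common head $v$ is incident to a loop, the closed star of $v$ need not be a topological disk, and the cyclic arrangement of faces around $v$ may repeat a face, so ``for $C^*$ to be a simple closed cycle it must traverse this cyclic arrangement exactly once'' no longer follows, nor does ``$C^*$ bounds the closed star of $v$ --- a topological disk''. The conclusion $R_D\geq 1$ is still correct, but the argument must be lifted to the universal cover: over one period of the lift $\tilde{C}^*$, the common-head propagation gives $\tilde h_1=\cdots=\tilde h_k=\tilde v$, where $\tilde h_j$ denotes the head of the lifted edge $\tilde e_j$; applying the Case~B identity once more at the face $\tilde F_1^*+\tau$ starting the next period (where $\tau$ is the deck transformation determined by the homotopy class of $C^*$) gives $\tilde v=\tilde h_k=\tilde h_{k+1}=\tilde h_1+\tau=\tilde v+\tau$, hence $\tau=0$, contradicting non-contractibility of $C^*$. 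With this patch your argument is a correct and usefully more detailed account of the step the paper leaves implicit.
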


\begin{proof}
  Suppose by contradiction that there exists an oriented
  non-contractible cycle $C$ is the dual orientation.  We compute
  $\gamma(C)$ to obtain a contradiction. Since $C$ is an oriented
  cycle in the dual, all the edges of $\pdc{G}$ that are incident to
  edge-vertices of $\pdc{C}$ and on the right of $\pdc{C}$ are
  outgoing and each of them counts $+1$ for $\gamma$. All the edges of
  $\pdc{G}$ that are incident to edge-vertices of $\pdc{C}$ and on the
  left of $\pdc{C}$ are ingoing and count $0$ for $\gamma$.  So each
  edge-vertex of $\pdc{C}$ counts $+1$ for $\gamma$. Since we are
  considering a triangulation, all the edges of $\pdc{G}$ incident to
  dual-vertices of $\pdc{C}$ are outgoing. Moreover each dual-vertex
  of $\pdc{C}$ is incident to exactly one such edge that is not on
  $\pdc{C}$.  Since $C$ is non-contractible, there is at least one
  edge incident to a dual-vertex of $\pdc{C}$ and on the right side of
  $\pdc{C}$. This edge counts $+1$ for $\gamma$.  Since $\pdc{C}$ has
  the same number $k$ of dual- and edge-vertices, we have
  $\gamma(C)\geq k+1-(k-1)=2$, a contradiction to
  Lemma~\ref{lem:typegamma0foranybasis}.
\end{proof}

Lemma~\ref{lem:incomingedges} gives the key property of balanced
Schnyder woods in triangulations that enables to generalize
\npss method~\cite{PS06} for the toroidal case, as done in
Chapter~\ref{chap:encoding}.

For the non-balanced Schnyder wood of
Figure~\ref{fig:gammanot0nothtc}, one can see that there is an
horizontal oriented non-contractible cycle in the dual, so it does not
satisfy the conclusion of Lemma~\ref{lem:incomingedges}. Note that the
conclusion of Lemma~\ref{lem:incomingedges} is not a characterization
of being balanced.  Figure~\ref{fig:gamma0glue} gives an example of a
Schnyder wood that is not balanced but satisfies the conclusion of
Lemma~\ref{lem:incomingedges} (its horizontal cycle has $\gamma$
equals $\pm 6$).

\begin{figure}[!h]
\center
\includegraphics[scale=0.4]{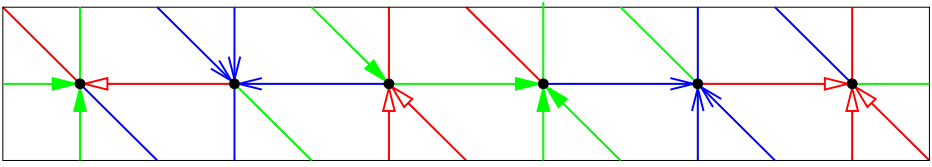}
\caption{A Schnyder wood that is not balanced but contains no oriented
  non-contractible cycle in the dual.}
\label{fig:gamma0glue}
\end{figure}

\section{Balanced lattice for triangulations}
\label{sec:flip} 

In this section, we push further the study of the lattice of balanced
Schnyder woods for triangulations. These properties are used in
Chapter~\ref{chap:encoding} to obtain a nice bijection.

Consider a toroidal triangulation $G$ and suppose that $G$ admits a
balanced orientation, so $S_0({G})$ is non-empty.  An
$\alpha$-orientation of $G$ is called a \emph{$3$-orientation} if
$\alpha(v)= 3$ for all vertices $v$ of $G$. Recall that $S_0(\pdc{G})$
denotes the set of all balanced orientations of $\pdc{G}$.  Since we
are considering a triangulation, the $3$-orientations of $G$ are in
bijection with the $3$-orientation of $\pdc{G}$.  Thus we simply
denote $S_0({G})$ the $3$-orientations of $G$ corresponding to
elements of $S_0(\pdc{G})$.  Let $f_0$ be any face of $G$. By
Theorem~\ref{cor:lattice}, we have that $(S_0({G}),\leq_{f_0})$ is a
distributive lattice.  Let $\widetilde{G}$ be the graph obtained from
$G$ by deleting all the rigid edges of $G$, i.e. edges that have the
same orientation in all elements of $S_0({G})$ (see
Section~\ref{sec:lattice}).

Suppose that $G$ is given with a 3-orientation.  For an edge $e_0$ of
$G$ we define the \emph{left walk} from $e_0$ as the sequence of edges
$W=(e_i)_{i\geq 0}$ of $G$ obtained by the following: if the edge
$e_i$ is entering a vertex $v$, then $e_{i+1}$ is chosen among the
three edges leaving $v$ as the edge that is on the left coming from
$e_i$ (i.e. the first one while going \cw around $v$). A closed
left-walk is left walk that is repeating periodically on itself,
i.e. a finite sequence of edges $W=(e_i)_{0\leq i\leq k-1}$ such that
its repetition is a left-walk.  We have the following lemma concerning
closed left-walks in balanced orientations:

\begin{lemma}
\label{lem:facialwalktilde}
In an orientation $D$ of $S_0(G)$, a closed left walk $W$ of $G$ is a
triangle with its interior on its left side.
\end{lemma}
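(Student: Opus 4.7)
The plan is to show in sequence that $W$ is simple, that it is contractible, and finally that the disk $R$ it bounds on its left has only three boundary vertices; this is exactly the conclusion of the lemma.

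The key local observation is that at every vertex $v$ traversed by $W$ via an incoming-outgoing pair $(e_i,e_{i+1})$, the rule picks $e_{i+1}$ as the first outgoing edge clockwise from $e_i$, so the two other outgoing edges of $v$ lie in the clockwise sector from $e_{i+1}$ back to $e_i$, which is precisely the right side of $W$ at $v$. In particular, no outgoing edge of a vertex of $W$ crosses to its left side. From this observation one immediately rules out that $W$ is a non-contractible simple cycle: reading off the counts directly gives $\gamma(W)=2|W|$, whereas by Lemma~\ref{lem:typegamma0foranybasis} the balanced orientation forces $\gamma(C)=0$ for every non-contractible cycle $C$. To show that $W$ is simple in the first place, I would pass to the universal cover $G^\infty$, which is a planar triangulation with all vertices of outdegree $3$: a self-intersection of $W$ would produce a finite simple closed left-walk in $G^\infty$, and the planar Euler-counting argument analogous to the one used for middle walks in Chapter~\ref{chap:introSW} rules out such a configuration.

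With $W$ a simple contractible cycle bounding a disk $R$ on its left, the proof concludes with a short Euler-formula computation. Writing $k=|W|$, $n_i$ for the number of interior vertices of $R$, and $t$ for the number of its triangular inner faces, the local observation gives that each boundary vertex has outdegree $1$ inside $R$ while each interior vertex has outdegree $3$; hence $m_R=k+3n_i$. Combining this with Euler's formula $(k+n_i)-m_R+(t+1)=2$ and the triangle-edge identity $3t=2(m_R-k)+k$ solves to $k=3$ and $t=1+2n_i$, which is exactly the conclusion of the lemma. The main obstacle is the clean proof that $W$ is simple; the universal-cover reduction is the route I would pursue, but carefully adapting the planar self-intersection argument to the present setting is the step requiring the most care.
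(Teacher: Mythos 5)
Your plan to show $W$ is simple is targeting the wrong statement: the lemma's conclusion, a ``triangle'' in the paper's sense, is a closed walk of length three bounding a disk, and such a walk is \emph{allowed} to revisit vertices (e.g.\ a closed left walk in the one-vertex toroidal triangulation of Figure~\ref{fig:orientation} visits its single vertex three times). The paper's proof never asserts simplicity. Instead, Claim~\ref{cl:leftsidedisk} proves that the left side of $W$ bounds a disk even when $W$ self-intersects tangentially on its right side, via a case analysis (Figure~\ref{fig:righttangent}), and the final Euler count then explicitly ``duplicates'' the repeated boundary vertices before computing. So proving $W$ simple is not only unnecessary, it is false in general.

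Your universal-cover detour also does not achieve what you hope. If $W$ has a tangent self-intersection in $G$ at a vertex $v$ whose in-between subwalk is non-contractible --- precisely the configuration the paper must handle in case (d) of Figure~\ref{fig:righttangent} --- then the lift $\widetilde W$ in $G^\infty$ visits two \emph{distinct} copies of $v$, so $\widetilde W$ can be simple while $W$ is not: simplicity of the lift does not transfer downstairs. Furthermore, even establishing simplicity of the lift needs more than the planar Euler count you invoke: that count (as in the paper's first paragraph) rules out a subwalk enclosing a disk on its \emph{right}, but a subwalk that encloses a disk on its \emph{left} is not excluded by that count and requires the separate argument of case (c) of Figure~\ref{fig:righttangent}. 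You correctly flag that proving some simplicity ``is the step requiring the most care,'' but the resolution is that the goal must be replaced by the paper's Claim~\ref{cl:leftsidedisk} (``the left side of $W$ encloses a disk''), after which your Euler computation does close the argument once repeated boundary vertices are split.
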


\begin{proof}
  Consider a closed left walk $W=(e_i)_{0\leq i\leq k-1}$ of $D$, with
  $k\geq 1$. W.l.o.g., we may assume that all the $e_i$ are distinct,
  i.e. there is no strict subwalk of $W$ that is also a closed left
  walk.  Note that $W$ cannot cross itself otherwise it is not a left
  walk. However $W$ may have repeated vertices but in that case it
  intersects itself tangentially on the right side.

  Suppose by contradiction that there is an oriented subwalk $W'$ of
  $W$, that forms a cycle $C$ enclosing a region $R$ on its right side
  that is homeomorphic to an open disk.  Let $v$ be the starting and
  ending vertex of $W'$. Note that we do not consider that $W'$ is a
  strict subwalk of $W$, so we might have $W'=W$.  Consider the graph
  $G'$ obtained from $G$ by keeping all the vertices and edges that
  lie in the region $R$, including $W'$. Since $W$ can intersect
  itself only tangentially on the right side, we have that $G'$ is a
  plane map whose outer face boundary is $W'$ and whose interior is
  triangulated.  Let $k'$ be the length of $W'$. Let $n',m',f'$ be the
  number of vertices, edges and faces of $G'$.  By Euler's formula,
  $n'-m'+f'=2$. All the inner faces have size $3$ and the outer face
  has size $k'$, so $2m'=3(f'-1)+k'$. Since $W'$ is a subwalk of a
  left walk, all the vertices of $G'$, except $v$, have their outgoing
  edges in $G'$. Since $W'$ is following oriented edges of $G$, vertex $v$
  has at least one outgoing edge in $G'$.  Thus, as we are considering
  a $3$-orientation, we have $m'\geq 3(n'-1)+1$.  Combining these
  three equalities gives $k'\leq -1$, a contradiction.  So there is no
  oriented subwalk of $W$, that forms a cycle enclosing an open disk
  on its right side.

We now claim the following:

\begin{claim}
\label{cl:leftsidedisk}
  The left side of $W$ encloses a region  homeomorphic to an open disk
\end{claim}

\begin{proofclaim}
  We consider two cases depending on the fact that $W$ is a cycle
  (i.e. with no repetition of vertices) or not.

\begin{itemize}
\item \emph{$W$ is a cycle} 

  Suppose by contradiction that $W$ is a non-contractible cycle. We
  have $k$ is the length of $W$. Since we are considering a
  $3$-orientation, the total number of outgoing edges incident to
  vertices of $W$ is $3k$. There are $k$ such edges that are on
  $W$. Since $W$ is a left walk, there is no outgoing edge incident to
  its left side. So there are $2k$ outgoing edges incident to its
  right side. Thus $\gamma(W)=2k\neq 0$. So by
  Lemma~\ref{lem:typegamma0foranybasis}, the $3$-orientation $D$ is
  not balanced, a contradiction. Thus $W$ is a contractible cycle. As
  explained above, the contractible cycle $W$ does not enclose a
  region homeomorphic to an open disk on its right side. So $W$ encloses a
  region homeomorphic to an open disk on its left side, as claimed.

\item \emph{$W$ is not a cycle} 

  Since $W$ cannot cross itself nor intersect itself tangentially on
  the left side, it has to intersect tangentially on the right side.
  Such an intersection on a vertex $v$ is depicted on
  Figure~\ref{fig:righttangent}.(a). The edges of $W$ incident to $v$
  are noted as on the figure, $a,b,c,d$, where $W$ is going
  periodically trough $a,b,c,d$ in this order.  The (green) subwalk of
  $W$ from $a$ to $b$ does not enclose a region homeomorphic to an
  open disk on its right side. So we are not in the case depicted on
  Figure~\ref{fig:righttangent}.(b). Moreover if this (green) subwalk
  encloses a region homeomorphic to an open disk on its left side,
  then this region contains the (red) subwalk of $W$ from $c$ to $d$,
  see Figure~\ref{fig:righttangent}.(c). Since $W$ cannot cross
  itself, this (red) subwalk necessarily encloses a region
  homeomorphic to an open disk on its right side, a contradiction. So
  the (green) subwalk of $W$ starting from $a$ has to form a
  non-contractible curve before reaching $b$. Similarly for the (red)
  subwalk starting from $c$ and reaching $d$. Since $W$ is a left-walk
  and cannot cross itself, we are, w.l.o.g., in the situation of
  Figure~\ref{fig:righttangent}.(d) (with possibly more tangent
  intersections on the right side). In any case, the left side of $W$
  encloses a region homeomorphic to an open disk.

 \begin{figure}[!h]
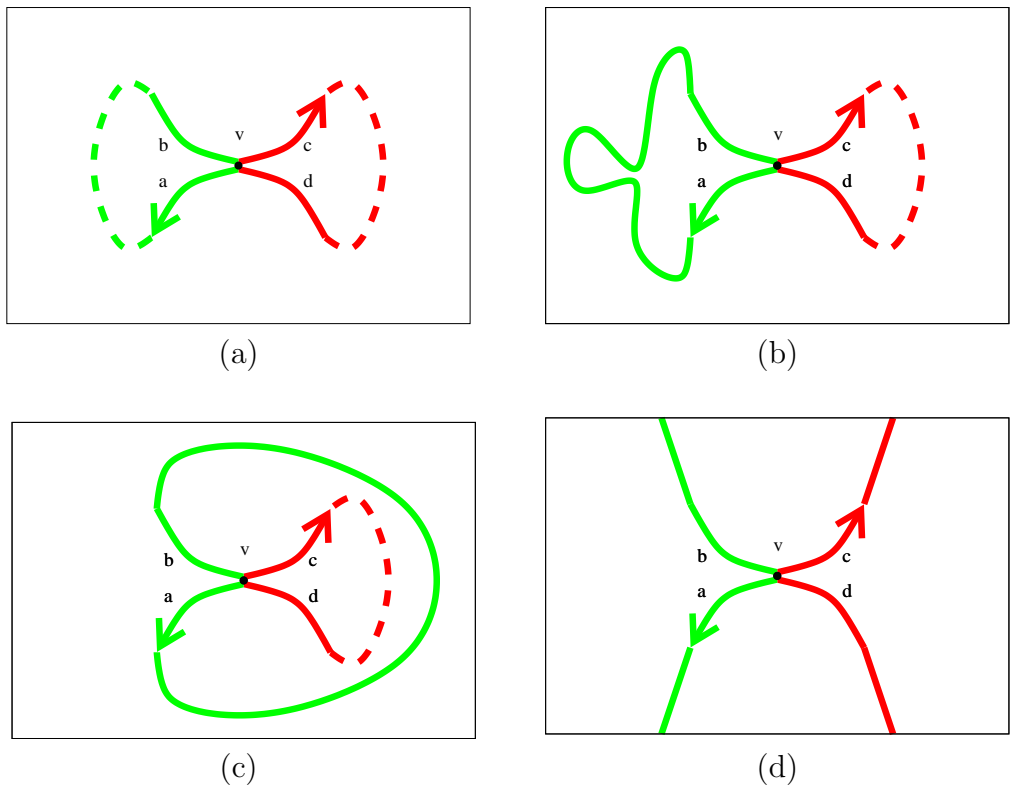

 \center
 \begin{tabular}{cc}
 \includegraphics[scale=0.3]{righttangent} \ \ & \ \ 
 \includegraphics[scale=0.3]{righttangent-2}\\
(a) \ \ & \ \ (b)\\
& \\
 \includegraphics[scale=0.3]{righttangent-1}\ \ & \ \ 
 \includegraphics[scale=0.3]{righttangent-3}\\
(c) \ \ & \ \ (d)\\
 \end{tabular}
\caption{Case analysis for the proof of Claim~\ref{cl:leftsidedisk}.}
 \label{fig:righttangent}
 \end{figure}

\end{itemize}

\end{proofclaim}

By Claim~\ref{cl:leftsidedisk}, 
the left side of $W$ encloses a region $R$ homeomorphic to an open
disk.
Consider the graph $G'$ obtained from $G$ by keeping only the vertices
and edges that lie in the region $R$, including $W$. The vertices of
$W$ appearing several times are duplicated so that $G'$ is a plane
triangulation of a cycle.   Let
$n',m',f'$ be the number of vertices, edges and faces of $G'$.  By
Euler's formula, $n'-m'+f'=2$. All the inner faces have size $3$ and
the outer face has size $k$, so $2m'=3(f'-1)+k$.  All the inner
vertices have outdegree $3$ as we are considering a $3$-orientation of
$G$. All the inner edges of $G'$ incident to $W$ are oriented toward
$W$, and there are $k$ outer edges, so $m'=3(n'-k)+k$. Combining these
three equalities gives $k=3$, i.e. $W$ has size three and the lemma
holds.
\end{proof}

The boundary of a face of $\widetilde{G}$ may be composed of several
closed walks. Let us call \emph{quasi-contractible} the faces of
$\widetilde{G}$ that are homeomorphic to an open disk or to an open disk with
punctures.  Note that such a face may have several boundaries (if
there are some punctures) but exactly one of these boundaries encloses
the face. Let us call \emph{outer facial walk} this special
boundary. Then we have the following:

\begin{lemma}
\label{lem:contractibletilde}
All the faces of $\widetilde{G}$ are quasi-contractible and their
outer facial walk is a triangle.
\end{lemma}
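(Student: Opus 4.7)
The approach is to show that for every face $f$ of $\widetilde G$, there exists a balanced orientation $D\in S_0(G)$ in which the outer boundary $W$ of $f$, oriented with $f$ on the left, is a closed left walk of $D$. Lemma~\ref{lem:facialwalktilde} then immediately gives that $W$ is a triangle enclosing an open disk on its left side; since $f$ is contained in this disk (being obtained from it by excising the $\widetilde G$-subgraphs that may sit strictly inside, each creating one puncture in $f$), the face $f$ is quasi-contractible and its outer facial walk is the triangle $W$, establishing the lemma.

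First, apply Lemma~\ref{lem:necessary} to the element $\widetilde F\in\widetilde{\mathcal F}$ representing the counterclockwise boundary of $f$: this yields $D\in S_0(G)$ in which the outer facial walk $W$ of $f$ is a closed directed walk with $f$ on its left. Next, fix any $v\in W$ and let $e$, $e'$ be the incoming and outgoing edges of $W$ at $v$. The edges of $G$ strictly between $e$ and $e'$ in the clockwise sector at $v$ (the sector bordering $f$) are all absent from $\widetilde G$ and are therefore rigid. It remains to prove that $W$ is a left walk of $D$, namely that $e'$ is the first outgoing edge clockwise from $e$ at every $v\in W$.

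This last step is the main obstacle. Suppose for contradiction that some rigid edge $e''$ in the clockwise sector between $e$ and $e'$ at $v$ is outgoing from $v$ in $D$. Iterating the left-walk map of $D$ starting at $e$ then produces $e\mapsto e''\mapsto\cdots$, a sequence whose eventually periodic part is a closed left walk $T$. By Lemma~\ref{lem:facialwalktilde}, $T$ is a triangle bounding an open disk on its left. Being contractible, its characteristic flow is $0$-homologous, so reversing $T$ in $D$ yields, by the third item of Theorem~\ref{th:transformations}, another element of $S_0(G)$, which shows that every edge of $T$ is non-rigid. To derive the required contradiction it suffices to show $e\in T$: then $e''$, being the left-walk successor of $e$, is an edge of $T$ and therefore non-rigid, contradicting the rigidity of $e''$.

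Establishing $e\in T$ is the crux. The plan is to exploit the fact that $W$ is itself a closed directed walk of $D$ coherently oriented around $f$: once the left-walk orbit of $e$ steps off $W$ through the rigid edge $e''$, it enters the region bounded on the left of $T$ and is confined there, since the interior of $T$'s disk contains only rigid edges of $G$ and possibly isolated components of $\widetilde G$ forming punctures, whereas every edge of $W$ is non-rigid and lies on the boundary of $f$. Combined with a local outdegree count around $v$ pairing the rigid edges of the sector against the balanced-orientation condition, this traps the orbit of $e$ and forces it to close up at $e$ itself within the triangle $T$, yielding the contradiction and completing the proof.
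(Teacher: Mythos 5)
Your overall plan is to show that the outer boundary $W$ of a face $\widetilde f$ of $\widetilde G$ is itself a closed left walk of some $D\in S_0(G)$, and then invoke Lemma~\ref{lem:facialwalktilde}. This differs structurally from the paper's proof: the paper does \emph{not} try to establish a priori that $W$ is a left walk. Instead, it observes that $\widetilde f$ (not being a face of $G$) must contain an edge of $G$ in its interior, starts a left walk $W$ of $D$ from such an interior edge, uses the fact that $\widetilde F$ is oriented counterclockwise to show that the left walk can never step out of $\widetilde f$ (at any boundary vertex, the next edge after an incoming one goes either into the interior or along the boundary), invokes Lemma~\ref{lem:facialwalktilde} to find the periodic tail $W'$ (a triangle with region $R$ on its left), observes that $W'$ is $0$-homologous hence non-rigid hence lies on the boundary of $\widetilde f$, and concludes from the counterclockwise orientation that $R\supseteq\widetilde f$ so $W'$ is the outer facial walk. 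The fact that the outer facial walk is a left walk is then a consequence, not a premise.

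The gap in your proposal is precisely the step you flag as ``the crux.'' You suppose that at some vertex $v$ of $W$, a rigid outgoing edge $e''$ lies in the $f$-facing sector, follow the left walk from $e$, and obtain a periodic tail $T$; you then need $e\in T$ in order to conclude that $e''$, being the left-walk successor of $e$, lies in the triangle $T$ and is hence non-rigid. But there is no reason for $e\in T$ to hold. The left walk starting at $e$ is $e\mapsto e''\mapsto\cdots$, and by Lemma~\ref{lem:facialwalktilde} it eventually settles on some closed left walk $T$; nothing forces the orbit to return to $e$ before closing up, and in general $T$ will be a triangle that lies strictly inside $\widetilde f$ (possibly on a puncture boundary) with $e\notin T$. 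Your suggested ``confinement + local outdegree count'' argument does not actually supply the missing step, and I do not see how it could: what \emph{can} be said is that the orbit stays inside the region bounded by $W$, which is the observation the paper exploits — but from an interior starting edge, not from $e$. As written, your Lemma~\ref{lem:facialwalktilde} application produces a triangle $T$ whose edges are indeed non-rigid, yet this says nothing about the rigidity of $e''$ unless $e''\in T$, and there is no mechanism forcing that.

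Even if this step were repaired, your plan works harder than necessary. Proving directly that $W$ is a left walk requires ruling out rigid outgoing edges at every boundary vertex, whereas the paper's route of launching the left walk from inside $\widetilde f$ sidesteps that entirely: the left-walk local rule plus the counterclockwise orientation of $\widetilde F$ already guarantees confinement, and Lemma~\ref{lem:facialwalktilde} plus Lemma~\ref{lem:non-rigid} then identify the periodic tail as the outer facial walk.
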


\begin{proof}
  Suppose by contradiction that there is a face $\widetilde{f}$ of
  $\widetilde{G}$ that is not quasi-contractible or that is
  quasi-contractible but whose outer facial walk is not a
  triangle. Let $\widetilde{F}$ be the element of
  $\widetilde{\mathcal{F}}$ corresponding to the boundary of
  $\widetilde{f}$. By Lemma~\ref{lem:necessary}, there exists an
  orientation $D$ in $S_0(G)$ such that $\widetilde{F}$ is an oriented
  subgraph of $D$.

  All the faces of $G$ are triangles. Thus $\widetilde{f}$ is not a
  face of $G$ and contains in its interior at least one edge of $G$.
  Start from any such edge $e_0$ and consider the left walk
  $W=(e_i)_{i\geq 0}$ of $D$ from $e_0$.  Suppose that for $i\geq 0$,
  edge $e_i$ is entering a vertex $v$ that is on the border of
  $\widetilde{f}$.  Recall that by definition $\widetilde{F}$ is
  oriented \ccw according to its interior, so either $e_{i+1}$ is in
  the interior of $\widetilde{f}$ or $e_{i+1}$ is on the border of
  $\widetilde{f}$. Thus $W$ cannot leave $\widetilde{f}$.

  Since $G$ has a finite number of edges, some edges are used several
  times in $W$.  Consider a minimal subsequence
  $W'=e_k, \ldots, e_\ell$ such that no edge appears twice and
  $e_k=e_{\ell+1}$.  Thus $W$ ends periodically on $W'$ that is a
  closed left walk.  By Lemma~\ref{lem:facialwalktilde}, $W'$ is a
  triangle with its interior $R$ on its left side. Thus $W'$ is a
  $0$-homologous oriented subgraph of $D$. So all its edges are
  non-rigid by Lemma~\ref{lem:non-rigid}. So all the edges of $W'$ are
  part of the border of $\widetilde{f}$. Since $\widetilde{F}$ is
  oriented \ccw according to its interior, the region $R$ contains
  $\widetilde{f}$. So $\widetilde{f}$ is quasi-contractible and $W'$
  is its outer facial walk and a triangle, a contradiction.
\end{proof}

By Lemma~\ref{lem:contractibletilde}, every face of $\widetilde{G}$ is
quasi-contractible and its outer facial walk is a triangle.  So
$\widetilde{G}$ contains all the triangles of $G$ whose interiors are
maximal by inclusion, i.e. it contains all the edges that are not in
the interior of a separating triangle. In particular, $\widetilde{G}$
is non-empty and $|S_0(G)|\geq 2$.  The status (rigid or not) of an
edge lying inside a separating triangle is determined as in the planar
case: such an edge is rigid if and only if it is in the interior of a
separating triangle and incident to this triangle. Thus an edge of $G$
is rigid if and only if it is in the interior of a separating triangle
and incident to this triangle.

Since $(S_0(G),\leq_{f_0})$ is a distributive lattice, any element $D$
of $S_0(G)$ that is distinct from $D_{\max}$ and $D_{\min}$ contains
at least one neighbor above and at least one neighbor below in the
Hasse diagram of the lattice. Thus it has at least one face of
$\widetilde{G}$ oriented \ccw and at least one face of $\widetilde{G}$
oriented \cww. Thus by Lemma~\ref{lem:contractibletilde}, it contains
at least one triangle oriented \ccw and at least one triangle oriented
\cww. Next lemma shows that this property is also true for $D_{\max}$
and $D_{\min}$.

\begin{lemma}
\label{lem:trianglef0}
In $D_{\max}$ (resp. $D_{\min}$) there is a \ccw (resp. \cww) triangle
containing $f_0$, and a \cw (resp. \ccww) triangle not containing
$f_0$.
\end{lemma}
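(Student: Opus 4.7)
\noindent \emph{Proof plan.} I will prove the statement for $D_{\max}$; the argument for $D_{\min}$ is symmetric, obtained by reversing the direction of the lattice. The first half (existence of a \ccw triangle containing $f_0$) is essentially immediate from the preceding lemmas. By Lemma~\ref{lem:maxtilde}, $\widetilde{F}_0$ is an oriented subgraph of $D_{\max}$; by Lemma~\ref{lem:contractibletilde}, the outer facial walk of $\widetilde{f}_0$ is a triangle $T_0$, and it appears in $\widetilde{F}_0$ oriented \ccw with $\widetilde{f}_0$ on its left. Since $\widetilde{f}_0$ is precisely the face of $\widetilde{G}$ containing $f_0$, the interior of $T_0$ contains $f_0$, so $T_0$ is a \ccw triangle of $D_{\max}$ containing $f_0$.

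For the second half, the idea is to exploit the existence of a predecessor of $D_{\max}$ in the Hasse diagram. Since $\widetilde{G}$ is non-empty we have $|S_0(G)|\geq 2$, so by connectedness of the Hasse diagram of $(S_0(G),\leq_{f_0})$ there is some $D'\in S_0(G)$ with $(D',D_{\max})$ an edge, labelled by some $\widetilde{F}\in\widetilde{\mathcal{F}}'$. Let $T$ be the outer facial walk of the face $\widetilde{f}$ of $\widetilde{G}$ corresponding to $\widetilde{F}$; by Lemma~\ref{lem:contractibletilde} it is a triangle, appearing \ccw in $\widetilde{F}$ with $\widetilde{f}$ on its left. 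The flip reverses every edge of $\widetilde{F}$, so in $D_{\max}$ the triangle $T$ is oriented \cw with its interior (the disk bounded by $T$) on its right.

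The main point I still have to check is that the interior of this \cw triangle $T$ does not contain $f_0$, and this is where Lemma~\ref{prop:maximal} comes in. The key computation is the following. Let $\mathrm{Int}(T)$ be the set of faces of $G$ lying inside $T$. Since $-T$ (i.e.\ $T$ reversed) is \ccw around $\mathrm{Int}(T)$ and each interior edge of the disk cancels, one has
\[
\phi(T)=-\sum_{F\in\mathrm{Int}(T)}\phi(F).
\]
If $f_0\notin\mathrm{Int}(T)$, then this expresses $\phi(T)$ as a sum of characteristic flows of $\mc{F}'$ with coefficients in $\{0,-1\}$, so $T$ is clockwise in the lattice sense, which is fine. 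If instead $f_0\in\mathrm{Int}(T)$, then substituting $\phi(F_0)=-\sum_{F\in\mc{F}'}\phi(F)$ yields
\[
\phi(T)=\sum_{F\in\mc{F}'\setminus\mathrm{Int}(T)}\phi(F),
\]
a combination with coefficients in $\{0,1\}$, meaning that $T$ (oriented as in $D_{\max}$) would be a non-empty \emph{counterclockwise} $0$-homologous oriented subgraph of $D_{\max}$, contradicting Lemma~\ref{prop:maximal}. Hence $\mathrm{Int}(T)$ cannot contain $f_0$, and $T$ is a \cw triangle of $D_{\max}$ not containing $f_0$.

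The proof for $D_{\min}$ is obtained by swapping \ccww/\cww and using the second part of Lemma~\ref{lem:maxtilde} ($-\widetilde{F}_0\subseteq D_{\min}$), a successor of $D_{\min}$ in the Hasse diagram, and the \cw half of Lemma~\ref{prop:maximal}. The only real subtlety in the whole argument is the homological computation relating the side on which $f_0$ sits with respect to a \cw triangle to the lattice-theoretic sign of the corresponding $0$-homologous flow, which is exactly what Lemma~\ref{prop:maximal} (together with the remark following it) is designed to handle.
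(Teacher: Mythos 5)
Your proof is correct and follows the same strategy as the paper: Lemmas~\ref{lem:maxtilde} and~\ref{lem:contractibletilde} give the \ccw triangle $T_0$ containing $f_0$, and $|S_0(G)|\geq 2$ gives a predecessor of $D_{\max}$ in the Hasse diagram whose label yields a \cw triangle $T$. Where you go beyond the paper is in the second half: the paper's proof simply asserts that "the second part of the lemma is clear" once a neighbor below exists, leaving implicit the reason why the resulting \cw triangle cannot enclose $f_0$. This is a genuine subtlety, since the outer facial walk $T$ of a face $\widetilde{f}\neq\widetilde{f}_0$ of $\widetilde{G}$ could in principle enclose $\widetilde{f}_0$ as a "puncture." Your homological computation resolves this cleanly: if $f_0$ lay inside $T$, then
\[
\phi(T)=-\sum_{F\in\mathrm{Int}(T)}\phi(F)=\sum_{F\in\mc{F}'\setminus\mathrm{Int}(T)}\phi(F),
\]
a nonnegative combination over $\mc{F}'$, making $T$ a non-empty counterclockwise $0$-homologous subgraph of $D_{\max}$ w.r.t.~$f_0$ and contradicting Lemma~\ref{prop:maximal}. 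This is precisely the content of the remark the paper places after Lemma~\ref{prop:maximal} (a clockwise region in $D_{\max}$ cannot contain $f_0$); you have made explicit a step the paper treats as self-evident, and the two arguments are otherwise identical.
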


\begin{proof}
  By Lemma~\ref{lem:contractibletilde}, $\widetilde{f}_0$ is
  quasi-contractible and its outer facial walk is a triangle $T$.  By
  Lemma~\ref{lem:maxtilde}, $\widetilde{F}_0$ is an oriented subgraph
  of $D_{\max}$.  Thus $T$ is oriented \ccw and contains $f_0$.  The
  second part of the lemma is clear since $|S_0(G)|\geq 2$ so
  $D_{\max}$ has at least one neighbor below in the Hasse diagram of
  the lattice. Similarly for $D_{\min}$.
\end{proof}

Thus by above remarks and Lemma~\ref{lem:trianglef0}, all the balanced
Schnyder woods have at least one triangle oriented \ccw and at least
one triangle oriented \cww. Note that this property does not
characterize balanced Schnyder woods.  Figure~\ref{fig:gamma0glue}
gives an example of a Schnyder wood that is not balanced but satisfies
the property. Note also that not all Schnyder woods satisfy the
property.  Figure~\ref{fig:gammanot0nothtc} is an example of a
Schnyder wood that is not balanced and has no oriented triangle.

\chapter{Example of a balanced lattice}
\label{sec:example}

Figure~\ref{fig:lattice} illustrates the Hasse diagram of the balanced
Schnyder woods of the toroidal triangulation $G$ of
Figure~\ref{fig:gamma0htc}. Bold black edges are the edges of the
Hasse diagram $\mathcal{H}$. Each node of the diagram is a balanced
Schnyder wood of $G$. In every Schnyder wood, a face is dotted if its
boundary is directed. In the case of the special face $f_0$ the dot is
black. Otherwise, the dot is magenta if the boundary cycle is oriented
\ccw and cyan otherwise.  An edge in the Hasse diagram from $D$ to
$D'$ (with $D\leq D'$) corresponds to a face oriented \ccw in $D$
whose edges are reversed to form a face oriented \cw in $D'$, i.e. a
magenta dot is replaced by a cyan dot. The outdegree of a node is its
number of magenta dots and its indegree is its number of cyan dots.
All edges are not in the interior of a  triangle, so, by
previous section, they are not rigid. By Lemma~\ref{lem:necessary},
all the faces have a dot at least once. The special face is not
allowed to be flipped, it is oriented \ccw in the maximal Schnyder
wood and \cw in the minimal Schnyder wood by Lemma~\ref{lem:maxtilde}.
By Lemma~\ref{prop:maximal}, the maximal (resp. minimal) Schnyder wood
contains no other faces oriented \ccw (resp. \cww), indeed it contains
only cyan (resp. magenta) dots.  The words ``no'', ``half'', ``full''
correspond to Schnyder woods that are not half-crossing, half-crossing
(but not crossing), and crossing, respectively. By
Proposition~\ref{lem:halftypegamma0}, the diagram contains all the
half-crossing Schnyder woods of $G$. The minimal element is the
Schnyder wood of Figure~\ref{fig:gamma0htc}, and its neighbor is the
half-crossing Schnyder wood of Figure~\ref{fig:cross-0-tri}.

In the example the two crossing Schnyder woods lie in the ``middle''
of the lattice.  These Schnyder woods are of particular interests for
graph drawing purpose (see Chapter~\ref{sec:ortho}) whereas the
minimal balanced Schnyder wood (not crossing in this example) is
important for bijective encoding (see Chapter~\ref{chap:encoding}).
We show in next section how to find in linear time either a minimal
balanced Schnyder wood or a crossing Schnyder wood in the case of
toroidal triangulations.

\begin{figure}[!h]
\center
\includegraphics[scale=0.25]{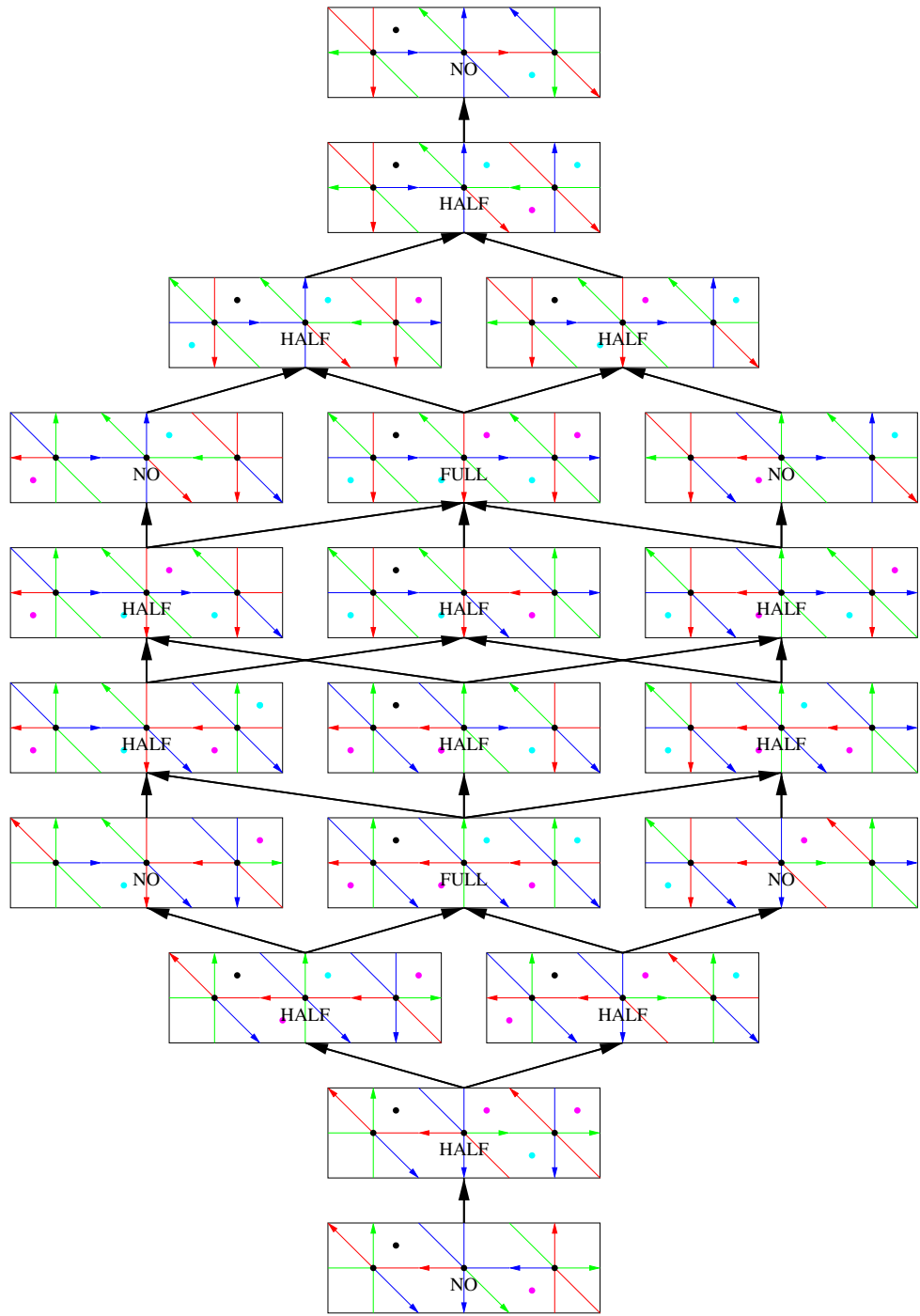} 
\caption{Example of the Hasse diagram of the distributive lattice of
  homologous orientations of a toroidal triangulation.}
\label{fig:lattice}
\end{figure}

The underlying toroidal triangulation of Figure~\ref{fig:lattice} has
only two Schnyder woods not depicted in
Figure~\ref{fig:lattice}. 
One of them two Schnyder wood is shown in Figure~\ref{fig:gammanot0nothtc}
and the other one is a 180\textdegree rotation of Figure~\ref{fig:gammanot0nothtc}. 
Each of these Schnyder wood is  alone in its lattice of
homologous orientations. 
All their edges are rigid. They have no
0-homologous oriented subgraph. 

Theorem~\ref{th:transformations} says that one can take the Schnyder
wood of Figure~\ref{fig:gammanot0nothtc}, reverse 3 or 6 vertical cycle
(such cycles form an Eulerian-partitionable oriented subgraph) to obtain
another Schnyder wood. Indeed, reversing any 3 of these cycles leads
to one of the Schnyder wood of Figure~\ref{fig:lattice} (for example
reversing the three loops leads to the crossing Schnyder wood of the
bottom part). Note that ${6 \choose 3}=20$ and there are exactly twenty
Schnyder woods on Figure~\ref{fig:lattice}. Reversing the 6 cycles leads
to the same picture pivoted by 180°.

Let's play  with this example and allow the special face to be
flipped. This consists in adding some extra edges between the nodes of
Figure~\ref{fig:lattice}. Then one obtain a kind of ``circular
lattice'' that is depicted on Figure~\ref{fig:lattice-rotate}. All
edges are going \cw between the nodes and corresponds to a \ccw face
that is flipped to become \cw. The $6$ bold magenta edges are the
extra edges that where added from Figure~\ref{fig:lattice}
(i.e. removing them gives the previous Hasse diagram). These edges
corresponds to flips of the special face.  By symmetry of the
underlying triangulation, any of the $6$ faces of the triangulation
would have  play the same role as  the special face. Thus the
$36$ edges of the diagram of Figure~\ref{fig:lattice-rotate} can be
partitioned into $6$ sets where each set is analogous to the magenta
edges. We do not know if there is a theory of such ``circular
lattice''. At least we believe that this point of view helps to
understand how things are going with homologous orientations.

\begin{figure}[!h]
\center
\includegraphics[scale=0.15]{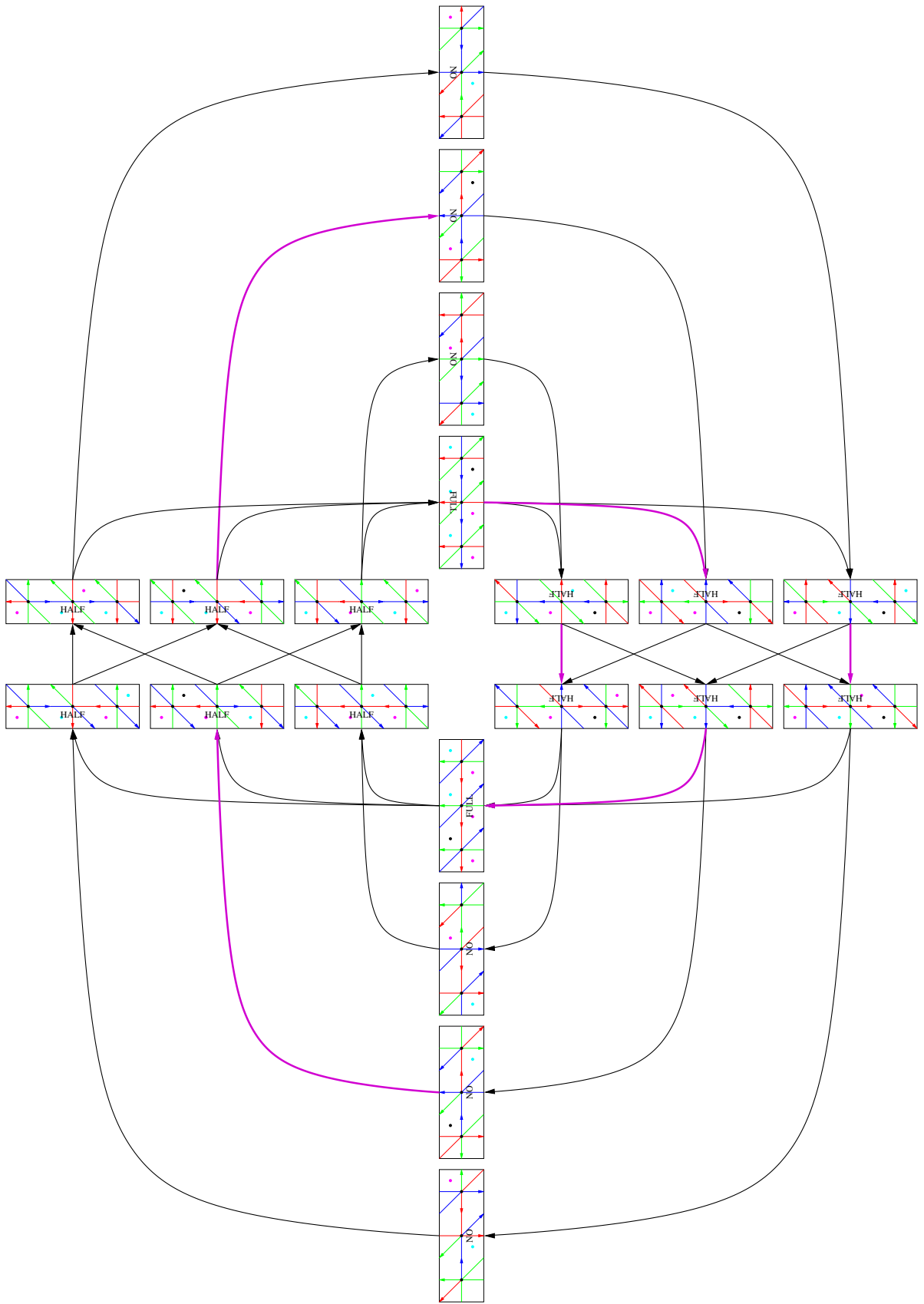} 
\caption{Diagram of the ``circular lattice'' obtained from
  Figure~\ref{fig:lattice} by allowing the special face to be flipped
  (magenta edges).}
\label{fig:lattice-rotate}
\end{figure}

\part{Proofs of existence in the toroidal case}
\label{chap:existence}

\chapter{Different proofs  for triangulations}

In the plane, the proof of existence of Schnyder wood is usually done
quite easily by using a so-called shelling order (or canonical order,
see~\cite{Kan96}). This method consists is starting from the outer
face and removing the vertices one by one. It leads to a simple linear
time algorithm to compute a Schnyder wood.  We do not see how to
generalize the shelling order method for the torus since the objects
that are then considered are too regular/homogeneous and there is no
special face (and thus no particular starting point) playing the role
of outer face.

We present in this chapter several proof of existence for toroidal
triangulation with some consequences. Theses proofs also help to
understand interesting structural properties of 3-orientations of
toroidal maps.  We propose to start with triangulations which are
simpler and the proofs are somewhat more intuitive. The case of general
toroidal maps (in fact essentially 3-connected toroidal maps) is
studied in next chapter.

\section{Contracting vertices}
\label{sec:contractionproof}

In this section we prove existence of balanced Schnyder wood for
triangulations by contracting edges until we obtain a triangulation
with just one vertex. The toroidal triangulation on one vertex is
represented on Figure~\ref{fig:orientation} with a balanced Schnyder
wood. Then the graph can be decontracted step by step to obtain a
balanced Schnyder wood of the original triangulation.

Given a toroidal triangulation $G$, the \emph{contraction} of a
non-loop-edge $e$ of $G$ is the operation consisting of continuously
contracting $e$ until merging its two ends. We note $G/e$ the obtained
map.  On Figure~\ref{fig:contraction-tri} the contraction of an edge
$e$ is represented.  Note that only one edge of each multiple edges
that is created is preserved (edge $e_{wx}$ and $e_{wy}$ on the
figure).

\begin{figure}[!h]
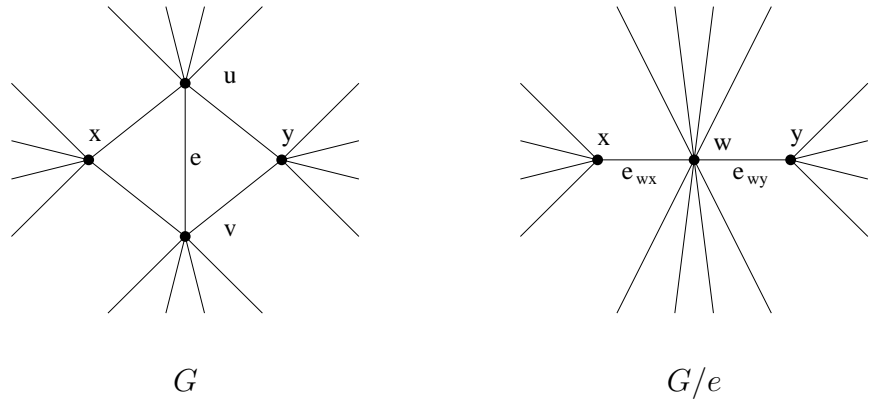

\center
\begin{tabular}{ccc} 
\includegraphics[scale=0.4]{contraction-1-tri}
& \hspace{3em} &
\includegraphics[scale=0.4]{contraction-2-tri}\\
& & \\
$G$& &$G/e$ \\
\end{tabular}
\caption{The contraction operation for triangulations}
\label{fig:contraction-tri}
\end{figure}

Note that the contraction operation is also defined when some vertices
are identified (see Figure~\ref{fig:contraction-loop-tri}): $x=u$ and
$y=v$ (the case represented on the figure), or $x=v$ and $y=u$
(corresponding to the symmetric case with a diagonal in the other
direction).

\begin{figure}[!h]
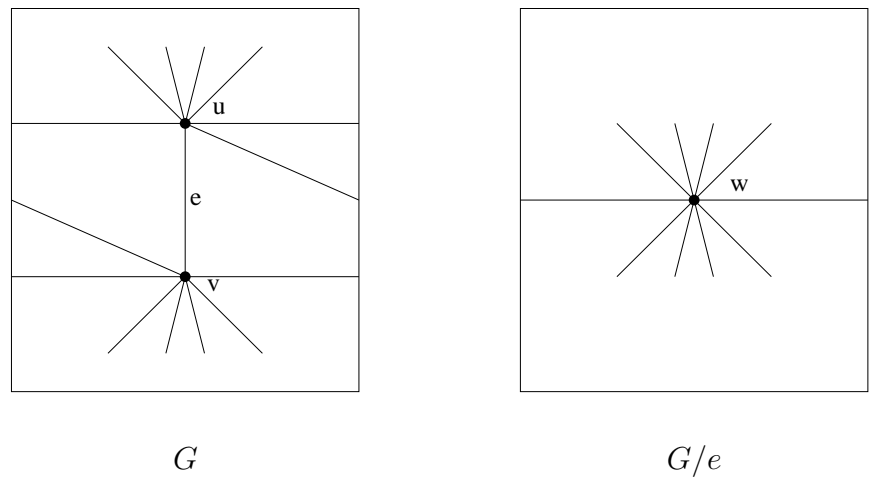

\center
\begin{tabular}{ccc} 
\includegraphics[scale=0.4]{contraction-loop}
& \hspace{3em} &
\includegraphics[scale=0.4]{contraction-loop-2}\\
& & \\
$G$& &$G/e$ \\
\end{tabular}
\caption{The contraction operation when some vertices are identified.}
\label{fig:contraction-loop-tri}
\end{figure}

The contraction operation enables to prove the following:

\begin{theorem}
\label{th:proofbycontractiontri}
A toroidal triangulation admits a balanced Schnyder wood.
\end{theorem}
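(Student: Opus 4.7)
My plan is to prove the theorem by induction on the number of vertices $n$, using edge contraction to reduce to the case $n=1$ treated explicitly by hand.

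For the base case $n=1$, Euler's formula forces $m=3$ and $f=2$, so $G$ consists of three loops and two triangular faces at the unique vertex $v$. The plan is to orient each of the three loops as a single-directed edge in a distinct color, in such a way that the three outgoing half-edges appear in counterclockwise order around $v$. One checks directly that this satisfies the Schnyder property at $v$, that no face boundary is monochromatic (each face has one half-edge of each color), and that $\gamma=0$ on a homology basis formed by two of the loops. By Lemma~\ref{lem:typegamma0foranybasis}, the resulting Schnyder wood is balanced.

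For the inductive step with $n\geq 2$, the first task is to find a non-loop edge $e=uv$ of $G$ whose contraction $G/e$ is still a toroidal triangulation (no contractible loops or homotopic multiple edges). This amounts to choosing $e$ so that $u$ and $v$ have exactly two common neighbors, namely the apex vertices $w_1,w_2$ of the two faces incident to $e$; equivalently, $e$ is not in any separating triangle other than those two faces. A structural argument on separating triangles ensures that such an edge always exists when $n\ge 2$. By induction, $G/e$ admits a balanced Schnyder wood $S'$.

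The main work is the decontraction step: constructing a balanced Schnyder wood $S$ of $G$ from $S'$. Splitting the merged vertex $w$ of $G/e$ back into $u,v$ partitions the edges around $w$ (in cyclic order) between $u$ and $v$, reintroduces the edge $e$, and replaces each merged edge $ww_i$ by a pair $uw_i,vw_i$. The orientations and colors of the new and split edges must be chosen so that both $u$ and $v$ have outdegree $3$ with one outgoing edge of each color in counterclockwise order, using the local structure of $S'$ around $w$. To preserve the balanced condition, the plan is to exploit that contraction is compatible with homology: a homology basis of $G$ can be picked to avoid $e$, projecting to a homology basis of $G/e$, with $\gamma$ preserved on such cycles. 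If the natural decontraction does not immediately yield a balanced Schnyder wood, Theorem~\ref{cor:lattice} and Theorem~\ref{th:transformations} allow one to flip via Eulerian-partitionable transformations to reach a balanced element in the same distributive lattice class.

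The hardest part will be the case analysis at $w$: one must check that for every possible local configuration of $S'$ around $w$ (including the degenerate cases where $u$ and $v$ are identified in $G/e$, as in Figure~\ref{fig:contraction-loop-tri}), a valid choice of splitting exists that is locally consistent with the Schnyder property at both $u$ and $v$ and that globally preserves (or can be corrected to preserve) the balanced condition, while also maintaining the absence of any monochromatic face-bounding cycle among the two newly created triangles $uvw_1,uvw_2$.
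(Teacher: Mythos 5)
Your overall strategy --- induction by edge contraction, with a single-vertex base case and a decontraction step, verifying that $\gamma$ is preserved so that balancedness persists --- is exactly the approach the paper takes, and the base case is correct. However, two details in the inductive step are problematic.

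First, picking a homology basis for $G$ that avoids $e$ does not, by itself, let you conclude that $\gamma$ is preserved under decontraction. The quantity $\gamma(C)$ counts edges \emph{leaving} $C$ on each side, so it is sensitive to the local picture at every vertex of $C$, not just to whether $C$ uses $e$. A basis cycle can still pass through $u$, $v$, or the apex vertices $x,y$, and there the splitting of $w$ into $u,v$ and the re-splitting of $e_{wx}, e_{wy}$ changes the set of edges incident to the cycle. (Indeed, in small examples such as the one-vertex triangulation, every non-contractible cycle passes through the contracted region.) The paper handles this precisely by a case analysis of all the ways a cycle can traverse the region around $w$, paired with the explicit decontraction rules of Figure~\ref{fig:decontracttri}, and checks that in each case $\gamma$ is unchanged. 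You need that case analysis; avoiding $e$ is not a substitute for it.

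Second, your fallback --- ``flip via Eulerian-partitionable transformations to reach a balanced element'' using Theorem~\ref{cor:lattice} and Theorem~\ref{th:transformations} --- is circular. Theorem~\ref{cor:lattice} gives a distributive lattice only on orientations of the \emph{same type} (i.e. already homologous); it says nothing about the existence of an orientation of type $(0,0)$. Theorem~\ref{th:transformations} characterizes which flips change the type, but does not assert that a type-$(0,0)$ orientation is reachable, and the existence of such an orientation is exactly what you are trying to prove. The correct fix is to drop the fallback and commit to showing, case by case, that the explicit decontraction rules preserve $\gamma$ on every cycle through the contracted region.
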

\begin{proof}
  Suppose the theorem is false and let $G$ be a toroidal
  triangulation, with no contractible loop nor homotopic multiple
  edges, that does not admit a balanced Schnyder wood and that has the
  minimum number of vertices. The graph $G$ is not a single vertex,
  otherwise it is easy to find a balanced Schnyder wood of it (see
  Figure~\ref{fig:orientation}).  Then one can find an edge $e$ in $G$
  such that its contraction preserves the fact that $G$ is a toroidal
  triangulation with no contractible loop nor homotopic multiple edges
  (see~\cite{Moh96}). Let $G'=G/e$ be the toroidal map obtained after
  contracting $e$.
 
  By minimality, the graph $G'$ has a balanced Schnyder wood.  Note that
  since $G'$ is a toroidal triangulation, all edges are oriented in
  one direction only.  We now prove how to extend the Schnyder wood of
  $G'$ to a Schnyder wood of $G$.  Let $u,v$ be the two extremities of
  $e$ and $x,y$ the two vertices of $G$ such that the two faces
  incident to $e$ are $A=u,v,x$ and $B=v,u,y$ in clockwise order (see
  Figure~\ref{fig:contraction-tri}).  Note that $u$ and $v$ are
  distinct by definition of edge contraction but that $x$ and $y$ are
  not necessarily distinct, nor distinct from $u$ and $v$. Let $w$ be
  the vertex of $G'$ resulting from the contraction of $e$. Let
  $e_{wx}, e_{wy}$ be the two edges of $G'$ represented on
  Figure~\ref{fig:contraction-tri} (these edges are identified and
  form a loop on Figure~\ref{fig:contraction-loop-tri}).

  There are different cases corresponding to the different
  possibilities of orientation and coloring of the edges $e_{wx}$ and
  $e_{wy}$ in $G'$. There should be $6$ cases depending on if $e_{wx}$
  and $e_{wy}$ are both entering $w$, both leaving $w$ or one entering
  $w$ and one leaving $w$ (3 cases), multiplied by the coloring, both
  of the same color or not (2 cases). The case where $w$ has two edges
  leaving in the same color is not possible by the Schnyder
  property. So only $5$ cases remain represented on the left side
 of Figure~\ref{fig:decontracttri}.  In the
  particular case where some vertices are identified (see
  Figure~\ref{fig:contraction-loop-tri}), there is a loop edge in $G'$
  that is uniquely colored, as on  the left of
  Figure~\ref{fig:contraction-loop-tri-col}.
  
\begin{figure}[!h]
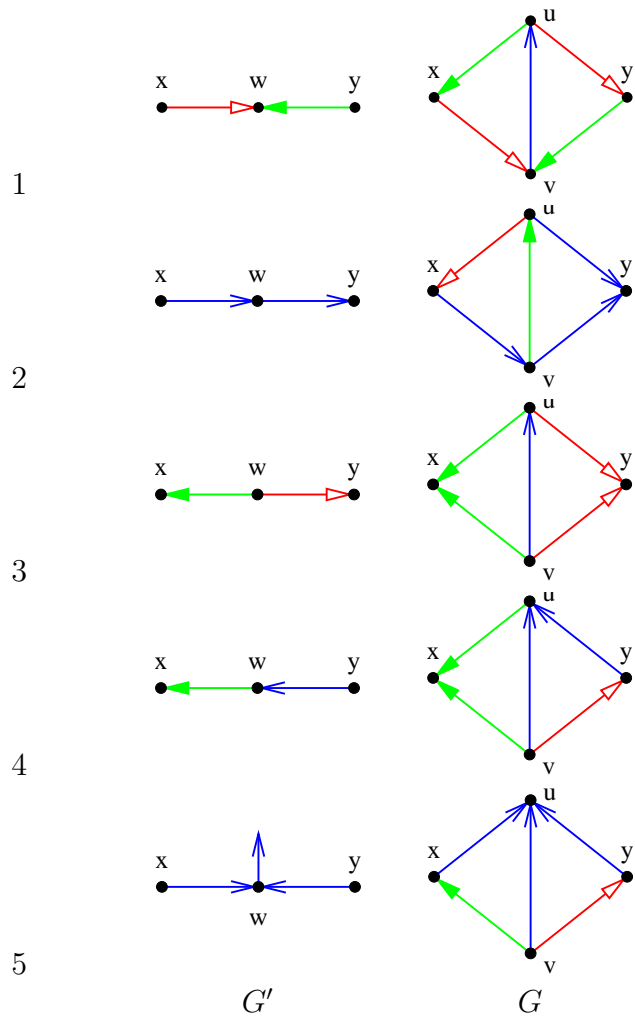

\center
  \begin{tabular}[]{ccccccccc}
1& \hspace{2em} & \includegraphics[scale=0.4]{1-1-0} & \hspace{0em} &
\includegraphics[scale=0.4]{1-1-1} & \hspace{0em} &
\\

2 & & \includegraphics[scale=0.4]{1-2-0-tri} & \hspace{0em} &
\includegraphics[scale=0.4]{1-2-1-tri} & \hspace{0em} &
\\
 
3 & &\includegraphics[scale=0.4]{1-3-0-tri} & \hspace{0em} &
\includegraphics[scale=0.4]{1-3-1-tri} \\

4 & & \includegraphics[scale=0.4]{1-5-0-tri} & \hspace{0em} &
\includegraphics[scale=0.4]{1-5-1-tri} \\

5 & & \includegraphics[scale=0.4]{1-4-0} & \hspace{0em} &
\includegraphics[scale=0.4]{1-4-1} \\

 & & $G'$& &$G$ \\
  \end{tabular}

  \caption{Decontraction rules for triangulations.}
\label{fig:decontracttri}
\end{figure}

\begin{figure}[!h]
\center
\begin{tabular}{ccc} 
\includegraphics[scale=0.4]{contraction-loop-2-col}
& \hspace{3em} &
\includegraphics[scale=0.4]{contraction-loop-col}\\
& & \\
$G'$& &$G$ \\
\end{tabular}
\caption{The decontraction operation when some vertices are identified.}
\label{fig:contraction-loop-tri-col}
\end{figure}

In each case there might be several possibilities to orient and color
the edges of $G$ incident to the faces $A$ and $B$ in order to obtain
a Schnyder wood of $G$ from the Schnyder wood of $G'$ (all the edges
not incident to $A$ or $B$ are not modified).  These different
possibilities can easily be found by considering the angle labeling of
$G$ and $G'$ around vertices $u,v,w$.

For example, for case 1, the labeling of the angles in $G'$ are
depicted on the left of Figure~\ref{fig:angle-vierge-2}. Since outside
the contracted region, the edges are not modified, these labels have
to be maintained during the decontraction process as shown on the
middle of Figure~\ref{fig:angle-vierge-2}. In this case, there are
exactly three possible orientations and colorings of the edges of $G$
satisfying this labeling, one of these possibilities is represented on
right part of Figure~\ref{fig:angle-vierge-2}.

\begin{figure}[!h]
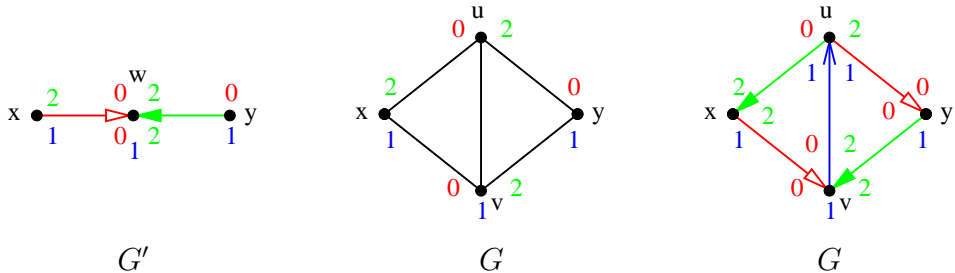

\center 
\begin{tabular}[]{ccc}
\includegraphics[scale=0.4]{1-1-0-angle}
\ \ &   \ \ 
                \includegraphics[scale=0.4]{1-1-0-vierge-angle}
\ \ &   \ \ 
\includegraphics[scale=0.4]{1-1-1-angle}
\\
$G'$ \ \ & \ \ $G$& \ \ $G$\\
\\
\end{tabular}

  \caption{Angle labeling during decontraction for case 1 and the
    corresponding possible orientation and coloring of $G$.}
\label{fig:angle-vierge-2}
\end{figure}

For all the cases, there is always at least one possibility of
orientation and coloring of $G$ satisfying the Schnyder property and
one such possibility is represented on the right side of
Figures~\ref{fig:decontracttri}
and~\ref{fig:contraction-loop-tri-col}.  Since we are considering
edges that are oriented in one direction only, there is no face whose
boundary is a monochromatic cycle by the Schnyder property.  So after
decontraction we still have a Schnyder wood.  Moreover one can check
that the value $\gamma$ of any non-contractible cycle is not modified
during the decontraction process. For  this one can
consider all the possible ways that a cycle may go through the
contracted region and then check that the value of $\gamma$ is
the same in $G$ and $G'$.  Thus the resulting Schnyder wood is
balanced.
\end{proof}
 
Note that Theorem~\ref{th:proofbycontractiontri} confirms
that Conjecture~\ref{conjecture} is true when $g=1$.  The proof is
quite simple and can be turned into a linear time algorithm to find an
balanced Schnyder wood. The main difficulty lies in finding sequence of
edges that can be contracted in linear time.

An edge $e$ is said to be \emph{contractible} if it is not a loop and
if after contracting $e$ and identifying the borders of the two newly
created length two faces, one obtains a triangulation that is still
without contractible loop or homotopic multiple edges.

Consider an edge $e$ of $G$ with distinct ends $u,v$, and with
incident faces $uvx$ and $vuy$, such that these vertices appear in \cw
order around the corresponding face (so we are in the situation of
Figure~\ref{fig:contraction-tri}). The edge $e$ is contractible if and
only if, every walk enclosing an open disk containing a face other
than $uvx$ and $vuy$, goes through an edge distinct from $e$ at least
three times.  Equivalently $e$ is non-contractible if and only if it
belongs to a separating triangle or $u,v$ are both incident to a
loop-edge $\ell_u,\ell_v$, respectively, such that the walk of length
four $(\ell_u,e,\ell_v,e)$ encloses an open disk with at least three
faces (i.e. with at least one face distinct from $uvx$ and $vuy$). To
avoid the latter case, if vertex $u$ is incident to a loop-edge
$\ell_u$, we consider $e$ to be an edge that is consecutive to that
loop, so that we have $x=u$. In such a case, if there is a loop
$\ell_v$ incident to $v$ and a walk of length four of the form
$(\ell_u,e,\ell_v,e)$ enclosing a disk with at least three faces, then
there is also a separating triangle containing $e$. In the following
we show how to find such separating triangle, if there is one. If $u$
and $v$ have more common neighbors, than simply $x$ and $y$, consider
their second common neighbor going clockwise around $u$ from $e$ (the
first one being $x$, and the last being $y$) and call it $x'$. Call
$y'$ their second common neighbor going counterclockwise around $u$
from $e$. Now, either $uvx'$ or $uvy'$ is a separating triangle or
the edge $e$ is contractible. We consider these two cases below:

\begin{itemize}
\item If $e$ is contractible, then it is contracted and we apply the
  procedure recursively to obtain a Schnyder wood of the contracted
  graph.  Then we update the balanced Schnyder wood with the
  rules of Figures~\ref{fig:decontracttri}
  and~\ref{fig:contraction-loop-tri-col}. Note that this update is
  done in constant time.
\item If $uvx'$ (resp. $uvy'$) is a separating triangle, one can
  remove its interior, recursively obtain a toroidal Schnyder wood of
  the remaining toroidal triangulation, build a planar Schnyder wood
  of the planar triangulation inside $uvx'$ (resp. $uvy'$), and then
  superimpose the two (by eventually permuting the colors) to obtain a
  Schnyder wood of the whole graph. Note that computing a planar
  Schnyder wood can be done in linear time using a canonical ordering
  (see~\cite{Kan96}).
\end{itemize}

The difficulty here is to test whether $uvx'$ or $uvy'$ are triangles.
For that purpose, one first needs to compute a basis $(B_1,B_2)$ for
the homology.  Consider a spanning tree of the dual map $G^*$. The map
obtained from $G$ by removing those edges is unicellular, and removing
its treelike parts one obtains two cycles $(B_1,B_2)$ (intersecting on
a path with at least one vertex) that form a basis for the
homology. This can be computed in linear time for $G$ and then updated
when some edge is contracted or when the interior of some separating
triangle is removed. The updating takes constant time when some edge
is contracted, and it takes $O(n')$ time when removing $n'$ vertices
in the interior of some separating triangle. The overall cost of
constructing and maintaining the basis is thus linear in the size of
$G$.  Then a closed walk of length three $W$, given with an arbitrary
orientation, encloses a region homeomorphic to an open disk if and
only if $W$ crosses $B_i$ from right to left as many times as $W$
crosses $B_i$ from left to right, for every $i\in\{1,2\}$.  This test
can be done in constant time for $uvx'$ and $uvy'$ once the half edges
on the right and left sides of the cycles $B_i$ are marked. Marking
the half edges of $G$ and maintaining this marking while contracting
edges or while removing the interior of separating triangles can
clearly be done in linear time. We thus have that the total running
time to compute a balanced Schnyder wood of $G$ is linear.

Note that from a balanced Schnyder wood, one can easily obtain in
linear time the minimal balanced Schnyder wood w.r.t.~a special face
$f_0$ by applying the method discuss in Section~\ref{sec:lattice}.
Thus the total running time to compute a minimal balanced Schnyder
wood is linear.

The proof of Theorem~\ref{th:proofbycontractiontri} can easily be
generalized to essentially 3-connected toroidal maps. This is done in
Chapter~\ref{chap:generalexistence} in which we moreover do some
tedious extra efforts to obtain the existence of crossing Schnyder
woods and not only balanced Schnyder woods. Unfortunately
Chapter~\ref{chap:generalexistence} does not lead to a linear
algorithm.

We do not see how to generalize the proof of
Theorem~\ref{th:proofbycontractiontri} to higher genus
surfaces. Indeed it is not always possible to find rules to decontract
a vertex that has outdegree more than three. In the example of
Figure~\ref{fig:decontractgenus} where the two considered edges
incident to $w$ are outgoing, there is no way to orient and color the
decontracted graph in order to maintain a Schnyder wood by keeping he
orientation and coloring of edges that are not represented on the
right part of the figure.

\begin{figure}[!h]
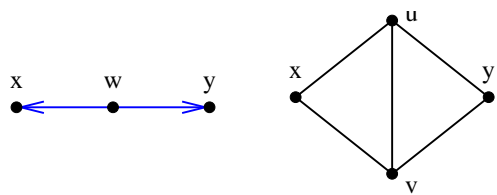

\center
\includegraphics[scale=0.4]{decontract-g}
 \ \ \ \ \ \
\includegraphics[scale=0.4]{decontract-g-2}

  \caption{How to decontract a vertex of outdegree strictly  more than
    $3$?}
\label{fig:decontractgenus}
\end{figure}

\section{Reversing oriented subgraphs}
\label{sec:flipproof}

In this section we obtain a proof of existence of crossing Schnyder
woods by understanding some structural property on the set of
$3$-orientations of a toroidal triangulation.  In
Section~\ref{sec:flip-noncontract}, we flip non-contractible cycles to
move between the different lattices of homologous $3$-orientations in
order to go from any lattice into the balanced lattice. Indeed we
transform any $3$-orientation into a half-crossing Schnyder wood with
this method.  In Section~\ref{sec:flip-contract}, we flip 0-homologous
oriented subgraphs to move inside the balanced lattice from a
half-crossing Schnyder wood to a crossing Schnyder wood.  The
combination of the two sections gives a linear time algorithm to find
a crossing Schnyder wood for a toroidal triangulation.

\subsection{Non-contractible cycles}
\label{sec:flip-noncontract}

We first show several properties of 3-orientations of toroidal
triangulations. By Theorem~\ref{th:barat}, a simple toroidal triangulation admits a
3-orientation. This can be shown to be true also for non-simple
triangulations, for example using edge-contraction as explained in
Section~\ref{sec:contractionproof}.  

Consider a toroidal triangulation $G$ and a 3-orientation of $G$.  Let
$G^\infty$ be the universal cover of $G$.

\begin{lemma}
\label{lem:kmoins3}
A cycle $C$ of $G^\infty$ of length $k$ has exactly $k-3$ edges
leaving $C$ and directed toward  the interior of $C$.
\end{lemma}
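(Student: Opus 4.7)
The plan is to exploit the fact that any cycle $C$ in the universal cover $G^\infty$ is contractible (since $G^\infty$ is the plane), and therefore bounds a topological disk. Let $G'$ denote the planar map consisting of $C$ together with everything in the closed disk bounded by $C$. Since $G^\infty$ inherits the triangulation from $G$, every bounded face of $G'$ is a triangle and the outer face of $G'$ has length $k$.

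First I would compute $m'$, the number of edges of $G'$, in terms of $k$ and $n_i$, the number of vertices strictly inside $C$. Writing $n' = n_i + k$, $f'$ for the number of faces of $G'$ (including the outer face), and using Euler's formula $n' - m' + f' = 2$ together with the face-size identity $2m' = 3(f'-1) + k$, a short elimination yields
\[
  m' = 3 n_i + 2k - 3.
\]

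Next I would count the edges of $G'$ using outdegrees. Every edge of $G'$ contributes exactly $1$ to the outdegree of its tail, so summing outdegrees restricted to edges of $G'$ gives $m'$. Interior vertices have all three of their outgoing edges in $G'$ (since their incident edges all lie inside $C$), contributing $3 n_i$ in total. Each edge of $C$ contributes exactly once to the outdegree of a boundary vertex (its tail). Let $L$ be the number of edges of $G'$ that leave a vertex of $C$ toward the interior of $C$; these are exactly the edges we want to count. Summing everything gives
\[
  m' = 3 n_i + k + L,
\]
and combining with the previous formula yields $L = k-3$, as required.

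The argument is completely elementary once one observes the topological fact that $C$ bounds a disk in $G^\infty$; the hard work is really a double counting via Euler's formula plus the outdegree condition of the $3$-orientation, and no step should present a genuine obstacle.
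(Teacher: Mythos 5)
Your proof is correct and follows essentially the same route as the paper's: Euler's formula for the disk map bounded by $C$, the triangular-face count $2m' = 3(f'-1)+k$, and the outdegree-$3$ count $m' = 3n_i + k + L$, from which $L = k-3$ drops out. The only cosmetic difference is that you first eliminate $f'$ to get $m' = 3n_i + 2k - 3$ before comparing with the outdegree count, whereas the paper combines all three equations at once.
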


\begin{proof}
  Let $x$ be the number of edges leaving $C$ and directed toward the
  interior of $C$.  Consider the cycle $C$ and its interior as a
  planar graph $C^{o}$. Euler's formula gives $n'-m'+f'=2$ where
  $n',m',f'$ are respectively the number of vertices, edges and faces
  of $C^{o}$.  Every inner vertex has exactly outdegree $3$, so
  $m'=3(n'-k)+k+x$. Every inner face has size three so $2m'=3(f'-1)+k$.
  Considering these equalities, one obtain that $x=k-3$ as claimed.
\end{proof}

As in Chapter~\ref{chap:introSW}, for an edge $e$ of $G$, we define
the \emph{middle walk from $e$} as the sequence of edges
$(e_i)_{i\geq 0}$ obtained by the following method.  Let $e_0=e$. If
the edge $e_i$ is entering a vertex $v$, then the edge $e_{i+1}$ is
chosen in the three edges leaving $v$ as the edge in the ``middle''
coming from $e_i$. The difference from Chapter~\ref{chap:introSW} is
that now a middle walk never ends since there is no outer vertex.

A directed cycle $M$ of $G$ is said to be a \emph{middle cycle} if
every vertex $v$ of $M$ has exactly one edge leaving $v$ on the left
of $M$ (and thus exactly one edge leaving $v$ on the right of $M$).
Note that a middle cycle $C$ satisfies $\gamma(C)=0$.  Note that if
$M$ is a middle cycle, and $e$ is an edge of $M$, then the middle walk
from $e$ consists of the sequence of edges of $M$ repeated
periodically.  Note that a middle cycle is non-contractible, otherwise
in $G^\infty$ it forms a contradiction to
Lemma~\ref{lem:kmoins3}. Similar arguments lead to:

   \begin{lemma}
\label{lem:middleequal}
Two middle cycles that are  weakly homologous are either
vertex-disjoint or equal.
   \end{lemma}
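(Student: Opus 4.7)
The plan is to prove that if two weakly homologous middle cycles $M_1,M_2$ share a vertex, then $M_1=M_2$; this implies the lemma by contraposition. Since the middle walk is deterministic in the forward direction (given an incoming edge at a vertex, the outgoing ``middle'' edge is uniquely determined by the fact that there must be exactly one other outgoing edge on each side of the path), two middle cycles sharing any edge automatically trace out the same forward walk and therefore coincide. Thus I may assume from the start that $M_1$ and $M_2$ share a vertex $v$ but no edge.

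I then work in the universal cover $G^\infty$, an infinite planar triangulation inheriting the $3$-orientation. Fixing a lift $\tilde v$ of $v$, let $L_1$ and $L_2$ be the bi-infinite directed paths lifting $M_1$ and $M_2$ through $\tilde v$; each still satisfies the middle property at every vertex. Weak homology of $M_1,M_2$ implies that the sets $L_1$ and $L_2$ are invariant under the same deck transformation $\tau$, so $L_1\cap L_2\supseteq\{\tau^n\tilde v:n\in\mathbb{Z}\}$ and the two curves share infinitely many vertices in $G^\infty$.

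Walking along $L_1$ from $\tilde v$, let $u_1$ be the next vertex it has in common with $L_2$ (so $u_1=\tau\tilde v$ if no earlier crossing occurs). The $L_1$- and $L_2$-pieces joining $\tilde v$ to $u_1$ form a simple closed curve bounding a finite region $R\subset G^\infty$, whose boundary has length $k=k_1+k_2$ with $k_i$ the length of the $L_i$-piece. By Lemma~\ref{lem:kmoins3}, exactly $k-3$ edges leave $\partial R$ toward the interior of $R$. On the other hand, the middle-cycle property gives exactly one outgoing edge on the $R$-side at each interior vertex of the $L_i$-piece, already accounting for $(k_1-1)+(k_2-1)=k-2$ such edges. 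Since the two corners $\tilde v$ and $u_1$ can only add further outgoing edges into $R$, the total is at least $k-2$, contradicting Lemma~\ref{lem:kmoins3}.

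The technical heart of the argument is the local analysis at the two corner vertices. At each corner the three outgoing edges consist of the $L_1$-outgoing, the $L_2$-outgoing, and a third edge $c$ whose position must simultaneously satisfy the middle-cycle constraints of $L_1$ and of $L_2$. If at a corner the half-edges of $L_1$ are not interleaved with those of $L_2$ in the cyclic order (``tangent'' configuration), these two constraints force $c$ into two disjoint angular sectors---already a local contradiction, so the tangent case cannot occur at all. In the ``transverse'' configuration the constraints are compatible, and one must carefully identify on which side of $L_1$ the region $R$ lies---this splits into subcases depending on whether $M_1,M_2$ are homologous or reversely homologous, and on whether $u_1$ is $\tau\tilde v$ or an earlier interior crossing---in order to verify that the contribution of each corner to the count is nonnegative. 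In every case the above lower bound of $k-2$ holds, the discrepancy with $k-3$ is unavoidable, and we conclude $L_1=L_2$, hence $M_1=M_2$.
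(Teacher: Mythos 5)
Your proof is correct and uses exactly the counting approach via Lemma~\ref{lem:kmoins3} that the paper alludes to with ``Similar arguments lead to'': lift to the universal cover, take the disk $R$ bounded by the two arcs of $L_1$ and $L_2$ between consecutive common vertices, and observe that the interior vertices of $\partial R$ alone supply $(k_1-1)+(k_2-1)=k-2>k-3$ outgoing edges into $R$, contradicting Lemma~\ref{lem:kmoins3}. Your final paragraph of corner analysis is superfluous --- the corner contributions are trivially nonnegative, so the contradiction is already complete after the interior count --- but the argument is sound.
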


We have the following useful lemma concerning
middle walks and middle cycles:

\begin{lemma}
\label{lem:middlecycle}
A middle walk always ends on a middle cycle.
\end{lemma}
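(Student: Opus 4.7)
The plan is to show that the middle walk $(e_i)_{i\geq 0}$ is eventually periodic and that its periodic part is a simple directed cycle satisfying the middle property at every vertex, i.e.\ a middle cycle. The first step is routine: the ``next-edge'' map $\sigma:E(G)\to E(G)$ prescribed by the middle rule is deterministic, so since $|E(G)|$ is finite the walk must eventually revisit an edge and from that point on is strictly periodic. Choose $p\geq 1$ and $i\geq 0$ minimal with $\sigma^{i+p}(e_0)=\sigma^i(e_0)$, and let $C:=(e_i,\ldots,e_{i+p-1})$; then $C$ is a closed walk of $G$ using $p$ pairwise distinct edges, and the middle-walk condition at each visited vertex is automatic by construction. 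Everything reduces to proving that $C$ is a simple directed cycle of $G$.

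Lift the walk to the universal cover $G^\infty$, writing $\tilde e_k$ and $\tilde w_k$ for lifts and heads. Periodicity descends to a deck transformation $T$ with $\tilde e_{k+p}=T(\tilde e_k)$ for $k\geq i$. To show that the lifted walk is \emph{vertex-simple} (in particular that $T\neq\mathrm{id}$), suppose $\tilde w_a=\tilde w_b$ with $a<b$ minimal. By minimality, $\tilde e_{a+1},\ldots,\tilde e_b$ is a simple cycle $\tilde C$ of length $k:=b-a\geq 3$ in $G^\infty$. The middle-walk property at each of the $k-1$ interior vertices supplies exactly one outgoing edge on each side of $\tilde C$, hence $k-1$ outgoing edges into its interior; the repeated vertex $\tilde w_a$ (of outdegree $3$) contributes at most two more. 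Comparing with the exact count $k-3$ from Lemma~\ref{lem:kmoins3} forces $(k-1)+x=k-3$ for some $x\in\{0,1,2\}$, which is absurd. Hence the lifted walk is vertex-simple and $T$ is nontrivial, so $C$ is non-contractible in $G$.

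It remains to deduce that $C$ is simple in $G$, and this is where I expect the main obstacle. Suppose on the contrary that $w_{i+a}=w_{i+b}=:v$ for some $0\leq a<b<p$; then $\tilde w_{i+a}\neq\tilde w_{i+b}$ (both project to $v$), so a nontrivial deck transformation $T''$ satisfies $T''(\tilde w_{i+a})=\tilde w_{i+b}$. If $T''=T^m$ for some integer $m$, then $\tilde w_{i+b}=\tilde w_{i+a+mp}$ and vertex-simplicity forces $b-a=mp$, which is impossible since $0<b-a<p$. The harder case is $T''\notin\langle T\rangle$: then $\{T,T''\}$ spans a rank-$2$ sublattice of the deck-group $\mathbb{Z}^2$ and the lifted walk visits two distinct $T$-orbits of lifts of $v$. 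I would resolve this case by exploiting that $e_{i+a}\neq e_{i+b}$ and $e_{i+a+1}\neq e_{i+b+1}$ (otherwise forward determinism of $\sigma$ would produce a period smaller than $p$), analysing the cyclic order of the three outgoing edges at $v$ in light of the middle condition, and running a Jordan-curve argument on the $T$-invariant bi-infinite simple planar path $\bigcup_{n\in\mathbb{Z}}T^n(\tilde{\mathcal P})$ in $\mathbb{R}^2$ (where $\tilde{\mathcal P}$ is one period lifted): this path separates the plane into two $T$-invariant regions, and the strip geometry of a fundamental domain of $\langle T\rangle$ is incompatible with the path weaving through both $T$-orbits of $v$-lifts while remaining simple. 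This planar-topology case analysis is the delicate point of the proof.
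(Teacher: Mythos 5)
Your first two paragraphs are on solid ground: eventual periodicity follows from the next-edge map being deterministic on a finite edge set, and the second paragraph correctly shows via Lemma~\ref{lem:kmoins3} that the lift of the periodic part to $G^\infty$ is vertex-simple (hence the periodic part is non-contractible in $G$). The case $T''=T^m$ in your third paragraph is also sound. The genuine gap is the final case $T''\notin\langle T\rangle$, which you flag yourself as ``the delicate point of the proof'' and only sketch. As presented it does not go through: the claim that the bi-infinite $T$-invariant path ``weaves through both $T$-orbits of $v$-lifts while remaining simple'' in a way incompatible with the ``strip geometry'' of a fundamental domain of $\langle T\rangle$ is not substantiated, and it is not at all clear that the envisioned Jordan-curve contradiction is forced by the middle-walk hypotheses alone.

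The paper avoids this case split entirely by exploiting that the deck group of the torus is abelian. If the periodic part $C$ visits a vertex $v$ twice, split $C$ at $v$ into two closed walks $A$ and $B$; they begin with distinct outgoing edges of $v$ since the period contains no repeated edges. Lift both concatenations $A\cdot B$ and $B\cdot A$ from the same copy $v_0$ of $v$. Writing $T_A,T_B$ for the deck transformations sending $v_0$ to the lifted endpoints of $A$ and $B$, the two lifted walks end at $T_B T_A(v_0)$ and $T_A T_B(v_0)$ respectively, which coincide because the deck group $\mathbb{Z}^2$ is abelian. So the two lifted walks $P$ and $Q$ leave $v_0$ along different edges, reunite at a common vertex $v_1$, and every interior vertex of each satisfies the middle-walk property. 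Truncating $P,Q$ to subwalks $P',Q'$ ending at the first common vertex $u$ after $v_0$ yields a simple cycle in $G^\infty$ of some length $k$ whose interior receives at least $k-2$ outgoing edges (one from each of the $k-2$ cycle vertices other than $v_0$ and $u$), contradicting the exact count $k-3$ from Lemma~\ref{lem:kmoins3} --- the very counting you already perform in your second paragraph. The tools you assembled thus suffice; the missing ingredient is the abelianness observation, which collapses the repeated-vertex case to a single simple-cycle argument and renders the subgroup analysis on $\langle T\rangle$ unnecessary.
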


\begin{proof}
  Start from any edge $e_0$ of $G$ and consider the middle walk
  $W=(e_i)_{i\geq 0}$ from $e_0$.  The graph $G$ has a finite number
  of edges, so some edges will be used several times in $W$.  Consider a
  minimal subsequence $e_k, \ldots, e_\ell$ such that no edge appears
  twice and $e_k=e_{\ell+1}$.  Thus $W$ ends periodically on the
  sequence of edges $e_k, \ldots, e_\ell$. We prove that $e_k, \ldots,
  e_\ell$ is a middle cycle.

  Assume that $k=0$ for simplicity. Thus $e_0, \ldots, e_\ell$ is an
  Eulerian subgraph $E$. If $E$ is a cycle then it is a middle cycle
  and we are done. So we can consider that it visits some vertices
  several times.  Let $e_i, e_j$, with $0\leq i < j\leq \ell$, such
  that $e_i, e_j$ are both leaving the same vertex $v$. By definition
  of $\ell$, we have $e_i\neq e_j$. Let $A$ and
  $B$ be the two closed walks  $e_i,\ldots, e_{j-1}$ and $e_{j},\ldots, e_{i-1}$,
  respectively, where indices are modulo $\ell+1$.

  Consider a copy $v_0$ of $v$ in the universal cover $G^\infty$. 
  Define the walk $P$ obtained by starting from $v_0$ following the
  edges of $G^\infty$ corresponding to the edges of $A$, and then to the
  edges of $B$. Similarly, define the walk $Q$ obtained by starting from
  $v_0$ following the edges of $B$, and then the edges of $A$.  The two
  walks $P$ and $Q$ both start from $v_0$ and both end at the same
  vertex $v_1$ that is a copy of $v$. Note that $v_1$ and $v_0$
  may coincide. All the vertices that
  are visited on the interior of $P$ and $Q$ have exactly one edge
  leaving on the left and exactly one edge leaving on the right.  The
  two walks $P$ and $Q$ may intersect before they end at $v_1$ thus we
  define $P'$ and $Q'$ as the subwalks of $P$ and $Q$ starting from
  $v_0$, ending on the same vertex $u$ (possibly distinct from $v_1$ or
  not) and such that $P'$ and $Q'$ are not intersecting on their
  interior vertices. Then the union of $P'$ and $Q'$ forms a cycle $C$
  of $G^\infty$. All the vertices of $C$ except possibly $v_0$ and $u$,
  have exactly one edge leaving $C$ and directed toward the interior
  of $C$, a contradiction to Lemma~\ref{lem:kmoins3}.
\end{proof}

A consequence of Lemma~\ref{lem:middlecycle} is that any 3-orientation
of a toroidal triangulation has a middle cycle.  The 3-orientation of
the toroidal triangulation on the left of Figure~\ref{fig:orientation}
is an example where there is a unique middle cycle (the diagonal).

By Lemma~\ref{lem:middlecycle}, a middle walk $W$ always ends on a
middle cycle. Let us denote by $M_W$ this middle cycle and $P_W$ the
part of $W$ before $M_W$. Note that $P_W$ may be empty. We say that a
 walk is leaving a cycle $C$ if its starting edge is incident to
$C$ and leaving $C$.

Let us now prove the main lemma of this section.

\begin{lemma}
\label{lem:2middlecycle}
$G$ admits a 3-orientation with two middle cycles that are not
weakly homologous. 
\end{lemma}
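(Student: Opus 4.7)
The plan is constructive. By Theorem~\ref{th:AGK}, the triangulation $G$ admits a 3-orientation $D_0$, and by Lemma~\ref{lem:middlecycle}, $D_0$ contains at least one middle cycle $M$, whose weak homology class I denote $[h]$. If $D_0$ already has a second middle cycle outside $[h]$, the lemma is proved, so I assume all middle cycles of $D_0$ are weakly homologous to $M$; by Lemma~\ref{lem:middleequal} they are then pairwise vertex-disjoint. The strategy is to modify $D_0$ by reversing suitable directed cycles until a new middle cycle appears in a homology class transverse to $[h]$, while a middle cycle in $[h]$ survives.

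The key tool is that reversing any directed cycle $C$ of a 3-orientation produces another 3-orientation (every vertex of $C$ has exactly one outgoing and one incoming $C$-edge, so out-degrees are preserved). Middle cycles are in particular directed cycles, so I first consider $D_1 := D_0 \oplus M$. In $D_1$, the reversed cycle $-M$ is itself a middle cycle: at each vertex $v$ of $M$, the two non-$M$ outgoing edges are unchanged, one on each side of $M$ (equivalently of $-M$), while the two $M$-edges at $v$ swap roles. Moreover, any other middle cycle of $D_0$ is vertex-disjoint from $M$ and is therefore still a middle cycle of $D_1$. Hence $D_1$ is a 3-orientation with at least one middle cycle in $[h]$.

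To produce a middle cycle in $D_1$ outside $[h]$, I consider, at a vertex $v$ of $M$, the edge $e^M_{in}$ which was $M$-incoming at $v$ in $D_0$ and is now outgoing in $D_1$. Starting a middle walk in $D_1$ from $e^M_{in}$ yields, by Lemma~\ref{lem:middlecycle}, a middle cycle $M'$ of $D_1$. The claim is that for a well-chosen starting vertex (and, if need be, a well-chosen $M$), the cycle $M'$ is not weakly homologous to $[h]$. Lifting to the universal cover $G^\infty$: the middle walk from $e^M_{in}$ starts by leaving a lift of $-M$ transversely to $[h]$; by Lemma~\ref{lem:middleequal} applied in $G^\infty$, no middle line of $D_1$ weakly homologous to $[h]$ can cross the lift of $-M$, so the only way for $M'$ to be weakly homologous to $[h]$ is for the walk to eventually double back and reach another lift of a cycle in $[h]$. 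This would enclose, between the two lifts and the middle walk, a finite planar sub-triangulation; applying Lemma~\ref{lem:kmoins3} and the Euler-formula counting used in its proof to this finite piece would force the enclosed region to have strictly negative edge-count, a contradiction. Hence $M'$ must wind transversely to $[h]$ and is not weakly homologous to $M$; together with the surviving middle cycle in $[h]$, this gives the orientation required by the lemma.

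The step I expect to be the main obstacle is the last one: ruling out cleanly that the middle walk launched from $e^M_{in}$ returns to a lift in $[h]$ without winding transversely. Handling this rigorously likely requires choosing $M$ extremally among the middle cycles of $D_0$ (for instance, so that the strip of $G^\infty$ bounded by two consecutive lifts of $M$ is minimal), and then using the rigidity of 3-orientations inside this strip --- again via a Lemma~\ref{lem:kmoins3}-style argument --- to force the middle walk out of the strip in the transverse direction. If a given attempt with $M$ fails, one iterates the reversal procedure on a middle cycle in the new, strictly simpler, strip; finiteness of the number of middle cycles guarantees termination with the desired 3-orientation.
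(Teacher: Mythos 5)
Your proposal has a concrete error at the core of the construction, plus a second gap you yourself flag but do not close.

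The concrete error: you reverse $M$ to get $D_1$ and then launch a middle walk from $e^M_{in}$, the $D_1$-outgoing edge at $v$ that was $M$-incoming in $D_0$. But $e^M_{in}$ is precisely the $(-M)$-outgoing edge of $v$, and you yourself establish that $-M$ is a middle cycle of $D_1$. By the definition of middle walk, the walk that starts along a middle cycle stays on it: from $e^M_{in}$ you enter the next vertex $u$ along $-M$, and the unique ``middle'' choice at $u$ is the next $(-M)$-edge, since the two non-$M$ outgoing edges of $u$ flank it. So the walk simply traverses $-M$ forever and $M' = -M$, which is weakly homologous to $M$. The construction produces nothing transverse to $[h]$. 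To get anything new you must start the walk from one of the non-$M$ outgoing edges of a vertex of $M$ (i.e.\ a middle walk \emph{leaving} $M$, as the paper does), not from the reversed $M$-edge.

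Even after fixing the starting edge, the second gap remains. Your argument for ruling out that the new middle walk returns to a cycle in $[h]$ is a heuristic Euler-formula count in $G^\infty$, and you concede this is the step that needs an extremal choice plus an iteration that you do not carry out. The region between the walk and a weakly homologous line need not be a finite disc, and the walk can wind around the torus, so a naive enclosed-region count does not close. The paper handles exactly this via two mechanisms that do not appear in your proposal: (i) an extremal claim --- among all 3-orientations, middle cycles $M$, and middle walks $W$ leaving $M$, take one maximizing the length of the pre-cycle part $P_W$, and show by reversing $M_W$ that $M_W = M$ must occur --- and (ii) a homological counting argument via the function $\gamma$: from $M_W = M$, reversing $M$ yields two non-weakly-homologous non-contractible cycles with $\gamma = 0$, so Lemma~\ref{lem:typegamma0foranybasis} forces $\gamma = 0$ everywhere; but the middle walk $W'$ in the reversed orientation then produces a non-contractible cycle with $\gamma = \pm 2$, a contradiction. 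Without the $\gamma$ machinery, or something replacing it, I do not see how your sketch closes.
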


\begin{proof}
  Suppose by contradiction that there is no 3-orientation of $G$ with
  two middle cycles that are not weakly homologous. We first prove  the
  following claim:

  \begin{claim}
\label{cl:middleequal}
There exists a 3-orientation of $G$ with a middle cycle $M$, a middle
walk $W$ leaving $M$ and $M_W=M$.
  \end{claim}

 \begin{proofclaim}
   Suppose by contradiction that there is no 3-orientation of $G$ with
   a middle cycle $M$, a middle walk $W$ leaving $M$ and $M_W=M$. We
   first prove the following:

\begin{sclaim}
\label{claim:middleinterior}
Any 3-orientation of $G$, middle cycle $M$ and middle walk $W$ leaving
$M$ are such that $M$ does not intersect the interior of $W$.
\end{sclaim}

\begin{proofsclaim}
  Assume by contradiction that $M$ intersects the interior of $W$.  By
  assumption, cycles $M_W$ and $M$ are weakly homologous and $M_W\neq
  M$. Thus by Lemma~\ref{lem:middleequal}, they are vertex-disjoint.
  So $M$ intersects the interior of $P_W$.  Assume by symmetry that
  $P_W$ is leaving $M$ on its left side.  If $P_W$ is entering $M$
  from its left side, in $G^\infty$, the edges of $P_W$ plus $M$ form
  a cycle contradicting Lemma~\ref{lem:kmoins3}. So $P_W$ is entering
  $M$ from its right side. Hence $M_W$ intersects the interior of
  $P_W$ on a vertex $v$. Let $e$ be the edge of $P_W$ leaving
  $v$. Then the middle cycle $M_W$ and the middle walk $W'$ starting from
  $e$ satisfies $M_{W'}=M_W$, contradicting the hypothesis. So $M$
  does not intersect the interior of $W$.
\end{proofsclaim}

Consider a 3-orientation, a middle cycle $M$ and a middle walk $W$
leaving $M$ such that the length of $P_W$ is maximized.  By assumption
$M_{W}$ is weakly homologous to $M$ and $M_W\neq M$. Assume by symmetry that $P_W$ is
leaving $M$ on its left side.
(\ref{claim:middleinterior}) implies that $M$ does not intersect the
interior of $W$. Let $v$ (resp. $e_0$) be the starting vertex
(resp. edge) of $W$.  Consider now the 3-orientation obtained by
reversing $M_W$.  Consider the middle walk $W'$ starting from $e_0$.
Walk $W'$ follows $P_W$, then arrives on $M_W$ and crosses it (since
$M_W$ has been reversed).  (\ref{claim:middleinterior}) implies that
$M$ does not intersect the interior of $W'$.  Similarly,
(\ref{claim:middleinterior}) applied to $M_W$ and $W'\setminus P_W$
(the walk obtained from $W'$ by removing the first edges corresponding
to $P_W$), implies that $M_W$ does not intersect the interior of
$W'\setminus P_W$. Thus, $M_{W'}$ is weakly homologous to $M_W$
and $M_{W'}$ is in the interior of the region between $M$ and $M_W$
on the right of $M$. Thus $P_{W'}$ strictly contains $P_W$
and is thus longer, a contradiction.
  \end{proofclaim}

  By Claim~\ref{cl:middleequal}, consider a 3-orientation of $G$ with
  a middle cycle $M$ and a middle walk $W$ leaving $M$ such that
  $M_W=M$. Note that $W$ is leaving $M$ from one side and entering it
  from the other side, otherwise $W$ and $M$ will contradicts
  Lemma~\ref{lem:kmoins3}.  Let $e_0$ be the starting edge of $W$. Let
  $v,u$ be the starting and ending point of $P_W$, respectively, where
  $u=v$ may occur.  Consider the 3-orientation obtained by reversing
  $M$. Let $Q$ be the directed path from $u$ to $v$ along $M$ ($Q$ is
  empty if $u=v$). Let $C$ be the directed cycle $P_W\cup Q$. We
  compute the value $\gamma$ of $C$. If $u\neq v$, then $C$ is almost
  everywhere a middle cycle, except at $u$ and $v$. At $u$, it has two
  edges leaving on its right side, and at $v$ it has two edges leaving
  on its left side. So we have $\gamma(C)=0$. If $u=v$, then $C$ is a
  middle cycle and $\gamma(C)=0$.  Thus, in any case
  $\gamma(C)=0$. Note that furthermore $\gamma(M)=0$ holds.  The two
  cycles $M,C$ are non-contractible and not weakly homologous, so any
  non-contractible cycle of $G$ has $\gamma$ equal to zero by
  Lemma~\ref{lem:typegamma0foranybasis}.

  Consider the middle walk $W'$ from $e_0$. By assumption $M_{W'}$ is
  weakly homologous to $M$.  The beginning $P_{W'}$ is the same as for
  $P_W$. As we have reversed the edges of $M$, when arriving on $u$,
  path $P_{W'}$ crosses $M$ and continues until reaching
  $M_{W'}$. Thus $M_{W'}$ intersects the interior of $P_{W'}$ at a
  vertex $v'$. Let $u'$ be the ending point of $P_{W'}$ (note that we
  may have $u'=v'$). Let $P'$ be the non-empty subpath of $P_{W'}$
  from $v'$ to $u'$. Let $Q'$ be the directed path from $u'$ to $v'$
  along $M_{W'}$ ($Q'$ is empty if $u'=v'$). Let $C'$ be the
  non-contractible directed cycle $P'\cup Q'$. We compute
  $\gamma(C')$. The cycle $C'$ is almost everywhere a middle cycle,
  except at $v'$. At $v'$, it has two edges leaving on its left or
  right side, depending on $M_{W'}$ crossing $P_{W'}$ from its left or
  right side. Thus, we have $\gamma(C')=\pm 2$, a contradiction.
\end{proof}

By Lemma~\ref{lem:2middlecycle}, for any toroidal triangulation, there
exists a 3-orientation with two middle cycles that are not weakly
homologous. By Lemma~\ref{lem:typegamma0foranybasis}, this implies that
any non-contractible cycle of $G$ has value $\gamma$ equal to
zero so the $3$-orientation corresponds to a balanced Schnyder wood. 

Note that $\gamma(C)=0$ for any non-contractible cycle $C$ does
not necessarily imply the existence of two middle cycle that are not
weakly homologous.  The 3-orientation of the toroidal triangulation of
Figure~\ref{fig:gamma0htc} is an example where $\gamma(C)=0$ for any
non-contractible cycle $C$ but all the middle cycle are weakly
homologous.

By combining Lemma~\ref{lem:2middlecycle} and
Theorem~\ref{th:characterizationgamma}, we obtain the following:

\begin{theorem}
\label{th:half-cross-tri}
A toroidal triangulation admits a half-crossing Schnyder wood.
\end{theorem}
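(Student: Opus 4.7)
The plan is to deduce Theorem~\ref{th:half-cross-tri} by directly combining the two tools announced in the preceding text, namely Lemma~\ref{lem:2middlecycle} (existence of a $3$-orientation of~$G$ with two middle cycles that are not weakly homologous) and Theorem~\ref{th:characterizationgamma} (characterization of Schnyder orientations in terms of $\gamma$). The extra observation that makes the stronger ``half-crossing'' conclusion work, rather than merely ``balanced,'' is that middle cycles of a $3$-orientation of a triangulation become monochromatic cycles in the induced Schnyder wood; so two non-weakly-homologous middle cycles yield two crossing monochromatic cycles, i.e.\ a half-crossing structure.

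First, I would invoke Lemma~\ref{lem:2middlecycle} to fix a $3$-orientation $D$ of $G$ admitting two middle cycles $M,M'$ that are not weakly homologous. Each middle cycle has $\gamma=0$ by the counting argument sketched just after the definition of middle cycle (every vertex along the cycle has one outgoing edge on each side), and both are non-contractible, since otherwise they would contradict Lemma~\ref{lem:kmoins3} in $G^\infty$. Applying Lemma~\ref{lem:typegamma0foranybasis} to the pair $(M,M')$ then forces $\gamma(C)=0$ for every non-contractible cycle $C$ of $G$. Transferring $D$ to the primal--dual completion $\pdc{G}$ via the rule of Figure~\ref{fig:trimap} produces a $\bmod_3$-orientation of $\pdc{G}$ whose value of $\gamma$ on any cycle of $G$ equals the one computed in $G$. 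By Theorem~\ref{th:characterizationgamma} this is a Schnyder orientation, hence corresponds to a toroidal Schnyder wood of $G$ (since $G$ is a triangulation, no vertex or face can be of type~$0$, so we genuinely get a Schnyder wood, not only a Schnyder labeling).

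Second, I would check that the middle cycles of $D$ coincide with the monochromatic cycles of the Schnyder wood just obtained. This is where the plan requires a small bit of care, and it is the step I expect to be the main (but still mild) obstacle. The coloring rule described in Chapter~\ref{chap:introSW} assigns a color to an edge by following its middle walk; by Lemma~\ref{lem:middlecycle}, every such walk ends on a middle cycle, and a short inspection of Figure~\ref{fig:edgelabeling} or Figure~\ref{fig:pdc} shows that the color of the edges of a middle cycle is constant along the cycle and that the cycle is itself monochromatic in the corresponding orientation. Conversely, a monochromatic cycle of a Schnyder wood visits each of its vertices with exactly one incoming and one outgoing edge of the distinguished color, which is exactly the defining property of a middle cycle. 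Thus $M$ and $M'$ are monochromatic cycles of the Schnyder wood.

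Finally, since $M$ and $M'$ are not weakly homologous, Lemma~\ref{lem:allhomotopic} tells us they cannot be cycles of the same color, so they are monochromatic of different colors. By Lemma~\ref{lem:intersect2} they must intersect, and not being weakly homologous they cannot be reversal of each other. Hence $M$ and $M'$ are crossing monochromatic cycles of different colors, so the Schnyder wood is half-crossing, which concludes the proof. The overall structure is therefore: use Lemma~\ref{lem:2middlecycle} to force $\gamma\equiv 0$ on non-contractible cycles via Lemma~\ref{lem:typegamma0foranybasis}, promote the orientation to a Schnyder wood via Theorem~\ref{th:characterizationgamma}, and then read off the half-crossing property from the identification of middle cycles with monochromatic cycles.
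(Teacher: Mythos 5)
Your proposal is correct and follows essentially the same route as the paper's proof: invoke Lemma~\ref{lem:2middlecycle}, use Lemma~\ref{lem:typegamma0foranybasis} and Theorem~\ref{th:characterizationgamma} to conclude the $3$-orientation is a balanced Schnyder wood, and then identify the two non-weakly-homologous middle cycles with crossing monochromatic cycles. You just spell out more carefully the identification of middle cycles with monochromatic cycles and the final argument that they cross, which the paper leaves as a one-line remark.
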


\begin{proof} 
  By Lemma~\ref{lem:2middlecycle}, there exists a 3-orientation with
  two middle cycles that are not weakly homologous. By
  Lemma~\ref{lem:typegamma0foranybasis}, any non-contractible cycle of
  $G$ has value $\gamma$ equal to zero.  Thus the 3-orientation
  corresponds to a balanced Schnyder wood.  In a Schnyder wood, a
  middle cycle corresponds to a monochromatic cycle. Since there are
  two middle cycles that are not weakly homologous, there are two
  crossing monochromatic cycles and the Schnyder wood is
  half-crossing.
\end{proof}

The proof presented in this section helps to understand the structure
of $3$-orientations in toroidal triangulations. Indeed, it shows how
to transform any $3$-orientation into a half-crossing Schnyder wood by
reversing non-contractible cycles.  It can be transformed into a
linear algorithm to find an half-crossing Schnyder wood in a toroidal
triangulation. First one has to find a $3$-orientation and this can be
done in linear time by contracting edges as explained in
Section~\ref{sec:contractionproof}. Then one start a middle walk from
any edge. This walk ends on a middle cycle $C$. One picks any edge
that is leaving $C$ and start a new middle walk from this edge. Each
time the walk ends on a middle cycle $C'$ that is weakly homologous to
$C$, cycle $C'$ is reversed and thus the middle walk can continue
further until it crosses $C$. Then the obtained orientation
corresponds to a crossing Schnyder wood.

For a triangulation on a genus $g\geq 2$ orientable surface we already
know that there exists an orientation of the edges such that every
vertex has outdegree at least $3$, and divisible by $3$ by
Theorem~\ref{th:AGK}. Can one apply the same method as here,
i.e. reverse middle cycles, to obtain a Schnyder wood?  First, one has
to find the good notion of ``middle'' when dealing with special
vertices of degree $6,9,...$ We believe that one should consider that
``middle'' is obtained by choosing any edge that is leaving
$1 \bmod 3$ edges on the left (and thus $1 \bmod 3$ edges on the
right). Note that their might be several choices depending on the
outdegree of the vertex. Such a definition follows the notion of
monochromatic paths that we want to achieve (see
Figure~\ref{fig:369}). Then, unfortunately, the conclusion of
Lemma~\ref{lem:middlecycle} is false. Already when $g=2$, there might
exist ``middle'' walks whose ending part is not a cycle.
Figure~\ref{fig:middle8} gives an example of an orientation of the
edges of the triangulation of Figure~\ref{fig:doubletorus} where every
vertex has outdegree at least $3$, and divisible by $3$. The two
vertices that are circled have outdegree 6.  A ``middle'' periodic
walk is depicted in green, it is not a cycle and forms an ``eight''.
On this example one can try to reverse the edges of the eight or of
subcycles of middle walks in order to obtain a Schnyder wood but this
method is not working on this example even if the process is
repeated. Thus something more has to be found.

\begin{figure}[!h]
\center
\includegraphics[scale=0.3]{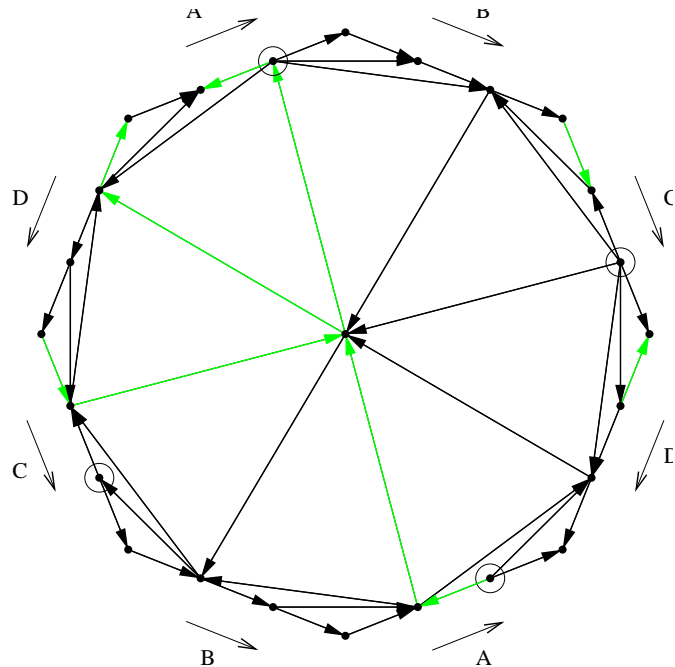}
\caption{Orientation of a triangulation of the double torus where a
  ``middle'' walk form an eight.}
\label{fig:middle8}
\end{figure}

Theorem~\ref{th:half-cross-tri} has some consequences concerning
realizers.  A nonempty family $\mathcal R$ of linear orders on the
vertex set $V$ of a simple graph $G$ is called a \emph{realizer} of
$G$ if for every edge $e$, and every vertex $x$ not in $e$, there is
some order $<_i\in \mathcal R$ so that $y<_ix$ for every $y\in e$. The
\emph{dimension} \cite{FT05} of $G$, is defined as the least positive
integer $t$ for which $G$ has a realizer of cardinality $t$. Realizers
are usually used on finite graphs, but here we allow $G$ to be an
infinite simple graph.

Schnyder woods where originally defined by Schnyder~\cite{Sch89} to
prove that a finite planar map $G$ has dimension at most $3$.  A
consequence of Theorem~\ref{th:half-cross-tri} is an analogous result
for the universal cover of a toroidal map:

\begin{theorem}
  The universal cover of a toroidal map has dimension
  at most three.
\end{theorem}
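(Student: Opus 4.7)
The plan is to reduce the statement to the case where $G$ is a toroidal triangulation and then adapt Schnyder's original planar argument~\cite{Sch89}, using the three regions $R_0(v), R_1(v), R_2(v)$ of Section~\ref{sec:crossinguniversalcover} in place of the regions of a planar Schnyder wood.

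First I would extend $G$ to a toroidal triangulation $G'$ with $V(G)\subseteq V(G')$, such that $G$ is the induced subgraph of $G'$ on $V(G)$ and such that $G'$ still contains no contractible loop nor homotopic multiple edge; this is done face by face, adding diagonals where possible and otherwise a single interior vertex joined to each corner of the offending face. Then $G^\infty$ is the induced subgraph of $(G')^\infty$ on the preimages of $V(G)$, so any realizer of $(G')^\infty$ restricts (by restricting each linear order to the smaller vertex set) to a realizer of $G^\infty$ of the same cardinality, and it is enough to bound $\dim((G')^\infty)$.

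By Theorem~\ref{th:half-cross-tri}, $G'$ admits a half-crossing Schnyder wood, and lifting it to $(G')^\infty$ produces at every vertex $v$ three infinite paths $P_0(v), P_1(v), P_2(v)$ and three closed regions $R_0(v), R_1(v), R_2(v)$ obeying the nesting conditions of Lemma~\ref{lem:regionss}. For each $i\in\{0,1,2\}$ I would define the relation $u<_i v$ on $V((G')^\infty)$ by $R_i(u)\subsetneq R_i(v)$, equivalently $u\in R_i(v)\setminus\{v\}$; this is a strict partial order (transitivity and irreflexivity being immediate from strict inclusion), which I then extend via Zorn's lemma to a strict linear order $\prec_i$.

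It remains to check that $(\prec_0,\prec_1,\prec_2)$ realizes $(G')^\infty$, i.e.\ that for every edge $uv$ and every vertex $w\notin\{u,v\}$ some color $i$ satisfies $u\prec_i w$ and $v\prec_i w$. By Lemma~\ref{lem:nocommongeneral} the three paths $P_0(w), P_1(w), P_2(w)$ meet only at $w$ and cut the plane into the three closed regions $R_0(w), R_1(w), R_2(w)$; since the embedding is planar, the edge $uv$ either lies in the closure of a single such region $R_i(w)$, or it is an edge of some $P_j(w)$, in which case both $u$ and $v$ lie on $P_j(w)\subseteq R_{j-1}(w)\cap R_{j+1}(w)$. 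In every case one obtains a color $i$ with $u,v\in R_i(w)\setminus\{w\}$, and Lemma~\ref{lem:regionss}(i)--(ii) then upgrades these to the strict inclusions $R_i(u), R_i(v)\subsetneq R_i(w)$, proving $u<_i w$ and $v<_i w$. The main obstacle is this last strictness step when $u$ (or $v$) lies on the boundary of $R_i(w)$: there one must observe that the corresponding outgoing monochromatic path at $u$ is a proper suffix of the bounding path of $R_i(w)$, so that $w$ itself, which is not reachable from $u$ along its outgoing paths of the two relevant colors, witnesses a point of $R_i(w)\setminus R_i(u)$.
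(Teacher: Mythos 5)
Your proof takes essentially the same route as the paper's: reduce to a toroidal triangulation, obtain a half-crossing Schnyder wood via Theorem~\ref{th:half-cross-tri}, define the three orders via strict region inclusion in $G^\infty$, and verify the realizer property using Lemma~\ref{lem:regionss} together with the absence of bidirected edges in a triangulation Schnyder wood. One minor slip: after you add diagonals to non-triangular faces, $G$ is a subgraph but generally not an \emph{induced} subgraph of $G'$ (the diagonals join vertices of $G$); this is harmless, since a realizer of a supergraph restricts to a realizer of any subgraph on any subset of its vertices, so the reduction goes through anyway.
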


\begin{proof}
  Let $G$ be a toroidal map. By eventually adding edges to $G$ we may
  assume that $G$ is a toroidal triangulation.  By
  Theorem~\ref{th:half-cross-tri}, it admits a half-crossing Schnyder
  wood.  For $i\in\{0,1,2\}$, let $<_i$ be the order induced by the
  inclusion of the regions $R_i$ in $G^\infty$ (see
  Section~\ref{sec:universalcover}). That is $u<_{i}v$ if and only if
  $R_i(u)\subsetneq R_i(v)$. Let $<'_i$ be any linear extension of
  $<_i$ and consider $\mathcal R=\{<'_0,<'_1,<'_2\}$.  Let $e$ be any
  edge of $G^{\infty}$ and $v$ be any vertex of $G^{\infty}$ not in
  $e$. Edge $e$ is in a region $R_i(v)$ for some $i$, thus
  $R_i(u)\subseteq R_i(v)$ for every $u\in e$ by
  Lemma~\ref{lem:regionss}.(i).  As there is no edges oriented in two
  directions in a Schnyder wood of a toroidal triangulation, we have
  $R_i(u)\neq R_i(v)$ and so $u <_{i}v$.  Thus $\mathcal R$ is a
  realizer of $G^\infty$.
\end{proof}

\subsection{$0$-homologous oriented subgraphs}
\label{sec:flip-contract}
In this section we  transform a half-crossing Schnyder wood
of a toroidal triangulation into a crossing Schnyder wood by flipping
$0$-homologous oriented subgraphs and thus obtain the following theorem:

\begin{theorem}
\label{th:cross-tri}
  A toroidal triangulation admits a crossing Schnyder wood.
\end{theorem}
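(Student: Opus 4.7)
The plan is to start from the half-crossing Schnyder wood produced by Theorem~\ref{th:half-cross-tri} and to upgrade it to a crossing one while staying inside the distributive lattice of balanced Schnyder woods given by Theorem~\ref{cor:lattice}. Let $D$ be a half-crossing Schnyder wood of $G$. By Theorem~\ref{lem:type-cross}, $D$ is $i$-crossing for some color $i$; after a permutation of colors I may assume $i=0$, so that every $0$-cycle crosses every $1$-cycle and every $2$-cycle. If $D$ is already crossing we are done; otherwise Proposition~\ref{prop:crossing-trieq} says that $D$ is not even intersecting, and Theorem~\ref{lem:type-cross} then tells us that the $1$-cycles and the $2$-cycles are pairwise reversely homologous and pairwise non-intersecting. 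By Proposition~\ref{lem:halftypegamma0}, $D$ is balanced, so $D$ belongs to $S_0(G)$.

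I would then fix any face $f_0$ of $G$ and work inside the distributive lattice $(S_0(G),\leq_{f_0})$ of Theorem~\ref{cor:lattice}. By Lemma~\ref{lem:contractibletilde}, its elementary flips reverse the boundary of a contractible triangle of $\widetilde{G}$. Two possible routes are natural: either pick an extremal element of $(S_0(G),\leq_{f_0})$ (for instance $D_{\min}$, or the element maximising a suitable crossing count) and show it must be crossing, or iterate single triangle flips that strictly improve a monovariant until crossing is reached. A natural monovariant is the total number of $1$-$2$ intersections in a fundamental domain (essentially $\omega_0$), which is zero in the non-crossing case and must become positive for a crossing Schnyder wood; since the lattice is finite, strict monotonicity would force termination.

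The central step, and the main obstacle, is to exhibit the improving flip from a non-crossing $D\in S_0(G)$. In the universal cover $G^\infty$ the $1$-lines and the $2$-lines form two parallel families of oriented lines going in opposite directions and never meeting, while the $0$-lines cross all of them transversally. Inside any strip bounded by consecutive $1$- and $2$-lines, an Euler-type count of outgoing edges in the spirit of Lemma~\ref{lem:kmoins3}, together with the Schnyder property, should produce either a \ccw or a \cw contractible triangle straddling the strip's boundary, whose flip reroutes a $1$- or $2$-edge across the boundary and thereby creates a $1$-$2$ intersection, hence a $1$-$2$ crossing by Lemma~\ref{lem:twonothomotopic}.

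The delicate points will be to check (i)~that the flip preserves the $0$-crossing property already achieved, so that the structural dichotomy of Theorem~\ref{lem:type-cross} continues to apply, and (ii)~that the monovariant is defined precisely enough to strictly improve at each flip, so that the iteration cannot cycle inside the lattice. Once these two verifications are in place, iterating the flip in a finite number of steps produces a balanced Schnyder wood that crosses in all three pairs of colors, which is the desired crossing Schnyder wood of $G$.
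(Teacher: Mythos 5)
Your high-level plan---start from the half-crossing Schnyder wood of Theorem~\ref{th:half-cross-tri} and flip $0$-homologous oriented subgraphs to upgrade it to a crossing one---is in the same spirit as the paper's proof (which also reverses a $0$-homologous subgraph). But the details diverge in ways that matter, and the argument as written has a genuine gap.

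First, the alternative route you suggest (``pick an extremal element of $(S_0(G),\leq_{f_0})$ \dots and show it must be crossing'') is false. The paper's own example in Chapter~\ref{sec:example} exhibits a toroidal triangulation whose minimal balanced Schnyder wood with respect to $f_0$ (the one of Figure~\ref{fig:gamma0htc}) is not even half-crossing; the two crossing Schnyder woods sit in the middle of that lattice. So extremality in the lattice order is not the right monovariant.

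Second, and more importantly, your main route---iterate single-triangle flips inside $\widetilde G$ using ``total number of $1$-$2$ intersections'' as a monovariant---leaves unproved exactly the step on which the whole argument rests. You acknowledge this yourself: ``the central step, and the main obstacle, is to exhibit the improving flip.'' The sketched Euler-type count inside a strip between consecutive $1$- and $2$-lines would at best produce a directed contractible triangle, but there is no reason that flipping such a triangle reroutes a $1$- or $2$-edge across the strip boundary, nor that the reroute actually creates a $1$-$2$ intersection (most triangle flips do not touch any monochromatic cycle at all, so the monovariant is not strictly monotone along an arbitrary sequence of flips, and the ``strict monotonicity forces termination'' argument does not apply). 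Your concerns (i) (preservation of $i$-crossing) and (ii) (well-definedness and strict improvement of the monovariant) are precisely the missing pieces, and they are not merely routine verifications.

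For comparison, the paper's proof avoids these difficulties by (a) taking a single big $0$-homologous flip---the boundary of the region $R$ bounded by a monochromatic path $P\subset P_{i-1}(v)$, a $(i+1)$-cycle $C'$, and a portion of an $i$-cycle $C$---rather than an iteration of elementary triangle flips, and (b) replacing your crossing-count monovariant with a double extremal argument: among all $i$-crossing Schnyder woods it first minimizes the winding number, then minimizes the number of $(i+1)$-cycles. The flip is engineered so that after reversing the boundary of $R$ (and shifting colors inside $R$), the new Schnyder wood is still $i$-crossing, still half-crossing, has winding number $1$, and has strictly fewer $(i+1)$-cycles; this contradicts the extremal choice directly, without needing to track an intersection count across many small flips. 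You would need to supply a construction of comparable precision to close your gap; as written the proposal identifies the right terrain but does not produce the flip on which the conclusion depends.
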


\begin{proof} 
  Consider a toroidal triangulation $G$ and suppose by contradiction
  that $G$ does not admit a crossing Schnyder wood.  By
  Theorem~\ref{th:half-cross-tri}, $G$ admits a half-crossing Schnyder
  wood.  Consider the set $\mathcal H$ of half-crossing Schnyder woods
  of $G$ with a coloring such that they are $i$-crossing all for the
  same color $i$.  By assumption, Schnyder woods of $\mathcal H$ are
  not crossing, thus, by Theorem~\ref{lem:type-cross}, all the
  $(i-1)$-cycles and the $(i+1)$-cycles are reversely homologous.
  Consider the subset $\mathcal H'$ of $\mathcal H$ that minimizes the
  winding number (defined in Section~\ref{sec:monochrocycles}).  Let
  $\omega$ be the winding number of elements of $\mathcal H'$ (so
  $\omega_i=0$ and $\omega_{i-1}=\omega_{i+1}=\omega$).  Among all
  the element of $\mathcal H'$, consider $H$ the half-crossing
  Schnyder wood that minimizes the number of $(i+1)$-cycles.

  Consider a vertex $v$ that is in the intersection of a $i$-cycle $C$
  and a $(i+1)$-cycle $C'$. Then we are in one of the three cases
  depicted on the left side of Figure~\ref{fig:hal-cross-slide} (the
  color blue plays the role of color $i$ in the figures). If the
  winding number of $H$ is $1$, then we are in first case. If the
  winding number is strictly more than $1$, then $C'$ is going either
  ``down'' (second case), or ``up'' (third case) according to $C$
  being ``vertical''.

  Consider the path $P_{i-1}(v)$ obtained by following edges of color
  $i-1$ from $v$. Except on it starting point $v$, the path $P_i(v)$
  cannot intersect $C'$, otherwise it contradicts the fact that there
  exists a $(i-1)$-cycle reversely homologous to $C'$.  Consider the
  subpath $P$ of $P_i(v)$ starting from $v$ and ending on the
  $\omega^{th}$ intersection with $C$.  Path $P$, cycle $C'$ and a
  part of $C$ define a region $R$ (depicted in gray on
  Figure~\ref{fig:hal-cross-slide}) whose border forms a
  $0$-homologous oriented subgraph $T$ oriented \ccw according to $R$.
  Consider the Schnyder wood $H'$ obtained by reversing all the edges
  of $T$. The different cases are represented on the right side of
  Figure~\ref{fig:hal-cross-slide}.  The colors of the edges on $T$
  get $+1$. The colors of the edges in the interior of $R$ get
  $-1$. The colors of the edges that are not in $R$ are not modified.

\begin{figure}[!h]
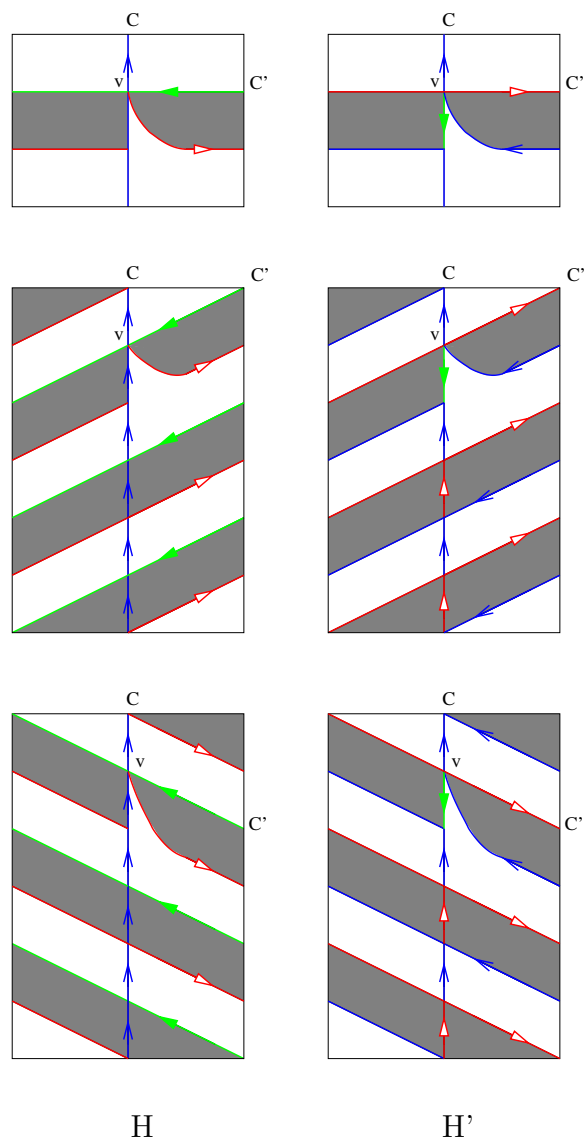

\center
\begin{tabular}{cc}
\includegraphics[scale=0.3]{case-slope-2} \ &\ 
\includegraphics[scale=0.3]{case-slope-2-r} \\
 & \\
\includegraphics[scale=0.3]{case-slope-1}  \ &\ 
\includegraphics[scale=0.3]{case-slope-1-r} \\
 & \\
\includegraphics[scale=0.3]{case-slope-3}  \ &\ 
\includegraphics[scale=0.3]{case-slope-3-r} \\
 & \\
H  \ &\  H' \\
\end{tabular}
\caption{Left: original Schnyder wood, right: Schnyder wood
  obtained after reversing the edges on the border of the gray
  region.}
\label{fig:hal-cross-slide}
\end{figure}

Note that in each case of Figure~\ref{fig:hal-cross-slide}, there is a
$i$-cycle intersecting the $(i-1)$-cycle corresponding to the reversal
of $C'$. Thus the obtained Schnyder wood is half-crossing. It is not
crossing by assumption. Moreover its winding number is $1$. Thus, by
the choice of $H$,  the winding number of $H$ is also $1$ and we are in
fact in the first case of Figure~\ref{fig:hal-cross-slide}.

Schnyder wood $H'$ is either $i$-crossing or $(i-1)$-crossing.
Suppose by contradiction that it is $(i-1)$-crossing. By
Theorem~\ref{lem:type-cross}, the $i$- and $(i+1)$-cycles are
reversely homologous. Note that there is a $i$-cycle crossing $C$ from
right to left. Thus the $(i+1)$-cycles should cross $C$ from left to
right. But this is impossible by the Schnyder property since $C$ is
made of edges of color $i$ going ``up'' and edges of color $i+1$ going
``down''. So the obtained Schnyder wood is $i$-crossing and all the
$(i-1)$-cycles and $(i+1)$-cycles are reversely homologous.

We claim that a $(i+1)$-cycle cannot enter inside the region $R$. It
cannot enter inside $R$ by crossing $C'$ since $C'$ is now a
$(i-1)$-cycle and it should be reversely homologous to it.  It cannot
enter by the rest of the border of $R$ by the Schnyder property. So
finally there is no $(i+1)$-cycle intersecting $R$. Recall that edges
that are not in $R$ are not modified from $H$ to $H'$, so $H'$ has
strictly less $(i+1)$-cycles than $H$ and winding number $1$,
contradicting the choice of $H$.
\end{proof}

The proof of Theorem~\ref{th:cross-tri} can be turned into a linear time
algorithm to find a crossing Schnyder wood in a toroidal
triangulation. First one has to apply the method of
Section~\ref{sec:flip-noncontract} to obtain a half-crossing Schnyder
wood in linear time. One should test if the obtained Schnyder wood is
crossing. This can be done in linear time by following three
monochromatic path from a vertex and check if their ending periodic
monochromatic cycles intersect or not.  If the half-crossing Schnyder
wood has winding number strictly greater than $1$, then one apply the
transformation described on Figure~\ref{fig:hal-cross-slide} once to
get a half-crossing Schnyder wood with winding number $1$. One should
check that the obtained Schnyder wood is crossing or not. If it is
crossing we are done. Suppose it is not crossing but
$i$-crossing. Then one should consider the set $\mathcal C$ of all the
$(i+1)$-cycles, and apply the transformations described on
Figure~\ref{fig:hal-cross-slide} by picking one by one a cycle in $C$.
After each step the elements of $\mathcal C$ that are no more
$(i+1)$-cycles in the current Schnyder wood are removed from
$\mathcal C$. When $C$ is empty we have a crossing Schnyder wood. All
the considered transformations are done one disjoint part of the
triangulation and thus in total this can be done in linear time.

We do not know if this method can be generalized to
essentially-3-connected toroidal maps.  In
Section~\ref{sec:flip-noncontract} we are considering middle walks and
here monochromatic paths to defined some oriented cycles that have to
be flipped. For essentially 3-connected toroidal maps maps, we have to
work in the primal-dual-completion where some more complex object have
to be considered since oriented cycles have to be defined on primal-,
dual- and edge-vertices. Then it is not only a ``middle'' or
``monochromatic'' property that should be considered and most of the
intuition is lost.

Nevertheless in Chapter~\ref{chap:generalexistence}, we add
additional requirement to the contraction method presented in
Section~\ref{sec:contractionproof} to prove existence of crossing
Schnyder woods for essentially-3-connected toroidal maps. One drawback
of Chapter~\ref{chap:generalexistence} is that it does not lead to a
linear algorithm. So if one really wants to keep linear complexity,
then the above method is the best that we are able to do in linear
time and one might be interesting into generalizing it to essentially
3-connected toroidal maps.

The existence of crossing Schnyder woods for toroidal triangulations implies the
following theorem.

\begin{theorem}
\label{cor:cyclesedge-disjoint}
A toroidal triangulation contains three non-contractible and not weakly
homologous cycles that are pairwise edge-disjoint.
\end{theorem}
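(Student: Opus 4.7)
The plan is to derive this statement as a nearly immediate corollary of the existence of a crossing Schnyder wood (Theorem~\ref{th:cross-tri}). Concretely, I would take a toroidal triangulation $G$, fix a crossing Schnyder wood of $G$ given by Theorem~\ref{th:cross-tri}, and then for each color $i\in\{0,1,2\}$ pick an arbitrary $i$-cycle $C_i$, which exists because the subgraph $G_i$ of edges of color $i$ contains $n$ edges, is non-empty, and every connected component of $G_i$ contains exactly one directed cycle.

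The three required properties of $C_0,C_1,C_2$ then follow from previously established facts. Non-contractibility is directly Lemma~\ref{lem:allhomotopic}. Pairwise non-weak-homology follows from Theorem~\ref{lem:type-cross}: in a crossing Schnyder wood every pair of monochromatic cycles of different colors is crossing, hence intersects without being reversal, so by Lemma~\ref{lem:twonothomotopic} (or equivalently by the definition of crossing together with Lemma~\ref{lem:allhomotopic}) they are not weakly homologous. Finally, pairwise edge-disjointness is essentially automatic: as remarked after Proposition~\ref{prop:bijtore}, a Schnyder wood of a toroidal triangulation has no type~$2$ edges and no bi-directed edges, so every edge carries exactly one color; thus $G_0$, $G_1$, $G_2$ are edge-disjoint, and in particular so are $C_0$, $C_1$, $C_2$.

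There is no real obstacle here; all the genuine work is hidden in Theorem~\ref{th:cross-tri}. The only point that deserves a sentence of care in the write-up is that a crossing Schnyder wood of a triangulation uses no bi-directed edges, which is what upgrades ``different colors'' to ``edge-disjoint'' (as opposed to merely sharing no monochromatic edge of the same color). Once this is spelled out, the proof reduces to invoking Theorem~\ref{th:cross-tri}, Lemma~\ref{lem:allhomotopic}, and the crossing case of Theorem~\ref{lem:type-cross}.
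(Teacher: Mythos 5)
Your proposal is correct and matches the paper's proof essentially verbatim: invoke Theorem~\ref{th:cross-tri} to get a crossing Schnyder wood, pick one monochromatic cycle per color, use the absence of bi-oriented edges in a triangulation's Schnyder wood for edge-disjointness, and Lemma~\ref{lem:twonothomotopic} (together with the crossing property from Theorem~\ref{lem:type-cross}) for pairwise non-weak-homology. Your write-up is slightly more explicit about non-contractibility via Lemma~\ref{lem:allhomotopic}, but the route is the same.
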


\begin{proof}
  One just has to apply Theorem~\ref{th:cross-tri} to obtain a
  crossing Schnyder wood and then, for each color $i$, choose
  an arbitrarily  $i$-cycle. These cycles are edge-disjoint as there is
  no bioriented edges for triangulations. Moreover they are not
  homologous by Lemma~\ref{lem:twonothomotopic}.
\end{proof}

Note that, in the case of simple toroidal triangulations, there exists
a stronger form of Theorem~\ref{cor:cyclesedge-disjoint} (see
Theorem~\ref{th:fij} in next section) .

  \section{Gluing two planar Schnyder woods}
\label{sec:gluingproof}

In this section we present a proof of existence of Schnyder woods that
consists in cutting a simple toroidal triangulation into two planar
\emph{near-triangulations} (i.e. whose inner faces have size three),
find planar Schnyder woods of these maps and glue them on their
boundaries to obtain a toroidal Schnyder wood of the original
triangulation.

Fijavz~\cite{Fij} proved a useful result concerning existence of
particular non homologous cycles in toroidal triangulations with no
loop and no multiple edges. (Recall that in this manuscript we are less
restrictive as we allow non-contractible loops and multiple edges that
are not homotopic.)

\begin{theorem}[\cite{Fij}]
\label{th:fij} 
A simple toroidal triangulation contains three non-contractible and
not weakly homologous cycles that all intersect on one vertex and that are
pairwise disjoint otherwise.
\end{theorem}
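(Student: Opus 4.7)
The plan is to deduce Fijavz's theorem from the existence of a crossing Schnyder wood (Theorem~\ref{th:cross-tri}).

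I would first apply Theorem~\ref{th:cross-tri} to $G$ to obtain a crossing Schnyder wood. By Theorem~\ref{lem:type-cross}, for every pair of distinct colors $i\neq j$, every $i$-cycle crosses every $j$-cycle; in particular all monochromatic cycles are non-contractible and cycles of distinct colors are pairwise not weakly homologous. Selecting one cycle of each color therefore automatically satisfies the homology requirements of the statement, and the whole task reduces to the geometric condition of choosing three such cycles that meet at a single common vertex and are pairwise disjoint elsewhere.

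The heart of the argument is to exhibit a vertex $v$ lying on monochromatic cycles of all three colors. I would fix a $0$-cycle $C_0$ and, by the crossing property, a $1$-cycle $C_1$ meeting $C_0$ at some vertex. At each vertex $w\in C_0\cap C_1$ the Schnyder rule fixes the outgoing edge of color $2$, which starts a path ending on some $2$-cycle $C'_2$. If for some intersection vertex $w$ we have $w\in C'_2$, take $v:=w$. Otherwise, I would work in the universal cover $G^\infty$ to reach a contradiction: combining Lemma~\ref{lem:kmoins3}, the parallel line structure described in Section~\ref{sec:crossinguniversalcover}, the crossing property applied to pairs involving color $2$, and the simplicity of $G$ (which rules out local shortcuts through multiple edges), one should force some $2$-line to revisit a lift of a vertex of $C_0\cap C_1$, producing the desired $v$.

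Once $v$ lies on cycles $C_0,C_1,C_2$ of all three colors, it remains to control their pairwise intersections away from $v$. Since the three homology classes are pairwise independent by the crossing property, pairwise intersections can be made as small as the minimum geometric intersection numbers in the corresponding classes; provided those minima are all $1$, one can apply cycle-surgery along bigons bounded by subarcs of $C_i$ and $C_j$, rerouting within each bigon using the Schnyder rule and the triangulated structure to produce a homologous cycle through $v$ with strictly fewer shared vertices with the other. Iterating this surgery over the three pairs of colors yields the configuration required by the theorem.

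The main obstacle is the middle step: the crossing property of Theorem~\ref{th:cross-tri} hands us pairwise intersections of monochromatic cycles for free, but upgrading these into a genuine triple intersection at a single vertex is not formal. This is where the simplicity hypothesis on $G$ is essential, and I expect the argument to require a careful combinatorial analysis of how $2$-lines interact with lifts of $C_0$ and $C_1$ in the universal cover.
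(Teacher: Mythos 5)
Your proposal does not match the paper's approach, and it contains a gap that cannot be filled by the route you propose. The paper never proves Theorem~\ref{th:fij}: it is cited as an unpublished result of Fijavz, and the text explicitly states that ``The (non published) proof of Theorem~\ref{th:fij} is itself based on a result of Robertson and Seymour~\cite{RS86} on disjoint paths.'' The logical direction in the paper runs opposite to yours: the authors use Fijavz's theorem as an external tool to prove Theorem~\ref{th:schnydersimple}, which is precisely the existence of a crossing Schnyder wood that is point-crossing and has winding number $1$, and they remark that this ``is the only proof of such result that we know.'' You are therefore attempting to derive Fijavz's theorem from the weaker Theorem~\ref{th:cross-tri}, a direction the paper's authors did not find.

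Concretely, there are two fatal problems. First, as you honestly acknowledge, the ``heart'' step — producing a vertex lying simultaneously on a $0$-cycle, a $1$-cycle, and a $2$-cycle — is not established by anything in the paper, and there is no reason a generic crossing Schnyder wood has such a vertex: each vertex lies on one monochromatic walk per color, but need not lie on the periodic (cycle) part of all three. Second, even granting a triple-intersection vertex $v$, your surgery step requires the pairwise \emph{algebraic} intersection numbers of the three homology classes to be $\pm 1$ (equivalently winding number $1$), otherwise bigon removal cannot bring the geometric intersection down to a single point. Theorem~\ref{th:cross-tri} gives no control on the winding numbers $\omega_i$, which can be large; in the paper the only known way to guarantee $\omega=1$ for simple triangulations is Theorem~\ref{th:schnydersimple}, which rests on Fijavz. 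Moreover, your bigon surgery is a topological operation on curves, and it would not produce cycles that remain subgraphs of the triangulation without further argument. So the proposal is not a proof but a sketch with an acknowledged gap that, on the evidence of the paper itself, cannot be closed via Schnyder woods.
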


Note that Theorem~\ref{th:fij} is not true for all toroidal
triangulations as shown by the example on
Figure~\ref{fig:not-connected}, whereas the weak version,
Theorem~\ref{cor:cyclesedge-disjoint}, is true for all triangulations.

\begin{figure}[!h]
\center
\includegraphics[scale=0.5]{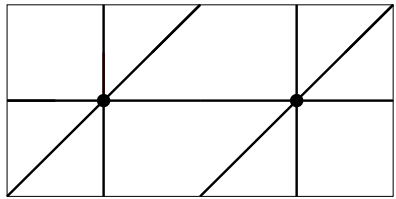}
\caption{A toroidal triangulation that does not contain three non-contractible and not weakly homologous cycles that all intersect on one
  vertex and that are pairwise disjoint otherwise.}
\label{fig:not-connected}
\end{figure}

We use Theorem~\ref{th:fij} to cut a simple triangulation and past
planar Schnyder woods in the two created regions. First, we need the
following lemma:

\begin{lemma}
  \label{lem:internally} If $G$ is a connected planar near-triangulation
  whose outer face is a cycle, and with three vertices
  $x_0,x_1,x_2$ on its outer face such that the three outer paths
  between the $x_i$ are chordless, then $G'$ is internally 3-connected
  for vertices $x_i$.
\end{lemma}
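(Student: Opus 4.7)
The plan is to show that $G$ is internally $3$-connected with respect to $\{x_0,x_1,x_2\}$; by definition this amounts to showing that the graph $G^+$, obtained from $G$ by adding a new vertex $v$ adjacent to $x_0,x_1,x_2$, is $3$-connected. First I verify that $G$ itself is $2$-connected, which is standard for planar near-triangulations with outer face a simple cycle: $G$ is simple by the standing assumption on embedded graphs, all inner faces are triangles, so every interior vertex has a cyclic link of neighbors, and the outer cycle $C$ is $2$-connected; these together rule out cut vertices.

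Next I assume for contradiction that $G^+$ has a separator $S=\{a,b\}$. If $v\in S$ then $G$ minus at most one vertex is disconnected, contradicting $2$-connectedness; so $v\notin S$. Since $v$ is adjacent to all three $x_i$ and $|S|\le 2$, at least one $x_i\notin S$ exists, and all such $x_i$ lie in the component of $G^+\setminus S$ containing $v$; hence some component $K$ of $G\setminus S$ contains no $x_i\notin S$. By planarity I choose a simple arc $\gamma$ through $a$ and $b$, meeting $G$ only at these two points, that separates $K$ from the other components. Let $\alpha$ be the arc of $C$ on the $K$-side of $\gamma$. Since the interior of $\alpha$ lies in $K$ but $K$ contains no $x_i\notin S$, the interior of $\alpha$ contains no $x_i$; hence $\alpha$ is a sub-arc of a single chordless outer path $P_{ij}$, and in particular $a,b\in P_{ij}$.

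The contradiction now comes from a rotation argument at $a$. In the cyclic rotation of $a$'s neighbors in the planar embedding, the neighbors lying on the $K$-side of $\gamma$ form a consecutive sub-arc; let $z$ be the last such neighbor (an element of $K$) and $r$ the next neighbor in the rotation (on the rest-side). The triangulated face at $a$ between $z$ and $r$ is a triangle $azr$, so $zr\in E(G)$. I claim $r\notin S$: clearly $r\ne a$, and $r\ne b$, because if $r$ were $b$ then $ab\in E(G)$, which would either make $ab$ a forbidden chord of $P_{ij}$ (if $a,b$ are non-adjacent on $P_{ij}$) or place $b$ at the \emph{start} of $a$'s $K$-side rotation rather than at the transition (if $a,b$ are adjacent on $P_{ij}$, so that $b$ is $a$'s $C$-neighbor toward $\alpha$, never the post-transition neighbor). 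Thus $r\notin S$, so $zr$ is an edge of $G\setminus S$; combined with $z\in K$, the component property of $K$ forces $r\in K$, contradicting the geometric placement of $r$ on the rest-side of $\gamma$.

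The main difficulty is the planarity step: producing $\gamma$ and, crucially, justifying that both $a$ and $b$ actually lie on $C$. If, say, $a$ were an interior vertex, then the combinatorial boundary of the $K$-region in $G$ would reduce to a closed walk through $a$ and $b$ alone, which in a simple graph could only close up via the single edge $ab$; the $K$-region would then degenerate to a single triangular face with no interior vertex, contradicting the non-emptiness of $K$.
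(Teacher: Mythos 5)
Your proof takes a genuinely different route from the paper's. The paper takes a \emph{minimum} separator $S$ and uses the criterion that each $v\in S$ must have its neighbors $N_{G'}(v)\setminus S$ scattered among several components of $G'\setminus S$; since the link of an interior vertex is a cycle, the link of an outer vertex is a path closed up through $z$, and the link of $z$ is the outer cycle, this forces $S$ to consist of two non-special outer vertices, and the remaining case is killed because $uv$ would be a chord, with the apex $z$ re-connecting the two pieces of the link of $v$. You instead isolate a component $K$ of $G\setminus S$ containing no $x_i$, place $a,b$ on a single chordless outer path via a separating arc $\gamma$, and then use the triangulation at $a$: the face $azr$ straddling $\gamma$ forces the edge $zr$ to leak across the cut, with the chordlessness hypothesis serving only to rule out $r=b$. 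Both arguments work; yours neatly avoids the cross-path chord case by fixing $K$ at the outset.

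There are two places, though, where your write-up does not quite carry its own weight. The existence of a simple arc $\gamma$ through $\{a,b\}$, meeting $G$ only at those two points and separating $K$ from the other bridges, is a real fact about $2$-cuts in $2$-connected plane graphs (the bridges of $\{a,b\}$ occupy consecutive angular sectors at $a$ and at $b$, and between consecutive sectors lies a face through both $a$ and $b$), but you invoke it without justification; it is precisely the input your rotation argument needs, so it should be argued or cited. More substantively, the paragraph disposing of the case ``$a$ interior'' does not hold together as written. The phrase ``the combinatorial boundary of the $K$-region in $G$ would reduce to a closed walk through $a$ and $b$ alone'' does not correspond to anything well-defined: $\gamma$ is chosen to avoid the edges of $G$, so it traces through faces, not along a walk, and the outer walk of the bridge $K\cup\{a,b\}$ certainly visits vertices of $K$. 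What is true, and what I believe you are reaching for, is that the Jordan curve bounding the $K$-side near an interior $a$ must enter $G$ through two faces both incident to $a$ and $b$; since every inner face is a triangle, both are triangles on $\{a,b\}$, hence $ab\in E(G)$ and the bounded side of the resulting digon contains only the edge $ab$ and no vertex, so $K=\emptyset$. Even that requires showing both faces are interior, i.e.\ treating the mixed case ($a$ interior, $b$ on $C$) as well. The paper's route here is shorter and sidesteps all of this: if $a$ is interior, $N(a)$ is a cycle, so $N(a)\setminus\{b\}$ is connected in $G\setminus S$; any other component of $G\setminus S$ would then be attached to $b$ alone, making $b$ a cut vertex of $G$ and contradicting the $2$-connectedness you established at the start. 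Substituting that argument closes the gap and leaves the rest of your proof intact.
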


\begin{proof}
Let $G'$ be the graph obtained from $G$ by adding a vertex $z$ adjacent
to the three vertices $x_i$. We have to prove that $G'$ is 3-connected.
Let $S$ be a separator of $G'$ of minimum size and suppose by
contradiction that $1\leq |S|\leq 2$. Let $G''=G'\setminus S$.

For $v\in S$, the vertices of $N_{G'}(v)\setminus S$ should appear in
several connected components of $G''$, otherwise $S\setminus \{v\}$ is
also a separator of $G'$.  Since $G$ is a planar near-triangulation, the
neighbors of an inner vertex $v$ of $G$ form a cycle, thus there are at
least two vertex-disjoint paths between any two vertices of $N(v)$ in
$G'\setminus \{v\}$. So $S$ contains no inner vertex of $G$.
Similarly, the three neighbors of $z$ in $G'$ belong to a cycle of
$G'\setminus\{z\}$ (the outer face of $G$), so $S$ does not
face $z$.  So $S$ contains only vertices that are on the
outer face of $G$.

Let $v\in S$. Vertex $v$ is on the outer face of $G$ so its neighbors
in $G$ form a path $P$ where the two extremities of $P$ are the two
neighbors of $v$ on the outer face of $G$. Again if $v$ is one of the
$x_i$, then its neighbors in $G'$ belong to a cycle containing
$z,x_{i-1},x_{i+1}$, so $v$ is not one of the $x_i$.  So $S$ contains
an inner vertex $u$ of $P$. Vertex $u$ is also on the outer face of
$G$ so $uv$ is a chord of the outer cycle of $G$. As the three outer
paths between the $x_i$ are chordless, we have that $u,v$ lie on two
different outer paths between pairs of $x_i$. But then all the
vertices of $P\setminus \{u\}$ are in the same components of $G''$
because of $z$, a contradiction.
\end{proof}

Now we prove the existence of some particular crossing Schnyder woods
for simple toroidal triangulations.

\begin{theorem}
\label{th:schnydersimple}
A simple toroidal triangulation admits a crossing Schnyder wood with three
monochromatic cycles of different colors all intersecting on one
vertex and that are pairwise disjoint otherwise.
\end{theorem}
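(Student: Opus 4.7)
The plan is to reduce the construction to the planar case using Theorem~\ref{th:fij}. First, I apply Theorem~\ref{th:fij} to obtain three non-contractible, pairwise not weakly homologous cycles $C_0, C_1, C_2$ that all pass through a common vertex $v$ and are otherwise pairwise disjoint. Pairwise non-weak-homology plus the single common point forces $C_0\cup C_1\cup C_2$ to cut the torus into exactly two open disks: two of the cycles give a fundamental polygon with $v$ at every corner, and the third, being an arc between corners of this polygon, splits it into two triangles. I therefore obtain two planar near-triangulations $G_1,G_2$, each with a triangular outer boundary whose three corners are copies of $v$ and whose three sides are arcs, one copy of each $C_i$.

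Next I choose the $C_i$ to be chordless, for instance by taking shortest representatives in their respective homology classes. The three outer paths of $G_1$ (resp.\ $G_2$) are then chordless, so Lemma~\ref{lem:internally} yields internal $3$-connectedness of $G_1$ (resp.\ $G_2$) with respect to the three corner vertices, and Theorem~\ref{th:schnyder} then provides a planar Schnyder wood on each piece. I label the three copies of $v$ on the boundary of $G_1$ as $x_0,x_1,x_2$, with $x_i$ taken opposite the arc on $C_i$, and use the matching labels on $G_2$; the Schnyder property at $x_{i-1}$ and $x_{i+1}$ then forces every edge on the $C_i$-arc to receive color $i$.

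Gluing the two planar pieces back together reconstructs $G$ and identifies the two copies of each edge on $C_i$. Because both sides color these edges with $i$, and because the orientations on the boundary can be made compatible (by choosing within the lattice of planar Schnyder woods, or by prescribing beforehand the orientation of $C_i$ as a directed cycle), the identification produces single directed edges that form monochromatic cycles $C_0,C_1,C_2$ of colors $0,1,2$. The Schnyder property is inherited at every vertex other than $v$ from the planar piece containing it; at $v$, the three copies contribute complementary ``thirds'' of the local pattern, which assemble into the full Schnyder pattern with exactly one outgoing edge in each color. No face of the glued map is monochromatic since no face of either planar piece is, so we obtain a toroidal Schnyder wood whose three monochromatic cycles $C_0,C_1,C_2$ meet at $v$ and are pairwise disjoint otherwise; since the $C_i$ are pairwise not weakly homologous they are pairwise crossing, and the wood is crossing.

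The main obstacle is the gluing step, specifically ensuring that the orientations on the three boundary arcs match on both sides. In a standard planar Schnyder wood the outer edges are typically bi-oriented, whereas here each edge of $C_i$ must appear as a single directed edge of color $i$. The likely remedy is to prescribe the orientation and color on the edges of each arc before applying planar Schnyder existence on each piece and to show that this boundary data extends to a planar Schnyder wood of the whole piece, perhaps by adapting the canonical ordering argument. A secondary, milder, obstacle is the chordlessness hypothesis required by Lemma~\ref{lem:internally}, handled by the shortest-cycle choice of the $C_i$.
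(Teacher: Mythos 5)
Your overall strategy is the same as the paper's: cut the torus along the three Fijavz cycles, obtain two internally $3$-connected planar near-triangulations via Lemma~\ref{lem:internally}, equip each with a planar Schnyder wood by Theorem~\ref{th:schnyder}, and glue them back. However, you correctly flag the gluing step as the crux and then do not actually resolve it; you also make a small but misleading slip in analyzing what the planar Schnyder woods do on the boundary.

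The slip: you write that the Schnyder property at the two non-$x_i$ corners forces every edge on the $C_i$-arc to receive color $i$. In fact, in a planar Schnyder wood the boundary arc between $x_{i-1}$ and $x_i$ is \emph{bi-oriented}: colored $i$ in one direction and $i-1$ in the other (and on the second piece the same arc is colored $i$ and $i+1$). So the two planar Schnyder woods do not agree on the boundary and cannot simply be superimposed; the identification does not ``produce single directed edges'' on its own.

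The resolution in the paper is not to search for compatible boundary orientations within the planar lattices, but to \emph{discard} what the planar Schnyder woods say on the arcs entirely. One keeps only the coloring and orientation of the edges strictly interior to each piece, and then orients every edge of $C_i$ freshly, as a single directed edge of color $i$, so that $C_i$ becomes a directed monochromatic cycle entering $x$ by one of its two edges at $x$ and leaving by the other. One then verifies the Schnyder property by hand at the glued vertices: for a vertex $v$ on $C_i$ other than $x$, the color-$(i+1)$ outgoing edge is an interior edge of $G'_1$, the color-$(i-1)$ outgoing edge is an interior edge of $G'_2$, and the observations that $v$ receives no edges of color $i+1$ from inside $G'_1$ (since in $G'_1$'s planar Schnyder wood those would have to cross the outer face) nor of color $i-1$ from inside $G'_2$ complete the local check. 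At $x$ the cyclic order of the six arcs $e_i,e'_i$ and the choice of which copy of $x$ becomes $x_j$ in each piece make the three ``thirds'' fit as you suspect. This re-coloring-and-verify step is exactly the missing ingredient in your outline: you guessed the fix (``prescribing beforehand the orientation of $C_i$ as a directed cycle'') but did not show that the prescription is consistent with the Schnyder property at the boundary vertices, which is where the argument has to do work.
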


\begin{proof} Let $G$ be a simple toroidal triangulation.  By
  Theorem~\ref{th:fij}, let $C_0,C_1,C_2$ be three non-contractible
  and not weakly homologous cycles of $G$ that all intersect on one vertex $x$
  and that are pairwise disjoint otherwise. By eventually shortening
  the cycles $C_i$, we can assume that the three cycles $C_i$ are
  chordless.  By symmetry, we can assume that the six edges $e_i,e'_i$
  of the cycles $C_i$ incident to $x$ appear around $x$ in the
  counterclockwise order $e_0,e'_2,e_1,e'_0,e_2,e'_1$ (see
  Figure~\ref{fig:collage_notation}).  The cycles $C_i$ divide $G$
  into two regions, denoted $R_1,R_2$ such that $R_1$ is the region
  situated in the counterclockwise sector between $e_0$ and $e'_2$ of
  $x$ and $R_2$ is the region situated in the counterclockwise sector
  between $e'_2$ and $e_1$ of $x$. Let $G_i$ be the subgraph of $G$
  contained in the region $R_i$ (including the three cycles $C_i$).

\begin{figure}[!h]
\center
\input{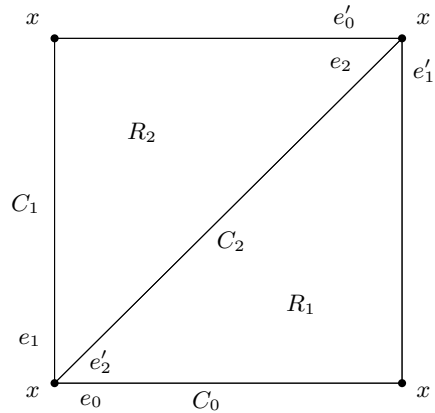}
\caption{Notations of the proof of Theorem~\ref{th:schnydersimple}.}
\label{fig:collage_notation}
\end{figure}

Let $G'_1$ (resp. $G'_2$) be the graph obtained from $G_1$
(resp. $G_2$) by replacing $x$ by three vertices $x_0, x_1, x_2$, such
that $x_i$ is incident to the edges in the counterclockwise sector
between $e_{i+1}$ and $e'_i$ (resp. $e'_i$ and $e_{i-1}$) (see
Figure~\ref{fig:collage}).  The two graphs $G'_1$ and $G'_2$ are
planar near-triangulations and the $C_i$ are chordless, so by
Lemma~\ref{lem:internally}, they are internally 3-connected planar
maps for vertices $x_i$.  The vertices $x_0,x_1,x_2$ appear in
counterclockwise order on the outer face of $G'_1$ and $G'_2$.  By
Theorem~\ref{th:schnyder}, the two graphs $G'_i$ admit planar Schnyder
woods rooted at $x_0,x_1,x_2$.  Orient and color the edges of $G$ that
intersect the interior of $R_i$ by giving them the same orientation
and coloring as in a planar Schnyder wood of $G'_i$.  Orient and color
the cycle $C_i$ in color $i$ such that it is entering $x$ by edge
$e'_i$ and leaving $x$ by edge $e_i$. We claim that the orientation
and coloring that is obtained is a  Schnyder wood of $G$ (see
Figure~\ref{fig:collage}).

\begin{figure}[!h]
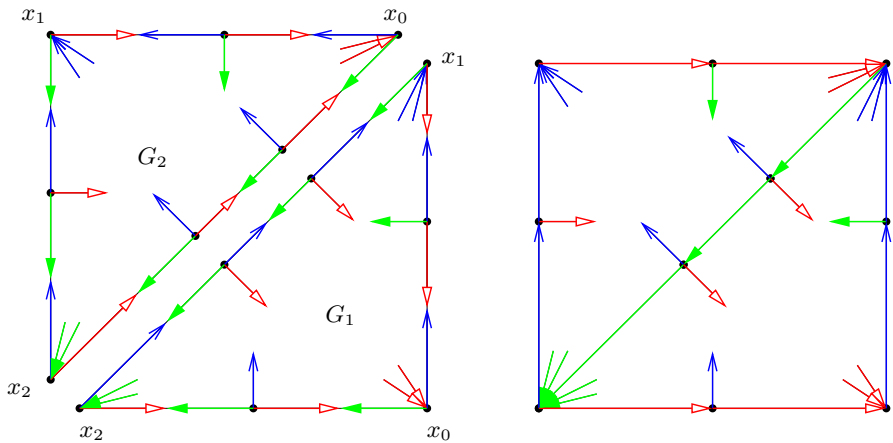

\center
\input{collage-1.pstex_t}
\input{collage-2.pstex_t}
\caption{Gluing two planar Schnyder woods into a toroidal one to prove
  Theorem~\ref{th:schnydersimple}.}
\label{fig:collage}
\end{figure}

Clearly, any interior vertex of the region $R_i$ satisfies the
Schnyder property.  Let us show that it is also satisfied for any
vertex $v$ of a cycle $C_i$ distinct from $x$.  In a Schnyder wood of
$G'_1$, the cycle $C_i$ is oriented in two direction, from $x_{i-1}$
to $x_{i}$ in color $i$ and from $x_{i}$ to $x_{i-1}$ in color
$i-1$. Thus the edge leaving $v$ in color $i+1$ is an inner edge of
$G'_1$ and vertex $v$ has no edges entering in color
$i+1$. Symmetrically, in $G'_2$ the edge leaving $v$ in color $i-1$ is
an inner edge of $G'_2$ and vertex $v$ has no edges entering in color
$i-1$. Then one can paste $G'_1$ and $G'_2$ along $C_i$, orient $C_i$
in color $i$ and see that $v$ satisfies the Schnyder property. The
definition of  $G'_i$, and the orientation of the cycles is done so
that $x$ satisfies also the Schnyder property.  The cycles $C_i$ form
three monochromatic cycles of different colors that are all
intersecting on one vertex and that are pairwise disjoint otherwise
and thus the Schnyder wood is crossing.
\end{proof}

The proof of Theorem~\ref{th:schnydersimple} can be transformed into a
polynomial algorithm but we do not know its exact complexity. The (non
published) proof of Theorem~\ref{th:fij} is itself based on a result
of Robertson and Seymour~\cite{RS86} on disjoint paths.

We do not see how to adapt the method presented here if the map is non
simple, not a triangulation, or in higher genus. For non simple
triangulation, Theorem~\ref{th:fij} is false as shown by
Figure~\ref{fig:not-connected}. For essentially 3-connected toroidal
map, even if one find a way to cut the map, we do not know how to
deal with vertices on the border of the resulting planar maps since
they might have degree $2$ and then the planar map is not internally
3-connected. In higher genus we have no idea of what could be a
generalization of Theorem~\ref{th:fij}.

A Schnyder wood is \emph{point-crossing} if it is crossing and there
are three monochromatic cycles of different colors all intersecting in
one vertex (not necessarily disjoint otherwise).  
Theorem~\ref{th:schnydersimple} shows the existence of Schnyder woods
that are point-crossing and that have winding number $1$ for simple triangulation. Thus it is
particularly interesting since it is the only proof of such result
that we know.  Winding number $1$ and crossing property is not always
possible for non-simple toroidal triangulations (see for example the graph
of Figure~\ref{fig:not-connected}) but on the contrary the existence of
point-crossing Schnyder woods might be generalizable to all
triangulations?

Note that in the Schnyder wood obtained by
Theorem~\ref{th:schnydersimple}, we do not know if there are several
monochromatic cycles of one color or not.  We wonder whether
Theorem~\ref{th:schnydersimple} can be modified as follow: Does a
simple toroidal triangulation admits a Schnyder wood such that each
color induces a connected subgraph (so there is
just one monochromatic cycle per color)? Can one additionally require
that it is point-crossing and/or has winding number $1$?

\chapter{Existence for essentially 3-connected toroidal maps}
\label{chap:generalexistence}

In this chapter we generalize the proof of
Section~\ref{sec:contractionproof} to essentially 3-connected toroidal
maps, moreover we show that at each decontraction step the crossing
property can be preserved. Thus we obtain the existence of crossing
Schnyder woods for almost all essentially 3-connected toroidal maps,
except  a very particular family of toroidal maps for which only
intersecting Schnyder woods exists.

  \section{Contraction of essentially 3-connected toroidal maps}
\label{sec:existence3connected}

Given a map $G$ embedded on a surface.  The \emph{angle map}
\cite{Rose89} of $G$ is a map $A(G)$ on this surface whose vertices
are the vertices of $G$ plus the vertices of $G^*$ (i.e. the faces of
$G$), and whose edges are the angles of $G$, each angle being incident
with the corresponding vertex and face of $G$. Note that if $G$ has no
contractible loop nor homotopic multiple edges, then every face of $G$
has degree at least $3$ in $A(G)$.

Mohar and Rosenstiehl~\cite{MR98} proved that a map $G$ is essentially
2-connected if and only if the angle map $A$ of $G$ has no pair of
(multiple) edges bounding a disk (as every face in an angle map is a
quadrangle, such disk would contain some vertices of $G$).  The following
lemma naturally extends this characterization to essentially
3-connected toroidal maps.

\begin{lemma}
\label{lem:walk}
A toroidal map $G$ is essentially 3-connected if and only if the angle
map $A(G)$ has no walk of length at most four bounding a disk which is
not a face.
\end{lemma}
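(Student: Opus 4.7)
The plan is to work in the universal cover $G^\infty$ and exploit the correspondence between closed walks bounding disks in $A(G)$ and those in $A(G^\infty)$: a closed walk in $A(G)$ bounding a disk in the torus is contractible, so it lifts to a closed walk of the same length in $A(G^\infty)$ bounding a disk in the plane; conversely, a closed walk in $A(G^\infty)$ that bounds a disk $\tilde D$ in the plane is null-homotopic and projects to a closed walk of the same length in $A(G)$ that bounds a disk in the torus, provided the projection is injective on $\tilde D$, which will hold because the disks we construct will be small enough. Since $A(G)$ is bipartite (with primal vertices on one side and faces of $G$ on the other), the possible lengths are $2$ or $4$, and every face of $A(G)$ is a quadrangle corresponding to an edge of $G$.

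For the direction ($\Longleftarrow$), I argue by contrapositive. Assume $G$ is not essentially $3$-connected, so $G^\infty$ admits a separator $S$ of size at most two, and by Lemma~\ref{lem:finitecc} there is a finite component $C$ of $G^\infty\setminus S$. If $|S|=1$, then the construction of Mohar and Rosenstiehl~\cite{MR98} already yields a closed walk of length $2$ in $A(G^\infty)$ enclosing $C$, which projects to the desired non-facial disk-bounding walk of length $2$ in $A(G)$. If $S=\{\tilde x,\tilde y\}$, I trace the topological boundary of the region enclosing $C$ in the plane embedding of $G^\infty$: this boundary is a closed curve meeting $G^\infty$ only at $\tilde x$ and $\tilde y$, crossing through interiors of certain faces of $G^\infty$. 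The crucial structural step is to show that this curve meets each of $\tilde x,\tilde y$ exactly once and therefore crosses exactly two faces $\tilde f,\tilde f'$, both necessarily incident to both $\tilde x$ and $\tilde y$. This yields the closed walk $\tilde x,\tilde f,\tilde y,\tilde f',\tilde x$ of length $4$ in $A(G^\infty)$ bounding a disk $\tilde D$ with $C$ in its interior, and projecting to $A(G)$ gives a closed walk of length $4$ bounding a disk that is not a face, since it contains the non-empty image of $C$.

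For the direction ($\Longrightarrow$), I again argue by contrapositive. Given a walk $W$ of length at most $4$ in $A(G)$ bounding a non-facial disk $D$, lift to a walk $\tilde W$ of the same length in $A(G^\infty)$ bounding a disk $\tilde D$. If $\tilde W=\tilde v,\tilde f,\tilde v$ has length $2$, then $\tilde D$ properly contains a face of $A(G^\infty)$ and hence an edge of $G^\infty$; since $G^\infty$ is simple (by our standing assumption that $G$ has no contractible loop and no homotopic multiple edges), at least one endpoint of this edge is a primal vertex distinct from $\tilde v$ and lying strictly inside $\tilde D$. Such a vertex is separated from the exterior of $\tilde D$ by the single vertex $\tilde v$, so $G^\infty$ is not $2$-connected, hence not $3$-connected. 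If $\tilde W=\tilde x,\tilde f,\tilde y,\tilde f',\tilde x$ has length $4$, the degenerate cases $\tilde x=\tilde y$ or $\tilde f=\tilde f'$ split $\tilde W$ into two shorter closed subwalks, one of which still bounds a non-facial subdisk and reduces to the length-$2$ case; in the non-degenerate case, the same simplicity argument places a primal vertex in the interior of $\tilde D$, and the pair $\{\tilde x,\tilde y\}$ separates this vertex from the exterior, so $G^\infty$ fails to be $3$-connected.

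The main obstacle will be the structural step in ($\Longleftarrow$) when $|S|=2$: namely, establishing that the topological boundary of the region enclosing $C$ meets each of $\tilde x,\tilde y$ exactly once, so that exactly two faces appear on it. This should follow from the standard planar fact that in a $2$-connected plane graph, for any $2$-separator $\{\tilde x,\tilde y\}$ the edges at $\tilde x$ reaching a prescribed component of $G^\infty\setminus\{\tilde x,\tilde y\}$ form a contiguous arc in the cyclic order around $\tilde x$; to invoke it, I would first pass to a minimum separator (so that both $\tilde x$ and $\tilde y$ are truly needed to disconnect $C$) and appeal to the simplicity of $G^\infty$, and this is also where the genuinely planar nature of essential $3$-connectedness, as opposed to purely combinatorial $3$-connectedness, enters the argument.
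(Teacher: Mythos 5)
Your overall strategy is the same as the paper's: work in the universal cover, use the Mohar--Rosenstiehl characterization for the 2-connectivity threshold and the length-$2$ walks, and handle the $2$-separator case by producing a length-$4$ cycle in $A(G^\infty)$. Two points, however, deserve comment.

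\textbf{The projection step in $(\Longleftarrow)$ is a genuine gap.} You claim that the closed walk $\tilde W$ of length $4$ in $A(G^\infty)$ bounding a disk $\tilde D$ around the finite component projects to a walk of $A(G)$ \emph{bounding a disk}, ``provided the projection is injective on $\tilde D$, which will hold because the disks we construct will be small enough.'' This is not justified and is not obviously true: the finite component $C$ of $G^\infty\setminus\{\tilde x,\tilde y\}$ can in principle contain several lifts of the same vertex of $G$, and then $\tilde D$ wraps over itself under the covering projection. One could probably argue that a suitable minimality assumption on the separator and the component forces injectivity, but this needs a real argument. The paper avoids the question entirely: after projecting, it only observes that the projected closed walk $W$ of length $4$ contains a closed subwalk bounding a disk in the torus; since $A(G)$ is bipartite, this subwalk has length $2$ or $4$; the length-$2$ possibility is excluded by essential $2$-connectedness and Mohar--Rosenstiehl; hence $W$ itself bounds a disk, and it cannot bound a face because $\tilde W$ does not. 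That route makes no injectivity claim and uses the length-$2$ characterization a second time in a crucial way, whereas your proposal uses it only for the $|S|=1$ case.

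\textbf{The length-$4$, non-degenerate case in $(\Longrightarrow)$ needs a small patch.} You write that ``the same simplicity argument places a primal vertex in the interior of $\tilde D$.'' In the length-$2$ case the simplicity argument works because one endpoint of the edge inside $\tilde D$ must differ from the unique primal boundary vertex $\tilde v$. For length $4$ the boundary has two primal vertices $\tilde x,\tilde y$, and the first face of $A(G^\infty)$ you find inside $\tilde D$ might correspond precisely to the edge $\tilde x\tilde y$, whose endpoints are both on the boundary. To fix this, use that $\tilde D$ is not a single face and hence (by the quadrangulation Euler count, or simply because it properly contains a face) encloses at least two faces of $A(G^\infty)$; the two corresponding edges of $G^\infty$ cannot both equal $\tilde x\tilde y$ since $G^\infty$ is simple, so one of them supplies a primal vertex strictly inside $\tilde D$. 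This is minor, but as written the step does not go through.
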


\begin{proof}
  ($\Longrightarrow$) Any walk in $A(G)$ of length at most 4 bounding
  a disk which is not a face lifts to a cycle of length at most 4
  bounding a disk which is not a face in $A(G^\infty)$. Thus such a
  walk implies the existence of a small separator in $G^\infty$,
  contradicting its 3-connectedness.

  ($\Longleftarrow$) According to~\cite{MR98}, if $G$ is essentially
  2-connected, $A(G)$ has no walk of length 2 bounding a disk.  If $G$
  is essentially 2-connected but not essentially 3-connected, then
  $A(G^\infty)$ has a cycle $C$ of length 4 bounding a disk which is
  not a face. Let $W$ be the walk of $A(G)$ corresponding to $C$.
  Walk $W$ contains a subwalk bounding a
  disk. Since $A(G)$ is bipartite, this subwalk has even length.
  Since $A(G)$ is essentially 2-connected, it has no such walk of
  length 2. Thus $W$ bounds a disk. Finally, this disk is not a single
  face since otherwise $C$ would bound a single face in $A(G^\infty)$.
\end{proof}

Given a toroidal map $G$, the \emph{contraction} of a non-loop-edge
$e$ of $G$ is the operation consisting of continuously contracting $e$
until merging its two ends. We note $G/e$ the obtained map.  On
Figure~\ref{fig:contraction} the contraction of an edge $e$ is
represented. We consider three different cases corresponding to
whether the faces adjacent to $e$ have size three or not.  Note that
only one edge of each set of multiple edges that is created is
preserved.  The contraction operation is also define when $t=u$ and
$y=v$ (the case represented on Figure~\ref{fig:contraction-loop}), or
$x=v$ and $z=u$ (corresponding to the symmetric case with a diagonal
in the other direction).

\begin{figure}[!h]
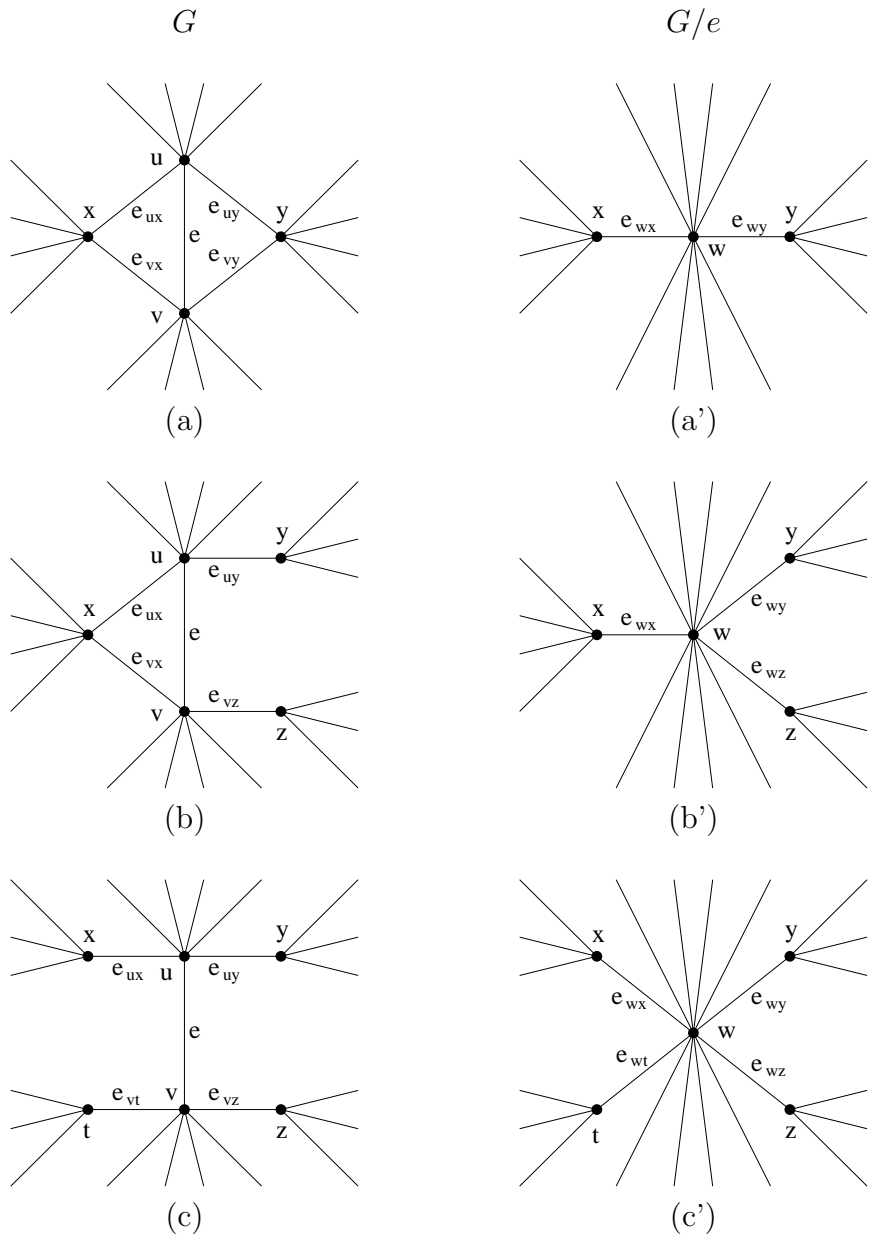

\center
\begin{tabular}{ccc}  
$G$& &$G/e$ \\
& & \\
\includegraphics[scale=0.4]{contraction-1}
& \hspace{3em} &
\includegraphics[scale=0.4]{contraction-2}\\
(a) & & (a') \\
& & \\
\includegraphics[scale=0.4]{contraction-2-1}
& \hspace{3em} &
\includegraphics[scale=0.4]{contraction-2-2}\\
(b) & & (b') \\
& & \\
\includegraphics[scale=0.4]{contraction-3-1}
& \hspace{3em} &
\includegraphics[scale=0.4]{contraction-3-2}\\
(c) & & (c') \\
\end{tabular}
\caption{The contraction operation}
\label{fig:contraction}
\end{figure}

\begin{figure}[!h]
\center
\begin{tabular}{ccc} 
\includegraphics[scale=0.4]{contraction-loop-gen-1}
& \hspace{3em} & 
\includegraphics[scale=0.4]{contraction-loop-gen-1-c}\\
(d) & & (d') \\
& & \\
\includegraphics[scale=0.4]{contraction-loop-gen-2-}
& \hspace{3em} &
\includegraphics[scale=0.4]{contraction-loop-gen-2-c-}\\  
(e) & & (e') \\
& & \\
\includegraphics[scale=0.4]{contraction-loop-gen-3-}
& \hspace{3em} &
\includegraphics[scale=0.4]{contraction-loop-gen-3-c-}\\
(f) & & (f') \\
\end{tabular}
\caption{The contraction operation when some vertices are identified. }
\label{fig:contraction-loop}
\end{figure}

A non-loop edge $e$ of an essentially 3-connected toroidal map is
\emph{contractible} if the contraction of $e$ keeps the map
essentially 3-connected. We have the following lemma:

\begin{lemma}
\label{lem:contractibleedges}
  An essentially 3-connected toroidal map that is not reduced to a
  single vertex has a contractible edge.
\end{lemma}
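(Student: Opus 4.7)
The plan is to argue by contradiction, leaning on the angle-map characterization of essential $3$-connectedness given by Lemma~\ref{lem:walk}. Suppose $G$ has at least two vertices and every non-loop edge of $G$ is non-contractible. For any such edge $e = uv$, the contracted map $G/e$ fails to be essentially $3$-connected, so by Lemma~\ref{lem:walk}, $A(G/e)$ contains a walk of length at most $4$ bounding a disk that is not a single face. Since $A(G)$ itself has no such walk (as $G$ is essentially $3$-connected), this walk must traverse the merged vertex $w$; undoing the contraction turns it into a short walk in $A(G)$ through both $u$ and $v$ enclosing a region of $G$ strictly larger than the two faces incident to $e$.

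Next, I would enumerate the possible shapes of this obstruction. Because angle maps are bipartite, the bad walk in $A(G/e)$ has length exactly $2$ or $4$. A length-$2$ situation would correspond in $G$ to a contractible loop or a pair of homotopic multiple edges arising from the contraction, both of which can only happen if $G$ itself already violates our standing hypothesis; a careful case check rules this out. The remaining case is a length-$4$ walk, which corresponds to a third vertex $x_e$ together with a disk $D_e \subset G$ bounded by two paths of length $2$ between $u$ and $v$ through $x_e$ (i.e., a toroidal analogue of a separating triangle on $\{u, v, x_e\}$) and containing at least one face beyond the two faces incident to $e$.

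The final step is a minimality argument. Among all non-loop edges $e = uv$, choose one for which the associated disk $D_e$ is inclusion-minimal. The interior of $D_e$ contains at least one additional face, and hence a non-loop edge $e'$ strictly inside $D_e$. Applying the same analysis to $e'$ produces a vertex $x_{e'}$ and a disk $D_{e'}$. I would then show that $D_{e'}$ either sits properly inside $D_e$, contradicting minimality, or crosses $\partial D_e$ — but any such crossing, combined with essential $3$-connectedness of $G$ (equivalently, $3$-connectedness of $G^\infty$), would yield a separator of size at most $2$ in $G^\infty$, another contradiction. This produces the desired contractible edge.

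The main obstacle I expect is purely topological: ensuring that the regions $D_e$ really are open disks rather than annular regions winding around the torus, and handling cleanly the degeneracies where vertices or faces of different configurations coincide (which is more pervasive on the torus than in the plane). Controlling these is precisely what essential $3$-connectedness buys us through Lemma~\ref{lem:walk}, but executing the case distinctions for walks of length $2$ versus $4$ in $A(G/e)$, and verifying in each sub-case that the resulting region of $G$ is a genuine disk, is the delicate technical part of the proof.
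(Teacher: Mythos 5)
Your high-level strategy matches the paper's (assume every non-loop edge is non-contractible, extract from Lemma~\ref{lem:walk} a short obstructing walk in $A(G/e)$ for each $e$, pick an extremal one, and contradict extremality by finding an edge inside), but several steps would not survive detailed execution. First, the obstruction is not a ``separating triangle'' for general toroidal maps: a $4$-walk $(v_1,f_1,v_2,f_2)$ in $A(G/e)$ alternates vertex, face, vertex, face, and the paths $v_1\text{--}f_1\text{--}v_2$ and $v_1\text{--}f_2\text{--}v_2$ are paths of angles, not paths of length $2$ in $G$; the ``toroidal separating triangle on $\{u,v,x_e\}$'' picture is only accurate when $f_1,f_2$ are triangular and behave nicely under uncontraction. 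The paper deliberately stays inside the angle map and never tries to read the obstruction as a triangle of $G$; the whole argument is about vertices and faces of $G$ together, not about a $3$-cycle of $G$.

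Second, the extremality choice is trickier than ``inclusion-minimal disk.'' The disks for different $e$ live in different contracted maps $A(G/e)$, and even pulled back to $A(G)$ they need not be comparable under inclusion, so ``inclusion-minimal'' is not a well-posed selection rule. The paper instead fixes, for each $e$, a $4$-walk $W_4(e)$ that is \emph{maximal} in the number of faces it encloses, and then minimizes that face count over $e$; the maximality clause is precisely what yields the later observation that $v_1$ and $v_2$ share no common face-neighbor $f$ outside $W_4(e)$ with $(v_1,f,v_2,f_1)$ bounding a disk, and the contradiction depends on it. Third, your closing claim that a crossing between $D_{e'}$ and $\partial D_e$ produces a separator of size at most $2$ in $G^\infty$ does not follow: the obstructing $4$-walk for $e'$ lives in $A(G/e')$, and uncontracted it becomes a walk of length $6$ in $A(G)$, which essential $3$-connectedness does not exclude. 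The paper never invokes a crossing argument; it selects concrete candidate edges ($v_1v_3$, $av_3$, or $bv_3$) dictated by the structure of $W_4(e)$ and, case by case, shows that the associated $W_4$ would enclose strictly fewer faces, contradicting the minimal choice of $e$.
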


\begin{proof}
  Let $G$ be an essentially 3-connected toroidal map with at least 2
  vertices. Note that for any non-loop edge $e$, the map $A(G/e)$ has no
  walk of length $2$ bounding a disk which is not a face, otherwise,
  $A(G)$ contains a walk of length at most $4$ bounding a disk which
  is not a face and thus, by Lemma~\ref{lem:walk}, $G$ is not
  essentially 3-connected.

  Suppose by contradiction that contracting any non-loop edge $e$ of
  $G$ yields a non essentially 3-connected map $G/e$. By
  Lemma~\ref{lem:walk},  the angle map $A(G)$ has no walk
  of length at most four bounding a disk which is not a face. For any
  non-loop edge $e$, let $W_4(e)$ be a 4-walk of $A(G/e)$ bounding a
  disk, which is maximal in terms of the faces it contains.  Among all
  the non-loop edges, let $e$ be the one such that the number of faces
  in $W_4(e)$ is minimum.  Let $W_4(e)=(v_1,f_1,v_2,f_2)$ and assume
  that the endpoints of $e$, say $a$ and $b$, are contracted into
  $v_2$ (see Figure~\ref{fig:essentially}).  Note that by
  maximality of $W_4(e)$, $v_1$ and $v_2$ do not have any common
  neighbor $f$ out of $W_4(e)$, such that $(v_1,f,v_2,f_1)$ bounds a
  disk.

\begin{figure}[h!]
\center
\begin{tabular}{ccc}
\includegraphics[scale=0.4]{essentially-1}
&  &
\includegraphics[scale=0.4]{essentially-0}\\
$G/e$ & & $G$ \\
\end{tabular}
\caption{Notations of the proof of Lemma~\ref{lem:contractibleedges}}
\label{fig:essentially}
\end{figure}

Assume one of $f_1$ or $f_2$ has a neighbor inside $W_4(e)$. By
symmetry, assume $v_3$ is a vertex inside $W_4(e)$ such that there is
a face $F=(v_1,f_1,v_3,f_w)$ in $A(G/e)$, with eventually $f_w=f_2$.
Consider now the contraction of the edge $v_1v_3$.  Let
$P(v_1,v_3)=(v_1,f_x,v_y,f_z,v_3)$ be the path from $v_1$ to $v_3$
corresponding to $W_4(v_1v_3)$ and $P(a,b)=(a,f_2,v_1,f_1,b)$ the path
corresponding to $W_4(e)$.  Suppose that all the faces of
$W_4(v_1v_3)$ are in $W_4(e)$, then with $F$, $W_4(e)$ contains more
faces than $W_4(v_1v_3)$, a contradiction to the choice of $e$. So in
$A(G)$,  path $P(v_1,v_3)$  crosses  $P(a,b)$.

If $v_y=a$ or $v_y=b$. Then $v_1,v_y,f_z,v_3$ are in $W_4(e)$, so
$f_x$ is out and $(v_1,f_x,v_2,f_1)$ bounds a disk, a contradiction.
So $v_y\neq a$ and $v_y\neq b$ and thus $f_z=f_1$ or $f_z= f_2$.  If
$f_z=f_1$, then the cycle $(v_1,f_x,v_y,f_1)$ bounds a face by
Lemma~\ref{lem:walk}.  This implies that $W_4(v_1v_3)$ bounds a face,
a contradiction. So $f_z\neq f_1$ and thus $f_z= f_2$.  If
$f_w = f_2$, then similarly, the cycle $(v_1,f_x,v_y,f_2)$ bounds a
face by Lemma~\ref{lem:walk}.  This implies that $W_4(v_1v_3)$ bounds
a face, a contradiction. So $f_w\neq f_2$. Then $(v_1,f_2,v_3,f_w)$
bounds a face by Lemma~\ref{lem:walk} and then $f_w$ has degree $2$ in
$A(G)$, a contradiction.

   Assume now that none of $f_1$ or $f_2$ has a neighbor inside
  $W_4(e)$. Let $f'_1$, $f'_2$, $f_3$ and $f'_3$ be vertices of $A(G)$
  such that $(v_1,f_1,b,f'_1)$, $(v_1,f_2,a,f'_2)$ and
  $(a,f_3,b,f'_3)$ are faces (see
  Figure~\ref{fig:essentially2}). Suppose $f'_1 = f'_2 = f'_3$. Then
  in $A(G/e)$, the face $f'_1$ is deleted (among the 2 homologous
  multiple edges between $v_1,v_2$ that are created, only one is kept
  in $G/e$). Then $W_4(e)$ bounds a face, a contradiction. Thus there
  exists some $i$ such that $f'_i \neq f'_{i+1}$. Assume that $i=1$
  (resp. $i=2$ or 3), and let $v_3$ and $f''$ be such that there is a
  face $(v_1,f'_1,v_3,f'')$ in $A(G)$ (resp. $(a,f'_2,v_3,f'')$ or
  $(b,f'_3,v_3,f'')$). As above considering the contraction of the
  edge $v_1v_3$ (resp. $av_3$ or $bv_3$) yields a contradiction.
\begin{figure}[h!]
\center
\includegraphics[scale=0.4]{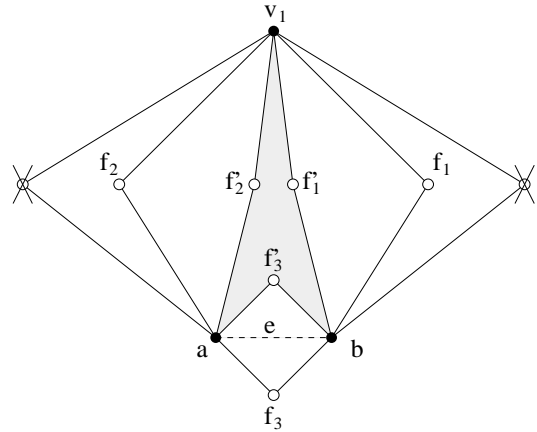}
\caption{Notations of the proof of Lemma~\ref{lem:contractibleedges}}
\label{fig:essentially2}
\end{figure}
\end{proof}

Lemma~\ref{lem:contractibleedges} shows that an essentially
3-connected toroidal map can be contracted step by step by keeping it
essentially 3-connected until obtaining a map with just one vertex.
The two essentially 3-connected toroidal maps on one vertex are
represented on Figure~\ref{fig:essential} with a Schnyder wood. The
map of Figure~\ref{fig:essential}.(a), the \emph{3-loops}, admits a
crossing Schnyder wood, and the map of Figure~\ref{fig:essential}.(b),
the \emph{2-loops}, admits an intersecting Schnyder wood.

\begin{figure}[!h] 
\center
\begin{tabular}{ccc}
\includegraphics[scale=0.5]{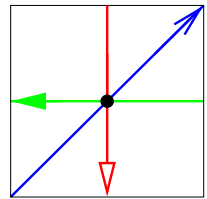}
& \hspace{4em} &
\includegraphics[scale=0.5]{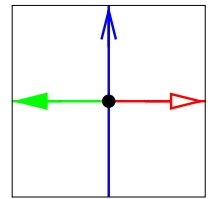}\\
(a) The 3-loops & &(b) The 2-loops 
\end{tabular}
\caption{The two essentially 3-connected toroidal maps on one vertex.}
\label{fig:essential}
\end{figure}

It would be convenient if one could contract any essentially
3-connected toroidal map until obtaining one of the two maps of
Figure~\ref{fig:essential} and then decontract the map to obtain an
intersecting Schnyder wood of the original map. Unfortunately, we are
not able to prove that the intersecting property can be preserved
during the decontraction process.

On the example of Figure~\ref{fig:3-connected}, it is not possible to
decontract the graph $G'$ (Figure~\ref{fig:3-connected}.(a)), and
extend its intersecting Schnyder wood to $G$
(Figure~\ref{fig:3-connected}.(b)) without modifying the edges that
are not incident to the contracted edge $e$.  Indeed, if we keep the
edges non incident to $e$ unchanged, there are only two possible ways
to extend the coloring in order to preserve the Schnyder property, but
none of them leads to an intersecting Schnyder wood.

\begin{figure}[h!]
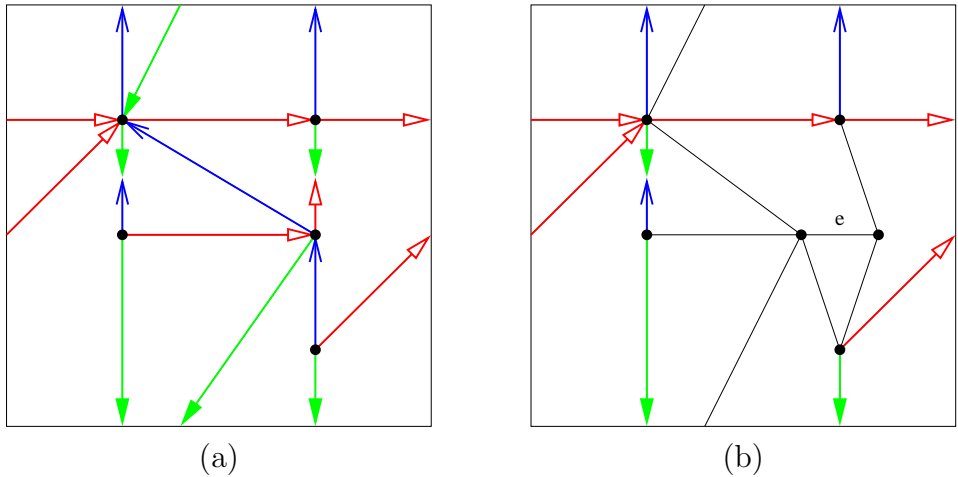

\center
\begin{tabular}{ccc}
\includegraphics[scale=0.4]{3-connected-0}
& \hspace{1em} &
\includegraphics[scale=0.4]{3-connected-example}\\
(a)  & &(b)  \\
\end{tabular}
\caption{ (a) The map obtained by  contracting the edge
  $e$ of the map (b). It is not possible to color 
 and orient the black edges of
  (b) to obtain an intersecting Schnyder wood.
}
\label{fig:3-connected}
\end{figure}

Fortunately most essentially 3-connected toroidal maps admits crossing
Schnyder woods and not only intersecting Schnyder woods and we are
able to preserve the crossing property during the decontraction
process (see Section~\ref{sec:contractionlemma}).  

A toroidal map is \emph{basic} if it consists of a non-contractible
cycle on $n$ vertices, $n\geq 1$, plus $n$ homologous loops (see
Figure~\ref{fig:basic}). 

\begin{lemma}
\label{lem:basic}  
A basic toroidal map admits an intersecting
  Schnyder wood but no crossing Schnyder wood.
\end{lemma}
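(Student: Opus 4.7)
My plan is to prove the two halves separately: existence of an intersecting Schnyder wood by an explicit construction, and non-existence of a crossing one by a homological pigeonhole argument.

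For existence, let $C=v_1v_2\cdots v_nv_1$ be the non-contractible cycle and $\ell_1,\ldots,\ell_n$ the homologous loops (with $\ell_i$ at $v_i$). I would orient every cycle edge in the same direction around $C$, all in color $0$, and orient each loop $\ell_i$ as a bi-directed edge whose two outgoing arcs carry colors $1$ and $2$. At each vertex $v_i$ the outdegree is exactly $3$ (one from the outgoing cycle edge and two from the bi-directed loop), and a local check in the natural toroidal embedding verifies the angular Schnyder condition. The monochromatic cycles are then $C$ itself (as color-$0$ cycle) and each $\ell_i$ (simultaneously a color-$1$ and a color-$2$ cycle). Since $C$ meets every $\ell_i$ at $v_i$ and the color-$1$ and color-$2$ versions of $\ell_i$ share the vertex $v_i$, every monochromatic cycle intersects at least one monochromatic cycle of each other color, so this construction is intersecting.

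For non-existence, the key observation is that a basic toroidal map admits essentially only two families of simple cycles: the $n$ loops (all lying in the common weak-homology class $\pm[L]$) and the full cycle $C$ (in the class $\pm[C]$, distinct from $\pm[L]$ since $C$ and the loops represent independent generators of toroidal homology). Any simple cycle meeting a loop must be just that loop (a loop already revisits its unique vertex), and any simple cycle of length at least two uses only cycle edges and must be the whole $C$, because the subgraph of cycle edges is itself a single cycle. In any Schnyder wood each monochromatic cycle is a simple directed cycle in the color-$c$ functional digraph (every vertex has color-$c$ outdegree one by the Schnyder property), so every monochromatic cycle lies in $\pm[C]$ or $\pm[L]$.

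Combining this classification with Lemmas~\ref{lem:allhomotopic} and~\ref{lem:twonothomotopic}, each color corresponds to a single weak-homology class in $\{\pm[C],\pm[L]\}$, and two crossing monochromatic cycles must lie in distinct classes. A crossing Schnyder wood would therefore force the three colors to sit in three pairwise distinct weak-homology classes, but only two are available, a contradiction. The main obstacle I foresee is the simple-cycle classification of the basic map --- in particular, carefully ruling out mixed cycles alternating between loops and cycle edges --- which relies on the fact that a loop forces a vertex repetition. Once that is in place, the three-colors-in-two-classes pigeonhole closes the argument cleanly.
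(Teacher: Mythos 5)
Your proof is correct, and the non-existence half takes a genuinely different route from the paper's. For existence both give an explicit construction checked locally (your choice of uni-directed color-$0$ cycle edges and bi-directed color-$\{1,2\}$ loops is the ``mirror'' of the configuration in Figure~\ref{fig:basic}, which has the cycle bi-directed in colors $0,2$ and the loops uni-directed in color $1$; both work). For non-existence the paper fixes a loop, say colored $1$, and then propagates constraints: the crossing hypothesis forces every loop to be uni-directed color $1$, which in turn forces every cycle edge to be bi-directed in colors $0,2$, pinning down the unique Schnyder wood --- the one of Figure~\ref{fig:basic} --- and observing that it is not crossing. You instead classify the simple cycles of a basic map (each loop, and the full cycle $C$, with your loop/vertex-repetition observation correctly ruling out mixed cycles), note that this leaves exactly two weak-homology classes, and then apply Lemmas~\ref{lem:allhomotopic} and~\ref{lem:twonothomotopic}: a crossing Schnyder wood would put the three colors in three pairwise distinct weak-homology classes, impossible with only two available. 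Your pigeonhole argument is cleaner and more conceptual, avoiding the explicit identification of the unique Schnyder wood, while the paper's color-propagation gives slightly more information (uniqueness). Both rely on the highly constrained cycle structure of basic maps; the paper keeps it implicit, you make it the centerpiece.
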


\begin{proof}
  Basic toroidal maps admits intersecting
  Schnyder woods as shown by
  Figure~\ref{fig:basic}. Suppose that a basic toroidal map $G$ on $n$
  vertices admits a crossing Schnyder wood. Consider one of the
  vertical loop $e$ and suppose by symmetry that it is oriented upward
  in color $1$. In a crossing Schnyder wood, all the monochromatic
  cycles of different colors are not weakly homologous, thus all the loops
  homologous to $e$ are also oriented upward in color $1$ and they are
  not bioriented. It remains just a cycle on $n$ vertices for edges
  of color $0$ and $2$. Thus the Schnyder wood is the one of
  Figure~\ref{fig:basic} and not crossing, a contradiction.
\end{proof}

To prove existence of crossing Schnyder woods for general essentially
3-connected toroidal maps, instead of contracting these maps to one of
the two maps of Figure~\ref{fig:essential}, we contract them to the
map of Figure~\ref{fig:essential}.(a), the 3-loops, or to the map of
Figure~\ref{fig:brique}, \emph{the brick}, that both admits crossing
Schnyder woods. (One can draw the universal cover of the brick to
understand its name.)

\begin{figure}[!h]
\center
\includegraphics[scale=0.5]{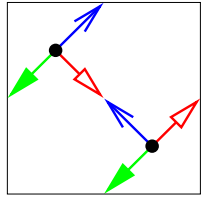}
\caption{The brick, an essentially 3-connected toroidal map with two
  vertices, given with a crossing Schnyder wood.}
\label{fig:brique}
\end{figure}

Now we can state our main existence result:

\begin{theorem}
\label{th:existencebasic}
A toroidal map admits a crossing Schnyder wood if and only if it is
an essentially 3-connected non-basic toroidal map. Moreover, basic
toroidal maps admits intersecting Schnyder woods.
\end{theorem}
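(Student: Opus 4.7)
The plan is to treat the three claims of the statement separately. The forward direction of the equivalence follows immediately from two results already in hand: Lemma~\ref{lem:conjessentially} forces essential 3-connectedness on any toroidal map carrying a (generalized) Schnyder wood, and Lemma~\ref{lem:basic} explicitly rules out basic toroidal maps from admitting a crossing Schnyder wood. The ``moreover'' clause is itself part of Lemma~\ref{lem:basic}, with the wood depicted in Figure~\ref{fig:basic}.

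The substantive direction is that every essentially 3-connected non-basic toroidal map $G$ admits a crossing Schnyder wood, and I would prove it by induction on the number of vertices (with ties broken by the number of edges), generalising the contraction/decontraction argument of Section~\ref{sec:contractionproof}. The two base cases are the 3-loops of Figure~\ref{fig:essential}(a) and the brick of Figure~\ref{fig:brique}, each equipped with an explicit crossing Schnyder wood. For the inductive step, Lemma~\ref{lem:contractibleedges} supplies a contractible edge $e$, and the first technical point is to show that $e$ can be chosen so that $G/e$ is again non-basic (or is one of the two base cases). If every contractible edge of $G$ produced a basic quotient, the structure of $G$ would be so constrained that $G$ itself would be within one decontraction of a basic map; in that narrow regime I would either verify directly that $G$ is the brick (respectively the 3-loops) up to relabelling, or exhibit a crossing Schnyder wood by hand. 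Once a suitable $e$ is secured, the induction hypothesis gives a crossing Schnyder wood of $G/e$, which I would then extend into a Schnyder wood of $G$ through local decontraction rules generalising Figures~\ref{fig:decontracttri} and~\ref{fig:contraction-loop-tri-col}, allowing incident faces of arbitrary degree and the possibility that $e$ is bioriented in $G/e$.

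The hard part will be showing that among the admissible decontraction rules at least one preserves the \emph{crossing} property, not merely the Schnyder property. Verifying the Schnyder property is a local matter handled by inspecting the angle labels around the decontracted vertices, but the crossing property is global: it demands that every $i$-cycle meet both a $(i{-}1)$-cycle and an $(i{+}1)$-cycle. I plan to run a case analysis organised by the orientation and coloring of the edges incident to the contracted vertex in $G/e$, together with how the three families of monochromatic cycles meet the contracted region. In each case, a monochromatic cycle of $G/e$ passing through the contracted vertex either reappears in $G$ essentially unchanged (preserving every existing crossing with cycles of the other two colors) or is locally rerouted by at most one face, so that any crossing it participated in is merely shifted by one face and is therefore preserved. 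The residual configurations in which no admissible rule preserves the crossing property should correspond precisely to $G/e$ being basic, a situation excluded by the earlier careful choice of $e$; this is why both tasks---choosing $e$ and choosing the decontraction rule---must be coordinated. Once this case analysis is complete, the inductive step goes through, and combined with the two base cases it produces a crossing Schnyder wood of every essentially 3-connected non-basic toroidal map.
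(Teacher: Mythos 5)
Your proposal matches the paper's proof in all essentials: the forward direction and the ``moreover'' clause cite Lemma~\ref{lem:conjessentially} and Lemma~\ref{lem:basic} exactly as the paper does, and the reverse direction is the same contraction/decontraction induction with the same two base cases (the 3-loops and the brick). Your two technical subgoals correspond precisely to the paper's Lemma~\ref{lem:contractionbrique} (contract to a base case while never passing through a basic map) and Lemma~\ref{lem:contractype1} (decontraction preserves the crossing property, verified by the case analysis you sketch); the only slight discrepancy is your final speculation that decontraction rules might fail precisely when $G/e$ is basic, which is moot since basic quotients have no crossing Schnyder wood to extend in the first place, so the paper's decontraction lemma is stated and proved unconditionally.
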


Note that Theorem~\ref{th:existencebasic} confirms
Conjecture~\ref{conjecture2} for $g=1$.  The proof of this theorem is
postponed to Section~\ref{sec:existencemainproof}. The main
difficulty is to show that one can decontract a crossing Schnyder wood
by preserving the crossing property. This is done in next section by
first relaxing the definition of crossing 
 and then proving a decontraction lemma. Before we prove the following
contraction lemma:

\begin{lemma}
\label{lem:contractionbrique}
A non-basic essentially 3-connected toroidal map can be contracted to
the 3-loops (Figure~\ref{fig:essential}.(a)) or to the brick
(Figure~\ref{fig:brique}) by keeping the toroidal map essentially 3-connected.
\end{lemma}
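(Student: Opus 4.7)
The plan is to proceed by induction on the number of vertices $n$ of $G$. For $n=1$, the only essentially 3-connected toroidal maps on one vertex are the 3-loops (Figure~\ref{fig:essential}.(a)) and the 2-loops (Figure~\ref{fig:essential}.(b)); the latter is basic, so a non-basic $G$ on one vertex is the 3-loops, and there is nothing to prove. For $n\ge 2$, the goal will be to show the following key claim: either $G$ equals the brick, or there exists a contractible edge $e$ such that $G/e$ is non-basic. Since contractibility preserves essential 3-connectedness and $G/e$ then has $n-1$ vertices, the inductive hypothesis applied to $G/e$ completes the argument.

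Existence of \emph{some} contractible edge is Lemma~\ref{lem:contractibleedges}, so the real work is in showing that when $G$ is not the brick one can choose a contractible edge whose contraction avoids a basic map. I would argue this by contrapositive: assume every contractible edge $e$ of $G$ yields a basic $G/e$, and deduce that $G$ is the brick. Since $G/e$ consists of a non-contractible cycle $C$ of length $n-1$ together with $n-1$ pairwise homologous loops, $G$ is obtained from this rigid structure by splitting one of its vertices into the two endpoints of $e$. The contraction cases of Figures~\ref{fig:contraction} and~\ref{fig:contraction-loop} leave only a handful of ways to realize such a split, each producing a very restricted $G$.

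From each such restricted $G$, one then looks for a \emph{second} contractible edge $e'$ (using Lemma~\ref{lem:contractibleedges} on $G$, or a counting argument on the angle map through Lemma~\ref{lem:walk}) and checks whether $G/e'$ is basic as well. Most of the candidate shapes for $G$ turn out to have a contractible edge whose contraction creates a 4-walk bounding a non-face disk in the new angle map, or whose contraction produces an essentially 3-connected map that is not basic; either possibility contradicts our assumption. Iterating this combinatorial/topological reduction, the only $G$ that survives all constraints is the brick, with the loops and cycle-edges matching the structure of Figure~\ref{fig:brique}.

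The main obstacle is precisely this case analysis: there are several local configurations for how a splitting of a basic map can be realized while keeping $G$ essentially 3-connected (including the delicate cases of Figure~\ref{fig:contraction-loop} where endpoints coincide), and one must systematically rule out all configurations except the brick. I expect that tracking the homology classes of the loops of $G/e$ through the splitting, together with repeated application of Lemma~\ref{lem:walk} to forbid short walks in the angle map that bound non-face disks, will provide the right language to handle all cases uniformly.
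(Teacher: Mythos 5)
Your high-level strategy matches the paper's: induct (you on vertices, the paper on edges, which is equivalent since contraction drops each by one), show existence of a contractible edge whose contraction stays non-basic, and argue by contradiction/contrapositive from the assumption that every contraction is basic, exploiting the very rigid cylinder-plus-loops structure of a basic $G/e$. The paper puts the brick in the base case (maps with at most $3$ edges) rather than carrying it as an exception in the inductive step, but this is a cosmetic reorganization of the same idea.

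However, the key claim in your inductive step --- ``if every contractible edge yields a basic map, then $G$ is the brick'' --- is stated but not proven, and this is where essentially all of the paper's work lies. The paper's analysis fixes one such contractible edge $e=uv$, splits into the case where $u$ or $v$ is incident to a loop and the case where neither is, uses the fact that $u$ and $v$ then sit in a specific cylinder of the basic map to severely constrain the edges incident to them (essential $3$-connectedness forces at least degree three, and the loops of $G/e$ constrain where those incident edges can go), and in each branch explicitly produces a \emph{different} contractible edge $e'$ (an edge from $u$ to a neighboring loop vertex, or one of the two parallel edges between $u$ and $v$) whose contraction is non-basic, contradicting the assumption. Your sketch gestures at ``looking for a second contractible edge,'' which is indeed the right move, and your remark about tracking homology classes and using Lemma~\ref{lem:walk} is the right toolkit, but without the actual branch-by-branch construction of $e'$, the argument is not yet a proof. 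You should also note that your contrapositive is strictly stronger than what the paper proves for $\geq 4$ edges (the paper derives a flat contradiction there; the brick with its $3$ edges only appears in the base case), so your version also needs a short explicit check that the $n=2$, $m=3$ non-basic essentially $3$-connected map is unique and equals the brick.
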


\begin{proof}
  Let us prove the lemma by induction on the number of edges of the
  map.  As the 3-loop and the brick are the only non-basic essentially
  3-connected toroidal maps with at most 3 edges, the lemma holds for
  the maps with at most 3 edges.  Consider now a non-basic essentially
  3-connected toroidal map $G$ with at least $4$ edges.  As $G$ has at
  least 2 vertices, it has at least one contractible edge by
  Lemma~\ref{lem:contractibleedges}.  If $G$ has a contractible edge
  $e$ which contraction yields a non-basic map $G'$, then by induction
  hypothesis on $G'$ we are done. Let us prove that such an edge
  always exists.  We assume by contradiction, that the contraction of
  any contractible edge $e$ yields a basic map $G'$. Let us denote
  $v_i$, with $1\le i \le n$, the vertices of $G'$ in such a way that
  $(v_1,v_2,\ldots,v_n)$ is a cycle of $G'$. We can assume that
  $v_1$ is the vertex resulting of  the contraction of $e$. Let
   $u$ and $v$ be the endpoints of $e$ in $G$.

   Suppose first that $u$ or $v$ is incident to a loop in $G$. By
   symmetry, we can assume that $v$ is incident to a loop and that $u$
   is in the cylinder between the loops around $v$ and $v_n$ (if $n=1$
   then $v_n=v$) and $u$ is the only vertex here. Since $G$ is
   non-basic and $u$ has at least 3 incident edges, two of them go to
   the same vertex but are not homotopic multiple edges. Since after
   the contraction of $e$ there is only one edge left in the cylinder,
   we can deduce that $u$ has at least two edges in common with
   $v$. On the other side since $G$ is essentially 3-connected $u$ has
   an edge $e'$ with $v_n$. This edge $e'$ is contractible since its
   contraction yields a map containing the basic graph on $n$
   vertices. But since this map has two edges linking $(uv_n)$ and
   $v$, it is non-basic. So $G$ has a contractible edge which
   contraction produces a non-basic map, contradicting our assumption.

   Suppose now that $u$ and $v$ do not have incident loop, we thus
   have that $G$ contains a (non-contractible) cycle $C$ of length $2$
   containing $e$. Let $e'$ be the other edge of $C$. Since $G$ is
   essentially 3-connected, both $u$ and $v$ have at least degree 3,
   and at least one of them has an incident edge on the left
   (resp. right) of $C$. If $n=1$, since $G$ has at least 4 edges
   there are 2 edges, say $f_1$ and $f_2$ between $u$ and $v$ and
   distinct from $e$ and $e'$ that are not homotopic multiple
   edges. The edges $f_1$ and $f_2$ remain distinct edges in $G'$. So
   $G'$ has one vertex and 3 edges, it is thus non-basic. Assume now
   that $n\ge 2$. In this case $u$ and $v$ are contained in a cylinder
   bordered by the loops at $v_2$ and at $v_n$ (with eventually
   $n=2$). We can assume that $u$ has at least one incident edge $f_1$
   on the left of $C$ to $v_n$, and that $v$ has at least one incident
   edge $f_2$ on the right of $C$ to $v_2$. In this case one can
   contract $f_1$ and note that the obtained map which contain at
   least 3 edges around $v$ ($e$, $e'$ and $f_2$) is essentially
   3-connected, and non-basic. So $G$ has a contractible edge which
   contraction produces a non-basic map, contradicting our assumption.
\end{proof}

\section{Special decontraction preserving crossing}
\label{sec:contractionlemma}

The goal of this section is to prove the following lemma:

\begin{lemma}
  \label{lem:contractype1}
  If $G$ is a toroidal map given with a non-loop edge $e$ whose
  extremities are of degree at least three and such that $G/ e$ admits
  a crossing Schnyder wood, then $G$ admits a crossing Schnyder wood.
\end{lemma}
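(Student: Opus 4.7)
My plan is to adapt the decontraction argument used in the proof of Theorem~\ref{th:proofbycontractiontri} to the general (non-triangulation) setting, with the additional requirement of preserving the crossing property. Let $u,v$ denote the endpoints of $e$ in $G$ and $w$ the corresponding vertex of $G/e$. Starting from a given crossing Schnyder wood $S'$ of $G/e$, I would construct a Schnyder wood $S$ of $G$ by keeping the orientation and coloring of every edge of $G/e$ that survives unchanged in $G$ (i.e.\ every edge that is not a merged edge resulting from the contraction of a triangle incident to $e$), and by defining the orientations and colors of $e$ and of the at most four edges obtained by splitting the merged edges, through a local case analysis.

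First I would split the argument along the three possibilities of Figure~\ref{fig:contraction} for the faces of $G$ incident to $e$: both triangles (case (a), two merged edges), one triangle and one larger face (case (b), one merged edge), or two larger faces (case (c), no merged edge). Within each case, I would further refine the analysis according to the orientation and coloring in $S'$ of the edges incident to $w$ that lie on the boundary of the decontraction region. For each such local configuration, I would exhibit a rule that orients and colors the edges of the decontraction region in $S$ (namely $e$ and the split edges), and verify through the angle-labeling reformulation of Proposition~\ref{prop:bijtore} that both $u$ and $v$ satisfy the Schnyder property. The hypothesis that $u,v$ have degree at least three is crucial here: it prevents degenerate situations where one side of $w$ would contain too few edges to supply all three outgoing colors that $u$ or $v$ requires, ensuring that a valid rule exists in every configuration.

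Once $S$ is produced, I would verify that it is crossing. Let $C_0,C_1,C_2$ be three monochromatic cycles of $S'$ of colors $0,1,2$ that pairwise cross (they exist since $S'$ is crossing). I would show that each $C_i$ gives rise to an $i$-cycle $C'_i$ in $S$: outside the decontraction region the edges are preserved, and inside I would design the local rule so that a monochromatic path of color $i$ through the decontracted region exists whenever one existed in $S'$ through $w$. The crossings of the $C_i$ occurring away from $w$ are then automatically preserved in $S$, and the local rule can be chosen so that any crossing at $w$ in $S'$ is transferred to a crossing at $u$, $v$, or along $e$ in $S$. By Theorem~\ref{lem:type-cross}, the mere existence of two crossing monochromatic cycles of different colors forces $S$ to be half-crossing, and the presence of three pairwise crossing such cycles, combined with Lemma~\ref{lem:halfcyclehomologous} which forces all cycles of a given color to be homologous, is enough to conclude that every monochromatic cycle of $S$ crosses at least one cycle of each other color; i.e.\ $S$ is crossing.

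The main obstacle lies in the breadth of the case analysis: one must enumerate all the relevant local patterns of $S'$ around $w$ and provide a rule for each. A particular subtlety is ensuring that the rule does not accidentally create a face of $G$ whose boundary becomes a monochromatic cycle (forbidden in the definition of Schnyder wood); since the only faces of $G$ near $e$ after decontraction are the two faces $F_1,F_2$ incident to $e$, and a monochromatic cycle bounding such a face would be contractible and so forbidden by Lemma~\ref{lem:nodirectedcycle}, this will constrain the admissible rules. In practice I expect the proof to take the form of a carefully tabulated list of rules, extending Figure~\ref{fig:decontracttri} beyond triangulations, each checked for both the Schnyder property and the preservation of the three pairwise crossing monochromatic cycles.
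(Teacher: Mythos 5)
Your outline matches the paper's overall shape --- a case analysis on the local configurations of Figures~\ref{fig:contraction} and~\ref{fig:contraction-loop}, followed by a verification that the crossing property survives, using the machinery of Theorem~\ref{lem:type-cross} and Lemma~\ref{lem:halfcyclehomologous} to reduce crossing to the existence of a few crossing cycles. The paper in fact formalizes exactly this reduction as Lemma~\ref{lem:tri-relax}, replacing ``crossing'' by the three local-to-check conditions (T1), (T2), (T3), and your final paragraph essentially re-derives the sufficiency direction of that lemma. So far so good.

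The gap is in how you propose to establish that some crossing cycles survive. You claim you can ``design the local rule so that a monochromatic path of color $i$ through the decontracted region exists whenever one existed in $S'$ through $w$'' and thereby ``transfer'' any crossing at $w$ to $u$, $v$ or $e$. This is the optimistic step that does not hold: the local rule is not free to design --- it is pinned down (up to a handful of choices) by the Schnyder property at $u$ and $v$, and in the hard subcases none of the admissible rules allows a single fixed triple $C_0,C_1,C_2$ of $S'$ to be pushed through $w$ while keeping all three crossings local. The paper's proof handles exactly these subcases (e.g.\ in case a.1, when for two of the three colors every monochromatic cycle of $S'$ passes through $w$) by an argument of a fundamentally different nature. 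It does not extend a chosen cycle of $S'$; instead it picks a \emph{fresh} $i$-cycle $C_i$ of $S$, observes that $C_i$ must contain $u$ or $v$ (since otherwise it would be a safe cycle of $S'$ of a color for which none exists), and then shows $C_i$ intersects the relevant cycle of another color through a non-local dichotomy: either a certain monochromatic path $Q'$ of $S'$ from a specific vertex to $w$ can avoid the target cycle, which leads to a contradiction via Lemmas~\ref{lem:nodirectedcycle} and~\ref{lem:intersect2}, or no such path exists, which forces every $i$-cycle through $u$ or $v$ to cross the target. This ``does such an avoiding path exist?'' analysis, followed by the choice of the local coloring \emph{depending on the answer}, is the engine of the proof; without it the case analysis does not close.

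Two smaller points. First, condition (T3) --- no two reversal monochromatic cycles --- is not the same as ``no face bounded by a monochromatic cycle''; the paper checks (T3) by noting that any pair of reversal cycles in $S$ would have to use a bidirected edge that is newly created during decontraction, and it rules this out case by case. Second, you do not need (T3) automatically from having a triple of pairwise crossing cycles in the way you suggest, but you also cannot obtain such a triple directly, which is why the paper works with the weaker pair of conditions (T2) and (T3) rather than a single triple.
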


The proof of Lemma~\ref{lem:contractype1} is long and technical. There
is a huge case analysis for the following reasons. One has to consider
the different kind of contraction depicted on
Figures~\ref{fig:contraction} and~\ref{fig:contraction-loop}. For each
of these cases, one as to consider the different ways that the edges
of $G/ e$ that are labeled on the figures can be oriented and colored
in a crossing Schnyder wood of $G/e$ see Figures~\ref{fig:case-b}
and~\ref{fig:case-e}.  For each of these cases, one as to show that
one can orient and color the edges of $G$ that are labeled on the
figures to extend the crossing Schnyder wood of $G$ to a crossing
Schnyder wood of $G/e$ (non-labeled edges keep the orientation and
coloring they have in $G/e$). This would be quite easy if one just
have to satisfy the Schnyder property like in
Section~\ref{sec:contractionproof}, but satisfying the crossing
property is much more complicated.

\begin{figure}[!h]
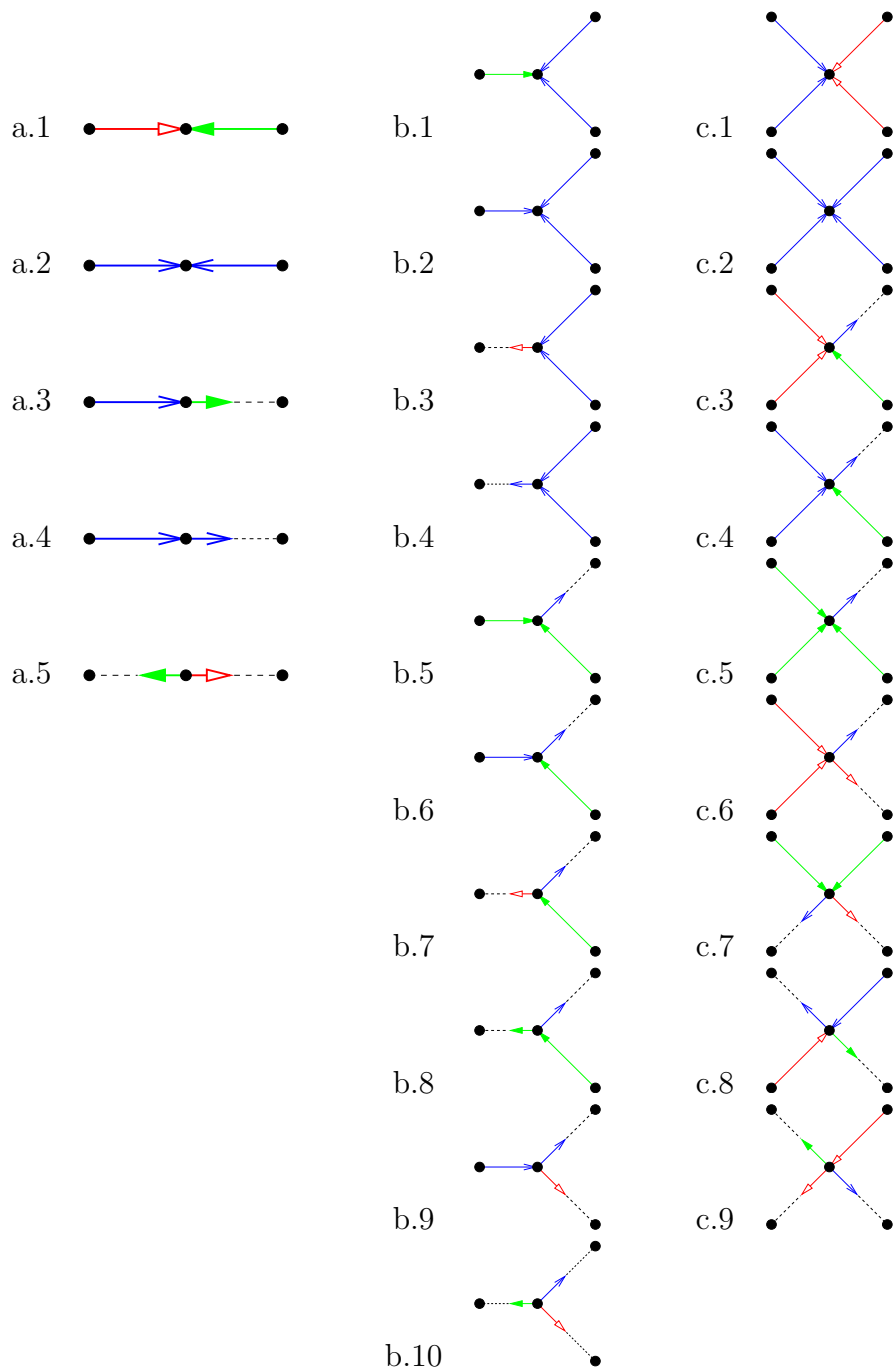

\center
  \begin{tabular}[]{cccccccc}
    a.1& \includegraphics[scale=0.4]{1-1-0-} & \hspace{1em}
    & b.1 & \includegraphics[scale=0.2]{2-1-0} & \hspace{1em}
    & c.1 & \includegraphics[scale=0.2]{3-1-0} \\
    a.2 & \includegraphics[scale=0.4]{1-4-0-} & & b.2
          &\includegraphics[scale=0.2]{2-3-0-} & & c.2
          & \includegraphics[scale=0.2]{3-9-0-} \\
    a.3 &\includegraphics[scale=0.4]{1-5-0-} & & b.3
          & \includegraphics[scale=0.2]{2-8-0} & & c.3
          & \includegraphics[scale=0.2]{3-6-0} \\
    a.4 & \includegraphics[scale=0.4]{1-2-0-} & & b.4
          & \includegraphics[scale=0.2]{2-7-0} & & c.4
          & \includegraphics[scale=0.2]{3-2-0}\\
    a.5 & \includegraphics[scale=0.4]{1-3-0-} &  & b.5
          & \includegraphics[scale=0.2]{2-4-0} & & c.5
          & \includegraphics[scale=0.2]{3-5-0}\\
       & & &  b.6  & \includegraphics[scale=0.2]{2-2-0} & &
                                                                    c.6 & \includegraphics[scale=0.2]{3-4-0}\\
       & & & b.7  &\includegraphics[scale=0.2]{2-10-0} & & c.7
          & \includegraphics[scale=0.2]{3-7-0}\\
       & & & b.8  &\includegraphics[scale=0.2]{2-9-0} & & c.8
          & \includegraphics[scale=0.2]{3-8-0}\\
       & & & b.9 &\includegraphics[scale=0.2]{2-5-0} & & c.9
          & \includegraphics[scale=0.2]{3-3-0} \\
       & & & b.10  &\includegraphics[scale=0.2]{2-6-0} \\
  \end{tabular}
\caption{Orientation and coloring possibilities  for cases (a) (b) and (c).}
\label{fig:case-b}
\end{figure}

\begin{figure}[!h]
\center
\includegraphics[scale=0.4]{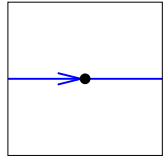} 
  \caption{Orientation and coloring for cases (d) (e)
    and (f).}
\label{fig:case-e}
\end{figure}

We first need to relax the
definition of crossing Schnyder wood:

\begin{lemma}
\label{lem:tri-relax}
Let $G$ be a toroidal map given with an orientation and coloring of
the edges of $G$ with the colors $0$, $1$, $2$, where every edge $e$
is oriented in one direction or in two opposite directions (each
direction having a distinct color and being outgoing).  The
orientation and coloring is a crossing Schnyder wood if and only if it
satisfies the following:

\begin{itemize}
\item[(T1)] Every vertex $v$ satisfies the Schnyder property.

\item[(T2)] For each pair $i,j$ of different colors, there exists a
  $i$-cycle intersecting a $j$-cycle.

\item[(T3)] There is no monochromatic cycles $C_i,C_j$ of different
  colors $i,j$ such that $C_i=C_j^{-1}$.
\end{itemize}

\end{lemma}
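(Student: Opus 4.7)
The plan is to prove both implications of the equivalence. The forward direction is a direct consequence of Theorem~\ref{lem:type-cross}, while the main obstacle lies in the converse: deriving the missing ingredient of the Schnyder wood definition --- that no face is bounded by a monochromatic cycle --- from the hypotheses (T1)--(T3).

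For the forward direction, I would observe that a crossing Schnyder wood satisfies (T1) by definition, and Theorem~\ref{lem:type-cross} asserts that every pair of monochromatic cycles of different colors is crossing, which immediately yields (T2) (they intersect) and (T3) (they are not reversal).

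For the converse, assume (T1), (T2), (T3). I would first rule out any face $f$ bounded by a monochromatic cycle $C$ of color $i$. At every vertex $v$ of $C$, the outgoing color~$i$ edge $e_i(v)$ lies on $C$ and, by the Schnyder property, the incoming $C$-edge at $v$ sits in the CCW sector from $e_{i+1}(v)$ to $e_{i-1}(v)$. Supposing $f$ lies on the left of $C$, the face-angle of $f$ at $v$ is the CCW sector from $e_i(v)$ to the incoming $C$-edge, and (T1) forces $e_{i+1}(v)$ into this sector. Either $e_{i+1}(v)$ lies strictly inside the face-angle (an immediate contradiction, since $f$ is a face), or $e_{i+1}(v)$ coincides with the incoming $C$-edge as an underlying bi-directed edge (colored $i$ toward $v$ and $i+1$ away from $v$). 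If this second scenario holds at every vertex of $C$, then every edge of $C$ is bi-directed in this way, so $C$ traversed backwards is an $(i+1)$-cycle, and the pair $C,\,C^{-1}$ in colors $(i, i+1)$ contradicts (T3). The case where $f$ lies on the right of $C$ is symmetric and contradicts (T3) via a reversal pair in colors $(i, i-1)$.

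With no monochromatic face cycle, the orientation and coloring is a genuine Schnyder wood, so the results of Section~\ref{sec:monochrocycles} apply. By Lemma~\ref{lem:allhomotopic}, for each color all $i$-cycles are non-contractible and pairwise weakly homologous. For distinct colors $i\neq j$, (T2) provides an intersecting pair of an $i$-cycle and a $j$-cycle; by Lemma~\ref{lem:tworeversal} this pair would be reversal if weakly homologous, contradicting (T3), so the pair is crossing and not weakly homologous. Since all $i$-cycles (resp.\ $j$-cycles) are weakly homologous among themselves, no $i$-cycle is weakly homologous to any $j$-cycle. Lemma~\ref{lem:intersect2} then forces every pair of monochromatic cycles of distinct colors to intersect, and (T3) upgrades each such intersection to a crossing, yielding the full crossing property.
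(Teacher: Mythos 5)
Your proof is correct and follows essentially the same strategy as the paper's: deduce from (T1) and (T3) that no face is bounded by a monochromatic cycle (since such a cycle would have to consist entirely of bi-directed edges, making its reversal a monochromatic cycle of another color), then combine Lemma~\ref{lem:allhomotopic}, Lemma~\ref{lem:tworeversal}, and Lemma~\ref{lem:intersect2} with (T2) and (T3) to establish the crossing property. Two minor points of difference: you give a fully worked local analysis at each vertex of the putative face cycle, where the paper compresses this to a single sentence; and for the crossing part you argue directly that the weak-homology classes of $i$-cycles and $j$-cycles are separated for every pair $i\neq j$ (so all cross-color pairs intersect and hence cross), whereas the paper invokes Theorem~\ref{lem:type-cross} to identify the $i$- or $j$-crossing color. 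Your route is slightly more self-contained since it avoids leaning on the case analysis inside Theorem~\ref{lem:type-cross}, but both reach the same conclusion with the same basic tools.
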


\begin{proof} 
  ($\Longrightarrow$) If we have a crossing Schnyder wood, then
  Property (T1) is clearly satisfied by definition.  Property (T1)
  implies that there exists a monochromatic cycles of each color, thus
  Property (T2) is a consequence of the crossing property. Property
  (T3) is implied by Theorem~\ref{lem:type-cross}.

  ($\Longleftarrow$) Conversely, suppose we have an orientation and
  coloring satisfying (T1), (T2), (T3).  By (T3), there is no face the
  boundary is a monochromatic cycle since, by (T1), such a cycle would
  be colored one way in color $i$ and the other way in color $j$, and
  it contradicts (T3). Thus the orientation and coloring is a Schnyder
  wood. 

  Let us now prove that the Schnyder wood is crossing.  Let $C_i$ be
  any $i$-cycle of color $i$. We have to prove that $C_i$ crosses at
  least one $(i-1)$-cycle and at least one $(i+1)$-cycle.  Let $j$ be
  either $i-1$ or $i+1$.  By (T2), there exists a $i$-cycle $C'_i$
  intersecting a $j$-cycle $C'_j$ of color $j$. The two cycles
  $C'_i, C'_j$ are not reversal by (T3), thus they are crossing. So
  the Schnyder wood is half-crossing and more precisely $i$-crossing
  or $j$-crossing and the two cycles $C_i$ and $C'_j$ are crossing.
\end{proof}

Note that for toroidal triangulations, there is no edges oriented in
two directions in an orientation and coloring of the edges satisfying
(T1). So (T3) is automatically satisfied. Thus in the case of
toroidal triangulations it is sufficient to have properties (T1) and
(T2) to have a crossing Schnyder wood.  This is not true in general as shown
by the example of Figure~\ref{fig:relax} that satisfies (T1) and
(T2) but that is not a crossing Schnyder wood. There is a monochromatic cycle
of color $1$ (blue) that is not intersecting any monochromatic cycle of color
$2$ (green).

\begin{figure}[!h]
\center
\includegraphics[scale=0.5]{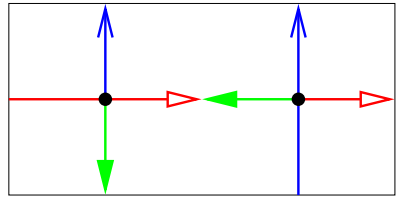}
\caption{An orientation and coloring of the edges of a toroidal map
  satisfying (T1) and (T2) but that is not a crossing Schnyder wood.}
\label{fig:relax}
\end{figure}

By Lemma~\ref{lem:tri-relax}, instead of proving that the crossing
property is preserved during the decontraction process, by considering
intersections between every pair of monochromatic cycles, we prove
(T2) and (T3). Property (T2) is simpler than the definition of
crossing as it is considering just one intersection for each pair of
colors instead of all the intersections. Even with this simplification
proving (T2) is the main difficulty of the proof.  For each considered
cases, one has to analyze the different ways that the monochromatic
cycles go through the contracted vertex or not and to show that there
always exists a coloring and orientation of $G$ where (T2) is
satisfied.  In fact, among all the cases of Figures~\ref{fig:case-b}
and~\ref{fig:case-e}, we detail only one case completely to show how
the proof goes. The numerous other cases can be proved similarly.

\

\emph{Proof of Lemma~\ref{lem:contractype1}.}\
\setcounter{ssclaim}{0}

Let $u,v$ be the two extremities of $e$. Vertices $u$ and $v$ are of
degree at least three.
Depending on whether the faces incident to $e$ have size three or
not, and that some vertices are identified or not, we are, by
symmetry, in one of the  cases (a), (b), (c), (d), (e), (f) of
Figure~\ref{fig:contraction} and~\ref{fig:contraction-loop}.
 Let $G'=G/e$ and consider
  a crossing Schnyder wood  of $G'$. Let $w$ be the vertex of $G'$
  resulting from the contraction of $e$.

  Like in the proof of Theorem~\ref{th:proofbycontractiontri}, for
  each case (a), (b), (c), (d), (e), (f), there are different cases to
  consider corresponding to the different possibilities of
  orientations and colorings of the labeled edges of $G'$ in a
  crossing Schnyder wood of $G'$. These cases are represented (up to
  symmetry) on Figure~\ref{fig:case-b} for cases (a), (b), (c) and
  Figure~\ref{fig:case-e} for cases (d), (e), (f). A dotted half-edge
  represent the possibility for an edge to be uni- or bidirected.

  In each case, one can easily color and orient the edges of $G$ to
  satisfy (T1). Just the edges of $G$ that are labeled on
  Figures~\ref{fig:contraction} and~\ref{fig:contraction-loop} have to
  be specified, all the other edges of $G$ keep the orientation and
  coloring that they have in the crossing Schnyder wood of $G'$.
  There might be several possibilities to color and orient the edges
  in order to satisfy (T1).  These different possibilities can easily
  be found by considering the angle labeling of $G$ and $G'$ around
  vertices $u,v,w$ (as in the proof of
  Theorem~\ref{th:proofbycontractiontri}).

For example, for case a.1, the labeling of the angles in $G'$ are
depicted on the left of Figure~\ref{fig:angle-vierge-2}.  There is
exactly three possible orientations and colorings of the edges of $G$
satisfying this labeling and represented by cases a.1.1, a.1.2, a.1.3 on
Figure~\ref{fig:angle-vierge}.

\begin{figure}[!h]
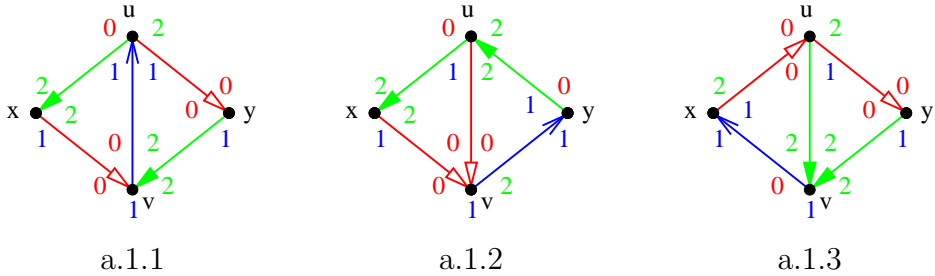

\center 
  \begin{tabular}[]{ccccccccc}
\includegraphics[scale=0.4]{1-1-1-angle} & \hspace{0em} &
\includegraphics[scale=0.4]{1-1-2-angle} & \hspace{0em} &
\includegraphics[scale=0.4]{1-1-3-angle} \\
                                a.1.1 & &
                                a.1.2 & & 
                                a.1.3 \\ \\ 
  \end{tabular}

  \caption{The three possible orientation and coloring of $G$ for case
    a.1.}
\label{fig:angle-vierge}
\end{figure}

One can check that for all cases of Figure~\ref{fig:angle-vierge},
property (T3) is preserved.  Indeed, if (T3) is not satisfied in $G$
after applying one of the possible orientation and coloring satisfying
(T1), then there exists two monochromatic cycles $C,C'$ of different
colors that are reversal.  By Lemma~\ref{lem:tri-relax}, there is no
reversal cycles in the crossing Schnyder wood of $G'$. Thus $C,C'$
have to use a bidirected edge that is newly created. There is no such
edges in the cases of Figure~\ref{fig:angle-vierge}, so (T3) is always
preserved in this case.

For the other cases (a), (b), (c), (d), (e), (f), it is just a little
case analysis to identify among the orientation and coloring
preserving (T1), which one also preserves (T3).  We illustrate this
for the case c.4. Figure~\ref{fig:exampleT3} gives two possible
orientation and coloring of $G$ preserving (T1) for case c.4.  Only
for the second case we are sure to preserve (T3). Indeed, in the first
case the decontraction process might create a green cycle and a blue
cycle that are reversal to each other. This is not possible in the
second case as two reversal cycles will have to use the red-blue or
green-blue edge. If the green-blue edge is used, then the two reversal
cycles are green and blue, but the blue cycle continues with the
red-blue edge, a contradiction. If the red-blue edge is used, then the
two reversal cycles are red and blue and already present in $G'$, a
contradiction.  We do not detail more this part that can easily be
done for all cases.

\begin{figure}[!h]
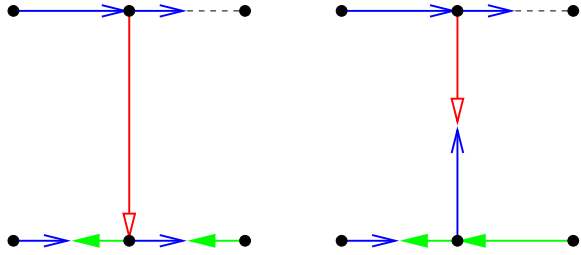

\center
\includegraphics[scale=0.4]{3-2-1-cycle-2} \hspace{2em}
\includegraphics[scale=0.4]{3-2-1-cycle-} 
\caption{Discussion on (T3) preservation during decontraction of case c.4.}
\label{fig:exampleT3}
\end{figure}

Among all the possible orientation and coloring of $G$ satisfying (T1)
and (T3), we claim that there is always at least one that satisfies
(T2) and thus gives a crossing Schnyder wood of $G$.  Identifying
which one satisfies (T2) depends on a huge case analysis concerning
the behavior of the monochromatic cycles in $G'$.  and is the main
difficulty of the proof. We illustrate this part by doing only case
a.1 completely. This case is the most difficult cases for proving that
(T2) can be preserved.  We omit the proof of the numerous other cases
that can be obtained with similar arguments.

The sector $[e_1,e_2]$ of a vertex $w$, for $e_1$ and $e_2$ two edges
incident to $w$, is the counterclockwise sector of $w$ between $e_1$
and $e_2$, including the edges $e_1$ and $e_2$. The sector $]e_1,e_2]$,
$[e_1,e_2[$, and $]e_1,e_2[$ are defined analogously by excluding the
corresponding edges from the sectors.

We assume that we are in case a.1, where $e_{wx}=e_0(x)$ and
$e_{wy}=e_2(y)$.  We say that a monochromatic cycle $C$ of $G'$ is
\emph{safe} if $C$ does not contain $w$. Depending on whether there
are safe monochromatic cycles or not of each colors, it is a different
argument that is used to prove that property (T2) can be preserved by
one of the coloring of Figure~\ref{fig:angle-vierge}. All the cases
are considered below.

\begin{itemize}
\item \emph{There are safe monochromatic cycles of
  colors $\{0,1,2\}$.}
\end{itemize}

Let $C'_0,C'_1,C'_2$ be safe monochromatic cycles of color $0,1,2$ in
$G'$. As the Schnyder wood of $G'$ is crossing, they pairwise
intersects in $G'$.  Apply the coloring a.1.1 on $G$.  As
$C'_0,C'_1,C'_2$ do not contain vertex $w$, they are not modified in
$G$. Thus they still pairwise intersect in $G$. So (T2) is satisfied.

\begin{itemize}
\item \emph{There are safe monochromatic cycles of
  colors exactly $\{0,2\}$.}
\end{itemize}
Let $C'_0,C'_2$ be safe monochromatic cycles of color $0,2$ in
$G'$. Let $C'_1$ be a $1$-cycle in $G'$.  As the Schnyder wood of $G'$
is crossing, $C'_0,C'_1,C'_2$ pairwise intersects in $G'$. None of
those intersections contain $w$ as $C'_0$ and $C'_2$ do not contain
$w$.  By (T1), the cycle $C'_1$ enter $w$ in the sector
$]e_{wx},e_{wy}[$ and leaves in the sector $]e_{wy},e_{wx}[$.  Apply
the coloring a.1.1 on $G$. The cycle $C'_1$ is replaced by a new cycle
$C_1=C'_1\setminus\{w\}\cup\{u,v\}$.  The cycles $C'_0,C'_1,C'_2$ were
intersecting outside $w$ in $G'$ so $C'_0,C_1,C'_2$ are intersecting
in $G$. So (T2) is satisfied.

\begin{itemize}
\item \emph{There are safe monochromatic cycles of
  colors exactly $\{1,2\}$.}
\end{itemize}

Let $C'_1,C'_2$ be safe monochromatic cycles of color $1,2$ in
$G'$. Let $C'_0$ be a $0$-cycle in $G'$.  The cycles $C'_0,C'_1,C'_2$
pairwise intersects outside $w$.  The cycle $C'_0$ enters $w$ in the
sector $[e_1(w),e_{wx}[$, $[e_{wx},e_{wx}]$ or
$]e_{wx},e_2(w)]$. Apply the coloring a.1.2 
on $G$.  Depending on which of the three sectors $C'_0$ enters, it is
is replaced by one of the three following cycle
$C_0=C'_0\setminus\{w\}\cup\{u,v\}$,
$C_0=C'_0\setminus\{w\}\cup\{x,v\}$,
$C_0=C'_0\setminus\{w\}\cup\{v\}$. In any of the three possibilities,
$C_0,C'_1,C'_2$ are intersecting in $G$. So (T2) is satisfied.

\begin{itemize}
\item \emph{There are safe
  monochromatic cycles of colors exactly $\{0,1\}$.}
\end{itemize}

This case is completely symmetric to the previous case where there are safe monochromatic cycles of
  colors exactly $\{1,2\}$.

\begin{itemize}
\item  \emph{There are safe
  monochromatic cycles of color $2$ only.}
\end{itemize}

Let $C'_2$ be a safe $2$-cycle in $G'$.  Let $C'_0,C'_1$ be
monochromatic cycles of color $0,1$ in $G'$.

Suppose that there exists a path $Q'_0$ of color $0$, from $y$ to $w$
such that this path does not intersect $C'_2$. Suppose also that there
exists a path $Q'_1$ of color $1$, from $y$ to $w$ such that this path
does not intersect $C'_2$.  Let $C''_{0}=Q'_{0}\cup \{e_{wy}\}$ and
$C''_{1}=Q'_{1}\cup \{e_{wy}\}$. By Lemma~\ref{lem:nodirectedcycle},
cycles $C''_{0}, C''_1, C'_{2}$ are non-contractible. Both of
$C''_{0}, C''_{1}$ do not intersect $C'_2$ so by
Lemma~\ref{lem:intersect2}, they are both weakly homologous to
$C'_2$. Thus cycles $C''_{0}, C''_{1}$ are weakly homologous to each
other.  The path $Q'_0$ is leaving $C''_1$ at $y$ on the right side of
$C''_1$.  Since $C''_1$ and $C''_0$ are weakly homologous, the path
$Q'_0$ is entering $C''_1$ at least once from its right side. This is
in contradiction with the Schnyder property.  So we can assume that
one of $Q'_0$ or $Q'_1$ as above does not exist.

Suppose that in $G'$, there does not exist a path of color $0$, from
$y$ to $w$ such that this path does not intersect $C'_2$.  Apply the
coloring a.1.1 on $G$. Cycle $C'_1$ is replaced by
$C_1=C'_1\setminus\{w\}\cup\{u,v\}$, and intersect $C'_2$. Let $C_0$
be a $0$-cycle of $G$.  Cycle $C_0$ has to contain $u$ or $v$ or both,
otherwise it is a safe cycle of $G'$ of color $0$. In any case it
intersects $C_1$.  If $C_0$ contains $v$, then
$C'_0=C_0\setminus\{v\}\cup\{w\}$ and so $C_0$ is intersecting $C'_2$
and (T2) is satisfied.  Suppose now that $C_0$ does not contain $v$.
Then $C_0$ contains $u$ and $y$, the extremity of the edge leaving $u$
in color $0$. Let $Q_0$ be the part of $C_0$ consisting of the path
from $y$ to $u$. The path $Q'_0=Q_0\setminus\{u\}\cup\{w\}$ is from
$y$ to $w$. Thus by assumption $Q'_0$ intersects $C'_2$. So $C_0$
intersects $C'_2$ and (T2) is satisfied.

Suppose now that in $G'$, there does not exist a path of color $1$,
from $y$ to $w$ such
that this path does not intersect $C'_2$. Apply the coloring a.1.2 on
$G$.  Depending on which of the three sectors $C'_0$ enters,
$[e_1(w),e_{wx}[$, $[e_{wx},e_{wx}]$ or $]e_{wx},e_2(w)]$, it is is
replaced by one of the three following cycle
$C_0=C'_0\setminus\{w\}\cup\{u,v\}$,
$C_0=C'_0\setminus\{w\}\cup\{x,v\}$,
$C_0=C'_0\setminus\{w\}\cup\{v\}$. In any of the three possibilities,
$C_0$ contains $v$ and intersect $C'_2$.  Let $C_1$ be a $1$-cycle of
$G$.  Cycle $C_1$ has to contain $u$ or $v$ or both, otherwise it is a
safe cycle of $G'$ of color $1$.  Vertex $u$ has no edge entering it
in color $1$ so $C_1$ does not contain $u$ and thus it contains $v$
and intersects $C_0$.  Then $C_1$ contains $y$, the extremity of the
edge leaving $v$ in color $1$. Let $Q_1$ be the part of $C_1$
consisting of the path from $y$ to $v$. The path
$Q'_1=Q_1\setminus\{v\}\cup\{w\}$ is from $y$ to $w$. Thus by assumption $Q'_1$ intersects
$C'_2$. So $C_1$ intersects $C'_2$ and (T2) is satisfied.

\begin{itemize}
\item \emph{There are safe
  monochromatic cycles of color $0$ only.}
\end{itemize}

This case is completely symmetric to the case $\{2\}$.

\begin{itemize}
\item 
\emph{There are safe
  monochromatic cycles of color  $1$ only.}
\end{itemize}

Let $C'_1$ be a safe $1$-cycle in $G'$.  Let $C'_0$ and $C'_2$ be
monochromatic cycles of color $0$ and $2$ in $G'$.

Suppose $C'_0$ is entering $w$ in the sector $]e_{wx},e_2(w)]$.  Apply
the coloring a.1.3 on $G$.  The $0$-cycle $C'_0$ is replaced by
$C_0=C'_0\setminus\{w\}\cup\{v\}$ and thus contains $v$ and still
intersect $C'_1$.  Depending on which of the three sectors $C'_2$
enters, $[e_0(w),e_{wy}[$, $[e_{wy},e_{wy}]$ or $]e_{wy},e_1(w)]$, it
is replaced by one of the three following cycles
$C_2=C'_2\setminus\{w\}\cup\{v\}$,
$C_2=C'_2\setminus\{w\}\cup\{y,v\}$,
$C_2=C'_2\setminus\{w\}\cup\{u,v\}$.  In any case, $C_2$ contains $v$
and still intersect $C'_1$. Cycle $C_0$ and $C_2$ intersect on $v$.
So (T2) is satisfied.

The case where $C'_2$ is entering $w$ in the sector $[e_0(w),e_{wy}[$
is completely symmetric and we apply the coloring a.1.2 on $G$.

It remains to deal with the case where $C'_0$ is entering $w$ in the
sector $[e_1(w),e_{wx}]$ and $C'_2$ is entering $w$ in the sector
$[e_{wy},e_1(w)]$.  Suppose that there exists a path $Q'_0$ of color
$0$, from $y$ to $w$, entering $w$ in the sector $[e_1(w),e_{wx}]$,
such that this path does not intersect $C'_1$. Suppose also that there
exists a path $Q'_2$ of color $2$, from $x$ to $w$, entering $w$ in
the sector $[e_{wy},e_1(w)]$, such that this path does not intersect
$C'_1$.  Let $C''_{0}=Q'_{0}\cup \{e_{wy}\}$ and
$C''_{2}=Q'_{2}\cup \{e_{wx}\}$. By Lemma~\ref{lem:nodirectedcycle},
cycles $C''_{0}, C'_1, C''_{2}$ are non-contractible. Cycles
$C''_{0}, C''_{2}$ do not intersect $C'_1$ so by
Lemma~\ref{lem:intersect2}, they are weakly homologous to $C'_1$. Thus
cycles $C''_{0}, C''_{2}$ are weakly homologous to each other.  By
(T3), we have $C''_0\neq (C''_2)^{-1}$. So cycle $C''_2$ is leaving
$C''_0$. By (T1) and the assumption on the sectors, $C''_2$ is leaving
$C''_0$ on its right side. Since $C''_0$ and $C''_2$ are weakly
homologous, the path $Q'_2$ is entering $C''_0$ at least once from its
right side. This is in contradiction with the Schnyder property.  So
we can assume that one of $Q'_0$ or $Q'_2$ as above does not exist.

By symmetry, suppose that in $G'$, there does not exist a path of
color $0$, from $y$ to $w$, entering $w$ in the sector
$[e_1(w),e_{wx}]$, such that this path does not intersect $C'_1$.
Apply the coloring a.1.3 on $G$.  Depending on which of the two
sectors $C'_2$ enters, $[e_{wy},e_{wy}]$ or $]e_{wy},e_1(w)]$, it is
is replaced by one of the two following cycle
$C_2=C'_2\setminus\{w\}\cup\{y,v\}$,
$C_2=C'_2\setminus\{w\}\cup\{u,v\}$.  In any case, $C_2$ still
intersects $C'_1$. Let $C_0$ be a $0$-cycle of $G$.  Cycle $C_0$ has
to contain $u$ or $v$ or both, otherwise it is a safe cycle of $G'$ of
color $0$.  Suppose $C_0$ does not contain $u$, then
$C'_0=C_0\setminus\{v\}\cup\{w\}$ and $C'_0$ is not entering $w$ in
the sector $[e_1(w),e_{wx}]$, a contradiction.  So $C_0$ contains
$u$. Thus $C_0$ contains $y$, the extremity of the edge leaving $u$ in
color $0$, and it intersects $C_2$.  Let $Q_0$ be the part of $C_0$
consisting of the path from $y$ to $u$. The path
$Q'_0=Q_0\setminus\{u\}\cup\{w\}$ is from $y$ to $w$ and entering $w$
in the sector $[e_1(w),e_{wx}]$. Thus by assumption $Q'_0$ intersects
$C'_1$. So $C_0$ intersects $C'_1$ and (T2) is satisfied.

\begin{itemize}
\item \emph{There are no safe
  monochromatic cycle.}
\end{itemize}

Let $C'_0,C'_1,C'_2$ be monochromatic cycles of color $0,1,2$ in
$G'$. They all pairwise intersect on $w$.  

Suppose first that $C'_0$ is entering $w$ in the sector
$]e_{wx},e_2(w)]$.  Apply the coloring a.1.3 on $G$.  The $0$-cycle
$C'_0$ is replaced by $C_0=C'_0\setminus\{w\}\cup\{v\}$ and thus
contains $v$.  Depending on which of the three sectors $C'_2$ enters,
$[e_0(w),e_{wy}[$, $[e_{wy},e_{wy}]$ or $]e_{wy},e_1(w)]$, it is is
replaced by one of the three following cycle
$C_2=C'_2\setminus\{w\}\cup\{v\}$,
$C_2=C'_2\setminus\{w\}\cup\{y,v\}$,
$C_2=C'_2\setminus\{w\}\cup\{u,v\}$. In any case, $C_2$ contains $v$.
Let $C_1$ be a $1$-cycle in $G$. Cycle $C_1$ has to contain $u$ or $v$
or both, otherwise it is a safe cycle of $G'$ of color $1$. Vertex $u$
has no edge entering it in color $1$ so $C_1$ does not contain $u$ and
thus it contains $v$. So $C_0,C_1,C_2$ all intersect on $v$ and (T2)
is satisfied.

The case where $C'_2$ is entering $w$ in the sector $[e_0(w),e_{wy}[$
is completely symmetric and we apply the coloring a.1.2 on $G$.

It remains to deal with the case where $C'_0$ is entering $w$ in the
sector $[e_1(w),e_{wx}]$ and $C'_2$ is entering $w$ in the sector
$[e_{wy},e_1(w)]$. Apply the coloring a.1.1 on $G$.  Cycle $C'_1$ is
replaced by $C_1=C'_1\setminus\{w\}\cup\{u,v\}$.  Let $C_0$ be a
$0$-cycle in $G$.  Cycle $C_0$ has to contain $u$ or $v$ or both,
otherwise it is a safe cycle of $G'$ of color $0$. Suppose
$C_0\cap\{v,x\}=\{v\}$, then $C_0\setminus\{v\}\cup\{w\}$ is a
$0$-cycle of $G'$ entering $w$ in the sector $]e_{wx},e_2(w)]$,
contradicting the assumption on $C'_0$. Suppose $C_0$ contains $u$,
then $C_0$ contains $y$, the extremity of the edge leaving $u$ in
color $0$.  So $C_0$ contains $\{v,x\}$ or $\{u,y\}$.  Similarly $C_2$
contains $\{v,y\}$ or $\{u,x\}$.  In any case $C_0,C_1,C_2$ pairwise
intersect. So (T2) is satisfied.

\hfill $\Box$\vspace{1em}

\section{Proof of existence}
\label{sec:existencemainproof}
We are now able to prove Theorem~\ref{th:existencebasic}:

\

\noindent \emph{Proof of Theorem~\ref{th:existencebasic}.}\ 

The second
part of the theorem is clear by Lemma~\ref{lem:basic}.

($\Longrightarrow$) If $G$ is a toroidal map given with a crossing
Schnyder wood. Then, by Lemma~\ref{lem:conjessentially}, $G$ is
essentially 3-connected and by Lemma~\ref{lem:basic}, $G$ is not
basic.

($\Longleftarrow$) Let $G$ be a non-basic essentially 3-connected
toroidal map.  By Lemma~\ref{lem:contractionbrique}, $G$ can be
contracted to the 3-loops or to the brick by keeping the map
essentially 3-connected (so all the vertices have degree at least
$3$).  Both of these maps admit crossing Schnyder woods (see
Figure~\ref{fig:essential}.(a) and~\ref{fig:brique}). So by
Lemma~\ref{lem:contractype1} applied successively, $G$ admits a
crossing Schnyder wood.  \hfill $\square$\vspace{1em}

Here is a remark about how to compute a crossing Schnyder wood for an
essentially 3-connected toroidal triangulation. Instead of completing
the technical proof of Lemma~\ref{lem:contractype1} to know which
coloring of the decontracted map has to be chosen among the possible
choice to preserve the crossing property, one can proceed as
follows. At each decontraction step, one can try all the possible
decontraction that preserves the Schnyder property and check afterward
which one gives a crossing Schnyder wood.  This method gives a
polynomial algorithm that is not linear since at each decontraction
step there is already a test of the crossing property that takes
linear time.

\part{Applications in the toroidal case}
\label{part:5}

\chapter{Optimal encoding and bijection}
\label{chap:encoding}

\section{\npss method}
\label{sec:introduction}

\ps introduced in~\cite{PS06} an elegant method (called here PS method
for short) to linearly encode a planar triangulation $G$ with a binary
word of length
$\log_2\binom{4n}{n}\sim n\, \log_2(\frac{256}{27})\approx 3.2451\,n$
bits. This is asymptotically optimal since it matches the information
theory lower bound. 
The method is the following. Consider the set of
orientations of $G$ where inner vertices have outdegree $3$ and outer
vertices have outdegree $1$. These set carries a structure of
distributive lattice and one can consider its minimal element
w.r.t.~the outer face.  Then, starting from the outer face, a special
depth-first search algorithm is applied by ``following'' ingoing edges
and ``cutting'' outgoing ones.
The algorithm outputs a rooted
spanning tree with exactly two leaves (also called stems) on each
inner vertex from which the original  triangulation can be recovered in
a straightforward way. This tree can be encoded very efficiently.  A
nice aspect of this work, besides its interesting encoding properties,
is that the method gives a bijection between planar triangulations and
a particular type of plane trees with some consequences for counting
and sampling planar triangulation.

Castelli Aleardi, Fusy and Lewiner~\cite{CFL10} adapt \pss method to
encode planar triangulations with boundaries. A consequence is that a
triangulation of any orientable surface can be encoded by cutting the
surface along non-contractible cycles and see the surface as a planar
map with boundaries.  The obtained algorithm is asymptotically optimal
(in terms of number of bits) but it is not linear, nor bijective. 

We propose an alternative generalization of \pss algorithm to higher
genus  whose execution on a minimal balanced toroidal Schnyder wood
of a toroidal triangulation leads to a rooted unicellular map (which
corresponds to the natural generalization of trees when going to
higher genus). This map can be encoded optimally using $3.2451\,n$
bits. The algorithm can be performed in linear time and leads to a new
bijection between toroidal triangulations and a particular type of
unicellular maps.

The two main ingredients that make \pss algorithm work in an
orientation of a planar map are minimality and accessibility of the
orientation. \emph{Minimality} means  that there is no \cw cycle
and can be translated with the terminology of
section~\ref{sec:lattice} by saying that it is minimal w.r.t.~the
outer face.  \emph{Accessibility} means that there exists a root
vertex such that all the vertices have an oriented path directed
toward the root vertex.  Felsner's result~\cite{Fel04} concerning
existence and uniqueness of a minimal $\alpha$-orientation as soon as
an $\alpha$-orientation exists, enables several analogues of \pss
method to different kinds of planar maps, see~\cite{AP13,Ber07,DPS13}. In
all these cases the accessibility of the considered
$\alpha$-orientations is a consequence of the natural choice of
$\alpha$, like in \npss original work~\cite{PS06} where any
$3$-orientation of the inner edges of a planar triangulation is
accessible for any choice of root vertex on the outer face. (Note that
the conventions may differs in the literature: the role of outgoing
and incoming edges are sometimes exchanged and/or the role of \cw and
\ccww.)

For higher genus, the minimality can be obtained by results of
Section~\ref{sec:lattice} showing that on any orientable surface the
set of orientations of a given map having the same homology carries a
structure of distributive lattice.  But a given map on an orientable
surface can have several $\alpha$-orientations (for the same given
$\alpha$) that are not homologous (see for instance the example of
Chapter~\ref{sec:example}). So the set of $\alpha$-orientations of a
given map is now partitioned into distributive lattices contrarily to
the planar case where there is only one lattice (and thus only one
minimal element). For toroidal triangulations we manage to face this
problem and maintain a bijection by using the balanced property of
Chapter~\ref{sec:balanced} that defines a canonical lattice.

The main issue while trying to extend \pss algorithm to higher genus
is the accessibility.  Accessibility toward the outer face is given
almost for free in the planar case because of Euler's formula that
sums to a strictly positive value. For an orientable surface of genus
$g\geq 1$ new difficulties occur. Already in genus $1$ (the torus),
even if the orientation is minimal and accessible \pss algorithm can
visit all the vertices but not all the angles of the map because of
the existence of non-contractible cycles.
Lemma~\ref{lem:incomingedges} can be used to show that this problem
never occurs  for minimal balanced Schnyder woods. In genus
$g\geq 2$ things get even more difficult with separating
non-contractible cycles that make having accessibility of the vertices
difficult to obtain.

Another problem is to recover the original map after the execution of
the algorithm. If what remains after the execution of PS method is a
spanning unicellular map then the map can be recovered with the same
simple rules as in the plane. Unfortunately for many minimal
orientations the algorithm leads to a spanning unicellular embedded
graph that is not a map (the only face is not a disk) and it is not
possible to directly recover the original map.  Here again, the choice
of the minimal balanced Schnyder wood ensures that this never happens.

\section{General properties of {\sc Algorithm PS}}
\label{sec:algo}
 
We introduce the following reformulation of \npss original
algorithm. This version is more general in order to be applicable to
 any orientation of any map on an orientable surface.

\

\noindent {\bf \sc Algorithm PS } 

{\sc Input}: An oriented map $G$ on an oriented surface $S$, a root vertex $v_0$ and a root edge
$e_0$ incident to $v_0$.

{\sc Output}: A graph $U$ with stems, embedded on the oriented surface $S$.
\begin{enumerate}
\item Let $v:=v_0$, $e:=e_0$, $U:=\emptyset$.
  \item 
Let $v'$ be the extremity of $e$ different from $v$.
\begin{itemize}
\item[\underline{Case 1}:] \emph{$e$ is non-marked and entering $v$.}
Add $e$ to $U$ and let $v:=v'$.
\item[\underline{Case 2}:] \emph{$e$ is non-marked and leaving $v$.}  Add a stem
  to $U$ incident to $v$ and corresponding to $e$.
\item[\underline{Case 3}:] \emph{$e$ is already marked and entering $v$.}
Do nothing.
\item[\underline{Case 4}:] \emph{$e$ is already marked and leaving $v$.}
Let $v:=v'$.
\end{itemize}
\item 
Mark $e$. 
\item Let  $e$ be the  next edge around $v$ in \ccw order after the current $e$.
\item While $(v,e)\neq (v_0,e_0)$ go back to 2.
\item Return $U$.
\end{enumerate}

 We
show general properties of {\sc Algorithm PS} in this section before
considering toroidal triangulations in the forthcoming sections.

We insist on the fact that the output of \aps is a graph embedded on
the same surface as the input map but that this embedded graph is not
necessarily a map (i.e some faces may not be homeomorphic to open
disks). In the following section we show that in our specific case the
output $U$ is an unicellular map.

Consider any oriented map $G$ on an orientable surface given with a root
vertex $v_0$ and a root edge $e_0$ incident to $v_0$. When \aps is
considering a couple $(v,e)$ we see this like it is considering the
angle at $v$ that is just before $e$ in counterclockwise order. The
particular choice of $v_0$ and $e_0$ is thus in fact a particular
choice of a root angle $a_0$ that automatically defines a root vertex
$v_0$, a root edge $e_0$, as well as a root face $f_0$. From now on we
 consider that the input of \aps is an oriented map plus a root
angle (without specifying the root vertex, face and edge).

The \emph{angle graph} of $G$, is the graph defined on the angles of
$G$ and where two angles are adjacent if and only if they are
consecutive around a vertex or around a face.  An execution of \aps
can be seen as a walk in the angle graph. Figure~\ref{fig:anglerule}
illustrates the behavior of the algorithm corresponding to Case~1
to~4. In each case, the algorithm is considering the angle in top left
position and depending on the marking of the edge and its orientation
the next angle that is considered is the one that is the end of the
magenta arc of the angle graph. The cyan edge of Case 1 represents the
edge that is added to $U$ by the algorithm. The stems of $U$ added in
Case~2 are not represented in cyan, in fact we will represent them
later by an edge in the dual. Indeed seeing the execution of \aps as a
walk in the angle graph enables us to show that \aps behaves exactly
the same in the primal or in the dual map (as explained later).

\begin{figure}[h!]
\center
\begin{tabular}[c]{ll}
\includegraphics[scale=0.5]{angle-rule-algo-1} \ \ \ \ &
\includegraphics[scale=0.5]{angle-rule-algo-2} \\
\hspace{1em} Case 1 & \hspace{1em} Case 2 \\
\ &\  \\
\includegraphics[scale=0.5]{angle-rule-algo-3} \ \ \ \ &
\includegraphics[scale=0.5]{angle-rule-algo-4} \\
\hspace{1em} Case 3 & \hspace{1em} Case 4
\end{tabular}
\caption{The four cases of \aps.}
\label{fig:anglerule}
\end{figure}

The crossing Schnyder wood of Figure~\ref{fig:tore-primal-min} is the
minimal balanced Schnyder wood of $K_7$  for the choice of $f_0$ corresponding to the
shaded face. This example is used in all this chapter to illustrate 
\npss method.

\begin{figure}[h!]
\center
\includegraphics[scale=0.5]{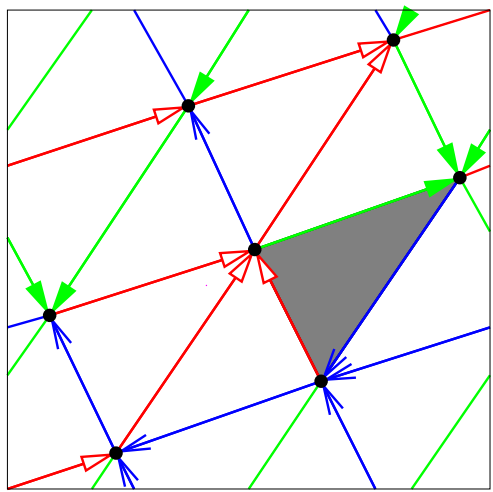}
\caption{The minimal balanced Schnyder wood of $K_7$ w.r.t.~the shaded
  face.}
\label{fig:tore-primal-min}
\end{figure}

On Figure~\ref{fig:tore-example}, we give an example of an execution
of \aps on the orientation of Figure~\ref{fig:tore-primal-min}.

\begin{figure}[h!]
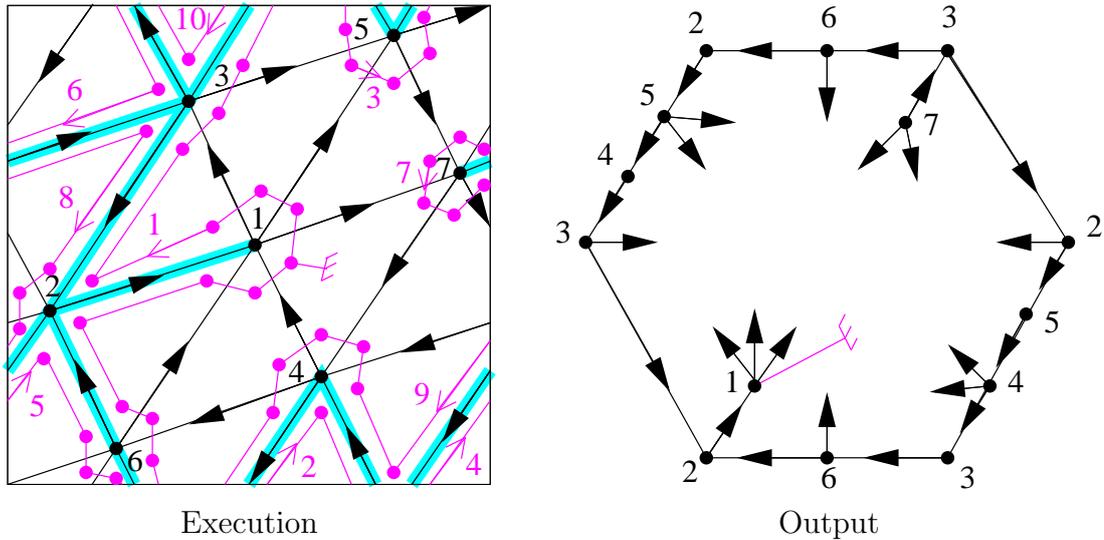

\center
\begin{tabular}{cc}
\includegraphics[scale=0.5]{tore-tri-exe-2--} \ & \
\includegraphics[scale=0.5]{tore-tri-exe-4---}\\
Execution \ &\ Output \\
\end{tabular}
\caption{An execution of \aps on $K_7$ given with the orientation
  corresponding to the minimal balanced Schnyder wood of
  Figure~\ref{fig:tore-primal-min}. Vertices are numbered in
  black. The root angle is identified by a root symbol and chosen in
  the face for which the orientation is minimal (i.e. the shaded face
  of Figure~\ref{fig:tore-primal-min}). The magenta arrows and numbers
  are here to help the reader to follow the cycle in the angle graph.
  The output $U$ is a toroidal unicellular map, represented here as an
  hexagon where the opposite sides are identified.}
\label{fig:tore-example}
\end{figure}

Let $a$ be a particular angle of the map $G$. It is adjacent to four other angles
in the \emph{angle graph} (see Figure~\ref{fig:angledirected}). Let
$v,f$ be such that $a$ is an angle of vertex $v$ and face $f$. The
\emph{next-vertex} (resp. \emph{previous-vertex}) angle of $a$ is the
angle appearing just after (resp. before) $a$ in \ccw order around
$v$. Similarly, the \emph{next-face} (resp. \emph{previous-face})
angle of $a$ is the angle appearing just after (resp. before) $a$ in
\cw order around $f$. These definitions enable one to orient
consistently the edges of the angle graph like in
Figure~\ref{fig:angledirected} so that for every oriented edge
$(a,a')$, $a'$ is a next-vertex or next-face angle of $a$.

\begin{figure}[h!]
\center
\includegraphics[scale=0.5]{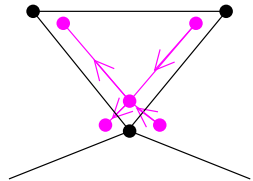} 
\caption{Orientation of the edges of the angle graph.}
\label{fig:angledirected}
\end{figure}

The different cases depicted in Figure~\ref{fig:anglerule} show that
an execution of \aps is just an oriented walk in the angle graph
(i.e. a walk that is following the orientation of the edges described in
Figure~\ref{fig:angledirected}).  The condition in the while loop
ensures that when the algorithm terminates, this walk is back to the
root angle. The following proposition shows that the algorithm actually
terminates:

\begin{proposition}
  \label{prop:terminates}
  Consider an oriented map $G$ on an orientable surface and a root angle
  $a_0$. The execution of \aps on $(G,a_0)$ terminates and corresponds
  to a cycle in the angle graph.
\end{proposition}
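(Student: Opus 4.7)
The plan is to begin by recasting the execution as a deterministic walk in the angle graph. Inspecting the four cases of Step 2 and Figure~\ref{fig:anglerule}, one observes that the transition from $(v,e)$ to its successor always coincides with one of the two out-arcs at the corresponding angle: it is the next-vertex arc in Cases 2 and 3 (where $v$ is unchanged and $e$ is advanced one ccw step), and the next-face arc in Cases 1 and 4 (where we cross $e$ to $v'$ and then advance). Since the set of angles is finite and the transition is deterministic given the current state $(v,e)$ together with the current marking of $e$, the sequence of visited states is a walk in the angle graph. The exit condition on Step 5 then guarantees that whenever the while loop terminates, the walk closes up into a directed cycle through $a_0$; the whole content of the proposition is therefore the termination claim.

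To argue termination I would exploit the following monotonicity: the out-arc chosen at an angle $a$ with associated edge $e$ depends only on the orientation of $e$ (which is fixed throughout the run) and on whether $e$ is marked at the moment of visit. Because Step 3 never unmarks an edge, a given angle can appear in the walk with its edge unmarked at most once and with its edge marked in a uniform way thereafter. In particular, after at most $m$ steps in which Case 1 or Case 2 occurs, every edge ever touched by the walk is marked, and from that point on the transition function depends only on the current angle. The walk therefore eventually enters a cycle $C$ in the angle graph obtained by fixing the case disambiguation to ``marked''.

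The main obstacle, and the step I would focus on, is ruling out a nontrivial pre-period before the walk enters $C$, so that $a_0$ already lies on the cycle and is revisited. The natural tool is a reversibility lemma for the ``marked-everywhere'' transition: at any angle $a'$, the two potential predecessors are its prev-vertex neighbor and its prev-face neighbor, and inspecting which of the corresponding edges is oriented as entering or leaving its endpoint determines uniquely which of Cases 3 or 4 could have produced the current step. This injectivity of the backward step forbids any merge of the pre-period into $C$ and forces the pre-period to be empty, that is, $a_0\in C$. Combined with the monotonicity argument above, this shows that the walk returns to $a_0$ after finitely many steps, so \aps{} halts and its execution is a cycle in the angle graph.
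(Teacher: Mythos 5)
Your plan divides the walk into a transient phase and an eventual cycle under the ``marked-everywhere'' transition, and then tries to rule out a nontrivial pre-period by appealing to backward injectivity of that transition. The injectivity claim is false, and this is a genuine gap.

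The two potential predecessors of an angle $a'$ do not involve two different edges; they involve the \emph{same} edge $e$ of $G$. If $a'$ is at vertex $v$ and $e$ is the edge just before $a'$ in counterclockwise order around $v$, then previous-vertex$(a')$ corresponds to the state $(v,e)$ and previous-face$(a')$ corresponds to the state $(w,e)$ with $w$ the other endpoint of $e$. If $e$ is oriented toward $v$, both states fall into a ``marked'' case sending them to $a'$: at $(v,e)$, $e$ is marked and entering $v$, which is Case~3, a next-vertex step ending at $a'$; at $(w,e)$, $e$ is marked and leaving $w$, which is Case~4, a next-face step also ending at $a'$. So $a'$ has two preimages. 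Conversely, if $e$ is oriented away from $v$, neither $(v,e)$ nor $(w,e)$ maps to $a'$ under the marked transition, so that map is not even surjective. The dichotomy ``entering vs.\ leaving'' you want would require the two predecessors to be governed by distinct edges, which they are not. Note also that even if the injectivity held, it would only concern steps taken under the marked transition, whereas the pre-period you want to exclude consists precisely of steps taken while some relevant edge is still unmarked, so the lemma would not apply to them.

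The paper's argument avoids the static abstraction exactly because it exploits the fact that the marking changes between visits. It lets $a$ be the first angle met twice, with $a_1$ and $a_2$ the angles just before its two occurrences, one being previous-vertex$(a)$ and the other previous-face$(a)$. When the walk first enters $a$ from $a_1=(v,e)$, the edge $e$ is still unmarked (since $a_1,a_2$ are the only two states at which $e$ is considered, and neither has yet been visited); the next-vertex move to $a$ therefore forces Case~2, so $e$ is leaving $v$, and it now becomes marked. When the walk later reaches $a_2=(w,e)$, the edge $e$ is marked and entering $w$, so Case~3 applies and the step is a next-vertex move, which cannot reach $a$ (the next-face angle of $a_2$), a contradiction. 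Simplicity of the walk, and hence termination and the cycle structure, falls out directly, without any detour through eventual periodicity.
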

\begin{proof}
  We consider the oriented walk $W$ in the angle graph corresponding
  to the execution of \aps. Note that $W$ may be infinite. The walk
  $W$ starts with $a_0$, and if it is finite it ends with $a_0$ and
  contains no other occurrence of $a_0$ (otherwise the algorithm
  should have stopped earlier). Toward a contradiction, suppose that
  $W$ is not simple (i.e. some angles different from the root angle
  $a_0$ are repeated). Let $a\neq a_0$ be the first angle along $W$
  that is met for the second time.  Let $a_1,a_2$ be the angles
  appearing before the first and second occurrence of $a$ in $W$,
  respectively. Note that $a_1\neq a_2$ by the choice of $a$.

  If $a_1$ is the previous-vertex angle of $a$, then $a_2$ is the
  previous-face angle of $a$. When the algorithm considers $a_1$, none
  of $a$ and $a_2$ are already visited, thus the edge $e$ between $a$
  and $a_1$ is not marked. Since the execution then goes to $a$ after
  $a_1$, we are in Case 2 and edge $e$ is oriented from $v$, where $v$
  is the vertex incident to $a$. Afterward, when the algorithm reaches
  $a_2$, Case 3 applies and the algorithm cannot go to $a$, a
  contradiction. The case where $a_1$ is the previous-face angle of
  $a$ is similar.

  So $W$ is simple. Since the angle graph is finite, $W$ is finite. So
  the algorithm terminates, thus $W$ ends on the root angle and $W$ is
  a cycle.
\end{proof}

In the next section we see that in some particular cases the
cycle in the angle graph corresponding to the execution of \pss
algorithm (Proposition~\ref{prop:terminates}) can be shown to be
Hamiltonian like on Figure~\ref{fig:tore-example}.

By Proposition~\ref{prop:terminates}, an angle is considered at most
once by \apss. This implies that the angles around an edge can be
visited in different ways depicted on Figure~\ref{fig:anglerulesum}.
Consider an execution of \aps on $G$.  Let $C$ be the cycle formed in
the angle graph by Proposition~\ref{prop:terminates}.  Let $P$ be the
set of edges of the output $U$ (without the stems) and $Q$ be the set
of dual edges of edges of $G$ corresponding to stems of $U$. These
edges are represented on Figure~\ref{fig:anglerulesum} in cyan for $P$
and in yellow for $Q$.  They are considered with the convention
 that the dual edge $e^*$ of an edge $e$ goes from the face on
the left of $e$ to the face on the right of $e$.  Note that $C$ does
not cross an edge of $P$ or $Q$, and moreover $P$ and $Q$ do not
intersect (i.e. an edge can be in $P$ or its dual in $Q$ but both
cases cannot happen).

\begin{figure}[h!]
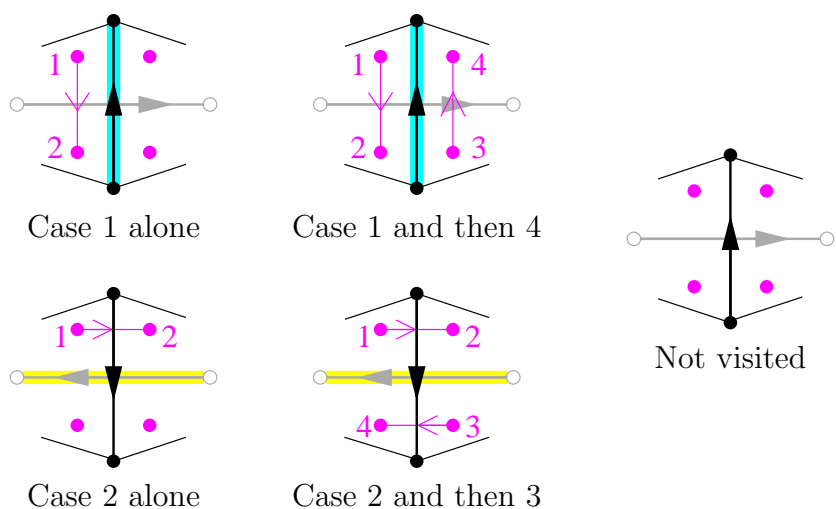

\center
\begin{tabular}[c]{cc}
  \includegraphics[scale=0.5]{angle-rule-algo-1-bis} \ \ & \ \
\includegraphics[scale=0.5]{angle-rule-algo-14bis} \\
 Case 1 alone \ \ & \ \ Case 1 and then 4 \\
\ \\
\includegraphics[scale=0.5]{angle-rule-algo-2-bis} \ \ & \ \
\includegraphics[scale=0.5]{angle-rule-algo-23bis} \\
 Case 2 alone \ \ & \ \ Case 2 and then 3 \\
\end{tabular} \ \ \ \
 \begin{tabular}[c]{cc}
   \includegraphics[scale=0.5]{angle-rule-algo-0} \\
 Not visited
 \end{tabular}
 \caption{The different cases of \aps seen in a dual way. The number
   of the angles gives the order in which the algorithm visits them
   (unvisited angles are not numbered). The edges of $P$ and $Q$
    are respectively cyan and yellow.}
\label{fig:anglerulesum}
\end{figure}

One can remark that the cases of Figure~\ref{fig:anglerulesum} are
dual of each other. One can see that \aps behaves exactly the same if
applied on the primal map or on the dual map. The only modifications
to make is to start the algorithm with the face $f_0$ as the root
vertex, the dual of edge $e_0$ as the root edge and to replace \ccw by
\cw at Line 4. Then the cycle $C$ formed in the angle graph is exactly
the same and the output is $Q$ with stems corresponding to $P$
(instead of $P$ with stems corresponding to $Q$). Note that this
duality is also illustrated by the fact that the minimality of the
orientation of $G$ w.r.t.~the root face is nothing else than the
accessibility of the dual orientation toward the root face. Indeed, a
clockwise $0$-homologous oriented subgraph of $G$ w.r.t $f_0$
corresponds to a directed cut of the dual where all the edges are
oriented from the part containing $f_0$. The following lemma shows the
connectivity of $P$ and $Q$:

\begin{lemma}
\label{lem:PDconnected}
At each step of the algorithm, for every vertex $v$ appearing in an edge
of $P$ (resp. $Q$), there is an oriented path from $v$ to $v_0$ (resp.
$f_0$) consisting only of edges of $P$ (resp. $Q$). In particular $P$
and $Q$ are connected.
\end{lemma}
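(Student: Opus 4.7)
The plan is to prove the statement by induction on the iterations of the main loop of \apss, strengthening the conclusion so that at every step the following four invariants hold simultaneously: (a) every vertex incident to an edge of $P$ admits an oriented $P$-path to $v_0$; (b) every face incident to an edge of $Q$ admits an oriented $Q$-path to $f_0$; (c) the current vertex admits an oriented $P$-path to $v_0$; (d) the current face (the face containing the current angle) admits an oriented $Q$-path to $f_0$. At the initial step $P=Q=\emptyset$, the current vertex is $v_0$ and the current face is $f_0$ (by the choice of the root angle), so all four invariants hold via empty paths.

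For the inductive step, I would first describe what each case does in the angle-graph walk corresponding to the execution (Proposition~\ref{prop:terminates}). A routine local picture check shows that Cases~1 and~4 modify the current vertex but leave the current face unchanged, while Cases~2 and~3 modify the current face but leave the current vertex unchanged. Cases~1 and~2 are the constructive cases. In Case~1, the added edge $e\in P$ goes from the new current vertex $v'$ to the old current vertex $v$; invariant (c) at the previous step gives a $P$-path from $v$ to $v_0$, which prepended with $e$ yields a $P$-path from $v'$ to $v_0$, restoring (a) and (c). Case~2 is dual: using the convention that $e^*$ goes from the left of $e$ to its right, a local inspection of the local picture shows that $e^*$ is directed from the new current face to the old current face, so invariant (d) at the previous step yields (b) and (d) after the step.

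The delicate cases are the passive ones, namely Cases~4 and~3, where no edge is added yet the current vertex (respectively, current face) changes. Here I would use a historical argument relying on the monotonicity of $P$ and $Q$. In Case~4, the edge $e$ is already marked; since each angle is processed at most once by Proposition~\ref{prop:terminates}, the marking cannot have happened at the current angle, so it occurred at the angle of $v'$ just before $e$ in counterclockwise order. As $e$ leaves $v$, it enters $v'$, so Case~1 applied at that earlier step, at which moment $v'$ was the current vertex and therefore admitted a $P$-path to $v_0$ by invariant (c); this path persists since edges are only ever added to $P$. Case~3 is handled symmetrically: $e$ was marked at $(v',e)$ at an earlier step, and a direct inspection of the local picture shows that the face containing the angle at that earlier step is precisely the new current face in the present step, so invariant (d) at the earlier step combined with the monotonicity of $Q$ gives the desired $Q$-path.

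The ``In particular'' statement of the lemma follows immediately from (a) and (b): every vertex of $P$ has an oriented path to $v_0$ inside $P$, and every face of $Q$ has an oriented path to $f_0$ inside $Q$, so both $P$ and $Q$ are connected.
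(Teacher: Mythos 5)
Your proof is correct and follows essentially the same approach as the paper's: an induction over the steps of \aps, keyed on the observation that in Case~1 the newly added $P$-edge is oriented from the new current vertex back to the old one. The paper's version is considerably terser (it addresses only Case~1 and invokes primal/dual symmetry for the $Q$ statement), whereas you carefully track the current vertex and current face and handle the passive Cases~3 and~4 explicitly, which fills in details the paper leaves implicit.
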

\begin{proof}
  If at a step a new vertex is reached then it correspond to Case 1
  and the corresponding edge is added in $P$ and oriented from the new
  vertex, so the property is satisfied by induction. As observed
  earlier the algorithm behaves similarly in the dual map.
\end{proof}

Let $\overline{C}$ be the set of angles of $G$ that are not in $C$.
Any edge of $G$ is bounded by exactly 4 angles. Since $C$ is a cycle,
the 4 angles around an edge are either all in $C$, all in
$\overline{C}$ or 2 in each set (see
Figure~\ref{fig:anglerulesum}). Moreover, if they are 2 in each set,
these sets are separated by an edge of $P$ or an edge of $Q$. Hence
the frontier between $C$ and $\overline{C}$ is a set of edges of $P$
and $Q$. Moreover this frontier is an union of oriented closed walks
of $P$ and of oriented closed walks of $Q$.  In the next section we
study this frontier in more details to show that $\overline{C}$ is
empty in the case considered there.

\section{From toroidal triangulations to unicellular maps}
\label{sec:open}

Let $G$ be a toroidal triangulation.  In order to choose appropriately
the root angle $a_0$, we have to consider separating triangles. 
We say that an angle is \emph{in the
  strict interior of a separating triangle} $T$ if it is in the
interior of $T$ and not incident to a vertex of $T$.  We choose as
root angle $a_0$ any angle that is not in the strict interior of a
separating triangle.  One can easily see that such an angle $a_0$
always exists. Indeed the interiors of two triangles are either
disjoint or one is included in the other. So, the angles that are
incident to a triangle whose interior is maximal by inclusion satisfy
the property.

A subgraph of a graph is \emph{spanning} if it is covering all the
vertices.  The main result of this section is the following theorem
(see Figure~\ref{fig:tore-example} for an example):

\begin{theorem}
\label{th:uni}
Consider a toroidal triangulation $G$, a root angle $a_0$ that is not
in the strict interior of a separating triangle and the orientation of
the edges of $G$ corresponding to the minimal balanced Schnyder wood
w.r.t.~the root face $f_0$ containing $a_0$. Then the output $U$ of
\aps applied on $(G,a_0)$ is a toroidal spanning unicellular map.
\end{theorem}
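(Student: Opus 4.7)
The argument reduces to the key claim that $\bar{C}=\emptyset$, i.e., the cycle $C$ produced in the angle graph by \aps visits every angle of $G$. Given this, spanningness of $U$ follows instantly since every vertex is incident to some visited angle. For unicellularity, note that once every angle is visited each edge of $G$ has its four angles visited, so each edge is either in $P$ via a ``Case $1$ then Case $4$'' pair or in $Q$ via a ``Case $2$ then Case $3$'' pair of Figure~\ref{fig:anglerulesum}; the closed walk $C$ then becomes the unique boundary walk of the single face of $U$, and the Euler count $n-|P|+1=0$ on the torus gives $|P|=n+1$.

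It therefore remains to prove $\bar{C}=\emptyset$. Suppose for contradiction that there is an unvisited angle. A local inspection of Figure~\ref{fig:anglerulesum} shows that the frontier between $C$ and $\bar{C}$ on the torus is a disjoint union of oriented closed walks $W_1,\dots,W_k$, in which the edges of $P$ are traversed in the orientation they have in $G$, the edges of $Q$ are traversed in their dual orientation (from the face on the left of the primal to the face on the right), and each $W_i$ is oriented with $\bar{C}$ lying on its right. Each $W_i$ lifts to an oriented closed walk $Z_i$ in the primal-dual completion $\pdc{G}$ whose orientation is inherited from the minimal balanced Schnyder orientation $D_{\min}$ of $\pdc{G}$.

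The plan is then a dichotomy on the homology of the frontier walk. If $W_i$ is contractible, let $\Delta$ be the open disk it bounds; then $\Delta$ contains unvisited angles but not the root angle $a_0$, and by the choice of $a_0$ outside the strict interior of every separating triangle we also get $f_0\notin\Delta$. The frontier analysis of Figure~\ref{fig:anglerulesum} identifies $Z_i$ with a non-empty $0$-homologous oriented subgraph of $D_{\min}$ that is clockwise with respect to $\Delta$, and since $f_0\notin\Delta$ this directly contradicts Lemma~\ref{prop:maximal}. In the non-contractible case, the plan is to use the Schnyder property at each vertex and each face traversed by $Z_i$ to locally reroute every primal segment of $Z_i$ through an adjacent face without changing its homology class, so as to obtain a non-contractible directed cycle in the dual orientation of $G^*$ inherited from $Q$, in direct contradiction with Lemma~\ref{lem:incomingedges}.

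The main obstacle is precisely this non-contractible sub-case: the frontier walk $W_i$ a priori mixes primal edges from $P$ and dual edges from $Q$, and the rerouting must preserve both orientation and homology simultaneously, which requires a careful local analysis of the Schnyder property at each primal-to-dual junction of $Z_i$. Once both sub-cases are dispatched, $\bar{C}=\emptyset$ is established, and Lemma~\ref{lem:PDconnected} combined with the Euler count of the first paragraph completes the proof that $U$ is a toroidal spanning unicellular map.
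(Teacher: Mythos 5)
Your reduction to showing $\overline{C}=\emptyset$ is the right first move, but there are two substantive gaps.

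First, the unicellularity of $U$ does \emph{not} follow from $\overline{C}=\emptyset$ by the Euler count you sketch. Your formula $n-|P|+1=0$ already assumes that $P$ is a cellular (disk-face) embedding on the torus with exactly one face --- which is precisely what has to be proved. What $\overline{C}=\emptyset$ gives for free is that $P$ and $Q$ are both spanning (every angle is visited, and the only way the walk moves between angles of different vertices is along a $P$-edge, dually for $Q$). To get unicellularity one must rule out cycles in $Q$: the paper first shows that the lattice-minimal orientation forces the \emph{first} cycle closed in $Q$ to be an oriented cycle of $G^*$ (Lemma~\ref{lem:first}), then applies the balanced property via Lemma~\ref{lem:incomingedges} to conclude any such oriented cycle is contractible, and then uses the Hamiltonicity of $C$ (which cannot cross $Q$) to derive a contradiction. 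That yields $Q$ a spanning tree of $G^*$, hence $P$ is the dual complement of a spanning cotree and so is unicellular of genus~$1$. This chain of reasoning cannot be replaced by a one-line Euler count, and it is the second place the balanced hypothesis is genuinely used.

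Second, your description of the frontier between $C$ and $\overline{C}$ is incorrect, and this vitiates the non-contractible sub-case. Each frontier component is entirely a closed walk in $P$ or entirely a closed walk in $Q$ --- never a mix. Around a primal-vertex of $\pdc{G}$ only half-primal edges can lie on the frontier, around a dual-vertex only half-dual edges, and at an edge-vertex the $4$ incident angles split $0$--$4$, $2$--$2$, or $4$--$0$ between $C$ and $\overline{C}$ with both frontier edges (if any) of the same type; a mixed walk through an edge-vertex is therefore impossible. Once this is observed, the paper's route is cleaner than yours: Lemma~\ref{lem:noD} uses Lemma~\ref{lem:incomingedges} and a face-count inside a separating region to exclude any $Q$-closed-walk from the frontier; after that, the frontier consists only of $P$-walks, every face of $G$ is entirely in $C$ or entirely in $\overline{C}$, the frontier is automatically $0$-homologous as the boundary $-\sum_{F\in\mc F_{\overline C}}\phi(F)$ with $f_0\notin\mc F_{\overline C}$, and clockwise, contradicting minimality (Lemma~\ref{prop:maximal}) without any contractible/non-contractible case split on individual components. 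Your proposed ``reroute each primal segment through an adjacent face'' strategy for a mixed non-contractible walk is thus aiming at a situation that does not occur, and you have in any event left it as a plan rather than an argument. Note also that in your contractible case the inference ``$a_0\notin\Delta\Rightarrow f_0\notin\Delta$'' needs the preliminary fact that after removing $Q$-walks from the frontier each face of $G$ lies wholly on one side, which again rests on the frontier-structure point you are missing.
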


The choice of a root angle that is not in the interior of a separating
triangle is necessary to be able to use \ps method.  Indeed, in a
$3$-orientation of a toroidal triangulation, by Euler's formula, all
the edges that are incident to a separating triangle and in its
interior are oriented toward the triangle.  Thus if one applies \aps
from an angle in the strict interior of a triangle, the algorithm will
remain stuck in the interior of the triangle and will not visit all
the vertices.

Consider a toroidal triangulation $G$, a root angle $a_0$ that is not
in the strict interior of a separating triangle and the orientation of
the edges of $G$ corresponding to the minimal balanced Schnyder wood
w.r.t.~the root face $f_0$ containing $a_0$. Let $U$ be the output of
\aps applied on $(G,a_0)$.  We use the same notation as in the previous
section: the cycle in the angle graph is $C$, the set of angles that
are not in $C$ is $\overline{C}$, the set of edges of $U$ is $P$, the
dual edges of stems of $U$ is $Q$.

\begin{lemma}
\label{lem:noD}
The frontier between $C$ and $\overline{C}$ contains no oriented
closed walk of $Q$.
\end{lemma}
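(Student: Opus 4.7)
The plan is to argue by contradiction: assume there exists an oriented closed walk $W$ of $Q$ lying on the frontier between $C$ and $\overline{C}$. Any closed directed walk has equal in- and out-degrees at each vertex, so a standard Euler-type decomposition writes $W$ as an edge-disjoint union of simple directed cycles in $Q$, whose homology classes sum to that of $W$. I then split into two cases according to whether $W$ is contractible.

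If $W$ is non-contractible, at least one simple cycle in its decomposition must itself be non-contractible. Since $Q$ is contained in the dual orientation of $G$, this yields an oriented non-contractible cycle in the dual of a balanced Schnyder wood, directly contradicting Lemma~\ref{lem:incomingedges}. This case is essentially immediate once the decomposition is in hand.

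The substantive case is when $W$ is contractible. I would pick any simple directed cycle $C' \subseteq W$, which is again contractible (else the previous case applies), and analyze the disk $D'$ bounded by $C'$. A careful reading of the ``Case~2 alone'' picture in Figure~\ref{fig:anglerulesum} shows that every frontier $Q$-edge has the unvisited angles on its left: for such a stem edge $e^*$, the two angles at the head $u$ of the primal edge $e$ lie in $\overline{C}$ while the two angles at the tail $v$ lie in $C$. Propagating this around $C'$, the disk $D'$ lies on its left side, so every primal edge dual to an edge of $C'$ is a cut edge directed from $V'_{\text{out}} := V(G) \setminus D'$ into $V'_{\text{in}} := V(G) \cap D'$, and $V'_{\text{in}} \neq \emptyset$ because each cut edge contributes a head inside $D'$.

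The final step uses the Schnyder structure restricted to $V'_{\text{in}}$: since no cut edge leaves $V'_{\text{in}}$, every $v \in V'_{\text{in}}$ has its three outgoing primal edges entirely inside $V'_{\text{in}}$. Therefore, starting at any $v \in V'_{\text{in}}$ and iterating the outgoing edge of a fixed color $i$ produces an infinite walk confined to the finite set $V'_{\text{in}}$, which must contain a monochromatic $i$-cycle lying entirely in $D'$ and hence contractible. This contradicts Theorem~\ref{lem:type-cross}, which asserts that all monochromatic cycles of a Schnyder wood are non-contractible. The main technical obstacle I anticipate is the careful bookkeeping of orientations --- one must verify that the ``$\overline{C}$ on the left'' property, read off from a single edge picture, really propagates consistently around an entire simple cycle built from arbitrarily many frontier $Q$-edges, and that this forces the cut edges to be directed inward so that the Schnyder paths are trapped in $V'_{\text{in}}$.
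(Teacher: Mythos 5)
Your decomposition into simple cycles and your treatment of a non‑contractible cycle via Lemma~\ref{lem:incomingedges} are fine, and your handling of the first contractible sub‑case is also correct (and a bit slicker than the paper's, which uses an Euler's‑formula edge count rather than trapping a monochromatic cycle in the disk and contradicting Lemma~\ref{lem:allhomotopic}). But there is a genuine gap at the step ``Propagating this around $C'$, the disk $D'$ lies on its left side.'' You correctly read off from Figure~\ref{fig:anglerulesum} that the \emph{unvisited} angles are on the left of a frontier $Q$-edge, but this is a statement about the visited/unvisited partition, not about topology: the unique disk bounded by the contractible cycle $C'$ could equally well be the region on its \emph{right}, i.e.\ the side containing the root angle $a_0$ and the visited cycle $C$. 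In that situation all the primal cut edges point \emph{out of} the disk, not into it, so the vertices of $V'_{\text{in}}$ can have outgoing edges leaving $D'$ and the trapping argument collapses; nor can you trap anything in the complement, which is not simply connected.

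This missing case is precisely where the hypothesis that $a_0$ is not in the strict interior of a separating triangle does its work — a hypothesis your proof never invokes, which is a red flag since without it the lemma is false (starting inside a separating triangle the algorithm never escapes, and the dual boundary of that triangle is exactly such a $Q$-walk). The paper resolves the ``$C$ inside the disk'' case with a counting argument on the outer map $G_{\text{out}}$: gluing back the region inside and duplicating along the facial walk $F$ of length $k$, Euler's formula plus the outdegree‑$3$ condition forces $k=3$, so $F$ is a separating triangle containing $f_0$ in its strict interior, contradicting the choice of $a_0$. You would need an argument of this flavour to close your proof; the trapping idea alone does not extend.
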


\begin{proof}
  Suppose by contradiction that there exists such a walk $W$. Then
  along this walk, all the dual edges of $W$ are edges of $G$ oriented
  from the region containing $C$ toward $\overline{C}$ as one can see
  in Figure~\ref{fig:anglerulesum}. By Lemma~\ref{lem:incomingedges},
  walk $W$ does not contain any non-contractible cycle. So $W$
  contains an oriented contractible cycle $W'$, and then either $C$ is
  in the contractible region delimited by $W'$, or not. The two case
  are considered below:

\begin{itemize}
\item \emph{$C$ lies in the non-contractible region of $W'$:}

  Then consider the plane map $G'$ obtained from $G$ by keeping only
  the vertices and edges that lie (strictly) in the contractible
  region delimited by $W'$. Let $n'$ be the number of vertices of
  $G'$. All the edges incident to $G'$ that are not in $G'$ are
  entering $G'$. So in $G'$ all the vertices have outdegree $3$ as we
  are considering $3$-orientations of $G$. Thus the number of edges of
  $G'$ is exactly $3n'$, contradicting the fact that the maximal
  number of edges of planar map on $n$ vertices is $3n-6$ by Euler's
  formula.

\item \emph{$C$ lies in the contractible region of $W'$:}

  All
  the dual edges of $W'$ are edges of $G$ oriented from its
  contractible region toward its exterior. Consider the graph
  $G_{out}$ obtained from $G$ by removing all the edges that are cut
  by $W'$ and all the vertices and edges that lie in the contractible
  region of $W'$. As $G$ is a map, the face of $G_{out}$ containing
  $W'$ is homeomorphic to an open disk. Let $F$ be its facial walk (in
  $G_{out}$) and let $k$ be the length of $F$.  We consider the map
  obtained from the facial walk $F$ by putting back the vertices and
  edges that lied inside. We transform this map into a plane map $G'$
  by duplicating the vertices and edges appearing several times in
  $F$, in order to obtain a triangulation of a cycle of length
  $k$. Let $n',m',f'$ be the number of vertices, edges and faces of
  $G'$. Every inner vertex of $G'$ has outdegree $3$, there are no
  other inner edges, so the total number of edges of $G'$ is
  $m'=3(n'-k)+k$. All the inner faces have size $3$ and the outer face
  has size $k$, so $2m'=3(f'-1)+k$.  By Euler's formula
  $n'-m'+f'=2$. Combining the three equalities gives $k=3$ and $F$ is
  hence a separating triangle of $G$. This contradicts the choice of
  the root angle, as it should not lie in the strict interior of a
  separating triangle.
\end{itemize}
\end{proof}

A \emph{Hamiltonian cycle} of a graph is a cycle visiting every
vertex once.

\begin{lemma}
\label{lem:ham}
The cycle $C$ is a Hamiltonian cycle of the angle graph, all the
edges of $G$ are marked exactly twice, the subgraph $Q$ of $G^*$ is
spanning, and, if $n\geq 2$, the subgraph $P$ of $G$ is spanning.
\end{lemma}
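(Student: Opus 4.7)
My plan is to first establish that the cycle $C$ is Hamiltonian in the angle graph, and then derive the remaining three claims as easy consequences.

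For Hamiltonicity, I would show that the frontier between $C$ and $\overline{C}$ is empty. As observed just before the lemma, this frontier decomposes into a disjoint union of oriented closed walks of $P$ and of $Q$. Lemma~\ref{lem:noD} already forbids $Q$-walks in the frontier, so it remains to forbid $P$-walks. I would argue symmetrically to Lemma~\ref{lem:noD}: a closed $P$-walk $W$ in the frontier is an oriented closed subgraph of $G$ bounding a region of $\overline{C}$ on one side. The role played by Lemma~\ref{lem:incomingedges} in Lemma~\ref{lem:noD} is now played by the minimality of the primal orientation w.r.t.\ $f_0$: if $W$ were contractible and enclosed a disk of $\overline{C}$ not containing $f_0$, then $W$ would be a counterclockwise $0$-homologous oriented subgraph w.r.t.\ $f_0$, contradicting minimality. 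The remaining cases ($W$ non-contractible, or $W$ contractible but enclosing $f_0$) I would handle by the same Euler-counting argument used in the proof of Lemma~\ref{lem:noD}, exploiting the choice of $a_0$ outside the strict interior of any separating triangle. Since $a_0 \in C$, emptiness of the frontier forces $\overline{C}=\emptyset$, so $C$ is Hamiltonian.

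Given Hamiltonicity, every edge $e$ of $G$ has its four incident angles in $C$; since the algorithm processes $e$ at exactly the two ``previous-vertex'' angles of $e$ (one at each endpoint), $e$ is marked exactly twice. For spanning of $P$ when $n \ge 2$, I would examine first arrivals: for any $v \ne v_0$, the first time the algorithm is at $v$ must come from a Case~1 or Case~4 transition on some edge $e$ incident to $v$. Case~4 would require $e$ to be already marked, hence processed earlier; but each edge is processed at most once per endpoint-side, the processing at $v$'s side cannot have occurred yet (first visit), and the processing at the other endpoint is the current Case~4 step itself, so no earlier processing exists, contradicting the ``marked'' hypothesis. Hence the first arrival at $v$ is via Case~1, adding a $P$-edge incident to $v$. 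Combined with Lemma~\ref{lem:PDconnected}, which provides an oriented $P$-path to $v_0$ from any vertex in $V(P)$, this also forces $v_0 \in V(P)$, so $P$ spans $V$. A symmetric argument on the dual-view of the algorithm (first arrivals at faces, using next-face transitions) shows that $Q$ spans in $G^*$.

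The hard part will be carrying out the ``symmetric $P$-walk'' argument rigorously: matching the orientation of a $P$-walk on the frontier with the direction (counterclockwise or clockwise) of the corresponding $0$-homologous subgraph w.r.t.\ $f_0$ requires a careful case analysis of the configurations in Figure~\ref{fig:anglerulesum}, and the adaptation of the Euler-counting argument of Lemma~\ref{lem:noD} to the non-contractible and $f_0$-enclosing cases must be checked with some care.
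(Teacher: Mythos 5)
The part of your argument that you yourself flag as hard — adapting the Euler-counting argument of Lemma~\ref{lem:noD} to a symmetric ``$P$-walk'' version, and handling the non-contractible and $f_0$-enclosing cases — is where the proposal breaks. The counting argument in Lemma~\ref{lem:noD} only applies once $W$ has been reduced to a \emph{contractible} cycle, and that reduction there relies on Lemma~\ref{lem:incomingedges}, which rules out non-contractible oriented cycles in the \emph{dual} of a balanced orientation. There is no primal analogue of that lemma — the primal is full of non-contractible oriented cycles (any monochromatic cycle is one), and the boundary between $C$ and $\overline{C}$ can genuinely consist of non-contractible closed $P$-walks (e.g.\ when $\overline{C}$ is a cylindrical region between two non-contractible cycles). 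So the ``non-contractible $W$'' case will not succumb to Euler-counting, and minimality w.r.t.\ $f_0$ only controls $0$-homologous subgraphs, not individual non-contractible walks. There is also a sign slip: a frontier walk bounding a region of $\overline{C}$ not containing $f_0$ is a \emph{clockwise} $0$-homologous oriented subgraph w.r.t.\ $f_0$ in the paper's convention (negative combination of elements of $\mathcal{F}'$), which is exactly what $D_{\min}$ cannot contain by Lemma~\ref{prop:maximal}; you wrote ``counterclockwise,'' which would contradict maximality, not minimality.

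The paper's proof avoids all of this by never decomposing the frontier into individual walks. Since faces of $G$ are either entirely in $C$ or entirely in $\overline{C}$, the full frontier $T$ satisfies $\phi(T)=-\sum_{F\in\mathcal{F}_{\overline{C}}}\phi(F)$ directly (with $f_0\notin\overline{C}$ because $a_0\in f_0\cap C$), so $T$ as a whole is a non-empty, clockwise, $0$-homologous oriented subgraph w.r.t.\ $f_0$ — contradicting minimality via Lemma~\ref{prop:maximal} in one step, with no contractibility case analysis needed. That global observation is the key idea missing from your proposal. The remaining parts (edges marked twice, $P$ and $Q$ spanning) follow from Hamiltonicity essentially as you describe, and the paper argues them the same way.
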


\begin{proof}
  Suppose for a contradiction that $\overline{C}$ is non empty. By
  Lemma~\ref{lem:noD}, the frontier $T$ between $C$ and $\overline{C}$
  is an union of oriented closed walks of $P$.  Hence a face of $G$
  has either all its angles in $C$ or all its angles in
  $\overline{C}$. Moreover $T$ is a non-empty union of oriented closed
  walk of $P$ that are oriented \cw according to the set of faces
  containing $\overline{C}$ (see the first case of
  Figure~\ref{fig:anglerulesum}). This set does not contain $f_0$
  since $a_0$ is in $f_0$ and $C$.  As in Section~\ref{sec:lattice},
  let $\mc{F}$ be the set of counterclockwise facial walks of $G$ and
  $F_0$ be the counterclockwise facial walk of $f_0$. Let
  $\mc{F}'=\mc{F}\setminus F_0$, and
  $\mc{F}_{\overline{C}}\subseteq\mc{F}'$ be the set of
  counterclockwise facial walks of the faces containing
  $\overline{C}$. We have
  $\phi(T)=-\sum_{F\in\mc{F}_{\overline{C}}}\phi(F)$.  So $T$ is a
  clockwise non-empty $0$-homologous oriented subgraph
  w.r.t.~$f_0$. This contradicts Lemma~\ref{prop:maximal} and the
  minimality of the orientation w.r.t.~$f_0$. So $\overline{C}$ is
  empty, thus $C$ is Hamiltonian and all the edges of $G$ are marked
  twice.

  Suppose for a contradiction that $n\geq 2$ and $P$ is not
  spanning. Since the algorithm starts at $v_0$, $P$ is not covering a
  vertex $v$ of $G$ different from $v_0$. Then the angles around $v$
  cannot be visited since by Figure~\ref{fig:anglerulesum} the only
  way to move from an angle of one vertex to an angle of another
  vertex is through an edge of $P$ incident to them. So $P$ is
  spanning.  The proof is similar for $Q$ (note that  we always
  have $f\geq 2$).
\end{proof}

\begin{lemma}
\label{lem:first}
The first cycle created in $P$ (resp. in $Q$) by the algorithm is
oriented.
\end{lemma}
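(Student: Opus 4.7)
I will prove the statement for $P$; the case of $Q$ follows by the primal-dual symmetry of \aps pointed out right after Figure~\ref{fig:anglerulesum}, applied verbatim in $G^*$. Suppose the first cycle in $P$ is created at the iteration in which Case~1 adds an edge $e:v'\to v$ with $v'$ already a vertex of the current $P$. Just before this iteration, $P$ is still a tree, and by Lemma~\ref{lem:PDconnected} it is rooted at $v_0$ with every edge oriented toward $v_0$. Consequently the new cycle is $C_P=\pi\cup\{e\}$, where $\pi$ is the unique path from $v$ to $v'$ in this tree, and $C_P$ is oriented (traversed as $v\xrightarrow{\pi}v'\xrightarrow{e}v$) if and only if $v'$ is an ancestor of $v$ in $P$. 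The whole task therefore reduces to proving this ancestry relation.

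The strategy will be to interpret \aps, as long as $P$ remains acyclic, as a depth-first traversal of the growing tree $P$. First I will verify that new vertices are only ever reached by Case~1: an edge processed in Case~4 is marked, and inspecting how an edge becomes marked shows that its first processing was a Case~1 at its other endpoint, so that endpoint has already been visited. Next, I will show that once a vertex $u$ has been entered (via Case~1 through an edge $e_u$), the subsequent iterations having $u$ as current vertex visit the angles of $u$ in strict counterclockwise order starting just after $e_u$, possibly interleaved with recursive excursions into child subtrees; in particular the last angle of $u$ to be visited is the one just before $e_u$, and it is this visit that triggers Case~4 on $e_u$ and pops $u$ back to $p(u)$. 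Since the angle-graph walk is simple (Proposition~\ref{prop:terminates}), $u$ can never then be revisited. It follows that a vertex of $V(P)$ has an unvisited angle if and only if it lies on the path from the current vertex to $v_0$ in $P$.

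The conclusion is then immediate. At the iteration adding $e$, the edge $e$ is non-marked, so $e$ has not been processed from either angle of $v'$ adjacent to it, hence $v'$ still has unvisited angles; by the previous paragraph $v'$ must lie on the path from $v$ to $v_0$ in $P$, i.e.\ is an ancestor of $v$, and so $C_P$ is oriented. Running the same argument on the dual execution of \aps, which outputs $Q$ with stems playing the role of $P$-edges, gives the statement for the first cycle of $Q$. The delicate step will be the bookkeeping in the middle paragraph: checking carefully that, despite arbitrarily many recursive subtree excursions interrupting the stay at $u$, the angles of $u$ really are processed in strict counterclockwise order, so that ``all angles processed'' and ``$u$ popped'' coincide exactly.
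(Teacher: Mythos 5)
Your proposal is correct and follows essentially the same route as the paper's proof. Both arguments reduce the statement to showing that the source $v'$ of the cycle-closing edge lies on the oriented tree path from the current vertex $v$ to the root $v_0$ (the ``ancestor'' condition), and both derive this from the non-markedness of $e$ combined with the key claim that vertices of $P$ off that path have all their angles already visited; the paper proves that claim (its Claim~\ref{cl:leftangles}) by a first-failure induction over the steps of the algorithm, while you phrase the same fact in DFS-stack language and plan to prove it by directly tracking the counterclockwise angle traversal at each vertex. These are two dressings of the same argument, and your bookkeeping paragraph (verifying that new vertices are only reached by Case~1 and that Case~4 on $e_u$ is what pops $u$) is exactly the content of that claim.
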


\begin{proof}
  Let $e$ be the first edge creating a cycle in $P$ while executing
  \aps and consider the steps of \aps before $e$ is added to $P$. So
  $P$ is a tree during all these steps.  For every vertex of $P$ we
  define $P(v)$ the unique path from $v$ to $v_0$ in $P$ (while $P$ is
  empty at the beginning of the execution, we define $P(v_0)=\{v_0\}$).
  By Lemma~\ref{lem:PDconnected}, this path $P(v)$ is an oriented
  path.  We prove the following

  \begin{claim}
\label{cl:leftangles}
Consider a step of the algorithm before $e$ is added to $P$ and where
the algorithm is considering a vertex $v$. Then all the angles around
the vertices of $P$ different from the vertices of $P(v)$ are already
visited.
  \end{claim}

  \begin{proof}
    Suppose by contradiction that there is such a step of the
    algorithm where some angles around the vertices of $P$ different
    from the vertices of $P(v)$ have not been visited. Consider the
    first such step. Then clearly we are not at the beginning of the
    algorithm since $P=P(v)=\{v_0\}$. So at the step just before, the
    conclusion holds and now it does not hold anymore. Clearly, at the
    step before, we were considering a vertex $v'$ distinct from $v$,
    otherwise $P(v)$ and $P$ have not changed and we have the
    conclusion. So from $v'$ to $v$ we are either in Case~1 or Case~4
    of \apss. If $v$ has been considered by Case~1, then $P(v)$
    contains $P(v')$ and the conclusion holds. If $v$ has been
    considered by Case~4, then since $P$ is a tree, all the angles
    around $v'$ have been considered and $v'$ is the only element of
    $P\setminus P(v)$ that is not in $P\setminus P(v')$. Thus the
    conclusion also holds.
  \end{proof}

  Consider the iteration of \aps where $e$ is added to $P$. The edge
  $e$ is added to $P$ by Case~1, so $e$ is oriented from a vertex $u$
  to a vertex $v$ such that $v$ is already in $P$ or $v$ is the root
  vertex $v_0$. Consider the step of the algorithm just before $u$ is
  added to $P$.  By Claim~\ref{cl:leftangles}, vertex $u$ is not in
  $P\setminus P(v)$ (otherwise $e$ would have been considered before
  and it would be a stem). So $u\in P(v)$ and $P(v)\cup\{e\}$ induces
  an oriented cycle of $G$. The proof is similar for $Q$.
\end{proof}

\begin{lemma}
\label{lem:unicellular}
  $P$ is a spanning unicellular map of $G$ and $Q$ is a spanning tree
  of $G^*$. Moreover one is the dual of the complement of the other.
\end{lemma}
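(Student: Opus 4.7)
My plan is to prove four successive facts and thereby obtain the lemma. (1) $P$ and $Q$ are connected and span $G$ and $G^*$, which is immediate from Lemmas~\ref{lem:PDconnected} and~\ref{lem:ham}. (2) Every edge of $G$ is either in $P$ or has its dual in $Q$, exclusively: since $\overline{C}$ is empty by Lemma~\ref{lem:ham}, the ``Not visited'' case of Figure~\ref{fig:anglerulesum} cannot occur, and each edge falls in exactly one of the Case~1 scenarios (contributing an edge to $P$) or the Case~2 scenarios (contributing a dual edge to $Q$). This already establishes the duality clause of the lemma and yields $|E(P)| + |E(Q)| = 3n$. (3) $P$ has exactly one face on the torus, and that face has a single boundary component: the execution corresponds to a single cycle $C$ in the angle graph (Proposition~\ref{prop:terminates}) that visits every one of the $6n$ angles (Lemma~\ref{lem:ham}), and this cycle traces out the total boundary walk of the faces of $U$; since stems do not split faces, the complement of $P$ in the torus is a single region with a single boundary component. (4) That region is a disk, so $P$ is a unicellular map with $|E(P)| = n+1$, and hence $|E(Q)| = 2n-1 = |V(G^*)|-1$, making $Q$ a spanning tree.

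For (4), Euler's formula on the torus applied to $P$ gives $|E(P)| = n + \chi$, where $\chi$ is the Euler characteristic of the unique face. A connected surface with a single boundary component sitting inside the torus has $\chi \in \{1,-1\}$: disk ($\chi=1$, $P$ a unicellular map with $n+1$ edges and $Q$ a spanning tree) or punctured torus ($\chi=-1$, $P$ a spanning tree with $n-1$ edges and $Q$ of cycle rank $2$). I would conclude by ruling out the second case.

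The hard step, and the main obstacle, is this ruling-out. I will show that in the hypothetical punctured-torus case every non-trivial cycle of $Q$ would be non-contractible in the torus. Indeed, a contractible cycle $C^*\subseteq Q$ bounds a disk $D$; the interior of $D$ must contain at least one face of $G$, hence a non-empty set $S$ of primal vertices, and $S$ is a proper subset of $V(G)$ because the punctured-torus complement of $D$ cannot be contained in a single open face of the triangulation. As $P$ is a connected spanning tree, it must contain an edge joining $S$ to $V(G)\setminus S$; this edge is crossed by $C^*$, so its dual lies on $C^*$, contradicting point (2) above (which asserts that $E(Q)$ contains no dual of any edge of $P$). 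Hence every cycle of $Q$ is non-contractible. But Lemma~\ref{lem:first} says that the first cycle created in $Q$ during the execution is oriented, while Lemma~\ref{lem:incomingedges} forbids any oriented non-contractible cycle in the dual orientation of a balanced Schnyder wood---exactly the contradiction we need. Therefore $\chi=1$, completing the proof.
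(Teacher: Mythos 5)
Your step~(4) argument for ruling out a contractible cycle $C^*\subseteq Q$ is correct, and it is a genuinely different route from the paper's: the paper observes that a contractible cycle in $Q$ would trap some angles of the angle graph inside a disk that the Hamiltonian cycle $C$ cannot leave (since $C$ does not cross $Q$), whereas you observe that it would trap some primal vertices, forcing the connected spanning graph $P$ to have an edge whose dual lies on $C^*\subseteq Q$, contradicting your fact~(2). Both are short and use the same preliminary lemmas; yours trades Hamiltonicity of $C$ for connectivity of $P$. The rest of the mechanism (first cycle in $Q$ is oriented by Lemma~\ref{lem:first}; no oriented non-contractible dual cycle by Lemma~\ref{lem:incomingedges}) is identical.

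The genuine gap is in step~(3). You assert that the complement of $P$ in the torus is a single region with a single boundary component, justified only by saying that $C$ ``traces out the total boundary walk of the faces of $U$''. This claim needs an argument: moving in the angle graph from one angle to the next around a vertex can pass an endpoint of an edge of $P$, which \emph{does} let $C$ move between two different faces of $P$ without ``crossing'' $P$ in the sense used by the paper, so it is not immediate that a Hamiltonian $C$ forces a unique face. (The claim is true, but proving it amounts to identifying $C$ with the face boundary walk of $U$, which is not spelled out.) You do not actually need step~(3): once you have shown $Q$ is acyclic (your step~(4) arguments do exactly this, and they do not use the Euler characteristic dichotomy at all), combine it with step~(1) to conclude that $Q$ is a spanning tree of $G^*$; the fact that $P$ is then a spanning unicellular map follows directly from the standard duality on closed orientable surfaces --- the faces of $P$ are in bijection with the connected components of $Q$, so $Q$ connected gives $P$ one face, and Euler's formula gives that face Euler characteristic $1$. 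This is the route the paper takes, and it bypasses the one-face/one-boundary claim entirely.
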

\begin{proof}
  Suppose that $Q$ contains a cycle, then by Lemma~\ref{lem:first} it
  contains an oriented cycle of $G^*$. This cycle is contractible by
  Lemma~\ref{lem:incomingedges}. Recall that by Lemma~\ref{lem:ham},
  $C$ is a Hamiltonian cycle, moreover it does not cross $Q$, a
  contradiction.  So $Q$ contains no cycle and is a tree.

  By Lemma~\ref{lem:ham}, all the edges of $G$ are marked at the
  end. So every edge of $G$ is either in $P$ or its dual in $Q$ (and
  not both). Thus $P$ and $Q$ are the dual of the complement of each
  other. So $P$ is the dual of the complement of a spanning tree of
  $G^*$. Thus $P$ is a spanning unicellular map of $G$.
\end{proof}

Theorem~\ref{th:uni} is then a direct reformulation of
Lemma~\ref{lem:unicellular} by the definition of $P$ and $Q$.

A toroidal unicellular map on $n$ vertices has exactly $n+1$ edges:
$n-1$ edges of a tree plus $2$ edges corresponding to the size of a
homology-basis (i.e. plus $2g$ in general for an oriented
surface of genus $g$).  Thus a consequence of Theorem~\ref{th:uni} is
that the obtained unicellular map $U$ has exactly $n$ vertices, $n+1$
edges and $2n-1$ stems since the total number of edges is $3n$.  The
orientation of $G$ induces an orientation of $U$ such that the stems
are all outgoing, and such that while walking \cw around the unique
face of $U$ from $a_0$, the first time an edge is met, it is oriented
\ccw according to this face, see Figure~\ref{fig:hexasquare} where all the
tree-like parts and stems are not represented.  There are two types
of toroidal unicellular maps depicted on
Figure~\ref{fig:hexasquare}. Two cycles of $U$ may intersect either
on a single vertex (square case) or on a path (hexagonal case).  The
square can be seen as a particular case of the hexagon where one side
has length zero and thus the two corners of the hexagon are
identified.

\begin{figure}[!h]
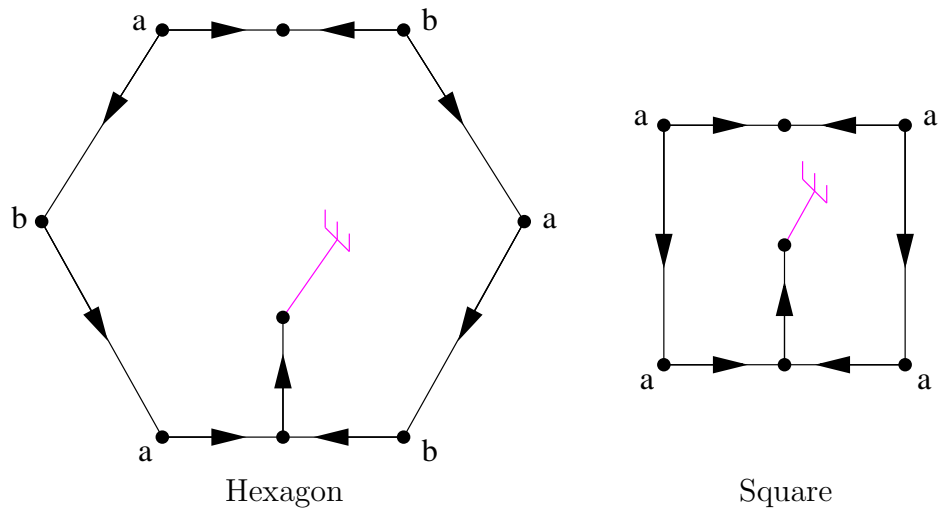

\center
\begin{tabular}{cc}
\includegraphics[scale=0.5]{hexasquare-1} \ \ & \ \
\includegraphics[scale=0.5]{hexasquare-2}\\
Hexagon \ \ & \ \ Square
\end{tabular}

\caption{The two types of rooted toroidal unicellular maps.}
\label{fig:hexasquare}
\end{figure}

In Figure~\ref{fig:contre-exemple}, we give several examples of
executions of \aps on minimal $3$-orientations.  These examples show
how important is the choice of the minimal balanced Schnyder wood in order
to obtain Theorem~\ref{th:uni}. In particular, the third example shows
that \aps can visit all the angles of the triangulation (i.e. the
cycle in the angle graph is Hamiltonian) without outputting an
unicellular map.

\begin{figure}[h!]
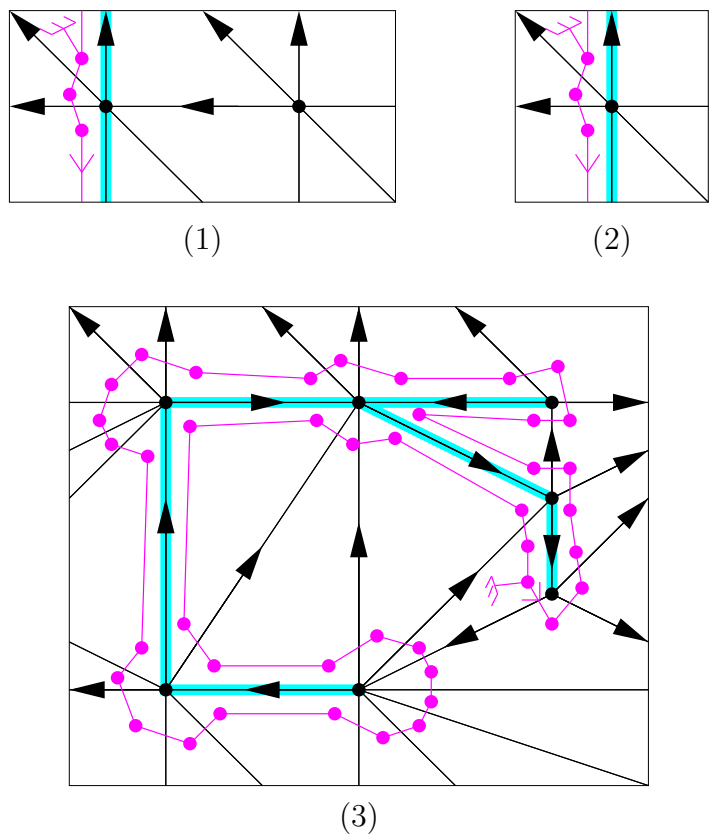

\center
\begin{tabular}[c]{cc}
\includegraphics[scale=0.5]{contre-exemple-2} \ \ \ \ & \ \ \ \
\includegraphics[scale=0.5]{contre-exemple-1}\\
(1) \ \ \ \ & \ \ \ \ (2)
\end{tabular}

\ \\

\ \\

 \begin{tabular}[c]{c}
 \includegraphics[scale=0.5]{contre-exemple-3-} \\
 (3)
 \end{tabular}

\caption{Examples of minimal $3$-orientations that are not balanced
  Schnyder woods and where \aps respectively: $(1)$ does not visit all
  the vertices, $(2)$ visits all the vertices but not all the angles, and
  $(3)$ visits all the angles but does not output an unicellular map.}
\label{fig:contre-exemple}
\end{figure}

Note that the orientations of Figure~\ref{fig:contre-exemple} are not
Schnyder woods. One may wonder if the fact of being a Schnyder wood is
of any help for our method. This is not the case since there are
examples of minimal Schnyder woods that are not balanced and where \aps
does not visit all the vertices. One can obtain such an example by
replicating $3$ times horizontally and then $3$ times vertically the
second example of Figure~\ref{fig:contre-exemple} to form a
$3\times 3$ tiling and starts \aps from the same root angle.
Conversely, there are minimal Schnyder woods that are not balanced where
\aps does output a toroidal spanning unicellular map (the Schnyder
wood of Figure~\ref{fig:gamma0glue} can serve as an example while
starting from an angle of the only face oriented \cww).

\section{Recovering the original triangulation}
\label{sec:close}

This section is dedicated to showing how to recover the original
triangulation from the output of \apss. The method is very similar
to~\cite{PS06} since like in the plane the output has only one face
that is homeomorphic to an open disk (i.e. a tree in the plane and an
unicellular map in general). 

\begin{theorem}
\label{th:recover}
Consider a toroidal triangulation $G$, a root angle $a_0$ that is not
in the strict interior of a separating triangle and the orientation of
the edges of $G$ corresponding to the minimal balanced Schnyder wood
w.r.t.~the root face $f_0$ containing $a_0$. From the output $U$ of
\aps applied on $(G,a_0)$ one can reattach all the stems to obtain $G$
by starting from the root angle $a_0$ and walking along the face of
$U$ in \ccw order (according to this face): each time a stem is met,
it is reattached in order to create a triangular face on its left
side.
\end{theorem}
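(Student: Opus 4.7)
The plan is to prove Theorem~\ref{th:recover} by showing the reattachment procedure inverts \apss, exploiting the nesting structure of case-2/case-3 event pairs along the Hamiltonian cycle $C$ of Lemma~\ref{lem:ham}. I would first analyze the pairing of case-2 and case-3 events: each stem in $U$ is created by a unique case-2 event at a vertex $v$ on an outgoing cut edge $e = vw$, and is later closed by a case-3 event at $w$ on the same edge. Between these two events the algorithm never processes $e$ again, so it remains entirely on one side of $e$ in $G$; call this side $S_e$. For any other cut edge $e_1$ whose case-2 event lies in the bracket between case-2 and case-3 for $e$, its matching case-3 event must also lie in this bracket, since the walk of $e_1$'s bracket stays on one side of $e_1$ which is contained in $S_e$. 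This yields a well-nested parenthesis structure on case-2/case-3 events along $C$, and hence a rooted-forest organization of stems by nesting, which will match the order of reattachments.

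Second, by induction on the nesting depth, I would show that each stem, reattached by the ccw walk procedure, produces exactly the triangular face of $G$ adjacent to $e$ on the side $S_e$. For an innermost stem associated to $e = vw$, no case-2 or case-3 event occurs strictly within its bracket, so the algorithm's moves between case-2 and case-3 for $e$ are only case-1 events (no case-4 event can occur either, as this would require an earlier case-1 on an edge in $S_e$, forcing a previous visit to $S_e$ incompatible with the walk staying on one side of $e$); each such case-1 event adds one edge of $U$ and moves to an adjacent vertex. Since $G$ is a triangulation and the walk is confined to $S_e$, exactly two case-1 events occur, adding the two other edges of the triangular face $F$ adjacent to $e$ on $S_e$ and ending at $w$. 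Reattachment therefore closes $F$ on the stem's left. For a non-innermost stem, the inductive hypothesis first reattaches the nested stems, each closing an inner triangle; the remaining two sides immediately following the outer stem in the ccw walk then correspond to the two other edges of the outer face $F$ and end at $w$, closing $F$ on the left.

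The main obstacle is the inductive step for non-innermost stems: one must verify that after closing all nested inner triangles, the two sides of the ccw walk immediately after the outer stem indeed correspond geometrically to the two other edges of the outer triangular face $F$ of $G$, which requires careful local tracking of the angle graph at the vertices of $F$ and of the interplay between the ccw walk around $U$'s face and the evolving partially reconstructed map. Once established, a simple count confirms the result: the $n+1$ original edges of $U$ together with the $2n - 1$ reattached edges give $3n$ edges, matching $G$, and the single face of $U$ is split by the $2n-1$ reattachments into $2n$ triangular faces, matching $G$. Combined with the local Schnyder-wood structure preserved at each vertex, this identifies the reconstructed map with $G$.
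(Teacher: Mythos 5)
Your proposal attempts to invert \aps\ directly via a bracket/nesting argument on case-2/case-3 event pairs, which is a genuinely different route from the paper's. The paper never establishes a nesting structure on stems; instead it proves Lemma~\ref{lem:stemright} (using the total order induced by the Hamiltonian cycle $C$ to show $a_s < a'_s$, i.e.\ the root angle lies on the right side of every reattached stem), then derives Lemma~\ref{lem:stemstep} (any complete closure recovers $G$) by the contradiction that two stems whose right sides are disjoint cannot both keep the root on their right, and finally Lemma~\ref{lem:firststem} (the ccw walk always finds an admissible triple). Your argument and the paper's agree on the pairing of a stem's case-2 and case-3 events, but diverge from there.

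The gap in your proposal is the claim that between the case-2 event at $v$ and the case-3 event at $w$ on $e=vw$ the walk ``remains entirely on one side of $e$; call this side $S_e$.'' A single edge in a triangulation does not separate the surface --- not even in the plane, and certainly not on the torus --- so $S_e$ is not a well-defined region. What you are implicitly invoking is that the portion of the cycle $C$ between the two events, closed up by the dual edge of $e$, bounds a disk; but whether that closed curve is contractible is exactly what the balanced condition and the minimality w.r.t.~$f_0$ are there to guarantee (compare Lemma~\ref{lem:noD} and the ``does not wrap over the root angle'' discussion). Since the ``one side of $e$'' argument is the load-bearing step both for the nesting of brackets and for the claim that an innermost bracket contains exactly two case-1 events and no case-4 events, those steps are unjustified as stated. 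To make your approach work you would first need to establish something equivalent to Lemma~\ref{lem:stemright}, after which nesting does follow; but at that point the paper's own chain (Lemma~\ref{lem:stemright} $\Rightarrow$ Lemma~\ref{lem:stemstep} $\Rightarrow$ Lemma~\ref{lem:firststem}) is the shorter road. The final counting remark ($3n$ edges, $2n$ faces, local Schnyder structure) is correct but does not by itself identify the reconstructed map with $G$.
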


Theorem~\ref{th:recover} is illustrated on
Figure~\ref{fig:reconstruction} where one can check that the obtained
toroidal triangulation is $K_7$ (like on the input of
Figure~\ref{fig:tore-example}).

\begin{figure}[h!]
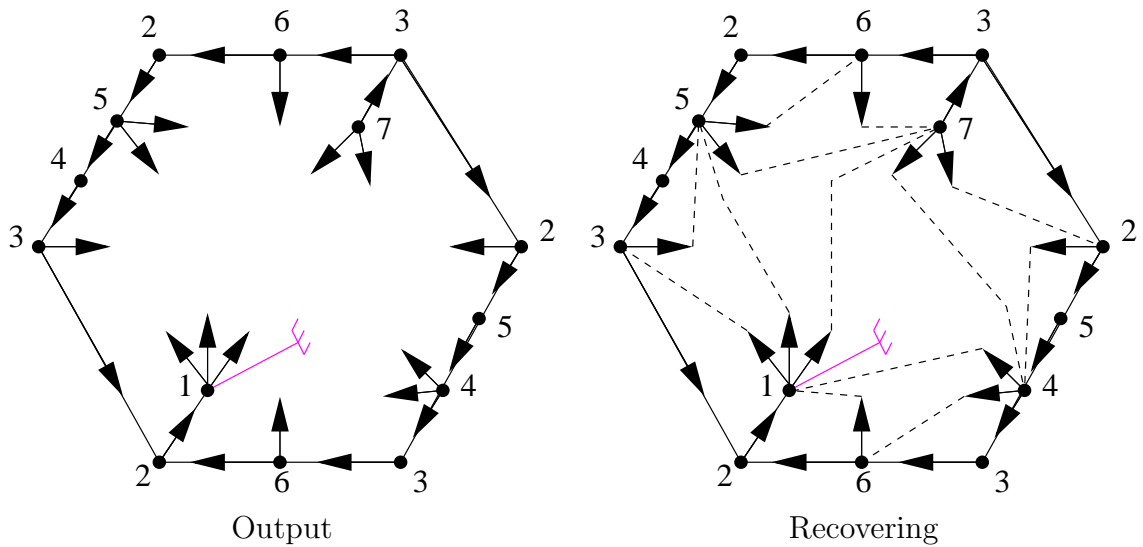

\center
\begin{tabular}{cc}
\includegraphics[scale=0.5]{tore-tri-exe-4---}
 & 
\includegraphics[scale=0.5]{tore-tri-exe-5-}\\
Output & Recovering \\
\end{tabular}
\caption{Example of how to recover the original toroidal
  triangulation $K_7$ from the output of \aps.}
\label{fig:reconstruction}
\end{figure}

In fact in this section we define a method, more general than the one
described in Theorem~\ref{th:recover}, that will be useful in next
sections.

We consider the two type of unicellular toroidal map depicted on
Figure~\ref{fig:hexasquare} and call \emph{corner} a vertex that is a
corner of an hexagon or the square

Let $\mathcal U_r(n)$ denote the set of toroidal unicellular maps $U$
rooted on a particular angle, with exactly $n$ vertices, $n+1$ edges
and $2n-1$ stems satisfying the following property. 
A vertex that is
not the root, has exactly $2$ stems if it is not a corner, $1$ stem if
it is the corner of a hexagon and $0$ stem if it is the corner of a
square. The root vertex has $1$ additional stem, i.e.  it has $3$
stems if it is not a corner, $2$ stems if it is the corner of a hexagon
and $1$ stem if it is the corner of a square.  Note that the output
$U$ of \aps given by Theorem~\ref{th:uni} is an element of
$\mathcal U_r(n)$.

Similarly to the planar case~\cite{PS06}, we define a general way to
reattach step by step all the stems of an element $U$ of
$\mathcal U_r(n)$.  Let $U_0=U$, and, for $1\leq k \leq 2n-1$, let
$U_{k}$ be the map obtained from $U_{k-1}$ by reattaching one of its
stem (we explicit below which stem is reattached and how). The
\emph{special face of $U_0$} is its only face. For
$1\leq k \leq 2n-1$, the \emph{special face of $U_{k}$} is the face on
the right of the stem of $U_{k-1}$ that is reattached to obtain
$U_{k}$.  For $0\leq k\leq 2n-1$, the border of the special face of
$U_k$ consists of a sequence of edges and stems. We define an
\emph{admissible triple} as a sequence $(e_1,e_2,s)$, appearing in
\ccw order along the border of the special face of $U_k$, such that
$e_1=(u,v)$ and $e_2=(v,w)$ are edges of $U_k$ and $s$ is a stem
attached to $w$. The \emph{closure} of the admissible triple consists
in attaching $s$ to $u$, so that it creates an edge $(w,u)$ oriented
from $w$ to $u$ and so that it creates a triangular face $(u,v,w)$ on
its left side.  The \emph{complete closure} of $U$ consists in closing
a sequence of admissible triple, i.e.  for $1\leq k \leq 2n-1$, the
map $U_{k}$ is obtained from $U_{k-1}$ by closing any admissible
triple.

Note that, for $0\leq k\leq 2n-1$, the special face of $U_k$ contains
all the stems of $U_k$. The closure of a stem reduces the number of
edges on the border of the special face and the number of stems by
$1$. At the beginning, the unicellular map $U_0$ has $n+1$ edges and
$2n-1$ stems. So along the border of its special face, there are
$2n+2$ edges and $2n-1$ stems. Thus there is exactly three more edges
than stems on the border of the special face of $U_0$ and this is
preserved while closing stems. So at each step there is necessarily at
least one admissible triple and the sequence $U_k$ is well defined.
Since the difference of three is preserved, the special face of
$U_{2n-2}$ is a quadrangle with exactly one stem. So the reattachment
of the last stem creates two faces that have size three and at the end
$U_{2n-1}$ is a toroidal triangulation.  Note that at a given step
there might be several admissible triples but their closure are
independent and the order in which they are performed does not modify
the obtained triangulation $U_{2n-1}$.

We now apply the closure method to our particular case.  Consider a
toroidal triangulation $G$, a root angle $a_0$ that is not in the
strict interior of a separating triangle and the orientation of the
edges of $G$ corresponding to the minimal balanced Schnyder wood w.r.t.~the
root face $f_0$. Let $U$ be the output of \aps
applied on $(G,a_0)$.

\begin{lemma}
\label{lem:stemright}
When a stem of $U$ is reattached to form the corresponding edge of
$G$, it splits the (only) face of $U$ into two faces. The root angle
of $U$ is in the face that is on the right side of the stem.
\end{lemma}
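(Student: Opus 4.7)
The plan is to analyze the Hamiltonian cycle $C$ in the angle graph produced by \aps (Proposition~\ref{prop:terminates} and Lemma~\ref{lem:ham}) and locate the root angle $a_0$ relative to the ``would-be'' edge $e$ corresponding to the stem.

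First I would identify the two distinguished iterations of the algorithm associated with the stem $s$ incident to $v$. Since $s$ is created in Case~2, there is a unique iteration $k_1$ at which \aps processes the couple $(v,e)$ with $e$ unmarked and outgoing from $v$. Since $e$ is oriented from $v$ to $v'$ in $G$, the only other iteration where $e$ is considered is a strictly later iteration $k_2>k_1$ at which \aps processes $(v',e)$ with $e$ marked and incoming, triggering Case~3.

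Next I would identify the four angles of $G$ incident to $e$: let $a$ (resp.\ $a'$) denote the angle at $v$ just before (resp.\ just after) $e$ in counterclockwise order around $v$, and symmetrically define $b$ and $b'$ at $v'$. Under the algorithm's convention that $(v,e)$ encodes the angle at $v$ just before $e$ in ccw order, the angles visited at iterations $k_1$, $k_1+1$, $k_2$, $k_2+1$ are respectively $a$, $a'$, $b$, $b'$. Inspecting the four cases of the algorithm (cf.\ Figure~\ref{fig:anglerulesum}) shows that the only iterations whose transition of $C$ on the surface actually crosses the edge $e$ are those at $k_1$ (Case~2, crossing $e$ near $v$) and at $k_2$ (Case~3, crossing $e$ near $v'$); all other transitions move inside a single face of $G$ or along an edge distinct from $e$. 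A local check using the orientation conventions shows that $a$ and $b'$ lie on the right side of $e$ (viewed as oriented from $v$ to $v'$), while $a'$ and $b$ lie on the left.

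Since $C$ crosses $e$ only at these two places, all angles visited during iterations $k_1+1,\ldots,k_2$ lie on the left side of $e$, while all angles visited during iterations $k_2+1,\ldots,|C|-1,0,1,\ldots,k_1$ lie on the right side. As $k_1\geq 0$ and $k_2\geq 1$, iteration $0$ (at which $a_0$ is visited) falls in the second cyclic range, so $a_0$ lies on the right side of $e$. When the stem is reattached to form $e$, the resulting edge is a chord of the unique face of $U$ which splits it into two subfaces, one on each side of $e$; hence $a_0$ belongs to the subface on the right side of the stem.

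The main obstacle is the orientation bookkeeping: consistently translating ``angle just before $e$ in ccw order at $v$'' into ``right side of $e$ viewed as oriented from $v$ to $v'$'' at both endpoints, and verifying that the transitions of $C$ drawn on the surface cross $e$ only at those two moments. Once this bookkeeping is set up, the conclusion follows directly from the cyclic position of iteration $0$ relative to $k_1<k_2$.
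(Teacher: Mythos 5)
Your proposal is correct and takes essentially the same approach as the paper. Both rest on the Hamiltonian cycle $C$ from Lemma~\ref{lem:ham} and the total order it induces on angles: the paper phrases the conclusion as the comparison $a_s < a'_s$ between the $U$-angle before the stem and the $U$-angle where it reattaches (derived from the same constraint you establish, namely that $(v,e)$ must be processed strictly before $(v',e)$ since otherwise Case~1 would consume $e$), while you make the two crossings of $e$ by $C$ and the resulting Jordan-curve separation of the disk explicit — equivalent reasoning, just spelled out at a lower level.
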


\begin{proof}
  By Lemma~\ref{lem:ham}, the execution of \aps corresponds to a
  Hamiltonian cycle $C=(a_0, \ldots, a_{2m},a_0)$ in the angle graph
  of $G$. Thus $C$ defines a total order $<$ on the angles of $G$
  where $a_i< a_j$ if and only if $i< j$.  Let us consider now the
  angles on the face of $U$.  Note that such an angle corresponds to
  several angles of $G$, that are consecutive in $C$ and that are
  separated by a set of incoming edges of $G$ (those incoming edges
  corresponding to stems of $U$).  Thus the order on the angles of $G$
  defines automatically an order on the angles of $U$.  The angles of
  $U$ considered in \cw order along the border of its face, starting
  from the root angle, correspond to a sequence of strictly
  increasing angles for $<$.

  Consider a stem $s$ of $U$ that is reattached to form an edge $e$ of
  $G$.  Let $a_s$ be the angle of $U$ that is situated just before $s$
  (in \cw order along the border of the face of $U$) and $a'_s$ be the
  angle of $U$ where $s$ should be reattached.  If $a'_s< a_s$, then
  when \aps consider the angle $a_s$, the edge corresponding to $s$ is
  already marked and we are not in Case~2 of \aps. So $a_s< a'_s$ and
  $a_0$ is on the right side of $s$.
\end{proof}

Recall that $U$ is an element of $\mathcal U_r(n)$ so we can apply on
$U$ the complete closure procedure described above.  We use the same
notation as before, i.e. let $U_0=U$ and for $1\leq k \leq 2n-1$, the
map $U_{k}$ is obtained from $U_{k-1}$ by closing any admissible
triple.  The following lemma shows that the triangulation obtained by
this method is $G$:

\begin{lemma}
\label{lem:stemstep}
  The complete closure of $U$ is $G$,  i.e. $U_{2n-1}=G$.
\end{lemma}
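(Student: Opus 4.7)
My plan is to proceed by induction on $k$ (from $0$ to $2n-1$) to show that each intermediate map $U_k$ is a ``partially opened'' version of $G$: namely, there is an embedding-preserving identification under which (i) the non-special faces of $U_k$ coincide with a subset of the triangular faces of $G$, (ii) each edge of $U_k$ is an edge of $G$, and (iii) each stem of $U_k$ has a well-defined ``target vertex'' in $G$, namely the other endpoint of the outgoing edge of $G$ whose Case~2 invocation during \aps created that stem. Under this identification, each closure step must reattach a stem to its target.

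The base case $k=0$ follows from Lemmas~\ref{lem:ham} and~\ref{lem:unicellular} together with the definition of Case~2: the unicellular map $U$ is embedded in the torus exactly as in $G$, and each stem at a vertex $w$ stands for an outgoing edge of $G$ at $w$ whose other endpoint is its target. For the inductive step, given an admissible triple $(e_1,e_2,s)$ with $e_1=(u,v)$, $e_2=(v,w)$, and $s$ a stem at $w$, I would show that the target of $s$ is $u$ and that $(u,v,w)$ is a face of $G$; the closure then correctly glues in this face and the induction proceeds.

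To prove the inductive step, I would exploit the Hamiltonian cycle $C$ in the angle graph given by Lemma~\ref{lem:ham}. By admissibility of the triple, the angles at $u$, $v$, $w$ adjacent to $e_1$, $e_2$, and $s$ appear in this \ccw order along the boundary of the special face of $U_k$, with no intervening special-face edges or stems. I would then track the subwalk of $C$ beginning just after the Case~2 event at $w$ that created $s$: because $G$ is a triangulation and every angle of $G$ is visited exactly once, this subwalk can only wind around a single triangular face $F$ of $G$ incident to the edge corresponding to $s$. Lemma~\ref{lem:stemright} places the root angle on the right side of $s$, hence $F$ lies on the left of $s$ in $G$'s embedding. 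The inductive hypothesis, applied to $U_k$, then pins down the third vertex of $F$ as $u$, since $u$ is the unique vertex reachable along the boundary of the special face of $U_k$ by an edge into $v$ in the correct angular position.

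The main obstacle is this bookkeeping in the inductive step: verifying rigorously that every admissible triple produced by the closure procedure corresponds to a genuine triangular face of $G$, and that the order in which admissible triples become available is compatible with the face structure of $G$. This relies crucially on the minimal balanced hypothesis, which through Lemma~\ref{lem:ham} guarantees that $C$ visits every angle of $G$ and through Lemma~\ref{lem:stemright} fixes the side of each stem on which its missing face sits. Once the induction is complete, $U_{2n-1}$ has the same $n$ vertices, $3n$ edges, and $2n$ triangular faces as $G$, with the same toroidal embedding; therefore $U_{2n-1}=G$.
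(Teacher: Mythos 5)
Your overall framework is the same as the paper's: induct on $k$ and show that each closure step glues in a genuine face of $G$. But the crucial inductive step—showing that the admissible triple $(e_1,e_2,s)$ with $e_1=(u,v)$, $e_2=(v,w)$ necessarily closes into a triangular face $(u,v,w)$ of $G$—is where you wave your hands, and I don't think the route you sketch would go through. You propose to "track the subwalk of $C$ beginning just after the Case~2 event at $w$ that created $s$" and claim it "winds around a single triangular face $F$." That is not what the angle walk does: after the Case~2 step at $w$, the algorithm moves to the next angle around $w$ and then wanders through the rest of the map. The walk $C$ does not stay inside a single face of $G$, and the closure procedure doesn't follow $C$ at all—it independently picks admissible triples from the boundary of the special face. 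So the bridge you want between $C$ and the closure steps isn't there as stated.

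The actual difficulty, which you flag as "the main obstacle" without resolving, is to rule out the possibility that between the angles $a_v$ (between $e_1,e_2$) and $a_w$ (between $e_2,s$) in the special face of $U_k$, some stem $s'$ of $U_k$ must still be inserted before $(u,v,w)$ could be a face. The paper's proof handles this by contradiction: if $(u,v,w)$ is not a face of $G$, then (because $G$ is a triangulation) at least one stem $s'$ must eventually attach into $a_v$ or $a_w$; choosing the one nearest to $s$ and analysing the two cases for its target forces $s$ and $s'$ to be nested so that their right sides are disjoint, which contradicts Lemma~\ref{lem:stemright} (root angle must be on the right of both). That interlocking-stems contradiction is the heart of the lemma, and your proposal neither reproduces it nor offers a substitute. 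You identified the right supporting lemmas, but the key argument is missing.
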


\begin{proof}
  We prove by induction on $k$ that every face of $U_k$ is a face of
  $G$, except for the special face.  This is true for $k=0$ since
  $U_0=U$ has only one face, the special face.  Let
  $0\leq k\leq 2n-2$, and suppose by induction that every non-special
  face of $U_k$ is a face of $G$.  Let $(e_1,e_2,s)$ be the admissible
  triple of $U_k$ such that its closure leads to $U_{k+1}$, with
  $e_1=(u,v)$ and $e_2=(v,w)$. The closure of this triple leads to a
  triangular face $(u,v,w)$ of $U_{k+1}$. This face is the only
  ``new'' non-special face while going from $U_k$ to $U_{k+1}$.

  Suppose, by contradiction, that this face $(u,v,w)$ is not a face of
  $G$.  Let $a_v$ (resp. $a_w$) be the angle of $U_k$ at the special
  face, between $e_1$ and $e_2$ (resp. $e_2$ and $s$).  Since $G$ is a
  triangulation, and $(u,v,w)$ is not a face of $G$, there exists at
  least one stem of $U_k$ that should be attached to $a_v$ or $a_w$ to
  form a proper edge of $G$. Let $s'$ be such a stem that is the
  nearest from $s$. In $G$ the edges corresponding so $s$ and $s'$
  should be incident to the same triangular face. Let $x$ be the
  origin of the stem $s'$.  Let $z\in \{v,w\}$ such that $s'$ should
  be reattached to $z$.  If $z=v$, then $s$ should be reattached to
  $x$ to form a triangular face of $G$. If $z=w$, then $s$ should be
  reattached to a common neighbor of $w$ and $x$ located on the border
  of the special face of $U_k$ in \ccw order between $w$ and $x$. So
  in both cases $s$ should be reattached to a vertex $y$ located on
  the border of the special face of $U_k$ in \ccw order between $w$
  and $x$ (with possibly $y=x$). To summarize $s$ goes from $w$ to $y$
  and $s'$ from $x$ to $z$, and $z,x,y,w$ appear in \cw order along
  the special face of $U_k$. By Lemma~\ref{lem:stemright}, the root
  angle is on the right side of both $s$ and $s'$, this is not
  possible since their right sides are disjoint, a contradiction.

  So for $0\leq k\leq 2n-2$, all the non-special faces of $U_k$ are
  faces of $G$. In particular every face of $U_{2n-1}$ except one is a
  face of $G$. Then clearly the (triangular) special face of
  $U_{2n-1}$ is also a face of $G$, hence $U_{2n-1}=G$.
\end{proof}

Lemma~\ref{lem:stemstep} shows that one can recover the original
triangulation from $U$ with any sequence of admissible triples
that are closed successively. This does not explain how to find the
admissible triples efficiently. In fact the root angle can be used to
find a particular admissible triple of $U_k$:

\begin{lemma}
\label{lem:firststem}
For $0\leq k\leq 2n-2$, let $s$ be the first stem met while walking
\ccw from $a_0$ in the special face of $U_k$. Then before $s$, at
least two edges are met and  the last two of these edges form an
admissible triple with $s$.
\end{lemma}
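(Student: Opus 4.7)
The plan is to exploit a correspondence between the ccw walk around the special face of $U_k$ starting from $a_0$ and the reverse execution of \apss. The rule of line~4 together with the vertex-moves of Cases~1 and~4 means that \aps walks along the unique face of $U$ with the face on its right, i.e.\ in the cw direction; hence the ccw walk from $a_0$ is the reverse of the execution, restricted to the boundary of the special face. In particular the first stem $s$ met in the ccw walk corresponds to the stem produced by the \emph{last} Case~2 event of the execution, and the things lying between $s$ and $a_0$ in the ccw walk correspond, in reverse order, to the Case~1 and Case~4 actions performed in this tail of the execution (Case~3 events contribute nothing to the face boundary of $U$).

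I shall first handle the base case $k=0$. Let $w$ be the vertex at which the last Case~2 creates $s$. The key structural claim is that the first non-Case-3 action after this last Case~2 is a Case~1 on an edge $e_2$ entering $w$, and the next non-Case-3 action is a Case~1 on an edge $e_1$ entering the origin $v$ of $e_2$. Granting these two claims, the correspondence immediately gives $e_2=(v,w)$ and $e_1=(u,v)$, both forward-traversed in the ccw walk and lying just before $s$, so that $(e_1,e_2,s)$ satisfies the definition of an admissible triple; parts (i) and (ii) of the lemma then both follow.

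The crux of the argument is excluding Case~4 as the first non-Case-3 action after $s$. If it were Case~4, then \aps would traverse an already-marked outgoing edge $e^\ast$ of $w$; the Schnyder property at $w$ forces $e^\ast$ to be the next outgoing edge $e_{i+1}(w)$ after $s=e_i(w)$ in the ccw order around $w$, and every entering edge of $w$ strictly between $s$ and $e^\ast$ must already have been produced as a stem by an earlier Case~2 at its other endpoint. I plan to combine this local configuration with the minimality of the balanced Schnyder wood to derive a clockwise non-empty $0$-homologous oriented subgraph with respect to $f_0$, in contradiction with Lemma~\ref{prop:maximal}. The same argument applied at $v$ in place of $w$ then handles $e_1$.

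The inductive step $k\mapsto k+1$ is straightforward: closing the admissible triple $(e_1,e_2,s)$ in $U_{k-1}$ only replaces the consecutive segment $e_1,e_2,s$ in the boundary of the special face of $U_{k-1}$ by a single backward traversal of the newly attached edge from $u$ to $w$, leaving the remainder of the ccw walk from $a_0$ unchanged. Hence the first stem met in the ccw walk along the special face of $U_k$ is the next stem of the previous walk, and the base case analysis applies verbatim. The main obstacle will be implementing the minimality argument cleanly, reconciling the local Schnyder structure around $w$ with the global homology constraint in order to extract the forbidden clockwise $0$-homologous subgraph.
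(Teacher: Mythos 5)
Your core observation — that the ccw walk around the face of $U$ is the reverse of the \aps execution, so the first stem $s$ met ccw from $a_0$ is the stem produced by the last Case~2 — is correct (this is essentially what the proof of Lemma~\ref{lem:stemright} establishes). But the route you take from there has a genuine gap, and it bypasses the tool the paper provides to make the proof trivial.

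The crucial step in your plan is to exclude Case~4 as the first (and second) non-Case-3 event after the last Case~2, by producing a clockwise non-empty $0$-homologous oriented subgraph w.r.t.~$f_0$ that contradicts Lemma~\ref{prop:maximal}. You explicitly say this is ``the main obstacle'' and do not carry it out. That step is exactly where all the work lies, and until it is done there is no proof. Moreover, this would amount to reproving, in a more elaborate way, what Lemma~\ref{lem:stemright} already gives you: that when any stem of $U$ is reattached, $a_0$ lies on its right. The paper's own proof of the lemma is a one-line application of Lemma~\ref{lem:stemright}: if zero or one edges precede $s$ in the ccw walk from $a_0$, then after reattaching $s$ the root angle would necessarily be on the left of $s$, contradicting Lemma~\ref{lem:stemright}; hence at least two edges precede $s$, and since $s$ is the \emph{first} stem met, those two edges are consecutive with $s$, which is precisely the definition of an admissible triple.

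There is a second issue with the inductive step $k\mapsto k+1$. You assert that the base case ``applies verbatim'' after the closure, but for $k>0$ the boundary of the special face of $U_k$ can contain edges that were created by earlier closures and therefore do not correspond to any Case~1 event of the original \aps execution. Your analysis identifies $e_1,e_2$ as Case~1 events of the execution, so it does not apply verbatim to such edges. The paper avoids this entirely: Lemma~\ref{lem:stemright} concerns stems of $U=U_0$, every stem of $U_k$ is still a stem of $U_0$, and the position of $a_0$ relative to the stem once reattached is an intrinsic feature of the embedding in $G$, independent of which $U_k$ one is looking at — so the same one-line argument works uniformly for all $k$.
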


\begin{proof}
  Since $s$ is the first stem met, there are only edges that are met
  before $s$. Suppose by contradiction that there is only zero or one
  edge met before $s$. Then the reattachment of $s$ to form the
  corresponding edge of $G$ is necessarily such that the root angle is
  on the left side of $s$, a contradiction to
  Lemma~\ref{lem:stemright}. So at least two edges are met before $s$
  and the last two of these edges form an admissible triple with $s$.
\end{proof}

Lemma~\ref{lem:firststem} shows that one can reattach all the stems
by walking once along the face of $U$ in \ccw order. Thus we obtain
Theorem~\ref{th:recover}.

Note that $U$ is such that the complete closure procedure described
here never \emph{wraps over the root angle}, i.e. when a stem is
reattached, the root angle is always on its right side (see
Lemma~\ref{lem:stemright}). The property of never wrapping over the
root angle is called \emph{safe}. Usually, this property is called
balanced in the literature but the word balanced is already used in
this manuscript and in this chapter also.  Let $\mathcal U_{r,s}(n)$
denote the set of elements of $\mathcal U_r(n)$ that are safe. So the
output $U$ of \aps given by Theorem~\ref{th:uni} is an element of
$\mathcal U_{r,s}(n)$.  We exhibit in Section~\ref{sec:bijection} a
bijection between appropriately rooted toroidal triangulations and a
particular subset of $\mathcal U_{r,s}(n)$.

The possibility to close admissible triples in any order to recover
the original triangulation is interesting compared to the simpler
method of Theorem~\ref{th:recover} since it enables to recover the
triangulation even if the root angle is not given.  This property is
used in Section~\ref{sec:bij2} to obtain a bijection between toroidal
triangulations and some unrooted unicellular maps.

Moreover if the root angle is not given, then one can simply start
from any angle of $U$, walk twice around the face of $U$ in \ccw order
and reattach all the admissible triples that are encountered along
this walk. Walking twice ensures that at least one complete round is
done from the root angle. Since only admissible triples are
considered, we are sure that no unwanted reattachment is done during
the process and that the final map is $G$. This enables to reconstruct
$G$ in linear time even if the root angle is not known. This property
is used in Section~\ref{sec:coding}.

\section{Optimal encoding}
\label{sec:coding}
The results presented in the previous sections allow us to generalize
the encoding of planar triangulations, defined by Poulalhon and
Schaeffer~\cite{PS06}, to triangulations of the torus. The
construction is direct and it is hence really different from the one
of~\cite{CFL10} where triangulations of surfaces are cut in order to
deal with planar triangulations with boundaries. Here we encode the
unicellular map outputted by \aps by a plane rooted tree with
$n$ vertices and with exactly two stems attached to each vertex, plus
$O(\log(n))$ bits.  As in~\cite{CFL10}, this encoding is
asymptotically optimal and uses approximately $3.2451 n$ bits. The
advantage of our method is that it can be implemented in linear
time. Moreover we believe that our encoding gives a better
understanding of the structure of triangulations of the torus. It is
illustrated with new bijections that are obtained in
Sections~\ref{sec:bijection} and~\ref{sec:bij2}.

Consider a toroidal triangulation $G$, a root angle $a_0$ that is not
in the strict interior of a separating triangle and the orientation of
the edges of $G$ corresponding to the minimal balanced Schnyder wood
w.r.t.~the root face $f_0$. Let $U$ be the output of \aps applied on
$(G,a_0)$. As already mentioned at the end of Section~\ref{sec:close},
to retrieve the triangulation $G$ one just needs to know $U$ without
the information of its root angle (by walking twice around the face of
$U$ in \ccw order and reattaching all the admissible triples that are
encountered along this walk, one can recover $G$). Hence to encode
$G$, one just has to encode $U$ without the position of the root angle
around the root vertex (see Figure~\ref{fig:Coding}.(a)).

By Lemma~\ref{lem:PDconnected}, the unicellular map $U$ contains a
spanning tree $T$ which is oriented from the leaves to the root
vertex. The tree $T$ contains exactly $n-1$ edges, so there is exactly
$2$ edges of $U$ that are not in $T$. We call these edges the
\emph{special edges} of $U$. We cut these two special edges to
transform them into stems of $T$ (see Figures~\ref{fig:Coding}.(a)
and~(b)). We keep the information of where are the special stems in
$T$ and on which angle of $T$ they should be reattached. This
information can be stored with $O(\log(n))$ bits.  One can recover $U$
from $T$ by reattaching the special stems in order to form
non-contractible cycles with $T$ (see Figure~\ref{fig:Coding}.(c)).

\begin{figure}[h!]
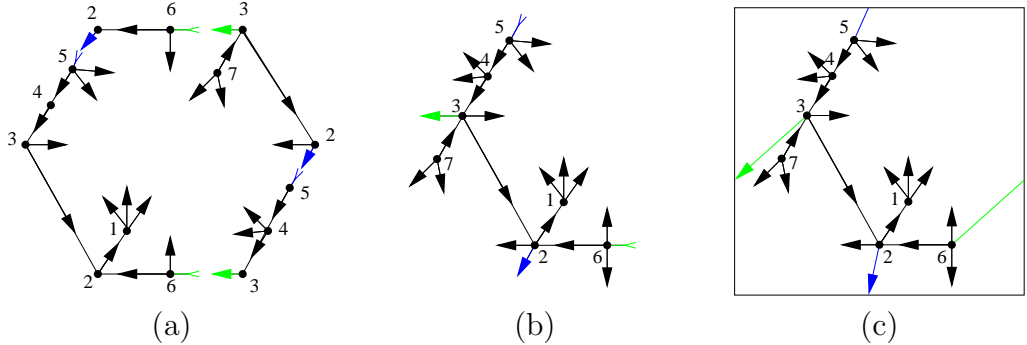

\center
\begin{tabular}{ccc}
\includegraphics[scale=0.3]{tore-tri-exe-9-} \ & \
\includegraphics[scale=0.3]{tore-tri-exe-10-} \ & \ 
\includegraphics[scale=0.3]{tore-tri-exe-11-} \\
(a) \ &\ (b) \ &\ (c) \\
\end{tabular}
\caption{From unicellular maps to trees with special stems and back.}
\label{fig:Coding}
\end{figure}

So $T$ is a plane tree on $n$ vertices, each vertex having $2$ stems
except the root vertex $v_0$ having three stems. Choose any stem $s_0$
of the root vertex, remove it and consider that $T$ is rooted at the
angle where $s_0$ should be attached. The information of the root
enables to put back $s_0$ at its place. So now we are left with a
rooted plane tree $T$ on $n$ vertices where each vertex has exactly
$2$ stems (see Figure~\ref{fig:CodingTree}.(a)).

This tree $T$ can easily be encoded by a binary word on $6n-2$ bits:
that is, walking in \ccw order around $T$ from the root angle, writing
a ``1'' when going down along $T$, and a ``0'' when going up along $T$
(see Figure~\ref{fig:CodingTree}.(a)). As in~\cite{PS06}, one can
encode $T$ more compactly by using the fact that each vertex has
exactly two stems.  Thus $T$ is encoded by a binary word on $4n-2$
bits: that is, walking in \ccw order around $T$ from the root angle,
writing a ``1'' when going down along an edge of $T$, and a ``0'' when
going up along an edge or along a stem of $T$ (see
Figure~\ref{fig:CodingTree}.(b) where the ``red 1's'' of
Figure~\ref{fig:CodingTree}.(a) have been removed).  Indeed there is
no need to encode when going down along stems, this information can be
retrieved afterward. While reading the binary word to recover $T$,
when a ``0'' is met, we should go up in the tree, except if the
vertex that we are considering does not have already its two stems,
then in that case we should create a stem (i.e. add a ``red 1'' before
the ``0''). So we are left with a binary word on $4n-2$ bits with
exactly $n-1$ bits ``1'' and $3n-1$ bits ``0''.

\begin{figure}[h!]
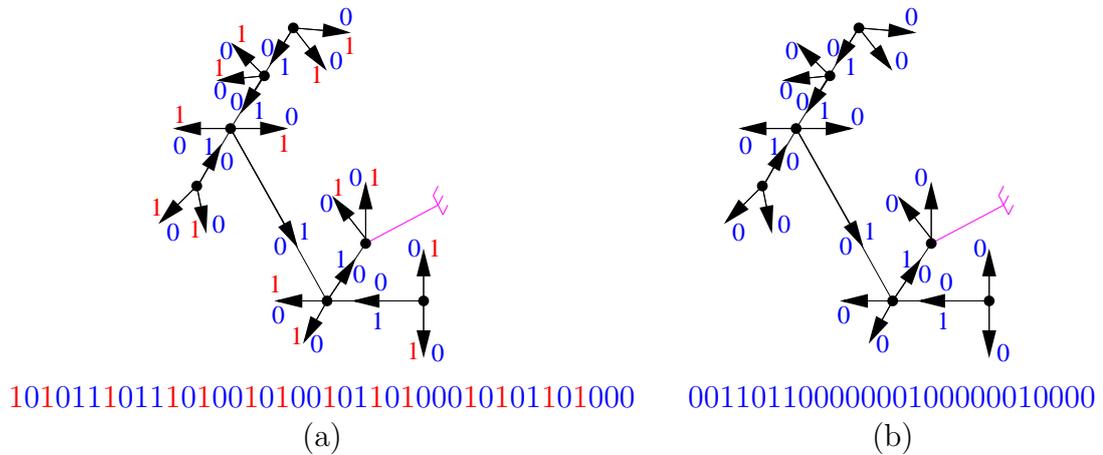

\center
\begin{tabular}{cc}
\includegraphics[scale=0.4]{tree-1}\ & \
\includegraphics[scale=0.4]{tree-2}\\

{\color{blue}{\color{red}1}0{\color{red}1}011{\color{red}1}011{\color{red}1}0{\color{red}1}00{\color{red}1}0{\color{red}1}00{\color{red}1}01{\color{red}1}0{\color{red}1}000{\color{red}1}0{\color{red}1}01{\color{red}1}0{\color{red}1}000}  \ & \
{\color{blue}00110110000000100000010000} \\
(a) \ &\ (b) \\
\end{tabular}
\caption{Encoding a rooted tree with two stems at each vertex.}
\label{fig:CodingTree}
\end{figure}

Similarly to~\cite{PS06}, using \cite[Lemma~7]{BGH03}, this word can
then be encoded with a binary word of length
$\log_2\binom{4n-2}{n-1}+o(n)\sim n\, \log_2(\frac{256}{27})\approx
3.2451\,n$
bits. Thus we have the following theorem whose linearity is discussed
below:

\begin{theorem}
\label{th:encoding}
  Any toroidal triangulation on $n$ vertices, can be encoded with a
  binary word of length $3.2451 n +o(n)$ bits, the encoding and
  decoding being linear in $n$.
\end{theorem}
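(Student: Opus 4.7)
The plan is to formalize the encoding/decoding pipeline already sketched in the paragraphs preceding the statement, and then verify both the bit-count asymptotics and the linear time complexity of each stage. First I would fix a root angle $a_0$ of $G$ that does not lie in the strict interior of a separating triangle (such an angle exists by the observation at the start of Section~\ref{sec:open}), compute the balanced Schnyder wood minimal with respect to the root face $f_0$, and apply \aps to produce the toroidal unicellular map $U$ of Theorem~\ref{th:uni}. Then I would carve a rooted spanning tree $T$ out of $U$ by cutting the two non-tree edges into stems and storing, in $O(\log n)$ bits, their identities and the angles to which they must be reattached. Finally I would remove one of the three stems of the root vertex (using $O(\log n)$ bits to record its position among the three), obtaining a plane tree on $n$ vertices, each vertex carrying exactly two stems, and encode this tree by the $(4n-2)$-bit word described in Section~\ref{sec:coding} (writing \textbf{1} when going down an edge and \textbf{0} otherwise when doing a counterclockwise contour walk from the root angle).

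To get the promised bit-length, I would invoke \cite[Lemma~7]{BGH03} on this $(4n-2)$-bit word which has exactly $n-1$ ones and $3n-1$ zeros. That lemma compresses such a word into $\log_2\binom{4n-2}{n-1}+o(n)$ bits, and a direct application of Stirling's formula gives
\[
\log_2\binom{4n-2}{n-1} \;=\; n\,\log_2\!\Bigl(\tfrac{256}{27}\Bigr) + o(n) \;\approx\; 3.2451\,n,
\]
absorbing the $O(\log n)$ overhead from the position of the root stem and the identifications of the two special edges into the $o(n)$ term.

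For the correctness of decoding I would proceed in reverse: the $(4n-2)$-bit word is decoded, as explained in Section~\ref{sec:coding}, by scanning left to right and inserting a missing downward move along a stem whenever a vertex has fewer than two stems at the moment we are about to ascend; this recovers the plane tree $T$. The auxiliary $O(\log n)$-bit header then tells us where to glue back the root stem and how to promote the two distinguished stems into the two extra edges of the unicellular map $U$. Finally, I recover $G$ from $U$ without needing the root angle: by the remark following Lemma~\ref{lem:stemstep} and Lemma~\ref{lem:firststem}, starting at any angle of the unique face of $U$ and walking twice around it in counterclockwise order while closing every admissible triple encountered yields exactly the triangulation $G$.

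The heart of the argument is then the linear time complexity, and this is where the main technical obstacle lies, since a naive implementation of several of these steps would be superlinear. For the encoding side I would rely on: (i) the linear-time algorithm for computing a balanced Schnyder wood by edge-contractions discussed after Theorem~\ref{th:proofbycontractiontri}; (ii) the linear-time method from Section~\ref{sec:lattice} (via reversal of directed cuts in the dual) for passing from an arbitrary balanced orientation to the minimal one with respect to $f_0$; (iii) the fact that \aps performs one constant-time operation per edge-marking event and terminates after visiting each angle at most once by Proposition~\ref{prop:terminates} and Lemma~\ref{lem:ham}; (iv) a single linear pass over $U$ to identify a spanning tree and the two non-tree edges; and (v) the linearity claim in \cite{BGH03} for the final binomial compression. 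For the decoding side, the only nontrivial part is the reattachment of stems, which runs in linear time because a stack-based implementation of the closure procedure processes each edge/stem a bounded number of times while walking around the special face of $U$ at most twice, as observed at the end of Section~\ref{sec:close}. Combining these ingredients gives the bit bound and the linear time complexity claimed in Theorem~\ref{th:encoding}.
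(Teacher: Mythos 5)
Your proposal reproduces the paper's encoding pipeline step for step (root angle avoiding separating-triangle interiors, minimal balanced Schnyder wood, \aps output $U$, cut two non-tree edges into stems with an $O(\log n)$ header, strip one root stem, $(4n-2)$-bit contour word, binomial compression via \cite[Lemma~7]{BGH03}) and the same linearity argument (edge-contraction construction, directed-cut reversal to reach the minimum, single pass of \aps, double walk around the special face for closure). This is essentially the paper's own proof, stated correctly.
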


The encoding method described here, that is encoding a toroidal
triangulation via an unicellular map and recovering the original
triangulation, can be performed in linear time.  The only difficulty
lies in providing \aps with the appropriate input it needs in order to
apply Theorem~\ref{th:uni}. Then clearly the execution of \apss, the
encoding phase and the recovering of the triangulation are linear like
in the planar case.  Thus we have to show how one can find in linear
time a root angle $a_0$ that is not in the strict interior of a
separating triangle, as well as the minimal balanced Schnyder wood
w.r.t.~the root face $f_0$.

Consider a balanced Schnyder wood of $G$ that can be computed in
linear time as explained in Section~\ref{sec:contractionproof}.  From
this balanced Schnyder wood, one can compute in linear time a root
angle $a_0$ not in the strict interior of a separating triangle. First
note that in a $3$-orientation of a toroidal triangulation, the edges
that are inside a separating triangle and that are incident to the
three vertices on the border are all oriented toward these three
vertices by Euler's formula. Thus an oriented non-contractible cycle
cannot enter in the interior of a separating triangle. Now follow any
oriented monochromatic path of the Schnyder wood and stop the first
time this path is back to a previously met vertex $v_0$. The end of
this path forms an oriented monochromatic cycle $C$ containing $v_0$.
 Thus $C$ is an oriented non-contractible cycle
and cannot contain some vertices that are in the interior of a
separating triangle. So $v_0$ is not in the interior of a separating
triangle and we can choose as root angle $a_0$ any angle incident to
$v_0$.  Then using the method described in Section~\ref{sec:lattice}.
one can transform the balanced Schnyder wood of $G$ into a minimal
balanced Schnyder wood w.r.t.~the root face $f_0$.

\section{Bijection with rooted unicellular maps}
\label{sec:bijection}
 
Given a toroidal triangulation $G$ with a root angle $a_0$, we have
defined a unique associated orientation: the minimal balanced Schnyder wood
w.r.t.~the root face $f_0$. Suppose that $G$ is
oriented according to the minimal balanced Schnyder wood.  If $a_0$ is not
in the strict interior of a separating triangle then
Theorems~\ref{th:uni} and~\ref{th:recover} show that the execution of
\aps on $(G,a_0)$ gives a toroidal unicellular map with stems from
which one can recover the original triangulation. Thus there is a
bijection between toroidal triangulations rooted from an appropriate
angle and their image by \apss. The goal of this section is to
describe this image.

Recall from Section~\ref{sec:close} that the output of \aps on
$(G,a_0)$ is an element of $\mathcal U_{r,s}(n)$.  One may hope that
there is a bijection between toroidal triangulations rooted from an
appropriate angle and $\mathcal U_{r,s}(n)$ since this is how it works
in the planar case. Indeed, given a planar triangulation $G$, there is
a unique orientation of $G$ (the minimal Schnyder wood) on which
\apss, performed from an outer angle, outputs a spanning tree. In the
toroidal case, things are more complicated since the behavior of \aps
on minimal balanced Schnyder woods does not characterize such orientations.

Figure~\ref{fig:twounicellular} gives an example of two
(non homologous) orientations of the same triangulation that are both
minimal w.r.t.~the same root face. For these two orientations, the
execution of \aps from the same root angle gives two different
elements of $\mathcal U_{r,s}(2)$ (from which the original
triangulation can be recovered by the method of
Theorem~\ref{th:recover}). 

\begin{figure}[h!]
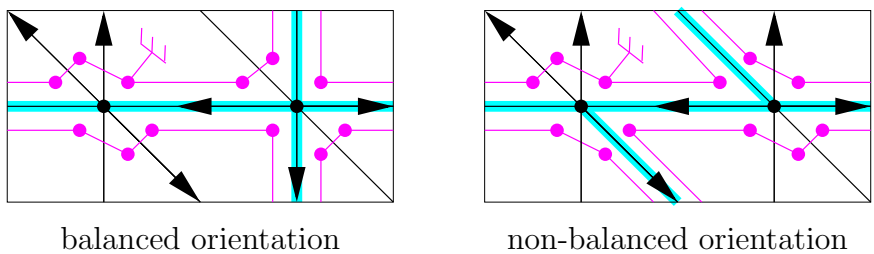

\center
\begin{tabular}[c]{cc}
\includegraphics[scale=0.5]{contre-exemple-5}  \ \ & \ \
\includegraphics[scale=0.5]{contre-exemple-4} \\
balanced orientation \ \ &\ \ non-balanced orientation \\
\end{tabular}
\caption{A graph that can be represented by two different unicellular
  maps.}
\label{fig:twounicellular}
\end{figure}

Let us translate the balanced property on $\mathcal U_{r}(n)$.
Consider an element $U$ of $\mathcal U_{r}(n)$ whose edges and stems
are oriented w.r.t.~the root angle as follows: the stems are all
outgoing, and while walking \cw around the unique face of $U$ from
$a_0$, the first time an edge is met, it is oriented \ccw w.r.t.~the
face of $U$. Then one can compute $\gamma$ (define in
Section~\ref{sec:gencharacterization}) on the cycles of $U$ (edges and
stems count).  We say that an unicellular map of $\mathcal U_{r}(n)$
satisfies the $\gamma_0$ property if $\gamma$ equals zero on its
(non-contractible) cycles. Let us call $\mathcal U_{r,s,\gamma_0}(n)$
the set of elements of $\mathcal U_{r,s}(n)$ satisfying the $\gamma_0$
property. So the output of \aps given by Theorem~\ref{th:uni} is an
element of $\mathcal U_{r,s,\gamma_0}(n)$.

Let $\mc T_r(n)$ be the set of toroidal triangulations on $n$ vertices
rooted at an angle that is not in the strict interior of a separating
triangle. Then we have the following bijection:

 \begin{theorem}
\label{th:bij1}
There is a bijection between $\mc T_r(n)$ and
$\mathcal U_{r,s,\gamma_0}(n)$.
\end{theorem}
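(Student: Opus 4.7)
The plan is to show that the map $\Phi: \mc T_r(n)\to \mathcal U_{r,s,\gamma_0}(n)$ sending $(G,a_0)$ to the output of \aps on $G$ equipped with its minimal balanced Schnyder wood w.r.t.~$f_0$ is a bijection whose inverse $\Psi$ is the complete closure procedure of Section~\ref{sec:close}.

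First I would verify that $\Phi$ is well-defined. Given $(G,a_0)\in\mc T_r(n)$, existence and uniqueness of the minimal balanced Schnyder wood follow from Theorem~\ref{th:proofbycontractiontri} combined with Theorem~\ref{cor:lattice}. Theorem~\ref{th:uni} gives that $U:=\aps(G,a_0)$ is a toroidal spanning unicellular map; counting stems around each vertex (using total outdegree $3n$ together with the corner structure) places $U$ in $\mathcal U_r(n)$; safeness is Lemma~\ref{lem:stemright}; and the $\gamma_0$ property follows from the balanced hypothesis, since the two non-contractible cycles of $U$ forming a homology basis of $G$ are monochromatic cycles or concatenations for which $\gamma=0$ by Lemma~\ref{lem:monocrhomegammaequal0} extended via Lemma~\ref{lem:typegamma0foranybasis}. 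The identity $\Psi\circ\Phi=\id$ on $\mc T_r(n)$ is then Lemma~\ref{lem:stemstep} together with the observation that the root angle is preserved.

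For the other direction $\Phi\circ\Psi=\id$, let $U\in\mathcal U_{r,s,\gamma_0}(n)$, set $G:=\Psi(U)$ with root angle $a_0$ inherited from $U$, and transfer the orientation of $U$ to $G$ (stems become outgoing edges, edges of $U$ keep their orientation). Four things need to be checked: (i) every vertex of $G$ has outdegree $3$, by a direct count using the stem and corner prescription of $\mathcal U_r(n)$ and the fact that $U$ has exactly $n+1$ edges; (ii) the resulting $3$-orientation is a Schnyder orientation, hence a Schnyder wood, by applying Theorem~\ref{th:characterizationgamma} to a homology basis of $G$ consisting of the two non-contractible cycles of $U$, whose $\gamma$-values vanish by the $\gamma_0$ hypothesis; (iii) this Schnyder wood is balanced, which is then Lemma~\ref{lem:typegamma0foranybasis}; and (iv) it is the minimal element of its balanced lattice w.r.t.~$f_0$, which I would deduce from safeness of $U$ via the contrapositive of Lemma~\ref{prop:maximal}: a \cw $0$-homologous oriented subgraph of $G$ not containing $f_0$ would force the closure, when reading the face of $U$ \ccw from $a_0$, to reattach a stem before its target edges are exposed, which would wrap over the root and contradict safeness. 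That $a_0$ is not in the strict interior of a separating triangle follows because such a triangle would, in a $3$-orientation, trap all outdegree in its interior and prevent \aps from producing a unicellular map as in Theorem~\ref{th:uni}. Once (i)--(iv) hold, the orientation attached to $G$ is exactly its minimal balanced Schnyder wood, and Theorem~\ref{th:uni} combined with the order-independence of admissible-triple closures guarantees that \aps reads the same cycle in the angle graph, returning $U$.

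The main obstacle is step (iv): bridging safeness of $U$ to minimality of the induced orientation on $G$. The difficulty is that safeness is a purely combinatorial statement about the order of reattachments during closure, whereas minimality is a statement about $0$-homologous cycles in the primal-dual completion. The bridge I would use is that each admissible-triple closure corresponds to a Case~1--4 angle traversal of \aps on $G$, so a ``wrap'' at a stem closure during the reconstruction corresponds exactly to a cycle of the angle graph that borders a \cw $0$-homologous oriented subgraph of $G$ avoiding $f_0$; translating this correspondence carefully, while keeping track of both the primal edges of $U$ and the dual edges of its stems as in Figure~\ref{fig:anglerulesum}, is the technical heart of the argument.
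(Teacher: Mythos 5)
Your overall architecture — forward map $\Phi$ via \aps on the minimal balanced Schnyder wood, inverse $\Psi$ via complete closure, and the four-point verification (i)--(iv) for the inverse direction — matches the paper's proof. But there are two issues, one minor and one substantive.

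The minor issue concerns the $\gamma_0$ property of the output $U$. You invoke Lemma~\ref{lem:monocrhomegammaequal0} and claim the cycles of $U$ are monochromatic or concatenations of monochromatic cycles, but the cycles of $U$ are just whatever non-contractible cycles happen to survive in the unicellular submap; there is no reason for them to be monochromatic. The correct argument is simpler and does not need that lemma: since edges of $U$ together with stems account for all outgoing edges of $G$ at each vertex, $\gamma$ of a cycle computed inside $U$ agrees with $\gamma$ of that cycle computed in $G$, and the latter is $0$ for every non-contractible cycle by Lemma~\ref{lem:typegamma0foranybasis} because the orientation is balanced.

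The substantive issue is step (iv), which you correctly flag as the technical heart but do not actually close. Your stated bridge — that a clockwise $0$-homologous oriented subgraph ``would force the closure to reattach a stem before its target edges are exposed'' — is not the argument that works, and ``before its target edges are exposed'' is not the right invariant to track: the closure procedure may close any admissible triple, and whether particular edges are ``exposed'' early or late is not controlled by the subgraph $T$ alone. What is missing is the layered decomposition from Section~\ref{sec:lattice}: write $\phi(T)=-\sum_F\lambda_F\phi(F)$ with $\lambda_{f_0}=0$, and form the edge-disjoint layers $T_i$ bounding the sublevel sets $X_i=\{F:\lambda_F\geq i\}$. Then $T_1$ is non-empty and $0$-homologous, and because $U$ is a \emph{unicellular} map (its edges alone contain no $0$-homologous cycle), $T_1$ must contain an edge that is a stem of $U$. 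Taking the \emph{last} such stem $s$ reattached by the closure procedure, at that moment the special face (which always contains $f_0$) lies in the region $X_0$, i.e.\ on the \emph{left} of $s$ — contradicting safeness of $U$. Without this decomposition and the unicellularity-forces-a-stem observation, your correspondence between wraps and clockwise subgraphs remains a heuristic rather than a proof.
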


\begin{proof}
  Consider the mapping $g$ that associates to an element of
  $\mc T_r(n)$, the output of \aps executed on the minimal balanced
  Schnyder wood w.r.t.~the root face. By the above discussion the
  image of $g$ is in $\mathcal U_{r,s,\gamma_0}(n)$ and $g$ is
  injective since one can recover the original triangulation from its
  image by Theorem~\ref{th:recover}.

  Conversely, given an element $U$ of $\mathcal U_{r,s,\gamma_0}(n)$
  with root angle $a_0$, one can build a toroidal map $G$ by the
  complete closure procedure described in Section~\ref{sec:close}. The
  number of stems and edges of $U$ implies that all faces of $G$ are
  triangles.  We explain below why $G$ has no contractible loop nor
  multiple edges.  Recall that $a_0$ defines an orientation on the
  edges and stems of $U$.  Consider the orientation $D$ of $G$ induced
  by this orientation. Since $U$ is safe, the execution of \aps on
  $(G,a_0)$ corresponds to the cycle in the angle graph of $U$
  obtained by starting from the root angle and walking \cw in the face
  of $U$. Thus the output of \aps executed on $(G,a_0)$ is $U$.  First
  note that by definition of $\mathcal U_{r}(n)$, the orientation $D$
  is a $3$-orientation. Thus a simple counting argument shows that $G$
  has no contractible loop nor multiple edges.  It remains to show
  that $G$ is appropriately rooted and that $D$ corresponds to the
  minimal balanced Schnyder wood w.r.t.~this root.

  First note that by definition of $\mathcal U_{r}(n)$, the
  orientation $D$ is a $3$-orientation.

  Suppose by contradiction that $a_0$ is in the strict interior of a
  separating triangle. Then, since we are considering a
  $3$-orientation, by Euler's formula, the edges in the interior of
  this triangle and incident to its border are all entering the
  border. So \aps starting  from the strict interior cannot visit the
  vertices on the border of the triangle and outside. Thus the output
  of \aps is not a toroidal unicellular map, a contradiction. So $a_0$
  is not in the strict interior of a separating triangle.

  The $\gamma_0$ property of $U$ implies that $\gamma$ equals zero on
  two cycles of $U$. Hence these two cycles considered in $G$ also
  satisfy $\gamma$ equals $0$ so $D$ is a balanced Schnyder wood by
  Lemma~\ref{lem:typegamma0foranybasis}.
  
  Suppose by contradiction that $D$ is not minimal. We use the
  terminology and notations of Section~\ref{sec:lattice}. Then $D$
  contains a clockwise (non-empty) $0$-homologous oriented subgraph
  w.r.t.~$f_0$. Let $T$ be such a subgraph with
  $\phi(T)=-\sum_{F\in\mc{F}'}\lambda_F\phi(F)$, with
  $\lambda\in\mathbb{N}^{|\mc{F}'|}$.  Let $\lambda_{f_0}=0$, and
  $\lambda_{\max}=\max_{F\in\mathcal{F}}\lambda_F$.  For
  $0\leq i\leq \lambda_{\max}$, let
  $X_{i}=\{F\in\mathcal{F}\,|\,\lambda_F\geq i\}$.  For
  $1\leq i \leq \lambda_{\max}$, let $T_i$ be the oriented subgraph
  such that $\phi(T_i)=-\sum_{F\in X_i}\phi(F)$.  Then we have
  $\phi(T)=\sum_{1\leq i \leq \lambda_{\max}} \phi(T_i)$.  Since $T$
  is an oriented subgraph, we have
  $\phi(T)\in\{-1,0,1\}^{|E(G)|}$. Thus for any edge of ${G}$,
  incident to faces $F_1$ and $F_2$, we have
  $(\lambda_{F_1}-\lambda_{F_2})\in\{-1,0,1\}$.  So, for
  $1\leq i\leq \lambda_{\max}$, the oriented graph $T_i$ is the
  frontier between the faces with $\lambda$ value equal to $i$ and
  $i-1$.  So all the $T_i$ are edge disjoint and are oriented
  subgraphs of $D$.  Since $T$ is non-empty, we have
  $\lambda_{\max}\geq 1$, and $T_1$ is non-empty.  All the edges of
  $T_1$ have a face of $X_1$ on their right and a face of $X_0$ on
  their left.  Since $U$ is an unicellular map, and $T_1$ is a
  (non-empty) $0$-homologous oriented subgraph, at least one edge of
  $T_1$ corresponds to a stem of $U$. Let $s$ be the last stem of $U$
  corresponding to an edge of $T_1$ that is reattached by the complete
  closure procedure. Consider the step where $s$ is reattached.  As
  the root angle (and thus $f_0$) is in the special face (see the
  terminology of Section~\ref{sec:close}), the special face is in the
  region defined by $X_0$. Thus it is on the left of $s$ when it is
  reattached. This contradicts the fact that $U$ is safe.  Thus $D$ is
  the minimal balanced Schnyder wood w.r.t.~$f_0$.
\end{proof}

\section{Bijection with unrooted unicellular maps}
\label{sec:bij2}

The goal of this section is to remove the root and the safe property
of the unicellular maps considered in Theorem~\ref{th:bij1}. For this
purpose, we have to root the toroidal triangulation more precisely
than before. We say that an angle is not \emph{in the \cw interior of
  a separating triangle} if it is not in its interior, or if it is
incident to a vertex $v$ of the triangle and situated just before an
edge of the triangle in \ccw order around $v$ (see
Figure~\ref{fig:rootangles}).

\begin{figure}[h!]
\center
\includegraphics[scale=0.5]{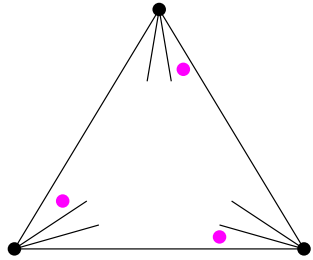}
\caption{Angles that are in a separating triangle but not in its \cw
  interior.}
\label{fig:rootangles}
\end{figure}

Consider a toroidal triangulation $G$. Consider a root angle $a_0$
that is not in the \cw interior of a separating triangle.  Note that
the choice of $a_0$ is equivalent to the choice of a root vertex $v_0$
and a root edge $e_0$ incident to $v_0$ such that none is in the
interior of a separating triangle.  Consider the orientation of the
edges of $G$ corresponding to the minimal balanced Schnyder wood w.r.t.~the
root face $f_0$.  By Lemma~\ref{lem:trianglef0}, there is a \cw
triangle containing $f_0$. Thus by the choice of $a_0$, the edge $e_0$
is leaving the root vertex $v_0$. This is the essential property used
in this section.  Consider the output $U$ of \aps on $(G,a_0)$.  Since
$e_0$ is leaving $v_0$ and $a_0$ is just before $e_0$ in \ccw order
around $v_0$, the execution of \aps starts by Case 2 and $e_0$
corresponds in $U$ to a stem $s_0$ attached to $v_0$. We call this
stem $s_0$ the \emph{root stem}.

The recovering method defined in Theorem~\ref{th:recover} says that
$s_0$ is the last stem reattached by the procedure. So there exists a
sequence of admissible triples of $U$ (see the terminology and
notations of Section~\ref{sec:close}) such that $s_0$ belongs to the
last admissible triple. Let $U_0=U$ and for $1\leq k \leq 2n-2$, the
map $U_{k}$ is obtained from $U_{k-1}$ by closing any admissible
triple that does no contain $s_0$. As noted in
Section~\ref{sec:close}, the special face of $U_{2n-2}$ is a
quadrangle with exactly one stem. This stem being $s_0$, we are in the
situation of Figure~\ref{fig:rootstem}.

\begin{figure}[h!]
\center
\includegraphics[scale=0.4]{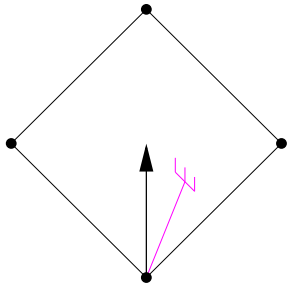} 
\caption{The situation just before the last stem (i.e. the root stem)
  is reattached}
\label{fig:rootstem}
\end{figure}

Consequently, if one removes the root stem $s_0$ from $U$ to obtain an
unicellular map $U'$ with $n$ vertices, $n+1$ edges and $2n-2$ stems,
one can recover the graph $U_{2n-2}$ by applying a complete closure
procedure on $U'$ (see example of Figure~\ref{fig:uprime}). Note that
then, there are four different ways to finish the closure of
$U_{2n-2}$ to obtain an oriented toroidal triangulation. This four
cases correspond to the four ways to place the (removed) root stem in
a quadrangle, they are obtained by pivoting Figure~\ref{fig:rootstem}
by 0°, 90°, 180° and 270°. Note that only one of this four cases leads
to the original rooted triangulation $G$, except if there are some
symmetries (like in the example of Figure~\ref{fig:uprime}).

\begin{figure}[h!]
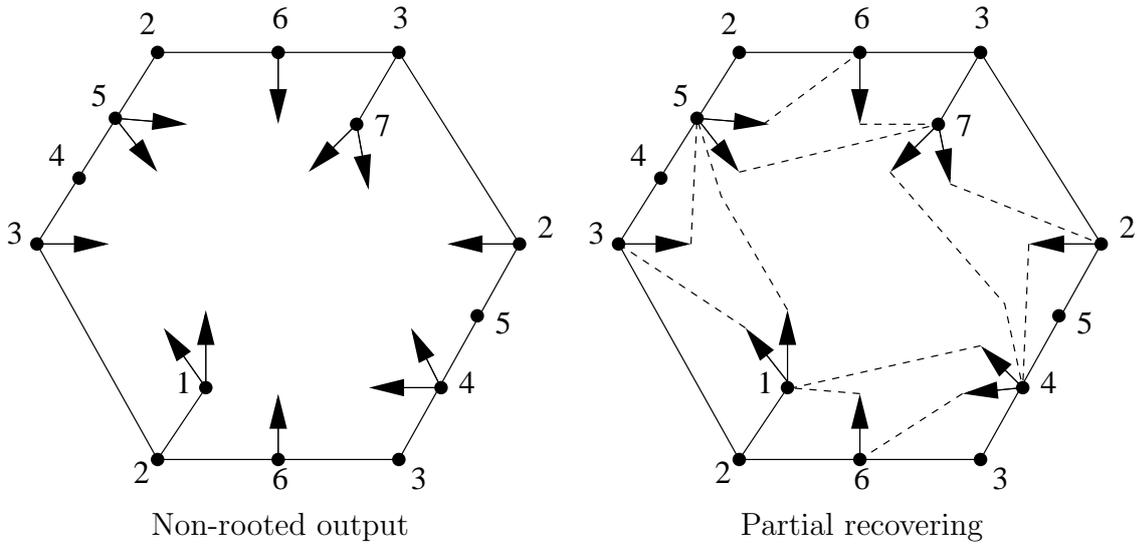

\center
\begin{tabular}{cc}
\includegraphics[scale=0.5]{tore-tri-exe-13}
 & 
\includegraphics[scale=0.5]{tore-tri-exe-14}\\
Non-rooted output & Partial recovering \\
\end{tabular}
\caption{Example of $K_7$ where the root angle, the root stem and the
  orientation w.r.t.~the root angle have been removed from the output
  of Figure~\ref{fig:tore-example}. The complete closure procedure
  leads to a quadrangular face.}
\label{fig:uprime}
\end{figure}

Let $\mathcal U(n)$ denote the set of (non-rooted) toroidal
unicellular maps, with exactly $n$ vertices, $n+1$ edges and $2n-2$
stems satisfying the following: a vertex has exactly $2$ stems if it
is not a corner, $1$ stem if it is the corner of a hexagon and $0$
stem if it is the corner of a square. Note that the output of
Theorem~\ref{th:uni} on an appropriately rooted toroidal triangulation
is an element of $\mathcal U(n)$ when the root stem is removed.

Note that an element $U'$ of $\mathcal U(n)$ is non-rooted so we
cannot orient automatically its edges w.r.t.~the root angle like in
Section~\ref{sec:bijection}.  Nevertheless one can still orient all
the stems as outgoing and compute $\gamma$ on the cycles of $U'$ by
considering only its stems in the counting (and not the edges nor the
root stem anymore).  We say that an unicellular map of $\mathcal U(n)$
satisfies the $\gamma_0$ property if $\gamma$ equals zero on its
(non-contractible) cycles.  Let us call $\mathcal U_{\gamma_0}(n)$ the
set of elements of $\mathcal U(n)$ satisfying the $\gamma_0$ property.

A surprising property is that an element $U'$ of $\mathcal U(n)$
satisfies the $\gamma_0$ property if and only if any element $U$ of
$\mathcal U_{r}(n)$ obtained from $U'$ by adding a root stem anywhere
in $U'$ satisfies the $\gamma_0$ property (note that in $U$ we count
the edges and the root stem to compute $\gamma$).  One can see this by
considering the unicellular map of
Figure~\ref{fig:hexasquaregamma}. It represents the general case of
the underlying rooted hexagon of $U$. The edges represent in fact
paths (some of which can be of length zero).  One can check that it
satisfies $\gamma$ equals zero on its (non-contractible) cycles.  It
corresponds exactly to the set of edges that are taken into
consideration when computing $\gamma$ on $U$ but not when computing
$\gamma$ on $U'$. Thus it does not affect the counting (the tree-like
parts are not represented since they do not affect the value
$\gamma$). So the output of Theorem~\ref{th:uni} on an appropriately
rooted toroidal triangulation is an element of
$\mathcal U_{\gamma_0}(n)$ when the root stem is removed.

\begin{figure}[!h]
\center
\includegraphics[scale=0.5]{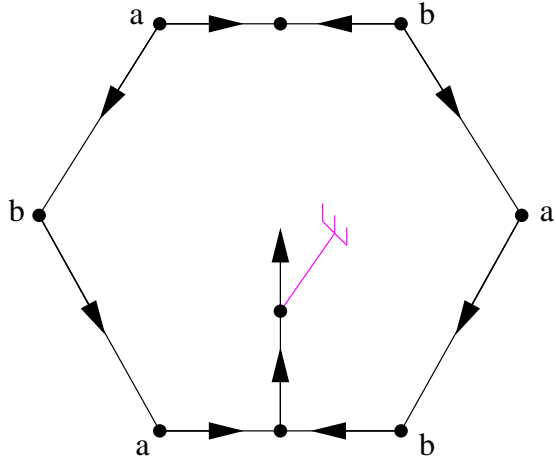}
\caption{The parts of the unicellular map showing the correspondence
  while computing $\gamma$ with or without the orientation w.r.t.~the root plus the root stem.}
\label{fig:hexasquaregamma}
\end{figure}

For the particular case of $K_7$, the difference between the rooted
output of Figure~\ref{fig:tore-example} and the non-rooted output of
Figure~\ref{fig:uprime} is represented on Figure~\ref{fig:uprimediff}
(one can superimpose the last two to obtain the first). One can check
that these three unicellular maps (rooted, non-rooted and the
difference) all satisfy $\gamma$ equals zero on their cycles.

\begin{figure}[h!]
\center
\includegraphics[scale=0.5]{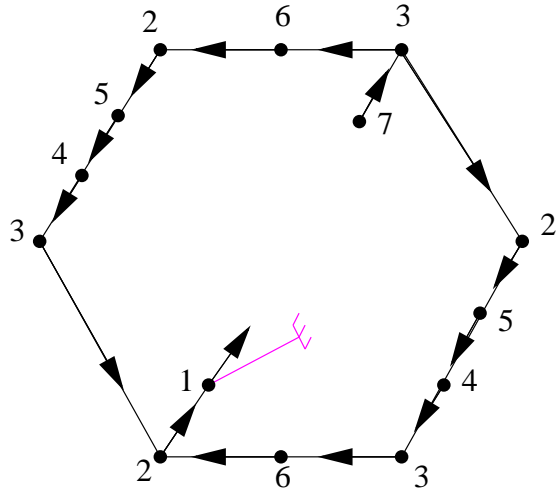}\\
\caption{The difference between the rooted output of
  Figure~\ref{fig:tore-example} and the non-rooted output of
  Figure~\ref{fig:uprime}.}
\label{fig:uprimediff}
\end{figure}

There is an ``almost'' four-to-one correspondence between toroidal
triangulations on $n$ vertices, given with a root angle that is not in
the \cw interior of a separating triangle, and elements of
$\mathcal U_{\gamma_0}(n)$. The ``almost'' means that if the
automorphism group of an element $U$ of $\mathcal U_{\gamma_0}(n)$ is
not trivial, some of the four ways to add a root stem in $U$ are
isomorphic and lead to the same rooted triangulation. In the example
of Figure~\ref{fig:uprime}, one can root in four ways the quadrangle
but this gives only two different rooted triangulations (because of
the symmetries of $K_7$). We face this problem by defining another
class for which we can formulate a bijection.

Let $\mc T(n)$ be the set of toroidal maps on $n$ vertices, where all
the faces are triangles, except one that is a quadrangle and which is
not in a separating triangle. Then we have the following bijection:

 \begin{theorem}
\label{th:bij2}
There is a bijection between $\mc T(n)$ and
$\mathcal U_{\gamma_0}(n)$.
\end{theorem}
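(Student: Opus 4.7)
The plan is to extend the rooted bijection of Theorem~\ref{th:bij1} by quotienting out the choice of root stem/root angle, using the fact that in an element of $\mathcal U_{\gamma_0}(n)$ there is one stem less than what is needed to triangulate the face by the complete closure.

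First, I would define the forward map $\Phi:\mathcal U_{\gamma_0}(n)\to\mc T(n)$ by applying the complete closure procedure of Section~\ref{sec:close}. An element $U'\in\mathcal U_{\gamma_0}(n)$ has $n+1$ edges and $2n-2$ stems, i.e. one stem less than an element of $\mathcal U_r(n)$. By the same counting argument as in Section~\ref{sec:close} (the difference ``number of edges minus number of stems on the special face'' is invariant equal to $4$ instead of $3$), at each step an admissible triple exists, the output does not depend on the order in which admissible triples are closed (as remarked after the definition of complete closure), and the procedure terminates with a map on $n$ vertices, $3n-1$ edges and $2n-1$ faces, exactly one of which is a quadrangle and the others triangles. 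I would then prove that the resulting quadrangular face $F$ is not contained in any separating triangle, so that $\Phi(U')\in\mc T(n)$. For this, pick any of the four angles of $F$ and attach there a new outgoing stem $s_0$, yielding $U\in\mathcal U_r(n)$. As observed around Figure~\ref{fig:hexasquaregamma}, $U$ still satisfies the $\gamma_0$ property; and $U$ is safe because running the closure on $U$ first performs all the closures producing $\Phi(U')$ and ends with one admissible triple containing $s_0$, so the root angle lies on the right of $s_0$ when it is reattached. Hence $U\in\mathcal U_{r,s,\gamma_0}(n)$, and Theorem~\ref{th:bij1} gives a rooted triangulation whose root angle (the one chosen in $F$) is not in the strict interior of a separating triangle. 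Varying over the four angles of $F$, we conclude that $F$ itself is not enclosed in a separating triangle.

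Next, I would define the inverse map $\Psi:\mc T(n)\to\mathcal U_{\gamma_0}(n)$ as follows. Given $G\in\mc T(n)$ with quadrangular face $F$, pick any angle $a_0$ of $F$, add a diagonal $e_0$ of $F$ oriented out of the endpoint carrying $a_0$ with $a_0$ placed just before $e_0$ in \ccw order, and regard the resulting triangulation $G'$ as rooted at $a_0$; because $F$ is not in a separating triangle, $a_0$ is not in the \cw interior of one, so $G'\in\mc T_r(n)$. Apply Theorem~\ref{th:bij1} to get $U\in\mathcal U_{r,s,\gamma_0}(n)$, and delete the root stem corresponding to $e_0$ to obtain $\Psi(G)\in\mathcal U_{\gamma_0}(n)$. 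The crucial point is that $\Psi(G)$ is independent of the choice among the four angles of $F$ (and of the corresponding diagonal): the four possible choices produce four rooted triangulations in $\mc T_r(n)$ that differ only in the rooting and the added diagonal inside $F$, and the preceding analysis of $\Phi$ shows precisely that all four rooted unicellular images share the same underlying element of $\mathcal U(n)$ after erasing the root stem; this matches exactly the ``almost four-to-one'' statement preceding Theorem~\ref{th:bij2}.

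Finally I would verify $\Phi\circ\Psi=\mathrm{id}_{\mc T(n)}$ and $\Psi\circ\Phi=\mathrm{id}_{\mathcal U_{\gamma_0}(n)}$. The first equality holds because performing the complete closure on $\Psi(G)$ amounts to running the reconstruction procedure of Section~\ref{sec:close} on $U$, then omitting the final reattachment of the root stem, which by the structure of Figure~\ref{fig:rootstem} leaves exactly the quadrangle of $G$. The second equality follows from the construction of $\Phi$ combined with Theorem~\ref{th:bij1}, since re-adding a root stem in any angle of the quadrangle of $\Phi(U')$ reproduces one of the four rooted unicellular maps that $\Psi$ inverts.

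The main obstacle will be the well-definedness of $\Psi$: showing that the four possible rootings of $G\in\mc T(n)$ induce, through Theorem~\ref{th:bij1}, rooted unicellular maps that differ only in the position and orientation of the single root stem. This requires carefully tracking how the minimal balanced Schnyder wood w.r.t.~each of the four root faces behaves inside the quadrangle, and how \aps processes the angles of the quadrangle. All the technical content is already contained in the safe/$\gamma_0$ analysis above, but making the four cases cohere into a single element of $\mathcal U_{\gamma_0}(n)$ is where automorphisms of $G$ must be handled and where the hypothesis that the quadrangle is not enclosed in a separating triangle is crucially used.
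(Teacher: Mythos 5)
Your proposal follows the same overall strategy as the paper's proof: complete closure plus removal/addition of a single root stem, with the key fact that the four rootings of the quadrangle all lift to the same element of $\mathcal U_{\gamma_0}(n)$. The paper instead phrases this as a single composition $h=r\circ g\circ a$ for one \emph{arbitrarily} chosen rooting map $a$, then proves injectivity (complete closure recovers the map) and surjectivity (any $U'$ gives four safe, $\gamma_0$ lifts $U^1,\dots,U^4$ that all project to the same $G'$, exactly one of which is $a(G')$); well-definedness of $\Psi$ is thereby sidestepped rather than proved head-on, but the technical content is the same.

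There is, however, one genuine gap in your argument that the quadrangular face $F$ of $\Phi(U')$ is not contained in a separating triangle. You invoke only the conclusion of Theorem~\ref{th:bij1}, namely that the four root angles are not in the \emph{strict} interior of any separating triangle. That condition does not exclude the possibility that $F$ lies inside a separating triangle $T$ with all four corners of $F$ being vertices of $T$ (forcing repeated corners of $F$, which is not a priori excluded for a length-$4$ facial walk on the torus); in that case all four angles would be incident to vertices of $T$, hence outside the strict interior, and your ``varying over the four angles'' argument produces no contradiction. The paper closes this hole by upgrading to the stronger ``not in the \cw interior'' condition: because in each $U^i$ the root stem sits immediately after the root angle in ccw order, the first edge that \aps examines at $a_0^i$ is outgoing, and in a $3$-orientation any edge in the interior of a separating triangle and incident to it must be oriented toward the triangle; hence $a_0^i$ cannot be in the \cw interior of a separating triangle. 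Combined with the fact that $a_0^i$ sits just before the added diagonal (never an edge of $T$), this yields the contradiction even when the root vertex is a vertex of $T$. Your write-up should incorporate this step explicitly, exactly as the paper's definition of $\mc T'_r(n)$ and the surrounding discussion in Section~\ref{sec:bij2} do.
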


\begin{proof}
  Let $a$ (for ``add'') be an arbitrarily chosen mapping defined on
  the maps $G'$ of $\mc T(n)$ that adds a diagonal $e_0$ in the
  quadrangle of $G'$ and roots the obtained toroidal triangulation $G$
  at a vertex $v_0$ incident to $e_0$ (this defines the root angle
  $a_0$ situated just before $e_0$ in \ccw order around $v_0$).  Note
  that the added edge cannot create homotopic multiple edges, since
  otherwise the quadrangle would be in a separating triangle. Moreover
  the root angle of $G$ is not in the \cw interior of a separating
  triangle. Thus the image of $a$ is in $\mc T'_r(n)$, the subset of
  $\mc T_r(n)$ corresponding to toroidal triangulations rooted at an
  angle that is not in the \cw interior of a separating triangle.

  Let $\mathcal U'_{r,s,\gamma_0}(n)$ be the elements of
  $\mathcal U_{r,s,\gamma_0}(n)$ that have their root angle just
  before a stem in \ccw order around the root vertex.  Consider the
  mapping $g$, defined in the proof of Theorem~\ref{th:bij1}.
  By Lemma~\ref{lem:trianglef0} and Theorem~\ref{th:bij1}, the image of $g$
  restricted to $\mc T'_r(n)$ is in $\mathcal U'_{r,s,\gamma_0}(n)$.
  Let $r$ (for ``remove'') be the mapping that associates to an
  element of $\mathcal U'_{r,s,\gamma_0}(n)$ an element of
  $\mathcal U_{\gamma_0}(n)$ obtained by removing the root angle and
  its corresponding stem.  Finally, let $h=r\circ g\circ a$ which
  associates to an element of $\mc T(n)$ an element of
  $\mathcal U_{\gamma_0}(n)$.  Let us show that $h$ is a bijection.

  Consider an element $G'$ of $\mc T(n)$ and its image $U'$ by
  $h$. The complete closure procedure on $U'$ gives $G'$ thus the
  mapping $h$ is injective.

  Conversely, consider an element $U'$ of $\mathcal U_{\gamma_0}(n)$.
  Apply the complete closure procedure on $U'$. At the end of this
  procedure, the special face is a quadrangle whose angles are denoted
  $\alpha^1, \ldots, \alpha^4$. We denote also by
  $\alpha^1,\ldots, \alpha^4$ the corresponding angles of $U'$.  For
  $i\in\{1,\ldots,4\}$, let $U^i$ be the element of
  $\mathcal U_{r}(n)$ obtained by adding a root stem and a root angle
  in the angle $\alpha^i$ of $U'$, with the root angle just before the
  stem in \ccw order around the root vertex.  Note that by the choice
  of $\alpha^i$, the $U^i$ are all safe.  By above remarks they
  also satisfy the $\gamma_0$ property and thus they are in
  $\mathcal U'_{r,s,\gamma_0}(n)$.

  By the proof of Theorem~\ref{th:bij1}, the complete closure
  procedure on $U^i$ gives a triangulation $G^i$ of $\mc T_r(n)$ that
  is rooted from an angle $a_0^i$ not in the strict interior of a
  separating triangle and oriented according to the minimal balanced
  Schnyder wood w.r.t.~the root face. Moreover the output of \aps
  applied on $(G^i,a_0^i)$ is $U^i$. Since in $U^i$, the root stem is
  present just after the root angle, the first edge seen by the
  execution of \aps on $(G^i,a_0^i)$ is outgoing. So $a_0$ is not in
  the \cw interior of a separating triangle (in a $3$-orientation, all
  the edges that are in the interior of a separating triangle and
  incident to the triangle are entering the triangle). Thus the $G^i$
  are appropriately rooted and are elements of $\mc T'_r(n)$. Removing
  the root edge of any $G^i$, gives the same map $G'$ of $\mc T(n)$.
  Exactly one of the $G^i$ is the image of $G'$ by the mapping
  $a$. Thus the image of $G'$ by $h$ is $U'$ and the mapping $h$ is
  surjective.
\end{proof}

A nice aspect of Theorem~\ref{th:bij2} comparing to
Theorem~\ref{th:bij1} is that the unicellular maps that are considered
are much simpler. They have no root nor safe property anymore.

\section{Conjecture for higher genus}
\label{sec:highergenus}

Note that the work presented here is related to a work of Bernardi and
Chapuy~\cite{BC11} (their convention for the orientation of the edges
is the reverse of ours).  Consider a map $G$ (not necessarily a
triangulation) on an orientable surface of genus $g$, rooted at a
particular angle $a_0$.  An orientation of $G$ is \emph{right} if for
each edge $e$, the \emph{right-walk} starting from $e$ (when entering
a vertex, the next chosen edge is the one leaving on the right)
reaches the root edge $e_0$ via the root vertex $v_0$.
A consequence of~\cite{BC11} is that \aps applied on an orientation of
$(G,a_0)$ outputs a spanning unicellular submap $U$ if and only if the
considered orientation is right. Note that in this characterization,
the submap $U$ is not necessarily a map of genus $g$, its genus can be
any value in $\{0,\ldots,g\}$.  In the particular case of toroidal
triangulations we show that by considering minimal balanced Schnyder woods
the output $U$ is a toroidal spanning unicellular map. Hence by the
above characterization, minimal balanced Schnyder woods are right.  But
here, the fact that $U$ and $G$ have the same genus is of particular
interest as it yields a simple bijection.

The key property that makes $U$ and $G$ have same genus is the
conclusion of Lemma~\ref{lem:incomingedges} (no oriented
non-contractible cycle in the dual orientation).  Thus we we hope for
a possible generalization of Theorem~\ref{th:AGK} satisfying the
conclusion of Lemma~\ref{lem:incomingedges}:

\begin{conjecture}
\label{conj:accessibility}
A triangulation on a genus $g\geq 1$ orientable surface admits an
orientation of its edges such that every vertex has outdegree at least
$3$, divisible by $3$, and such that there is no oriented
non-contractible cycle in the dual orientation.
\end{conjecture}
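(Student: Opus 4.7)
The plan is to generalize to arbitrary genus $g \geq 1$ the two ingredients that worked in the toroidal case, namely Theorem~\ref{th:proofbycontractiontri} (existence of a balanced 3-orientation) and Lemma~\ref{lem:incomingedges} (the balanced property forces no directed non-contractible cycle in the dual). For the base orientation, Theorem~\ref{th:AGK} already supplies an orientation $D_0$ with outdegree $\geq 3$ and divisible by $3$ at every vertex. The strategy is then to fix a homology basis $B_1, \ldots, B_{2g}$ of non-contractible cycles of $G$ and call an orientation with outdegree $\equiv 0 \pmod 3$ \emph{balanced} if $\gamma(B_i) = 0$ for all $1 \leq i \leq 2g$, and to modify $D_0$ within its outdegree class into a balanced orientation.

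First I would extend Lemma~\ref{lem:gammahomologyequalbasis} to higher genus: in any orientation with outdegree $\equiv 0 \pmod 3$, two homologous non-contractible cycles have the same value of $\gamma$. The proof is local: between two homologous cycles, lift to the universal cover, consider the cylindrical region bounded by a replicated copy of the basis walk and a replicated copy of the cycle, and apply Euler's formula together with the $3$-divisibility of outdegrees exactly as in the toroidal proof. The annular region is a planar quadrangulation regardless of the genus of $G$, so the counting goes through unchanged. This yields the genus-$g$ analogue of Lemma~\ref{lem:typegamma0foranybasis}: $\gamma$ vanishes on every non-contractible cycle iff it vanishes on any chosen homology basis. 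In particular, a balanced orientation satisfies $\gamma(C) = 0$ for every non-contractible cycle $C$ of $G$ or $G^*$. From this, the proof of Lemma~\ref{lem:incomingedges} transfers verbatim: a hypothetical directed non-contractible cycle $C^*$ in the dual would, via the $\gamma$-count on the corresponding cycle $\pdc{C}$ in the primal-dual completion (edge-vertices of $\pdc{C}$ each contributing $+1$, and dual-vertices of $\pdc{C}$ contributing at least $+1$ in total because $C^*$ is non-contractible), force $\gamma(C) \geq 2$, contradicting balancedness.

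The main obstacle is the existence of a balanced orientation. By Theorem~\ref{th:transformations}, orientations with the same outdegree sequence as $D_0$ partition into homology classes indexed by their type $\Gamma = (\gamma(B_1), \ldots, \gamma(B_{2g}))$, neighboring classes differing by the reversal of an Eulerian-partitionable oriented subgraph. The goal is to reach $\Gamma = (0, \ldots, 0)$, and the natural tool is a potential-function argument on $\Phi(D) = \sum_i |\gamma(B_i)|$: whenever $\Phi(D) > 0$, exhibit an Eulerian subgraph whose reversal strictly decreases $\Phi$. The cleanest candidate is a directed non-contractible cycle of $D$ whose reversal adjusts a single coordinate of $\Gamma$ without increasing the others; such a cycle can be sought among the ``middle walks'' generalizing Section~\ref{sec:flip-noncontract} or along dual directed non-contractible cycles when they are present. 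The genuine difficulty---and the reason the conjecture is still open---is that neither edge contraction (Section~\ref{sec:contractionproof}) nor the middle-walk technique adapts automatically to higher genus: decontraction at a vertex of outdegree $\geq 6$ can fail to maintain the required orientation rules (Figure~\ref{fig:decontractgenus}), and middle walks need no longer close up into cycles but can trace ``eights'' (Figure~\ref{fig:middle8}). Overcoming this would require a new combinatorial mechanism for exhibiting the Eulerian subgraph to reverse, and this is the step where I expect any proof to stand or fall.
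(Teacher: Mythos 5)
This statement is a \emph{conjecture}, and the paper provides no proof of it; indeed the paper explicitly treats its resolution as open. Your proposal does not claim to close the argument either --- you honestly flag the step of producing the sequence of Eulerian flips as the place ``where I expect any proof to stand or fall'' --- so there is no complete proof on either side to compare. That said, your analysis contains one substantive technical error that undermines even the partial strategy.

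You assert that Lemma~\ref{lem:gammahomologyequalbasis} extends to genus $g\geq 2$ ``unchanged,'' on the grounds that the toroidal computation only uses $3$-divisibility of outdegrees. This is false, and the paper says so explicitly in its conclusion: \emph{``For example the definition of $\gamma$ is not anymore `stable by homology', i.e.\ if two non-contractible cycles $C,C'$ are homologous, we may have $\gamma(C)\neq\gamma(C')$ whereas we have equality on the torus by Lemma~\ref{lem:gammahomologyequalbasis}.''} Concretely, the toroidal proof deduces $k+k'=x'+y$ from the identity $m_1 = 3\,n_{pd} + n_e - (x'+y)$, which uses that every primal- and dual-vertex has outdegree \emph{exactly} $3$, not merely $\equiv 0 \pmod 3$. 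For $g\geq 2$ Euler's formula forces some vertices to have outdegree $6,9,\dots$; if the number of such vertices differs between the two cylinder regions $H_1$ and $H_2$, the integer equality $x'+y=x+y'$ fails, and one only retains $x'+y\equiv x+y'\pmod 3$. So $\gamma$ is stable under homology only modulo $3$ in higher genus, whereas your notion of balancedness needs the \emph{integer} vanishing of $\gamma$ to feed into the contradiction in Lemma~\ref{lem:incomingedges} (which shows $\gamma(C)\geq 2$, a contradiction to $\gamma(C)=0$ but not to $\gamma(C)\equiv 0\pmod 3$).

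This has a downstream consequence you did not address: because $\gamma$ is not homology-invariant over $\mathbb{Z}$, requiring $\gamma(B_i)=0$ on a fixed basis $\{B_1,\dots,B_{2g}\}$ does not imply $\gamma(C)=0$ for an arbitrary non-contractible cycle $C$. So even granting the flip mechanism you seek, your definition of ``balanced'' is basis-dependent in genus $\geq 2$, and it is not clear that any choice of basis makes the resulting orientation free of directed non-contractible dual cycles. A correct higher-genus version of the conjecture's accessibility condition would likely need a fundamentally different invariant than the $\gamma$-based one, not just a straight transcription of the toroidal machinery.
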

 
If Conjecture~\ref{conj:accessibility} is true, one can consider a
minimal orientation satisfying its conclusion and apply \aps to obtain
a unicellular map of the same genus as $G$.  Note that more efforts
should be made to obtain a bijection since there might be several
minimal elements satisfying the conjecture and a particular one has to be
identified (as the minimal balanced Schnyder wood in our case).

\chapter{Orthogonal surfaces and straight-line drawing}
\label{sec:ortho}

\section{Periodic geodesic embedding}

 Generalizing results for planar maps \cite{BFM07,Fel03,Mil02}, we
 show that Schnyder woods can be used to embed the universal cover of
 an essentially 3-connected toroidal map on an infinite and periodic
 orthogonal surface. This embedding is then used to obtain a
 straight-line flat torus representation of any toroidal map in a
 polynomial size grid.

Given two points $u=(u_0,u_1,u_2)$ and $v=(v_0,v_1,v_2)$ in $\mathbb
R^3$, we note $u\vee v=(\max(u_i,v_i))_{i=0,1,2}$ and $u\wedge
v=(\min(u_i,v_i))_{i=0,1,2}$. We define an order $\geq$ among the
points in $\mathbb R^3$, in such a way that $u\geq v$ if $u_i\geq v_i$
for $i=0,1,2$.

Given a set $\mv$ of pairwise incomparable elements in $\mr^3$, we
define the set of vertices that dominates $\mv$ as
$\md_\mv=\{u\in\mathbb R^3\ |\ \exists\, v\in \mathcal V$ such that
$u\geq v\}$. The \emph{orthogonal surface $\ms$} generated by $\mv$ is
the boundary of $\md_{\mv}$. (Note that orthogonal surfaces are well
defined even when $\mv$ is an infinite set.)  If $u,v\in \mv$ and
$u\vee v\in \ms$, then $\ms$ contains the union of the two line
segments joining $u$ and $v$ to $u\vee v$. Such arcs are called
\emph{elbow geodesic}. The \emph{orthogonal arc} of $v\in \mv$ in the
direction of the standard basis vector $e_i$ is the intersection of
the ray $v+\lambda e_i$ with $\ms$.

Let $G$ be a planar map. A \emph{geodesic embedding} of $G$ on the
orthogonal surface $\ms$ is a drawing of $G$ on $\ms$ satisfying the
following:

\begin{itemize}
\item[(D1)] There is a bijection between the vertices of $G$ and $\mv$.

\item[(D2)] Every edge of $G$ is an elbow geodesic.

\item[(D3)] Every 
orthogonal arc in $\ms$ is part of an edge of $G$.

\item[(D4)] There are no crossing edges in the embedding of $G$ on $\ms$.
\end{itemize}

Miller~\cite{Mil02} (see also \cite{Fel03, FZ08}) proved that a
geodesic embedding of a planar map $G$ on an orthogonal surface $\ms$
induces a planar Schnyder wood of $G$. The edges of $G$ are colored with the
direction of the orthogonal arc contained in the edge. An orthogonal
arc intersecting the ray $v+\lambda e_i$ corresponds to the edge
leaving $v$ in color $i$. Edges represented by two orthogonal arcs
correspond to edges oriented in two directions.

Conversely, it has been proved that a Schnyder wood of a planar map
$G$ can be used to obtain a geodesic embedding of $G$.  Let $G$ be a
planar map given with a Schnyder wood.  The method is the following
(see \cite{Fel03} for more details): For every vertex $v$, one can
divide $G$ into the three regions bounded by the three monochromatic
paths going out from $v$ (see Figure~\ref{fig:regions-planar}). The \emph{region vector} associated to $v$ is
the vector obtained by counting the number of faces in each of these
three regions.  The mapping of each vertex on its region vector gives
the geodesic embedding. Note that in this approach, the vertices are
all mapped on the same plane as the sum of the coordinates of each
region vector is equal to the total number of inner faces of the map.

Our goal is to generalize geodesic embeddings to the torus.  More
precisely, we want to represent the universal cover of a toroidal map
on an infinite and periodic orthogonal surface.

Let $G$ be a toroidal map.  Consider a representation of $G$ in a
parallelogram $P$ whose opposite sides are identified.  The graph
$G^\infty$ is obtained by replicating $P$ to tile the plane.  Given
any parallelogram $Q$ representing a copy of $G$ in $G^\infty$, we
choose arbitrarily two consecutive sides of $Q$ in \cw order and say
that the first one is the top side and the second one is the right
side.  Let $Q^{top}$ (resp. $Q^{right}$) be the copy of $Q$ just above
(resp. on the right of) $Q$, w.r.t.~the top and right side of $Q$.
Given a vertex $v$ in $Q$, we note $v^{top}$ (resp. $v^{right}$) its
copies in $Q^{top}$ (resp. $Q^{right}$).

A mapping of the vertices of $G^\infty$ in $\mr^d$, $d\in\{2,3\}$, is
\emph{periodic} with respect to vectors $S$ and $S'$ of $\mr^d$, if
there exists a parallelogram $Q$ representing a copy of $G$ in
$G^\infty$ such that for any vertex $v$ of $G^\infty$, vertex
$v^{top}$ is mapped on $v+S$ and $v^{right}$ is mapped on $v+S'$.  A
\emph{geodesic embedding} of a toroidal map $G$ is a geodesic
embedding of $G^\infty$ on $\mathcal S_{\mathcal V^\infty}$, where
$\mathcal V^\infty$ is a periodic mapping of $G^\infty$ w.r.t.~two non
collinear vectors.

The goal of the next two sections is to prove the following theorem
 (see example of Figure~\ref{fig:example-primal},
where the black parallelogram represents a copy of the map that is the
basic tile used to fill the plane):

\begin{theorem}
\label{th:triortho}
The universal cover of an essentially 3-connected toroidal map admits
a geodesic embedding on an infinite and periodic orthogonal surface.
\end{theorem}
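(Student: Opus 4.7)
The plan is to adapt Felsner's planar construction---which assigns each vertex $v$ of a planar map with a Schnyder wood the coordinate vector $(|R_0(v)|,|R_1(v)|,|R_2(v)|)$ counting faces in the three regions around $v$---to the toroidal universal cover. First, I would apply Theorem~\ref{th:existencebasic} to obtain an intersecting Schnyder wood of $G$ (crossing if $G$ is non-basic, merely intersecting otherwise), and lift it to $G^\infty$. Since intersecting implies half-crossing, Lemma~\ref{lem:regionss} applies and provides the key inclusion property $u\in R_i(v)\Rightarrow R_i(u)\subseteq R_i(v)$, with strict inclusion when $u\in R_i^\circ(v)$, which is the combinatorial input that drives the planar construction.

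Since the regions $R_i(v)$ are infinite in $G^\infty$, coordinates cannot be defined by face counts directly. Instead, I would define them incrementally via a closed $1$-form $\Delta$ on the directed edges of $G^\infty$ with values in $\mathbb{R}^3$. For each directed edge $e=(u,v)$, the value $\Delta(e)$ is read off from the local Schnyder labeling at $e$, mimicking the finite coordinate change that occurs across edges in the planar case (and in particular handling the three edge types of Figure~\ref{fig:edgelabeling}), in such a way that the sum of $\Delta$ around any facial walk vanishes. This closedness allows to integrate $\Delta$ along paths from a chosen base vertex $v_0$ and define $\mu(v)\in\mathbb{R}^3$ for every vertex $v$ of $G^\infty$, with $\mu(v_0)=0$. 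The construction is manifestly invariant under the deck transformations generating the torus action, hence the $1$-form $\Delta$ is periodic, and the integrals of $\Delta$ along lifts of the two generating homology cycles of the torus yield two vectors $S,S'\in\mathbb{R}^3$ controlling the periodicity: $\mu(v^{top})=\mu(v)+S$ and $\mu(v^{right})=\mu(v)+S'$. The non-collinearity of $S$ and $S'$ would follow from the intersecting property, which guarantees monochromatic cycles in two non-weakly-homologous classes and hence two independent period directions.

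Finally, I would verify the four geodesic embedding axioms (D1)--(D4) locally, edge by edge, using the region inclusion lemma: for every edge $uv$ of $G^\infty$, the points $\mu(u)$ and $\mu(v)$ are joined by an elbow geodesic on $\mathcal{S}_{\mathcal{V}^\infty}$ via a case analysis on the edge types of Figure~\ref{fig:edgelabeling}; orthogonal arcs at $\mu(v)$ correspond bijectively to the three outgoing monochromatic paths $P_i(v)$; and the non-crossing of regions prevents edges from crossing on the surface. The main obstacle is the second step: prescribing the local increments $\Delta(e)$ so that the resulting $1$-form is simultaneously closed, produces the desired domination relations at the endpoints of every edge, and yields non-collinear periods---in particular for basic toroidal maps, where Theorem~\ref{th:existencebasic} only guarantees an intersecting (not crossing) Schnyder wood and two of the three monochromatic cycle classes are reversals of one another, so the argument for $S\not\parallel S'$ must be carried out using the remaining independent class together with the explicit description of basic maps from Figure~\ref{fig:basic}.
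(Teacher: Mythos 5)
Your high-level plan---obtain an intersecting Schnyder wood from Theorem~\ref{th:existencebasic}, lift to $G^\infty$, build a periodic vertex placement in $\mathbb{R}^3$, verify (D1)--(D4)---matches the paper's strategy, and your identification of Lemma~\ref{lem:regionss} as the combinatorial workhorse is correct. But the central step, where you propose to read off a local increment $\Delta(e)$ from the edge type so that the resulting $1$-form is closed, produces the domination relations, and has non-collinear periods, is precisely where the paper does something different and where your plan has a genuine gap. The region vector method is \emph{not} a local construction: the coordinate change across an edge in the planar version already depends on the number of faces lying between the relevant monochromatic paths, which is a global quantity. The paper (Definition~\ref{def:regionvector}) replaces the planar face count by the two-scale quantity $d_i(v,z_i(v)) + N\big(f_{i+1}(L_{i+1}(v))-f_{i-1}(L_{i-1}(v))\big)$, where $d_i$ measures a bounded region against the intersection point $z_i(v)$ of two monochromatic lines, and $f_i(L)$ indexes the position of line $L$ relative to a chosen origin, weighted by a large constant $N$. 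That $N$ has to be taken at least $n$ is essential: the crucial domination inequality (Lemma~\ref{lem:regionorder}) only yields a strictly positive gap because terms like $(N-n)\,f_{i+1}^{j}$ dominate the error $d_i$, which is bounded by $(n-1)f_{i+1}^{j}$ (Lemma~\ref{lem:regionslongue}). There is no purely local edge rule that reproduces this calibration, and you have not supplied one; you merely name it ``the main obstacle.''

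Two further points where the plan would not close up even granting the $1$-form. First, non-collinearity of the periods $S,S'$ does not follow immediately ``from the intersecting property'': the paper proves it via Lemmas~\ref{lem:copies} and~\ref{lem:dim} using explicit vectors $Z_i$ built from the winding numbers $\omega_i$, and the argument in the type~$2$ (non-crossing, including basic) case needs the relation $\omega_0 Z_0 + \omega_1 Z_1 + \omega_2 Z_2 = 0$ together with $\omega_i>0$ for the surviving crossing direction. Second, verifying (D4) ``locally, edge by edge'' cannot work: non-crossing is a global property, and the paper's proof is a contradiction argument in Theorem~\ref{th:geodesic} that propagates the region-inclusion order along a bidirected monochromatic path connecting the two putative crossing edges.
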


\begin{figure}[h!]
\center
\includegraphics[scale=0.2]{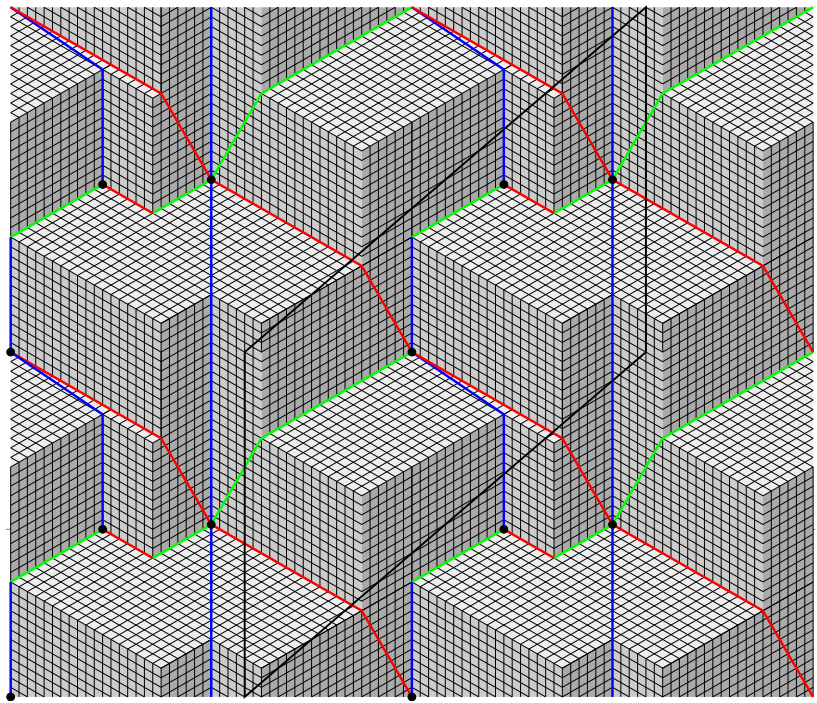}
\caption{Geodesic embedding of the toroidal map of
  Figure~\ref{fig:example-dual-tore}.}
\label{fig:example-primal}
\end{figure}

\section{Generalization of the region vector}
\label{sec:regionvector}

Like in the plane, Schnyder woods can be used to obtain geodesic
embeddings of toroidal maps. For that purpose, we need to generalize
the region vector method.  The idea is to use the regions to compute
the coordinates of the vertices of the universal cover. The problem is
that contrarily to the planar case, these regions are unbounded and
contains an infinite number of faces. The method is thus generalized
by the following.

Consider a  toroidal map $G$ given with an
intersecting Schnyder wood.  We say that the Schnyder wood is of
\emph{type 1} if it is crossing and \emph{type 2} if it is not
crossing. When it is not crossing, we may specify the only color $i$
for which it is $i$-crossing (see Theorem~\ref{lem:type-cross}) by
saying that the Schnyder wood is of \emph{type 2.i}.

Recall that $\mathcal C_i$ denotes the set of $i$-cycles of $G$.  The
Schnyder wood is intersecting so, by Theorem~\ref{lem:type-cross},
the cycles of $\mathcal C_i$ are non-contractible, non intersecting
and homologous.  So we can order them as follows
$\mathcal C_i=\{C_i^0,\ldots,C_i^{k_i-1}\}$, $k_i\geq 1$, such that,
for $0\leq j\leq k_i-1$, there is no $i$-cycle in the region
$R(C_i^j,C_i^{j+1})$ between $C_i^j$ and $C_i^{j+1}$ containing the
right side of $C_i^j$ (superscript understood modulo $k_i$).  Recall
that if the Schnyder wood is of type 2.i, then
$\mathcal C_{i-1}=(\mathcal C_{i+1})^{-1}$ by Theorem~\ref{lem:type-cross}.

A directed monochromatic cycle $C_i^j$ corresponds to a family of
$i$-lines of $G^\infty$, denoted $\mathcal L_i^j$.  Given any two
$i$-lines $L$, $L'$, the unbounded region between $L$ and $L'$ is
noted $R(L,L')$.  We say that two $i$-lines $L,L'$ are
\emph{consecutive} if there is no $i$-line contained in $R(L,L')$.
The \emph{positive side} of a $i$-line is defined as the right side
while ``walking'' along the directed path by following the orientation
of the edges colored $i$.

\begin{lemma}
\label{lem:linesintersection}
For any vertex $v$, the two monochromatic lines $L_{i-1}(v)$ and
$L_{i+1}(v)$ intersect. Moreover, if the Schnyder wood is of type
2.i, then $L_{i+1}(v)=(L_{i-1}(v))^{-1}$ and $v$ is situated on the
right of $L_{i+1}(v)$.
\end{lemma}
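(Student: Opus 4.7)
The plan is to split along the two cases identified by Theorem~\ref{lem:type-cross}: type 1 (the Schnyder wood is crossing) and type 2.i ($i$-crossing but not crossing, so that $\mathcal{C}_{i-1}=\mathcal{C}_{i+1}^{-1}$). The first assertion (intersection of the two lines) will follow in both cases, while the second assertion (reverse identification and position of $v$) only concerns type 2.i.

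For type 1, Theorem~\ref{lem:type-cross} yields that every pair of monochromatic cycles of distinct colors is crossing, hence not weakly homologous. Lifting to $G^\infty$, the family of $(i-1)$-lines and the family of $(i+1)$-lines are then parallel lines in two linearly independent homology directions. Consequently every $(i-1)$-line meets every $(i+1)$-line, and in particular $L_{i-1}(v)$ and $L_{i+1}(v)$ intersect.

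For type 2.i, Theorem~\ref{lem:type-cross} gives $\mathcal{C}_{i-1}=\mathcal{C}_{i+1}^{-1}$, so $(i-1)$- and $(i+1)$-lines of $G^\infty$ coincide as unoriented curves, with opposite orientations. The crucial observation is that $P_{i-1}(v)$ cannot cross any $(i-1)$-line---a crossing vertex would have two outgoing edges of color $i-1$, violating the Schnyder property---and similarly $P_{i+1}(v)$ cannot cross any $(i+1)$-line, which here is the same object as an $(i-1)$-line. Hence both paths remain in the closed strip $S$ of $G^\infty$ bounded by the two $(i-1)$-lines $L,L'$ adjacent to $v$, and each terminates on one of them. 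If $v$ already lies on an $(i-1)$-line $L_0$, the outgoing edges of colors $i-1$ and $i+1$ at $v$ are along $L_0$ in opposite directions, giving $L_{i+1}(v)=L_0^{-1}=L_{i-1}(v)^{-1}$ at once; otherwise I may assume $L\neq L'$, and adopt the convention of the proof of Lemma~\ref{lem:regionss}(i) that $v$ is on the right of $L$ and on the left of $L'$.

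The main obstacle is then to rule out every configuration other than $L_{i-1}(v)=L'=L_{i+1}(v)^{-1}$. I would do this via Lemma~\ref{lem:regionss} (applicable because type 2.i is half-crossing): pick a vertex $u$ on $L_{i-1}(v)$ forward of the attachment of $P_{i-1}(v)$, so that $u\in P_{i-1}(v)\subseteq\partial R_i(v)$, and apply item (i) to obtain $R_i(u)\subseteq R_i(v)$. Since $u$ lies on an $(i-1)$-line, both $P_{i-1}(u)$ and $P_{i+1}(u)$ run along that line in opposite directions, so $R_i(u)$ is a full open half-plane bounded by $L_{i-1}(v)$, on the side determined by the Schnyder order $e_{i-1},e_i,e_{i+1}$ at $u$ (namely the side opposite to the $i$-direction, which is the one containing $v$'s strip). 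A direct comparison then shows that in each forbidden configuration this half-plane strictly exceeds $R_i(v)$: in the ``wrong matched'' case $L_{i-1}(v)=L$, the region $R_i(v)$ omits the bounded pocket delimited by the approach parts of $P_{i-1}(v)$, $P_{i+1}(v)$ and the gap on $L$ between them, whereas $R_i(u)$ contains that pocket; in the ``mismatched'' case where $P_{i-1}(v)$ and $P_{i+1}(v)$ terminate on different boundaries, $R_i(v)$ is confined to a strict subregion of $R_i(u)$ cut off by the path ending on the opposite boundary. Each forbidden configuration thus contradicts $R_i(u)\subseteq R_i(v)$. The only surviving configuration is $L_{i-1}(v)=L'$ and $L_{i+1}(v)=L'^{-1}$; it yields both the equality $L_{i+1}(v)=L_{i-1}(v)^{-1}$ and the position statement, since $v$ being on the left of $L'$ is exactly $v$ being on the right of $L'^{-1}=L_{i+1}(v)$.
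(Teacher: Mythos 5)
Your proof takes a genuinely different route from the paper for the type~2.i case, and there are two gaps worth flagging.

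First, on the non--type~2.i side you only address type~1. The paper's case split is ``type~1 \emph{or} type~2.j with $j\neq i$'', and your assertion that ``every pair of monochromatic cycles of distinct colors is crossing'' is simply false when the wood is of type~2.$(i\pm1)$. What saves the intersection claim in those cases is that the specific pair of colors $i-1$ and $i+1$ is still crossing (one of them is the $j$-color), but as written your argument does not cover this; the wording suggests you overlooked the case.

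Second, for type~2.i the paper argues directly from the Schnyder local property at a vertex $w$ on an $(i\pm1)$-line: since $e_{i-1}(w)$ and $e_{i+1}(w)$ lie along the line and $e_i(w)$ lies on one side of it, the incoming $(i-1)$- and $(i+1)$-edges at $w$ must enter from the $e_i(w)$-side. Hence neither $P_{i-1}(v)$ nor $P_{i+1}(v)$ can \emph{reach} the boundary line on whose $e_i$-side $v$ lies, and both paths are forced onto the other boundary at once; no further case analysis is needed. Your weaker observation (the paths cannot \emph{cross} an $(i\pm1)$-line) leaves both strip boundaries available as candidates, and you then try to discriminate via Lemma~\ref{lem:regionss}. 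This has a gap: you always pick $u$ on $L_{i-1}(v)$. In the ``wrong-matched'' case and in the mismatched case where $P_{i-1}(v)$ ends on $L$, the argument indeed produces $R_i(u)\not\subseteq R_i(v)$; but in the mismatched case where $P_{i-1}(v)$ ends on the \emph{correct} boundary $L'$ while $P_{i+1}(v)$ ends on $L$, one checks that $R_i(u)\subseteq R_i(v)$ holds and no contradiction arises. That configuration can only be excluded by picking $u$ on $L_{i+1}(v)$ (i.e.\ on $P_{i+1}(v)$), which you do not do. Relatedly, the parenthetical claim that $R_i(u)$ is ``the side containing $v$'s strip'' is not true in general -- it is exactly true in the forbidden wrong-matched case and false in the correct one -- so the sentence cannot be read uniformly across cases. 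Replacing the $R_i$-comparison by the paper's direct observation about the sides from which incoming edges enter a line-vertex would close both issues at once and substantially shorten the argument.
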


\begin{proof}
  Let $j,j'$ be such that $L_{i-1}(v)\in \mathcal L_{i-1}^{j}$ and
  $L_{i+1}(v)\in \mathcal L_{i+1}^{j'}$.  If the Schnyder wood is of
  type 1 or type 2.j with $j\neq i$, then the two cycles $C_{i-1}^{j}$
  and $C_{i+1}^{j'}$ are crossing, and so the two lines
  $L_{i-1}(v)$ and $L_{i+1}(v)$ intersect.  

  Suppose now that the Schnyder wood is of type 2.i.  If
  $v\in L_{i-1}(v)$, then clearly $L_{i+1}(v)=(L_{i-1}(v))^{-1}$ and
  $v\in L_{i+1}(v)$. The case where $v\in L_{i+1}(v)$ is similar and
  we can assume now that $v$ does not belong to $L_{i-1}(v)$ nor
  $L_{i+1}(v)$.  Then $v$ lies between two consecutive $(i+1)$-lines
  (which are also $(i-1)$-lines). Let us denote those two lines
  $L_{i+1}$ and $L'_{i+1}$, such that $L'_{i+1}$ is situated on the
  right of $L_{i+1}$. By the Schnyder property, paths $P_{i+1}(v)$ and
  $P_{i-1}(v)$ cannot reach $L'_{i+1}$. Thus
  $L_{i+1}=L_{i+1}(v)=(L_{i-1}(v))^{-1}$.
\end{proof}

The \emph{size} of the region $R(C_i^j,C_i^{j+1})$ of $G$, denoted
$f_i^j=|R(C_i^j,C_i^{j+1})|$, is equal to the number of faces in
$R(C_i^j,C_i^{j+1})$.  Remark that for each color, we have
$\sum_{j=0}^{k_i-1} f_i^j$ equals the total number of faces $f$ of
$G$.  If $L$ and $L'$ are consecutive $i$-lines of $G^\infty$ with
$L\in \mathcal L_i^j$ and $L'\in \mathcal L_i^{j+1}$, then the
\emph{size} of the (unbounded) region $R(L,L')$, denoted $|R(L,L')|$,
is equal to $f_i^j$. If $L$ and $L'$ are any $i$-lines, the
\emph{size} of the (unbounded) region $R(L,L')$, denoted $|R(L,L')|$,
is equal to the sum of the size of all the regions delimited by
consecutive $i$-lines inside $R(L,L')$.  For each color $i$, choose
arbitrarily a $i$-line $L_i^*$ in $\mathcal L_i^0$ that is used as an
origin for $i$-lines.  Given a $i$-line $L$, we define the value
$f_i(L)$ of $L$ as follows: $f_i(L)=|R(L,L_i^*)|$ if $L$ is on the
positive side (right side) of $L_i^*$ and $f_i(L)=-|R(L,L_i^*)|$ otherwise.

Consider two vertices $u,v$ such that $L_{i-1}(u)=L_{i-1}(v)$ and
$L_{i+1}(u)=L_{i+1}(v)$. Even if the two regions $R_{i}(u)$ and
$R_{i}(v)$ are unbounded, their \emph{difference} is bounded.  Let
$d_i(u,v)$ be the number of faces in $R_{i}(u)\setminus R_{i}(v)$
minus the number of faces in $R_{i}(v)\setminus R_{i}(u)$.  For any
vertex, by Lemma~\ref{lem:linesintersection}, there exists $z_i(v)$ a
vertex on the intersection of the two lines $L_{i-1}(v)$ and
$L_{i+1}(v)$.  Let $N$ be a constant $\geq n$ (in this chapter we can
have $N=n$ but in chapter~\ref{sec:straight} we need to choose $N$
bigger).  We are now able to define the region vector of a vertex of
$G^\infty$, that is a mapping of this vertex in $\mathbb R^3$ (see
Figure~\ref{fig:region}).

\begin{definition}[Region vector]
\label{def:regionvector}
  The $i$-th coordinate of the \emph{region vector} of a vertex $v$ of
  $G^\infty$ is equal to $v_i\ =\ d_i(v,z_i(v))\ +\
  N\times\big(f_{i+1}(L_{i+1}(v))-f_{i-1}(L_{i-1}(v))\big)$.
\end{definition}

\begin{figure}[!h]
\center
\input{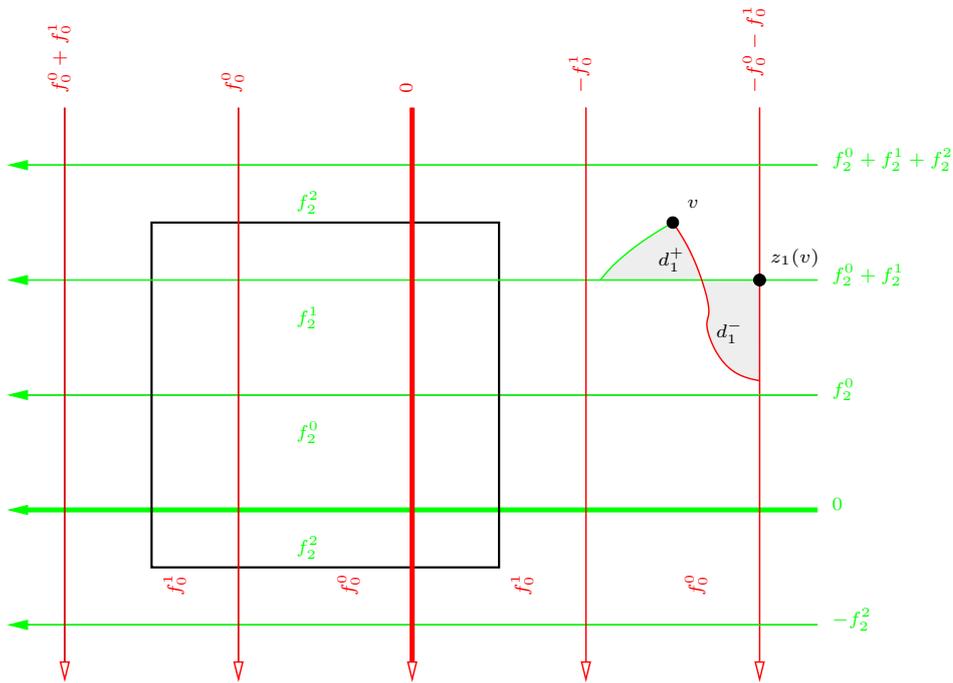}
\caption{The toroidal map is represented in a black square that is
  used to tile the plane. The bold red and green lines are used as an
  origin for 0- and 2-lines.  Coordinate $1$ of vertex $v$, is equal
  to the number of faces in the region $d_1^+$, minus the number of
  faces in the region $d_1^-$, plus $N$ times
  $(f_2^0+f_2^1)-(-f_0^0-f_0^1)$.}
\label{fig:region}
\end{figure}

Note that Lemma~\ref{lem:linesintersection} is not necessarily true
for half-crossing Schnyder woods. For example, one can consider the
half-crossing Schnyder wood of Figure~\ref{fig:cross-0}, for which
there is no red lines intersecting green lines in the universal
cover. Thus our definition of region vector
(definition~\ref{def:regionvector}) is valid only in the case of
intersecting Schnyder woods. This explains why we are only
considering intersecting Schnyder woods in this chapter.

\section{Region vector and geodesic embedding}

Consider a toroidal map $G$ given with an intersecting Schnyder wood,
a representation of $G$ in a parallelogram $P$ and the mapping of each
vertex of $G^\infty$ on its region vector (see
definition~\ref{def:regionvector}). We show in this section that this
gives a geodesic embedding of $G$.

\begin{lemma}
\label{lem:sum}
The sum of the coordinates of a vertex $v$ equals the number of faces
in the bounded region delimited by the lines $L_0(v)$, $L_1(v)$ and
$L_2(v)$ if the Schnyder wood is of type 1 and this sum equals zero if
the Schnyder wood is of type 2.
\end{lemma}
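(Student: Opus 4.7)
My plan is to reduce the sum to $\sum_i d_i(v, z_i(v))$, then handle the two types separately: type 2 by a convenient choice of $z_i(v)$, and type 1 by a face-counting argument comparing the three regions at $v$ and at the corners of the bounded triangle. Expanding Definition~\ref{def:regionvector} gives
\[\sum_i v_i \;=\; \sum_i d_i(v, z_i(v)) \;+\; N\sum_i \bigl(f_{i+1}(L_{i+1}(v)) - f_{i-1}(L_{i-1}(v))\bigr).\]
The second sum telescopes: reindexing $j = i+1$ in the first term and $j = i-1$ in the second, both become $\sum_j f_j(L_j(v))$, so the difference vanishes. Hence $\sum_i v_i = \sum_i d_i(v, z_i(v))$, and the remaining task is a bounded face-counting identity.

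For type 2.$i$, Theorem~\ref{lem:type-cross} gives $L_{i+1}(v) = (L_{i-1}(v))^{-1}$, so $v$ itself lies on $L_{i-1}(v) \cap L_{i+1}(v)$; trivially $v$ also lies on $L_{j-1}(v) \cap L_{j+1}(v)$ for $j \ne i$. Choosing $z_j(v) = v$ for all three colors $j$ gives $d_j(v,v) = 0$ and therefore $\sum_i v_i = 0$, as claimed.

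For type 1, I introduce the indicator $\chi_i(F, u) = 1$ if the face $F$ of $G^\infty$ lies in $R_i(u)$ and $0$ otherwise; by Lemma~\ref{lem:nocommongeneral}, $\sum_i \chi_i(F,u) = 1$ for every $F$ and $u$. Since the three lines $L_0(v), L_1(v), L_2(v)$ cross pairwise transversally, they bound a finite triangular region $T$ with corners $z_0(v), z_1(v), z_2(v)$, and for each $i$ the symmetric difference $R_i(v)\,\Delta\,R_i(z_i(v))$ is a bounded ``lens'' $Q_i$ enclosed by the segments of $L_{i-1}(v)$ and $L_{i+1}(v)$ running between $v$ and $z_i(v)$. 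Only finitely many faces $F$ satisfy $\chi_i(F,v) \ne \chi_i(F,z_i(v))$ for some $i$, and on this finite support
\[\sum_i d_i(v, z_i(v)) \;=\; \sum_{F} \Bigl(1 - \sum_i \chi_i(F, z_i(v))\Bigr).\]
I expect every face of $T$ to contribute $+1$, because it lies in exactly one $R_i(v)$ and in none of the $R_j(z_j(v))$, while every other contributing face satisfies $\sum_i \chi_i(F, z_i(v)) = 1$ and contributes $0$; summing yields $|T|$.

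The hard part is this face-by-face accounting. The signs rely on the Schnyder property applied at both $v$ and each corner $z_i(v)$, which pins down which side of the lens boundary the regions $R_i(v)$ and $R_i(z_i(v))$ sit on. The partition of $T$ by the three lenses and the cancellation outside $T$ both rest on Lemma~\ref{lem:regionss} (applicable since intersecting Schnyder woods are half-crossing): the inclusion $R_i(z_i(v)) \subsetneq R_i(v)$, deduced from the position of $z_i(v)$ relative to $v$ along the two lines $L_{i-1}(v)$ and $L_{i+1}(v)$, forces $Q_i \subseteq R_i(v) \setminus R_i(z_i(v))$, yielding the correct sign in each lens and the required cancellations for faces outside $T$.
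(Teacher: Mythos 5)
The telescoping cancellation of the $N$-terms is correct, and your type 1 argument via the indicator identity $\sum_i d_i(v,z_i(v)) = \sum_F\bigl(1-\sum_i\chi_i(F,z_i(v))\bigr)$ is morally the same as the paper's, which computes $g=\sum_i\bigl(\mathbf 1(R_i(v))-\mathbf 1(R_i(z_i(v)))\bigr)=\mathbf 1(\mathbb P)-\mathbf 1(\mathbb P\setminus T)=\mathbf 1(T)$ by observing that the three regions $R_i(z_i(v))$ are disjoint and cover exactly $\mathbb P\setminus T$; both rest on the same (asserted but not spelled out) geometric fact, so that part of your sketch is acceptable at the paper's level of rigor.

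However, your type 2 argument has a genuine error. You claim that for a type 2.$i$ Schnyder wood, $v$ lies on $L_{i-1}(v)\cap L_{i+1}(v)$, and then set $z_j(v)=v$ for every color $j$. This is false in general. Lemma~\ref{lem:linesintersection} only guarantees $L_{i+1}(v)=(L_{i-1}(v))^{-1}$ and that $v$ is situated \emph{on the right of} $L_{i+1}(v)$; its proof explicitly handles the case where $v$ belongs to neither $L_{i-1}(v)$ nor $L_{i+1}(v)$, which is the typical situation (the line $L_i(v)$ is the eventually-periodic tail of $P_i(v)$, and $v$ need not be on that tail). The assertion that ``trivially $v$ also lies on $L_{j-1}(v)\cap L_{j+1}(v)$ for $j\neq i$'' has no justification at all: there is no reason for a vertex to lie on the monochromatic line of its own monochromatic path, let alone on an intersection of two of them. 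Since $z_j(v)$ is defined as a vertex on the intersection $L_{j-1}(v)\cap L_{j+1}(v)$, the choice $z_j(v)=v$ is simply not available, and the whole type 2 case collapses. Note that the paper avoids the case split entirely: it proves the uniform identity $\sum_i v_i = |T|$ where $T$ is the bounded region delimited by $L_0(v),L_1(v),L_2(v)$, and in type 2 that region degenerates (two of the three lines coincide as point sets) so $|T|=0$ automatically. Your telescoping and type 1 reasoning are fine; to repair the proof, drop the type 2 special case and instead argue directly that $\cup_i R_i(z_i(v))=\mathbb P\setminus T$ holds in all cases, with $T$ empty in type 2.
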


\begin{proof}
  We have $v_0+v_1+v_2 = d_0(v,z_0(v))+ d_1(v,z_1(v))+ d_2(v,z_2(v))=
  \sum_i(|R_i(v) \setminus R_i(z_i(v))|-|R_i(z_i(v)) \setminus
  R_i(v)|) $.  We use the characteristic function $\bf 1$ to deal with
  infinite regions. We note ${\bf 1}(R)$, the function defined on the
  faces of $G^\infty$ that has value $1$ on each face of region $R$
  and $0$ elsewhere. Given a function $g:F(G^\infty)\longrightarrow
  \mathbb Z$, we note $|g|=\sum_{F\in F(G^\infty)} g(F)$ (when the sum
  is finite).  Thus 
$\sum_i v_i
=\sum_i(|\mathbf 1(R_i(v) \setminus
  R_i(z_i(v)))|-|\mathbf 1(R_i(z_i(v)) \setminus R_i(v))|)
=
  |\sum_i(\mathbf 1(R_i(v) \setminus R_i(z_i(v)))-\mathbf
  1(R_i(z_i(v)) \setminus R_i(v)))|$. Now we compute $g=\sum_i(\mathbf
  1(R_i(v) \setminus R_i(z_i(v)))-\mathbf 1(R_i(z_i(v)) \setminus
  R_i(v)))$.  We have:
$$ g =\sum_i(\mathbf 1(R_i(v) \setminus
R_i(z_i(v)))+ \mathbf 1(R_i(v) \cap R_i(z_i(v))) -\mathbf
1(R_i(z_i(v)) \setminus R_i(v)) -\mathbf 1(R_i(v) \cap R_i(z_i(v))))
$$
As $R_i(v) \setminus
R_i(z_i(v))$ and $R_i(z_i(v)) \setminus R_i(v)$
are disjoint from $R_i(v) \cap R_i(z_i(v))$, we have
$$
g= \sum_i(\mathbf 1(R_i(v)) -\mathbf
1(R_i(z_i(v))))=\sum_i\mathbf 1(R_i(v)) -
\sum_i\mathbf 1(R_i(z_i(v)))
$$
The interior of the three regions $R_i(v)$, for $i=0,1,2$, being
disjoint and spanning the whole plane $\mathbb P$ (see
Section~\ref{sec:crossinguniversalcover}), we have
$\sum_i\mathbf 1(R_i(v))= \mathbf 1(\cup_i(R_i(v)))=\mathbf 1(\mathbb
P)$.
Moreover the regions $R_i(z_i(v))$, for $i=0,1,2$, are also disjoint
and
$\sum_i\mathbf 1(R_i(z_i(v)))= \mathbf 1(\cup_i(R_i(z_i(v))))=\mathbf
1(\mathbb P\setminus T)$
where $T$ is the bounded region delimited by the lines $L_0(v)$,
$L_1(v)$ and $L_2(v)$.  So
$ g= \mathbf 1(\mathbb P) - \mathbf 1(\mathbb P\setminus T)= \mathbf
1( T) $.  And thus $\sum_i v_i=|g|=| \mathbf 1( T)|$.
\end{proof}

Lemma~\ref{lem:sum} shows that if the Schnyder wood is of type 1,
then the set of points are not necessarily coplanar like in the planar
case~\cite{Fel-book}, but all the copies of a vertex lies on the
same plane (the bounded region delimited by the lines $L_0(v)$,
$L_1(v)$ and $L_2(v)$ has the same number of faces for any copies of a
vertex $v$). Surprisingly, for Schnyder woods of type 2, all the
points are coplanar.

For each color $i$, let $c_i$ (resp. $c'_i$), be the algebraic number
of times a $i$-cycle is traversing the ``left'' (resp. ``top'')
side of the parallelogram $P$  from right to left (resp. from
bottom to top).  This number increases by one each time a
monochromatic cycle traverses the side in the given direction and
decreases by one when it traverses in the other way.  Let $S$ and $S'$
be the two vectors of $\mr^3$ with coordinates
$S_i=N(c_{i+1}-c_{i-1})f$ and $S'_i=N(c'_{i+1}-c'_{i-1})f$.  Note that
$S_0+S_1+S_2=0$ and $S'_0+S'_1+S'_2=0$

\begin{lemma}
\label{lem:periodic}
The mapping is periodic with respect to $S$ and $S'$.
\end{lemma}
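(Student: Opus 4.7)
The plan is to verify both equalities $v^{top}=v+S$ and $v^{right}=v+S'$ coordinate by coordinate. The two cases are symmetric, one obtained from the other by swapping the top and right period vectors and the quantities $c_j$ and $c'_j$, so I focus on the top direction.

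Let $\vec{t}$ denote the top period vector of $G^\infty$. First I would observe that translation by $\vec{t}$ is a deck transformation of the universal cover, hence an automorphism of the oriented colored map $G^\infty$. Consequently $L_j(v^{top})=L_j(v)+\vec{t}$ for each color $j$, and Lemma~\ref{lem:linesintersection} gives $z_i(v^{top})=z_i(v)+\vec{t}$. The symmetric differences $R_i(v)\setminus R_i(z_i(v))$ and $R_i(v^{top})\setminus R_i(z_i(v^{top}))$ are then translates of each other by $\vec{t}$, so they contain the same (finite) number of faces; the analogous statement holds for the reverse difference. Hence $d_i(v^{top},z_i(v^{top}))=d_i(v,z_i(v))$, so the $d_i$ term in the region vector is invariant under the top translation.

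The second step is to show that for $j\in\{i-1,i+1\}$ one has
\[
f_j(L_j(v^{top}))-f_j(L_j(v))\;=\;c_j\cdot f.
\]
The infinite strip of $G^\infty$ bounded by $L_j(v)$ and $L_j(v^{top})=L_j(v)+\vec{t}$ projects onto the torus $G$ and covers it algebraically $c_j$ times: indeed, the left side of $P$, viewed as a closed curve on the torus, is homologous to $\vec{t}$, so the per-cycle algebraic intersection number of a $j$-cycle with the left side of $P$ (namely $c_j$) equals its intersection with $\vec{t}$, which is exactly how many times the strip wraps around the torus. Each wrap accounts for the total face count $f$ of $G$, giving the desired equality, where the sign matches the convention that $f_j$ increases on the positive (right) side of $j$-lines.

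Combining the two steps yields
\[
v^{top}_i-v_i\;=\;N(c_{i+1}-c_{i-1})f\;=\;S_i,
\]
and the symmetric argument using $\vec{r}$ and the top side of $P$ (which is homologous to $\vec{r}$) gives $v^{right}-v=S'$. The main delicate point lies in the second step: one must carefully match the sign conventions, namely that ``right to left'' traversals of the left side of $P$ counted by $c_j$ correspond in sign to the direction in which the strip wraps across $j$-lines (which determines the sign of the $f_j$-shift). Once this compatibility is checked in a single configuration, the rest of the argument is a geometric identification of the strip in the universal cover with a bundle of $c_j$ copies of the torus.
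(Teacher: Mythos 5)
Your proof is correct and takes essentially the same approach as the paper's, which is terse to the point of a single displayed computation. You fill in the two implicit steps: that $d_i(v^{\mathrm{top}},z_i(v^{\mathrm{top}}))=d_i(v,z_i(v))$ by translation invariance of the universal cover, and that $f_j(L_j(v^{\mathrm{top}}))-f_j(L_j(v))=c_j\,f$ by counting $j$-lines crossed in the strip between $L_j(v)$ and its translate. Both points are correct and match what the paper leaves unstated.
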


\begin{proof}
  Let $v$ be any vertex of $G^\infty$.  Then
  $v^{top}_i-v_i
  =N(f_{i+1}(L_{i+1}(v^{top}))-f_{i+1}(L_{i+1}(v)))-N(f_{i-1}(L_{i-1}(v^{top}))-f_{i-1}(L_{i-1}(v)))=N(c_{i+1}-c_{i-1})f$.
  So $v^{top}=v+S$. Similarly $v^{right}=v+S'$.
\end{proof}

Recall that, for each color $i$, the $i$-winding number $\omega_i$ is
the integer such that two monochromatic cycles of $G$ of respective
colors $i-1$ and $i+1$ cross exactly $\omega_i$ times.  If the
Schnyder wood is of type 1, then $\omega_{i-1}\neq 0$,
$\omega_i\neq 0 $ and $\omega_{i+1}\neq 0$.  If the Schnyder wood is
of type 2.i, then $\omega_i= 0$ and $\omega_{i-1}=\omega_{i+1}$. Let
$\omega = \max(\omega_0,\omega_1,\omega_2)$. Let
$Z_0=( (\omega_1+\omega_2) Nf, - \omega_1 Nf, - \omega_2 Nf)$ and
$Z_1=(- \omega_0Nf, (\omega_0+\omega_2) Nf, - \omega_2 Nf)$ and
$Z_2=(- \omega_0Nf,- \omega_1 Nf,(\omega_0+\omega_1)Nf)$.

\begin{lemma}
\label{lem:copies}
For any vertex $u$, we have $\{u+k_0Z_0+k_1Z_1+k_2Z_2\, |\,
k_0,k_1,k_2\in \mz\} \subseteq \{u+kS+k'S'\, |\, k,k'\in \mz\}$.
\end{lemma}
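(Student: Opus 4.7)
The plan is to show that each vector $Z_j$ is itself an integer linear combination of $S$ and $S'$, so that by $\mathbb{Z}$-linearity the lattice $\mathbb{Z} Z_0 + \mathbb{Z} Z_1 + \mathbb{Z} Z_2$ is contained in $\mathbb{Z} S + \mathbb{Z} S'$, from which the claimed inclusion of translates of $u$ is immediate. A convenient preliminary observation is that $S$, $S'$, and the three $Z_j$ all lie in the two-dimensional plane $\{x \in \mathbb{R}^3 : x_0 + x_1 + x_2 = 0\}$, so this is really a planar lattice problem in disguise and ``being an integer combination of $S$ and $S'$'' becomes a purely two-dimensional linear-algebra question with an explicit target to guess.

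Next I would interpret $c_j$ and $c'_j$ as the coordinates, in the basis of $H_1$ of the torus dual to the top and left sides of the parallelogram $P$, of the homology class of any $j$-cycle. This is legitimate because the Schnyder wood is intersecting and hence half-crossing, so by Lemma~\ref{lem:halfcyclehomologous} all $j$-cycles are genuinely homologous (not merely weakly homologous); thus $c_j, c'_j$ depend only on the color. Under this identification, the algebraic intersection number of an $(i-1)$-cycle with an $(i+1)$-cycle equals the $2\times 2$ determinant $c_{i-1} c'_{i+1} - c_{i+1} c'_{i-1}$. Since both families $\mathcal C_{i-1}$ and $\mathcal C_{i+1}$ consist of pairwise disjoint homologous cycles, all of their crossings have the same sign, so, up to a coherent global orientation of the homology basis, this determinant equals $\omega_i$ as defined in Section~\ref{sec:monochrocycles}.

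The core computation is then to verify, for each $j\in\{0,1,2\}$, the identity $Z_j = c'_j S - c_j S'$. For example, the $0$-th coordinate of $c'_0 S - c_0 S'$ equals
\[
Nf\bigl(c'_0(c_1 - c_2) - c_0(c'_1 - c'_2)\bigr) = Nf\bigl((c_0 c'_2 - c_2 c'_0) + (c_1 c'_0 - c_0 c'_1)\bigr) = Nf(\omega_1 + \omega_2),
\]
which is precisely $(Z_0)_0$; the other two coordinates of $Z_0$ (and the two remaining values of $j$) follow by the same cyclic algebra. Since $c_j, c'_j \in \mathbb{Z}$, this places each $Z_j$ in $\mathbb{Z} S + \mathbb{Z} S'$, and the lemma follows by $\mathbb{Z}$-linearity. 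I expect the main obstacle to be the sign bookkeeping: the definition of $\omega_i$ is an unsigned crossing count, so some care is required to justify $\omega_i = c_{i-1} c'_{i+1} - c_{i+1} c'_{i-1}$ as an equality on the nose, rather than only up to a global sign, by fixing an orientation of the homology basis induced by $P$ that is consistent with the orientation conventions underlying the definitions of $S$, $S'$, and the $Z_j$.
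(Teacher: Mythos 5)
Your proposal is correct in substance but takes a genuinely different route from the paper. The paper's proof is geometric: it identifies $u+Z_0$ with the region vector of the \emph{first copy of $u$} along $L_0(u)$ in the universal cover, by computing the face counts $|R(L_1(v),L_1(u))|=\omega_2 f$, $|R(L_2(v),L_2(u))|=\omega_1 f$, $|R(L_0(v),L_0(u))|=0$ directly; it then invokes Lemma~\ref{lem:periodic}, which already places \emph{all} copies of $u$ on the translated lattice $\{u+kS+k'S'\}$, so the inclusion is immediate once $u+Z_i$ is known to be a copy. You instead bypass the geometric identification of $u+Z_i$ as a copy of $u$ and derive the purely algebraic identity $Z_j = c'_j S - c_j S'$, using the determinant interpretation of $\omega_i$ as the intersection pairing of the homology classes $(c_{i-1},c'_{i-1})$ and $(c_{i+1},c'_{i+1})$. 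Your invocation of Lemma~\ref{lem:halfcyclehomologous} to make $(c_j,c'_j)$ well-defined (independent of which $j$-cycle one picks) is the right justification, and the coordinate computation you carry out is correct given the signed identity.

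On the sign issue you flag: what actually matters for the lemma is not the exact sign of $\omega_i = \pm(c_{i-1}c'_{i+1}-c_{i+1}c'_{i-1})$ but its \emph{consistency} across $i\in\{0,1,2\}$. Consistency does hold, because the Schnyder local rule is invariant under cyclic permutation of the colors, so the local picture at every crossing of an $(i-1)$-cycle with an $(i+1)$-cycle is the same cyclic rotation of a single template, forcing the same crossing orientation for every $i$. Once consistency is granted, a global sign flip merely replaces $Z_j = c'_jS - c_jS'$ by $Z_j = -c'_jS + c_jS'$, which is still in $\mathbb{Z}S + \mathbb{Z}S'$, so the conclusion is unaffected even without pinning down the absolute sign. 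Your approach trades the paper's reliance on Lemma~\ref{lem:periodic} plus the geometric face-count claim for this homological bookkeeping; it is arguably cleaner in that it exhibits the explicit integer coefficients, but it does require that the consistency point be made explicit before the coordinate computation can be taken literally.
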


\begin{proof}
  Let $u,v$ be two copies of the same vertex, such that $v$ is the
  first copy of $u$ in the direction of $L_0(u)$.  (That is
  $L_0(u)=L_0(v)$ and on the path $P_0(u)\setminus P_0(v)$ there is no
  two copies of the same vertex.)  Then $v_i-u_i
  =N(f_{i+1}(L_{i+1}(v))-f_{i+1}(L_{i+1}(u)))-N(f_{i-1}(L_{i-1}(v))-f_{i-1}(L_{i-1}(u)))$.
  We have $|R(L_{0}(v),L_{0}(u))|=0$, $|R(L_{1}(v),L_{1}(u))|=\omega_2
  f$ and $|R(L_{2}(v),L_{2}(u))|=\omega_1 f$. So $v_0-u_0= N
  (\omega_1+\omega_2) f$ and $v_1-u_1= -N \omega_1 f$ and $v_2-u_2= -
  N \omega_2 f$. So $v=u+Z_0$.  Similarly for the other colors. So the
  first copy of $u$ in the direction of $L_i(u)$ is equal to $u+Z_i$.
  By Lemma~\ref{lem:periodic}, all the copies of $u$ are mapped on
  $\{u+kS+k'S'\, |\, k,k'\in \mz\}$, so we have the result.
\end{proof}

\begin{lemma}
\label{lem:dim}
We have $dim(Z_0,Z_1,Z_2)=2$ and if the Schnyder wood is not of type
2.i, then $dim(Z_{i-1},Z_{i+1})=2$.
\end{lemma}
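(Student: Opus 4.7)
The plan is to prove both statements by a direct linear-algebra computation combined with the case analysis coming from Theorem~\ref{lem:type-cross}. The first observation is that $Z_0+Z_1+Z_2=0$ (sum coordinate by coordinate: each $\omega_j$ appears once with a $+$ and twice with a $-$, or vice versa, and everything cancels). Hence $\dim(Z_0,Z_1,Z_2)\le 2$, and we only need to exhibit two linearly independent vectors among the $Z_i$ in each case. Since the Schnyder wood is intersecting, by Theorem~\ref{lem:type-cross} it is of type $1$ (crossing) or of type $2.i$ for some $i\in\{0,1,2\}$; in the first case $\omega_0,\omega_1,\omega_2>0$, while in the second $\omega_i=0$ and $\omega_{i-1}=\omega_{i+1}=\omega>0$ (the common winding number, which is nonzero precisely because the Schnyder wood is $i$-crossing).

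In the type $1$ case I would show that \emph{every} pair $(Z_{j-1},Z_{j+1})$ is independent, which simultaneously establishes $\dim(Z_0,Z_1,Z_2)=2$ and the second assertion of the lemma for every $i$. Indeed, assume $Z_{j-1}$ and $Z_{j+1}$ are proportional; their $j$-th coordinates are $-\omega_j Nf$ and $-\omega_j Nf$, both nonzero since $\omega_j>0$, so the proportionality constant must be $1$, i.e.\ $Z_{j-1}=Z_{j+1}$. Comparing the remaining two coordinates then forces $\omega_{j-1}+\omega_j+\omega_{j+1}=0$, contradicting $\omega_k>0$ for all $k$.

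In the type $2.i$ case, a direct substitution of $\omega_i=0$ and $\omega_{i-1}=\omega_{i+1}=\omega$ gives $Z_{i-1}=-Z_{i+1}\neq 0$, so these two are dependent (which is consistent with the second assertion being restricted to the case "not type $2.i$"). However, $Z_i$ has $i$-th coordinate $(\omega_{i-1}+\omega_{i+1})Nf=2\omega Nf\neq 0$, while $Z_{i-1}$ and $Z_{i+1}$ have $i$-th coordinates $\pm\omega Nf$; comparing to their other coordinates (one of which is zero for $Z_{i\pm 1}$ and nonzero for $Z_i$) shows $Z_i$ is not proportional to $Z_{i-1}$. Hence $\dim(Z_0,Z_1,Z_2)=\dim(Z_i,Z_{i-1})=2$. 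For the second assertion in the "not type $2.i$" subcase where the Schnyder wood is of type $2.j$ with $j\neq i$, substituting $\omega_j=0$ and $\omega_{j-1}=\omega_{j+1}=\omega>0$ into $Z_{i-1}$ and $Z_{i+1}$ again produces two vectors whose $i$-th coordinates or $j$-th coordinates immediately reveal non-proportionality by the same coordinate-comparison argument.

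The main potential pitfall is keeping track of signs and indices modulo $3$ across the three type cases; none of the steps is deep, but one must carefully use positivity of the nonzero $\omega_k$ (which follows from the assumption that the Schnyder wood is intersecting together with Theorem~\ref{lem:type-cross}) to rule out the degenerate coincidences. A clean write-up can be done uniformly by noting that, in the matrix whose rows are $Z_0,Z_1,Z_2$, the three $2\times 2$ minors obtained by deleting one column are, up to the factor $(Nf)^2$, equal to $\omega_{i-1}\omega_{i+1}(\omega_{i-1}+\omega_i+\omega_{i+1})$ for $i=0,1,2$, so the rank is exactly $2$ as soon as at least two of the $\omega_k$ are nonzero — which is always the case under the intersecting hypothesis.
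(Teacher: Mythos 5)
Your proof contains two concrete arithmetic errors, and the first one is your very opening step.

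The claimed identity $Z_0+Z_1+Z_2=0$ is false in general. Looking at the $0$-th coordinate of the sum, one gets $(\omega_1+\omega_2)Nf - \omega_0 Nf - \omega_0 Nf = (\omega_1+\omega_2-2\omega_0)Nf$, which vanishes only when $\omega_0=\omega_1=\omega_2$. Your cancellation heuristic ("each $\omega_j$ appears once with a $+$ and twice with a $-$") overlooks that, in the $0$-th coordinate of the sum, $\omega_0$ appears twice with a minus sign and never with a plus. The correct linear dependence --- and the one the paper actually uses --- is the weighted identity $\omega_0 Z_0 + \omega_1 Z_1 + \omega_2 Z_2 = 0$. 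This still gives $\dim(Z_0,Z_1,Z_2)\le 2$ since under the intersecting hypothesis at least one $\omega_i$ is nonzero, so the fix is costless, but the identity as you wrote it is simply wrong.

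Your closing paragraph contains a second error: taking rows $Z_{i-1},Z_{i+1}$ and deleting column $i$, the $2\times 2$ determinant is $\pm\,\omega_i(\omega_0+\omega_1+\omega_2)(Nf)^2$, not $\omega_{i-1}\omega_{i+1}(\omega_{i-1}+\omega_i+\omega_{i+1})(Nf)^2$. This is not a harmless typo: your formula would vanish whenever $\omega_{i-1}=0$ or $\omega_{i+1}=0$, i.e.\ precisely in the type $2.j$ (with $j\ne i$) cases where the lemma asserts $\dim(Z_{i-1},Z_{i+1})=2$, so the ``clean uniform write-up'' you sketch would actually fail there. The correct minor $\omega_i(\omega_0+\omega_1+\omega_2)(Nf)^2$ is nonzero exactly when $\omega_i\ne 0$, i.e.\ exactly when the Schnyder wood is not of type $2.i$, and so would in fact give a cleaner uniform proof than the case analysis --- had you computed it correctly.

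The middle of your argument is sound: comparing $j$-th coordinates of $Z_{j-1}$ and $Z_{j+1}$ to force the proportionality constant to be $1$ and then deriving $\omega_{j-1}+\omega_j+\omega_{j+1}=0$ works, as do the direct substitutions in the type $2.i$ case. Once the opening identity is replaced by the weighted one, this portion carries the lemma and is essentially the paper's own argument (the paper simply reduces by symmetry to ``not type $2.1$'' and runs the same coordinate computation). So the substance is recoverable, but both displayed arithmetic claims need to be repaired.
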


\begin{proof}
  We have $\omega_0Z_0+\omega_1Z_1+\omega_2Z_2=0$ and so
  $dim(Z_0,Z_1,Z_2)\leq 2$.  We can assume by symmetry that the
  Schnyder wood is not of type 2.1 and so $\omega_1\neq 0$. Thus
  $Z_0\neq 0$ and $Z_2\neq 0$.  Suppose by contradiction that
  $dim(Z_0,Z_2)=1$. Then there exist $\alpha\neq 0$, $\beta\neq 0$,
  such that $\alpha Z_0 + \beta Z_2=0$.  The sum of this equation for
  the coordinates $0$ and $2$ gives $(\alpha+\beta)\omega_1=0$ and
  thus $\alpha = -\beta$. Then the equation for coordinate $0$ gives
  $\omega_0+\omega_1+\omega_2=0$ contradicting the fact that
  $\omega_1>1$ and $\omega_0,\omega_2\geq 0$.
\end{proof}

\begin{lemma}
\label{lem:notcolinear}
The vectors $S,S'$ are not collinear.
\end{lemma}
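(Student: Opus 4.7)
The plan is to argue by contradiction, using Lemma~\ref{lem:copies} together with Lemma~\ref{lem:dim}. Suppose for contradiction that $S$ and $S'$ are collinear, so that both lie on a common line $\mathbb{R} w \subseteq \mathbb{R}^3$ through the origin. Then for any fixed vertex $u$, the set of translates $\{u + kS + k'S' \mid k,k' \in \mathbb{Z}\}$ is entirely contained in the affine line $u + \mathbb{R} w$.

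By Lemma~\ref{lem:copies}, for any vertex $u$ we have
\[
\{u + k_0 Z_0 + k_1 Z_1 + k_2 Z_2 \mid k_0,k_1,k_2 \in \mathbb{Z}\} \;\subseteq\; \{u + kS + k'S' \mid k,k' \in \mathbb{Z}\}.
\]
Taking $(k_0,k_1,k_2)$ equal to $(1,0,0)$, $(0,1,0)$, $(0,0,1)$ in turn shows that each of the vectors $Z_0, Z_1, Z_2$ lies in $\mathbb{R} w$. Hence $\dim(Z_0,Z_1,Z_2) \leq 1$, which directly contradicts Lemma~\ref{lem:dim} (which asserts that $\dim(Z_0,Z_1,Z_2) = 2$).

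This gives the desired contradiction, so $S$ and $S'$ are not collinear. The argument is short and the only ingredients are the two preceding lemmas; the main content has already been absorbed into the verification that the lattice generated by the $Z_i$ is genuinely two-dimensional, so there is no real obstacle to carrying out this proof as sketched.
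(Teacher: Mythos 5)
Your proof is correct and is essentially the paper's own argument, stated contrapositively: both invoke Lemma~\ref{lem:copies} to embed the lattice generated by $Z_0,Z_1,Z_2$ inside the one generated by $S,S'$, and then use Lemma~\ref{lem:dim} to conclude that the latter must be two-dimensional.
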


\begin{proof}
  By Lemma~\ref{lem:copies}, the set $\{u+k_0Z_0+k_1Z_1+k_2Z_2\, |\,
  k_0,k_1,k_2\in \mz\}$ is a subset of $\{u+kS+k'S'\, |\, k,k'\in
  \mz\}$.  By Lemma~\ref{lem:dim}, we have $dim(Z_0,Z_1,Z_2)=2$, thus
  $dim(S,S')=2$.
\end{proof}

\begin{lemma}
  \label{lem:regionslongue}
  If $u,v$ are two distinct vertices such that $v$ is in $L_{i-1}(v)$,
  $u$ is in $P_{i-1}(v)$, both $u$ and $v$ are in the region
  $R(L_{i+1}(u),L_{i+1}(v))$ and $L_{i+1}(u)$ and $L_{i+1}(v)$ are two
  consecutive $(i+1)$-lines with $L_{i+1}(u)\in \mathcal L_{i+1}^{j}$
  (see Figure~\ref{fig:longregion}).  Then
  $d_i(z_i(v),v)+d_i(u,z_i(u))< (n-1)\times f_{i+1}^{j}$.
\end{lemma}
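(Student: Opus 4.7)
The plan is to interpret each of the two summands as counting (up to sign) the faces of a bounded region that lies entirely inside the strip $R(L_{i+1}(u),L_{i+1}(v))$, and then to bound how many periods of this strip the union of these regions can span.

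First I would identify the region measured by $d_i(z_i(v),v)$. Since $v\in L_{i-1}(v)$ and $z_i(v)\in L_{i-1}(v)\cap L_{i+1}(v)$, the paths $P_{i-1}(v)$ and $P_{i-1}(z_i(v))$ are both subpaths of $L_{i-1}(v)$, and $P_{i+1}(z_i(v))$ is a subpath of $L_{i+1}(v)$; on the other hand $P_{i+1}(v)$ starts at $v$ and eventually joins $L_{i+1}(v)$. By Lemma~\ref{lem:nodirectedcycle} none of these directed paths re-intersect. It follows that the symmetric difference $R_i(v)\,\triangle\,R_i(z_i(v))$ is a bounded region $R_v$ whose boundary consists of (i) the segment of $L_{i-1}(v)$ from $v$ to $z_i(v)$, (ii) the initial portion of $P_{i+1}(v)$ from $v$ to its first meeting point with $L_{i+1}(v)$, and (iii) the segment of $L_{i+1}(v)$ between that meeting point and $z_i(v)$. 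A strictly analogous analysis gives a bounded region $R_u$ associated to $u$ and $z_i(u)$, bounded by parts of $L_{i-1}(v)=L_{i-1}(u)$, $L_{i+1}(u)$ and $P_{i+1}(u)$. Both $R_v$ and $R_u$ are contained in the strip $R(L_{i+1}(u),L_{i+1}(v))$, and the absolute values of $d_i(z_i(v),v)$ and $d_i(u,z_i(u))$ are equal to the number of faces they contain. A short sign check (using that $u$ is reached from $v$ by following edges of color $i-1$, and that $L_{i+1}(v)$ lies on the right of $L_{i+1}(u)$) will show that both contributions enter the sum with the same sign, so that $d_i(z_i(v),v)+d_i(u,z_i(u))$ equals the total face count of $R_v\cup R_u$.

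Next I would show $R_v$ and $R_u$ are essentially disjoint: the only way they could overlap is along $L_{i-1}(v)$, but $R_v$ sits on the $L_{i+1}(v)$-side of $v$ while $R_u$ sits on the $L_{i+1}(u)$-side of $u$, and the segment of $L_{i-1}(v)$ between $v$ and $u$ separates them inside the strip. Using the periodic structure of the strip $R(L_{i+1}(u),L_{i+1}(v))$ — whose fundamental domain contains exactly $f_{i+1}^j$ faces — I would then bound the number of periods met by $R_v\cup R_u$. The key geometric fact is that $R_v\cup R_u$ is contained in the portion of the strip swept between $z_i(u)$ and $z_i(v)$ along $L_{i-1}(v)$, and that this arc passes through at most $n-1$ distinct vertices of $G$ (since $u\neq v$ lie on the single path $P_{i-1}(v)$, and the path cannot traverse the underlying $(i-1)$-cycle more than $n-1$ times without crossing an $(i+1)$-line, which is impossible in a strip bounded by two \emph{consecutive} $(i+1)$-lines). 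Each such vertex contributes at most one period of the strip to the region, yielding the bound $\leq (n-1)f_{i+1}^j$, and a final observation that the boundary paths $P_{i+1}(v)$ and $P_{i+1}(u)$ necessarily leave some face of the last period uncounted gives the strict inequality.

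I expect the main obstacle to be the periodicity/combinatorial counting step: turning the informal statement "$R_v\cup R_u$ fits in at most $n-1$ periods of the strip" into a rigorous argument. The safe way to do it will be to parameterize the segment of $L_{i-1}(v)$ from $z_i(u)$ to $z_i(v)$ by its successive vertices, and to partition the corresponding subregion of the strip according to which "copy" of the fundamental domain each face belongs to; distinctness of vertex copies then caps the number of periods by $n-1$ because $L_{i-1}(v)$ corresponds to a cycle of $G$ of length at most $n$, and the endpoints $u$ and $v$ are distinct, removing one vertex from the count.
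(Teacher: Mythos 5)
Your overall strategy---identify bounded regions $R_v=R_i(z_i(v))\setminus R_i(v)$ and $R_u=R_i(u)\setminus R_i(z_i(u))$, show they are essentially disjoint and live in the strip, and then bound how many copies of each face of $G$ they can contain---is the same as the paper's. You also correctly describe the boundary of $R_v$: a segment of $L_{i-1}(v)$, the piece $Q_{i+1}(v)$ of $P_{i+1}(v)$ from $v$ to its first meeting with $L_{i+1}(v)$, and a segment of $L_{i+1}(v)$. But the pivotal quantitative step is off.

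You claim that $R_v\cup R_u$ is contained in the portion of the strip ``swept between $z_i(u)$ and $z_i(v)$ along $L_{i-1}(v)$'' and then bound the number of vertices on the $L_{i-1}$-arc from $z_i(u)$ to $z_i(v)$ by $n-1$. That containment is not justified and is the wrong handle. The region $R_v$ is bounded on one side by $Q_{i+1}(v)$, which is a $(i+1)$-colored path and can wander arbitrarily far in the direction of the strip before reaching $L_{i+1}(v)$: nothing forces the meeting point of $P_{i+1}(v)$ with $L_{i+1}(v)$ to lie between $z_i(u)$ and $z_i(v)$ in the strip's periodic direction. So $R_v$ can span copies of the fundamental domain that lie well beyond the $z_i(u)$-to-$z_i(v)$ window, and the bound on the $L_{i-1}$-arc does not control how many fundamental domains $R_v$ touches.

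The paper's argument instead bounds the length of $Q_{i+1}(v)$ itself by $n-1$: a $(i+1)$-colored path that contained two copies of the same vertex of $G$ would project to a full $(i+1)$-cycle of $G$ and hence include an edge of $L_{i+1}(v)$, contradicting that $Q_{i+1}(v)$ stops upon first reaching $L_{i+1}(v)$. Then, because $R_v$ is confined to the strip between the two consecutive $(i+1)$-lines, any copy of an $(i-1)$-line that separates two copies of a face inside $R_v$ must cross the boundary of $R_v$, and it can only do so through a vertex of $Q_{i+1}(v)$ (it cannot cross the $L_{i-1}(v)$ or $L_{i+1}(v)$ boundary pieces). Since $Q_{i+1}(v)$ has at most $n$ vertices, each face of $G$ has at most $n-1$ copies in $R_v$, and symmetrically in $R_u$; together with disjointness and the observation that at least one face (e.g.\ one incident to $v$) is missed, this yields the strict inequality. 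To repair your proof, replace the $(i-1)$-arc count with this length bound on $Q_{i+1}(v)$ (and $Q_{i+1}(u)$) and the ``each separating $(i-1)$-copy must cross $Q_{i+1}$'' argument.
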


\begin{figure}[!h]
\center
\input{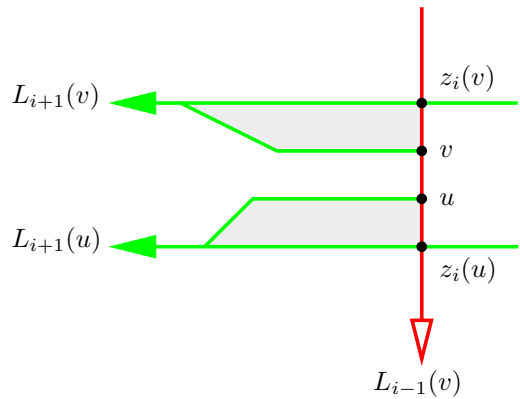}
\caption{The gray area, corresponding to the quantity
  $d_i(z_i(v),v)+d_i(u,z_i(u))$, has size bounded by $(n-1)\times
  f_{i+1}^{j}$.
}
\label{fig:longregion}
\end{figure}

\begin{proof}
  Let $Q_{i+1}(v)$ the subpath of $P_{i+1}(v)$ between $v$ and
  $L_{i+1}(v)$ (maybe $Q_{i+1}(v)$ has length $0$ if $v=z_i(v)$).  Let
  $Q_{i+1}(u)$ the subpath of $P_{i+1}(u)$ between $u$ and $L_{i+1}(u)$
  (maybe $Q_{i+1}(u)$ has length $0$ if $u=z_i(u)$).  The path
  $Q_{i+1}(v)$ cannot contain two different copies of a vertex of $G$,
  otherwise $Q_{i+1}(v)$ will corresponds to a non-contractible cycle
  of $G$ and thus will contain an edge of $L_{i+1}(v)$. So the length
  of $Q_{i+1}(v)$ is $\leq n-1$.
  
  The total number of times a copy of a given face of $G$ can appear
  in the region $R=R_i(z_i(v))\setminus R_i(v)$, corresponding to
  $d_i(z_i(v),v)$, can be bounded as follow. Region $R$ is between two
  consecutive copies of $L_{i+1}(u)$. So in $R$, all the copies of a
  given face are separated by a copy of $L_{i-1}(v)$. Each copy of
  $L_{i-1}(v)$ intersecting $R$ has to intersect $Q_{i+1}(v)$ on a
  specific vertex. As $Q_{i+1}(v)$ has at most $n$ vertices. A given
  face can appear at most $n-1$ times in $R$.  Similarly, the total
  number of times that a copy of a given face of $G$ can appear in the
  region $R_i(u)\setminus R_i(z_i(u))$, corresponding to
  $d_i(u,z_i(u))$, is $\leq (n-1)$.

  A given face of $G$ can appear in only one of the two gray regions
  of Figure~\ref{fig:longregion}. So a face is counted $\leq n-1$
  times in the quantity $d_i(z_i(v),v)+d_i(u,z_i(u))$.
  Only the faces of the region $R(C_{i+1}^{j},C_{i+1}^{j+1})$ can be
  counted. And there is at least one face of
  $R(C_{i+1}^{j},C_{i+1}^{j+1})$ (for example one incident to $v$)
  that is not counted. So in total $d_i(z_i(v),v)+d_i(u,z_i(u))\leq
  (n-1)\times (f_{i+1}^{j}-1)< (n-1)\times f_{i+1}^{j}$.
\end{proof}

Clearly, the symmetric of Lemma~\ref{lem:regionslongue},  where
the role of $i+1$ and $i-1$ are exchanged, is also true.

The bound of Lemma~\ref{lem:regionslongue} is somehow sharp. In the
example of Figure~\ref{fig:examplelongue}, the rectangle represents a
toroidal map $G$ and the universal cover is partially represented.  If
the map $G$ has $n$ vertices and $f$ faces ($n=5$ and $f=5$ in the
example), then the gray region, representing the quantity
$d_1(z_1(v),v)+d_1(u,z_1(u))$, has size
$\frac{n(n-1)}{2}=\Omega(n\times f)$.

 \begin{figure}[!h]
 \center
 \includegraphics[scale=0.3]{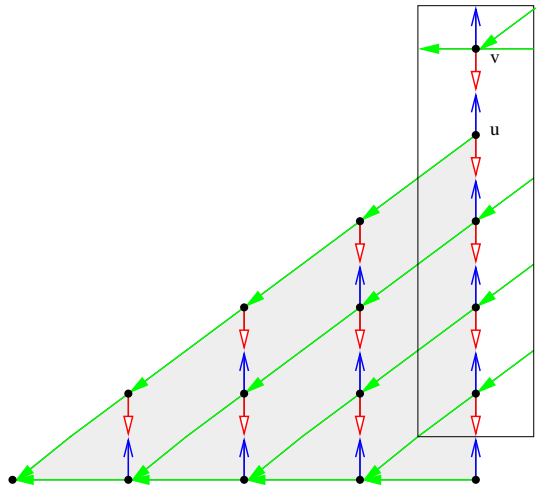}
 \caption{Example of a toroidal map where $d_1(u,z_1(u))$ has size
   $\Omega(n\times f)$.}
 \label{fig:examplelongue}
 \end{figure} 

\begin{lemma}
  \label{lem:regionorder}
  Let $u,v$ be vertices of $G^\infty$ such that $R_i(u)\subseteq
  R_i(v)$, then $u_i\leq v_i$. Moreover if $R_i(u)\subsetneq R_i(v)$,
  then
  $v_i-u_i>(N-n)(|R(L_{i-1}(u),L_{i-1}(v))|+|R(L_{i+1}(u),L_{i+1}(v))|)\geq
  0$.
\end{lemma}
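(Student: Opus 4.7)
The plan is to expand $v_i-u_i$ via Definition~\ref{def:regionvector} as
$$v_i-u_i \;=\; \bigl[d_i(v,z_i(v))-d_i(u,z_i(u))\bigr] + N\bigl[f_{i+1}(L_{i+1}(v))-f_{i+1}(L_{i+1}(u))\bigr] - N\bigl[f_{i-1}(L_{i-1}(v))-f_{i-1}(L_{i-1}(u))\bigr]$$
and to show that the two $N$-terms together contribute exactly $N\bigl(|R(L_{i-1}(u),L_{i-1}(v))|+|R(L_{i+1}(u),L_{i+1}(v))|\bigr)$, while the $d_i$-difference is bounded in absolute value by strictly less than $n$ times the same quantity. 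Subtracting will then give the strict inequality with coefficient $N-n$, and the weak inequality $u_i\le v_i$ follows at once since $N\ge n$ makes the right hand side non-negative.

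The first step is to locate the limiting lines of $u$ relative to those of $v$. Because $u\in R_i(v)$, and using the Schnyder property together with Lemma~\ref{lem:nodirectedcycle}, the paths $P_{i-1}(u)$ and $P_{i+1}(u)$ cannot cross the boundary $P_{i-1}(v)\cup P_{i+1}(v)$ of $R_i(v)$, hence they remain inside $R_i(v)$ and their limiting lines $L_{i\pm 1}(u)$ lie on the side of $L_{i\pm 1}(v)$ that faces into $R_i(v)$ (or coincide with them). Combined with the sign convention defining $f_{i\pm 1}$ via the chosen reference lines $L^{*}_{i\pm 1}$, this yields the claimed identity for the $N$-terms and, in particular, shows they are non-negative with joint value $N\bigl(|R(L_{i-1}(u),L_{i-1}(v))|+|R(L_{i+1}(u),L_{i+1}(v))|\bigr)$.

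The second step is to bound $\bigl|d_i(v,z_i(v))-d_i(u,z_i(u))\bigr|$ by strictly less than $n$ times this total region size. In the degenerate subcase where $L_{i-1}(u)=L_{i-1}(v)$ and $L_{i+1}(u)=L_{i+1}(v)$, I can pick $z_i(u)=z_i(v)$, and then the difference of the two $d_i$ values telescopes to $|R_i(v)\setminus R_i(u)|$, which is strictly positive whenever $R_i(u)\subsetneq R_i(v)$; this already gives $v_i-u_i>0=(N-n)\cdot 0$. In the general case, I would telescope along a chain of intermediate vertices, moving from $u$ to $v$ one slab at a time. At each step, the current vertex is replaced by a vertex lying on the next $(i\pm 1)$-line, chosen as an intersection with a $(i\mp 1)$-line (such intersections exist by Lemma~\ref{lem:linesintersection}), and Lemma~\ref{lem:regionslongue} (or its mirror with $i-1$ and $i+1$ exchanged) is applied to bound the contribution by strictly less than $(n-1)f^{j}_{i\pm 1}$ on each slab.

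The main obstacle will be this slab-by-slab telescoping. Lemma~\ref{lem:regionslongue} is stated in a rather constrained configuration (two vertices aligned along a specific monochromatic path, with two \emph{consecutive} lines of the other color), and a priori $u$ and $v$ need not fit it directly, nor even after a single hop. I would resolve this by induction on the total number of slabs separating $u$ from $v$, at each step reducing to the case of one vertex on the boundary of the next slab by routing through the line intersection, and keeping a running total of the strict bounds. Summing the inequalities $\bigl(n-1\bigr)f^{j}_{i\pm 1}$ over all traversed slabs yields exactly $(n-1)$ times the two region sizes, which is strictly less than $n$ times those sizes, completing the bound and hence the lemma.
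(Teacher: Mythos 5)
Your overall strategy --- expand $v_i-u_i$ via the definition, show the two $N$-terms jointly equal $N$ times the total region size, telescope the $d_i$-difference slab by slab through intersection points, and invoke Lemma~\ref{lem:regionslongue} on each slab --- is essentially the paper's own proof. The paper reduces to the case where $u$ and $v$ lie in a common region bounded by two pairs of consecutive lines, treats the four coincidence patterns of $(L_{i-1}(u),L_{i-1}(v))$ and $(L_{i+1}(u),L_{i+1}(v))$ as separate subcases, and then chains through a sequence $w_0=u,\dots,w_r=v$ of intermediate vertices at line intersections, exactly as you suggest.

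Two points where your outline is imprecise or incomplete. First, you claim to bound the $d_i$-difference \emph{in absolute value} by strictly less than $n$ times the total region size. What you actually need (and what Lemma~\ref{lem:regionslongue} gives) is a one-sided lower bound of the form $d_i(v,z_i(v))-d_i(u,z_i(u)) > -(n-1)\cdot(\text{region size})$, and the paper tracks precisely this inequality, with an extra $-f_{i+1}^{j'}$ correction term when passing from $(u,v)$ to their projections $(u',v')$ onto a common $(i-1)$-line. The absolute-value statement is stronger than what you use and is not established by Lemma~\ref{lem:regionslongue}, which is asymmetric in the two $d_i$ summands. Second, your proposal does not separate out the case where the Schnyder wood is of type 2.$i$, i.e.\ when $\mathcal C_{i-1} = (\mathcal C_{i+1})^{-1}$. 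There the lines $L_{i-1}(\cdot)$ and $L_{i+1}(\cdot)$ are \emph{the same} line with reversed orientation, so the two $N$-terms change together rather than independently, and bounding $d_i(u,z_i(u))$ requires applying Lemma~\ref{lem:regionslongue} twice --- once toward each side --- yielding a factor $(2(n-1)+1)$ rather than $(n-1)$. Your slab-by-slab framework ought to survive once you degenerate it appropriately, but it needs an explicit argument; the paper devotes a full separate case to it. These are gaps in detail rather than in strategy, so with those filled in the proof would go through along the lines you describe.
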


\begin{proof}
  We distinguish two cases depending of the fact that the Schnyder
  wood is of type 2.i or not.

  \noindent
\emph{$\bullet$ Case 1: The Schnyder wood is not of type 2.i.}

  Suppose first that $u$ and $v$ are both in a region delimited by two
  consecutive lines of color $i-1$ and two consecutive lines of color
  $i+1$.  Let $L_{i-1}^j,L_{i-1}^{j+1}$, $L_{i+1}^{j'},L_{i+1}^{j'+1}$
  be these lines such that $L_{i-1}^{j+1}$ is on the positive side of
  $L_{i-1}^j$, $L_{i+1}^{j'+1}$ is on the positive side of
  $L_{i+1}^{j'}$, and $L_k^\ell \in \mathcal L_k^\ell$ (see
  Figure~\ref{fig:regionrectangle}).
  We distinguish cases corresponding to equality or not between lines
  $L_{i-1}(u)$, $L_{i-1}(v)$ and $L_{i+1}(u)$, $L_{i+1}(v)$.

\begin{figure}[!h]
\center
\input{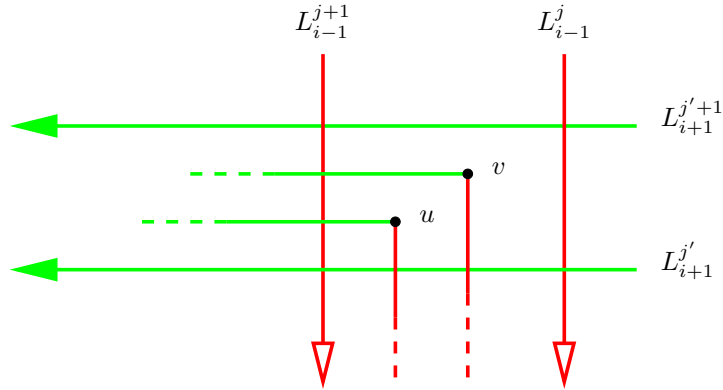}
\caption{Position of $u$ and $v$ in the proof of
  Lemma~\ref{lem:regionorder}}
\label{fig:regionrectangle}
\end{figure}

\noindent
\emph{$\star$ Case 1.1: $L_{i-1}(u)=L_{i-1}(v)$ and
  $L_{i+1}(u)=L_{i+1}(v)$.}  Then
$v_i-u_i=d_i(v,z_i(v))-d_i(u,z_i(u))=d_i(v,u)$. Thus clearly, if
$R_i(u)\subseteq R_i(v)$, then $u_i\leq v_i$ and if $R_i(u)\subsetneq
R_i(v)$, 
$v_i-u_i>0=(N-n)(|R(L_{i-1}(u),L_{i-1}(v))|+|R(L_{i+1}(u),L_{i+1}(v))|)$.

\noindent
\emph{$\star$ Case 1.2: $L_{i-1}(u)=L_{i-1}(v)$ and $L_{i+1}(u)\neq
  L_{i+1}(v)$.}  As $u\in R_i(v)$, we have $L_{i+1}(u)=L_{i+1}^{j'}$
and $L_{i+1}(v)=L_{i+1}^{j'+1}$.  Then
$v_i-u_i=d_i(v,z_i(v))-d_i(u,z_i(u)) +
N(f_{i+1}(L_{i+1}(v))-f_{i+1}(L_{i+1}(u)))=d_i(v,z_i(v))-d_i(u,z_i(u))
+ N f_{i+1}^{j'}$. Let $u'$ be the intersection of $P_{i+1}(u)$ with
$L_{i-1}^{j+1}$ (maybe $u=u'$). Let $v'$ be the intersection of
$P_{i+1}(v)$ with $L_{i-1}^{j+1}$ (maybe $v=v'$). Since
$L_{i+1}(u)\neq L_{i+1}(v)$, we have $u'\neq v'$. Since $u\in R_i(v)$,
we have $u'\in R_i(v')$ and so $u'\in P_{i-1}(v')$. Then, by
Lemma~\ref{lem:regionslongue}, $d_i(z_i(v'),v')+d_i(u',z_i(u'))<(n-1)
f_{i+1}^{j'}$.  If $L_{i-1}(u)=L_{i-1}^{j+1}$ then, one can see that
$d_i(v,z_i(v))-d_i(u,z_i(u))\geq d_i(v',z_i(v'))-d_i(u',z_i(u'))$.  If
$L_{i-1}(u)=L_{i-1}^j$, one can see that
$d_i(v,z_i(v))-d_i(u,z_i(u))\geq
d_i(v',z_i(v'))-d_i(u',z_i(u'))-f_{i+1}^{j'}$.  So finally,
$v_i-u_i=d_i(v,z_i(v))-d_i(u,z_i(u)) + N f_{i+1}^{j'}\geq
d_i(v',z_i(v'))-d_i(u',z_i(u')) + (N-1) f_{i+1}^{j'} >(N-n)
f_{i+1}^{j'}=(N-n)(|R(L_{i-1}(u),L_{i-1}(v))|+|R(L_{i+1}(u),L_{i+1}(v))|)\geq 0$.

\noindent
\emph{$\star$ Case 1.3: $L_{i-1}(u)\neq L_{i-1}(v)$ and $L_{i+1}(u)=
  L_{i+1}(v)$.}  
This case is completely symmetric to the previous
case.

\noindent
\emph{$\star$ Case 1.4: $L_{i-1}(u)\neq L_{i-1}(v)$ and
  $L_{i+1}(u)\neq L_{i+1}(v)$.}  As $u\in R_i(v)$, we have
$L_{i+1}(u)=L_{i+1}^{j'}$, $L_{i+1}(v)=L_{i+1}^{j'+1}$,
$L_{i-1}(u)=L_{i-1}^{j+1}$, and $L_{i-1}(v)=L_{i-1}^{j}$.  Then
$v_i-u_i=d_i(v,z_i(v))-d_i(u,z_i(u)) +
N(f_{i+1}(L_{i+1}(v))-f_{i+1}(L_{i+1}(u)))-
N(f_{i-1}(L_{i-1}(v))-f_{i-1}(L_{i-1}(u)))
=d_i(v,z_i(v))-d_i(u,z_i(u)) + N f_{i+1}^{j'}+Nf_{i-1}^{j}$.  Let $u'$
be the intersection of $P_{i+1}(u)$ with $L_{i-1}^{j+1}$ (maybe
$u=u'$).  Let $u''$ be the intersection of $P_{i-1}(u)$ with
$L_{i+1}^{j'}$ (maybe $u=u''$).  Let $v'$ be the intersection of
$P_{i+1}(v)$ with $L_{i-1}^{j+1}$ (maybe $v=v'$).  Let $v''$ be the
intersection of $P_{i-1}(v)$ with $L_{i+1}^{j'}$ (maybe $v=v''$).
Since $L_{i+1}(u)\neq L_{i+1}(v)$, we have $u'\neq v'$. Since $u\in
R_i(v)$, we have $u'\in R_i(v')$ and so $u'\in P_{i-1}(v')$. Then, by
Lemma~\ref{lem:regionslongue}, $d_i(z_i(v'),v')+d_i(u',z_i(u'))<(n-1)
f_{i+1}^{j'}$.  Symmetrically,
$d_i(z_i(v''),v'')+d_i(u'',z_i(u''))<(n-1) f_{i-1}^{j}$.  Moreover, we
have $d_i(v,z_i(v))-d_i(u,z_i(u))\geq
d_i(v',z_i(v'))-d_i(u',z_i(u'))+d_i(v'',z_i(v''))-d_i(u'',z_i(u''))-f_{i+1}^{j'}-f_{i-1}^j$.
So finally, $v_i-u_i=d_i(v,z_i(v))-d_i(u,z_i(u)) + N
f_{i+1}^{j'}+Nf_{i-1}^{j}\geq
d_i(v',z_i(v'))-d_i(u',z_i(u'))+d_i(v'',z_i(v''))-d_i(u'',z_i(u'')) +
(N-1) f_{i+1}^{j'} + (N-1) f_{i-1}^{j}> (N-n) f_{i+1}^{j'}+ (N-n)
f_{i-1}^{j}=(N-n)(|R(L_{i-1}(u),L_{i-1}(v))|+|R(L_{i+1}(u),L_{i+1}(v))|) \geq 0$.

Suppose now that $u$ and $v$ do not lie in a region delimited by two
consecutive lines of color $i-1$ and/or in a region delimited by two
consecutive lines of color $i+1$.  One can easily find distinct
vertices $w_0,\ldots, w_r$ ($w_i$, $1\leq i< r$  chosen at
intersections of monochromatic lines of colors $i-1$ and $i+1$) such
that $w_0=u$, $w_r=v$, and for $0\leq \ell \leq r-1$, we have
$ R_i(w_\ell)\subsetneq R_i(w_{\ell+1})$ and $w_\ell,w_{\ell+1}$ are both in a
region delimited by two consecutive lines of color $i-1$ and in a
region delimited by two consecutive lines of color $i+1$. Thus by the
first part of the proof, $(w_\ell)_i-(w_{\ell+1})_i>
(N-n)(|R(L_{i-1}(w_{\ell+1}),L_{i-1}(w_\ell))|+|R(L_{i+1}(w_{\ell+1}),L_{i+1}(w_\ell))|)$.
Thus $v_i-u_i> (N-n)\sum_\ell
(|R(L_{i-1}(w_{\ell+1}),L_{i-1}(w_\ell))|+|R(L_{i+1}(w_{\ell+1}),L_{i+1}(w_\ell))|)$.
For any $a,b,c$ such $R_i(a)\subseteq R_i(b)\subseteq R_i(c)$, we have
$|R(L_{j}(a),L_{j}(b))|+|R(L_{j}(b),L_{j}(c))|=|R(L_{j}(a),L_{j}(c))|$.
Thus we obtain the result by summing the size of the regions.

\noindent
\emph{$\bullet$ Case 2: The Schnyder wood is of type 2.i.}

Suppose first that $u$ and $v$ are both in a region delimited by two
consecutive lines of color $i+1$. 

Let $L_{i+1}^{j},L_{i+1}^{j+1}$ be these lines such that
$L_{i+1}^{j+1}$ is on the positive side of $L_{i+1}^{j}$, and
$L_{i+1}^\ell \in \mathcal L_{i+1}^\ell$.  We can assume that we do
not have both $u$ and $v$ in $L_{i+1}^{j+1}$ (by eventually choosing
other consecutive lines of color $i+1$). We consider two cases:

\noindent
\emph{$\star$ Case 2.1: $v\notin L_{i+1}^{j+1}$.}  Then by
Lemma~\ref{lem:linesintersection},
$L_{i+1}^{j}=L_{i+1}(u)=(L_{i-1}(u))^{-1}=L_{i+1}(v)=(L_{i-1}(v))^{-1}$.
Then $v_i-u_i=d_i(v,z_i(v))-d_i(u,z_i(u))=d_i(v,u)$. Thus clearly, if
$R_i(u)\subseteq R_i(v)$, then $u_i\leq v_i$ and if $R_i(u)\subsetneq
R_i(v)$, then
$v_i-u_i>0=(N-n)(|R(L_{i-1}(u),L_{i-1}(v))|+|R(L_{i+1}(u),L_{i+1}(v))|)$.

\noindent
\emph{$\star$ Case 2.2: $v\in L_{i+1}^{j+1}$.}  Then
$L_{i+1}^{j+1}=L_{i+1}(v)=(L_{i-1}(v))^{-1}$ and $d_i(v,z_i(v))=0$.
By assumption $u\notin L_{i+1}^{j+1}$ and by
Lemma~\ref{lem:linesintersection},
$L_{i+1}^{j}=L_{i+1}(u)=(L_{i-1}(u))^{-1}$.  Then
$v_i-u_i=d_i(v,z_i(v))-d_i(u,z_i(u))+
N(f_{i+1}(L_{i+1}(v))-f_{i+1}(L_{i+1}(u)))-
N(f_{i-1}(L_{i-1}(v))-f_{i-1}(L_{i-1}(u))) = - d_i(u,z_i(u))+ 2N
f_{i+1}^{j}$.  Let $L_i$ and $L'_i$ be two consecutive $i$-lines such
that $u$ lies in the region between them and $L'_i$ is on the right of
$L_i$.  Let $u'$ be the intersection of $P_{i+1}(u)$ with $L_{i}$
(maybe $u=u'$).  Let $u''$ be the intersection of $P_{i-1}(u)$ with
$L'_{i}$ (maybe $u=u''$).  Then, by Lemma~\ref{lem:regionslongue},
$d_i(u',z_i(u'))<(n-1) f_{i+1}^{j}$ and $d_i(u'',z_i(u''))<(n-1)
f_{i+1}^{j}$.  Thus we have $d_i(u,z_i(u))\leq
d_i(u',z_i(u'))+d_i(u'',z_i(u''))+ f_{i+1}^{j}< (2(n-1)+1)
f_{i+1}^{j}$.  So finally, $v_i-u_i> -(2n-1) f_{i+1}^{j} + 2N
f_{i+1}^{j}> 2(N-n)f_{i+1}^{j} =
(N-n)(|R(L_{i-1}(u),L_{i-1}(v))|+|R(L_{i+1}(u),L_{i+1}(v))|)\geq 0$.

If $u$ and $v$ do not lie in a region delimited by two consecutive
lines of color $i+1$, then as in case 1, one can find intermediate
vertices to obtain the result.
\end{proof}

 \begin{lemma}
\label{lem:edgesbounded}
   If two vertices $u,v$ are adjacent, then for each color $i$, we
   have $|v_i - u_i |\leq 2Nf$.
 \end{lemma}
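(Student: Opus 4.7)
The plan is to bound each of the three contributions in the expansion
\begin{align*}
v_i - u_i ={}& \bigl[d_i(v,z_i(v)) - d_i(u,z_i(u))\bigr] \\
& + N\bigl[f_{i+1}(L_{i+1}(v)) - f_{i+1}(L_{i+1}(u))\bigr] \\
& - N\bigl[f_{i-1}(L_{i-1}(v)) - f_{i-1}(L_{i-1}(u))\bigr]
\end{align*}
and combine them via a short case analysis on the color and orientation of the edge $uv$.

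The first step is a structural claim: for adjacent vertices $u,v$ in $G^\infty$ and any color $k \in \{0,1,2\}$, the monochromatic lines $L_k(u)$ and $L_k(v)$ are either equal or consecutive lines in the family of $k$-lines. I would prove this by contradiction. If some $k$-line $L$ lay strictly between them in $G^\infty$, then since the outgoing $k$-edge at each vertex is unique (Schnyder property), the monochromatic path $P_k(u)$ cannot cross $L$ without being absorbed into it; hence $u$ lies on the same side of $L$ as $L_k(u)$, and similarly for $v$. But adjacent vertices cannot be separated by a curve that does not share the edge $uv$, and a $k$-colored line cannot share $uv$ unless $uv$ itself is of color $k$, in which case already $L_k(u) = L_k(v)$. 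This yields a contradiction.

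As a direct consequence, $|f_{i\pm 1}(L_{i\pm 1}(v)) - f_{i\pm 1}(L_{i\pm 1}(u))|$ is at most the size of a single slab between consecutive $(i\pm 1)$-lines, hence at most $f$, and each $N$-scaled line term contributes at most $Nf$ in absolute value. Moreover, if the edge $uv$ has color $i+1$ (resp.\ $i-1$) then $L_{i+1}(u) = L_{i+1}(v)$ (resp.\ $L_{i-1}(u) = L_{i-1}(v)$), so the corresponding line term vanishes outright, leaving room to absorb the $d_i$ contribution into the budget.

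It remains to control $|d_i(v,z_i(v)) - d_i(u,z_i(u))|$, which I would do by a case analysis mirroring the proof of Lemma~\ref{lem:regionorder}. By Lemma~\ref{lem:regionslongue} and its symmetric version, each of $d_i(v,z_i(v))$ and $d_i(u,z_i(u))$ is controlled in terms of the surrounding slab sizes. When $uv$ has color $i$, both line terms may be nonzero but the $d_i$-difference is then at most one slab (of size $\leq f$), since $u,v$ lie on a common $i$-path and $R_i(u)\triangle R_i(v)$ is a thin sliver along $uv$; the total stays below $2Nf$. When $uv$ has color $i\pm 1$, one line term vanishes, the other contributes at most $Nf$, and Lemma~\ref{lem:regionslongue} bounds the $d_i$-difference by a quantity of order $(n{-}1)f \leq Nf$ using $N\geq n$, again giving $2Nf$. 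The main obstacle is this last bookkeeping: showing that the signed sum of the $d_i$-difference and the line terms (not a naive sum of absolute values) remains bounded by $2Nf$ uniformly across the four configurations of $L_{i\pm 1}(u), L_{i\pm 1}(v)$ considered in Case~1 of Lemma~\ref{lem:regionorder}. The ordering $R_i(u) \subseteq R_i(v)$ (or conversely) provided by Lemma~\ref{lem:regionss} is what ensures the relevant cancellations, so the verification amounts to reusing the geometric setup already built for Lemma~\ref{lem:regionorder} and extracting an upper, rather than lower, estimate.
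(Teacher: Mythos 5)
Your structural observation — that adjacent vertices of $G^\infty$ lie between a common pair of consecutive $k$-lines for each color $k$ — is correct and is indeed the starting point of the paper's proof. But the term-by-term bounding after that has a genuine gap that you yourself flag but do not close. A naive estimate gives each $N$-scaled line term $\leq Nf$ and the $d_i$-difference some further nonnegative amount, so the raw total threatens to reach $2Nf + (\text{something})$. In the case where $uv$ has color $i$ you claim both line terms may be nonzero while the $d_i$-difference is ``at most one slab,'' but even so that yields $2Nf + f > 2Nf$; cancellation is therefore essential, not optional. The lemmas you invoke do not provide it directly: Lemma~\ref{lem:regionslongue} controls the \emph{sum} $d_i(z_i(v),v)+d_i(u,z_i(u))$ (in a specific geometric configuration), not the \emph{difference} $d_i(v,z_i(v))-d_i(u,z_i(u))$, and Lemma~\ref{lem:regionorder} only produces \emph{lower} bounds on $v_i-u_i$ under a region-containment hypothesis, whereas here you need a matching \emph{upper} bound. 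Extracting that upper bound ``by reusing the geometric setup'' is precisely the part of the argument that would need to be written out and checked across all four line-configuration cases, so as it stands the proposal is a plan rather than a proof.

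The paper avoids this bookkeeping altogether with a sandwiching trick that is worth internalizing. Having placed $u$ and $v$ in a common ``rectangle'' bounded by two consecutive $(i-1)$-lines $L_{i-1}^{j},L_{i-1}^{j+1}$ and two consecutive $(i+1)$-lines $L_{i+1}^{j'},L_{i+1}^{j'+1}$, one picks the two opposite corners: $z$ at the intersection of $L_{i-1}^{j+1}$ and $L_{i+1}^{j'}$, and $z'$ at the intersection of $L_{i-1}^{j}$ and $L_{i+1}^{j'+1}$. Then $R_i(z)\subseteq R_i(u)\subseteq R_i(z')$ and likewise for $v$, so Lemma~\ref{lem:regionorder} gives $z_i\leq u_i,v_i\leq z'_i$. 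The key point is that $z$ and $z'$ sit at line intersections, so $z=z_i(z)$ and $z'=z_i(z')$, their $d_i$-terms vanish, and $z'_i-z_i$ reduces to the pure line quantity $N f_{i+1}^{j'}+N f_{i-1}^{j}\leq 2Nf$. No signs to track, no case analysis on the color of $uv$, and no appeal to Lemma~\ref{lem:regionslongue}. If you want to salvage your route, the cleanest fix is to switch to exactly this device: bound $|v_i-u_i|$ by the $i$-coordinate gap between two auxiliary vertices chosen so that their $d_i$-contributions are identically zero.
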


\begin{proof}
  Since $u,v$ are adjacent, they are both in a region delimited by two
  consecutive lines of color $i-1$ and in a region delimited by two
  consecutive lines of color $i+1$.  Let $L_{i-1}^j,L_{i-1}^{j+1}$ be
  these two consecutive lines of color $i-1$ and
  $L_{i+1}^{j'},L_{i+1}^{j'+1}$ these two consecutive lines of color
  $i+1$ with $L_k^\ell \in \mathcal L_k^\ell$, $L_{i-1}^{j+1}$ is on
  the positive side of $L_{i-1}^j$ and $L_{i+1}^{j'+1}$ is on the
  positive side of $L_{i+1}^{j'}$ (see
  Figure~\ref{fig:regionrectangle} when the Schnyder wood is not of
  type 2.i).  If the Schnyder wood is of type 2.i we assume that
  $L_{i-1}^{j+1}=(L_{i+1}^{j'})^{-1}$ and
  $L_{i-1}^{j}=(L_{i+1}^{j'+1})^{-1}$.  Let $z$ be a vertex on the
  intersection of $L_{i-1}^{j+1}$ and $L_{i+1}^{j'}$.  Let $z'$ be a
  vertex on the intersection of $L_{i-1}^{j}$ and $L_{i+1}^{j'+1}$.
  Thus we have $R_i(z)\subseteq R_i(u)\subseteq R_i(z')$ and
  $R_i(z)\subseteq R_i(v)\subseteq R_i(z')$. So by
  Lemma~\ref{lem:regionorder}, $z_i\leq u_i\leq z'_i$ and $z_i\leq
  v_i\leq z'_i$. So $|v_i-u_i|\leq
  z'_i-z_i=N(f_{i+1}(L_{i+1}^{j'+1})-f_{i+1}(L_{i+1}^{j'})-
  N(f_{i-1}(L_{i-1}^{j})-f_{i-1}(L_{i-1}^{j+1})) =N
  f_{i+1}^{j'}+Nf_{i-1}^{j}\leq 2Nf$.
\end{proof}

We are now able to prove the following:

\begin{theorem}
\label{th:geodesic}
If $G$ is a toroidal map given with an intersecting Schnyder wood,
then the mapping of each vertex of $G^\infty$ on its region vector
gives a geodesic embedding of $G$.
\end{theorem}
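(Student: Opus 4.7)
The plan is to verify the four conditions (D1)--(D4) defining a geodesic embedding for the mapping $v \mapsto (v_0, v_1, v_2)$ of Definition~\ref{def:regionvector}. Periodicity with respect to two non-collinear vectors has already been established by Lemmas~\ref{lem:periodic} and~\ref{lem:notcolinear}, so only (D1)--(D4) themselves remain, and the intersecting hypothesis will be used via Lemmas~\ref{lem:regionss} and~\ref{lem:linesintersection}.

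For (D1) I would prove injectivity. Given distinct vertices $u, v$ of $G^\infty$, Lemma~\ref{lem:regionss}(iii) supplies a color $i$ with $R_i(u) \subsetneq R_i(v)$, whence Lemma~\ref{lem:regionorder} yields $u_i < v_i$; symmetrically some other coordinate of $u$ strictly exceeds that of $v$. Thus the two images are distinct and moreover coordinate-wise incomparable, which is exactly the condition needed for $\mathcal{V}^\infty$ to properly generate an orthogonal surface via domination.

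For (D2) and (D3), the main step, I treat a single edge $e = uv$. If $e$ is oriented from $u$ to $v$ in color $i$, the Schnyder property at $v$ places $u$ in $R_i^\circ(v)$, while Lemma~\ref{lem:nocommongeneral} shows that $v$ lies on $\partial R_{i-1}(u) \cap \partial R_{i+1}(u)$; combining with Lemma~\ref{lem:regionss} and Lemma~\ref{lem:regionorder} this gives $u_i < v_i$, $v_{i-1} \leq u_{i-1}$, and $v_{i+1} \leq u_{i+1}$. Reading off the Schnyder angle labeling at the two faces incident to $e$ then pinpoints the unique face contributing to the change of $d_i$ across the edge and forces equality in exactly one of the non-$i$ coordinates; the $N$-multiplied terms in Definition~\ref{def:regionvector} cancel since $u$ and $v$ share the relevant monochromatic line. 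Consequently $u \vee v$ differs from $v$ only in the $i$-th coordinate and from $u$ only in one of the $(i{\pm}1)$-th coordinates, so $e$ realizes an elbow geodesic, with its two legs exhausting the orthogonal arcs from $u$ and $v$ in the corresponding directions. A symmetric analysis handles bi-oriented edges, and summing over the three outgoing edges $e_0(v), e_1(v), e_2(v)$ at any vertex $v$ shows that each coordinate ray from $v$ is covered by an incident edge, giving (D3). For (D4), two edges whose images crossed transversely in $\mathbb{R}^3$ would correspond to a transverse intersection of two Schnyder monochromatic paths in $G^\infty$, contradicting Lemma~\ref{lem:nodirectedcycle}.

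The main obstacle will be the local bookkeeping in (D2): although the large multiplier $N$ makes the bulk terms of Definition~\ref{def:regionvector} dominate whenever $u$ and $v$ sit on different monochromatic lines, across a single edge these bulk terms cancel and the elbow property hinges entirely on the finite signed differences $d_{i\pm 1}(v, z_{i\pm 1}(v)) - d_{i\pm 1}(u, z_{i\pm 1}(u))$. Here one must invoke Lemma~\ref{lem:linesintersection} (available thanks to the intersecting hypothesis) to ensure that the reference points $z_{i\pm 1}$ are compatible for $u$ and $v$, and then identify the precise face contribution that realizes the elbow turn. This is essentially the toroidal analog of Felsner's planar elbow computation, carried out on the infinite universal cover, where one additionally has to verify that the periodic structure does not spuriously offset the face count on either side of $e$.
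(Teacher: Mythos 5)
Your overall plan is right — verify (D1)--(D4) using Lemmas~\ref{lem:regionss} and~\ref{lem:regionorder}, with the intersecting hypothesis entering through Lemma~\ref{lem:linesintersection} — and your (D1) argument matches the paper exactly. However, the details of (D2) and (D4) contain genuine gaps.

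For (D2), you attempt a direct coordinate computation, claiming that the $N$-multiplied terms cancel ``since $u$ and $v$ share the relevant monochromatic line'' and that $u\vee v$ differs from one endpoint only in the $i$-th coordinate. Neither claim is reliable: two adjacent vertices need \emph{not} lie on the same $i$-line or $(i\pm 1)$-line (this is precisely what the multi-case analysis of Lemma~\ref{lem:regionorder} is designed to handle), and there is no reason for any non-$i$ coordinate of the head and tail of $e$ to coincide. More importantly, establishing (D2) means showing $u\vee v\in\ms$, which requires proving that \emph{no other} vertex $x\in\mv^\infty$ strictly dominates $u\vee v$; your coordinate bookkeeping does not address that. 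The paper's route is cleaner: assume some $x<u\vee v$ exists, note $e$ lies inside some region $R_i(x)$, apply Lemma~\ref{lem:regionss}(i) to get $R_i(u),R_i(v)\subseteq R_i(x)$, and then Lemma~\ref{lem:regionorder} gives $\max(u_i,v_i)\leq x_i$, contradicting $x<u\vee v$. No coordinate arithmetic on $d_i$ or the $N$-terms is needed at all. Your coordinate inequalities (with labels corrected — $u\vee v$ agrees with the \emph{tail} $u$ of $e_i(u)=uv$ in the two non-$i$ coordinates, not the head) are exactly what is needed for (D3), not (D2), and there the weak inequalities $u_{i\pm 1}\geq v_{i\pm 1}$ suffice; no equality is required.

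For (D4), the claim that a crossing of two elbow geodesics ``would correspond to a transverse intersection of two Schnyder monochromatic paths, contradicting Lemma~\ref{lem:nodirectedcycle}'' is too vague and not what actually happens. The paper's argument is considerably more delicate: it places the crossing on a coordinate plane (WLOG $u_1=u_1'$, $u_2>u_2'$, $v_2<v_2'$), uses (D3) to produce a bidirected path of orthogonal arcs between $u$ and $u'$, then tracks memberships $u'\in R_2(v)$, rules out $u'\in P_0(v)$ and $u'\in P_1(v)$ via Lemmas~\ref{lem:nodirectedcycle} and~\ref{lem:nocommongeneral}, concludes $u'\in R_2^\circ(v)$ and hence $v'\in R_2(v)$, and finally obtains the contradiction $v_2'\leq v_2$ from Lemma~\ref{lem:regionorder}. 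Lemma~\ref{lem:nocommongeneral} is an essential ingredient here, not just Lemma~\ref{lem:nodirectedcycle}, and the link from a 3D crossing to this region analysis is not the naive ``crossing monochromatic paths'' picture.
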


\begin{proof}
  By Lemmas~\ref{lem:periodic} and~\ref{lem:notcolinear}, the mapping
  of $G^\infty$ on its region vector is periodic with respect to $S$,
  $S'$ that are not collinear.  For any pair $u,v$ of distinct vertices
  of $G^\infty$, by Lemma~\ref{lem:regionss}.(iii), there exists $i,j$
  with $R_i(u) \subsetneq R_i(v)$ and $R_j(v) \subsetneq R_j(u)$ thus,
  by Lemma~\ref{lem:regionorder}, $u_i<v_i$ and $v_j<u_j$. So
  $\mv^\infty$ is a set of pairwise incomparable elements of $\mr^3$.

  (D1) $\mv^\infty$ is a set of pairwise incomparable elements so
  the mapping between vertices of $G^\infty$ and
  $\mv^\infty$ is a bijection.

  (D2) Let $e=uv$ be an edge of $G^\infty$. We show that $w=u\vee v$
  is on the surface $S_{\mathcal V^\infty}$. By definition $u\vee v$
  is in $\mathcal D_{\mathcal V^\infty}$. Suppose, by contradiction
  that $w\notin S_{\mathcal V^\infty}$. Then there exist $x\in
  \mathcal V^\infty$ with $x<w$.  Let $x$ also denote the
  corresponding vertex of $G^\infty$.  Edge $e$ is in a region
  $R_i(x)$ for some $i$. So $u,v\in R_i(x)$ and thus by
  Lemma~\ref{lem:regionss}.(i), $R_i(u)\subseteq R_i(x)$ and
  $R_i(v)\subseteq R_i(x)$. Then by Lemma~\ref{lem:regionorder},
  $w_i=\max(u_i,v_i)\leq x_i$, a contradiction.  Thus the elbow
  geodesic between $u$ and $v$ is  on the surface.

  (D3) Consider a vertex $v\in\mathcal V$ and a color $i$.  Let $u$ be
  the extremity of the arc $e_i(v)$. We have $u\in R_{i-1}(v)$ and
  $u\in R_{i+1}(v)$, so by Lemma~\ref{lem:regionss}.(i),
  $R_{i-1}(u)\subseteq R_{i-1}(v)$ and $R_{i+1}(u)\subseteq
  R_{i+1}(v)$. Thus by Lemma~\ref{lem:regionss}.(iii),
  $R_{i}(v)\subsetneq R_{i}(u)$. So, by Lemma~\ref{lem:regionorder},
  $v_i<u_i$, $u_{i-1}\leq v_{i-1}$ and $u_{i+1}\leq v_{i+1}$. So the
  orthogonal arc of vertex $v$ in direction of the basis vector $e_i$
  is part of the elbow geodesic of the edge $e_i(v)$.

  (D4) Suppose there exists a pair of crossing edges $e=uv$ and
  $e'=u'v'$ on the surface $S_{\mathcal V^\infty}$. The two edges
  $e,e'$ cannot intersect on orthogonal arcs so they intersects on a
  plane orthogonal to one of the coordinate axis. Up to symmetry we
  may assume that we are in the situation of Figure~\ref{fig:crossing}
  with $u_1=u'_1$, $u_2>u'_2$ and $v_2<v'_2$. Between $u$ and $u'$,
  there is a path consisting of orthogonal arcs only. With (D3), this
  implies that there is a bidirected path $P^*$ colored $0$ from $u$
  to $u'$ and colored $2$ from $u'$ to $u$. We have $u\in R_2(v)$, so
  by Lemma~\ref{lem:regionss}.(i), $R_2(u)\subseteq R_2(v)$.  We have
  $u'\in R_2(u)$, so $u'\in R_2(v)$.  If $P_0(v)$ contains $u'$, then
  there is a directed cycle containing $v,u,u'$ in $G^\infty_1\cup
  (G^\infty_{0})^{-1}\cup (G^\infty_{2})^{-1}$, contradicting
  Lemma~\ref{lem:nodirectedcycle}, so $P_0(v)$ does not contain
  $u'$.  If $P_1(v)$ contains $u'$, then $u'\in P_1(u)\cap P_0(u)$,
  contradicting Lemma~\ref{lem:nocommongeneral}.  So $u'\in
  R_2^\circ(v)$. Thus the edge $u'v'$ implies that $v'\in R_2(v)$. So
  by Lemma~\ref{lem:regionorder}, $v'_2\leq v_2$, a contradiction.
\end{proof}

\begin{figure}[h!]
\center
\includegraphics[scale=0.5]{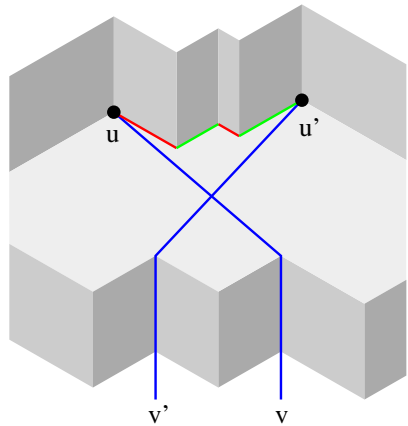}
\caption{A pair of crossing elbow geodesic}
\label{fig:crossing}
\end{figure}

Theorem~\ref{th:existencebasic} and~\ref{th:geodesic} imply
Theorem~\ref{th:triortho}.
 
One can ask what is the ``size'' of the obtained geodesic embedding of
Theorem~\ref{th:geodesic}?  Of course this mapping is infinite so
there is no real size, but as the object is periodic one can consider
the smallest size of the vectors such that the mapping is periodic
with respect to them.  There are several such pairs of vectors, one is
$S,S'$.  Recall that $S_i=N(c_{i+1}-c_{i-1})f$ and
$S'_i=N(c'_{i+1}-c'_{i-1})f$. Unfortunately the size of $S,S'$ can be
arbitrarily large. Indeed, the values of $c_{i+1}-c_{i-1}$ and
$c'_{i+1}-c'_{i-1}$ are unbounded as a toroidal map can be
artificially ``very twisted'' in the considered parallelogram
representation (independently from the number of vertices or faces).
Nevertheless we can prove existence of bounded size vectors for which
the mapping is periodic with respect to them.

\begin{lemma}
\label{th:gamma}
If $G$ is a toroidal map given with an intersecting Schnyder wood, then the mapping
of each vertex of $G^\infty$ on its region vector gives a periodic
mapping of $G^\infty$ with respect to non collinear vectors $Y$ and
$Y'$ where the size of $Y$ and $Y'$ is in $\mathcal O(\omega
Nf)$.  In general we have $\omega\leq n$ and in the case where $G$ is
a simple toroidal triangulation given with a Schnyder wood obtained by
Theorem~\ref{th:schnydersimple}, we have $\omega=1$.
\end{lemma}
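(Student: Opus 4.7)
\noindent\emph{Proof plan for Lemma~\ref{th:gamma}.}

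The plan is to reuse the three vectors $Z_0,Z_1,Z_2$ already defined before Lemma~\ref{lem:copies} and to pick $Y,Y'$ as two linearly independent vectors among them. By Lemma~\ref{lem:copies}, for every vertex $u$ the point $u+Z_i$ is a copy of $u$ for the mapping, so each $Z_i$ is a period of the embedding (equivalently, $Z_i\in\langle S,S'\rangle_{\mathbb Z}$). By Lemma~\ref{lem:dim} we have $\dim(Z_0,Z_1,Z_2)=2$, hence at least two of them are non-collinear; choose such a pair and call them $Y$ and $Y'$. Refining the parallelogram $P$ representing $G$ (equivalently, changing the generators of the deck group $\mathbb Z^2$ of $G^\infty$) so that the corresponding top/right translations realize $Y,Y'$ gives the required parallelogram in the definition of periodicity. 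If the specific $Y,Y'$ chosen do not form a basis of the full deck group, one can replace them by a primitive pair in the lattice they generate, which can only decrease their norms.

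Inspecting the coordinates of the $Z_i$ (for instance $Z_0=((\omega_1+\omega_2)Nf,-\omega_1 Nf,-\omega_2 Nf)$) one sees that every coordinate of every $Z_i$ is bounded in absolute value by $2\omega Nf$, so $\|Y\|,\|Y'\|\in\mathcal O(\omega Nf)$, as claimed.

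It remains to bound $\omega$. For the general bound $\omega\le n$, recall from Theorem~\ref{lem:type-cross} that all $i$-cycles are disjoint cycles in the graph $G$, so each has at most $n$ vertices. Any intersection between a $(i-1)$-cycle $C_{i-1}$ and a $(i+1)$-cycle $C_{i+1}$ occurs at a vertex of $C_{i-1}$, and the Schnyder property forces the entering edge of color $i+1$ and the leaving edge of color $i+1$ at such a vertex to be uniquely determined, so each vertex of $C_{i-1}$ accounts for at most one crossing with $C_{i+1}$. Hence $\omega_i\le n$, and therefore $\omega\le n$.

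For the case when $G$ is a simple toroidal triangulation endowed with the Schnyder wood supplied by Theorem~\ref{th:schnydersimple}, that theorem asserts the existence of three monochromatic cycles of the three different colors, all meeting in one common vertex and pairwise disjoint otherwise. Thus the unique $(i-1)$-cycle crosses the unique $(i+1)$-cycle exactly once (at the common vertex), giving $\omega_i=1$ for all $i$, hence $\omega=1$. The main point that requires care is the first paragraph, namely checking that two non-collinear periods genuinely yield a parallelogram $Q$ satisfying the definition of periodicity; everything else is a direct computation from the formulas defining the $Z_i$ and from the combinatorics of monochromatic cycles already developed in Sections~\ref{sec:monochrocycles} and~\ref{sec:crossing}.
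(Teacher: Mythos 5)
Your proof takes essentially the same route as the paper: extract the period vectors $Z_i$ from Lemma~\ref{lem:copies}, use Lemma~\ref{lem:dim} to obtain a non-collinear pair, bound the coordinates of the $Z_i$ directly from their defining formulas, then argue separately that $\omega\le n$ and that $\omega=1$ under Theorem~\ref{th:schnydersimple}.

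One step needs to be restated, though. You write that if the chosen pair $Y,Y'$ does not generate the full deck group, one can ``replace them by a primitive pair in the lattice they generate.'' A primitive pair of $\langle Y,Y'\rangle$ is still a basis of $\langle Y,Y'\rangle$, which by assumption is a \emph{strict} sublattice of $\langle S,S'\rangle$; so this replacement does not produce a parallelogram $Q$ representing a single copy of $G$, as required by the definition of periodicity. What you actually need is a basis of the \emph{full} lattice $\langle S,S'\rangle$ whose vectors have norm at most (a constant times) $\max(\|Y\|,\|Y'\|)$. This exists because $\langle Y,Y'\rangle\subseteq\langle S,S'\rangle$ forces the successive minima of $\langle S,S'\rangle$ to be no larger than those of $\langle Y,Y'\rangle$, and in dimension two a reduced basis realizes the successive minima; this is what the paper's wording ``there is a parallelogram with sides $Y,Y'$ containing a copy of $V$ inside the $Z$-parallelogram'' encodes. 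With that correction the bound $\|Y\|,\|Y'\|\in\mathcal O(\omega Nf)$ follows exactly as you compute. A second, cosmetic point: in the $\omega=1$ case you speak of ``the unique $(i-1)$-cycle,'' but Theorem~\ref{th:schnydersimple} does not assert uniqueness of monochromatic cycles; what saves you is that $\omega_i$ is well-defined independently of the chosen pair of cycles (all $(i-1)$-cycles are weakly homologous, likewise for $(i+1)$-cycles), so exhibiting \emph{one} pair that crosses exactly once suffices.
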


\begin{proof}
  By Lemma~\ref{lem:copies}, the vectors $Z_{i-1},Z_{i+1}$ (when the
  Schnyder wood is not of type 2.i) span a subset of $S,S'$ (it can
  happen that this subset is strict).  Thus in the parallelogram
  delimited by the vectors $Z_{i-1},Z_{i+1}$ (that is a parallelogram
  by Lemma~\ref{lem:dim}), there is a parallelogram with sides $Y,Y'$
  containing a copy of $V$. The size of the vectors $Z_i$ is in
  $\mathcal O(\omega Nf)$ and so $Y$ and $Y'$ also.

  In general we have $\omega_i\leq n$ as each intersection between two
  monochromatic cycles of $G$ of color $i-1$ and $i+1$ corresponds to
  a different vertex of $G$ and thus $\omega \leq n$.  In the case of
  simple toroidal triangulation given with a Schnyder wood obtained by
  Theorem~\ref{th:schnydersimple}, we have, for each color $i$,
  $\omega_i=1$, and thus $\omega=1$.
\end{proof}

Note that the geodesic embeddings of Theorem~\ref{th:geodesic} are not
necessarily rigid. A geodesic embedding is \emph{rigid}~\cite{FZ08,Mil02} if for every
pair $u,v\in \mathcal V$ such that $u\vee v$ is in $\mathcal
S_{\mathcal V}$, then $u$ and $v$ are the only elements of $\mathcal
V$ that are dominated by $u\vee v$.  The
geodesic embedding of Figure~\ref{fig:example-primal} is not rigid has
the bend corresponding to the loop of color $1$ is dominated by three
vertices of $G^\infty$. We do not know if it is possible to build a
rigid geodesic embedding from an intersecting Schnyder wood of a toroidal
map. Maybe a technique similar to the one presented in \cite{FZ08} can
be generalized to the torus. 

It has been already mentioned that in the geodesic embeddings of
Theorem~\ref{th:geodesic} the points corresponding to vertices are not
coplanar. The problem to build a coplanar geodesic embedding from the
Schnyder wood of a toroidal map is open. In the plane, there are some
examples of maps $G$~\cite{FZ08} for which it is no possible to
require both rigidity and co-planarity. It is not difficult to
transform these examples into toroidal maps such that the same is true
is the toroidal case.

Another question related to co-planarity is whether one can require that
the points of the orthogonal surface corresponding to edges of the
graph (i.e. bends) are coplanar.  This property is related to contact
representation by homologous triangles \cite{FZ08}. It is known
that in the plane, not all Schnyder woods are supported by such
surfaces.  Kratochvil's conjecture~\cite{Kra07}, 
proved in~\cite{GLP11b}, states that every 4-connected planar
triangulation admits a contact representation by homothetic
triangles. Can this be extended to the torus?

\section{Example of a geodesic embedding }

We use the example of the toroidal map $G$ of
Figure~\ref{fig:example-dual-tore} to illustrate the region vector
method.  This toroidal map has $n=3$ vertices, $f=4$ faces and $e=7$
edges. Let $N=n=3$. There are two edges that are oriented in two
directions. The Schnyder wood is of type 1, with two $1$-cycles.  We
choose as origin the three bold monochromatic lines of
Figure~\ref{fig:example-coordinate}. We give them value $0$ and
compute the other values $f_i(L)$ as explained in
 Section~\ref{sec:regionvector}, i.e. we compute the ``number'' of
faces between $L$ and the origin $L^*$ and put a minus if we are on
the left of $L^*$ (this corresponds to values indicated on the border
of Figure~\ref{fig:example-coordinate}, that are values $0,-4,-8,-12$
for lines of color $0$, values $-4,-2,0,2,4$ for lines of color $1$
and values $0,4$ for lines of color $2$).  Then we compute the region
vector of the points according to
Definition~\ref{def:regionvector}. For example, the point $v$ of
Figure~\ref{fig:example-coordinate} as the following values (the three
points $z_i(v)$ are represented on the figure):
$v_0\ =\ d_0(v,z_0(v))\ +\
N\times\big(f_{1}(L_{1}(v))-f_{2}(L_{2}(v))\big)= 0 + 3(0-0)=0$,
$v_1\ =\ d_1(v,z_1(v))\ +\
N\times\big(f_{2}(L_{2}(v))-f_{0}(L_{0}(v))\big)=0+3(0-(-4))=12$,
$v_2\ =\ d_2(v,z_2(v))\ +\
N\times\big(f_{0}(L_{0}(v))-f_{1}(L_{1}(v))\big)=1+3(-4-0)=-11$.
We compute similarly the region vectors of all the vertices that are
in the black box representing one copy of the graph and let
$\mathcal V=\{(0,0,0), (0,12,-11), (6,12,-18)\}$ be the set of these
vectors. Then we compute the $c_i$'s and $c'_i$'s by algebraically
counting the number of times the monochromatic cycles crosses the
sides of the black box. A monochromatic cycle of color $0$ goes $-1$
time from right to left and $-2$ times from bottom to top. So $c_0=-1$
and $c'_0=-2$. Similarly $c_1=0$, $c'_1=1$, $c_2=1$, $c'_2=0$. Then we
compute $S_i=N(c_{i+1}-c_{i-1})f$ and $S'_i=N(c'_{i+1}-c'_{i-1})f$ and
obtain $S=(-12,24,-12)$, $S'=(12,24,-36)$.  Then the region vectors of
the vertices of $G^\infty$ are
$\{u\in \mr^3\,|\,\exists\, v \in \mv,\ k_1,k_2\in \mz$ such that
$u=v+kS+k'S\}$. In this example, the points are not coplanar, they lie
on the two different planes of equation $x+y+z=0$ and $x+y+z=1$.  The
geodesic embedding that is obtained by mapping each vertex to its
region vector is the geodesic embedding of
Figure~\ref{fig:example-primal}. The black parallelogram has sides the
vectors $S,S'$ and represent a basic tile.

\begin{figure}[h!]
\center
\input{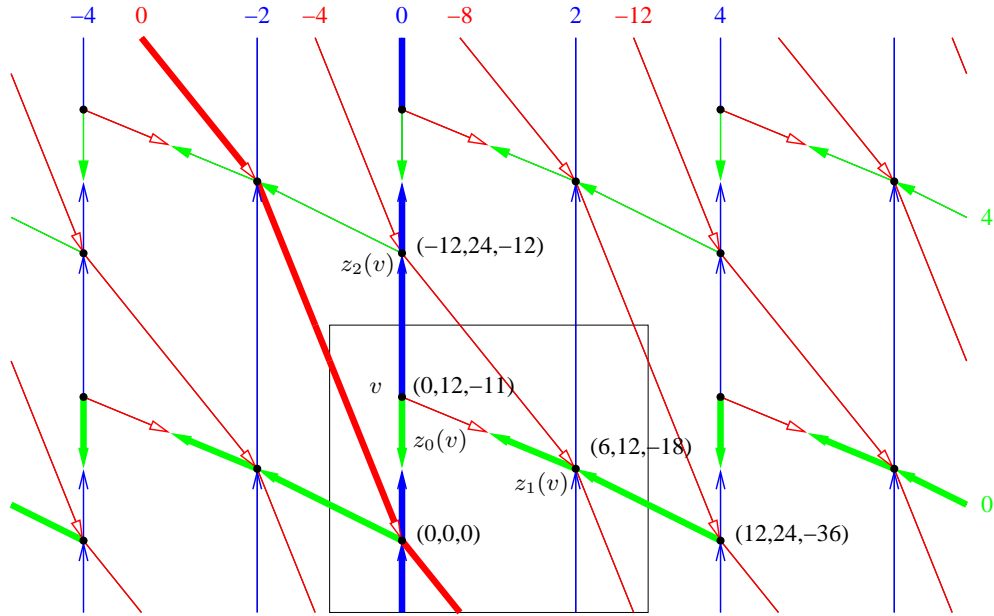}
\caption{Computation of the region vectors for the Schnyder wood of
 Figure~\ref{fig:example-dual-tore}.}
\label{fig:example-coordinate}
\end{figure}

\section{Geodesic embedding and duality}
\label{sec:dualortho}

Given an orthogonal surface generated by $\mv$, let $\mathcal
F_{\mathcal V}$ be the maximal points of $\ms$, i.e. the points of
$\ms$ that are not dominated by any vertex of $\ms$.  If $A,B\in
\mathcal F_{\mathcal V}$ and $A\wedge B\in \ms$, then $\ms$ contains
the union of the two line segments joining $A$ and $B$ to $A\wedge
B$. Such arcs are called \emph{dual elbow geodesic}.  The \emph{dual
  orthogonal arc} of $A\in \mathcal F_{\mathcal V}$ in the direction
of the standard basis vector $e_i$ is the intersection of the ray
$A+\lambda e_i$ with $\ms$.

Given a toroidal map $G$, let $G^{\infty*}$ be the dual of $G^\infty$.
A \emph{dual geodesic embedding} of $G$ is a drawing of $G^{\infty*}$
on the orthogonal surface $\mathcal S_{\mathcal V^\infty}$, where
$\mathcal V^\infty$ is a periodic mapping of $G^\infty$ with respect
to two non collinear vectors, satisfying the following (see example of
Figure~\ref{fig:example-dual}):

\begin{itemize}
\item[(D1*)] There is a bijection between the vertices of
  $G^{\infty*}$ and $\mathcal F_{\mathcal V^\infty}$.

\item[(D2*)] Every edge of $G^{\infty*}$ is a dual elbow geodesic.

\item[(D3*)] Every dual orthogonal arc in $\mathcal S_{\mathcal V^\infty}$ is part of an edge of $G^{\infty*}$.

\item[(D4*)] There are no crossing edges in the embedding of $G^{\infty*}$ on $\mathcal S_{\mathcal V^\infty}$.
\end{itemize}

\begin{figure}[h!]
\center
\includegraphics[scale=0.2]{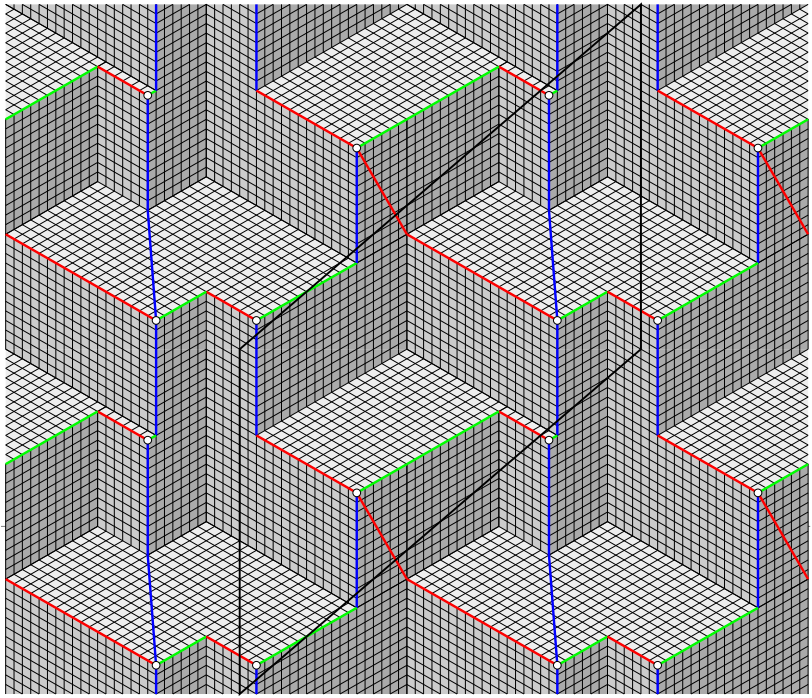}
\caption{Dual geodesic embedding of the toroidal map of 
  Figure~\ref{fig:example-dual-tore}.}
\label{fig:example-dual}
\end{figure}

Let $G$ be a toroidal map given with a intersecting Schnyder
wood. Consider the mapping of each vertex on its region vector.  We
consider the dual of the Schnyder wood of $G$.  By
Proposition~\ref{th:dualtorecross}, it is an intersecting Schnyder
wood of $G^*$.  A face $F$ of $G^{\infty}$ is mapped on the point
$\bigvee_{v\in F}v$.  Let $\overline{G^\infty}$ be a simultaneous
drawing of $G^\infty$ and $G^{\infty*}$ such that only dual edges
intersect. To avoid confusion, we note $R_i$ the regions of the primal
Schnyder wood and $R_i^*$ the regions of the dual Schnyder wood.

\begin{lemma}
\label{lem:maximalpoint}
  For any face $F$ of $G^{\infty}$, we have that $\bigvee_{v\in F}v$ is
  a maximal point of $\mathcal S_{\mathcal V^\infty}$.
\end{lemma}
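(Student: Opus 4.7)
The plan is to prove two statements about $A := \bigvee_{v \in F} v$: (a) $A$ lies on the surface $\mathcal{S}_{\mathcal{V}^\infty}$, and (b) no vertex $u \in \mathcal{V}^\infty$ strictly dominates $A$. Since $\mathcal{D}_{\mathcal{V}^\infty}$ is upward-closed, (a) is equivalent to saying no $u \in \mathcal{V}^\infty$ satisfies $u_i < A_i$ for all three $i$ simultaneously (otherwise an open ball around $A$ would sit entirely inside $\mathcal{D}_{\mathcal{V}^\infty}$). The central technical step will be an identification, for each color $i$, of a vertex $w_i \in F$ with $(w_i)_i = A_i$; once this bridge is in place, Lemmas~\ref{lem:regionss} and~\ref{lem:regionorder} do most of the work.

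To exhibit $w_i$, I would use the angle labeling of $F$ coming from the Schnyder wood: by Proposition~\ref{prop:bijtore} (applied to $G$ and lifted to $G^\infty$), the angles around $F$ in counterclockwise order partition into three non-empty arcs of colors $0$, $1$, $2$. I would pick $w_i$ to be any vertex of $F$ whose $F$-angle is colored $i$. The Schnyder property at $w_i$ places that angle in the counterclockwise sector from $e_{i+1}(w_i)$ to $e_{i-1}(w_i)$, which is contained in $R_i(w_i)$; since $F$ is simply-connected and emanates from $w_i$ into that sector, $F \subseteq R_i(w_i)$. Lemma~\ref{lem:regionss}.(i) then gives $R_i(v) \subseteq R_i(w_i)$ for every $v \in F$, and Lemma~\ref{lem:regionorder} yields $v_i \leq (w_i)_i$; hence $A_i = (w_i)_i$.

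For (b), assume $u \geq A$ with $u \neq A$ and pick $i_0$ such that $u_{i_0} > A_{i_0} = (w_{i_0})_{i_0}$. Then $u \neq w_{i_0}$, so Lemma~\ref{lem:regionss}.(iii) produces a color $j$ with $R_j(u) \subsetneq R_j(w_{i_0})$, and Lemma~\ref{lem:regionorder} gives $u_j < (w_{i_0})_j \leq A_j$, contradicting $u_j \geq A_j$. For (a), assume for contradiction that some $u \in \mathcal{V}^\infty$ has $u_i < A_i$ for every $i$; then $u \neq w_i$ for each $i$, and Lemma~\ref{lem:regionss}.(iii) produces for each $i$ a color $k_i$ with $R_{k_i}(w_i) \subsetneq R_{k_i}(u)$, giving $u_{k_i} > (w_i)_{k_i}$ by Lemma~\ref{lem:regionorder}. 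Combined with $u_{k_i} < A_{k_i} = (w_{k_i})_{k_i}$, this forces $k_i \neq i$. The strict containment additionally places $w_i$ inside $R_{k_i}(u)$. In the generic case $\{k_0,k_1,k_2\} = \{0,1,2\}$, the three vertices $w_0, w_1, w_2$ lie in three different regions of $u$, which forces the face $F$ (containing all of them) to enclose $u$; but $u$ is a vertex of $G^\infty$, so $u$ must itself be a vertex of $F$, and the angle color of $u$ at $F$ then gives $u_j = A_j$ for some $j$, contradicting $u_j < A_j$.

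The principal obstacle is closing (a) in the non-generic case $|\{k_0,k_1,k_2\}| = 2$, where two of the $w_i$ lie in the same region of $u$ and the topological enclosure argument does not apply directly. I expect this to be handled by a finer analysis exploiting the cyclic arrangement of the three arcs of colors $0, 1, 2$ on the boundary of $F$, together with the strict region inclusions along these arcs provided by Lemma~\ref{lem:regionss}.(ii), in order to derive incompatibility between the three chained inequalities $(w_i)_{k_i} < u_{k_i} < (w_{k_i})_{k_i}$.
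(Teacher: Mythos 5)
Your identification of the vertices $w_i$ (the paper calls them $z^i$) and the equality $A_i = (w_i)_i$ is exactly the right starting point, and matches the paper's own proof. However there are two genuine problems with what follows.

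The more serious is in~(b). You reduce maximality to: no \emph{vertex} $u\in\mathcal V^\infty$ satisfies $u\geq A$, $u\neq A$. But a maximal point of $\mathcal S_{\mathcal V^\infty}$ must not be dominated by any other \emph{point} of $\mathcal S_{\mathcal V^\infty}$, and these two conditions are not equivalent. For instance with $\mathcal V=\{(1,0,0),(0,1,0),(0,0,1)\}$, the point $(1,1,0)$ lies on $\mathcal S_{\mathcal V}$ and no vertex of $\mathcal V$ dominates it, yet it is not maximal since the surface point $(1,1,1)$ dominates it. So ruling out dominating vertices is insufficient; you must rule out dominating surface points, and your argument gives no bridge to those. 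The paper closes this gap by establishing the \emph{strict} inequalities $A_{j-1} > (w_j)_{j-1}$ and $A_{j+1} > (w_j)_{j+1}$ (via $R_j(w_{j-1}) \subsetneq R_j(w_j)$, which follows from $w_{j-1}\in F\subseteq R_j(w_j)$ together with $F\subseteq R_{j-1}(w_{j-1})$, hence $F\not\subseteq R_j(w_{j-1})$, and then Lemma~\ref{lem:regionorder}). With those strict inequalities, any surface point $\alpha\geq A$ with $\alpha_j>A_j$ strictly dominates the vertex $w_j$ in all three coordinates, hence lies in the interior of $\mathcal D_{\mathcal V^\infty}$ --- contradiction. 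Nothing in your proposal derives these strict bounds, so~(b) is incomplete in a way that cannot be patched by simply considering more cases.

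The second issue is the one you flag yourself in~(a), and here I'd also note that the paper's route is much shorter than the one you propose. To show $A\in\mathcal S_{\mathcal V^\infty}$ one only needs: for every vertex $u\in\mathcal V^\infty$ there is some color $i$ with $A_i\leq u_i$. This is immediate, because the face $F$ lies entirely in one of the three regions $R_0(u),R_1(u),R_2(u)$ (the three paths $P_k(u)$ cannot cut across the open face), and then Lemma~\ref{lem:regionorder} gives $v_i\leq u_i$ for all $v\in F$, hence $A_i\leq u_i$. There is no case analysis and no topological enclosure argument needed, and the non-generic case $|\{k_0,k_1,k_2\}|=2$ that you leave open never arises.
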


\begin{proof}
  Let $F$ be a face of $G^{\infty}$.  For any vertex $u$ of $\mathcal
  V^\infty$, there exists a color $i$, such that the face $F$ is in
  the region $R_i(u)$. Thus for $v\in F$, we have $v\in R_i(u)$. By
  Lemma~\ref{lem:regionorder}, we have $v_i\leq u_i$ and so $F_i\leq
  u_i$.  So $F=\bigvee_{v\in F} v$ is a point of $\mathcal S_{\mathcal
    V^\infty}$.

  Suppose, by contradiction, that $F$ is not a maximal point of
  $\mathcal S_{\mathcal V^\infty}$. Then there is a point
  $\alpha \in \mathcal S_{\mathcal V^\infty}$ that dominates $F$ and
  for at least one coordinate $j$, we have $F_j < \alpha_j$. Consider
  the angle labeling corresponding to the considered Schnyder
  wood. Every face is of type $1$ so the
  angles at $F$ form, in counterclockwise order, nonempty intervals of
  $0$'s, $1$'s, and $2$'s.  For each color, let $z^i$ be a vertex of
  $F$ with angle $i$. We have $F$ is in the region $R_i(z^i)$. So
  $z^{i-1}\in R_i(z^i)$ and by Lemma~\ref{lem:regionss}.(i), we have
  $R_i(z^{i-1})\subseteq R_i(z^i)$.  Since $F$ is in
  $R_{i-1}(z^{i-1})$, it is not in $R_i(z^{i-1})$ and thus
  $R_i(z^{i-1})\subsetneq R_i(z^i)$.  Then by
  Lemma~\ref{lem:regionorder}, we have $(z^{i-1})_i<(z^{i})_i$ and
  symmetrically $(z^{i+1})_i<(z^{i})_i$.  So
  $F_{j-1}=(z^{j-1})_{j-1}>(z^j)_{j-1}$ and
  $F_{j+1}>(z^j)_{j+1}$. Thus $\alpha$ strictly dominates $z^j$, a
  contradiction to $\alpha \in \mathcal S_{\mathcal V^\infty}$. Thus
  $F$ is a maximal point of $\mathcal S_{\mathcal V^\infty}$
\end{proof}

\begin{lemma}
\label{lem:dualorder}
   If two faces $A,B$ are such that $R_i^*(B)\subseteq R_i^*(A)$,
    then $A_i\leq B_i$.
\end{lemma}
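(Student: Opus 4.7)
The plan is to derive this statement as a dual analog of Lemma~\ref{lem:regionorder}, reducing it to the primal version rather than redoing the full case analysis. For a face $F$ and color $i$, write $F_i = \max_{v \in F} v_i$ (by the definition of $\bigvee$), and fix a vertex $z^i_F \in F$ that attains this maximum. The argument inside the proof of Lemma~\ref{lem:maximalpoint} shows that $z^i_F$ must have angle labeled $i$ at $F$ (otherwise, say the angle is $j \neq i$, then $F \in R_j(z^i_F)$ and the strict inequality $(z^i_F)_i < (z^j_F)_i$ obtained there would force $v_i > (z^i_F)_i$ for some $v \in F$, a contradiction). In particular we always have $F \in R_i(z^i_F)$.

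The central step is to prove the \emph{reverse} primal inclusion
\[
R_i^*(B) \subseteq R_i^*(A) \;\Longrightarrow\; R_i(z^i_A) \subseteq R_i(z^i_B).
\]
Once this is in hand, Lemma~\ref{lem:regionorder} applied to $z^i_A, z^i_B$ gives $(z^i_A)_i \leq (z^i_B)_i$, which is exactly $A_i \leq B_i$. To establish the inclusion, I would exploit the duality structure: by Proposition~\ref{th:dualtorecross} the dual Schnyder wood is intersecting, and by Lemma~\ref{lem:dualreverselyhomologous} its $i$-cycles are \emph{reversely} homologous to the primal $i$-cycles. Thus in the simultaneous drawing $\overline{G^\infty}$, the dual monochromatic path $P_i^*(F)$ runs against the primal $i$-direction. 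Using the local correspondence of Figure~\ref{fig:edgelabelingdual}, one verifies edge by edge that each dual edge of color $i$ along $P_i^*(F)$ lies infinitesimally alongside a primal path segment that, when combined over all of $P_i^*(F)$, traces the primal boundary $P_{i-1}(z^i_F) \cup P_{i+1}(z^i_F)$ on its ``outer'' (i.e.\ $R_i$-increasing) side. Consequently $R_i^*(F)$ is supported, up to the dual path itself, on the complement of $R_i(z^i_F)$, so that inclusion of dual regions corresponds to reverse inclusion of primal regions, as claimed.

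The main obstacle is the last bookkeeping step: the precise identification of which dual region sits on which primal side of $P_i^*(F)$, particularly when $F$ is incident to bidirected edges (types where two distinct colors coexist on a single primal edge). In such cases the vertex $z^i_F$ attaining the maximum may not be unique and the block of $i$-angles at $F$ can interact with the angle labeling of the dual Schnyder wood in subtle ways. I would dispatch this by reading off, from the three cases in Figure~\ref{fig:edgelabelingdual}, how each primal edge type contributes to $P_i^*(F)$ versus $P_i(z^i_F)$; in each of the three cases the outgoing dual $i$-edge at $F$ is locally parallel to the outgoing primal $i$-edge at $z^i_F$ but oriented oppositely, which yields the required side identification. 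A cleaner alternative route, which I would pursue if the edge-by-edge analysis gets unwieldy, is to define a ``face region vector'' directly via the dual Schnyder wood by mirroring Definition~\ref{def:regionvector} on $G^*$, check that it agrees with $F \mapsto \bigvee_{v \in F} v$ up to a global additive constant, and then apply the statement of Lemma~\ref{lem:regionorder} verbatim on $G^*$, with the sign reversal accounted for by Lemma~\ref{lem:dualreverselyhomologous}.
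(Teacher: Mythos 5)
Your overall strategy---work with a vertex of $B$ whose angle at $B$ is labeled $i$, show that the whole face $A$ lies in that vertex's primal $R_i$-region, and then invoke Lemma~\ref{lem:regionorder}---is essentially the right shape, and the paper's own proof also runs through this point (it picks any $v\in B$ with angle $i$, shows $A\in R_i(v)$, and concludes $A_i\le v_i\le B_i$, without needing to isolate the maximizing vertex $z_A^i$). However, the argument you offer for the central implication has a genuine gap.

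The claim that ``$R_i^*(F)$ is supported, up to the dual path itself, on the complement of $R_i(z^i_F)$'' is not correct. Both regions contain $F$: by the observation you yourself make, $F\in R_i(z^i_F)$, and trivially $F$ lies in $R_i^*(F)$ (it is the apex vertex of that dual region). Likewise $z^i_F$, being a vertex of $F$ with angle $i$, lies in $R_i^*(F)$ as well as in $R_i(z^i_F)$. So the two regions overlap near $F$ and are not complementary even up to small perturbation. Relatedly, the claim that the single dual ray $P_i^*(F)$ ``traces the primal boundary $P_{i-1}(z^i_F)\cup P_{i+1}(z^i_F)$'' cannot be right: that boundary is a two-pronged wedge emanating from $z^i_F$, while $P_i^*(F)$ is a single directed path, so there is no way for one to run alongside the other globally. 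Your proposed ``cleaner alternative'' (defining a face region vector via the dual Schnyder wood and matching it to $F\mapsto\bigvee_{v\in F}v$) would also require a nontrivial verification that is not sketched, and the sign bookkeeping coming from Lemma~\ref{lem:dualreverselyhomologous} is exactly where I would expect this to break down without a detailed argument.

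What the paper actually does at this step is a direct path-interleaving argument in the simultaneous drawing $\overline{G^\infty}$: taking $v\in B$ with angle $i$ at $B$, it uses $v\in R_i^*(B)\subseteq R_i^*(A)$ together with the primal--dual non-crossing constraints (a primal $j$-edge never crosses a dual $j$-edge) to conclude that $P_i(v)$ stays inside $R_i^*(A)$ while $P_{i+1}(v)$ must hit $P^*_{i-1}(A)$ and $P_{i-1}(v)$ must hit $P^*_{i+1}(A)$; these two hits pin $A$ inside $R_i(v)$. If you want to salvage your proof, you should replace the complementation claim with this kind of local crossing analysis; the statement you need is not a global decomposition of the plane into a region and its complement, but the weaker containment $A\subseteq R_i(v)$, which follows from where each of $v$'s three monochromatic rays is forced to go inside the corridor $R_i^*(A)$.
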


\begin{proof}
  Consider the angle labeling corresponding to the considered Schnyder
  wood. Let $v\in B$ be a vertex whose angle at $B$ is labeled $i$. We
  have $v\in R^*_i(B)$ and so $v\in R^*_i(A)$. In
  $\overline{G^\infty}$, the path $P_{i}(v)$ cannot leave $R^*_i(A)$,
  the path $P_{i+1}(v)$ cannot intersect $P_{i+1}(A)$ and the path
  $P_{i-1}(v)$ cannot intersect $P_{i-1}(A)$. Thus $P_{i+1}(v)$
  intersect $P_{i-1}(A)$ and the path $P_{i-1}(v)$  intersect
  $P_{i+1}(A)$.  So $A\in R_i(v)$. Thus for all $u\in A$, we have
  $u\in R_i(v)$, so $R_i(u)\subseteq R_i(v)$, and so $u_i\leq
  v_i$.
  Then $A_i=\max_{u\in A} u_i\leq v_i\leq \max_{w\in B} w_i=B_i$.
\end{proof}

\begin{theorem}
\label{th:dualgeodesic}
If $G$ is a toroidal map given with an intersecting Schnyder wood and
each vertex of $G^\infty$ is mapped on its region vector, then the
mapping of each face of $G^{\infty*}$ on the point
$\bigvee _{v\in F} v$ gives a dual geodesic embedding of $G$.
\end{theorem}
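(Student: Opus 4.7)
The plan is to mirror the proof of Theorem~\ref{th:geodesic}, working now with the dual Schnyder wood of $G^*$, which is again intersecting by Proposition~\ref{th:dualtorecross}. Consequently the dual Schnyder wood satisfies the analogues of Lemmas~\ref{lem:regionss} and~\ref{lem:regionorder} applied to the regions $R^*_i$, and Lemma~\ref{lem:dualorder} already translates dual region inclusion into a coordinate inequality in the primal mapping. Lemma~\ref{lem:maximalpoint} guarantees that the image of each face is a maximal point of $\mathcal{S}_{\mathcal{V}^\infty}$, so the mapping does land in $\mathcal{F}_{\mathcal{V}^\infty}$.

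First I would check (D1*). To see the map is injective, given two distinct faces $A, B$ of $G^\infty$, I would invoke the dual analogue of Lemma~\ref{lem:regionss}.(iii) to produce colors $i \neq j$ with $R^*_i(A) \subsetneq R^*_i(B)$ and $R^*_j(B) \subsetneq R^*_j(A)$. A strict version of Lemma~\ref{lem:dualorder} then yields $A_i > B_i$ and $A_j < B_j$, so the images are distinct and incomparable. The strict version should follow by revisiting the proof of Lemma~\ref{lem:dualorder}: under strict dual inclusion one can choose the witness vertex $v \in B$ with angle $i$ together with a vertex $u \in A$ for which the inclusion $R_i(u) \subsetneq R_i(v)$ is strict, and then conclude via the strict clause of Lemma~\ref{lem:regionorder}. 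For surjectivity, any maximal point $P$ of an orthogonal surface generated by an antichain in $\mathbb{R}^3$ is of the form $u_0 \vee u_1 \vee u_2$ with $u_0, u_1, u_2$ pairwise incomparable vertices of $\mathcal{V}^\infty$; property (D3) of Theorem~\ref{th:geodesic}, already established, forces the three orthogonal arcs descending from $P$ to each lie in an edge of $G^\infty$, and the Schnyder property together with the angle labeling then identifies a unique face whose vertices are exactly $\{u_0, u_1, u_2\}$.

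Next I would verify (D2*), (D3*), (D4*). For (D2*): given a dual edge between faces $A, B$ sharing a primal edge $e$, one must show $A \wedge B \in \mathcal{S}_{\mathcal{V}^\infty}$. If some vertex $u$ strictly dominated $A \wedge B$, then $u$ would dominate both $A$ and $B$ on a common coordinate, contradicting the maximality proved in Lemma~\ref{lem:maximalpoint} once the color of the bend direction is identified using the dual Schnyder property at the edge $e$. For (D3*): the dual orthogonal arc of $A$ in direction $e_i$ should correspond to the dual edge of color $i$ leaving $A$; applying Lemma~\ref{lem:dualorder} to $A$ and its dual $i$-neighbor $B$ gives $B_i < A_i$ together with $B_{i-1} \geq A_{i-1}$ and $B_{i+1} \geq A_{i+1}$, so the arc from $A$ in direction $-e_i$ is contained in the dual elbow geodesic associated to the dual edge $AB$. (D4*) is handled by an argument exactly parallel to (D4) in the proof of Theorem~\ref{th:geodesic}: two crossing dual elbow geodesics would force, via the dual analogue of Lemma~\ref{lem:regionss}.(i), a contradictory coordinate inequality, using the fact that dual regions of neighboring faces are nested along edges.

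The main obstacle is the surjectivity clause of (D1*), together with upgrading Lemma~\ref{lem:dualorder} to a \emph{strict} inequality under strict inclusion: one must rule out ``extraneous'' maximal points of $\mathcal{S}_{\mathcal{V}^\infty}$ not arising as $\bigvee_{v \in F} v$ for any face $F$. Proving this requires tracing carefully how triples of pairwise incomparable vertices sit with respect to the monochromatic lines $L_i(v)$ in the universal cover, and matching them to faces via the angle labeling—this is where I expect the real case analysis, analogous in spirit to (but more delicate than) the case distinctions in Lemma~\ref{lem:regionorder}. Once these ingredients are secured, (D2*)--(D4*) follow by a straightforward dualization of the arguments already carried out for the primal geodesic embedding.
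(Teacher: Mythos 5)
Your overall plan of dualizing the primal argument is sound for the structural parts, and your observations that the dual Schnyder wood is intersecting (Proposition~\ref{th:dualtorecross}) and that Lemma~\ref{lem:dualorder} is the engine translating dual region inclusions into coordinate inequalities match the paper exactly. But on the crucial point, (D1*), you and the paper part ways, and the gap you yourself flag as ``the main obstacle'' is a real one.

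The paper does \emph{not} prove surjectivity of $F\mapsto\bigvee_{v\in F}v$ by any geometric analysis of which triples of pairwise-incomparable vertices can generate a maximal point. Instead it uses a global counting argument: there are $3m$ orthogonal arcs in total (primal and dual), $3n$ of which emanate from vertices by (D3), so by Euler's formula on the torus ($m-n=f$) there are exactly $3f$ dual orthogonal arcs; each maximal point carries exactly $3$ of these and no two maximal points share one, so there are exactly $f$ maximal points per period, and the injection from Lemma~\ref{lem:maximalpoint} is therefore a bijection. Your proposed surjectivity argument, by contrast, asserts that a maximal point $P=u_0\vee u_1\vee u_2$ is matched to ``a unique face whose vertices are exactly $\{u_0,u_1,u_2\}$.'' This is where things break down: the theorem concerns general toroidal maps, whose faces need not be triangles, so the vertex set of $F$ typically strictly contains the subset realizing $\bigvee_{v\in F}v$; moreover, the paper notes explicitly that these geodesic embeddings need not be rigid, so a maximal point may even be dominated by more than three vertices of $\mathcal{V}^\infty$. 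Pinning down the face from the triple would require exactly the delicate case analysis you anticipate, and it is not clear it closes. The counting argument sidesteps all of this.

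There is also a dependency issue downstream. The paper's proofs of (D2*)--(D4*) all invoke (D1*) in an essential way: given $w=A\wedge B\notin\ms$, one takes a maximal point $C>w$ and \emph{reinterprets $C$ as a face of $G^\infty$} via (D1*), after which the dual region machinery applies. Your sketch of (D2*) talks about a ``vertex $u$ strictly dominating $A\wedge B$,'' which does not parse: vertices sit at the bottom of $\ms$, and a vertex above $A\wedge B$ yields no contradiction. (There is in fact a shorter correct argument here --- a vertex strictly \emph{below} $A\wedge B$ would lie strictly below $A$, contradicting $A\in\ms$; one must also check $A\wedge B$ dominates a shared endpoint of the primal edge --- but that is not what you wrote, and the paper does not go that route either.) As stated, your (D2*) step is not correct, and in any case your (D3*) and (D4*) sketches quietly assume the maximal point encountered can be treated as a face, which is precisely (D1*). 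So (D1*) has to come first and its counting proof is where the theorem really lives.
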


\begin{proof}
  By Lemmas~\ref{lem:periodic} and~\ref{lem:notcolinear}, the mapping
  is periodic with respect to non collinear vectors.

  (D1*) Consider a counting of elements on the orthogonal surface,
  where we count two copies of the same object just once (note that we
  are on an infinite and periodic object).  We have that the sum of
  primal orthogonal arcs plus dual ones is exactly $3m$. There are
  $3n$ primal orthogonal arcs and thus there are $3m-3n=3f$ dual
  orthogonal arcs.  Each maximal point of $\mathcal S_{\mathcal
    V^\infty}$ is incident to $3$ dual orthogonal arcs and there is no
  dual orthogonal arc incident to two distinct maximal points. So
  there is $f$ maximal points. Thus by Lemma~\ref{lem:maximalpoint},
  we have a bijection between faces of $G^\infty$ and maximal points
  of $\mathcal S_{\mathcal V^\infty}$.

  Let $\mathcal V^{\infty*}$ be the maximal points of $\mathcal
  S_{\mathcal V^\infty}$.  Let $\mathcal D^*_{\mathcal V^\infty}
  =\{A\in\mathbb R^3\ |\ \exists\, B\in \mathcal V^{\infty*}$ such
  that $A\leq B\}$. Note that the boundary of $\mathcal D^*_{\mathcal V^\infty}$
  is $\mathcal S_{\mathcal V^\infty}$.

  (D2*) Let $e=AB$ be an edge of $G^{\infty*}$.  We show that
  $w=A\wedge B$ is on the surface $S_{\mathcal V^\infty}$.  By
  definition $w$ is in $\mathcal D^*_{\mathcal V^\infty}$.  Suppose,
  by contradiction that $w\notin S_{\mathcal V^\infty}$. Then there
  exist $C$ a maximal point of $S_{\mathcal V^\infty}$ with $w<C$.  By
  the bijection (D1*) between maximal point and vertices of
  $G^{\infty*}$, the point $C$ corresponds to a vertex of
  $G^{\infty*}$, also denoted $C$.  Edge $e$ is in a region $R_i^*(C)$
  for some $i$. So $A,B\in R_i^*(C)$ and thus, by
  Lemma~\ref{lem:regionss}.(i), $R_i^*(A)\subseteq R_i^*(C)$ and
  $R_i^*(B)\subseteq R_i^*(C)$. Then by Lemma~\ref{lem:dualorder}, we
  have $C_i\leq \min(A_i,B_i)=w_i$, a contradiction.  Thus the dual
  elbow geodesic between $A$ and $B$ is also on the surface.

  (D3*) Consider a vertex $A$ of $G^{\infty*}$ and a color $i$.  Let
  $B$ be the extremity of the arc $e_i(A)$. We have $B\in
  R_{i-1}^*(A)$ and $B\in R_{i+1}^*(A)$, so by
  Lemma~\ref{lem:regionss}.(i), $R_{i-1}^*(B)\subseteq R_{i-1}^*(A)$
  and $R_{i+1}^*(B)\subseteq R_{i+1}^*(A)$. So by
  Lemma~\ref{lem:dualorder}, $A_{i-1}\leq B_{i-1}$ and $A_{i+1}\leq
  B_{i+1}$. As $A$ and $B$ are distinct maximal point of $\mathcal
  S_{\mathcal V^\infty}$, they are incomparable, thus $B_i<A_i$.  So
  the dual orthogonal arc of vertex $A$ in direction of the basis
  vector $e_i$ is part of edge $e_i(A)$.

  (D4*) Suppose there exists a pair of crossing edges $e=AB$ and
  $e'=A'B'$ of $G^{\infty*}$ on the surface $S_{\mathcal
    V^\infty}$.
  The two edges $e,e'$ cannot intersect on orthogonal arcs so they
  intersects on a plane orthogonal to one of the coordinate axis. Up
  to symmetry we may assume that we are in the situation $A_1=A'_1$,
  $A'_0>A_0$ and $B'_0<B_0$. Between $A$ and $A'$, there is a path
  consisting of orthogonal arcs only. With (D3*), this implies that
  there is a bidirected path $P^*$ colored $2$ from $A$ to $A'$ and
  colored $0$ from $A'$ to $A$. We have $A\in R_0(B)$, so by
  Lemma~\ref{lem:regionss}.(i), $R_0(A)\subseteq R_0(B)$.  We have
  $A'\in R_0(A)$, so $A'\in R_0(B)$.  If $P_2(B)$ contains $A'$, then
  there is a cycle containing $A,A',B$ in
  $(G^\infty_1)^*\cup (G^\infty_{0})^{*-1}\cup (G^\infty_{2})^{*-1}$,
  contradicting Lemma~\ref{lem:nodirectedcycle}, so $P_2(B)$ does not
  contain $A'$.  If $P_1(B)$ contains $A'$, then
  $A'\in P_1(A)\cap P_2(A)$, contradicting
  Lemma~\ref{lem:nocommongeneral}.  So $A'\in R_0^\circ(B)$. Thus the
  edge $A'B'$ implies that $B'\in R_0(B)$. So by
  Lemma~\ref{lem:dualorder}, $B'_0\geq B_0$, a contradiction.
\end{proof}

Theorems~\ref{th:geodesic} and~\ref{th:dualgeodesic} can be combined to
obtain a simultaneous representation of a Schnyder wood and its dual
on an orthogonal surface. The projection of this 3-dimensional object
on the plane of equation $x+y+z=0$ gives a representation of the
primal and the dual where edges are allowed to have one bend and two
dual edges have to cross on their bends (see example of
Figure~\ref{fig:example-primal-dual}). 

\begin{theorem}
\label{cor:primal-dual}
An essentially 3-connected toroidal map admits a simultaneous flat
torus representation of the primal and the dual where edges are
allowed to have one bend and two dual edges have to cross on their
bends. Such a representation is contained in a (triangular) grid of
size $\mathcal O(n^2f)\times \mathcal O(n^2f)$ in general and
$\mathcal O(nf)\times\mathcal O(nf)$ if the map is a simple
triangulation. Furthermore the length of the edges are in $\mathcal
O(nf)$.
\end{theorem}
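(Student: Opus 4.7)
The plan is to assemble the corollary from the ingredients developed earlier in the manuscript, essentially by stacking existence, embedding, and quantitative bounds.

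First I would invoke Theorem~\ref{th:existencebasic} to endow the essentially 3-connected toroidal map $G$ with an intersecting Schnyder wood (a crossing one if $G$ is non-basic, intersecting otherwise). In the special case where $G$ is a simple toroidal triangulation, I would instead use Theorem~\ref{th:schnydersimple} to obtain a crossing Schnyder wood with three monochromatic cycles meeting at a vertex and pairwise disjoint elsewhere, so that all winding numbers satisfy $\omega_i = 1$ and hence $\omega = 1$ (this is exactly the case treated in Lemma~\ref{th:gamma}). Taking $N = n$ in Definition~\ref{def:regionvector}, I would then apply Theorem~\ref{th:geodesic} to obtain a geodesic embedding of $G^\infty$ on the orthogonal surface generated by the region vectors, and simultaneously Theorem~\ref{th:dualgeodesic} to embed $G^{\infty*}$ on the same orthogonal surface via the maximal points $\bigvee_{v\in F}v$.

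Next I would project the 3-dimensional embedding orthogonally onto the plane $x+y+z = c$ (any plane normal to $(1,1,1)$ will do since by Lemma~\ref{lem:sum} the vertex coordinates lie in finitely many parallel such planes). Because the three coordinate axes project to three vectors making angles of $120^\circ$, integer points in $\mathbb R^3$ project to points of the standard triangular grid, and elbow geodesics project to paths consisting of at most two segments along the three grid directions, each edge acquiring at most one bend. The dual elbow geodesics project likewise, and by construction a primal edge and its dual share exactly one common point (the bend) because in the orthogonal surface the primal edge passes through $u\vee v$ and the dual edge through $A\wedge B$, and for a dual pair these are the same vertical/horizontal meeting point in the projection.

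For the grid size I would use Lemma~\ref{th:gamma}: the mapping is periodic with respect to two non-collinear vectors $Y, Y'$ of norm $\mathcal O(\omega N f)$. With the choice $N = n$, this gives a fundamental parallelogram of size $\mathcal O(\omega n f) \times \mathcal O(\omega n f)$, which is $\mathcal O(n^2 f)\times \mathcal O(n^2 f)$ in general (since $\omega \leq n$) and $\mathcal O(nf)\times \mathcal O(nf)$ when $G$ is a simple triangulation equipped with the Schnyder wood from Theorem~\ref{th:schnydersimple} (since then $\omega = 1$). Finally, the edge-length bound comes directly from Lemma~\ref{lem:edgesbounded}: for any adjacent $u,v$ one has $|u_i - v_i| \leq 2Nf = \mathcal O(nf)$ for each coordinate, so the two segments of the projected edge each have length $\mathcal O(nf)$.

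The main obstacle is verifying that the projection really produces a valid \emph{simultaneous} flat-torus drawing, that is, that periodicity descends correctly to the flat torus quotient and that the dual representation is compatible with the primal one in the projected picture. The periodicity part is handled by Lemma~\ref{th:gamma} (which outputs periodicity vectors bounding a fundamental domain containing one copy of every vertex); the primal-dual compatibility is handled by Theorem~\ref{th:dualgeodesic}, which guarantees (D1*)--(D4*), so that after projection the only shared points between a primal edge and its dual are the common bends, as required.
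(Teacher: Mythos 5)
Your proof is correct and follows exactly the paper's argument: invoke Theorem~\ref{th:existencebasic} (resp.\ Theorem~\ref{th:schnydersimple} for simple triangulations) for the Schnyder wood, set $N=n$, apply Theorems~\ref{th:geodesic} and~\ref{th:dualgeodesic} for the primal and dual geodesic embeddings, project onto a plane normal to $(1,1,1)$, and read off the grid bounds from Lemma~\ref{th:gamma} (with $\omega\le n$ in general, $\omega=1$ for simple triangulations) and the edge-length bound from Lemma~\ref{lem:edgesbounded}. You elaborate the projection step and the primal--dual compatibility slightly more than the paper does, but the ingredients and the structure of the derivation are identical.
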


\begin{proof}
  Let $G$ be an essentially 3-connected toroidal map.  By
  Theorems~\ref{th:existencebasic} (or Theorem~\ref{th:schnydersimple} if
  $G$ is a simple triangulation), $G$ admits an intersecting Schnyder
  wood (where monochromatic cycles of different colors intersect just
  once if $G$ is simple).  By Theorems~\ref{th:geodesic}
  and~\ref{th:dualgeodesic}, the mapping of each vertex of $G^\infty$
  on its region vector gives a primal and dual geodesic
  embedding. Thus the projection of this embedding on the plane of
  equation $x+y+z=0$ gives a representation of the primal and the dual
  of $G^\infty$ where edges are allowed to have one bend and two dual
  edges have to cross on their bends.

  By Lemma~\ref{th:gamma}, the obtained mapping is a periodic
  mapping of $G^\infty$ with respect to non collinear vectors $Y$ and
  $Y'$ where the size of $Y$ and $Y'$ is in $\mathcal O(\omega Nf)$,
  with $\omega\leq n$ in general and $\omega=1$ in case of a simple
  triangulation. Let $N=n$. The embedding gives a representation in
  a parallelogram of sides $Y,Y'$ where the size of the vectors $Y$ and
  $Y'$ is in $\mathcal O(n^2f)$ in general and in $\mathcal O(nf)$ if
  the graph is simple and the Schnyder wood is obtained by
  Theorem~\ref{th:schnydersimple}.  By Lemma~\ref{lem:edgesbounded}
  the length of the edges in this representation are in $\mathcal
  O(nf)$.
\end{proof}

\begin{figure}[h!]
\center
\includegraphics[scale=0.2]{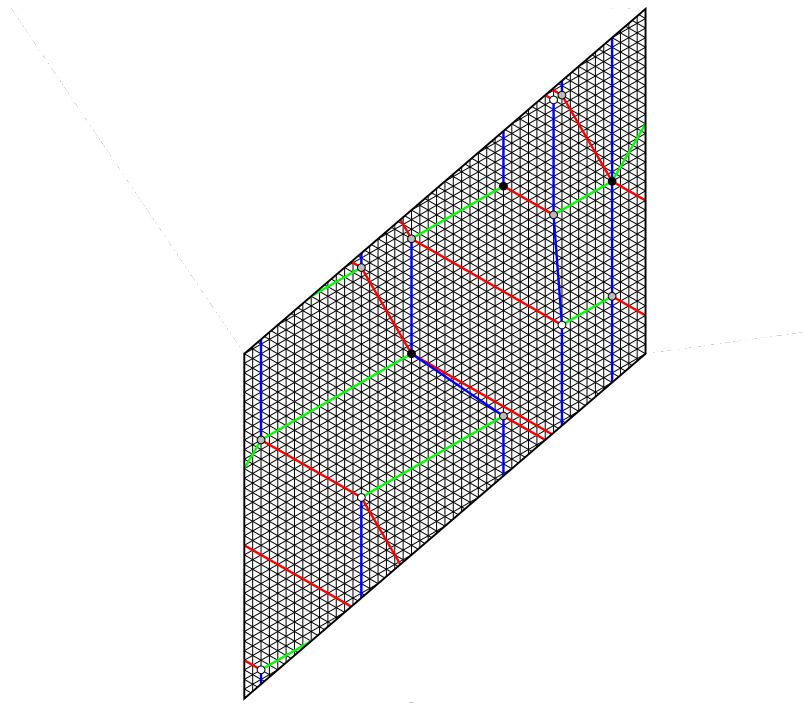}
\caption{Simultaneous representation of the primal and the dual of the
  toroidal map of Figure~\ref{fig:example-dual-tore} (see also
  Figure~\ref{fig:example-superposition}) with edges having one bend
  (in gray).}
\label{fig:example-primal-dual}
\end{figure}

\section{Straight-line representation}
\label{sec:straight} 

A \emph{straight-line  representation} of a toroidal map
$G$ is the restriction to a parallelogram of a periodic straight-line
representation of $G^{\infty}$. 
In the embeddings obtained by
Theorem~\ref{th:triortho}, vertices are not coplanar but we prove that
for toroidal triangulations one can project the vertices on a plane to
obtain a periodic straight-line representation of $G^{\infty}$ (see
Figure~\ref{fig:example-droit-primal}).

\begin{figure}[h!]
\center
\includegraphics[scale=0.2]{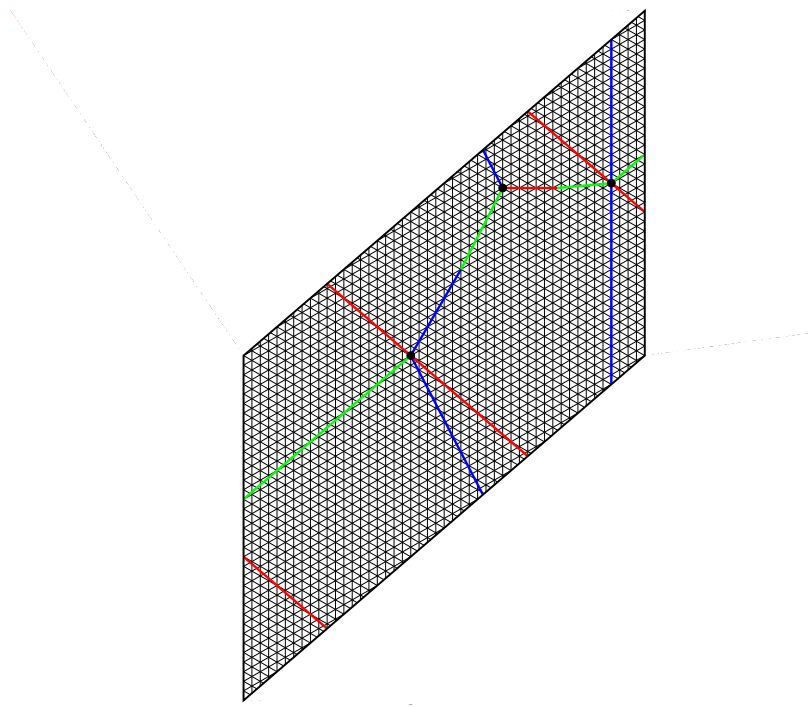}
\caption{Straight-line representation of the map of 
  Figure~\ref{fig:example-dual-tore} obtained by projecting the
  geodesic embedding  of Figure~\ref{fig:example-primal}}
\label{fig:example-droit-primal}
\end{figure}

For this purpose, we have to choose $N$ bigger than previously.  Note
that Figure~\ref{fig:example-droit-primal} is the projection of the
geodesic embedding of Figure~\ref{fig:example-primal} obtained with
the value of $N=n$. In this particular case this gives a straight-line
representation but in this section we only prove that such a technique
works for triangulations and for $N$ sufficiently large. To obtain a
straight-line representation of a general toroidal map, one first has
to triangulate it.

Let $G$ be a toroidal triangulation given with an intersecting
Schnyder wood and $V^\infty$ the set of region vectors of vertices of
$G^\infty$.  The Schnyder wood is crossing by
Proposition~\ref{prop:crossing-trieq}.  Recall that $\omega_i$ is the
integer such that two monochromatic cycles of $G$ of colors $i-1$ and
$i+1$ intersect exactly $\omega_i$ times.

\begin{lemma}
\label{lem:boundedtriangle}
For any vertex $v$, the number of faces in the bounded region
delimited by the three lines $L_{i}(v)$, for $i\in\{0,1,2\}$, is strictly less than
$(5\min_i(\omega_i)+ \max_i(\omega_i))f$.
\end{lemma}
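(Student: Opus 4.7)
My plan is to apply Lemma~\ref{lem:sum} to reduce the statement to a bound on each coordinate of the region vector of $v$, and then to estimate each coordinate via a strip-count controlled by the winding numbers. Since $G$ is a toroidal triangulation and the intersecting Schnyder wood is crossing by Proposition~\ref{prop:crossing-trieq}, the wood is of type~$1$ (Theorem~\ref{lem:type-cross}), so Lemma~\ref{lem:sum} equates $|T|$ with $v_{0}+v_{1}+v_{2}$. The $N$-terms of Definition~\ref{def:regionvector} telescope and leave $|T|=\sum_{i}d_{i}(v,z_{i}(v))$, so it suffices to bound the three signed quantities $d_{i}(v,z_{i}(v))$.

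I would choose each corner $z_{i}(v)\in L_{i-1}(v)\cap L_{i+1}(v)$ to be the pairwise intersection of the two lines of $T$ opposite $L_{i}(v)$, which exists by Lemma~\ref{lem:linesintersection}. With this canonical choice $R_{i}(z_{i}(v))\subseteq R_{i}(v)$, so $d_{i}(v,z_{i}(v))$ counts the faces of the corner region $H_{i}:=T\cap R_{i}(v)$, which is bounded by two short subpaths $\beta_{i\pm1}$ of $P_{i\pm1}(v)$ from $v$ to their first hits $p_{i\pm1}$ on $L_{i\pm1}(v)$, and by two arcs $\alpha_{i\pm1}$ of $L_{i\pm1}(v)$ joining $p_{i\pm1}$ to $z_{i}(v)$. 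Since the three paths $P_{0}(v),P_{1}(v),P_{2}(v)$ partition $T$, we have $|T|=|H_{0}|+|H_{1}|+|H_{2}|$.

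To bound $|H_{i}|$ I would slice it into strips using the $(i+1)$-lines it meets. By definition of $\omega_{i}$, one period of $L_{i-1}(v)$ contains exactly $\omega_{i}$ intersections with $(i+1)$-lines, and the canonical choice of $z_{i}(v)$ confines $\alpha_{i-1}$ within one period, so $H_{i}$ is crossed by at most $\omega_{i}$ $(i+1)$-lines. Each resulting strip of $H_{i}$ projects injectively into one period of the corresponding strip of $G$ and hence contains at most $f_{i+1}^{j}$ faces (using the projection-injectivity argument of Lemma~\ref{lem:regionslongue} together with $|\beta_{i\pm1}|\le n-1$, since a subpath of a monochromatic path has no repeated vertex). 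Summing yields $|H_{i}|<\omega_{i}f+(\text{boundary correction})$, the correction capturing the partial strips at $z_{i}(v)$ and the contributions of the $\beta_{i\pm1}$; a case distinction on which color realises $\omega_{\min}$ then shows that the two corners adjacent to the shortest triangle side contribute at most $2\omega_{\min}f$ each, while the remaining corner absorbs the long-arc slack of at most $(\omega_{\min}+\omega_{\max})f$, giving the advertised bound $(5\omega_{\min}+\omega_{\max})f$. The chief obstacle I foresee is the fine accounting of the boundary corrections: the naive strip-count only gives the cruder estimate $(\omega_{0}+\omega_{1}+\omega_{2})f$, and recovering the sharper constants $5$ and $1$ requires honestly showing that two of the three corner regions lie inside a strip structure of depth $\omega_{\min}$ rather than $\omega_{\max}$, and that the partial strips at the corners are handled without double-counting pieces along $L_{0},L_{1},L_{2}$.
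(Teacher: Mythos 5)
Your initial reduction is fine: Lemma~\ref{lem:sum} together with the telescoping of the $N$-terms in Definition~\ref{def:regionvector} does give $|T|=\sum_i d_i(v,z_i(v))$, and with a sensible choice of $z_i(v)$ one can view each summand as counting the faces of the corner $H_i=T\cap R_i(v)$. The paper's proof does not pass through this reduction at all --- it works directly with the triangle $T$ --- so the decomposition into three corner regions is a genuinely different starting point. However, the heart of the argument is missing.

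The gap is in the assertion that ``the canonical choice of $z_i(v)$ confines $\alpha_{i-1}$ within one period, so $H_i$ is crossed by at most $\omega_i$ $(i+1)$-lines.'' Nothing you have written establishes this. The arc $\alpha_{i-1}$ runs along $L_{i-1}(v)$ from the point $p_{i-1}$ where $P_{i-1}(v)$ merges into $L_{i-1}(v)$ down to the corner $z_i(v)$, and there is no a priori reason these two points lie within a single period of the line. Controlling exactly how far the sides of $T$ can wander relative to a fixed period is the entire content of the lemma, and it is precisely where the paper's proof does its real work: it sets up a grid of period-cells $S(j,k)$ cut out by $L_0$- and $L_2$-lines, proves (via the $d_x\le d_y$ comparison) that the first $1$-line $L'_1$ entering $S(1,1)$ must leave through its $L_2(1)$-side rather than through $P(1,2)$, and proves by superposition that $L'_1$ cannot reach both $S(0,3)$ and $S(3,0)$; these two facts are what trap $T$ inside five cells of size $\omega_1 f$ plus a strip of size $\omega_{\max} f$. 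Without an analogue of those claims, your strip count is unbounded: $\alpha_{i-1}$ could span many periods, so $H_i$ could be crossed by far more than $\omega_i$ lines of the complementary colour. For the same reason the fallback estimate $(\omega_0+\omega_1+\omega_2)f$ you cite is also not actually established by the sketch, and the final case analysis producing the constants $5$ and $1$ is only asserted. So the proposal names a plausible alternative decomposition but punts on exactly the part of the proof that requires the grid and superposition machinery.
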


\begin{proof}
  Suppose by symmetry that $\min(\omega_i)=\omega_1$.  Let
  $L_i=L_i(v)$ and $z_i=z_i(v)$. Let $T$ be the bounded region
  delimited by the three monochromatic lines $L_{i}$.  If $T$ is non
  empty, then its boundary is a cycle $C$ oriented clockwise or
  counterclockwise. Assume that $C$ is oriented counterclockwise (the
  proof is similar if oriented clockwise).  The region $T$ is on the
  left sides of the lines $L_i$.  We have $z_{i-1}\in P_{i}(z_{i+1})$.

  We define, for $j,k \in \mathbb{N}$, monochromatic lines $L_2(j)$,
  $L_0(k)$ and vertices $z(j,k)$ as follows (see
  Figure~\ref{fig:triangleprooff}).  Let $L_2(1)$ be the first
  $2$-line intersecting $L_0\setminus\{z_1\}$ while walking from
  $z_1$, along $L_0$ in the direction of $L_0$.  Let $L_0(1)$ be the
  first $0$-line of color $0$ intersecting $L_2\setminus\{z_1\}$ while
  walking from $z_1$, along $L_2$ in the reverse direction of $L_2$.
  Let $z(1,1)$ be the intersection between $L_2(1)$ and $L_0(1)$.  Let
  $z(j,1)$, $j\geq 0$, be the consecutive copies of $z(1,1)$ along
  $L_0(1)$ such that $z(j+1,1)$ is after $z(j,1)$ in the direction of
  $L_0(1)$. Let $L_2(j)$, $j\geq 0$, be the $2$-line of color $2$
  containing $z(j,1)$. Note that we may have $L_2=L_2(0)$ but in any
  case $L_2$ is between $L_2(0)$ and $L_2(1)$.  Let $z(j,k)$, $k\geq
  0$, be the consecutive copies of $z(j,1)$ along $L_2(j)$ such that
  $z(j,k+1)$ is after $z(j,k)$ in the reverse direction of $L_2(j)$.
  Let $L_0(k)$, $k\geq 0$, be the $0$-line containing $z(1,k)$. Note
  that we may have $L_0=L_0(0)$ but in any case $L_0$ is between
  $L_0(0)$ and $L_0(1)$.  Let $S(j,k)$ be the region delimited by
  $L_2(j),L_2(j+1),L_0(k),L_0(k+1)$. All the region $S(j,k)$ are
  copies of $S(0,0)$. The region $S(0,0)$ may contain several copies
  of a face of $G$ but the number of copies of a face in $S(0,0)$ is
  equal to $\omega_1$.  Let $R$ be the unbounded region situated on
  the right of $L_0(1)$ and on the right of $L_2(1)$. As $P_0(v)$
  cannot intersect $L_0(1)$ and $P_2(v)$ cannot intersect $L_2(1)$,
  vertex $v$ is in $R$.  Let $P(j,k)$ be the subpath of $L_0(k)$
  between $z(j,k)$ and $z(j+1,k)$. All the lines $L_0(k)$ are composed
  only of copies of $P(0,0)$.  The interior vertices of the path
  $P(0,0)$ cannot contains two copies of the same vertex, otherwise
  there will be a vertex $z(j,k)$ between $z(0,0)$ and $z(1,0)$. Thus
  all interior vertices of a path $P(j,k)$ corresponds to distinct
  vertices of $G$.

  The Schnyder wood is crossing, thus $1$-lines are crossing
  $0$-lines.  As a line $L_0(k)$ is composed only of copies of
  $P(0,0)$, any path $P(j,k)$ is crossed by a $1$-line. Let $L'_1$ be
  the first $1$-line crossing $P(1,1)$ on a vertex $x$ while walking
  from $z(1,1)$ along $L_0(1)$. By the Schnyder property, line $L'_1$ is not
  intersecting $R\setminus \{z(1,1)\}$. As $v\in R$ we have $L_1$ is
  on the left of $L'_1$ (maybe $L_1=L'_1$). Thus the region $T$ is
  included in the region $T'$ delimited by $L_0,L'_1,L_2$.

  Let $y$ be the vertex where $L'_1$ is leaving $S(1,1)$.  We claim
  that $y\in L_2(1) $.  Note that by the Schnyder property, we have
  $y\in L_2(1)\cup P(1,2)$.  Suppose by contradiction that $y$ is an
  interior vertex of $P(1,2)$. Let $d_x$ be the length of the subpath
  of $P(1,1)$ between $z(1,1)$ and $x$.  Let $d_y$ be the length of
  the subpath of $P(1,2)$ between $z(1,2)$ and $y$.  Suppose
  $d_y<d_x$, then there should be a distinct copy of $L'_1$
  intersecting $P(1,1)$ between $z(1,1)$ and $x$ on a copy of $y$, a
  contradiction to the choice of $L'_1$. So $d_x\leq d_y$. Let $A$ be
  the subpath of $L'_1$ between $x$ and $y$. Let $B$ be the subpath of
  $P(1,1)$ between $x$ and the copy of $y$ (if $d_x=d_y$, then $B$ is
  just a single vertex).  Consider all the copies of $A$ and $B$
  between lines $L_2(1)$ and $L_2(2)$, they form an infinite line $L$
  situated on the right of $L_2(1)$ that prevents $L'_1$ from crossing
  $L_2(1)$, a contradiction.

  By the position of $x$ and $y$. We have $L'_1$ intersects $S(0,1)$
  and $S(1,0)$. We claim that $L'_1$ cannot intersect both $S(0,3)$
  and $S(3,0)$.  Suppose by contradiction that $L'_1$ intersects both
  $S(0,3)$ and $S(3,0)$. Then $L'_1$ is crossing $S(0,2)$ without
  crossing $L_2(0)$ or $L_2(1)$.  Similarly $L'_1$ is crossing
  $S(2,0)$ without crossing $L_0(0)$ or $L_0(1)$. Thus by superposing
  what happen in $S(0,2)$ and $S(2,0)$ in a square $S(j,k)$, we have
  that there are two crossing $1$-lines, a contradiction. Thus $L'_1$
  intersects at most one of $S(0,3)$ and $S(3,0)$.

  Suppose that $L'_1$ does not intersect $S(3,0)$.  Then the part of
  $T'$ situated right of $L_0(2)$ (left part on
  Figure~\ref{fig:triangleprooff}) is strictly included in
  $(S(0,0)\cup S(1,0) \cup S(2,0)\cup S(0,1)\cup S(1,1))$.  Thus this
  part of $T'$ contains at most $5\omega_1 f$ faces.  Now consider the
  part of $T'$ situated on the left of $L_0(2)$ (right part on
  Figure~\ref{fig:triangleprooff}).  Let $y'$ be the intersection of
  $L'_1$ with $L_2$. Let $Q$ be the subpath of $L'_1$ between $y$ and
  $y'$. By definition of $L_2(1)$, there are no $2$-lines between
  $L_2$ and $L_2(1)$. So $Q$ cannot intersect a $2$-line on one of its
  interior vertices.  Thus $Q$ is crossing at most $\omega_2$
  consecutive $0$-lines (that are not necessarily lines of type
  $L_0(k)$). Let $L'_0$ be the $\omega_2+1$-th consecutive $0$-line
  that is on the left of $L_0(2)$ (counting $L_0(2)$).  Then the part
  of $T'$ situated on the left of $L_0(2)$ is strictly included in the
  region delimited by $L_0(2),L'_0,L_2,L_2(1)$, and thus contains at
  most $\omega_2$ copies of a face of $G$.  Thus $T'$ contains at most
  $(\omega_2+5\omega_1)f$ faces.

  Symmetrically if $L'_1$ does not intersect $S(0,3)$ we have that
  $T'$ contains at most $(\omega_0+5\omega_1)f$ faces.  Then in any
  case, $T'$ contains at most $(\max(\omega_0,\omega_2)+5\omega_1)f$
  faces and the lemma is true.
\end{proof}

\begin{figure}[!h]
\center
\input{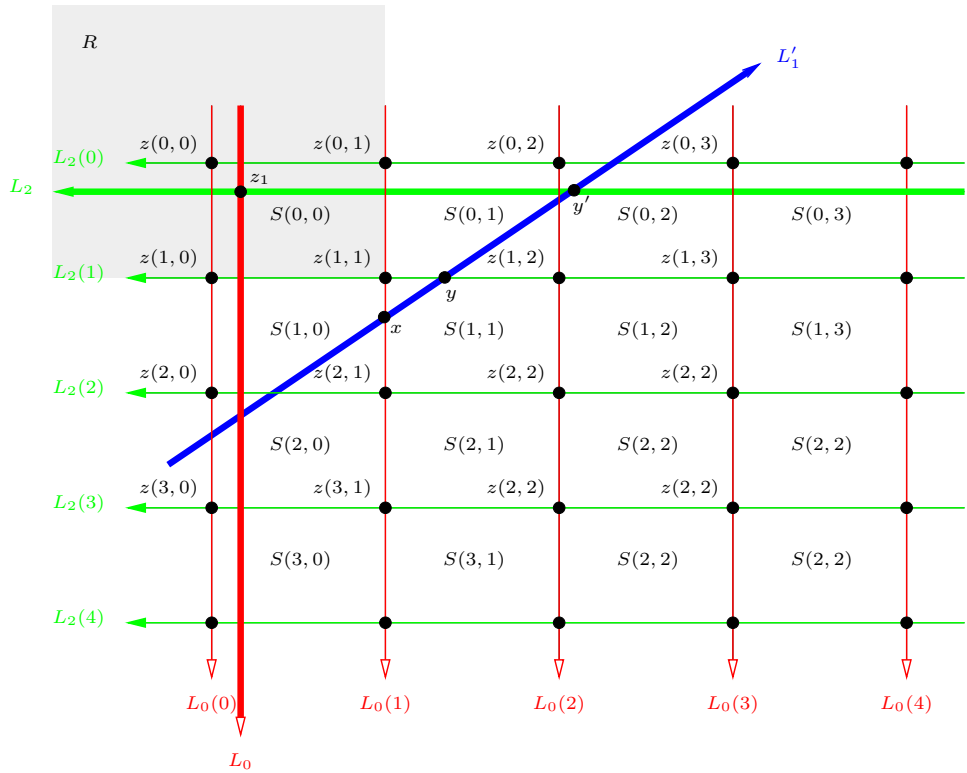}
\caption{Notations of the proof of Lemma~\ref{lem:boundedtriangle}.}
\label{fig:triangleprooff}
\end{figure}

The bound of Lemma~\ref{lem:boundedtriangle} is somehow sharp. In the
example of Figure~\ref{fig:exampletriangle}, the rectangle represent a
toroidal triangulation $G$ and the universal cover is partially
represented.  For each value of $k\geq 0$, there is a toroidal
triangulation $G$ with $n=4(k+1)$ vertices, where the gray region,
representing the region delimited by the three monochromatic lines
$L_{i}(v)$ contains $4\sum_{j=1}^{2k+1}+3(2k+2)=\Omega(n\times f)$
faces. Figure~\ref{fig:exampletriangle} represent such a triangulation
for $k=2$.

 \begin{figure}[!h]
 \center
 \includegraphics[scale=0.3]{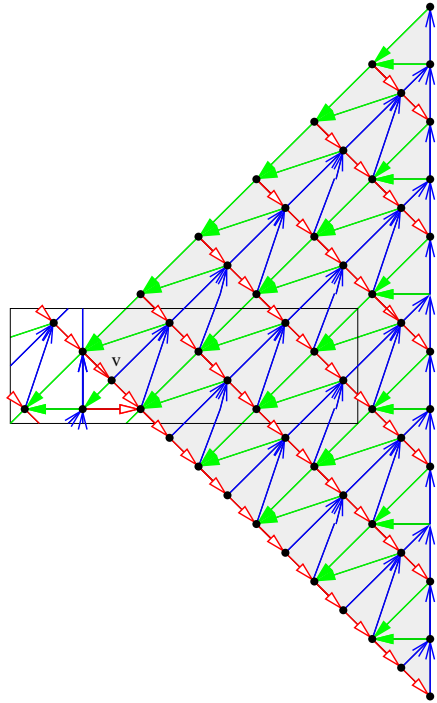}
 \caption{Example of a toroidal triangulation where the number of faces
   in the region delimited by the three monochromatic lines $L_{i}(v)$
   contains $\Omega(n\times f)$ faces.}
 \label{fig:exampletriangle}
 \end{figure} 

 For planar maps the region vector method gives vertices that all lie on
 the same plane. This property is very helpful in proving that the
 position of the points on $P$ gives straight-line representations.  In the
 torus, things are more complicated as our generalization of the
 region vector method does not give coplanar points.  But
 Lemma~\ref{lem:sum} and~\ref{lem:boundedtriangle} show that all the
 points lie in the region situated between the two planes of equation
 $x+y+z=0$ and $x+y+z=t$, with $t=(5\min(\omega_i)+
 \max(\omega_i))f$. Note that $t$ is bounded by $6nf$ by
 Lemma~\ref{th:gamma} and this is independent from $N$. Thus from
 ``far away'' it looks like the points are coplanar and by taking $N$
 sufficiently large, non coplanar points are ``far enough'' from each
 other to enable the region vector method to give straight-line
 representations.  

 Let $N=t+n$.

\begin{lemma}
  \label{lem:shortregionedge}
  Let $u,v$ be two vertices of $G^\infty$ such that $e_{i-1}(v)=uv$,
  $L_i=L_i(u)=L_i(v)$, and such that both $u$, $v$ are in the region
  $R(L_i,L'_i)$ for $L'_i$ a $i$-line consecutive to $L_i$. Then
  $v_{i+1}-u_{i+1}<|R(L_i,L'_i)|$ and $e_{i-1}(v)$ is going
  counterclockwise around the closed disk bounded by
  $\{e_{i-1}(v)\}\cup P_i(u)\cup P_i(v)$.
\end{lemma}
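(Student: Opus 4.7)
The plan is to exploit the fact that the hypotheses force a lot of collapsing in the region-vector formula, reducing everything to a signed face count in a single small disk of the universal cover.

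First I would set up the algebraic reduction. Because $e_{i-1}(v)=uv$ is the first edge of $P_{i-1}(v)$ and $P_{i-1}(u)$ is its continuation, we have $L_{i-1}(v)=L_{i-1}(u)$. Combined with the hypothesis $L_i(v)=L_i(u)=L_i$, in the universal cover the two specific lines $L_i$ and $L_{i-1}(v)$ cross at a unique vertex (they are infinite non-self-intersecting curves in the plane that are not weakly homologous, so they meet at most once and Lemma~\ref{lem:linesintersection} ensures they meet). Hence $z_{i+1}(v)=z_{i+1}(u)=:z$. Plugging $k=i+1$ into the definition of the region vector, the $N\cdot f_{i-1}(L_{i-1}(\cdot))$ and $N\cdot f_i(L_i(\cdot))$ contributions cancel, so
\[
v_{i+1}-u_{i+1}\;=\;d_{i+1}(v,z)-d_{i+1}(u,z).
\]
Interpreting each $d_{i+1}(\cdot,z)$ via characteristic functions (as in the proof of Lemma~\ref{lem:sum}), this telescopes to a signed face count:
\[
v_{i+1}-u_{i+1}\;=\;|R_{i+1}(v)\setminus R_{i+1}(u)|\;-\;|R_{i+1}(u)\setminus R_{i+1}(v)|.
\]

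Next I would build the disk. The paths $P_i(u)$ and $P_i(v)$ both reach $L_i$ and then coincide; let $x$ be the first common vertex. Lemma~\ref{lem:nocommongeneral} applied in $G^\infty$ (plus the obvious fact that $P_i(u),P_i(v)$ cannot meet before hitting $L_i$ without creating a premature intersection forbidden by the Schnyder property) shows that the closed curve $C$ formed by the edge $uv$, the portion of $P_i(u)$ from $u$ to $x$, and the portion of $P_i(v)$ from $x$ back to $v$, is simple, hence bounds a topological disk $D$ in the plane. Because $u,v\in R(L_i,L'_i)$ and neither $P_i(u)$ nor $P_i(v)$ can cross $L_i$ (they end on it) or the consecutive line $L'_i$, the whole disk $D$ lies in $R(L_i,L'_i)$.

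For the orientation claim, I would use the Schnyder property at $v$: the outgoing edges appear in counterclockwise order $e_i(v),e_{i+1}(v),e_{i-1}(v)$. Thus the edge $e_{i-1}(v)=uv$ leaves $v$ immediately counterclockwise after $e_{i+1}(v)$ and immediately clockwise after $e_i(v)$; equivalently, turning counterclockwise from the outgoing direction of $P_i(v)$ one first meets $e_{i-1}(v)$. This local picture places $D$ on the appropriate side of $e_{i-1}(v)$ and shows that walking $v\to u$ along $e_{i-1}(v)$, then $u\to x$ along $P_i(u)$, then $x\to v$ along $P_i(v)^{-1}$ is the counterclockwise traversal of $\partial D$. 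In particular $R_{i+1}(v)\supseteq R_{i+1}(u)\cup D$ up to sets of measure zero on the boundary, while $R_{i+1}(u)\setminus R_{i+1}(v)\subseteq$ a zero-face set, so the signed face count above is bounded by $|D|$, which is at most $|R(L_i,L'_i)|$. Strictness comes from noting that the face immediately on the \emph{outside} of the edge $uv$ (opposite $D$) lies in $R(L_i,L'_i)$ but is not contained in $D$, so $|D|<|R(L_i,L'_i)|$.

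The main obstacle I expect is carefully justifying that $D$ is genuinely a disk (rather than a more complicated 2-complex) and that the signed difference $|R_{i+1}(v)\setminus R_{i+1}(u)|-|R_{i+1}(u)\setminus R_{i+1}(v)|$ is really $+|D|$ rather than $\pm(\text{something smaller})$. This requires a careful case analysis of how $P_{i+1}(u)$ and $P_{i+1}(v)$ leave their respective vertices relative to $D$; in particular one must rule out the possibility that one of these paths re-enters $D$, which is where the Schnyder property at $u$ (edges entering in color $i-1$ must enter between $e_i(u)$ and $e_{i+1}(u)$) is needed to pin down the local geometry near $u$.
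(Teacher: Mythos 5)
The algebraic reduction and the construction of the disk $D$ bounded by $\{e_{i-1}(v)\}\cup Q_u\cup Q_v$ (where $Q_u,Q_v$ are the initial segments of $P_i(u),P_i(v)$ up to their first common vertex) match the paper's approach. But there are two gaps, one of which is serious.

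The serious gap concerns the bound $|D|<|R(L_i,L'_i)|$. You treat $|R(L_i,L'_i)|$ as if it were a count of faces of $G^\infty$ in the strip, so that ``$D$ misses a face'' gives strictness. But $|R(L_i,L'_i)|$ is defined in the paper as $f_i^j$, the number of faces of the \emph{torus map} $G$ lying in the cylinder $R(C_i^j,C_i^{j+1})$ --- a finite number. The strip in $G^\infty$ contains infinitely many copies of each such face, and a priori the disk $D$ could contain several copies of the same face of $G$, so the inequality $|D|\le f_i^j$ is not automatic. The paper's proof has a key step you omit: it shows that $D$ contains \emph{at most one copy of each face} of the cylinder, by arguing that two copies of a face inside $D$ would have to be separated by a monochromatic line crossing $\partial D$, which is ruled out by the Schnyder property (the boundary consists only of edges of colours $i-1$ and $i$). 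Without this ``one copy per face'' argument, the bound does not follow, and the rest of the proof of Lemma~\ref{lem:vectoriel} collapses.

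The second, smaller gap concerns the orientation claim. You argue purely locally from the cyclic order of $e_i(v),e_{i+1}(v),e_{i-1}(v)$ around $v$. That determines which of the two angular sectors at $v$ is bounded by $\partial D$ near $v$, but it does not by itself determine which side of the whole closed curve is the bounded component $D$: that is a global property of the simple closed curve in the plane. The paper settles it by contradiction: if $C=\partial D$ were oriented the other way, $P_{i+1}(v)$ would start inside $D$ and have to cross $Q_u$ or $Q_v$ on its way out, producing a directed cycle in $G^\infty_{i+1}\cup (G^\infty_i)^{-1}\cup(G^\infty_{i-1})^{-1}$, contradicting Lemma~\ref{lem:nodirectedcycle}. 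You should use that global reductio rather than relying on the local picture at $v$ alone.
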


\begin{proof}
  Let $y$ be the first
  vertex of $P_i(v)$ that is also in $P_i(u)$.  Let $Q_u$
  (resp. $Q_v$) the part of $P_i(u)$ (resp. $P_i(v)$) between $u$
  (resp. $v$) and $y$.

  Let $D$ be the closed disk bounded by the cycle $C=(Q_v)^{-1}\cup
  \{e_{i-1}(v)\}\cup Q_u$. If $C$ is going clockwise around $D$, then
  $P_{i+1}(v)$ is leaving $v$ in $D$ and thus has to intersect $Q_u$
  or $Q_v$. In both cases, there is a cycle in $G^\infty_{i+1}\cup
  (G^\infty_i)^{-1}\cup (G^\infty_{i-1})^{-1}$, a contradiction to
  Lemma~\ref{lem:nodirectedcycle}.  So $C$ is going
  clockwise around $D$.

  As $L_i(u)=L_i(v)$ and $L_{i-1}(u)=L_{i-1}(v)$, we have
  $v_{i+1}-u_{i+1}=d_{i+1}(v,u)$ and this is equal to the number of
  faces in $D$.  We have $D\subsetneq R(L_i,L'_i)$.  Suppose $D$
  contains two copies of a given face. Then, these two copies are on
  different sides of a $1$-line. By the Schnyder property, it
  is not possible to have a $1$-line entering $D$.  So $D$ contains at
  most one copy of each face of $R(L_i,L'_i)$.
\end{proof}

\begin{lemma}
\label{lem:vectoriel}
For any face $F$ of $G^\infty$, incident to vertices $u,v,w$ (given in
counterclockwise order around $F$), the cross product
$\overrightarrow{vw}\wedge \overrightarrow{vu}$ has strictly positive
coordinates.
\end{lemma}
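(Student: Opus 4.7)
My plan is to prove the lemma by direct computation of the cross product using the region-vector formula of Definition~\ref{def:regionvector}. By Proposition~\ref{prop:crossing-trieq}, the given intersecting Schnyder wood of the toroidal triangulation $G$ is crossing, so the corresponding angle labeling is $1$-\textsc{edge}, $1$-\textsc{vertex}, $1$-\textsc{face}; each face of $G^\infty$ has exactly one angle of each color. Up to a cyclic relabeling of colors I may assume that the angles at $u,v,w$ (in ccw order around $F$) are $0$, $1$, $2$ respectively. Combining the Schnyder property at each of the three vertices with the fact that at each vertex the face edge leading to the next ccw vertex precedes the $F$-angle in the ccw order at the current vertex, one gets only two possible edge configurations for $F$; I treat the representative case in which $v\to u$ has color $1$, $u\to w$ has color $0$, and $w\to v$ has color $2$ (the other case being handled symmetrically by cyclic reversal of colors). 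This yields the three line coincidences $L_1(u)=L_1(v)$, $L_0(u)=L_0(w)$, and $L_2(v)=L_2(w)$.

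Writing $x_j=N\alpha_j(x)+\beta_j(x)$ with $\alpha_j(x)=f_{j+1}(L_{j+1}(x))-f_{j-1}(L_{j-1}(x))$ and $\beta_j(x)=d_j(x,z_j(x))$, I expand the $i$-th coordinate of the cross product as $(\overrightarrow{vw}\wedge\overrightarrow{vu})_i=N^2D_i+NE_i+F_i$, where $D_i$ is the $2\times 2$ determinant of $\alpha$-differences transverse to $e_i$. A direct computation using the three line coincidences yields $D_0=D_1=D_2=BC-A(B+C)$, where $A=f_0(L_0(u))-f_0(L_0(v))$, $B=f_1(L_1(w))-f_1(L_1(v))$, and $C=f_2(L_2(u))-f_2(L_2(v))$. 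The ccw orientation of $F$, together with the Schnyder property, forces $v$ to lie on the positive side of the shared $0$-line $L_0(u)=L_0(w)$, $w$ on the positive side of the shared $1$-line $L_1(u)=L_1(v)$, and $u$ on the positive side of the shared $2$-line $L_2(v)=L_2(w)$, so $A\le 0$ and $B,C\ge 0$, and therefore $D_i\ge BC\ge 0$. Moreover $D_i=0$ would force at least two of $A,B,C$ to vanish, placing the three distinct vertices $u,v,w$ on a common pair of intersecting monochromatic lines of different colors; since any such pair meets in exactly one vertex of $G^\infty$, this is impossible, and hence $D_i\ge 1$ for each $i$.

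For the lower-order contributions, the line coincidences allow the auxiliary vertices $z_j(u)$, $z_j(v)$, $z_j(w)$ to be placed pairwise on common monochromatic lines, and one deduces $|\alpha_j(x)-\alpha_j(y)|\le f$ and $|\beta_j(x)-\beta_j(y)|=O(f)$ for $x,y\in\{u,v,w\}$ using Lemmas~\ref{lem:regionslongue}, \ref{lem:regionorder}, and~\ref{lem:edgesbounded}, whence $|NE_i|+|F_i|=O(Nf^2)$. The value $N=t+n$ with $t=(5\min_i\omega_i+\max_i\omega_i)f$ chosen at the start of Section~\ref{sec:straight} is calibrated precisely so that $N^2D_i\ge N^2$ dominates this $O(Nf^2)$ error, making each coordinate of the cross product strictly positive. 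The principal technical obstacle is obtaining the sharp $O(f)$ bound on $|\beta_j(x)-\beta_j(y)|$: the naive bound coming from Lemma~\ref{lem:regionslongue} is $O(nf)$, which is too weak, and improving it requires exploiting the explicit form of the $z_j$'s forced by the line coincidences to reduce each $\beta$-difference to a single strip of faces bounded by two consecutive monochromatic lines, whose total face count is at most $f$.
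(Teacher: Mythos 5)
Your overall strategy — expand each coordinate of the cross product as a polynomial $N^2D_i+NE_i+F_i$ in the parameter $N$, show the leading coefficient is a strictly positive integer and then argue the lower-order terms cannot overwhelm it — is genuinely different from the paper's proof (which derives the coordinate order $w_0<v_0<u_0$, etc., from Lemmas~\ref{lem:regionss} and~\ref{lem:regionorder}, disposes of most coordinates by a direct sign analysis, and only for one coordinate in one case uses a careful term-by-term estimate). However, as written the proposal has several genuine gaps.

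First, the ``representative case'' is not Schnyder-admissible. With the $F$-angle labeled $0$ at $u$ and $1$ at $v$, a type-$1$ edge $uv$ has its two equal angle labels at the head of the arrow; hence the only possibilities are $u\to v$ in color $1$, or $v\to u$ in color $0$. Your configuration ``$v\to u$ in color $1$'' would force both angles at $u$ around $uv$ to be labeled $1$, contradicting the label $0$ at the $F$-corner of $u$. The same inconsistency recurs at the other two edges: ``$u\to w$ in color $0$'' and ``$w\to v$ in color $2$'' should be ``$u\to w$ in color $2$'' and ``$w\to v$ in color $1$''. (Interestingly, the line coincidences and the formula $D_i=BC-A(B+C)$ you then derive are the ones corresponding to the \emph{valid} configuration $u\to v$ (color $1$), $v\to w$ (color $2$), $w\to u$ (color $0$), i.e.\ the paper's Case 1 — so there is a sign or color-shift error propagating through the argument.)

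Second, and more fundamentally, the calibration does not work. You argue $N^2D_i\ge N^2$ dominates an $O(Nf^2)$ error, i.e.\ $N\gg f^2$. But $N=t+n$ with $t=(5\min_i\omega_i+\max_i\omega_i)f$, so for a crossing Schnyder wood with bounded winding number (e.g.\ $\omega=1$, which is achievable by Theorem~\ref{th:schnydersimple}) one has $N=O(f)$, and even in the worst case $\omega\le n$ one only gets $N=O(nf)=O(f^2)$. In neither regime does $N^2$ dominate $Nf^2$ by more than a constant factor, and you track no constants. The paper's argument avoids exactly this by never using a generic $O(\cdot)$ bound: Lemma~\ref{lem:regionorder} gives coordinate gaps of the exact form $(N-n)\cdot(\text{sum of strip sizes})$, and the $-t$ slack coming from Lemmas~\ref{lem:sum} and~\ref{lem:boundedtriangle} cancels against $(N-n)$ precisely because $N=t+n$. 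That exact cancellation is the whole reason the choice $N=t+n$ works, and it is lost once you pass to $O$-notation.

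Third, the justification that $D_i\ge 1$ is flawed. Vanishing of, say, $B$ and $C$ gives $L_1(u)=L_1(v)=L_1(w)$ and $L_2(u)=L_2(v)=L_2(w)$, but $L_j(x)$ is the $j$-line that $P_j(x)$ eventually merges with, not a line through $x$; three distinct vertices can easily share all their $L_j$'s (any cell between consecutive lines of each color contains many vertices). So ``placing $u,v,w$ on a common pair of intersecting lines'' does not follow, and the conclusion $D_i\neq 0$ needs a different argument. Finally, you yourself concede that the sharper $O(f)$ bound on $|\beta_j(x)-\beta_j(y)|$ (the ``principal technical obstacle'') is not proven; Lemma~\ref{lem:regionslongue} only gives $O(nf)$, and obtaining something tighter is precisely where the real work in the paper's Case 2 lies (via Lemma~\ref{lem:shortregionedge} and a careful decomposition $X_1=A(B-D)+D(A-C)$), so this cannot be waved away.
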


\begin{proof}
  Consider the angle labeling corresponding to the Schnyder wood.  The
  angles at $F$ are labeled in counterclockwise order $0,1,2$.  As
  $\overrightarrow{uv}\wedge
  \overrightarrow{uw}=\overrightarrow{vw}\wedge
  \overrightarrow{vu}=\overrightarrow{wu}\wedge \overrightarrow{wv}$,
  we may assume that $u,v,w$ are such that $u$ is in the angle labeled
  $0$, vertex $v$ in the angle labeled $1$ and vertex $w$ in the angle
  labeled $2$.  The face $F$ is either a cycle completely directed
  into one direction or it has two edges oriented in one direction and
  one edge oriented in the other.  Let
$$\overrightarrow{X}=\overrightarrow{vw}\wedge
\overrightarrow{vu}= \begin{pmatrix}
  (w_1-v_1)(u_2-v_2)-(w_2-v_2)(u_1-v_1)\\
  -(w_0-v_0)(u_2-v_2)+(w_2-v_2)(u_0-v_0)\\
  (w_0-v_0)(u_1-v_1)-(w_1-v_1)(u_0-v_0)
\end{pmatrix}$$

By symmetry, we consider the following two cases:

 \begin{figure}[!h]
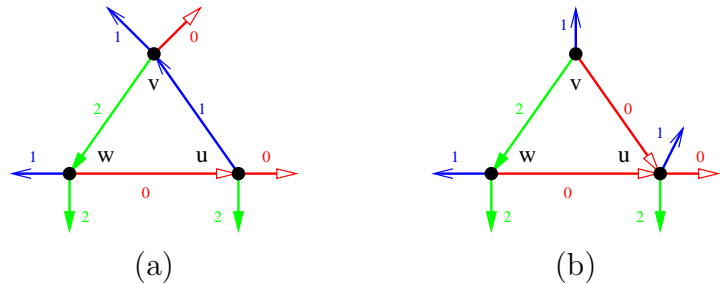

 \center
\begin{tabular}{ccc}
 \includegraphics[scale=.5]{vectoriel}& \hspace{2em} &
 \includegraphics[scale=.5]{vectoriel-1}\\
(a)& &(b) \\
\end{tabular}
\caption{(a) case 1 and (b) case 2 of the proof of Lemma~\ref{lem:vectoriel}}
 \label{fig:vectoriel}
 \end{figure} 

 \noindent $\bullet$ \emph{Case 1: the edges of the face $F$ are in
   counterclockwise order $e_1(u)$, $e_2(v)$, $e_0(w)$ (see
   Figure~\ref{fig:vectoriel}.(a)).}

 We have $v\in P_1(u)$, so $v\in R_0(u)\cap R_2(u)$ and $u\in
 R_1^{\circ}(v)$ (as there is no edges oriented in two direction). By
 Lemma~\ref{lem:regionss}, we have $R_0(v)\subseteq R_0(u)$ and
 $R_2(v)\subseteq R_2(u)$ and $R_1(u)\subsetneq R_1(v)$. In fact the
 first two inclusions are strict as $u\notin R_0(v)\cup R_2(v)$. So by
 Lemma~\ref{lem:regionorder}, we have $v_0<u_0$, $v_2<u_2$,
 $u_1<v_1$. We can prove similar inequalities for the other pairs of
 vertices and we obtain $w_0<v_0<u_0$, $u_1<w_1<v_1$,
 $v_2<u_2<w_2$. By just studying the signs of the different terms
 occurring in the value of the coordinates of $\overrightarrow{X}$, it
 is clear that $\overrightarrow{X}$ as strictly positive
 coordinates. (For the first coordinates, it is easier if written in
 the following form $X_0=(u_1-w_1)(v_2-w_2)-(u_2-w_2)(v_1-w_1)$.)

  \noindent $\bullet$ \emph{Case 2: the edges of the face $F$ are in
    counterclockwise order $e_0(v)$, $e_2(v)$, $e_0(w)$.(see
    Figure~\ref{fig:vectoriel}.(b)).}

  As in the previous case, one can easily obtain the following
  inequalities: $w_0<v_0<u_0$, $u_1<w_1<v_1$, $u_2<v_2<w_2$ (the only
  difference with case $1$ is between $u_2$ and $v_2$).  Exactly like
  in the previous case, it is clear to see that $X_0$ and $X_2$ are
  strictly positive.  But there is no way to reformulate $X_1$ to have
  a similar proof.  Let $A=w_2-v_2$, $B=u_0-v_0$, $C=v_0-w_0$
  and $D=v_2-u_2$, so $X_1=AB-CD$ and $A,B,C,D$ are all strictly
  positive.

  Vertices $u,v,w$ are in the region $R(L_1,L'_1)$ for $L'_1$ a
  $1$-line consecutive to $L_1$. We consider two cases depending on
  equality or not between $L_1(u)$ and $L_1(v)$.

\noindent\emph{$\star$ Subcase 2.1: $L_1(u)=L_1(v)$.}

We have $X_1=A(B-D)+D(A-C)$.  

We have $B-D=(u_0+u_2)-(v_0+v_2)=(v_1-u_1)+(\sum u_i - \sum
v_i)$. Since $u\in P_0(v)$, we have $L_0(u)=L_0(v)$.  Suppose that
$L_2(u)=L_2(v)$, then by Lemma~\ref{lem:sum}, we have $\sum u_i = \sum
v_i$, and thus $B-D=v_1-u_1>0$. Suppose now that $L_2(u)\neq
L_2(v)$. By Lemmas~\ref{lem:sum} and~\ref{lem:boundedtriangle}, $\sum
u_i - \sum v_i>-t$.  By Lemma~\ref{lem:regionorder}, $v_1-u_1>
(N-n)|R(L_2(u),L_2(v))|\geq N-n$. So $B-D>N-n-t\geq 0$.

We have $A-C=(w_0+w_2)-(v_0+v_2)=(v_1-w_1)+(\sum w_i - \sum v_i)>\sum
w_i - \sum v_i$.  Suppose that $L_1(v)=L_1(w)$, then by
Lemma~\ref{lem:sum}, we have $\sum v_i = \sum w_i$ and thus
$A-C=v_1-w_1>0$. Then $X_1>0$.  Suppose now that $L_1(v)\neq L_1(w)$.
By Lemma~\ref{lem:shortregionedge}, $D=v_2-u_2<|R(L_1,L'_1)|$.  By
Lemma~\ref{lem:regionorder}, $A=w_2-v_2>(N-n)|R(L_1,L'_1)|$.  By
Lemmas~\ref{lem:sum} and~\ref{lem:boundedtriangle}, $\sum w_i - \sum
v_i> -t$, so $A-C>-t$.  Then $X_1>(N-n-t)|R(L_1,L'_1)|>0$.

\noindent\emph{$\star$ Subcase 2.2: $L_1(u)\neq L_1(v)$.}

We have $X_1=B(A-C)+C(B-D)$.  

Suppose that $L_1(w)\neq L_1(v)$. Then $L_1(w)= L_1(u)$. By
Lemma~\ref{lem:shortregionedge}, $e_0(w)$ is going counterclockwise
around the closed disk $D$ bounded by $\{e_0(w)\}\cup P_1(w)\cup
P_1(u)$. Then $v$ is inside $D$ and $P_1(v)$ has to intersect
$P_1(w)\cup P_1(u)$, so $L_1(v)=L_1(u)$, contradicting our
assumption. So $L_1(v)= L_1(w)$.

By Lemma~\ref{lem:regionorder}, $B=u_0-v_0>(N-n)|R(L_1,L'_1)|$.  We
have $A-C=(w_0+w_2)-(v_0+v_2)=(v_1-w_1)+(\sum w_i - \sum v_i)$. By
Lemma~\ref{lem:sum}, we have $\sum v_i = \sum w_i$ and thus
$A-C=v_1-w_1>0$.  By (the symmetric of)
Lemma~\ref{lem:shortregionedge}, $C=v_0-w_0<|R(L_1,L'_1)|$.  By
Lemmas~\ref{lem:sum} and~\ref{lem:boundedtriangle},
$B-D=(u_0+u_2)-(v_0+v_2)=(v_1-u_1)+(\sum u_i - \sum v_i)>-t$.  So
$X_1>(N-n-t)|R(L_1,L'_1)|>0$.
\end{proof}

Let $G$ be an essentially 3-connected toroidal map.  Consider a
periodic mapping of $G^{\infty}$ embedded graph $H$ (finite or
infinite) and a face $F$ of $H$. Denote $(f_1,f_2,\ldots,f_t)$ the
counterclockwise facial walk around $F$. Given a mapping of the
vertices of $H$ in $\mathbb{R}^2$, we say that $F$ is \emph{correctly
  oriented} if for any triplet $1\le i_1 < i_2 < i_3 \le t$, the
points $f_{i_1}$, $f_{i_2}$, and $f_{i_3}$ form a counterclockwise
triangle.  Note that a correctly oriented face is drawn as a convex
polygon.

\begin{lemma}
\label{th:faceorientation}
  Let $G$ be an essentially 3-connected toroidal map given with a
  periodic mapping of $G^{\infty}$ such that every face of
  $G^{\infty}$ is correctly oriented. This mapping gives a
  straight-line representation of $G^{\infty}$.
\end{lemma}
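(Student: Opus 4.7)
My plan is to prove that the periodic straight-line drawing of the universal cover $G^\infty$ induced by the given mapping is in fact an embedding (no crossings); restricting to a fundamental parallelogram then yields the desired straight-line flat-torus representation of $G$. I would work throughout in the universal cover, where the hypothesis that $G$ is essentially 3-connected guarantees by Lemma~\ref{lem:conjessentially}'s setup that $G^\infty$ is 3-connected and in particular simple, so every face of $G^\infty$ is a simple cycle whose drawing, by the correct-orientation hypothesis, is a convex polygon whose combinatorial cyclic vertex order matches its geometric cyclic order.

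First I would check that at every vertex $v$ of $G^\infty$, the cyclic order of incident edges in the drawing agrees with the combinatorial cyclic order. For two combinatorially consecutive edges $e, e'$ at $v$ bounding a face $F$, the correct orientation of $F$ places them on two consecutive sides of a convex polygon at the corner $v$, so the angular sector from $e$ to $e'$ at $v$ is exactly the interior angle of $F$ at $v$, which lies in the open interval $(0, \pi)$. Gluing these sectors across all faces around $v$ shows that incident edges appear at $v$ in the combinatorial cyclic order, and that the total angular measure around $v$ in the drawing is $2\pi\, k(v)$ for some integer $k(v) \geq 1$.

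Next I would show $k(v) = 1$ at every vertex via a global Euler-formula computation on a fundamental domain. Summing interior-angle sums of the convex polygons representing all faces in a fundamental domain gives $\sum_F (|F|-2)\pi = (2m - 2f)\pi$, and using Euler's formula $n - m + f = 0$ on the torus this equals $2\pi n$. On the other hand this same total equals $\sum_{v} 2\pi\, k(v)$, where the sum runs over the $n$ vertex-orbits (by periodicity $k(v)$ is constant on each orbit). Hence $\sum_v k(v) = n$, and since each $k(v) \geq 1$ we conclude $k(v) = 1$ everywhere.

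Consequently the drawing defines a map $\Phi$ from the topological realization of $G^\infty$ (which is homeomorphic to $\mathbb{R}^2$) into $\mathbb{R}^2$ that is a local homeomorphism at every vertex, every edge interior, and every face interior. By periodicity with respect to two non-collinear vectors, $\Phi$ is proper. A proper local homeomorphism between two copies of $\mathbb{R}^2$ is a covering map, hence a homeomorphism by simple connectivity of the target, so in particular there are no edge crossings. The main obstacle will be the careful geometric bookkeeping in the first two steps, where "correctly oriented" must be translated into precise statements about angular sectors and degenerate local pictures (e.g. a face collapsing to a segment, or edges from different faces overlapping at $v$) must be ruled out using strict convexity; once these local issues are handled, the global conclusion follows from standard covering-space arguments applied in $\mathbb{R}^2$.
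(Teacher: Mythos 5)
Your proposal is correct but takes a genuinely different route after the shared first step. Both you and the paper begin by using the angle-sum / Euler-formula computation (summing convex-polygon interior angles over a fundamental domain, using $n-m+f=0$) to conclude that the local winding number $k_v$ equals $1$ at every vertex, so the drawing respects the combinatorial rotation system and wraps exactly once around each vertex. After that the paths diverge. The paper proceeds by induction on the number of vertices of $G$: it picks a vertex $v$ minimizing the number of incident loops, and either (a) $v$ has no loop, in which case the convex polygon $u_0,\dots,u_{d-1}$ around a lift of $v$ is triangulated, $v$ is deleted, induction is applied to the smaller essentially 3-connected map, and $v$ and its star are re-inserted inside the polygon; or (b) every vertex has a loop, in which case $G$ is a specific cycle-plus-loops map and the strip structure of the periodic drawing is verified directly. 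You instead observe that, once $k_v=1$ and the faces are convex, the drawing defines a proper local homeomorphism $\Phi:\mathbb{R}^2\to\mathbb{R}^2$ ($\mathbb{Z}^2$-equivariance gives properness), hence a covering map, hence a homeomorphism by simple connectivity. Your argument is cleaner and avoids both the induction and the two-case analysis, but it leans on the standard (nontrivial) fact that a proper local homeomorphism between manifolds without boundary is a covering map, whereas the paper's argument is self-contained and elementary. Both are sound; the details you flagged yourself as needing care (ruling out degenerate faces and overlapping sectors at a vertex) are indeed the places to be careful, and they are handled correctly by the strict-convexity clause in the definition of ``correctly oriented'' together with the simplicity of $G^\infty$ guaranteed by essential 3-connectivity.
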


\begin{proof}
  We proceed by induction on the number of vertices $n$ of $G$. Note
  that the theorem holds for $n=1$, so we assume that $n>1$.  Given
  any vertex $v$ of $G$, let $(u_0,u_1,\ldots,u_{d-1})$ be the
  sequence of its neighbors in counterclockwise order (subscript
  understood modulo $d$). Every face being correctly oriented, for
  every $i\in [0,d-1]$ the oriented angle (oriented counterclockwise)
  $(\overrightarrow{vu_i},\overrightarrow{vu_{i+1}}) < \pi$.  Let the
  winding number $k_v$ of $v$ be the integer such that $2k_v\pi =
  \sum_{i\in[0,d-1]} (\overrightarrow{vu_i},\overrightarrow{vu_{i+1}})
  $. It is clear that $k_v \ge 1$. Let us prove that $k_v=1$ for every
  vertex $v$.
\begin{claim}
For any vertex $v$, its winding number $k_v=1$.
\end{claim}
\begin{proofclaim}
  In a parallelogram representation of $G$, we can sum up all the angles
  by grouping them around the vertices or around the faces.
  $$\sum_{v\in V(G)} \sum_{u_i\in N(v)}
(\overrightarrow{vu_i},\overrightarrow{vu_{i+1}})
=
\sum_{F\in F(G)} \sum_{f_i\in F} (\overrightarrow{f_if_{i-1}},\overrightarrow{f_if_{i+1}})$$
The face being correctly oriented, they form convex polygons. Thus the
angles of a face $F$ sum at $(|F|-2)\pi$.
$$\sum_{v\in V(G)} 2k_v\pi = \sum_{F\in F(G)} (|F|-2)\pi$$
$$\sum_{v\in V(G)} k_v =\frac{1}{2}\sum_{F\in F(G)} |F| -f$$
$$\sum_{v\in V(G)} k_v = m - f$$
So by Euler's formula $\sum_{v\in V(G)} k_v =n$, and thus $k_v=1$ for
every vertex $v$.
\end{proofclaim}

Let $v$ be a vertex of $G$ that minimizes the number of loops whose
ends are on $v$. Thus either $v$ has no incident loop, or every vertex
is incident to at least one loop.

Assume that $v$ has no incident loop.  Let $v'$ be any copy of $v$ in
$G^\infty$ and denote its neighbors $(u_0,u_1,\ldots,u_{d-1})$ in
counterclockwise order. As $k_v =1$, the points
$u_0,u_1,\ldots,u_{d-1}$ form a polygon $P$ containing the point $v'$
and the segments $[v',u_i]$ for any $i\in [0,d-1]$.  It is well known
that any polygon, admits a partition into triangles by adding some of
the chords. Let us call $O$ the  outerplanar graph with outer
boundary $(u_0,u_1,\ldots, u_{d-1})$, obtained by this
``triangulation'' of $P$. Let us now consider the toroidal map $G' =
(G\setminus \{v\})\cup O$ and its periodic embedding obtained from the
mapping of $G^\infty$ by removing the copies of $v$.  It is easy to
see that in this embedding every face of $G'$ is correctly oriented
(including the inner faces of $O$, or the faces of $G$ that have been
shortened by an edge $u_iu_{i+1}$). Thus by induction hypothesis, the
mapping gives a straight-line representation of $G'^\infty$.  It is
also a straight-line representation of $G^\infty$ minus the copies of
$v$ where the interior of each copy of the polygons $P$ are pairwise
disjoint and do not intersect any vertex or edge. Thus one can add the
copies of $v$ on their initial positions and add the edges with their
neighbors without intersecting any edge. The obtained drawing is thus
a straight-line representation of $G^\infty$.

Assume now that every vertex is incident to at least one loop.  Since
these loops are non-contractible and do not cross each other, they
form homothetic cycles.  Thus $G$ is as depicted in
Figure~\ref{fig:polygonG}, where the dotted segments stand for edges
that may be in $G$ but not necessarily.  Since the mapping is periodic
the edges corresponding to loops of $G$ form several parallel lines,
cutting the plane into infinite strips.  Since for any $1\leq i\leq
n$, $k_{v_i}=1$, a line of copies of $v_i$ divides the plane, in such
a way that their neighbors which are copies of $v_{i-1}$ and their
neighbors which are copies of $v_{i+1}$ are in distinct
half-planes. Thus adjacent copies of $v_i$ and $v_{i+1}$ are on two
lines bounding a strip. Then one can see that the edges between copies
of $v_i$ and $v_{i+1}$ are contained in this strip without
intersecting each other.  Thus the obtained mapping of $G^\infty$ is a
straight-line representation.
\end{proof}

\begin{figure}[!h]
\begin{center}
\input{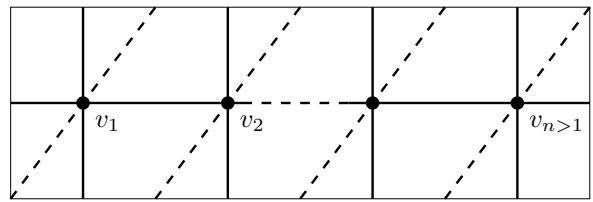}
\caption{The graph $G$ if every vertex is incident to a loop.}
\label{fig:polygonG}
\end{center}
\end{figure}

A plane is \emph{positive} if it has equation $\alpha x +\beta y +
\gamma z=0$ with $\alpha,\beta,\gamma \geq 0$.

\begin{theorem}
\label{th:straightline}
If $G$ is a toroidal triangulation given with a crossing Schnyder wood, and
$V^\infty$ the set of region vectors of vertices of $G^\infty$. Then
the projection of $V^\infty$ on a positive plane gives a straight-line
representation of $G^\infty$.
\end{theorem}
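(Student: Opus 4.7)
The plan is to reduce this to Lemma~\ref{th:faceorientation} by showing that, for any choice of a positive plane, the projection of the region vectors produces a periodic mapping in which every (triangular) face is correctly oriented. Since $G$ is a toroidal triangulation, being correctly oriented for a face amounts to requiring that its three projected vertices form a non-degenerate counterclockwise triangle in the plane. Also, $G$ is essentially $3$-connected by Lemma~\ref{lem:conjessentially}, so the hypothesis of Lemma~\ref{th:faceorientation} will apply once the orientation condition is checked.

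First I would verify the periodicity. Lemma~\ref{lem:periodic} gives that the region vector mapping is periodic with respect to the non-collinear vectors $S, S'$ defined in Section~\ref{sec:regionvector}, and by construction $S_0+S_1+S_2 = S'_0+S'_1+S'_2 = 0$, so both $S$ and $S'$ lie in the plane of equation $x+y+z=0$. Let $\Pi$ be a positive plane with unit normal $\vec{n}$ whose coordinates satisfy $n_0,n_1,n_2 \geq 0$ and $n_0+n_1+n_2>0$, and let $\pi$ denote orthogonal projection onto $\Pi$. Since $\vec{n} \cdot (1,1,1) > 0$, the normal $\vec{n}$ is not in the span of $S$ and $S'$, so $\pi(S)$ and $\pi(S')$ are still linearly independent. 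Being linear, $\pi$ preserves periodicity, so the projected mapping is periodic with respect to the non-collinear vectors $\pi(S), \pi(S')$.

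Next I would verify the orientation condition, which is where Lemma~\ref{lem:vectoriel} does the essential work. Given a face $F$ of $G^\infty$ with vertices $u,v,w$ in counterclockwise order, the signed $2$D orientation of the projected triangle $\pi(u),\pi(v),\pi(w)$ (viewed from the $\vec{n}$ side) is given by the sign of $((v-u) \wedge (w-u)) \cdot \vec{n}$. A direct expansion shows that $(v-u) \wedge (w-u) = (w-v) \wedge (u-v) = \overrightarrow{vw} \wedge \overrightarrow{vu}$, so by Lemma~\ref{lem:vectoriel} this cross product has strictly positive coordinates; combined with $\vec{n}$ having non-negative coordinates (not all zero), the dot product is strictly positive. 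Hence the projected image of $F$ is a non-degenerate counterclockwise triangle, i.e.\ $F$ is correctly oriented.

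Applying Lemma~\ref{th:faceorientation} to this periodic mapping of the essentially $3$-connected map $G$ then concludes that the projection gives a straight-line representation of $G^\infty$. The conceptual heavy lifting was already performed in Lemma~\ref{lem:vectoriel} (whose proof required a careful balance between the ``spreading'' factor $N$ and the non-coplanarity defect $t$ bounded via Lemma~\ref{lem:boundedtriangle}); after that, the remaining argument is a short synthesis, and the only mildly delicate point is checking that $\pi(S)$ and $\pi(S')$ remain non-collinear, which follows cleanly from the fact that positivity of $\vec{n}$ keeps it out of the plane $\{x+y+z=0\}$ containing $S$ and $S'$.
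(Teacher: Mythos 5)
Your proof is correct and follows essentially the same route as the paper: reduce to Lemma~\ref{th:faceorientation} by invoking Lemma~\ref{lem:vectoriel} to get correct orientation of each projected face. The one genuine improvement is that you explicitly verify the periodicity hypothesis of Lemma~\ref{th:faceorientation} — namely that since $S$ and $S'$ lie in the plane $x+y+z=0$ while a positive normal $\vec n$ does not, the projected period vectors $\pi(S),\pi(S')$ remain non-collinear — a point the paper's proof passes over in silence.
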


\begin{proof}
  Let $\alpha,\beta,\gamma \geq 0$ and consider the projection of
  $V^\infty$ on the plane $P$ of equation $\alpha x +\beta y + \gamma
  z=0$. A normal vector of the plane is given by the vector
  $\overrightarrow{n}=(\alpha,\beta,\gamma)$. Consider a face $F$ of
  $G^\infty$. Suppose that $F$ is incident to vertices $u,v,w$ (given
  in counterclockwise order around $F$).  By
  Lemma~\ref{lem:vectoriel}, $(\overrightarrow{uv}\wedge
  \overrightarrow{uw}).\overrightarrow{n}$ is positive. Thus the
  projection of the face $F$ on $P$ is correctly oriented. So by
  Lemma~\ref{th:faceorientation}, the projection of $V^\infty$ on
  $P$ gives a straight-line representation of $G^\infty$.
\end{proof}

Theorems~\ref{th:cross-tri} and~\ref{th:straightline} implies the following:

\begin{theorem}
\label{cor:straightline}
A toroidal map admits a straight-line flat torus representation in a
polynomial size grid.
\end{theorem}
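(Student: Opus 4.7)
The plan is to reduce the statement to Theorem~\ref{th:straightline}, which handles toroidal triangulations equipped with a crossing Schnyder wood, via a standard triangulation/erasure argument. First I would take an arbitrary toroidal map $G$ (which by our convention has no contractible loop and no homotopic multiple edges) and extend it to a toroidal triangulation $G'$ by iteratively inserting diagonals inside each non-triangular face until every face has size three. This can be done while keeping $G'$ free of contractible loops and homotopic multiple edges: a newly added diagonal inside a face can only create such a defect if the face itself already witnessed one, contradicting the hypothesis on $G$.

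Next I would invoke Theorem~\ref{th:cross-tri} to equip $G'$ with a crossing Schnyder wood, and then apply Theorem~\ref{th:straightline}: projecting the set $V^\infty$ of region vectors of $G'^\infty$ onto any positive plane $P$ yields a periodic straight-line representation of $G'^\infty$. Since $G$ is a spanning subgraph of $G'$, deleting from this representation the edges corresponding to the diagonals added in the first step produces a periodic straight-line representation of $G^\infty$, i.e. a straight-line flat torus representation of $G$.

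The remaining step is to bound the grid size polynomially in the size of $G$. Because $G'$ has $O(n)$ vertices and $O(n)$ faces, choosing $N=t+n$ as required in Section~\ref{sec:straight} gives $N=O(nf)=O(n^2)$. By Lemma~\ref{th:gamma}, the mapping is periodic with respect to two non-collinear vectors $Y,Y'$ of norm $O(\omega N f)=O(n^4)$, and by Lemma~\ref{lem:edgesbounded} each edge has length $O(Nf)=O(n^3)$. After projection onto a positive plane with rational normal of bounded bit-complexity, each coordinate remains a rational number with numerator and denominator polynomial in $n$; clearing denominators by a polynomial scaling places every vertex at an integer point in a grid of polynomial size, and the straight-line property is preserved by the affine projection and scaling. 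The main (and only delicate) step is the initial triangulation: one has to check carefully that diagonals can always be inserted without creating contractible loops or homotopic multiple edges, so that Theorem~\ref{th:cross-tri} is applicable to $G'$; the other steps are direct consequences of the results already established in this chapter.
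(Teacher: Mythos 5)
Your overall plan is the same as the paper's: triangulate, obtain a crossing Schnyder wood via Theorem~\ref{th:cross-tri}, project the region vectors via Theorem~\ref{th:straightline}, then erase the auxiliary edges and bound the grid. However, there is a genuine gap in your first step.

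You triangulate $G$ by inserting only diagonals inside non-triangular faces, and you assert that ``a newly added diagonal inside a face can only create such a defect if the face itself already witnessed one.'' This is false. A diagonal drawn inside a face $f$ between two occurrences $v,v'$ of the face's boundary walk is homotopic (rel endpoints) to the boundary subwalk from $v$ to $v'$, so whether the diagonal creates a contractible loop (when $v=v'$) or a homotopic multiple edge (when $v\neq v'$ but $vv'$ is already an edge elsewhere) depends on the global topology of $G$, not just on $f$. Concretely, if $v$ and $v'$ are already joined by an edge $e$ such that $e$ together with the boundary subwalk of $f$ from $v$ to $v'$ bounds a disk --- e.g.\ $e$ lies on a contractible cycle obtained by going around part of $\partial f$ --- then the new diagonal and $e$ enclose a disk and are homotopic multiple edges, even though $G$ itself had none. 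This is precisely why the paper does not triangulate by diagonals alone: it appeals to the construction in Mohar's Lemma 2.3 (\cite{Moh96}), which adds a \emph{linear number of new vertices} inside the faces and triangulates from them, preserving the absence of contractible loops and homotopic multiple edges by construction. Your argument needs to be replaced by that (or an equivalent) citation, or by a considerably more careful diagonalization argument that handles these cases. A secondary, minor point: you pass through an auxiliary rational positive plane and clear denominators, whereas the paper simply projects onto $x+y+z=0$ (triangular grid) or a coordinate plane (square grid); since the region vectors are already integer-valued this gives a polynomial integer grid with no rounding step, and also gives the cleaner bounds $\mathcal O(n^3)\times\mathcal O(n^4)$ in general and $\mathcal O(n^2)\times\mathcal O(n^2)$ for simple maps.
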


Indeed, any toroidal map $G$ can be
transformed into a toroidal triangulation $G'$ by adding a linear
number of vertices and edges and such that $G'$ is simple if and only
if $G$ is simple (see for example the proof of Lemma~2.3
of~\cite{Moh96}).  Then by Theorem~\ref{th:cross-tri},
$G'$ admits a crossing Schnyder wood.
By Theorem~\ref{th:straightline}, the projection of the set of region
vectors of vertices of $G'^\infty$ on a positive plane gives a
straight-line representation of $G'^\infty$.  The grid where the
representation is obtained can be the triangular grid, if the
projection is done on the plane of equation $x+y+z=0$, or the square
grid, if the projection is done on one of the plane of equation $x=0$,
$y=0$ or $z=0$.
By Lemma~\ref{th:gamma}, and the choice of $N$, the obtained mapping
is a periodic mapping of $G^\infty$ with respect to non collinear
vectors $Y$ and $Y'$ where the size of these vectors is in $\mathcal
O(\omega^2 n^2)$ with
$\omega\leq n$ in general and $\omega=1$ if the graph is simple and
the Schnyder wood obtained by Theorem~\ref{th:schnydersimple}.
By Lemma~\ref{lem:edgesbounded}, the
length of the edges in this representation are in $\mathcal O(n^3)$ in
general and in $\mathcal O(n^2)$ if the graph is simple.  When the
graph is not simple, there is a non-contractible cycle of length $1$
or $2$ and thus the size of one of the two vectors $Y$, $Y'$ is in
$\mathcal O(n^3)$. Thus the grid obtained in
Theorem~\ref{cor:straightline} has size in $\mathcal O(n^3)\times
\mathcal O(n^4)$ in general and $\mathcal O(n^2)\times\mathcal O(n^2)$
if the graph is simple.

The method presented here gives a polynomial algorithm to obtain 
straight-line representation of any toroidal maps in polynomial size
grids. Indeed, all the proofs leads to polynomial
algorithms, even the proof of Theorem~\ref{th:fij}~\cite{Fij} which  uses
results from Robertson and Seymour~\cite{RS86} on disjoint paths
problems.

It would be nice to extend Theorem~\ref{th:straightline}
to obtain convex straight-line representation for essentially
3-connected toroidal maps.

Note that the results presented in this chapter
motivated~Castelli~Aleardi, Devillers and~Fusy~\cite{CF12} to develop
direct methods to obtain straight line representations for toroidal
maps. They manage to generalize planar canonical ordering to the
cylinder to obtain straight-line representation of toroidal
triangulations in grids of size
$\mathcal O(n)\times \mathcal O(n^\frac{3}{2})$.

\chapter*{Conclusion}
\addcontentsline{toc}{part}{Conclusion}
\markboth{}{}

In this manuscript we propose a generalization of Schnyder woods to
higher genus oriented surfaces.  For this purpose we keep the
simplicity of the local definition of Schnyder woods in the plane and
extend it to surfaces. A drawback is the loss of the main global
property of the planar case : the partition into three trees (see
Figures~\ref{fig:ex-tree} and~\ref{fig:3c-tree}).  Nevertheless in the
toroidal case we are able to define some other global properties that
a Schnyder wood may have or may not have (crossing, balanced,
minimality, etc.) that enable to understand the structure of the
manipulated objects and use it for nice applications like graph
drawing or optimal encoding.  Many questions were raised all along
the manuscript concerning toroidal Schnyder woods, let us recall one
of the simplest property that one may ask: does a simple toroidal
triangulation admit a Schnyder wood where each of the three colors
induces a connected subgraph?  Among the numerous applications of
planar Schnyder woods that are still to be generalized to the torus,
we hope that the obtained bijection can be used to generate random
toroidal triangulations with uniform distribution.

In the toroidal case ($g=1$), Euler's formula ensures that every vertex
plays exactly the same role. When going to higher genus ($g\geq 2$),
some strange things happen with the need of special vertices
satisfying ``several times'' the local property (see
Figure~\ref{fig:369}). The lattice structure of homologous
orientations is a general result that is valid in this context
but we are not yet able to find a proof of existence of the considered
objects. We only conjecture that, when $g\geq2$, a map admits a
Schnyder wood if and only if it is essentially 3-connected. Also we
have no guess of what would be some global interesting property in
this case. Indeed, generalizing the global properties that we found
for the torus is not straightforward since one has to deal with the
special vertices. For example the definition of $\gamma$ is not
anymore ``stable by homology'', i.e. if two non-contractible cycles
$C,C'$ are homologous, we may have $\gamma(C)\neq\gamma(C')$ whereas
we have equality on the torus by Lemma~\ref{lem:gammahomologyequalbasis}.

Originally Schnyder woods were defined for planar triangulations and
then, by allowing edges to be oriented in two directions, they have
been generalized to (internally) 3-connected planar maps. Thus this
manuscript is written for the general situation of (essentially)
3-connected maps except when we prove results that are specific to
triangulations.  We may have adopted a different point of view where
instead of generalizing toward 3-connected maps we would have
generalized toward $d$-angulations.  

Indeed, there are many kinds of planar maps that admit orientations
that are similar to Schnyder woods. For example planar 4-connected
triangulations admit so-called transversal structures~\cite{Fus09},
quadrangulations admit 2-orientations~\cite{BH10}, $d$-angulations of
girth $d$ admit $\frac{d}{d-2}$-orientations~\cite{BF12}, even-degree
maps admit Eulerian orientations~\cite{Sch97}, corner triangulations
admits partition into $3$ bipolar orientations~\cite{EM14}, etc. Among
these different kind of orientations, Schnyder woods (for
triangulations) and 2-orientations (for quadrangulations) can be seen
has particular cases of $\frac{d}{d-2}$-orientations (for
$d$-angulations of girth $d$).  We do not see a common way to present
these different orientations in a unified framework so we choose to
present this manuscript on Schnyder woods only as this is certainly
one of the most famous objects of this kind (and also the one that we
started to look at).

Nevertheless it seems that several structural
properties that have been presented in this manuscript for Schnyder
woods in higher genus are still valid in the context of
$\frac{d}{d-2}$-orientations. 
Figure~\ref{fig:d-angul-example} gives examples of toroidal
$d$-angulations for $d=3,4,5,6$ given with
\emph{$\frac{d}{d-2}$-orientations}, i.e. assignments of positive
integers, on the half-edges of the map $G$ such that, for every edge,
the weights of its two half-edges sum to $d-2$, and for every vertex,
the weights of its incident half-edges sum to $d$.  Such a fractional
orientation of $G$ can be seen as a classical orientation in the
\emph{$(d-2)$-multigraph} $H$ obtained from $G$ by replacing every
edge by $d-2$ parallel edges.  Indeed having a
$\frac{d}{d-2}$-orientation of $G$ is equivalent of having an
orientation of the edges of $H$ (in one direction only) where every
vertex has outdegree exactly $d$.  A notion (and existence !) of
balanced orientation can then be defined similarly as in this
manuscript. Then the lattice structure in the multigraph naturally
define a minimal element that can be used for bijection purpose.  A
similar story can be told for transversal structures (see
Figure~\ref{fig:tts}) by looking at the angle graph to define the
balanced property (and also maybe for other kind of orientations?) but
we are at a point where this manuscript should stop and of course
research continues ...

\begin{figure}[!h]
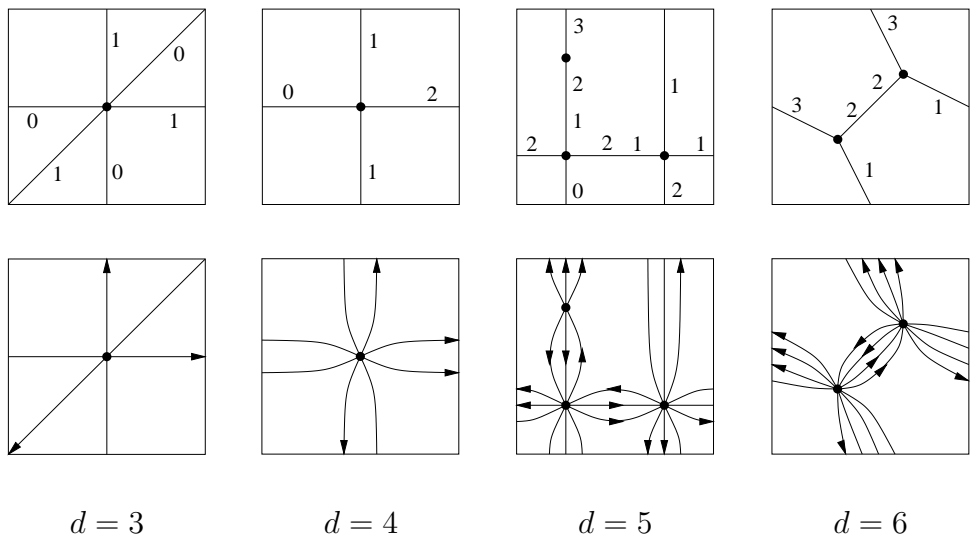

\center

\begin{tabular}{cccc}
\includegraphics[scale=0.34]{d-angul-example-3-1}  \ & \
\includegraphics[scale=0.34]{d-angul-example-4-1}  \ & \
\includegraphics[scale=0.34]{d-angul-example-5-1}  \ & \
\includegraphics[scale=0.34]{d-angul-example-6-1}  \\
\\
\includegraphics[scale=0.34]{d-angul-example-3-2}  \ & \
\includegraphics[scale=0.34]{d-angul-example-4-2}  \ & \
\includegraphics[scale=0.34]{d-angul-example-5-2}  \ & \
\includegraphics[scale=0.34]{d-angul-example-6-2}  \\
\\
 $d=3$ \ & \  $d=4$ \ & \  $d=5$ \ & \  $d=6$ \\
\end{tabular}
\caption{Example of $\frac{d}{d-2}$-orientations of toroidal maps for
  $d=3,4,5,6$.}
\label{fig:d-angul-example}
\end{figure}

\begin{figure}[!h]
\center
\includegraphics[scale=0.5]{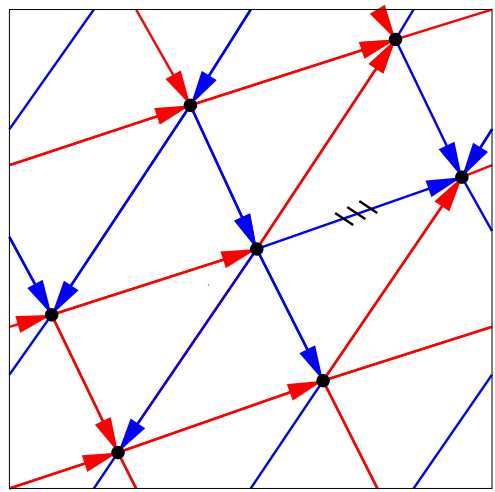}
\caption{The minimal balanced transversal structure of $K_7$ w.r.t.~to the
  rooted (dashed) edge.}
\label{fig:tts}
\end{figure}

\end{document}